\let\coloneqq\relax
\definecolor{myrefcolor}{rgb}{0.067,0.5,0.5}
\newcolumntype{x}[1]{>{\centering\arraybackslash}p{#1}}
\def\clearthms#1{ \@for\tname:=#1\do{\cleartheorem\tname} }
\newtheorem*{thm*}{Theorem}
\newtheorem{thm}{Theorem}
\newtheorem*{prop*}{Proposition}
\newtheorem{lemma}[thm]{Lemma}
\newtheorem*{lemma*}{Lemma}
\newtheorem{cor}[thm]{Corollary}
\newtheorem*{cor*}{Corollary}
\newtheorem*{cj*}{Conjecture}
\newtheorem{Def}[thm]{Definition}
\newtheorem*{Def*}{Definition}
\newtheorem{problem}[thm]{Problem}
\newtheorem{remark}[thm]{Remark}
\def\thmhead@plain#1#2#3{%
  \thmname{#1}\thmnumber{\@ifnotempty{#1}{ }\@upn{#2}}%
  \thmnote{ {\the\thm@notefont#3}}}
\let\thmhead\thmhead@plain
\theoremstyle{definition}
\newcommand{\bb}{\begin{equation}\begin{aligned}\hspace{0pt}}
\newcommand{\be}{\begin{equation}\begin{aligned}\hspace{0pt}}
\newcommand{\bbb}{\begin{equation*}\begin{aligned}}
\newcommand{\ee}{\end{aligned}\end{equation}}
\newcommand{\eee}{\end{aligned}\end{equation*}}
\newcommand*{\coloneqq}{\mathrel{\vcenter{\baselineskip0.5ex \lineskiplimit0pt \hbox{\scriptsize.}\hbox{\scriptsize.}}} =}
\newcommand\ceil[1]{\left\lceil#1\right\rceil}
\newcommand{\texteq}[1]{\stackrel{\mathclap{\scriptsize \mbox{#1}}}{=}}
\newcommand{\eqt}[1]{\stackrel{\mathclap{\scriptsize \mbox{#1}}}{=}}
\newcommand{\leqt}[1]{\stackrel{\mathclap{\scriptsize \mbox{#1}}}{\leq}}
\newcommand{\geqt}[1]{\stackrel{\mathclap{\scriptsize \mbox{#1}}}{\geq}}
\newcommand{\ketbra}[1]{\ket{#1}\!\!\bra{#1}}
\newcommand{\tcb}[1]{{\color{blue} #1}}
\newcommand{\R}{\mathds{R}}
\newcommand{\N}{\mathds{N}}
\newcommand{\E}{\mathds{E}}
\DeclareMathOperator{\Tr}{Tr}
\DeclareMathAlphabet{\pazocal}{OMS}{zplm}{m}{n}
\DeclareMathOperator{\Id}{Id}
\DeclareMathOperator{\diag}{diag}
\newcommand{\HH}{\pazocal{H}}
\newcommand{\D}{\pazocal{D}}
\newcommand{\NN}{\mathcal{N}}
\newcommand{\lsmatrix}{\left(\begin{smallmatrix}}
\newcommand{\rsmatrix}{\end{smallmatrix}\right)}
\newcommand*\rel@kern[1]{\kern#1\dimexpr\macc@kerna}
\newcommand*\widebar[1]{%
  \begingroup
  \def\mathaccent##1##2{%
    \rel@kern{0.8}%
    \overline{\rel@kern{-0.8}\macc@nucleus\rel@kern{0.2}}%
    \rel@kern{-0.2}%
  }%
  \macc@depth\@ne
  \let\math@bgroup\@empty \let\math@egroup\macc@set@skewchar
  \mathsurround\z@ \frozen@everymath{\mathgroup\macc@group\relax}%
  \macc@set@skewchar\relax
  \let\mathaccentV\macc@nested@a
  \macc@nested@a\relax111{#1}%
  \endgroup
}
\tikzset{meter/.append style={draw, inner sep=10, rectangle, font=\vphantom{A}, minimum width=30, line width=.8, path picture={\draw[black] ([shift={(.1,.3)}]path picture bounding box.south west) to[bend left=50] ([shift={(-.1,.3)}]path picture bounding box.south east);\draw[black,-latex] ([shift={(0,.1)}]path picture bounding box.south) -- ([shift={(.3,-.1)}]path picture bounding box.north);}}}
\tikzset{roundnode/.append style={circle, draw=black, fill=gray!20, thick, minimum size=10mm}}
\tikzset{squarenode/.style={rectangle, draw=black, fill=none, thick, minimum size=10mm}}
\definecolor{Blues5seq1}{RGB}{239,243,255}
\definecolor{Blues5seq2}{RGB}{189,215,231}
\definecolor{Blues5seq3}{RGB}{107,174,214}
\definecolor{Blues5seq4}{RGB}{49,130,189}
\definecolor{Blues5seq5}{RGB}{8,81,156}
\definecolor{Greens5seq1}{RGB}{237,248,233}
\definecolor{Greens5seq2}{RGB}{186,228,179}
\definecolor{Greens5seq3}{RGB}{116,196,118}
\definecolor{Greens5seq4}{RGB}{49,163,84}
\definecolor{Greens5seq5}{RGB}{0,109,44}
\definecolor{Reds5seq1}{RGB}{254,229,217}
\definecolor{Reds5seq2}{RGB}{252,174,145}
\definecolor{Reds5seq3}{RGB}{251,106,74}
\definecolor{Reds5seq4}{RGB}{222,45,38}
\definecolor{Reds5seq5}{RGB}{165,15,21}
\newtheorem{definition}{Definition}
\newcommand*{\addFileDependency}[1]{
  \typeout{(#1)}
  \@addtofilelist{#1}
  \IfFileExists{#1}{}{\typeout{No file #1.}}
}
 \newcommand{\deff}{d_{\text{eff}}}
\newcommand{\tildeO}{\tilde{O}\!}
\newcommand{\tildeTheta}{\tilde{\Theta}\!}
\newcommand{\tildeOmega}{\tilde{\Omega}\!}
\definecolor{tealblue}{HTML}{00AEB3}
\newcommand{\idop}{\mathbb{\hat{1}}}
\definecolor{antonio}{rgb}{.2,.5,.1}
\pgfplotsset{width=10cm,compat=1.9}
\newcommand{\nocontentsline}[3]{}
\newcommand{\tocless}[2]{\bgroup\let\addcontentsline=\nocontentsline#1{#2}\egroup}
\begin{document}

\setcounter{secnumdepth}{2}
\setlength{\parskip}{0.1pt}
\title{Learning quantum states of continuous-variable systems}

\author{Francesco A. Mele}
\thanks{\{\href{mailto:francesco.mele@sns.it}{francesco.mele}, \href{mailto:vittorio.giovannetti@sns.it}{vittorio.giovannetti}, \href{mailto:salvatore.oliviero@sns.it}{salvatore.oliviero}\}@sns.it}
\affiliation{NEST, Scuola Normale Superiore and Istituto Nanoscienze, Piazza dei Cavalieri 7, IT-56126 Pisa, Italy}

\author{Antonio A. Mele}
  \thanks{\{\href{mailto:a.mele@fu-berlin.de}{a.mele}, \href{mailto:l.bittel@fu-berlin.de}{l.bittel}, \href{mailto:lorenzo.leone@fu-berlin.de}{lorenzo.leone}\}@fu-berlin.de}
  \affiliation{Dahlem Center for Complex Quantum Systems, Freie Universit\"at Berlin, 14195 Berlin, Germany}

 \author{Lennart Bittel}
  \thanks{\{\href{mailto:a.mele@fu-berlin.de}{a.mele}, \href{mailto:l.bittel@fu-berlin.de}{l.bittel}, \href{mailto:lorenzo.leone@fu-berlin.de}{lorenzo.leone}\}@fu-berlin.de}
 \affiliation{Dahlem Center for Complex Quantum Systems, Freie Universit\"at Berlin, 14195 Berlin, Germany}
 \author{Jens Eisert}
 \thanks{\{\href{mailto:jenseisert@gmail.com}{jenseisert}, \href{mailto:ludovico.lami@gmail.com}{ludovico.lami}\}@gmail.com}
  \affiliation{Dahlem Center for Complex Quantum Systems, Freie Universit\"at Berlin, 14195 Berlin, Germany}
  \affiliation{Helmholtz-Zentrum Berlin für Materialien und Energie, Berlin, Germany}
 \author{Vittorio Giovannetti}
    \thanks{\{\href{mailto:francesco.mele@sns.it}{francesco.mele}, \href{mailto:vittorio.giovannetti@sns.it}{vittorio.giovannetti}, \href{mailto:salvatore.oliviero@sns.it}{salvatore.oliviero}\}@sns.it}
 \affiliation{NEST, Scuola Normale Superiore and Istituto Nanoscienze,
Consiglio Nazionale delle Ricerche, Piazza dei Cavalieri 7, IT-56126 Pisa, Italy}
 \author{Ludovico Lami}
\thanks{\{\href{mailto:jenseisert@gmail.com}{jenseisert}, \href{mailto:ludovico.lami@gmail.com}{ludovico.lami}\}@gmail.com}
 \affiliation{QuSoft, Science Park 123, 1098 XG Amsterdam, the Netherlands}
 \affiliation{Korteweg–de Vries Institute for Mathematics, University of Amsterdam,
Science Park 105-107, 1098 XG Amsterdam, the Netherlands}
\affiliation{Institute for Theoretical Physics, University of Amsterdam,
Science Park 904, 1098 XH Amsterdam, the Netherlands}
 \author{Lorenzo Leone}
  \thanks{\{\href{mailto:a.mele@fu-berlin.de}{a.mele}, \href{mailto:l.bittel@fu-berlin.de}{l.bittel}, \href{mailto:lorenzo.leone@fu-berlin.de}{lorenzo.leone}\}@fu-berlin.de}
  \affiliation{Dahlem Center for Complex Quantum Systems, Freie Universit\"at Berlin, 14195 Berlin, Germany}
  
\author{Salvatore F.~E. Oliviero}
   \thanks{\{\href{mailto:francesco.mele@sns.it}{francesco.mele}, \href{mailto:vittorio.giovannetti@sns.it}{vittorio.giovannetti}, \href{mailto:salvatore.oliviero@sns.it}{salvatore.oliviero}\}@sns.it}
 \affiliation{NEST, Scuola Normale Superiore and Istituto Nanoscienze,
Consiglio Nazionale delle Ricerche, Piazza dei Cavalieri 7, IT-56126 Pisa, Italy}

\begin{abstract}
Quantum state tomography, aimed at deriving a classical description of an unknown state from measurement data, is a fundamental task in quantum physics. 
In this work, we analyse the ultimate achievable performance of tomography of continuous-variable systems, such as bosonic and quantum optical systems. We prove that tomography of these systems is extremely inefficient in terms of time resources, much more so than tomography of finite-dimensional systems: not only does the minimum number of state copies needed for tomography scale exponentially with the number of modes, but it also exhibits a dramatic scaling with the trace-distance error, even for low-energy states, in stark contrast with the finite-dimensional case. On a more positive note, we prove that tomography of Gaussian states is efficient. To accomplish this, we answer a fundamental question for the field of continuous-variable quantum information: if we know with a certain error the first and second moments of an unknown Gaussian state, what is the resulting trace-distance error that we make on the state? Lastly, we demonstrate that tomography of non-Gaussian states prepared through Gaussian unitaries and a few local non-Gaussian evolutions is efficient and experimentally feasible.
\end{abstract}

\maketitle

\section{Introduction}

A fundamental task in quantum physics, known as \emph{quantum state tomography}~\cite{anshu2023survey}, is to derive a classical description of a quantum system from experimental data. It serves not only as a powerful method to investigate Nature, but also as a diagnostic tool for benchmarking and verifying quantum devices~\cite{eisert_quantum_2020,cramer_efficient_2010,anshu2023survey,LogicalLukin}. The concept of tomography dates back to the 1990s~\cite{smithey_measurement_1993,lvovsky_continuousvariable_2009}, when it was first introduced for \emph{continuous-variable} (CV) systems~\cite{BUCCO} and then for finite-dimensional systems. Over the years, tomography algorithms for CV systems have been extensively developed, becoming a bread-and-butter tool for quantum opticians~\cite{smithey_measurement_1993,lvovsky_continuousvariable_2009}.

In the last decade, rapid advances in accurately preparing quantum states~\cite{lanyon_efficient_2017} have spurred the development of a new research field known as \emph{quantum learning theory}~\cite{anshu2023survey}, which investigates the question of how to learn properties of quantum systems as efficiently as possible. 
When the focus is on obtaining a full classical description of a quantum system, such investigation reduces to quantum state tomography, which is considered the cornerstone of quantum learning theory~\cite{anshu2023survey}. Despite tomography being a longstanding concept, many of its fundamental properties have only recently been unveiled due to advancements in this new field, albeit only for finite-dimensional systems and not for CV systems.

Let us introduce the concept of quantum state tomography from a quantum learning theory perspective. Formally, the task of quantum state tomography involves two parameters: the number of state copies $N$ and the error parameter $\varepsilon$. Given $N$ copies of an unknown state, the goal is to output a classical description of a state which is guaranteed to be $\varepsilon$-close to the true unknown state with high probability 
(this is stated in all rigour in the Methods section).
Among various notion of distance to measure the $\varepsilon$-closeness, the \emph{trace distance} emerges as the most meaningful one due to its operational significance~\cite{HELSTROM, Holevo1976}: given two states $\varepsilon$-close in trace distance, no quantum measurement can distinguish them by more than $\varepsilon$. Thus, the two states are in fact indistinguishable to any physical observer within a $\varepsilon$ resolution. 
The optimal performance of tomography is quantified by the \emph{sample complexity}, which is the minimum number of copies $N$ needed to achieve tomography with error $\varepsilon$. When the sample complexity and the running time of the tomography procedure scale polynomially with the number of constituents (e.g., qubits, modes), tomography is deemed \emph{efficient}.  

With vast many-body systems naturally emerging in Nature on one side, and the rapid advancement in constructing large-scale quantum devices on the other, minimising required resources has become imperative.
Consequently, determining the sample complexity of tomography of physically relevant classes of states is currently a pressing problem~\cite{anshu2023survey}. Notably, the sample complexity has been determined for $n$-qubit systems~\cite{kueng2014low,odonnell2015efficient,Haah_2017,haah_optimal_2021,anshu2023survey}: tomography of mixed states requires $
O(4^n/\varepsilon^2)$ copies, while for pure states the required number of copies reduces to $O(2^n/\varepsilon^2)$.

While quantum learning theory has been extensively developed for finite-dimensional systems, it remains almost unexplored for CV systems~\cite{aolita_reliable_2015,gandhari_precision_2023, becker_classical_2023,oh2024entanglementenabled}. 
This is partly because the usual approaches to CV tomography are based on mere approximations of phase-space functions~\cite{smithey_measurement_1993,lvovsky_continuousvariable_2009,BUCCO}, which do not account for trace-distance error and thus lack operational significance. This is a significant gap in the literature, especially considering that in recent years photonic quantum devices have been at the forefront of attempts to demonstrate quantum advantage, particularly through boson sampling~\cite{Borealis,Zhong_2020,SupremacyReview} and quantum simulation experiments~\cite{QuantumPhotoThermodynamics}. Moreover, photonic platforms play a pivotal role in various quantum technologies, including quantum computation~\cite{Mirrahimi_2014,Ofek_nature2016,error_corr_boson,Guillaud_2019,alexander2024manufacturable}, communication~\cite{Wolf2007,TGW,PLOB,Die-Hard-2-PRL,mele2023maximum, mele2024quantum}, and sensing~\cite{SGravi2,PhysRevLett.121.160502, PhysRevLett.86.5870,q_sensing_cv}. 
Therefore, determining the ultimate achievable performance of CV tomography has become a crucial, pressing problem --- which we solve in this work.

\begin{figure*}[t]
    \centering
\includegraphics[width=1\textwidth]{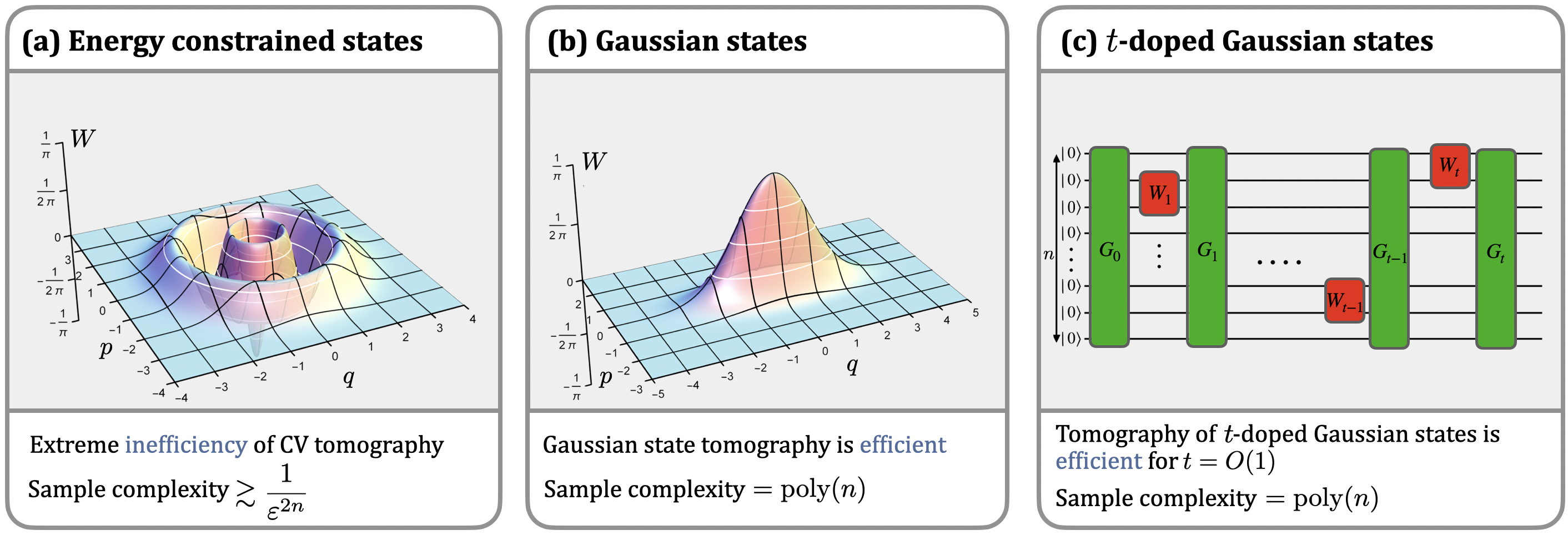}
    \caption{ We identify strong limitations against \textbf{(a)} quantum state tomography of continuous-variable systems subject to energy constraints inherent in experimental platforms. Here, $n$ is the number of modes, while $\varepsilon$ is the trace-distance error.
    Our investigation reveals a new phenomenon dubbed `extreme inefficiency' of continuous-variable quantum state tomography. Specifically, the number of copies required for tomography of $n$-mode energy-constrained states must scale at least as $\varepsilon^{-2n}$. This dramatic scaling is a unique feature of continuous-variable systems, standing in stark contrast to finite-dimensional systems where the required number of copies scales with the trace-distance error as $\varepsilon^{-2}$. Therefore, we ask whether there exist physically interesting classes of states for which tomography is efficient. We answer this in the affirmative by presenting \textbf{(b)} an efficient tomography algorithm for tomography of Gaussian states with provable guarantees in trace distance. Our analysis is based on novel technical tools of independent interest: specifically, we introduce simple bounds on the trace distance between two Gaussian states in terms of the norm distance between their first moments and covariance matrices. Finally, we demonstrate \textbf{(c)} that tomography of non-Gaussian states prepared by Gaussian unitaries and a few local non-quadratic Hamiltonian evolutions is still efficient. Remarkably, both of these efficient tomography algorithms are experimentally feasible to implement in quantum optics laboratories.}
    \label{figmain}
\end{figure*}

\subsection*{Energy-constrained states}

A CV system corresponds to $n$ modes, each associated with an infinite-dimensional Hilbert space. From the outset, the infinitely-many degrees of freedom of a single bosonic mode make tomography a challenging concept: What does it mean for a tomography algorithm to output a classical description of an infinite-dimensional state? Does this imply the output would be an infinite-dimensional matrix, which no classical computer could store? 
After all, not even Nature itself can deal with infinitely many degrees of freedom, and, as a consequence, physical states are inevitably subjected to certain constraints. Indeed, in the real world, energy is finite. The energy budget available in quantum optics laboratories is limited, as is the energy emitted by the Sun. Thus, CV quantum states only gain physical meaning under energy constraints; this turns the task of tomography from unavoidably inconceivable to a precious tool for investigating bosonic quantum systems. As explained below, we can indeed devise tomography algorithms capable of achieving arbitrarily low trace-distance error, for infinite-dimensional states subjected to some energy constraint.



By definition, an $n$-mode state $\rho$ satisfies the \emph{energy constraint} $E$ if the expectation value of the energy observable satisfies
\bb
    \Tr[\hat{E}_n\rho] \le n E\,.
    \label{def_constraint}
\ee
Here, the energy observable is defined as 
\bb
\hat{E}_n\coloneqq \frac{1}{2}\sum_{i=1}^n \left({\hat{p}_i^2}+{\hat{x}_i^2}\right)\,,
\ee
where $\{\hat{x}_i\}^{n}_{i=1}$ and $\{\hat{p}_i\}^{n}_{i=1}$ denote the position and momentum operators of the $n$ modes~\cite{BUCCO}. We normalise the right-hand-side of~\eqref{def_constraint} with $n$ because the energy is an extensive observable. Note that the energy constraint in \eqref{def_constraint} is equivalent to a constraint on the mean \emph{total photon number} of the state (see~\eqref{total_photon_number} in the Methods).


The following theorem, proven in the Supplementary Material (SM), determines the sample complexity of tomography of energy-constrained pure states.
\begin{thm}[(Tomography of energy-constrained pure states)]\label{thm_1_main}
    The sample complexity of tomography of $n$-mode pure states is $O(E^n/\varepsilon^{2n})$. Here, $\varepsilon $ is the trace-distance error and $E$ is the energy constraint.
\end{thm}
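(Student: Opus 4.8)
The plan is to reduce the infinite-dimensional problem to finite-dimensional pure-state tomography by using the energy constraint to confine the unknown state $\rho=\ketbra\psi$ to a low-photon-number subspace of controlled dimension. First I would translate the energy bound into a photon-number bound: since $\hat E_n=\hat N+n/2$ with $\hat N=\sum_i \hat a_i^\dagger \hat a_i$, the constraint $\Tr[\hat E_n\rho]\le nE$ gives $\Tr[\hat N\rho]\le n(E-1/2)\le nE$. Markov's inequality then controls the high-energy tail: the weight above total photon number $M$ obeys $\Tr[P_{>M}\,\rho]\le nE/(M+1)=:\delta$, where $P_{\le M}$ projects onto the span of Fock states with $\sum_i n_i\le M$.

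The crux is an effective-dimension estimate. I would let $\ket{\tilde\psi}:=P_{\le M}\ket\psi/\|P_{\le M}\ket\psi\|$ be the renormalized truncation, which lives in a space of dimension $\deff=\binom{M+n}{n}$. For pure states the gentle-measurement bound gives trace distance $T(\ketbra\psi,\ketbra{\tilde\psi})\le 2\sqrt\delta$, so choosing $M=\Theta(nE/\varepsilon^2)$ makes this at most $\varepsilon/2$. Counting dimensions, $\deff=\binom{M+n}{n}\le \left(e(M/n+1)\right)^n=O\!\left((E/\varepsilon^2)^n\right)=O(E^n/\varepsilon^{2n})$, which is precisely the quantity that dictates the advertised scaling.

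Given this reduction, I would estimate $\ket{\tilde\psi}$ as follows. Measure $\{P_{\le M},P_{>M}\}$ on each copy independently; with probability $1-\delta\ge 1/2$ the outcome is ``$\le M$'', and the post-measurement state is \emph{exactly} $\ketbra{\tilde\psi}$. Hence a constant fraction of the $N$ copies become i.i.d.\ copies of the genuine $\deff$-dimensional pure state $\ket{\tilde\psi}$, on which the optimal finite-dimensional pure-state estimator (e.g.\ the covariant Keyl measurement, with expected infidelity $O(\deff/N)$) returns $\hat\psi$ with $T(\hat\psi,\tilde\psi)\le\varepsilon/2$ once $N=O(\deff/\varepsilon^2)$. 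The triangle inequality then yields $T(\hat\psi,\psi)\le\varepsilon$, with sample count governed by $\deff=O(E^n/\varepsilon^{2n})$.

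The hard part will be the effective-dimension lemma together with the infinite-dimensional bookkeeping: one must make the truncation fully rigorous so that a genuinely finite measurement suffices and no weight is lost in the tails. The most delicate point is to avoid the naive reduction of replacing $\rho^{\otimes N}$ by $\ketbra{\tilde\psi}^{\otimes N}$ inside a \emph{joint} measurement, which would incur a fatal factor $(1-\delta)^N$ from the probability that all $N$ copies land in the cutoff; the per-copy post-selection above is exactly what circumvents this, since each surviving copy is the exact truncated state rather than an approximation. A secondary subtlety, if one wants tightness rather than mere sufficiency, is to align the constants and the polynomial factors with the matching $\varepsilon^{-2n}$ lower bound.
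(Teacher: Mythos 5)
Your upper-bound argument is correct and is essentially the paper's own proof: truncate to the subspace of total photon number at most $M=\Theta(nE/\varepsilon^2)$, bound the tail weight by $nE/M$ via the operator inequality $\mathbb{1}-\Pi_M\le \hat N_n/M$, use the gentle measurement lemma to get trace-distance closeness, count $\dim = \binom{M+n}{n}\le (e(M+n)/n)^n = O(E/\varepsilon^2)^n$, post-select copy by copy on the outcome $\Pi_M$ (which, as you correctly note, yields exact i.i.d.\ copies of the truncated pure state and avoids the fatal $(1-\delta)^N$ from a joint truncation), and finish with an optimal finite-dimensional pure-state estimator. This matches Lemma~\ref{le:effectivedim} and Theorem~\ref{correctness_algorithm_ECpure} in the paper step for step, including the delicate point you single out.

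The genuine gap is that you have proved only half of the theorem. Although the statement is phrased with an $O(\cdot)$, the paper's claim (and the entire point of the ``extreme inefficiency'' phenomenon) is two-sided: $\Theta(E^n/\varepsilon^{2n})$ copies are \emph{necessary and sufficient}, and the necessity direction is the one carrying the conceptual weight, since it shows that \emph{every} algorithm must pay the $\varepsilon^{-2n}$ scaling. You relegate this to ``a secondary subtlety'' about constants, but it is a separate argument requiring its own ideas. The paper proves it (Theorem~\ref{th:lowerboundtomohraphy}) by constructing a packing of $M=2^{\Omega((E/\varepsilon^2)^n)}$ energy-constrained pure states that are pairwise $>2\varepsilon$ apart in trace distance --- obtained by taking a $\frac{1}{2}$-packing of pure states in the photon-number-truncated subspace (whose size is lower bounded via volume/covering-number estimates on the unit sphere) and damping each state by superposing it with the vacuum, $\ket{\psi_\phi}=\sqrt{1-nE/m}\,\ket{0}^{\otimes n}+\sqrt{nE/m}\,\ket{\phi}$ with $m=\Theta(nE/\varepsilon^2)$, so that the energy constraint holds while pairwise distances shrink only by $\sqrt{nE/m}=\Theta(\varepsilon)$ --- and then applying Fano's inequality together with the Holevo bound, where the Holevo information of $N$ copies is capped by $nN g(E)$ using the extremality of thermal states. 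None of this machinery appears in your proposal, so as a proof of the stated theorem it is incomplete.
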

This result establishes that \emph{every} tomography algorithm --- even standard algorithms based on homodyne and heterodyne detections (see Fig.~\ref{tomfig1}) --- must use \emph{at least} $O(E^n/\varepsilon^{2n})$ copies to achieve error $\varepsilon$. Conversely, it also establishes that there exists a tomography algorithm that achieves an error $\varepsilon$ given access to the above number of copies. 

This finding reveals a striking phenomenon that we dub `extreme inefficiency' of CV tomography: not only does the number of copies required for CV tomography scale exponentially with the number of modes $n$, as for finite-dimensional systems, but it also has a dramatic scaling with respect to the error $\varepsilon$. Specifically, the scaling of $\sim  \varepsilon^{-2n}$ is a unique feature of CV tomography, being in stark contrast with the finite-dimensional setting characterised by the scaling of $\sim  \varepsilon^{-2}$. While in the finite-dimensional setting the error can be halved by increasing the number of copies by a factor $4$, which is cheap, in the CV setting one needs an exponential factor $4^n$, which is arresting.

To emphasise this remarkable behaviour, let us illustrate with an example the substantial difference between tomography of finite-dimensional systems and tomography of CV systems. Let us estimate the time required to achieve tomography with error $\varepsilon=10\%$ in both settings. Assume that each copy of the state is produced and processed every $ 1\,\mathrm{ns}$ (typical for qubits and light pulses). Then, tomography of $10$-qubit pure states only requires $\sim 0.1\,\mathrm{ms}$. However, as a consequence of Theorem~\ref{thm_1_main}, tomography of $10$-mode pure states with energy constraint $E=1$ requires $\sim 3000\,\mathrm{years}$, which shows that CV tomography becomes impractical even for a few modes. This highlights that tomography of CV systems is extremely inefficient, much more so than tomography of finite-dimensional systems.

In the forthcoming theorem, proven in the SM, we find bounds on the sample complexity of tomography of energy-constrained \emph{mixed} states.
\begin{thm}[(Tomography of energy-constrained mixed states)]\label{thm_tom_ec_states}
A number $O(E^{2n}/\varepsilon^{3n})$ of copies is sufficient to achieve tomography of $n$-mode mixed states. Conversely, tomography of such states requires at least $O(E^{2n}/\varepsilon^{2n})$ copies. Here, $\varepsilon $ is the trace-distance error and $E$ is the energy constraint.
\end{thm}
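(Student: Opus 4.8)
The statement has two independent halves: an achievability (upper) bound of $O(E^{2n}/\varepsilon^{3n})$ copies and a converse (lower) bound of $\Omega(E^{2n}/\varepsilon^{2n})$ copies. The plan is to attack them by the two complementary techniques standard for sample-complexity statements: a truncation-plus-finite-dimensional-tomography scheme for achievability, and a packing (metric-entropy) argument combined with a Holevo--Fano information inequality for the converse. The guiding principle throughout is that the energy constraint $\Tr[\hat E_n\rho]\le nE$ renders $\rho$ effectively finite-dimensional, with an effective dimension $\deff$ that grows with both $E$ and the inverse resolution $1/\varepsilon$; quantifying $\deff$ is the heart of both arguments.

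\emph{Upper bound.} First I would prove an energy-truncation lemma: let $\Pi_L$ be the projector onto the subspace $\HH_{\le L}$ spanned by Fock vectors with total photon number at most $L$. Markov's inequality applied to the energy constraint gives $\Tr[(\id-\Pi_L)\rho]\le nE/L$, and a gentle-measurement estimate then bounds the trace distance between $\rho$ and its normalised restriction $\Pi_L\rho\Pi_L/\Tr[\Pi_L\rho]$ by $O(\sqrt{nE/L})$. Choosing $L$ polynomially large in $nE/\varepsilon$ forces this truncation error below $\varepsilon/2$, while $\dim\HH_{\le L}=\binom{L+n}{n}$ is the relevant $\deff$, scaling like $(E/\varepsilon)^{O(n)}$. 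Second, I would reduce the problem to ordinary $\deff$-dimensional mixed-state tomography, for which $O(\deff^2/\varepsilon^2)$ copies suffice. The nontrivial point is that the optimal finite-dimensional measurement must be realised on genuine CV copies: I would first perform a coarse photon-number test projecting each copy into $\HH_{\le L}$ (discarding the $O(\sqrt{nE/L})$ tail), then apply the finite-dimensional informationally complete POVM. A triangle inequality combining the $\varepsilon/2$ truncation error with the $\varepsilon/2$ tomography error yields total error $\varepsilon$, and optimising the split between cutoff dimension and per-copy error reproduces the stated $O(E^{2n}/\varepsilon^{3n})$ scaling.

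\emph{Lower bound.} Here I would build a large packing of energy-constrained states and invoke Holevo's bound. Fix, in each mode, a band of $\Theta(E/\varepsilon)$ Fock levels centred around photon number $\sim E/\varepsilon$, and let $V=\bigotimes_{j=1}^n B_j$ be the joint band, of dimension $\deff=\Theta((E/\varepsilon)^n)$. To each density operator $\sigma$ on $V$ associate $\rho_\sigma=(1-\varepsilon)\ketbra{0}^{\otimes n}+\varepsilon\,\sigma$; the vacuum part is energy-free and the band carries $\sim nE/\varepsilon$ photons with weight $\varepsilon$, so $\Tr[\hat E_n\rho_\sigma]\lesssim nE$ after tuning the band, meeting the constraint. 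A volumetric (metric-entropy) bound furnishes an $\Omega(1)$-separated family $\{\sigma_i\}_{i=1}^M$ in trace distance with $\log M=\Omega(\deff^2)$, whence $\|\rho_{\sigma_i}-\rho_{\sigma_j}\|_1=\varepsilon\|\sigma_i-\sigma_j\|_1=\Omega(\varepsilon)$. Any tomographer achieving error $\le c\varepsilon$ can therefore identify the index $i$, so by Fano's inequality and the Holevo bound $\log M\le \chi(\{\rho_{\sigma_i}^{\otimes N}\})\le N\,S(\bar\rho)\le N\log(\deff+1)$, where $\bar\rho$ is supported on the $(\deff+1)$-dimensional space. Rearranging gives $N=\tilde{\Omega}(\deff^2)=\tilde{\Omega}(E^{2n}/\varepsilon^{2n})$.

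\emph{Main obstacle.} The principal difficulty is that the two bounds do not meet: the truncation step in the upper bound is lossy (its $\sqrt{\cdot}$ dependence inflates $\deff$), while the lower bound deliberately uses only the crude entropy bound $S(\bar\rho)\le\log\deff$ rather than the sharper per-copy Holevo information $O(\varepsilon)$ available from the smallness of the distinguishing weight. Tightening either side — a linear-in-tail truncation on the achievability side, or a refined information inequality exploiting that all $\rho_{\sigma_i}$ are $\varepsilon$-close to the common vacuum background on the converse side — is where the real work lies; I expect the converse improvement to be the harder, since controlling the per-copy Holevo quantity of a random-state packing requires simultaneously bounding trace-distance separation and relative entropy, which pull in opposite directions. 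The remaining CV-specific care — well-definedness of entropies and measurements in infinite dimensions, and implementability of the truncating measurement — is routine but must be handled explicitly.
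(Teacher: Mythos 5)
Your converse is essentially the paper's argument: a packing of density operators on a truncated Fock space of dimension $\Theta\!\left((E/\varepsilon)^n\right)$ with $\log M=\Omega(\deff^2)$, embedded as a small-weight mixture with the vacuum so that the energy constraint is met and trace distances shrink by the mixing weight $\varepsilon$, followed by Fano and Holevo. The paper uses the full total-photon-number-$\le m$ subspace with $m\sim nE/\varepsilon$ rather than your per-mode bands, and bounds the Holevo quantity by the bosonic entropy $n\,g(E)$ via extremality of thermal states rather than by $\log(\deff+1)$, but these are cosmetic differences; your lower bound goes through.

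The upper bound, however, has a genuine gap. Your reduction is to \emph{full} $\deff$-dimensional mixed-state tomography at cost $O(\deff^2/\varepsilon^2)$. The gentle-measurement truncation forces $L\gtrsim nE/\varepsilon^2$ (the square-root tail loss is unavoidable when you project onto a subspace and renormalise), hence $\deff=\Theta\!\left((E/\varepsilon^2)^n\right)$ and $\deff^2/\varepsilon^2=\Theta\!\left(E^{2n}/\varepsilon^{4n+2}\right)$. There is no ``split optimisation'' between cutoff and per-copy error that recovers $\varepsilon^{-3n}$ from this: lowering $L$ violates the $\varepsilon/2$ truncation budget. The missing idea is an \emph{effective-rank} bound used in tandem with a rank-sensitive tomography algorithm. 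The paper shows that an energy-constrained state is $O(\varepsilon)$-close to a state of rank $r=O\!\left(E^n/\varepsilon^{n}\right)$: truncating $\rho$ to its top $r$ eigenvectors loses only the tail eigenvalue mass, which by the infinite-dimensional Schur--Horn theorem is at most $\Tr[(\id-\Pi_{L'})\rho]\le nE/L'$ for \emph{any} rank-$r$ projector, in particular $\Pi_{L'}$ with $L'\sim nE/\varepsilon$ --- a \emph{linear} tail bound, with no square root, because no renormalised projection of the state is involved. Feeding $D=O(E^n/\varepsilon^{2n})$ and $r=O(E^n/\varepsilon^{n})$ into the optimal low-rank tomography algorithm with sample complexity $O(Dr/\varepsilon^2)$ (Keyl-type, robust to the state being only approximately rank $r$) yields the claimed $O(E^{2n}/\varepsilon^{3n})$. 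Your closing remark about a ``linear-in-tail truncation'' gestures at this, but applies it to the wrong object: linear truncation is impossible for the ambient dimension and possible only for the rank, and without invoking a rank-aware estimator the exponent $3n$ is out of reach.
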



The core idea underpinning the attainable (yet arresting) performance guarantees of tomography presented in Theorem~\ref{thm_tom_ec_states} is that that energy-constrained CV states are effectively described by finite-dimensional systems. In particular, we prove that any (possibly mixed) energy-constrained state can be approximated, up to trace-distance error $\varepsilon$, by a $D$-dimensional state with rank $r$ such that 
\bb\label{approx_dim_rank_eq}
    D&=O(E^n/\varepsilon^{2n})\,,\\
    r&=O(E^n/\varepsilon^{n})\,.
\ee
The first result of Theorem~\ref{thm_tom_ec_states} simply follows from the observation that the sample complexity of tomography of $D$-dimensional states with rank $r$ is $O(Dr)$~\cite{kueng2014low,odonnell2015efficient,Haah_2017,haah_optimal_2021,anshu2023survey}. Remarkably, Eq.~\eqref{approx_dim_rank_eq} establishes that physical CV states --- those with bounded energy --- are characterised by a finite number of physically relevant parameters.


These conclusions bring us to shed light on the core question in CV tomography posed at the beginning of the section: while tomography of arbitrary CV states is indeed meaningless, considering that infinitely many parameters are physically inconceivable, with an energy constraint tomography gains real-world relevance. Indeed, in this latter case it is sufficient for the output to be a matrix of \emph{finite} dimension, i.e.~$O(E^{n}/\varepsilon^{2n})$.


The above findings can be generalised by considering constraints on any higher moments of the energy. By definition, similarly to \eqref{def_constraint}, an $n$-mode state $\rho$ satisfies the $k$-moment constraint $E$ if 
\bb
    \left(\Tr[(\hat{E}_n)^k\rho]\right)^{\!\frac{1}{k}}\le n E\,.
    \label{def_constraint_moment}
\ee
As proven in the SM, the sample complexity of tomography of pure states becomes $O(E^n/\varepsilon^{2n/k})$, while for mixed states the sample complexity is upper bounded by $O(E^{2n}/\varepsilon^{3n/k})$ and lower bounded by $O(E^{2n}/\varepsilon^{2n/k})$, thus recovering Theorems~\ref{thm_1_main}--\ref{thm_tom_ec_states} for $k=1$. Notably, the sample complexity decreases as $k$ increases, and the dramatic scaling $\sim\varepsilon^{-2n/k}$ disappears for $k\gtrsim n$. 


\begin{figure*}[t]
    \centering
    \includegraphics[width=1\textwidth]{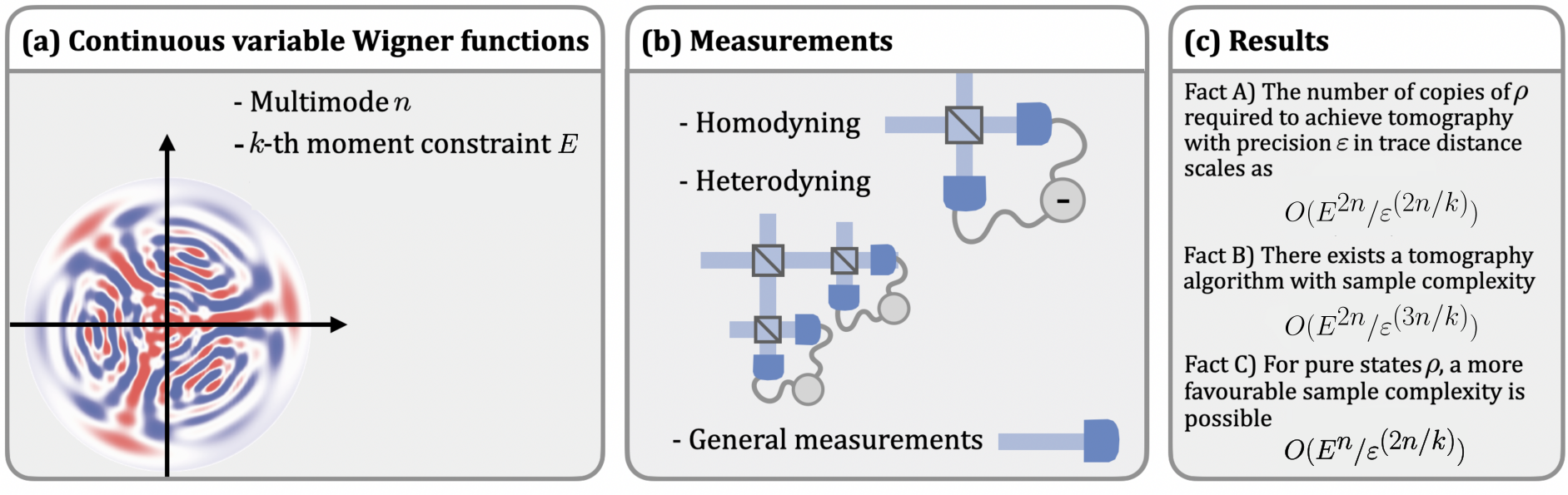}

\caption{We establish fundamental bounds on the resources required for (a) quantum state tomography of continuous-variable $k$-th moment constrained quantum states, highlighting the pronounced inefficiency of any strategy aiming to solve this task. (b) Our results encompass \emph{any} possible strategy, including those using only homodyne and heterodyne measurements, as well as other experimentally feasible operations in photonic platforms, and even general measurements. This means, independently from the techniques used, tomography of CV states is impractical. (c) We identify three key results, labelled Facts A-C. The implication is that the resources needed for tomography exhibit strong dependence on the desired accuracy, scaling as $\sim \varepsilon^{-2n/k}$.}
\label{tomfig1}
\end{figure*}

\subsection*{Gaussian states}
We have identified strong limitations on tomography of CV systems, even under stringent energy constraints. This prompts a natural question: can tomography be efficiently performed for more structured, yet physically interesting, classes of CV states? Here, we answer this question affirmatively by demonstrating that tomography of Gaussian states~\cite{BUCCO} is efficient. This finding is not only conceptually important, but also practically significant, as Gaussian states are the most commonly used states in quantum optics laboratories and have key applications in quantum sensing, communication, and computing~\cite{BUCCO}.

We start by investigating the following problem, rather fundamental for the field of CV quantum information. It is well known that Gaussian states are in one-to-one correspondence with their first moments and covariance matrices~\cite{BUCCO}. However, since in practice one has access only to a finite number of copies of an unknown Gaussian state, it is impossible to determine its first moment and covariance matrix \emph{exactly}. Instead, one can only obtain arbitrarily good approximations of them. Given the operational meaning of the trace distance~\cite{HELSTROM, Holevo1976}, it is thus a fundamental problem to answer the following question: `if we know with an error $\varepsilon$ the first moment and the covariance matrix of an unknown Gaussian state, what is the resulting trace-distance error that we make in approximating the underlying state?' The following theorem, proven in the Methods, answers this question.
\begin{thm}[(Error propagation from moments to trace distance)]\label{estimation_cov_trace_main}
    If we know with an error $\varepsilon$ the first moment and the covariance matrix of an unknown Gaussian state, the resulting trace-distance error is at most $O(\sqrt{\varepsilon})$ and at least $O({\varepsilon})$. 
\end{thm}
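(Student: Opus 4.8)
The plan is to route everything through the (Uhlmann) fidelity $F$ together with the Fuchs--van de Graaf inequalities, which for the trace distance $T(\rho,\tilde\rho)\coloneqq \tfrac12\|\rho-\tilde\rho\|_1$ read $1-\sqrt{F}\le T\le \sqrt{1-F}$. The upper bound then reduces to showing that an $\varepsilon$-accurate knowledge of the moments forces $1-F(\rho,\tilde\rho)=O(\varepsilon)$, whence $T\le\sqrt{1-F}=O(\sqrt{\varepsilon})$; the matching lower bound will follow from an explicit pair of Gaussian states. Throughout, $\rho$ denotes the true state with moments $(m,V)$ and $\tilde\rho$ the reconstructed one with $(\tilde m,\tilde V)$, where $\|m-\tilde m\|,\|V-\tilde V\|\le\varepsilon$.

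For the upper bound I would first separate the two sources of error with a triangle inequality, inserting the auxiliary Gaussian state $\rho'$ with moments $(\tilde m, V)$ so that $T(\rho,\tilde\rho)\le T(\rho,\rho')+T(\rho',\tilde\rho)$. The first term compares two Gaussian states with identical covariance matrix differing only by a displacement, $\rho'=D(\tilde m-m)\,\rho\,D(\tilde m-m)^\dagger$. Using the standard estimate $T(\rho,U\rho U^\dagger)\le \|(U-\id)\sqrt\rho\|_2$ together with $\|(D(\delta)-\id)\sqrt\rho\|_2^2=2\bigl(1-\mathrm{Re}\,\chi_\rho(\delta)\bigr)$ and the explicit Gaussian characteristic function $\chi_\rho$, a second-order expansion in $\delta=\tilde m-m$ yields a contribution of order $O(\sqrt{E})\,\|\delta m\|=O(\sqrt{E}\,\varepsilon)$, since $V\ge \id$ and the energy constraint bounds $\|V\|$ and $\|m\|$. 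This part is therefore already linear in $\varepsilon$.

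The substantive part is the second term, where $\rho'$ and $\tilde\rho$ share the first moment $\tilde m$ but have covariance matrices differing by $\delta V=V-\tilde V$. Here I would invoke the closed-form Banchi--Braunstein--Pirandola fidelity of two zero-mean Gaussian states and bound $1-F$ by a Lipschitz estimate in $\delta V$: since $F=1$ exactly when the covariance matrices coincide, a first-order matrix-perturbation analysis of the determinants, inverses, and matrix square roots appearing in the formula should give $1-F\le C\,\|\delta V\|$, with a constant $C$ controlled by $\|V^{-1}\|$ and $\|V\|$, hence by the energy. A more geometric alternative, which also isolates the relevant quantities, is to bring both states to a common frame via Williamson's decomposition $V=S\,\mathrm{diag}(\nu_1,\nu_1,\dots)\,S^{\mathsf T}$: applying the (trace-distance preserving) Gaussian unitary that diagonalises $V$ reduces the comparison to a product of thermal states against a slightly re-scaled and symplectically rotated copy, so the error splits into a perturbation $\delta\nu_j$ of the symplectic eigenvalues (controlled by the elementary trace distance between thermal states, linear in $\delta\nu_j$) and a small symplectic rotation (linear in its generator, hence in $\delta V$).

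I expect this covariance-matrix step to be the main obstacle, for two reasons. First, both the fidelity formula and the Williamson route require genuine matrix perturbation theory---tracking how the symplectic eigenvalues $\nu_j$ and the symplectic diagonaliser $S$ move under $\delta V$---which is delicate near degeneracies of the $\nu_j$ and, more importantly, near the pure/boundary regime $\nu_j\to 1$, where the state is nearly pure and small covariance errors are amplified; it is precisely this amplification that obstructs improving the bound below $\sqrt{\varepsilon}$ uniformly and forces the energy-dependent constants. Second, one must ensure that the reconstructed $\tilde V$ is itself a legitimate covariance matrix (satisfying $\tilde V+i\Omega\ge 0$), projecting it onto the physical set if necessary. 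For the matching lower bound it suffices to exhibit two coherent states $\ket\alpha,\ket\beta$ with $\|\alpha-\beta\|=\Theta(\varepsilon)$ and identical covariance matrices: their moments then differ by $\Theta(\varepsilon)$ while $T=\sqrt{1-|\langle\alpha|\beta\rangle|^2}=\sqrt{1-e^{-\|\alpha-\beta\|^2}}\gtrsim \varepsilon$, showing that the trace-distance error cannot be made $o(\varepsilon)$ and thus that the propagation is at least linear.
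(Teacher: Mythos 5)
Your overall architecture for the upper bound---split off the first-moment error by a triangle inequality through an auxiliary Gaussian state, handle the displacement directly, then deal with the covariance mismatch---parallels the paper's, and your displacement step is sound: $\tfrac12\|U\rho U^\dagger-\rho\|_1\le\|(U-\id)\sqrt{\rho}\|_2$ together with $\|(\hat{D}_\delta-\id)\sqrt{\rho}\|_2^2=2(1-\operatorname{Re}\chi_\rho(\delta))$ and the Gaussian characteristic function does give an $O(\sqrt{E}\,\|\delta\|)$ contribution, comparable to the paper's bound $\tfrac12\|\D_\delta-\Id\|_{\diamond N}\le f(N)\|\delta\|_2$ obtained from the energy-constrained diamond norm. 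Your coherent-state pair is likewise a legitimate, elementary witness that the propagation cannot beat linear order, though it is existential and only exercises the first moment; the paper instead proves a universal lower bound for all energy-bounded pairs by data-processing through the heterodyne measurement and invoking tight total-variation bounds between Gaussian distributions, and you would need a second example (e.g.\ two thermal states) to witness linear propagation of the covariance error.

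The genuine gap is the covariance step, which you correctly identify as the main obstacle but then leave as an assertion: you claim a perturbation analysis of the Banchi--Braunstein--Pirandola fidelity formula ``should give'' $1-F\le C\|\delta V\|$ with $C$ controlled by the energy. This is precisely the route the authors report attempting and abandoning---the formula involves determinants, inverses and matrix square roots of functions of both covariance matrices, and the uniform control you need degenerates exactly where you note it does (symplectic eigenvalues approaching $1$), so no proof of the Lipschitz estimate is actually supplied, and without it the whole upper bound collapses. The paper circumvents fidelity entirely with an idea your proposal is missing: in the equal-first-moment case set $T\coloneqq\tfrac12\left(V+W+|V-W|\right)$, so that $T\ge V$ and $T\ge W$; adding classical Gaussian displacement noise $\NN_{T-V}$ to $\rho$ and $\NN_{T-W}$ to $\sigma$ sends both states to the \emph{same} Gaussian state (identical first moments and covariance $T$), whence $\tfrac12\|\rho-\sigma\|_1\le\tfrac12\|\NN_{T-V}-\Id\|_{\diamond N}+\tfrac12\|\NN_{T-W}-\Id\|_{\diamond N}$, and each term is bounded by $f(N)\sqrt{\Tr[T-V]}$ (resp.\ $f(N)\sqrt{\Tr[T-W]}$) by averaging the displacement-channel bound over the noise distribution and applying Jensen. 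Summing yields $\sqrt{2}\,f(N)\sqrt{\|V-W\|_1}$, which is where the $\sqrt{\varepsilon}$ comes from, with no spectral perturbation theory and no breakdown near purity. Until you either carry out the fidelity perturbation uniformly (including the near-pure regime) or replace it with an argument of this kind, your proposal does not establish the upper bound.
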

To prove this result, in Theorem~\ref{thm_upp_bound} and in Theorem~\ref{thm_trace_distance_lower_bound_main} of the Methods we find stringent bounds on the trace distance between two Gaussian states in terms of the norm distance between their first moments and covariance matrices, which constitute technical tools of independent interest for the field of CV quantum information. 


Notably, Theorem~\ref{estimation_cov_trace_main} is the key tool to prove the main result of this section: \emph{tomography of Gaussian states is efficient}. Specifically, the forthcoming theorem, proven in the SM, shows that the sample complexity of tomography of Gaussian states scales polynomially with the number of modes.
\begin{thm}[(Tomography of Gaussian states)]\label{tom_gaus_main}
There exists an algorithm that uses
\bb
    O\!\left( \frac{n^7E^4}{\varepsilon^4}\right)   =\mathrm{poly}(n)
\ee
copies to achieve tomography of an unknown $n$-mode Gaussian state. Here, $\varepsilon $ is the trace-distance error and $E$ is the energy constraint.
\end{thm}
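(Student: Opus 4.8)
The plan is to exploit the fact that a Gaussian state $\rho$ is completely determined by its first-moment vector $m\in\R^{2n}$ and its covariance matrix $\sigma\in\R^{2n\times 2n}$, thereby reducing tomography to the estimation of these finitely many real parameters. First I would design an estimator that produces approximations $\hat m$ and $\hat\sigma$, and simply output the Gaussian state with those moments. The correctness of this scheme is governed entirely by Theorem~\ref{estimation_cov_trace_main} (quantitatively, by the upper bound of Theorem~\ref{thm_upp_bound}): since knowing the moments up to error $\eta$ costs at most $O(\sqrt{\eta})$ in trace distance, it suffices to guarantee that $\|\hat m - m\|$ and $\|\hat\sigma - \sigma\|$ are of order $\varepsilon^2$ (up to the explicit $n$- and $E$-dependent prefactors appearing in Theorem~\ref{thm_upp_bound}) in order to reach trace-distance error $\varepsilon$.

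Next I would specify the measurement scheme. Each entry of $m$ is an expectation value $\langle \hat x_i\rangle$ or $\langle \hat p_i\rangle$, and each entry of $\sigma$ is built from expectation values of the quadratic observables $\hat x_i \hat x_j$, $\hat p_i \hat p_j$, $\hat x_i \hat p_j$; all of these are accessible through homodyne and heterodyne detection on single copies. Thus the task reduces to estimating $O(n^2)$ scalar expectation values of observables that are at most quadratic in the quadratures, which I would carry out by empirical averaging over the sampled copies.

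The heart of the argument, and the main obstacle, is controlling the statistical fluctuations of these empirical averages: the quadratures are unbounded operators, so the estimators are heavy-tailed and one cannot invoke Hoeffding-type bounds directly. Here I would use the energy constraint together with the Gaussian structure. For a Gaussian state the fourth moments are fixed by the second moments through the Wick (Isserlis) identity, so the variance of each quadratic estimator is controlled by quantities such as $\langle \hat x_i^2\rangle^2$, which the constraint $\Tr[\hat E_n \rho]\le nE$ bounds by $O(n^2E^2)$ (note that a single mode may carry energy as large as $nE$, which is the source of several factors of $n$). A Chebyshev bound --- or median-of-means to boost the success probability --- then shows that $\mathrm{poly}(n,E)/\eta^2$ copies suffice to estimate each entry to accuracy $\eta$ with high probability.

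Finally I would assemble the pieces: a union bound over the $O(n^2)$ entries, together with the conversion from entrywise accuracy to the operator/Frobenius-norm accuracy required by Theorem~\ref{thm_upp_bound} (which loses further polynomial factors in $n$), and the choice $\eta\sim\varepsilon^2$ dictated by the square-root error propagation. Tracking the exponents through these steps --- the $\varepsilon^2$ target, the $1/\eta^2$ sampling cost, the $n^2E^2$ variance, and the norm-conversion factors --- yields the claimed $O(n^7 E^4/\varepsilon^4)$ scaling. I expect the delicate point to be precisely this bookkeeping of norms and energy dependence rather than any single conceptual difficulty, since the conceptual reduction to finitely many moment estimates is immediate once Theorem~\ref{estimation_cov_trace_main} is in hand.
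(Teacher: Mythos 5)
Your proposal follows essentially the same route as the paper: estimate the first moment and covariance matrix via homodyne detection with median-of-means estimators (controlling the variances of the unbounded quadratic observables through the energy constraint and the Gaussian bound on the second moment of the energy), then output the Gaussian state with those estimated moments and invoke the trace-distance upper bound of Theorem~\ref{thm_upp_bound} with target moment accuracy of order $\varepsilon^2$. The one detail you omit is that the raw covariance estimate need not satisfy the uncertainty relation $\tilde{V}+i\Omega_n\ge 0$, so there may be no Gaussian state with those moments; the paper fixes this by shifting the estimate by $\frac{\varepsilon}{2}\mathbb{1}$ (and also verifies that the estimator's energy stays within a constant factor of the true state's, as required to apply Theorem~\ref{thm_upp_bound}).
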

This algorithm simply consists in estimating the first moment and the covariance matrix of the unknown Gaussian state, both procedures routinely performed in quantum optics laboratories using homodyne detection~\cite{BUCCO}. Additionally, the algorithm runs in $\mathrm{poly}(n)$ time, and its output is efficient to store, as it consists only of the $O(n^2)$ parameters of the first moment and covariance matrix.  


In summary, we have established the efficiency of tomography of Gaussian states. However, what if the state deviates slightly from being exactly Gaussian? This is a crucial question, especially considering imperfections during state preparation in experimental setups. In the SM, we demonstrate that our tomography algorithm is robust against little perturbations caused by non-Gaussian noise (e.g., dephasing noise). Even if the unknown state is a \emph{slightly perturbed} Gaussian state --- technically, a state with sufficiently small \emph{relative entropy of non-Gaussianity}~\cite{Marian2013} --- our algorithm remains effective and tomography remains efficient.

\begin{figure*} 
    \centering
\includegraphics[width=1\textwidth]{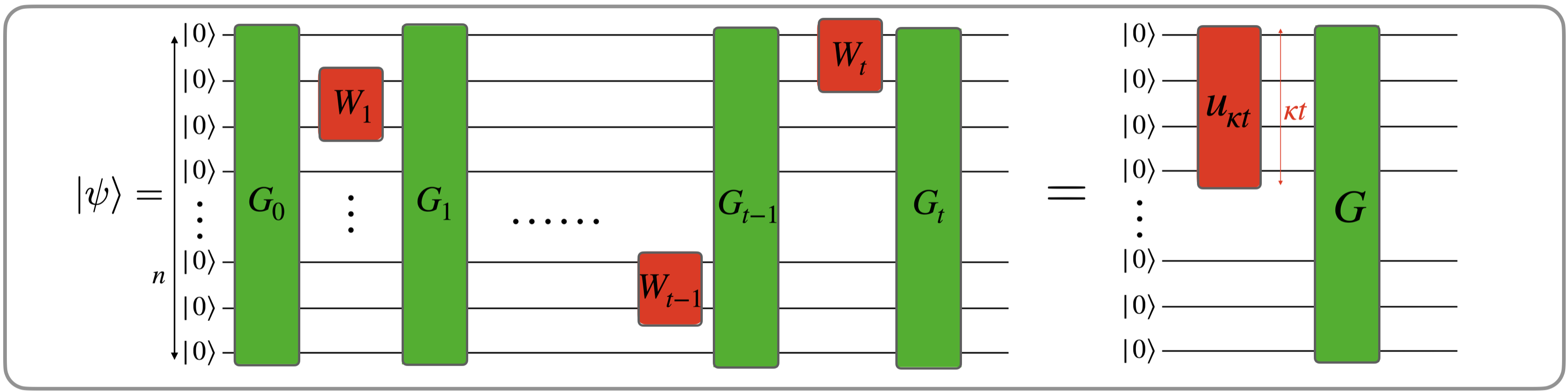}
    \caption{Pictorial representation of a $t$-doped Gaussian state. By definition, a $t$-doped Gaussian state vector $\ket{\psi}$ is a state prepared by applying Gaussian unitaries $G_0,\cdots,G_t$ (green boxes) and at most $t$ non-Gaussian $\kappa$-local unitaries $W_1,\cdots,W_t$ (red boxes) to the $n$-mode vacuum. A unitary is said to be $\kappa$-local if it is generated by a Hamiltonian which is a polynomial in at most $\kappa$ operators from the set of position operators $\{\hat{x}_i\}_{i=1}^n$ and momentum operators $\{\hat{p}_i\}_{i=1}^n$ of the $n$ modes. The figure also shows the decomposition proven in Theorem~\ref{thm_compressionMAIN}, which establishes that all the non-Gaussianity in $\ket{\psi}$ can be compressed in a localised region consisting of $\kappa t$ modes by applying a Gaussian unitary $G^\dagger$ to $\ket{\psi}$.}
    \label{figdoped}
\end{figure*}

\subsection*{$\boldsymbol{t}$-doped Gaussian states}
Having demonstrated that tomography of arbitrary (non-Gaussian) states is extremely inefficient (Theorem~\ref{thm_1_main}) and that tomography of Gaussian states is efficient (Theorem~\ref{tom_gaus_main}), we now ask whether there are relevant classes of non-Gaussian states for which tomography remains efficient. This leads us to analyse `$t$-doped Gaussian states': states prepared by applying Gaussian unitaries and at most $t$ non-Gaussian local unitaries on the vacuum state. The set of all $t$-doped Gaussian states coincides with that of all pure Gaussian states for $t=0$, it grows as $t$ increases, and it becomes the set of all pure states for $t\rightarrow\infty$~\cite{Lloyd1999}. In this sense, analysing the performance of tomography as a function of $t$ allows us to understand the trade-off between the efficiency of tomography and the degree of `non-Gaussianity'.

Let us proceed with the mathematical definitions. A unitary $U$ is a \emph{$t$-doped Gaussian} unitary if it is a composition 
\begin{equation}
    U = G_{t}W_t\cdots G_1 W_1 G_0\,,
\end{equation}
of Gaussian unitaries $G_0,G_1,\ldots,G_t$ and at most $t$ (non-Gaussian) $\kappa$-local unitaries $W_1,\ldots,W_t$, 
as depicted in Fig.~\ref{figdoped}. A unitary is $\kappa$-local if it is generated by a Hamiltonian which is a polynomial in at most $\kappa$ quadratures. An $n$-mode state vector is a \emph{$t$-doped Gaussian state vector} 
\begin{equation}
    \ket{\psi}=U\ket{0}^{\otimes n}
\end{equation}
if it can be prepared by applying a $t$-doped Gaussian unitary $U$ to the vacuum. The forthcoming theorem, proven in the SM, provides a remarkable decomposition of $t$-doped unitaries and states. 

\begin{thm}[(Compression of non-Gaussianity)]\label{thm_compressionMAIN}
If $n\ge \kappa t$, any $n$-mode $t$-doped Gaussian unitary $U$ can be decomposed as 
\bb
    U = G(u_{\kappa t}\otimes \mathbb{1}_{n-\kappa t}) G_{\text{passive}}\,,
\ee
for some suitable Gaussian unitary $G$, energy-preserving Gaussian unitary $G_{\text{passive}}$~\cite{BUCCO}, and $\kappa t$-mode (non-Gaussian) unitary $u_{\kappa t}$. In particular, any $n$-mode $t$-doped Gaussian state vector can be decomposed as 
\bb\label{dec_t_dopedMAIN}
    \ket{\psi}=G\left(\ket{\phi_{\kappa t}}\otimes\ket{0}^{\otimes(n-\kappa t)}\right)\,,
\ee
for some suitable Gaussian unitary $G$ and $\kappa t$-mode (non-Gaussian) state vector $\ket{\phi_{\kappa t}}$.
\end{thm}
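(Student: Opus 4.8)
The plan is to first sweep all the Gaussian unitaries to the left of the circuit, and then use a single \emph{passive} (energy-preserving) Gaussian unitary to confine the leftover non-Gaussian content to $\kappa t$ modes. Starting from $U = G_t W_t \cdots G_1 W_1 G_0$, I would repeatedly use the identity $W_j A = A\,(A^\dagger W_j A)$ to move every Gaussian to the far left. The crucial point is that conjugating a $\kappa$-local gate by a Gaussian keeps it $\kappa$-local: if $W_j = e^{-i H_j}$ with $H_j$ a polynomial in $\kappa$ quadratures, then $A^\dagger W_j A = e^{-i A^\dagger H_j A}$, and since Gaussian conjugation sends every quadrature to an affine combination of quadratures, $A^\dagger H_j A$ is again a polynomial in at most $\kappa$ quadratures. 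Carrying this out yields $U = \mathcal G\,\widetilde W_t \cdots \widetilde W_1$, where $\mathcal G = G_t\cdots G_0$ is Gaussian and each $\widetilde W_j$ is $\kappa$-local with an associated \emph{active subspace} $V_j\subseteq\mathbb R^{2n}$ — the span of the coefficient vectors of its $\le\kappa$ generating quadratures — obeying $\dim_{\mathbb R} V_j\le\kappa$.

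Next, set $V \coloneqq \sum_{j=1}^t V_j$, so that $\dim_{\mathbb R} V\le\kappa t$, and produce a passive Gaussian unitary $G_{\text{passive}}$ whose symplectic action maps $V$ into the quadratures of the first $\kappa t$ modes. The clean way to see such a map exists is to pass to the complex structure $J$ of phase space: the smallest $J$-invariant subspace containing $V$, namely $V+JV$, has complex dimension at most $\dim_{\mathbb R}V\le\kappa t$, and because the passive transformations realize the full unitary group $U(n)$ acting transitively on complex subspaces of fixed dimension, some passive $G_{\text{passive}}$ sends $V+JV$ (hence $V$) onto the span of the first $\kappa t$ modes; here the hypothesis $n\ge\kappa t$ guarantees there is room. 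Conjugation then turns each $\widetilde W_j$ into a unitary supported on the first $\kappa t$ modes, so their product equals $u_{\kappa t}\otimes\mathbb 1_{n-\kappa t}$ for some $\kappa t$-mode unitary $u_{\kappa t}$. Rearranging gives
\[
U \;=\; \mathcal G\, G_{\text{passive}}^\dagger\,(u_{\kappa t}\otimes\mathbb 1_{n-\kappa t})\,G_{\text{passive}} \;=\; G\,(u_{\kappa t}\otimes\mathbb 1_{n-\kappa t})\,G_{\text{passive}},
\]
with $G\coloneqq\mathcal G\,G_{\text{passive}}^\dagger$ Gaussian, which is the claimed unitary decomposition.

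For the state statement, I would apply $U$ to the vacuum and use that a passive unitary fixes the vacuum, $G_{\text{passive}}\ket0^{\otimes n}=\ket0^{\otimes n}$ (up to a global phase). This gives $\ket\psi = G\,(u_{\kappa t}\otimes\mathbb 1_{n-\kappa t})\ket0^{\otimes n}=G\big(\ket{\phi_{\kappa t}}\otimes\ket0^{\otimes(n-\kappa t)}\big)$ with $\ket{\phi_{\kappa t}}\coloneqq u_{\kappa t}\ket0^{\otimes\kappa t}$, which is exactly \eqref{dec_t_dopedMAIN}.

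The heart of the argument — and the step requiring the most care — is showing that the localizing transformation can be chosen \emph{passive} rather than merely Gaussian, since this is precisely what renders the vacuum invariant and hence what upgrades the unitary decomposition into the state decomposition. It hinges on the fact that every real subspace is contained in a $J$-invariant subspace of at most twice its dimension, combined with the transitivity of $U(n)$ on complex Grassmannians. A secondary bookkeeping point is to confirm that the active subspaces add up as claimed and that Gaussian conjugation transports each of them by the corresponding symplectic matrix while preserving $\kappa$-locality.
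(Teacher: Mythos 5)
Your proposal is correct and follows essentially the same route as the paper's proof: sweep the Gaussians to the left via conjugation, then choose a single passive (orthogonal--symplectic) unitary that rotates the union of the $\le\kappa t$-dimensional active subspaces into the first $\kappa t$ modes, which the paper establishes through exactly the $\mathrm{O}(2n)\cap\mathrm{Sp}(2n)\cong \mathrm{U}(n)$ isomorphism underlying your complex-structure/Grassmannian argument (their Lemma on orthogonal symplectic matrices). The state decomposition then follows, as in the paper, from the passive unitary fixing the vacuum.
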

Eq.~\eqref{dec_t_dopedMAIN} establishes that all the non-Gaussianity of a $t$-doped Gaussian state can be compressed into $O(t)$ modes via a suitable Gaussian unitary. This result can be seen as a bosonic counterpart of recent results within the stabiliser~\cite{leone2023learning,Oliviero_2021,Leone_2021Chaos,leone2023learning22PUBL1,leone2023learning22PUBL2,grewal2023efficient} and fermionic setting~\cite{mele2024efficient}, which reveals a fascinating parallelism among the theories of stabilisers, fermions, and bosons.

Leveraging the decomposition in~\eqref{dec_t_dopedMAIN}, we design a simple and experimentally feasible tomography algorithm for $t$-doped Gaussian states. Our algorithm involves: (i) estimating the first moment and the covariance matrix to construct an estimation of the Gaussian unitary $G$; (ii) applying its inverse to the state to compress the non-Gaussianity into the first \(\kappa t\) modes; and (iii) performing full-state tomography on the first \(\kappa t\) modes (see the SM for details). This algorithm is practical because it only requires tools commonly available in quantum optics laboratories, such as Gaussian evolutions and easily implementable Gaussian measurements like homodyne and heterodyne detection~\cite{BUCCO}. Specifically, step (iii) can be achieved using the CV classical shadow algorithm~\cite{becker_classical_2023}, which is experimentally feasible. Alternatively, in order to obtain a tighter upper bound on the sample complexity of tomography of $t$-doped Gaussian states, step (iii) can be performed using the optimal full-state tomography algorithm identified in Theorem~\ref{thm_1_main}.

The following Theorem~\ref{sample_t_doped_main}, proven in the SM, analyses the performance of our tomography algorithm. On a technical note, deriving guarantees on the estimation of the covariance matrix in step (i) requires more than just an energy constraint; a second-moment constraint is necessary. Refer to \eqref{def_constraint_moment} for the definition of second-moment constraint.
\begin{thm}[(Tomography of $t$-doped Gaussian state)]\label{sample_t_doped_main} 
There exists an algorithm that exploits
\bb\label{eq_sample_t_doped}
    \mathrm{poly}(n)+O\!\left(\left(n E/\varepsilon\right)^{2\kappa t}\right)
\ee
copies to achieve tomography of $n$-mode $t$-doped Gaussian states. Here, $E$ is the second-moment constraint, $\varepsilon$ is the trace-distance error, and $\kappa$ is the locality of the non-Gaussian unitaries.
\end{thm}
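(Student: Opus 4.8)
The plan is to reduce tomography of a $t$-doped Gaussian state to full tomography of a small non-Gaussian ``core'' supported on only $\kappa t$ modes, exploiting the compression of Theorem~\ref{thm_compressionMAIN}. Writing the unknown state (valid since $n\ge\kappa t$) as $\ket{\psi}=G\big(\ket{\phi_{\kappa t}}\otimes\ket{0}^{\otimes(n-\kappa t)}\big)$, the algorithm follows steps (i)--(iii): learn an approximation $\tilde G$ of the Gaussian unitary $G$ from estimated moments; apply $\tilde G^\dagger$ to the remaining copies to concentrate all non-Gaussianity in the first $\kappa t$ modes; and finally run the optimal pure-state tomography of Theorem~\ref{thm_1_main} on those modes. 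Re-appending $\tilde G$ to the reconstructed core (a trace-distance-preserving operation) yields the estimate of $\ket{\psi}$.

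First I would estimate the first moment $\mathbf{m}$ and covariance matrix $V$ of $\ket{\psi}$. As flagged in the theorem, a mere energy constraint does not control the fluctuations of the quadratic estimators of $\hat{x}_i\hat{x}_j$; the second-moment constraint $\sqrt{\Tr[\hat E_n^2\rho]}\le nE$ bounds their variance by $O((nE)^2)$ and yields estimates with $\|V-\tilde V\|,\ \|\mathbf{m}-\tilde{\mathbf{m}}\|\le\delta$ using $\mathrm{poly}(n)\cdot(nE)^2/\delta^2$ copies, which is the $\mathrm{poly}(n)$ term of \eqref{eq_sample_t_doped}. From $\tilde V$ I would construct $\tilde G$ via a symplectic (Williamson-type) diagonalisation, identifying the $n-\kappa t$ modes whose symplectic eigenvalues are closest to $1$: these are exactly the vacuum modes of the core, since the symplectic spectrum of $V$ equals that of the core covariance $V_\phi$ together with $n-\kappa t$ ones.

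The crux is to certify that applying $\tilde G^\dagger$ genuinely compresses the non-Gaussianity up to small trace-distance error. Two ingredients combine. First, a mode with covariance $\id$ and vanishing first moment has energy exactly $1/2$, and since the vacuum is the \emph{unique} minimum-energy state, forcing the last $n-\kappa t$ modes to covariance $\approx\id$ and first moment $\approx 0$ makes their reduced state $O(\sqrt{\delta'})$-close to $\ket{0}^{\otimes(n-\kappa t)}$, with $\delta'$ controlled by $\delta$ through the moment-to-trace-distance bound behind Theorem~\ref{estimation_cov_trace_main}; as the vacuum is pure, the global state is then $O(\sqrt{\delta'})$-close to a product $\ket{\phi'}\otimes\ket{0}^{\otimes(n-\kappa t)}$. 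Second, because $\tilde G^\dagger G$ deviates from block form only to order $\delta$, and a near-identity Gaussian unitary perturbs a state by an amount scaling with its energy, the compression error is $O\big(\delta\cdot\mathrm{poly}(nE)\big)$. Choosing $\delta$ polynomially small in $\varepsilon/(nE)$ keeps both contributions below $\varepsilon$ while leaving the first stage polynomial. I expect this error-propagation step to be the main obstacle: one must quantify how the squeezing in the estimated $\tilde G$ amplifies the moment errors, and resolve the near-unit symplectic eigenvalues robustly against the noise in $\tilde V$.

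Finally I would bound the energy of the compressed core and invoke Theorem~\ref{thm_1_main}. The energy constraint gives $\tfrac12\Tr[V]\le nE$, while the symplectic map realising the compression can squeeze by a factor bounded by the symplectic spectrum of $V$, itself $O(nE)$; combining these bounds the total core energy by $O((nE)^2)$, so the $\kappa t$-mode core obeys an energy constraint $E'=O((nE)^2)$. Tomographing it to trace-distance error $\varepsilon$ via Theorem~\ref{thm_1_main} then costs $O\big((E'/\varepsilon^2)^{\kappa t}\big)=O\big((nE/\varepsilon)^{2\kappa t}\big)$ copies, the second term of \eqref{eq_sample_t_doped}. Summing the two stages and bookkeeping the three error contributions (moment estimation, compression, core tomography) via the triangle inequality certifies total error $O(\varepsilon)$, completing the proof.
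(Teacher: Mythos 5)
Your proposal follows essentially the same route as the paper's proof: estimate the first two moments using the second-moment constraint, build $\tilde G$ from the Williamson decomposition of $\tilde V$, apply $\tilde G^\dagger$ to compress the non-Gaussianity, bound the core's energy by $O((nE)^2)$ via $\|\tilde S\|_\infty^2\le\|\tilde V\|_\infty=O(nE)$, and finish with the optimal pure-state tomography on $\kappa t$ modes, giving $O((nE/\varepsilon)^{2\kappa t})$. The step you flag as the main obstacle is precisely what the paper resolves — it invokes a known perturbation bound on symplectic eigenvalues in terms of the condition numbers of the covariance matrices, and it implements the compression by post-selecting on the vacuum of the last $n-\kappa t$ modes (certified via the operator inequality $\mathbb{1}\otimes\ketbra{0}^{\otimes(n-\kappa t)}\ge\mathbb{1}-\sum_{i>\kappa t}a_i^\dagger a_i$ and the gentle measurement lemma) rather than arguing trace-distance closeness to a product state directly, but this changes nothing essential.
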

This theorem implies that the sample complexity of tomography of $t$-doped Gaussian states scales at most exponentially with $\kappa t$, indicating that tomography becomes increasingly difficult as the degree of non-Gaussianity $t$ increases. Additionally, both the time and memory complexity exhibit the same behaviour. Notably, this establishes the core finding of this section: \emph{tomography of $t$-doped Gaussian states is efficient in the regime $\kappa t = O(1)$.}


The proof of Theorem~\ref{sample_t_doped_main} hinges on the fact that a $t$-doped state is \emph{compressible}, meaning that it satisfies the decomposition in \eqref{dec_t_dopedMAIN}. Specifically, we can prove that tomography of compressible states is efficient \emph{if and only if} $\kappa t = O(1)$. This contrasts with the stabiliser~\cite{leone2023learning,grewal2023efficient} and fermionic setting~\cite{mele2024efficient}, where tomography of compressible stabiliser/fermionic states is efficient if and only if $t = O(\log(n))$. The difference in our setting ultimately arises from the infinite-dimensional nature of CV states and the presence of energy constraints.

\section{Discussion}
Our work serves as bridge between the two fields of quantum learning theory and CV quantum information. We provided the first and at the same time exhaustive investigation of tomography of CV systems with guarantees on the trace-distance error. 

First, we determined the sample complexity of tomography of energy-constrained pure states, which pinpoints the ultimate achievable performance of CV tomography. A new phenomenon, the `extreme inefficiency' of CV tomography, emerged: \emph{any} tomography algorithm for energy-constrained states must use a number of copies that dramatically scales at least as $\varepsilon^{-2n}$, where $n$ is the number of modes and $\varepsilon$ is the trace-distance error. This phenomenon, providing arresting fundamental limitations even for small $n$, is a unique feature of tomography of CV systems.

On a more positive note, we proved that tomography of Gaussian states is efficient. To establish this, we investigated a fundamental problem of CV quantum information: how the error in approximating the first moment and the covariance matrix of a Gaussian state propagates in the trace-distance error. Our solution introduces new tools of independent interest: simple, stringent bounds on the trace distance between two Gaussian states in terms of the norm distance between their first moments and covariance matrices. ~

Finally, we devised an experimentally-feasible algorithm that efficiently achieves tomography of $t$-doped Gaussian states for small $t$. This establishes that  even if a few non-Gaussian local gates are applied to a Gaussian state, tomography of the resulting non-Gaussian state remains efficient. The main tool employed here is a new decomposition of $t$-doped Gaussian states, which shows that all the non-Gaussianity in the state can be compressed into only $O(t)$ modes via a Gaussian unitary.


\medskip

\section{Acknowledgements}
We thank Matthias Caro, Nathan Walk, and Marco Fanizza for useful discussions. This work has been supported by
the BMBF (QPIC-1, PhoQuant, DAQC), 
the DFG (CRC 183), the QuantERA (HQCC), 
the MATH+ Cluster of Excellence,
the Quantum Flagship (Millenion, PasQuans2), the Einstein Foundation (Einstein Research Unit on Quantum Devices), and the Munich Quantum Valley (K-8). J.E.~is also funded by the European Research Council (ERC) within the project DebuQC. F.A.M., S.F.E.O., and V.G.\ acknowledges financial support by MUR (Ministero dell'Istruzione, dell'Universit\`a e della Ricerca) through the following projects: PNRR MUR project PE0000023-NQSTI. F.A.M.\ and S.F.E.O.\ thank the Freie Universit\"{a}t Berlin for hospitality. F.A.M.\ thanks the University of Amsterdam and QuSoft for hospitality.
\medskip


\smallskip
 
\bibliographystyle{unsrt}
\bibliography{biblio}

\clearpage


\section{Methods} 
\subsection{Trace distance}
The \emph{trace distance} between two quantum states $\rho_1$ and $\rho_2$ is defined as 
\bb
    d_{\mathrm{tr}}(\rho_1,\rho_2)\coloneqq \frac{1}{2}\|\rho_1-\rho_2\|_1\,,
\ee
where $\|A\|_1\coloneqq\Tr\sqrt{A^\dagger A}$ denotes the trace norm. The trace distance is considered the most meaningful notion of distance between two states because of its operational meaning in terms of the optimal probability of discriminating between the two states having access to a single copy of the state (Holevo--Helstrom theorem~\cite{HELSTROM, Holevo1976}). Consequently, in quantum information theory, the error in approximating a state is typically quantified using the trace distance.

\subsection{Quantum state tomography}
In this section, we precisely formulate the problem of \emph{quantum state tomography}~\cite{anshu2023survey}, which forms the basis of 
our investigation. The basic setup is depicted in Fig.~\ref{fig_tomography_main}.

\begin{problem}[(Quantum state tomography)] 
\label{def:PROBtom_methods}
    Let $\pazocal{S}$ be a set of quantum states. Consider $\varepsilon,\delta \in (0,1)$, and $N\in \N$. Let $\rho \in \pazocal{S}$ be an unknown quantum state. Given access to $N$ copies of $\rho$, the goal is to provide a classical description of a quantum state $\tilde{\rho}$ such that
    \bb
        \Pr\!\left[d_{\mathrm{tr}}(\tilde{\rho},\rho)  \le \varepsilon\right]\ge 1-\delta\,.
    \ee
    That is, with a probability $\ge 1-\delta$, the trace distance between $\tilde{\rho}$ and $\rho$ is at most $\varepsilon$.    Here, $\varepsilon$ is called the trace-distance error, while $\delta$ is called the failure probability.
\end{problem}
The \emph{sample complexity}, the \emph{time complexity}, and the \emph{memory complexity} of tomography of states in $\pazocal{S}$ are defined as the minimum number of copies $N$, the minimum amount of classical and quantum computation time, and the minimum amount of classical memory, respectively, required to solve Problem~\ref{def:PROBtom_methods} with trace distance error $\varepsilon$ and failure probability $\delta$. It is worth noting that the time complexity always upper bound the memory complexity, as well as the sample complexity.

One can think of $\pazocal{S}$ as a specific subset of the entire set of $n$-qubit states (e.g., pure states, $r$-rank states, stabiliser states) or $n$-mode states (e.g., energy-constrained states, moment-constrained states, Gaussian states, $t$-doped Gaussian states). By definition, tomography is deemed \emph{efficient} if the sample, time, and memory complexity scales polynomial in $n$; otherwise, it is deemed \emph{inefficient}. For example, in our work, we prove that tomography of energy-constrained states is (extremely) inefficient. In contrast, tomography of Gaussian states is efficient, while tomography of $t$-doped Gaussian states is efficient for small $t$ and inefficient for large $t$.

 \begin{figure}[t]
    \centering
    \includegraphics[width=0.48\textwidth]{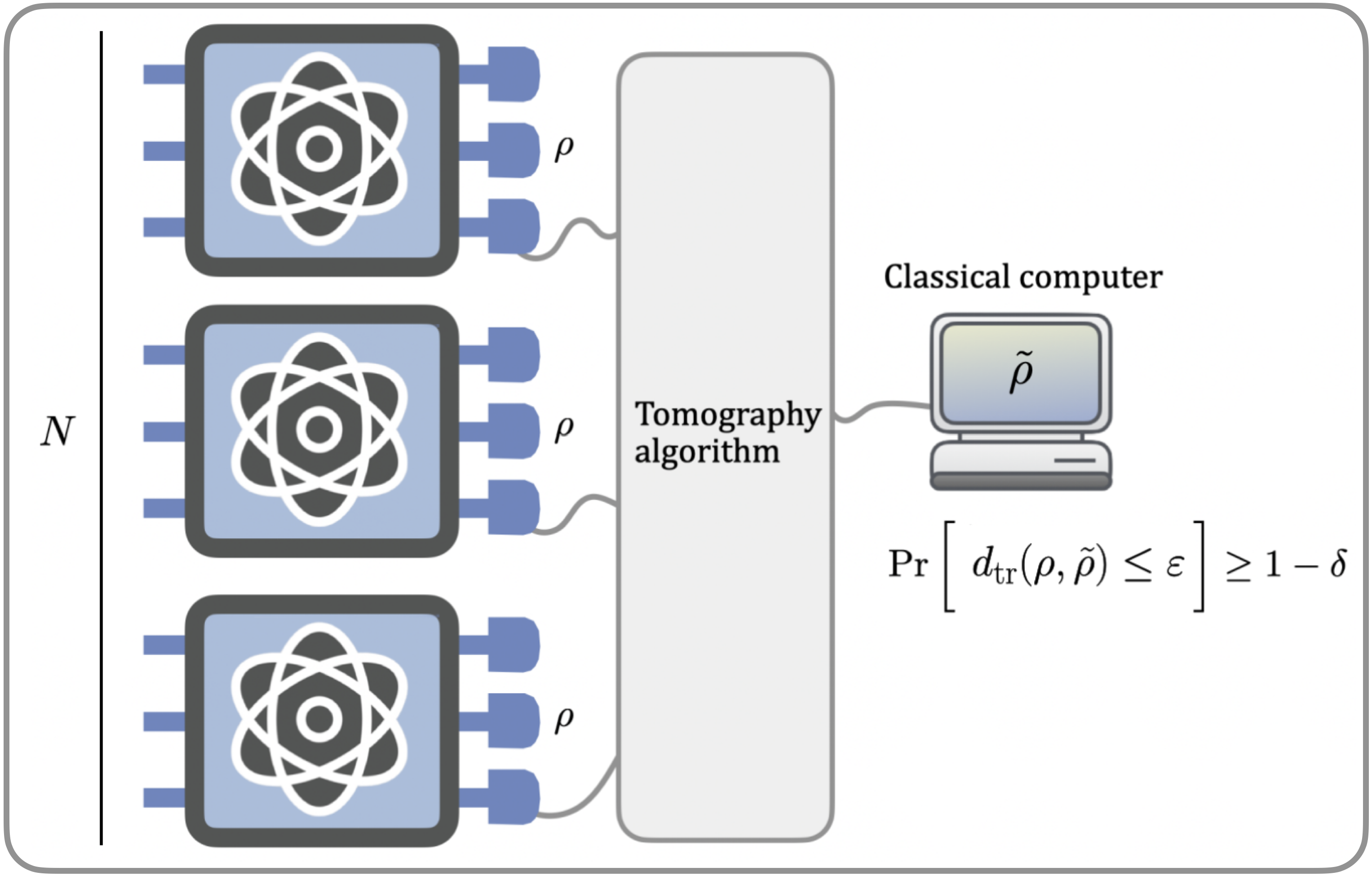}
    \caption{Pictorial representation of a quantum state tomography algorithm. Given access to $N$ copies of an unknown state $\rho$, the goal of a tomography algorithm is to output a classical description of a state $\tilde{\rho}$ that serves as a `good approximation' of the true unknown state $\rho$. Let us clarify what we mean by `good approximation'. First, the error in this approximation is measured by the trace distance $d_{\mathrm{tr}}(\rho,\tilde{\rho})$, which is most meaningful notion of distance between states~\cite{HELSTROM, Holevo1976}. Second, since quantum measurements yield probabilistic outcomes, the output $\tilde{\rho}$ is probabilistic rather than deterministic. Therefore, a `good approximation' means that the probability of getting a small trace-distance error is high. In formula, this is expressed as $\mathrm{Pr}\![d_{\mathrm{tr}}(\rho,\tilde{\rho})\le \varepsilon]\ge 1-\delta$, where $\varepsilon$ is the trace-distance error and $\delta$ is the failure probability. To measure the performance of tomography, one can define the \emph{sample complexity}. For fixed $\varepsilon$ and $\delta$, the sample complexity is the minimum number of copies $N$ required to achieve tomography with trace-distance error $\varepsilon$ and failure probability $\delta$. For example, the sample complexity of tomography of arbitrary $n$-qubit states is $O\!\left(\frac{4^n}{\varepsilon^2}\log(\frac{1}{\delta})\right)$~\cite{odonnell2015efficient,Haah_2017,haah_optimal_2021,anshu2023survey}. In general, the sample complexity of any tomography task depends at most logarithmically on the failure probability $\delta$~\cite{haah2023queryoptimal}, implying that this parameter has a minimal impact on the performance. In other words, the failure probability $\delta$ can be made very small with only a slight increase in the sample complexity. Therefore, throughout this work, we omit the dependence on $\delta$.}
    \label{fig_tomography_main}
\end{figure}

\subsection{Continuous-variable systems}
In this section, we provide a concise overview of quantum information with continuous variable (CV) systems~\cite{BUCCO}. A CV system is a quantum system associated with the Hilbert space $L^2(\mathbb R^n)$ of all square-integrable complex-valued functions over $\mathbb{R}^n$, which models $n$ modes of electromagnetic radiation with definite frequency and polarisation. A quantum state on $L^2(\mathbb{R}^n)$ is called an \emph{$n$-mode} state, and a unitary operator on $L^2(\mathbb{R}^n)$ is called an \emph{$n$-mode} unitary. The \emph{quadrature vector} is defined as
\bb
    \mathbf{\hat{R}}\coloneqq (\hat{x}_1,\hat{p}_1,\dots,\hat{x}_n,\hat{p}_n)^{\intercal}\,,
\ee
where $\hat{x}_j$ and $\hat{p}_j$ are the well-known position and momentum operators of the $j$-th mode, collectively called \emph{quadratures}. Let us proceed with the definitions of Gaussian unitary and Gaussian state.
\begin{Def}[(Gaussian unitary)]
An $n$-mode unitary is Gaussian if it is the composition of unitaries generated by quadratic Hamiltionians $\hat{H}$ in the quadrature vector:
    \bb\label{eq_quadratic_hamiltionian}
        \hat{H}\coloneqq \frac{1}{2}(\mathbf{\hat{R}}-\mathbf{m})^{\intercal}h(\mathbf{\hat{R}}-\mathbf{m})\,,
    \ee
    for some symmetric matrix $h\in \mathbb{R}^{2n,2n}$ and some vector $\mathbf{m}\in\mathbb{R}^{2n}$.
\end{Def}
\begin{Def}[(Gaussian state)]
An $n$-mode state $\rho$ is \emph{Gaussian} if it can be written as a Gibbs state of a quadratic Hamiltonian $\hat{H}$ of the form in \eqref{eq_quadratic_hamiltionian} with $h$ being positive definite. The Gibbs states associated with the Hamiltonian $\hat{H}$ are given by 
\begin{equation}
\rho= \left(\frac{e^{-\beta \hat{H}}}{\Tr[e^{-\beta \hat{H}}]}\right)_{\beta\in[0,\infty]}\,,
\end{equation}
where the parameter $\beta$ is called the `inverse temperature'.
\end{Def}
This definition includes also the pathological cases where both $\beta$ and certain terms of $H$ diverge (e.g., this is the case for tensor products between pure Gaussian states and mixed Gaussian states). An example of Gaussian state is the \emph{vacuum}, denoted as $\ket{0}^{\otimes n}$. Any pure $n$-mode Gaussian state vector
\bb
    \ket{\psi}=G\ket{0}^{\otimes n}\,
\ee
can be written as a Gaussian unitary $G$ applied to the vacuum.
A Gaussian state $\rho$ is uniquely identified by its \emph{first moment} $\textbf{m}(\rho)$ and \emph{covariance matrix} $V\!(\rho)$. By definition, the first moment and the covariance matrix of an $n$-mode state $\rho$ are given by
\bb
    \mathbf{m}(\rho)&\coloneqq \Tr\!\left[\mathbf{\hat{R}}\,\rho\right]\,,\\
    V\!(\rho)&\coloneqq\Tr\!\left[\left\{\mathbf{(\hat{R}-m(\rho)),(\hat{R}-m(\rho))}^{\intercal}\right\}\rho\right]\, ,
\ee  
where $\{\hat{A},\hat{B}\}\coloneqq \hat{A}\hat{B}+\hat{B}\hat{A}$ is the anti-commutator.

By definition, the \emph{energy} of an $n$-mode state $\rho$ is given by the expectation value $\Tr[\hat{E}_n\rho]$ of the energy observable $\hat{E}_n\coloneqq \sum_{j=1}^n \!\left(\frac{\hat{x}_j^2}{2}+\frac{\hat{p}_j^2}{2}\right)$, where it is assumed that each mode has a frequency of one~\cite{BUCCO}. It is important to note that energy is an extensive quantity, because for any single-mode state $\sigma$ the energy of $\sigma^{\otimes n}$ equals the energy of $\sigma$ multiplied by $n$. Furthermore, the energy of an $n$-mode state is always greater than or equal to $\frac{n}{2}$, with the equality achieved only by the vacuum. The \emph{total photon number} can be defined in terms of the energy observable as
\bb\label{total_photon_number}
    \hat{N}_n\coloneqq \hat{E}_n-\frac{n}{2}\mathbb{\hat{1}}\,
\ee
Given $\mathbf{k}=(k_1,\ldots,k_n) \in\mathbb{N}^n$, 
let us denote as
\bb
\ket{\mathbf{k}}=\ket{k_1}\otimes\ldots\otimes\ket{k_n}
\ee 
the $n$-mode \emph{Fock state}~\cite{BUCCO}. The total photon number is diagonal in the Fock basis:
\bb
    \hat{N}_n=\sum_{\mathbf{k}\in\mathbb{N}^n}\|\mathbf{k}\|_1\ketbra{\mathbf{k}}\,,
\ee
where $\|\mathbf{k}\|_1\coloneqq \sum_{i=1}^nk_i$.

\subsection{Effective dimension and rank of energy-constrained states}
In this section, we show that energy-constrained states can be well approximated by finite-dimensional states with low rank, as anticipated above in \eqref{approx_dim_rank_eq}. Further technical details regarding the findings presented in this section can be found in the SM.

Let $\rho$ be an $n$-mode state with total photon number satisfying the energy constraint
\bb
    \Tr[\rho\hat{N}_n]\le n E\,,
\ee
where $E\ge 0$. Given $M\in\mathbb{N}$, let $\HH_M$ be the subspace spanned by all the $n$-mode Fock states with total photon number not exceeding $M$, and let $\Pi_M$ the projector onto this space.

Let us begin by analysing the \emph{effective dimension} of the set of energy-constrained states. The trace distance between the energy-constrained state $\rho$ and its projection $\rho_M$ onto $\HH_M$, i.e.
\bb
    \rho_M\coloneqq \frac{\Pi_M\rho\Pi_M}{\Tr[\Pi_M\rho]}\,,
\ee
can be upper bounded as follows:
\bb\label{upp_bound_eff_dim_main0}
    d_{\mathrm{tr}}(\rho,\rho_M)&\leqt{(i)}\sqrt{\Tr[(\mathbb{1}-\Pi_M)\rho]}\leqt{(ii)} \sqrt{\frac{\Tr[\hat{N}_n\rho]}{M}} 
    \le \sqrt{\frac{n E}{M}}\,,
\ee
where in (i) we employed the Gentle measurement lemma~\cite{Sumeet_book} and in (ii) we used the simple operator inequality $\mathbb{1}-\Pi_M\le \frac{\hat{N}_n}{M}$. Consequently, by setting $M_1\coloneqq \lceil n E/\varepsilon^2\rceil $, it follows that the projection $\rho_{M_1}$ is $\varepsilon$-close to $\rho$ in trace distance. Moreover, the dimension of $\HH_{M_1}$ can be upper bounded as
\bb\label{upp_bound_eff_dim_main1}
    \dim\HH_{M_1}=\binom{n\!+\!M_1}{n}\le \left(\frac{e(n\!+\!M_1)}{n}\right)^{\!\!n}=O\!\left(\frac{(eE)^n}{\varepsilon^{2n}} \right) ,
\ee
where $e$ denotes the Euler's number. Hence, we conclude that any energy-constrained state $\rho$ can be approximated, up to trace-distance error $\epsilon$, by its projection $\rho_{M_1}$ onto the subspace $\HH_{M_1}$, which has a \emph{finite} dimension of $O\!\left((eE)^n/\varepsilon^{2n} \right)$.

Now, let us analyse the \emph{effective rank} of the energy-constrained state $\rho$. We say that $\rho$ has effective rank $r$ if it is $\varepsilon$-close to a state with rank $r$. Let us consider the spectral decomposition
\bb
    \rho=\sum_{i=1}^\infty p_i^\downarrow\psi_i\,,
\ee
where the eigenvalues $(p_i^\downarrow)_{i}$ are not increasing in $i$. In order to estimate the effective rank, let us choose an integer $r$ such that 
\bb
    \sum_{i=r+1}^\infty p_i^\downarrow\le \varepsilon\,,
\ee
which guarantees that the $r$-rank state 
$\rho^{(r)}\propto \sum_{i=1}^r p_i^\downarrow\psi_i$ is $O(\varepsilon)$-close to $\rho$. The \emph{infinite-dimensional Schur Horn theorem}~\cite[Proposition 6.4]{kaftal2009infinite} implies that for any $r$-rank projector $\Pi$ it holds that
\bb
    \sum_{i=r+1}^\infty p_i^\downarrow\le \Tr[(\mathbb{1}-\Pi)\rho]\,.
\ee
Moreover, by setting $M_2\coloneqq \lceil n E/\varepsilon\rceil $, the projector $\Pi_{M_2}$ is a $O\!\left((eE)^n/\varepsilon^{n} \right)$-rank projector satisfying
\bb
    \Tr[(\mathbb{1}-\Pi_{M_2})\rho]\le \frac{\Tr[\hat{N}_n\rho]}{M_2}\le \frac{nE}{M}\le\varepsilon\,,
\ee
where we employed the same inequalities used in \eqref{upp_bound_eff_dim_main0} and \eqref{upp_bound_eff_dim_main1}. Hence, by setting $\Pi=\Pi_{M_2}$, we deduce that $\rho$ is $\varepsilon$-close to a state $\rho^{(r)}$ having rank
\bb
    r=O\!\left((eE)^n/\varepsilon^{n} \right)\,.
\ee
Finally, by exploiting Gentle measurement lemma~\cite{Sumeet_book} and triangle inequality, one can easily show that the projection of $\rho^{(r)}$ onto $\HH_{M_1}$ is still $O(\varepsilon)$-close to $\rho$. Consequently, we conclude that any energy-constrained state can be approximated, up to trace-distance error $\varepsilon$, by a $D$-dimensional state with rank $r$ such that
\bb\label{approx_dim_rank_eq_methods}
    D&=O\left((eE)^n/\varepsilon^{2n}\right)\,,\\
    r&=O\left((eE)^n/\varepsilon^{n}\right)\,.
\ee

Based on these observations, we can devise a simple tomography algorithm for energy-constrained states. The first step involves performing the POVM $(\Pi_{M_1},\mathbb{1}-\Pi_{M_1})$ and discarding the post-outcome state associated with $\mathbb{1}-\Pi_{M_1}$. This step transforms the unknown state $\rho$ into the state $\rho_{M_1}$ with high probability. The state $\rho_{M_1}$ has two key properties: (i) it resides in the finite-dimensional subspace $\HH_{M_1}$ of dimension $D=O\left((eE)^n/\varepsilon^{2n}\right)$, and (ii) it is $O(\varepsilon)$-close to a state residing in $\HH_{M_1}$ with rank $r=O\left((eE)^n/\varepsilon^{n}\right)$. The second step involves performing the tomography algorithm of~\cite{wrightHowLearnQuantum} designed for $D$-dimensional state with rank $r$, which has a sample complexity of $O(Dr)$. Importantly, this algorithm remains effective even if the unknown state, which is promised to reside in a given $D$-dimensional Hilbert space, has rank strictly larger than $r$, as long as it is $O(\varepsilon)$-close to a $r$-rank state within the \emph{same} Hilbert space~\cite{wrightHowLearnQuantum}. We thus conclude that the sample complexity of tomography of energy-constrained states is upper bounded by $O(Dr)=O\left((eE)^{2n}/\varepsilon^{3n}\right)$.

Analogously, by exploiting that the sample complexity of tomography of $D$-dimensional \emph{pure} states is $O(D)$~\cite{kueng2014low,odonnell2015efficient,Haah_2017,haah_optimal_2021,anshu2023survey}, we can show that the sample complexity of tomography of energy-constrained \emph{pure} states is upper bounded by $O(D)=O\left((eE)^n/\varepsilon^{2n}\right)$.

For the sake of completeness, let us mention that the proof of the \emph{lower} bounds on the sample complexity of tomography of energy-constrained states, as presented in Theorems~\ref{thm_1_main}-\ref{thm_tom_ec_states}, primarily relies on \emph{epsilon-net} tools~\cite{vershynin_2018}.  Detailed proofs of these results can be found in the SM.

\subsection{Bounds on the trace distance between Gaussian states}
In this section, we address the question: `if we know with a certain precision the first moment and the covariance matrix of an unknown Gaussian state, what is the resulting trace-distance error that we make on the state?'

Let us formalise the problem. Let us consider a Gaussian state $\rho_1$ and assume that we have an approximation of its first moment $\textbf{m}(\rho_1)$ and an approximation of its covariance matrix $V\!(\rho_1)$. For example, these approximations may be retrieved through homodyne detection on many copies of $\rho_1$. We can then consider the Gaussian state $\rho_2$ with first moment and covariance matrix equal to such approximations: $\textbf{m}(\rho_2)$ and $V(\rho_2)$ are thus the approximations of $\textbf{m}(\rho_1)$ and $V(\rho_1)$, respectively. The errors incurred in these approximations are naturally measured by the norm distances $\|\textbf{m}(\rho_1)-\textbf{m}(\rho_2)\|$ and $\|V\!(\rho_1)-V\!(\rho_2)\|$, respectively, where $\|\cdot\|$ denotes some norm. Now, a natural question arises: `given an error $\varepsilon$ in the approximations of the first moment and covariance matrix, what is the error incurred in the approximation of $\rho_1$?' The most meaningful way to measure such an error is given by the trace distance $d_{\mathrm{tr}}(\rho_1,\rho_2)$~\cite{HELSTROM, Holevo1976}. Hence, the question becomes: `if it holds that
\bb
    \|\textbf{m}(\rho_1)-\textbf{m}(\rho_2)\|&=O(\varepsilon)\,,\\
    \|V\!(\rho_1)-V\!(\rho_2)\|&=O(\varepsilon)\,,
\ee
what can we say about the trace-distance $d_{\mathrm{tr}}(\rho_1,\rho_2)$?' Thanks to Theorems~\ref{thm_upp_bound}-\ref{thm_trace_distance_lower_bound_main} below, we can answer this question: the trace distance $d_{\mathrm{tr}}(\rho_1,\rho_2)$ is at most $O(\sqrt{\varepsilon})$ and at least $O(\varepsilon)$.

This motivates the problem of finding upper and lower bounds on the trace distance between Gaussian states in terms of the norm distance of their first moments and covariance matrices. Now, we present our bounds, which constitutes technical tools of independent interest.

The forthcoming Theorem~\ref{thm_upp_bound}, proven in the SM, shows our upper bound on the trace distance between Gaussian states. 
\begin{thm}[(Upper bound on the distance between Gaussian states)]\label{thm_upp_bound}
     Let $\rho_1$ and $\rho_2$ be $n$-mode Gaussian states satisfying the energy constraint $\Tr[\hat{N}_n\rho_1],\Tr[\hat{N}_n\rho_2]\le N$. Then, 
    \bb 
     d_{\mathrm{tr}}(\rho_1,\rho_2)\le f(N)\big(&\|\textbf{m}(\rho_1)-\textbf{m}(\rho_2)\|\\
     &\,+\sqrt{2}\sqrt{\|V\!(\rho_1)-V\!(\rho_2)\|_1}\big)\,,
    \ee
    where 
    $f(N)\coloneqq \frac{1}{\sqrt{2}} (\sqrt{N} + \sqrt{N+1})$. Here,  $\|\textbf{m}\|\coloneqq \sqrt{\textbf{m}^\intercal \textbf{m}}$ and $\|\cdot\|_1$ denote the Euclidean norm and the trace norm, respectively.
\end{thm}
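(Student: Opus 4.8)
\emph{Strategy.} The plan is to bound the trace distance by a triangle inequality through an intermediate Gaussian state that carries the covariance matrix of $\rho_1$ and the first moment of $\rho_2$. Concretely, let $\rho_3$ be the Gaussian state with first moment $\textbf{m}(\rho_2)$ and covariance matrix $V\!(\rho_1)$, so that $\rho_1$ and $\rho_3$ differ only by a displacement, while $\rho_3$ and $\rho_2$ share their first moment and differ only in their covariance matrices. Then $d_{\mathrm{tr}}(\rho_1,\rho_2)\le d_{\mathrm{tr}}(\rho_1,\rho_3)+d_{\mathrm{tr}}(\rho_3,\rho_2)$, and I would bound the first term by $f(N)\,\|\textbf{m}(\rho_1)-\textbf{m}(\rho_2)\|$ and the second by $\sqrt 2\,f(N)\sqrt{\|V\!(\rho_1)-V\!(\rho_2)\|_1}$, which sum to exactly the claimed estimate. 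The crucial spectral input, used in both terms, is the energy-constraint lemma $\lambda_{\max}(V\!(\rho))\le 2 f(N)^2$ whenever $\Tr[\hat N_n\rho]\le N$; I would prove it by observing that $\tfrac12\lambda_{\max}(V)=\max_{\|\mathbf v\|=1}\mathrm{Var}_\rho(\mathbf v^\intercal\hat{\mathbf R})$ is the largest single-quadrature variance, which under the photon-number constraint is maximised by the single-mode squeezed vacuum carrying all $N$ photons, for which it equals $f(N)^2=\tfrac12\big(2N+1+2\sqrt{N(N+1)}\big)$.

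\emph{Displacement term.} Write $\rho_1=D_{\mathbf d}\,\rho_3\,D_{\mathbf d}^\dagger$ with $\mathbf d=\textbf{m}(\rho_1)-\textbf{m}(\rho_2)$, where the displacement operator is $D_{\mathbf d}=e^{i\hat G}$ and $\hat G=\mathbf v^\intercal\hat{\mathbf R}$ is a Hermitian linear combination of the quadratures with $\|\mathbf v\|=\|\mathbf d\|$ (the symplectic form being orthogonal). Interpolating $\sigma_s\coloneqq D_{s\mathbf d}\,\rho_3\,D_{s\mathbf d}^\dagger$ and using $\dot\sigma_s=i[\hat G,\sigma_s]$ gives
\[
2\,d_{\mathrm{tr}}(\rho_1,\rho_3)=\|\sigma_1-\sigma_0\|_1\le\int_0^1\|[\hat G,\sigma_s]\|_1\,ds .
\]
I would then invoke the commutator--variance inequality $\|[\hat G,\sigma]\|_1\le 2\sqrt{\mathrm{Var}_\sigma(\hat G)}$ (which is tight on pure states, as a rank-two singular-value computation shows), together with the displacement-invariance of a linear-quadrature variance, so that $\mathrm{Var}_{\sigma_s}(\hat G)=\tfrac12\mathbf v^\intercal V\!(\rho_1)\mathbf v$ for every $s$. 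This yields $d_{\mathrm{tr}}(\rho_1,\rho_3)\le\sqrt{\tfrac12\,\mathbf v^\intercal V\!(\rho_1)\mathbf v}\le\sqrt{\tfrac12\lambda_{\max}(V\!(\rho_1))}\,\|\mathbf d\|$, and the spectral lemma above closes this term as $f(N)\,\|\textbf{m}(\rho_1)-\textbf{m}(\rho_2)\|$.

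\emph{Covariance term, the main obstacle.} Here $\rho_3$ and $\rho_2$ have equal first moment but different covariance matrices, and in general they are \emph{not} related by any Gaussian unitary (their symplectic eigenvalues differ), so the clean unitary-path argument above is unavailable. My plan is to pass to fidelity through the Fuchs--van de Graaf inequality $d_{\mathrm{tr}}(\rho_3,\rho_2)\le\sqrt{1-F(\rho_3,\rho_2)^2}\le\sqrt 2\,\sqrt{1-F(\rho_3,\rho_2)}$, which already supplies the factor $\sqrt 2$ and the square root appearing in the statement; it then suffices to establish the fidelity bound $1-F(\rho_3,\rho_2)\le f(N)^2\,\|V\!(\rho_1)-V\!(\rho_2)\|_1$, which is trivial once the right-hand side exceeds $1$, so only small covariance differences are at stake. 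To prove it I would start from the closed-form fidelity of zero-mean Gaussian states in terms of $V\!(\rho_1)$ and $V\!(\rho_2)$, control $1-F$ about the coincidence point $V\!(\rho_1)=V\!(\rho_2)$, and bound the result linearly in $\|V\!(\rho_1)-V\!(\rho_2)\|_1$ using $\lambda_{\max}(V)\le 2f(N)^2$ to tame the spectral prefactors. I expect the genuine difficulty to lie precisely here: extracting a bound that is \emph{uniform} over all admissible pairs (not merely infinitesimally close ones) while retaining the clean constant $f(N)^2$ will require careful handling of the Gaussian fidelity formula, possibly via a looser but more tractable lower bound on $F$ that still carries the correct energy dependence.
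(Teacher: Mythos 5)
Your overall architecture (triangle inequality through an intermediate Gaussian state, splitting the bound into a displacement term and a covariance term) matches the paper's, and your treatment of the displacement term is a legitimate, essentially self-contained alternative to what the paper does: the paper simply cites the known energy-constrained diamond-norm bound $\frac12\|\D_u-\Id\|_{\diamond N}\le \sin(\min\{\|u\|_2 f(N),\pi/2\})$ from the literature, whereas your interpolation argument with the commutator--variance inequality and the spectral bound $\lambda_{\max}(V(\rho))\le 2f(N)^2$ would reprove the part of that bound you need. (That spectral lemma is indeed correct: for a unit vector $\mathbf v$, rotate it to $\hat x_1$ by a passive unitary and use $\langle\hat x_1^2\rangle+\langle\hat p_1^2\rangle\le 2N+1$ together with $\langle\hat x_1^2\rangle\langle\hat p_1^2\rangle\ge 1/4$ to get $\langle\hat x_1^2\rangle\le N+\frac12+\sqrt{N(N+1)}=f(N)^2$. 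You would need to write this out, but it works.)

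The covariance term, however, is a genuine gap, and it is the heart of the theorem. You propose to pass through Fuchs--van de Graaf and then establish $1-F(\rho_3,\rho_2)\le f(N)^2\,\|V(\rho_1)-V(\rho_2)\|_1$ from the closed-form Gaussian fidelity formula; you never actually derive this inequality, only announce a plan to "control $1-F$ about the coincidence point" and extend it uniformly. This is precisely the route the paper explicitly flags as a dead end: a remark accompanying the proof states that the fidelity formula of Banchi et al.\ is too intricate to extract such a bound and that the authors' attempts in this direction were unsuccessful. The paper's actual mechanism is different and is the key idea you are missing. For zero-mean Gaussian states with covariance matrices $V$ and $W$, set $T\coloneqq\frac12(V+W+|V-W|)$, so that $T-V\ge0$ and $T-W\ge0$; the Gaussian additive-noise channels $\NN_{T-V}$ and $\NN_{T-W}$ (random displacements with covariance $K$) then map $\rho$ and $\sigma$ to the \emph{same} Gaussian state, since both images have first moment zero and covariance $T$. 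Hence
\bb
\tfrac12\|\rho-\sigma\|_1\le \tfrac12\|\NN_{T-V}-\Id\|_{\diamond N}+\tfrac12\|\NN_{T-W}-\Id\|_{\diamond N}\le f(N)\big(\sqrt{\Tr[T-V]}+\sqrt{\Tr[T-W]}\big)\le \sqrt2\,f(N)\sqrt{\|V-W\|_1}\,,
\ee
where the middle inequality follows by writing $\NN_K$ as a Gaussian mixture of displacements, applying the single-displacement bound under the integral, and using Jensen's inequality to get $\frac12\|\NN_K-\Id\|_{\diamond N}\le f(N)\sqrt{\Tr K}$; the last step is concavity of the square root plus $\Tr[T-V]+\Tr[T-W]=\Tr|V-W|$. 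Without this (or some other worked-out replacement for your unproven fidelity estimate), your argument does not close.
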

The above theorem turns out to be crucial to prove the upper bound on the sample complexity of tomography of Gaussian states provided in Theorem~\ref{tom_gaus_main} in the Main text.

One might believe that proving Theorem~\ref{thm_upp_bound} would be straightforward by bounding the trace distance using the closed formula for the fidelity between Gaussian states~\cite{Banchi_2015}. However, this approach turns out to be highly non-trivial due to the complexity of such fidelity formula~\cite{Banchi_2015}, which makes it challenging to derive a bound based on the norm distance between the first moments and covariance matrices. Instead, our proof directly addresses the trace distance without relying on fidelity and involves a meticulous analysis based on the energy-constrained diamond norm~\cite{EC-diamond}.

The following theorem, proven in the SM, establishes our lower bound on the trace distance between Gaussian states.
\begin{thm}[(Lower bound on the distance between Gaussian states)]\label{thm_trace_distance_lower_bound_main}
    Let $\rho_1$ and $\rho_2$ be $n$-mode Gaussian states satisfying the energy constraint $\Tr[\hat{E}_n\rho_1],\Tr[\hat{E}_n\rho_2]\le E$. Then, 
\bb
    d_{\mathrm{tr}}(\rho_1,\rho_2)&\ge \frac{1}{200}\min\!\left\{1,\frac{\| \textbf{m}(\rho_1)-\textbf{m}(\rho_2) \|}{\sqrt{4E +1}}\right\} \,,\\
    d_{\mathrm{tr}}(\rho_1,\rho_2)&\ge \frac{1}{200}\min\!\left\{1, \frac{\|V\!(\rho_2)-V\!(\rho_1)\|_2}{4E +1}\right\}\,,
\ee
where $\|\textbf{m}\|\coloneqq \sqrt{\textbf{m}^\intercal \textbf{m}}$ and $\|V\|_2\coloneqq \sqrt{\Tr[V^\intercal V]}$ denote the Euclidean norm and the Hilbert-Schmidt norm, respectively.
\end{thm}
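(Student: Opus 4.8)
The plan is to lower bound the trace distance by the classical total variation ($\mathrm{TV}$) distance between the outcome statistics of a single, judiciously chosen Gaussian measurement, thereby reducing the whole problem to sharp estimates for the $\mathrm{TV}$ distance between multivariate normal distributions. The starting point is the data-processing fact that for any POVM $\{M_x\}$ one has $d_{\mathrm{tr}}(\rho_1,\rho_2)\ge \mathrm{TV}(p_1,p_2)$, where $p_j(x)=\Tr[M_x\rho_j]$ are the outcome distributions. I would take $\{M_x\}$ to be heterodyne detection (the coherent-state POVM, i.e.\ the Husimi $Q$-function), whose outcome distribution on a Gaussian state $\rho_j$ is exactly the normal distribution $\mathcal{N}(\textbf{m}(\rho_j),\Sigma_j)$ with $\Sigma_j=\tfrac12(V\!(\rho_j)+\mathbb{1})$; this is a genuine classical probability distribution for each state. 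Both inequalities in the theorem will then follow from a single lower bound on $\mathrm{TV}\big(\mathcal{N}(\textbf m_1,\Sigma_1),\mathcal{N}(\textbf m_2,\Sigma_2)\big)$.

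For the energy bookkeeping I would first record that $\Tr[\hat E_n\rho_j]\le E$ is equivalent to $\tfrac14\Tr[V\!(\rho_j)]+\tfrac12\|\textbf m(\rho_j)\|^2\le E$, which yields the two facts I need: $\|V\!(\rho_j)\|_\infty\le \Tr[V\!(\rho_j)]\le 4E$, and hence $\|\Sigma_j\|_\infty\le \tfrac12(4E+1)$. The crucial technical input is a sharp lower bound on the total variation distance between two Gaussians, namely that up to a universal constant $c$,
\begin{equation}
\mathrm{TV}\big(\mathcal{N}(\textbf m_1,\Sigma_1),\mathcal{N}(\textbf m_2,\Sigma_2)\big)\ge c\,\min\!\Big\{1,\ \|\Sigma_1^{-1/2}(\textbf m_1-\textbf m_2)\|,\ \|\Sigma_1^{-1/2}(\Sigma_1-\Sigma_2)\Sigma_1^{-1/2}\|_2\Big\}.
\end{equation}
Here the mean term scales linearly in the displacement and the covariance term scales linearly in the \emph{Hilbert--Schmidt} norm of the covariance difference; this linear (rather than quadratic) behavior is exactly what makes the bound tight, and it is why a fidelity- or Hellinger-based argument — which would only produce a quadratic dependence — must be avoided.

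From here the two claimed inequalities drop out by elementary norm manipulations. For the first moment, $\|\Sigma_1^{-1/2}(\textbf m_1-\textbf m_2)\|\ge \|\textbf m_1-\textbf m_2\|/\sqrt{\|\Sigma_1\|_\infty}\ge \sqrt{2}\,\|\textbf m_1-\textbf m_2\|/\sqrt{4E+1}$. For the covariance, using $\Sigma_1-\Sigma_2=\tfrac12(V\!(\rho_1)-V\!(\rho_2))$ together with the inequality $\|\Sigma_1^{-1/2}A\,\Sigma_1^{-1/2}\|_2\ge \|A\|_2/\|\Sigma_1\|_\infty$ gives $\|\Sigma_1^{-1/2}(\Sigma_1-\Sigma_2)\Sigma_1^{-1/2}\|_2\ge \|V\!(\rho_1)-V\!(\rho_2)\|_2/(4E+1)$. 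Substituting these into the displayed bound and folding all universal constants into the single factor $\tfrac1{200}$ reproduces the statement. Note that the Hilbert--Schmidt norm appears with a denominator only linear in $E$ precisely because heterodyne captures the \emph{full} covariance matrix at once, so no dimension-dependent $\sqrt{n}$ is lost in passing from the operator to the Hilbert--Schmidt norm — a loss that a single homodyne measurement would incur.

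The main obstacle is establishing the Gaussian $\mathrm{TV}$ lower bound, in particular the covariance term with its linear Hilbert--Schmidt dependence. The mean term is comparatively easy: marginalizing the heterodyne distribution onto the direction $\textbf m_1-\textbf m_2$ reduces it to the one-dimensional estimate $\mathrm{TV}(\mathcal N(\mu_1,\sigma_1^2),\mathcal N(\mu_2,\sigma_2^2))\ge c\min\{1,|\mu_1-\mu_2|/\sqrt{\sigma_1^2+\sigma_2^2}\}$, which is elementary (data processing guarantees the marginal $\mathrm{TV}$ lower bounds the full one). The covariance term is more delicate and genuinely multivariate: I would obtain it either by invoking the sharp estimates of Devroye, Mehrabian and Reddad for the total variation distance between Gaussians, or by proving it directly — testing with the quadratic statistic $x\mapsto x^\intercal \Sigma_1^{-1/2}(\Sigma_1-\Sigma_2)\Sigma_1^{-1/2}x$, computing its first two moments under each Gaussian (the mean gap being $\|\Sigma_1^{-1/2}(\Sigma_1-\Sigma_2)\Sigma_1^{-1/2}\|_2^2$ and the standard deviation of order $\|\Sigma_1^{-1/2}(\Sigma_1-\Sigma_2)\Sigma_1^{-1/2}\|_2$), and exhibiting a threshold event on which the two laws differ by the required linear amount. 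Carefully tracking the universal constant so that the final prefactor is no worse than $1/200$ is the remaining quantitative chore.
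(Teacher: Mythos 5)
Your proposal is correct and follows essentially the same route as the paper's proof: lower bound the trace distance by the total-variation distance of the heterodyne (Husimi) outcome distributions, which for Gaussian states are normals with mean $\textbf{m}(\rho_j)$ and covariance $\tfrac12(V\!(\rho_j)+\mathbb{1})$, then invoke the Devroye--Mehrabian--Reddad lower bounds on the TV distance between Gaussians and convert the relative (whitened) quantities to $\|\textbf{m}_1-\textbf{m}_2\|$ and $\|V_1-V_2\|_2$ via $\|V\!(\rho_j)\|_\infty\le 4\Tr[\rho_j\hat{E}_n]$. The only cosmetic difference is the exact form in which the Gaussian TV bound is quoted; the paper likewise obtains the covariance term from $\|V_1^{-1/2}V_2V_1^{-1/2}-\mathbb{1}\|_2\ge\|V_2-V_1\|_2/\|V_1\|_\infty$.
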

The proof of this theorem heavily relies on state-of-the-art bounds recently established for Gaussian probability distributions~\cite{devroye2023total}.

Theorems~\ref{thm_upp_bound}-\ref{thm_trace_distance_lower_bound_main} allow us to answer the question posed at the beginning of this section. Indeed, these theorems imply that, if we know with error $\varepsilon$ the first moment and the covariance matrix of an unknown Gaussian state, the resulting trace-distance error that we make on the state is \emph{at most} $O(\sqrt{\varepsilon})$ and \emph{at least} $O({\varepsilon})$. In particular, this proves Theorem~\ref{estimation_cov_trace_main} in the Main text.

The trace-distance bound of Theorem~\ref{thm_upp_bound} can be improved by assuming one of the Gaussian states to be pure, as we detail in the following theorem, proven in the SM. 
\begin{thm}[{(}Improved bound for pure states{)}]\label{spoiler_bounds_pure_main}
    Let $\psi$ be a pure $n$-mode Gaussian state and let $\rho$ be an $n$-mode (possibly non-Gaussian) state satisfying the energy constraints $\Tr[\psi \hat{E}_n], \Tr[\rho \hat{E}_n]\le E$. Then
    \bb\label{upper_bound_d_tr_pure}
        &d_{\mathrm{tr}}(\rho,\psi) \le \sqrt{E}\sqrt{ 2\|\textbf{m}(\rho)-\textbf{m}(\psi)\|^2+\|V\!(\rho)-V\!(\psi)\|_\infty}\,,
    \ee
    where $\|\textbf{m}\|\coloneqq \sqrt{\textbf{m}^\intercal \textbf{m}}$ and $\|\cdot\|_\infty$ denote the Euclidean norm and the operator norm, respectively.
\end{thm}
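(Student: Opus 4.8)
The plan is to reduce the claim to a single energy-type operator inequality, taking full advantage of the purity of $\psi$. Since $\psi=\ketbra{\psi}$ is a rank-one projector, the fidelity with any state $\rho$ is simply $\Tr[\psi\rho]$, and the Fuchs--van de Graaf inequality gives $d_{\mathrm{tr}}(\rho,\psi)\le\sqrt{1-\Tr[\psi\rho]}=\sqrt{\Tr[(\mathbb{1}-\psi)\rho]}$, using $\Tr\rho=1$. It therefore suffices to establish the quadratic estimate $\Tr[(\mathbb{1}-\psi)\rho]\le E\big(2\|\mathbf{m}(\rho)-\mathbf{m}(\psi)\|^2+\|V(\rho)-V(\psi)\|_\infty\big)$, after which taking square roots yields exactly \eqref{upper_bound_d_tr_pure}.

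The crux --- and the step I expect to be the main obstacle --- is the operator inequality $\mathbb{1}-\psi\le \hat{H}_\psi$, where $\hat{H}_\psi$ is the ``number operator relative to $\psi$''. Concretely, writing $\ket{\psi}=G\ket{0}^{\otimes n}$ for a Gaussian unitary $G$, I set $\hat{H}_\psi\coloneqq G\hat{N}_n G^\dagger$ and use the standard fact that this equals the shifted quadratic form $\tfrac{1}{2}(\mathbf{\hat{R}}-\mathbf{m}(\psi))^{\intercal}V(\psi)^{-1}(\mathbf{\hat{R}}-\mathbf{m}(\psi))-\tfrac{n}{2}$. Being unitarily equivalent to $\hat{N}_n$, the operator $\hat{H}_\psi$ is positive semidefinite with spectrum $\{0,1,2,\dots\}$ and a one-dimensional kernel spanned by $\ket{\psi}$; hence $\hat{H}_\psi\ge 1$ on $\ket{\psi}^{\perp}$, which is precisely the statement $\mathbb{1}-\psi\le\hat{H}_\psi$. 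Taking the expectation value in $\rho$ then gives $\Tr[(\mathbb{1}-\psi)\rho]\le\Tr[\hat{H}_\psi\rho]$, and the energy constraint on $\rho$ guarantees its moments are finite, so this quantity is well-defined.

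Next I would evaluate $\Tr[\hat{H}_\psi\rho]$ exactly in terms of the moments. Expanding $\mathbf{\hat{R}}-\mathbf{m}(\psi)=(\mathbf{\hat{R}}-\mathbf{m}(\rho))+(\mathbf{m}(\rho)-\mathbf{m}(\psi))$, the cross terms vanish because $\Tr[(\mathbf{\hat{R}}-\mathbf{m}(\rho))\rho]=0$, and the antisymmetric (canonical-commutator) part of the second moments drops out when contracted against the symmetric matrix $V(\psi)^{-1}$. A direct computation then yields
\[
\Tr[\hat{H}_\psi\rho]=\tfrac{1}{4}\Tr\!\big[V(\psi)^{-1}\Delta V\big]+\tfrac{1}{2}\,\Delta\mathbf{m}^{\intercal}V(\psi)^{-1}\Delta\mathbf{m},
\]
where $\Delta\mathbf{m}\coloneqq\mathbf{m}(\rho)-\mathbf{m}(\psi)$ and $\Delta V\coloneqq V(\rho)-V(\psi)$, and the constant $\tfrac{1}{4}\Tr[V(\psi)^{-1}V(\psi)]=\tfrac{n}{2}$ exactly cancels the $-\tfrac{n}{2}$. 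I would bound the two terms by Hölder's inequality, $\Tr[V(\psi)^{-1}\Delta V]\le\Tr[V(\psi)^{-1}]\,\|\Delta V\|_\infty$ and $\Delta\mathbf{m}^{\intercal}V(\psi)^{-1}\Delta\mathbf{m}\le\|V(\psi)^{-1}\|_\infty\,\|\Delta\mathbf{m}\|^2$.

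Finally, I would control $V(\psi)^{-1}$ using purity and energy. Since $\psi$ is a pure Gaussian state, all symplectic eigenvalues of $V(\psi)$ equal one, so its eigenvalues come in reciprocal pairs and $V(\psi)$ and $V(\psi)^{-1}$ are cospectral; hence $\|V(\psi)^{-1}\|_\infty=\|V(\psi)\|_\infty\le\Tr[V(\psi)]$ and $\Tr[V(\psi)^{-1}]=\Tr[V(\psi)]$. The energy identity $\Tr[\hat{E}_n\psi]=\tfrac{1}{4}\Tr[V(\psi)]+\tfrac{1}{2}\|\mathbf{m}(\psi)\|^2\le E$ gives $\Tr[V(\psi)]\le 4E$ (note that only the energy of $\psi$ enters the final bound). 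Substituting produces $\Tr[(\mathbb{1}-\psi)\rho]\le E\|\Delta V\|_\infty+2E\|\Delta\mathbf{m}\|^2$, which combined with the Fuchs--van de Graaf reduction of the first paragraph completes the proof. The only genuine idea is the operator inequality of the second paragraph; the remaining steps are a moment computation and two applications of Hölder's inequality.
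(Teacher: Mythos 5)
Your proposal is correct and is essentially the paper's own argument in a different frame: the operator inequality $\mathbb{1}-\psi\le G\hat{N}_nG^\dagger$ is exactly the paper's $\ketbra{0}\ge\mathbb{1}-\hat{N}_n$ applied to the conjugated state $\omega=G^\dagger\rho\,G$, and both proofs then combine the resulting moment computation with H\"older, the pure-state identity $V(\psi)^{-1}=\Omega_n V(\psi)\Omega_n^\intercal$ (giving $\Tr[V(\psi)^{-1}]=\Tr V(\psi)\le 4E$), and Fuchs--van de Graaf. No gaps.
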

By exploiting this improved bound, in the SM we show that tomography of \emph{pure} Gaussian states can be achieved using $O\!\left( n^5E^3/\varepsilon^4\right)$ copies of the state. This represents an improvement over the mixed-state scenario analysed in Theorem~\ref{tom_gaus_main} in the Main text.

Moreover, the bound in Theorem~\ref{spoiler_bounds_pure_main} can be useful for \emph{quantum state certification}~\cite{Aolita_2015}, as we briefly detail now. Suppose one aims to prepare a pure Gaussian state $\psi$ with known first and covariance matrix. In a noisy experimental setup, however, an unknown state $\rho$ is effectively prepared. By accurately estimating the first two moments of $\rho$ (which can be done efficiently, as shown in the SM) one can estimate the right-hand side of \eqref{upper_bound_d_tr_pure}, which provides an upper bound on the trace distance between the target state $\psi$ and the noisy state $\rho$, thereby providing a measure of the quantum device's precision. Consequently, the device can be adjusted to minimise the error in the state preparation.

\newpage
\clearpage

\onecolumngrid
\begin{center}
\vspace*{\baselineskip}
{\textbf{\large Supplemental material:\\ Learning quantum states of continuous variable systems}}\\
\end{center}

\renewcommand{\theequation}{S\arabic{equation}}
\renewcommand{\thethm}{S\arabic{thm}}
\setcounter{equation}{0}
\setcounter{thm}{0}
\setcounter{figure}{1}
\setcounter{table}{0}
\setcounter{section}{0}
\setcounter{page}{1}
\makeatletter

\setcounter{secnumdepth}{2}
\section{Preliminaries}
\tableofcontents
\subsection{Notation and basics}\label{sec_notations}
Let $\mathbb{N}$ denote the set of natural numbers, and for each $n \in \mathbb{N}$, define $[n] \coloneqq \{1, 2, \ldots, n\}$. We also define $\N_+ \coloneqq \N \setminus \{0\}$ and $\mathbb{R}_+$ as the set of positive real numbers. We introduce $\mathbb{R}^{n \times n}$ and $\mathbb{C}^{n \times n}$ as the sets of $n \times n$ real and complex matrices, respectively. The notation $\lceil x \rceil$ rounds $x \in \mathbb{R}$ up to the nearest integer, while $\lfloor x \rfloor$ rounds $x$ down to the nearest integer.
Given a vector $v \in \mathbb{C}^n$ and a scalar $p \in [1, \infty]$, the $p$-norm of $v$ is denoted by $\|v\|_p$, defined as
\begin{equation}
    \|v\|_p \coloneqq \left(\sum_{i=1}^n |v_i|^p\right)^{\frac{1}{p}}.
\end{equation}
For any matrix $A \in \mathbb{C}^{n \times n}$, its Schatten $p$-norm is given by $\|A\|_p \coloneqq \text{Tr}\left((\sqrt{A^\dagger A})^p\right)^{\frac{1}{p}}$, which corresponds to the $p$-norm of the singular values of $A$. The trace norm and the Hilbert-Schmidt norm, instances of Schatten $p$-norms, are denoted $\|\cdot\|_1$ and $\|\cdot\|_2$, respectively. The infinity norm, $\|\cdot\|_\infty$, represents the maximum singular value and equals the limit of the Schatten $p$-norms as $p \to \infty$. 
The H\"older inequality, 
\begin{equation}
    |\text{Tr}(A^\dagger B)| \leq \|A\|_p \|B\|_q,
\end{equation}
applies for $1 \leq p, q \leq \infty$ with $\frac{1}{p} + \frac{1}{q} = 1$. Moreover, for all matrices $A \in \mathbb{C}^{n \times n}$ and $1 \leq p \leq q$, it holds that $\|A\|_q \leq \|A\|_p$ and $\|A\|_p \leq \text{rank}(A)^{\left(\frac{1}{p} - \frac{1}{q}\right)}\|A\|_q$.
We use the bra-ket notation, where we denote a vector $v \in \mathbb{C}^d$ using the ket notation $\ket{v}$ and its adjoint using the bra notation $\bra{v}$.
We refer to a vector $\ket{\psi}\in \mathbb{C}^d$ as a (pure) state if $\|\ket{\psi}\|_2=1$.

Given a Hilbert space $\HH$, we denote with $\pazocal{D}\!\left(\HH\right)$ the set of quantum states on $\HH$, i.e., positive semi-definite operators with unit trace. The trace distance between two quantum states $\rho$ and $\sigma$ is defined by $\frac{1}{2}\|\rho-\sigma\|_1$. The von Neumann entropy of a quantum state $\rho$ is given by $S(\rho)\coloneqq-\Tr[\rho\log_2\rho]$.

We denote with $O(n)$ the group of $n\times n$ orthogonal matrices, and with $\mathrm{U}(n)$ the group of $n \times n$ unitary matrices.
$\mathrm{Sp}(2n)$ denotes the group of symplectic matrices over the real field, defined as 
\begin{align}
    \mathrm{Sp}(2n) \coloneqq  \{S \in \mathbb{R}^{2n,2n} \,:\, S\Omega S^{T}=\Omega\},
\end{align} 
where 
\begin{align}\Omega_n  \coloneqq   \bigoplus_{i = 1}^{n} \begin{pmatrix} 0 & 1 \\ -1 & 0 \end{pmatrix}.
\end{align} 
We denote the Fourier transform \( \mathcal{F} \) of a function \( g: \mathbb{R}^{2n} \to \mathbb{R} \) as
\begin{align}
    \mathcal{F} g (\mathbf{r}) = \int_{\mathbf{r}' \in \mathbb{R}^{2n}} \mathrm{d}^{2n}\mathbf{r}' \, g(\mathbf{r}') e^{-i {\mathbf{r}}^{\intercal} \mathbf{r}'}.
\end{align}
Here, \( \mathbf{r}^{\intercal} \) denotes the transpose of \( \mathbf{r} \).
Consequently, the inverse Fourier transform \( \mathcal{F}^{-1} \) of a function \( g: \mathbb{R}^{2n} \to \mathbb{R} \) is given by
\begin{align}
    \mathcal{F}^{-1} g (\mathbf{r}) = \frac{1}{(2\pi)^{2n}} \int_{\mathbf{r}' \in \mathbb{R}^{2n}} \mathrm{d}^{2n}\mathbf{r}' \, g(\mathbf{r}') e^{i {\mathbf{r}'}^{\intercal} \mathbf{r}}.
\end{align}

\noindent
For any random variable $Z$ and any convex function $f$, Jensen Inequality states that $f(\mathbb{E}[Z]) \leq \mathbb{E}[f(Z)]$. Furthermore, if $f$ is concave, then the opposite inequality holds.

\noindent
For any concave real function $f$, any Hermitian matrix $X$ with eigenvalues pertaining to the domain of $f$, and any positive semi-definite matrix $\rho$, it holds that
\begin{align}
\Tr(\rho f(X)) \leq f(\Tr(\rho X)),
\label{eq:conc}
\end{align}
which can be shown by expanding $X$ in its eigendecomposition and using Jensen inequality.

We review some basic notions regarding the asymptotic notation:
\begin{itemize}
    \item Big-O notation: For a function \(f(n)\), if there exists a constant \(c\) and a specific input size \(n_0\) such that $f(n) \leq c \cdot g(n)$ for all \(n \geq n_0\), where \(g(n)\) is a well-defined function, then we express it as \(f(n) = O(g(n))\). This notation signifies the upper limit of how fast a function grows in relation to \(g(n)\).
    \item Big-Omega notation: For a function \(f(n)\), if there exists a constant \(c\) and a specific input size \(n_0\) such that $f(n) \geq c \cdot g(n)$ for all \(n \geq n_0\), where \(g(n)\) is a well-defined function, then we express it as \(f(n) = \Omega(g(n))\). This notation signifies the lower limit of how fast a function grows in relation to \(g(n)\).
    \item Big-Theta notation: For a function \(f(n)\), if $f(n) = O(g(n))$ and if $f(n) = \Omega(g(n))$, where \(g(n)\) is a well-defined function, then we express it as \(f(n) = \Theta(g(n))\). 
\end{itemize}
A tilde over $\pazocal{O}(\cdot)$, i.e., $\tilde{\pazocal{O}}(\cdot)$, implies that we are neglecting $n$ factors at the numerator or denominator (for us, $n$ will always represent the number of modes or qubits). For example, $f(n)=\frac{2^n}{n}$ is $\tildeTheta\,(2^n)$.  Analogously for the other asymptotic functions.

\emph{Note}: Throughout the Supplementary Material, we will use the asymptotic notation described above. However, in the main text, we opted to simplify the notation by just using $O(\cdot)$ to informally denote the "relevant asymptotic scaling." Specifically, when in the Main text we wrote that $y=O(f(\varepsilon,n))$ for some function $f(\cdot)$ of $\varepsilon$ and $n$, we informally meant that  ``$y$ is of the order of $f(\varepsilon,n)$ when $n$ is very large and $\varepsilon$ is very small.'' Importantly, to ensure clarity and avoid ambiguity, we express every sample complexity bound using both asymptotic notation and explicit exact expressions.

\subsubsection{Basics of statistical learning theory}
We present here basic results of probability and statistical learning theory; for further details we refer to Ref.~\cite{vershynin_2018}.
\begin{lemma}[(Union bound)]
    Let $B_1, B_2, \ldots, B_M$ be events in a probability space. The probability of their union is bounded by the sum of their individual probabilities, 
    \bb
        \Pr\!\left[\bigcup_{i=1}^{M} B_i\right] \le \sum_{i=1}^{M} \Pr[B_i]\,.
    \ee
\end{lemma}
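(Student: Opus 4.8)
The plan is to reduce the union to a \emph{disjoint} union, on which the probability measure is additive, and then invoke monotonicity. First I would disjointify the family by setting $C_1 \coloneqq B_1$ and, for $i \ge 2$,
\bb
    C_i \coloneqq B_i \setminus \bigcup_{j=1}^{i-1} B_j\,.
\ee
By construction the sets $C_1,\ldots,C_M$ are pairwise disjoint, each satisfies $C_i \subseteq B_i$, and their union coincides with $\bigcup_{i=1}^{M} B_i$.

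Second, I would apply finite additivity of $\Pr$ on the disjoint family, and then bound each term using monotonicity $\Pr[C_i] \le \Pr[B_i]$ (which follows from $C_i \subseteq B_i$ together with nonnegativity of $\Pr$ on $B_i \setminus C_i$):
\bb
    \Pr\!\left[\bigcup_{i=1}^{M} B_i\right] = \Pr\!\left[\bigcup_{i=1}^{M} C_i\right] = \sum_{i=1}^{M} \Pr[C_i] \le \sum_{i=1}^{M} \Pr[B_i]\,,
\ee
which is exactly the claimed inequality.

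Alternatively, I would give a one-line argument by induction on $M$: the base case $M=1$ is trivial, and the inductive step uses the two-set identity $\Pr[A \cup B] = \Pr[A] + \Pr[B] - \Pr[A \cap B] \le \Pr[A] + \Pr[B]$, applied with $A = \bigcup_{i=1}^{M-1} B_i$ and $B = B_M$. A third route is to integrate the pointwise inequality between indicator functions $\mathbbm{1}_{\bigcup_i B_i} \le \sum_i \mathbbm{1}_{B_i}$ and use linearity of the expectation.

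Since this is a standard and elementary fact, there is no genuine technical obstacle. The only point requiring a modicum of care is that the disjointified sets $C_i$ be measurable, so that finite additivity legitimately applies; this holds automatically because the underlying $\sigma$-algebra is closed under finite unions, intersections, and complements.
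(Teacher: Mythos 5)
Your proof is correct. The paper states this lemma as a standard fact (citing Vershynin) and does not supply a proof of its own, so there is nothing to compare against; your disjointification argument is the canonical, complete proof, and both of your alternative routes (induction via the two-set inclusion--exclusion identity, and integrating the pointwise indicator inequality) are equally valid.
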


\begin{lemma}[(Markov inequality)]
Let $X$ be a non-negative random variable and $a > 0$. Then, the probability that $X$ is at least $a$ is bounded by the expected value of $X$ divided by $a$, i.e.
\begin{equation}
\Pr[X \geq a] \leq \frac{\mathbb{E}[X]}{a}.
\end{equation}
\end{lemma}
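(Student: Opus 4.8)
The plan is to reduce the probabilistic statement to a deterministic pointwise inequality between random variables, then take expectations and use only the non-negativity of $X$ together with the monotonicity and linearity of the expectation. This is the cleanest route and avoids any assumption on the distribution of $X$ (discrete, continuous, or mixed).

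First I would introduce the indicator random variable $\mathds{1}[X \geq a]$, which equals $1$ on the event $\{X \geq a\}$ and $0$ on its complement; by definition this gives $\mathbb{E}[\mathds{1}[X \geq a]] = \Pr[X \geq a]$. The key step is the pointwise bound
\[
a\,\mathds{1}[X \geq a] \leq X,
\]
which I would verify by splitting the sample space into two cases. On the event $\{X \geq a\}$ the left-hand side equals $a$ while $X \geq a$ by the defining property of the event, so the inequality holds. On the complementary event the left-hand side is $0$, and the right-hand side is non-negative precisely because $X$ is assumed non-negative, so the inequality again holds. Hence the bound is valid on the whole sample space.

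Taking expectations of both sides and invoking monotonicity and linearity of $\mathbb{E}$ then yields
\[
a\,\Pr[X \geq a] = \mathbb{E}\!\left[a\,\mathds{1}[X \geq a]\right] \leq \mathbb{E}[X],
\]
and dividing through by $a > 0$ gives the claimed bound $\Pr[X \geq a] \leq \mathbb{E}[X]/a$. The single place requiring care is the use of non-negativity of $X$ in the second case of the pointwise bound: without it the term $X$ could be negative and the inequality would fail. There is no genuine obstacle here, as Markov's inequality is elementary; equivalently one could argue directly through the distribution $\mu$ of $X$ via $\mathbb{E}[X] = \int_0^\infty x\,\mathrm{d}\mu(x) \geq \int_a^\infty a\,\mathrm{d}\mu(x) = a\,\Pr[X \geq a]$, which is the same estimate phrased in integral form.
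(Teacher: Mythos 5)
Your proof is correct and complete: the pointwise bound $a\,\mathds{1}[X\geq a]\leq X$, justified by the case split using non-negativity of $X$, followed by monotonicity and linearity of expectation, is the canonical argument for Markov's inequality. The paper itself states this lemma without proof (citing it as a standard fact from the probability literature), so there is no in-paper argument to compare against; your proposal supplies the standard proof correctly.
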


\begin{lemma}[(Chernoff bound)]
\label{le:chernoff}
    Consider a set of independent and identically distributed binary random variables $\{Y_i\}_{i=1}^{N}$, taking values in $\{0,1\}$. Define $Y \coloneqq \sum_{i=1}^{N} Y_i$ and $\mu_Y \coloneqq \mathbb{E}[Y]$. For any $\alpha \in (0,1)$, the probability of $Y$ being less than $(1-\alpha)$ times its expected value is exponentially bounded as
    \bb 
    \Pr\left[Y \leq (1-\alpha) \mu_Y\right] \leq \exp\!\left(-\frac{\alpha^2 \mu_Y}{2}\right).
    \ee
\end{lemma}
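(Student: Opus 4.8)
The plan is to apply the standard Chernoff (exponential-moment) method, i.e.\ a variational strengthening of the Markov inequality (stated above) applied to the nonnegative random variable $e^{-tY}$. First I would fix a free parameter $t>0$ and observe that, since $x\mapsto e^{-tx}$ is decreasing, the event $\{Y \le (1-\alpha)\mu_Y\}$ coincides with $\{e^{-tY} \ge e^{-t(1-\alpha)\mu_Y}\}$; applying Markov's inequality to $e^{-tY}$ then yields
\bb
    \Pr\!\left[Y \le (1-\alpha)\mu_Y\right] \le e^{t(1-\alpha)\mu_Y}\,\mathbb{E}\!\left[e^{-tY}\right].
\ee

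Next I would use the i.i.d.\ assumption to factorize the moment generating function. Writing $p \coloneqq \Pr[Y_i=1]$, so that $\mu_Y = Np$, independence gives $\mathbb{E}[e^{-tY}] = \big(1 - p(1-e^{-t})\big)^N$, and the elementary inequality $1+x \le e^x$ applied to each factor yields $\mathbb{E}[e^{-tY}] \le \exp\!\big(-\mu_Y(1-e^{-t})\big)$. Substituting back, I obtain
\bb
    \Pr\!\left[Y \le (1-\alpha)\mu_Y\right] \le \exp\!\Big(\mu_Y\big[t(1-\alpha) - (1-e^{-t})\big]\Big).
\ee

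I would then optimize the exponent over $t>0$. Setting the derivative of the bracket to zero gives $e^{-t} = 1-\alpha$, i.e.\ $t = \ln\tfrac{1}{1-\alpha} > 0$, which is admissible precisely because $\alpha \in (0,1)$. Substituting this optimal value collapses the bound to the sharp large-deviation rate
\bb
    \Pr\!\left[Y \le (1-\alpha)\mu_Y\right] \le \exp\!\Big(-\mu_Y\big[\alpha + (1-\alpha)\ln(1-\alpha)\big]\Big).
\ee

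Finally, to reach the clean Gaussian-type bound in the statement, it remains to prove the scalar inequality $\alpha + (1-\alpha)\ln(1-\alpha) \ge \tfrac{\alpha^2}{2}$ for all $\alpha \in (0,1)$. I would establish this by setting $g(\alpha) \coloneqq \alpha + (1-\alpha)\ln(1-\alpha) - \tfrac{\alpha^2}{2}$ and checking that $g(0)=0$, that $g'(\alpha) = -\ln(1-\alpha)-\alpha$ with $g'(0)=0$, and that $g''(\alpha) = \tfrac{\alpha}{1-\alpha} \ge 0$ on $[0,1)$; hence $g'$ is nondecreasing (so nonnegative), whence $g$ is nondecreasing (so nonnegative), which gives the claim. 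The only genuinely delicate point is this last step: the optimized exponent $\alpha + (1-\alpha)\ln(1-\alpha)$ is the exact (tight) rate, whereas the target $\alpha^2/2$ is a convenient but lossy relaxation, so one must verify that the relaxation is valid over the \emph{entire} interval $\alpha\in(0,1)$ and not merely to leading order in $\alpha$ — the second-derivative argument above is the cleanest way to do so. Everything else is routine.
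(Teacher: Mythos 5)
Your proof is correct, and it is the standard Cram\'er--Chernoff derivation of the multiplicative lower-tail bound: Markov applied to $e^{-tY}$, factorization via independence, the bound $1+x\le e^x$, optimization at $e^{-t}=1-\alpha$, and the scalar inequality $\alpha+(1-\alpha)\ln(1-\alpha)\ge\alpha^2/2$ verified by the second-derivative argument. The paper states this lemma as a known textbook fact without supplying a proof, so there is nothing to compare against; your write-up fills that gap correctly (the only cosmetic remark is that the degenerate case $\mu_Y=0$ holds trivially and needs no separate treatment).
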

We are now going to define the median-of-means estimator~\cite{vershynin_2018,JERRUM1986169,nemirovskii_problem_1983}.
Let $N', K \in \mathbb{N}$. Given $N = N'K$ samples $\{X_m\}_{m=1}^{N}$ of the random variable $X$, divide the samples into $K$ disjoint bins $\{\pazocal{B}_l\}_{l=1}^{K}$. Specifically, for each $l\in [K]$, let $\pazocal{B}_l$ be a set containing the elements $\{X_{(l-1)N'+1}, \ldots, X_{lN'}\}$, where $X_{i}$ denotes the $i$-th ordered sample. For each bin $\pazocal{B}_l$, define $\tilde{x}_l$ as the arithmetic average, i.e.
\bb
    \tilde{x}_l \coloneqq \frac{1}{N'}\sum_{X_m\in \pazocal{B}_l}X_m.
\ee
The median-of-means estimator is then given by
\bb
    \hat{\mu}(N',K)\coloneqq \operatorname{median}(\tilde{x}_1,\tilde{x}_2,\ldots,\tilde{x}_K).
\ee
We now present a Lemma which will be crucial in our subsequent analysis.
\begin{lemma}[(Medians of means \cite{JERRUM1986169,nemirovskii_problem_1983})]\label{median_of_means}
    Let $X$ be a random variable with variance $\sigma^2$. Suppose $K$ independent sample means of size $N'\ge\frac{34 \sigma^2}{\varepsilon^2}$ suffice to construct a median-of-means estimator $\hat{\mu}(N, K)$ that satisfies
    \bb 
    \operatorname{Pr}[|\hat{\mu}(N, K)-\mathbb{E}[X]| \geq \varepsilon] \leq 2 \mathrm{e}^{-K / 2}, \quad \forall \varepsilon>0\,.
    \ee
    As a consequence, $N\ge 68\log\!\left(\frac{2}{\delta}\right)\frac{\sigma^2}{\varepsilon^2}$ samples of $X$ suffice to construct a median-of-means estimator $\hat{\mu}$ which satisfies
    \bb 
    \operatorname{Pr}[|\hat{\mu}-\mathbb{E}[X]| \geq \varepsilon] \leq \delta\quad\forall \varepsilon>0\,.
    \ee
\end{lemma}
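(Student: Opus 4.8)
The plan is to treat the statement as two layers: the first displayed inequality is the standard median-of-means concentration guarantee of Refs.~\cite{JERRUM1986169,nemirovskii_problem_1983}, which I would first recall (and sketch a derivation of), and the final ``as a consequence'' claim is then an elementary rewriting that converts the $K$-dependence of the tail into an explicit total sample size $N$.

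For the first inequality I would use the usual two-level argument. Each bin mean $\tilde{x}_l=\frac{1}{N'}\sum_{X_m\in\pazocal{B}_l}X_m$ is an average of $N'$ i.i.d.\ copies of $X$, hence has expectation $\E[X]$ and variance $\sigma^2/N'$. Chebyshev's inequality then bounds the probability that a given bin is ``bad'', meaning $|\tilde{x}_l-\E[X]|\ge\varepsilon$, by $\frac{\sigma^2}{N'\varepsilon^2}$, which is at most $\frac{1}{34}$ as soon as $N'\ge 34\sigma^2/\varepsilon^2$. Next I would observe that the median of the $K$ bin means can deviate from $\E[X]$ by at least $\varepsilon$ only if at least half of the bins are bad: writing $Y_l\coloneqq\mathds{1}[\,|\tilde{x}_l-\E[X]|<\varepsilon\,]$ for the indicator of a ``good'' bin and $Y\coloneqq\sum_{l=1}^{K}Y_l$, the failure event is contained in $\{Y\le K/2\}$. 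Since $\mu_Y=\E[Y]\ge\frac{33}{34}K$, the lower-tail Chernoff bound of Lemma~\ref{le:chernoff}, applied with $\alpha$ chosen so that $(1-\alpha)\mu_Y=K/2$, produces an exponentially decaying tail, and the sharpened constants of the cited works give exactly the stated form $2e^{-K/2}$.

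The consequence then follows by a one-line substitution. I would demand that the right-hand side of the first inequality be at most $\delta$, i.e.\ $2e^{-K/2}\le\delta$, which rearranges to $K\ge 2\log(2/\delta)$. Taking the minimal admissible bin size $N'=\lceil 34\sigma^2/\varepsilon^2\rceil$ together with $K=\lceil 2\log(2/\delta)\rceil$ bins, the total number of samples obeys $N=N'K\ge \frac{34\sigma^2}{\varepsilon^2}\cdot 2\log(2/\delta)=68\log(2/\delta)\frac{\sigma^2}{\varepsilon^2}$, and for this $N$ the estimator $\hat\mu$ satisfies $\Pr[|\hat\mu-\E[X]|\ge\varepsilon]\le\delta$, as claimed.

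I expect the only delicate point to be the bookkeeping of constants in the first step: a naive application of Lemma~\ref{le:chernoff} to the count of good bins yields a tail of the form $e^{-cK}$ with a constant $c$ somewhat smaller than $1/2$, so obtaining the clean exponent $K/2$ (and the matching threshold $34$) is where the sharper analysis of Refs.~\cite{JERRUM1986169,nemirovskii_problem_1983} is needed. By contrast, the passage from the $K$-dependent tail to the explicit sample count is entirely routine algebra.
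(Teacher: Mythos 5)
Your proposal is correct and matches the paper's treatment: the paper does not actually prove this lemma but imports the first tail bound verbatim from the cited references, and the ``as a consequence'' clause is obtained exactly by your substitution $2e^{-K/2}\le\delta \Rightarrow K\ge 2\log(2/\delta)$ combined with $N'\ge 34\sigma^2/\varepsilon^2$ and $N=N'K$. Your closing remark is also accurate: the multiplicative Chernoff bound of Lemma~\ref{le:chernoff} applied to the count of good bins yields an exponent strictly weaker than $K/2$, so the clean constant pair $(34,\,2e^{-K/2})$ must indeed be taken from Refs.~\cite{JERRUM1986169,nemirovskii_problem_1983} (or from a sharper binomial-tail estimate) rather than rederived from the paper's own Chernoff lemma.
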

We also mention the following (standard) fact that is useful in amplifying the probability of success of an algorithm.
 
\begin{lemma}[(Enhancing the probability of success)]
\label{le:enhance-success}
Let $\pazocal{A}$ be an algorithm with a success probability of at least $ p_{\mathrm{succ}}\in(0,1]$. Let $\delta > 0$ and $N' \in \N$. If we execute $\pazocal{A}$ a total of $m$ times, with
\bb
    m \ge \left\lceil \frac{3}{2p_{\mathrm{succ}}}N' + \frac{18}{p_{\mathrm{succ}}}\log\!\left(\frac{1}{\delta}\right)\right\rceil\,,
\ee
then $\pazocal{A}$ will achieve success at least $N'$ times with a probability of at least $1 - \delta$.
\end{lemma}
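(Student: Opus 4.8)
The plan is to view this as a direct application of the Chernoff bound (Lemma~\ref{le:chernoff}) to the number of successful runs. First I would introduce, for each of the $m$ independent executions of $\pazocal{A}$, the indicator $Y_i$ of the event that the $i$-th run succeeds, so that $\Pr[Y_i=1]\ge p_{\mathrm{succ}}$, and set $Y\coloneqq\sum_{i=1}^m Y_i$ for the total number of successes. The claim to establish is $\Pr[Y\ge N']\ge 1-\delta$, i.e.\ $\Pr[Y<N']\le\delta$.

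The first step is a reduction to the worst case. Because the per-run success probability is only assumed to be \emph{at least} $p_{\mathrm{succ}}$, I would use a monotonicity (stochastic-dominance) argument: coupling each $Y_i$ with an independent $\mathrm{Bernoulli}(p_{\mathrm{succ}})$ variable $\tilde Y_i\le Y_i$ shows that lowering every success probability to exactly $p_{\mathrm{succ}}$ can only decrease the number of successes. Hence it suffices to prove $\Pr[\tilde Y<N']\le\delta$ for $\tilde Y\coloneqq\sum_i\tilde Y_i$, a sum of i.i.d.\ binary variables with mean $\mu\coloneqq m\,p_{\mathrm{succ}}$, which is exactly the setting of Lemma~\ref{le:chernoff}.

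Next, the lower bound on $m$ translates into $\mu\ge\frac{3}{2}N'+18\log(1/\delta)$, which in particular gives $\mu>N'$. I would then apply Lemma~\ref{le:chernoff} with $\alpha\coloneqq 1-N'/\mu\in(0,1)$, chosen so that $(1-\alpha)\mu=N'$; this yields $\Pr[\tilde Y\le N']\le\exp\!\big(-(\mu-N')^2/(2\mu)\big)$, and since $\{\tilde Y<N'\}\subseteq\{\tilde Y\le N'\}$ the same bound controls $\Pr[\tilde Y<N']$.

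The final step is to verify that the constants $3/2$ and $18$ force the exponent to be at least $\log(1/\delta)$. From $\mu\ge\frac32 N'+18\log(1/\delta)$ I would extract two inequalities: dropping the nonnegative term $\tfrac12 N'$ gives $\mu-N'\ge 18\log(1/\delta)$, while $\mu\ge\frac32 N'$ gives $N'\le\frac23\mu$ and hence $\mu-N'\ge\frac13\mu$. Multiplying these gives $(\mu-N')^2\ge 6\mu\log(1/\delta)$, so $(\mu-N')^2/(2\mu)\ge 3\log(1/\delta)$ and therefore $\Pr[\tilde Y<N']\le\delta^3\le\delta$ (for $\delta\in(0,1)$; the case $\delta\ge 1$ is vacuous). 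The computation closes with room to spare, so the only points requiring care are making the coupling reduction rigorous --- legitimising the use of the i.i.d.\ Chernoff bound under the ``$\ge p_{\mathrm{succ}}$'' assumption --- and the bookkeeping of the numerical constants; both are routine.
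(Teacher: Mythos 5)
Your proposal is correct and follows essentially the same route as the paper: both define indicator variables for the runs, apply the Chernoff bound of Lemma~\ref{le:chernoff} with $\alpha = 1 - N'/(m\,p_{\mathrm{succ}})$, and use the two pieces of the hypothesis on $m$ to force $\alpha\ge 1/3$ and to make the exponent at least $\log(1/\delta)$. The only differences are cosmetic — your explicit coupling to handle the ``at least $p_{\mathrm{succ}}$'' assumption, and a slightly different arrangement of the final arithmetic that yields the marginally sharper bound $\delta^3$.
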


\begin{proof}
Consider the binary random variables $\{X_i\}^{m}_{i=1}$ defined as
\begin{align}
    X_i \coloneqq \begin{cases}
        1 & \text{if $\pazocal{A}$ succeeds}, \\ 
        0 & \text{if $\pazocal{A}$ fails}.
    \end{cases}
\end{align}
Let $\hat{X} \coloneqq \sum^{m}_{i=1}X_i$, and note that $\mathbb{E}[\hat{X}] = m p_{\mathrm{succ}}$. Our goal is to upper bound the probability that $\pazocal{A}$ succeeds fewer than $N'$ times.
Define $\alpha \coloneqq 1 - \frac{N'}{m p_{\mathrm{succ}}}$. Using that $m\ge \frac{3}{2p_{\mathrm{succ}}}N'$, we ensure that $\alpha \in \left(\frac{1}{3},1\right)$. Applying the Chernoff bound in Lemma~\ref{le:chernoff}, we get:
\begin{align}
    \operatorname{Pr}\!\left(\hat{X} \leq N'\right) &= \operatorname{Pr}\!\left(\hat{X} \leq (1-\alpha) \mathbb{E}[\hat{X}]\right) \\
    \nonumber
    &\leq \exp\!\left(-\frac{\alpha^2 \mathbb{E}[\hat{X}]}{2}\right) \\
    \nonumber
    &= \exp\!\left(-\frac{\alpha^2}{2} p_{\mathrm{succ}}m\right) \\
    \nonumber
    &\le \exp\!\left(-\frac{p_{\mathrm{succ}}}{18}m\right) 
    \\
    \nonumber
    &\le \delta,
    \nonumber
\end{align}
where in the second inequality we have used that $\alpha\ge \frac{1}{3}$, while in the last inequality we have used 
\begin{equation}
    m \ge   \frac{18}{p_{\mathrm{succ}}}\log\!\left(\frac{1}{\delta}\right).
\end{equation}
\end{proof}

We present a lemma that enhances the probability of success of a learning algorithm. Although the proof follows standard steps similar to those in Ref.~\cite{haah2023queryoptimal} (Proposition 2.4), we provide it here with precise constants, which were not explicitly stated in Ref.~\cite{haah2023queryoptimal}.

\begin{lemma}[(Enhancing the probability of success of an algorithm for learning objects in a metric space)]
\label{le:enhance-success_states}
Let $p_{\mathrm{succ}}\in(\frac{1}{2},1]$ and $\varepsilon>0$. Let $\pazocal{A}$ be an algorithm for learning unknown objects in a metric space with distance $d(\cdot,\cdot)$. Assume that for any (unknown) input object $\rho$ the algorithm $\pazocal{A}$ outputs an object $\tilde{\rho}$ such that
\bb
    \Pr\left[d(\tilde{\rho}, \rho)\le \varepsilon\right]\ge p_{\mathrm{succ}}\,.
\ee
Then for any $\delta\in(0,1]$ there exists an algorithm $\pazocal{A}'$, which executes $\pazocal{A}$ a total of $m$ times with
\bb\label{choice_m_value}
    m \coloneqq \left\lceil \frac{2}{\left(1-\frac{1}{2p_{\mathrm{succ}}}\right)^2p_{\mathrm{succ}}}\log\!\left(\frac{1}{\delta}\right)  \right\rceil\,,
\ee
such that for any (unknown) input object $\rho$ it outputs an object $\tilde{\rho}$ such that
\bb
    \Pr\left[d(\tilde{\rho}, \rho)\le 3\varepsilon\right]\ge 1-\delta\,.
\ee
\end{lemma}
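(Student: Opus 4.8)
The plan is to follow the standard boosting argument for learning in a metric space, which converts a constant-success-probability learner into a high-confidence one by running it many times and returning an output that agrees with a majority of the others. Concretely, I would define $\pazocal{A}'$ to run $\pazocal{A}$ independently $m$ times, producing candidates $\tilde{\rho}_1,\dots,\tilde{\rho}_m$, and then to output any $\tilde{\rho}_{i^*}$ lying within distance $2\varepsilon$ of strictly more than half of the candidates. Crucially, this selection uses only the mutual distances $d(\tilde{\rho}_i,\tilde{\rho}_j)$, never $\rho$ itself, so it is implementable despite $\rho$ being unknown.

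The first key step is a deterministic, geometric observation. Call a candidate \emph{good} if $d(\tilde{\rho}_i,\rho)\le\varepsilon$, and let $G$ collect the good indices. By the triangle inequality any two good candidates satisfy $d(\tilde{\rho}_i,\tilde{\rho}_j)\le 2\varepsilon$, so in the event $|G|>m/2$ every good candidate is within $2\varepsilon$ of a strict majority, which guarantees that the selection step succeeds. Moreover, if the chosen $\tilde{\rho}_{i^*}$ is within $2\varepsilon$ of more than $m/2$ candidates while $|G|>m/2$, the two majorities must intersect, so there is a good index $j$ with $d(\tilde{\rho}_{i^*},\tilde{\rho}_j)\le 2\varepsilon$; combining with $d(\tilde{\rho}_j,\rho)\le\varepsilon$ yields $d(\tilde{\rho}_{i^*},\rho)\le 3\varepsilon$. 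The whole problem thus reduces to proving that $\{|G|>m/2\}$ holds with probability at least $1-\delta$.

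The second key step is the probabilistic estimate via the Chernoff bound (Lemma~\ref{le:chernoff}). Letting $X_i$ indicate that candidate $i$ is good, the $X_i$ are independent with $\mathbb{E}[X_i]\ge p_{\mathrm{succ}}$; since lowering the per-run success probability only increases $\Pr[|G|\le m/2]$, it suffices to treat the extremal case $\mathbb{E}[\hat X]=m\,p_{\mathrm{succ}}$ for $\hat X\coloneqq\sum_i X_i$. I would then apply Chernoff with deviation $\alpha\coloneqq 1-\frac{1}{2p_{\mathrm{succ}}}$, which is positive precisely because $p_{\mathrm{succ}}>\tfrac12$ and is tuned so that $(1-\alpha)\,m\,p_{\mathrm{succ}}=m/2$. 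This gives $\Pr[\hat X\le m/2]\le\exp(-\alpha^2 m\,p_{\mathrm{succ}}/2)$, and inserting the prescribed $m$ from~\eqref{choice_m_value} bounds the right-hand side by $\delta$, closing the argument.

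The main obstacle is conceptual rather than computational: the selector cannot see $\rho$, so the certificate it uses --- being close to a majority of the samples --- must be shown to imply closeness to $\rho$. The crux is the double-majority pigeonhole step, where the \emph{good} majority and the \emph{certified} majority are forced to overlap. Once this geometric reduction is in place, the only remaining freedom is the choice of $\alpha$; fixing it so that the Chernoff threshold lands exactly at $m/2$ reproduces the precise constant in~\eqref{choice_m_value}, and the hypothesis $p_{\mathrm{succ}}>\tfrac12$ is exactly what makes $\alpha>0$ and the whole scheme viable.
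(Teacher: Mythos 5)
Your proposal is correct and follows essentially the same route as the paper's proof: run $\pazocal{A}$ independently $m$ times, output a candidate within $2\varepsilon$ of a strict majority, use the Chernoff bound with $\alpha=1-\frac{1}{2p_{\mathrm{succ}}}$ to guarantee a good majority, and close with the same triangle-inequality/pigeonhole argument to get the $3\varepsilon$ bound. The only (harmless) refinement is your explicit monotonicity remark reducing $\mathbb{E}[X_i]\ge p_{\mathrm{succ}}$ to the extremal case, which the paper glosses over.
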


\begin{proof}[Proof]
The algorithm $\pazocal{A}'$, by executing $\pazocal{A}$ a total of $m$ times, produces $m$ random objects $\tilde{\rho}_1,\ldots,\tilde{\rho}_m$ such that each $\tilde{\rho}_i$ satisfies $\Pr\left[d(\tilde{\rho}_i, \rho)\le \varepsilon\right]\ge p_{\mathrm{succ}}$. Let us show that the probability $\bar{P}$ that there are more than $\frac{m}{2}$ objects in the set $\mathcal{S}\coloneqq\{\tilde{\rho}_1,\ldots,\tilde{\rho}_m\}$ which are close at most $\varepsilon$ to $\rho$ is not smaller than $1-\delta$. Consider the binary random variables $\{X_i\}^{m}_{i=1}$ defined as
\begin{align}
    X_i \coloneqq \begin{cases}
        1 & \text{if $d(\tilde{\rho}_i, \rho)\le \varepsilon$}, \\ 
        0 & \text{otherwise}.
    \end{cases}
\end{align}
Let $\hat{X} \coloneqq \sum^{m}_{i=1}X_i$, and note that $\mathbb{E}[\hat{X}] = m p_{\mathrm{succ}}$. 
Using that $p_{\mathrm{succ}}\in(\frac{1}{2},1)$, we ensure that the quantity $\alpha \coloneqq 1 - \frac{1}{2 p_{\mathrm{succ}}}$ satisfies $\alpha \in \left(0,1\right)$. Then, by applying the Chernoff bound in Lemma~\ref{le:chernoff}, we have that
\begin{align}
    \bar{P}&=\operatorname{Pr}\!\left(\hat{X} > \frac{m}{2}\right)\\
    \nonumber
    &= 1- \operatorname{Pr}\!\left(\hat{X} \le \frac{m}{2}\right)\\
    \nonumber
    &=1-\operatorname{Pr}\!\left(\hat{X} \leq (1-\alpha) \mathbb{E}[\hat{X}]\right) \\
    \nonumber
    &\geq 1- \exp\!\left(-\frac{\alpha^2 \mathbb{E}[\hat{X}]}{2}\right) \\
    \nonumber
    &=1- \exp\!\left(-\frac{ \left(1-\frac{1}{2  
p_{\mathrm{succ}}  }\right)^2 p_{\mathrm{succ}}}{2} m\right) \\
\nonumber
    &\geq 1- \delta\,,
    \nonumber
\end{align}
where in the last inequality we have used 
that 
\begin{equation}
m \ge   \frac{2}{\left(1-\frac{1}{2p_{\mathrm{succ}}}\right)^2p_{\mathrm{succ}}}\log\!\left(\frac{1}{\delta}\right).
\end{equation}
From now on, let us assume that there are more than $\frac{m}{2}$ objects in $\mathcal{S}$ which are close at most $\varepsilon$ to $\rho$ (which is an event that happens with probability at least $1-\delta$, as proved above). Then, by the triangle inequality, there are more than $\frac{m}{2}$ objects in $\mathcal{S}$ which are close at most $2\varepsilon$ to each other.

The algorithm $\mathcal{A}'$ is as follows: First, it computes the distance between any two objects in $\mathcal{S}$ and, second, it outputs an object $\tilde{\rho}\in\mathcal{S}$ that satisfies the property of being close at most $2\varepsilon$ to more than $\frac{m}{2}$ objects in $\mathcal{S}$. 

Let us show that the output $\tilde{\rho}$ satisfies $d(\tilde{\rho},\rho)\le3\varepsilon$. Let $\mathcal{S}_{\tilde{\rho}}$ be the set of objects in $\mathcal{S}$ that are close at most $2\varepsilon$ to $\tilde{\rho}$. Since $|\mathcal{S}_{\tilde{\rho}}|>\frac{m}{2}$, $|\mathcal{S}|=m$, and $\mathcal{S}$ contains more than $\frac{m}{2}$ objects which are close at most $\varepsilon$ to $\rho$, then there exists an object $\sigma\in\mathcal{S}_{\tilde{\rho}}$ with $d(\sigma,\rho)\le\varepsilon$. Then, by the  triangle inequality, we conclude that $d(\tilde{\rho},\rho)\le3\varepsilon$.
\end{proof}
The right-hand side of~\eqref{choice_m_value} diverges as $p_{\mathrm{succ}}\to\frac{1}{2}$. This is consistent with the simple fact that, in general, the probability of success can not be arbitrarily enhanced when $p_{\mathrm{succ}}\le\frac{1}{2}$. Indeed, in the latter case, we could consider an algorithm $\mathcal{A}$ such that for any unknown input $\rho$ it outputs: the exact true object $\rho$ with probability $p_{\mathrm{succ}}$; an object $\sigma_\rho$, which is very distant from $\rho$, with the same probability $p_{\mathrm{succ}}$; and a fixed object $\tau$ (independent of $\rho$) with probability $1-2p_{\mathrm{succ}}$. Hence, since the objects $\rho$ and $\sigma_\rho$ are statistically indistinguishable, there is no way to arbitrarily enhance the probability of success of $\mathcal{A}$, even with infinite executions of it.
\subsection{Preliminaries on continuous variable systems}
\label{sub:prelCV}
In this section, we provide a concise overview of quantum information with \emph{continuous variable} (CV) systems; 
for further details, we refer to Refs.~\cite{BUCCO,introeisert,weedbrook_gaussian_2012}. We consider $n$ modes of harmonic oscillators associated with the Hilbert space $L^2(\mathbb R^n)$, which comprises all square-integrable complex-valued functions over $\mathbb{R}^n$. Each mode represents a single-mode of electromagnetic radiation with definite frequency and polarisation. The set of $n$-mode states is denoted by $\pazocal{D}\!\left(L^2(\mathbb R^n)\right)$. For each \(j\in[n]\), the annihilation operator $a_j$ of the $j$-th mode is defined by
\bb
    a_j\coloneqq \frac{\hat{x}_j+i\hat{p}_j}{\sqrt{2}}\,,
\ee
where $\hat{x}_j$ and $\hat{p}_j$ denote the well-known position and momentum operators of the \(j\)-th mode, which are Hermitian operator satisfying the canonical commutation relations $ [\hat{x}_j,\hat{p}_k]=i\delta_{j,k}\hat{\mathbb{1}}$. 
Given a single mode with annihilation operator $a$, its $m$-th Fock state vector (corresponding to the quantum state vector with $m$ photons) is defined as
\bb
    \ket{m}\coloneqq \frac{(a^\dagger)^m}{\sqrt{m!}}\ket{0}\,,
\ee
where $\ket{0}$ is the vacuum state vector. Crucially, the Fock states  \((\ket{m})_{n\in\N}\) form an orthonormal basis for the Hilbert space \(L^2(\mathbb R)\) of a single-mode system, meaning that such a system can be effectively viewed as an infinite-dimensional qudit. The operator $a^\dagger a$ is referred to as the photon number operator and it can be diagonalised in Fock basis as
\bb
    a^\dagger a=\sum_{m=0}^\infty m \ketbra{m}\,.
\ee
By introducing the quadrature vector 
\bb
    \mathbf{\hat{R}}\coloneqq (\hat{x}_1,\hat{p}_1,\dots,\hat{x}_n,\hat{p}_n)^{\intercal}=(\hat{R}_1,\hat{R}_2,\dots,\hat{R}_{2n-1},\hat{R}_{2n})^{\intercal}\,
\ee
the canonical commutation relations can be expressed as 
\bb
    [\hat{R}_k,\hat{R}_l]=i\,(\Omega_n)_{kl}\mathbb{\hat{1}}\qquad\forall\,k,l\in[2n]\,,
    \label{comm_rel_quadrature}
\ee
where 
\bb\label{symplectic_form}
\Omega_n\coloneqq\bigoplus_{i=1}^n \left(\begin{matrix}0&1\\-1&0\end{matrix}\right)=\mathbb{1}_{n}\otimes\left(\begin{matrix}0&1\\-1&0\end{matrix}\right)
\ee
is the $n$-mode symplectic form, $\mathbb{1}_{n}$ is the $n\times n$ identity matrix, and $\mathbb{\hat{1}}$ is the identity operator over $L^2(\mathbb R^n)$. The relation in~\eqref{comm_rel_quadrature} is usually expressed in the continuous variable literature~\cite{BUCCO} in a compact form as
\bb
[\mathbf{\hat{R}},\mathbf{\hat{R}}^{\intercal}]=i\,\Omega_n\mathbb{\hat{1}}\,,
\ee
which we denote as `vectorial notation'.
The energy operator $\hat{E}_n$ is defined as 
\bb\label{energy_operator}
\hat{E}_n&\coloneqq \frac{1}{2}\mathbf{\hat{R}}^\intercal\mathbf{\hat{R}}=\sum_{j=1}^n \!\left(\frac{\hat{x}_j^2}{2}+\frac{\hat{p}_j^2}{2}\right)=\sum_{j=1}^n \!\left(a_j^\dagger a_j +\frac{\mathbb{\hat{1}}}{2}\right)=\hat{N}_n+\frac{n}{2}\mathbb{\hat{1}}\,,
\ee
where $\hat{N}_n\coloneqq \sum_{j=1}^n a_j^\dagger a_j$ is the photon number operator. The characteristic function $\chi_\rho: \mathbb{R}^{2n}\to \mathbb{C}$ of an $n$-mode state $\rho\in\pazocal{D}\!\left(L^2(\mathbb R^n)\right)$ is defined as $\chi_\rho(\mathbf r)\coloneqq\Tr[ \rho  \hat{D}_{\mathbf{r}} ]$, where for all $\mathbf{r}\in \mathbb{R}^{2n}$ the displacement operator $ \hat{D}_{\mathbf{r}}$ is given by $\hat{D}_{\mathbf{r}}\coloneqq e^{-i {\mathbf{r}}^{\intercal}\Omega_n \mathbf{\hat{R}}}$. Any state $\rho$ can be written in terms of its characteristic function as~\cite{BUCCO}
\bb\label{inverse_fourier_displacement}
\rho=\int_{\mathbb{R}^{2n}}\frac{\mathrm{d}^{2n}\mathbf{r}}{(2\pi)^n}\chi_\rho(\mathbf r) \hat{D}_{\mathbf{r}}^\dagger\,,
\ee
and hence quantum states and characteristic functions are in one-to-one correspondence. 

The Wigner function \( W_\rho: \mathbb{R}^{2n} \to \mathbb{R} \) of an \( n \)-mode state \( \rho \) is defined as the inverse Fourier transform \( \mathcal{F}^{-1} \) 
of the characteristic function \( \chi_\rho \), evaluated at \( \Omega_n \mathbf{r} \)
\begin{equation}
W_\rho(\mathbf{r}) = \mathcal{F}^{-1} \chi_\rho(\Omega_n \mathbf{r}) = \frac{1}{(2\pi)^{2n}} \int_{\mathbf{r}' \in \mathbb{R}^{2n}} \mathrm{d}^{2n}\mathbf{r}' \, \chi_\rho(\mathbf{r}') e^{i {\mathbf{r}'}^{\intercal} \Omega_n \mathbf{r}}.
\end{equation}
Consequently, the characteristic function \( \chi_\rho \) can be expressed as the Fourier transform 
\begin{equation}
\chi_\rho(\mathbf{r}) = \mathcal{F} W_\rho (\Omega_n \mathbf{r}) = \int_{\mathbf{r}' \in \mathbb{R}^{2n}} \mathrm{d}^{2n}\mathbf{r}' \, W_\rho(\mathbf{r}') e^{-i {\mathbf{r}}^{\intercal} \Omega_n \mathbf{r}'},
\end{equation}
evaluated at \( \Omega_n \mathbf{r} \), of the Wigner function \( W_\rho \).
Moreover, the Husimi function $Q_\rho(\textbf{r}): \mathbb{R}^{2n}\to \mathbb{R}$ of an $n$-mode state $\rho$ is defined as
\bb
    Q_\rho(\textbf{r})\coloneqq \frac{1}{(2\pi)^n}\bra{\textbf{r}}\rho\ket{\textbf{r}}\,,
\ee
where $\ket{\textbf{r}}\coloneqq \hat{D}_\textbf{r}\ket{0}$ is a coherent state vector. 
It turns out that the Fourier transform of the Husimi function, evaluated at \( \Omega_n \mathbf{r} \), is related to the characteristic function as~\cite{BUCCO}
\bb\label{relation_Husimi_charact}
    \int_{\textbf{r}'\in\mathbb{R}^{2n}}\mathrm{d}^{2n}\textbf{r}' \,Q_\rho(\textbf{r}') e^{-i {\mathbf{r}}^{\intercal}\Omega_n \mathbf{r}'}=e^{-\frac{1}{4}\textbf{r}^\intercal \textbf{r}}\chi_\rho(\textbf{r})\,.
\ee
The Husimi function $Q_\rho(\textbf{r})$ is a useful quantity since it is the probability distribution of the outcome $\textbf{r}\in\mathbb{R}^{2n}$ of a (experimentally feasible) measurement --- known as \emph{heterodyne measurement} --- performed on the state $\rho$~\cite{BUCCO}.
The first moment of a quantum state $\rho$ is defined as $\mathbf{m}(\rho)\coloneqq (m_1(\rho),\dots,m_n(\rho))^{\intercal}$, where
\bb
    m_k(\rho)\coloneqq\Tr\!\left[\hat{R}_k\,\rho\right]\,,
\ee
for each $k\in [n]$, or in its vectorial notation as $\mathbf{m}(\rho)=\Tr\!\left[\mathbf{\hat{R}}\,\rho\right]$.
Additionally, the covariance matrix of $\rho$ is defined by the matrix $V\!(\rho)$ with elements
\bb
	[V\!(\rho)]_{k,l}\coloneqq \Tr\!\left[\left\{{\hat{R}_k-m_k(\rho)\idop,\hat{R}_l-m_l(\rho)\idop}\right\}\rho\right]= \Tr\!\left[\left\{\hat{R}_k,\hat{R}_l\right\}\rho\right]-2m_k(\rho)m_l(\rho) \, ,
\ee
for each $k,l\in[2n]$, where $\{\hat{A},\hat{B}\}\coloneqq \hat{A}\hat{B}+\hat{B}\hat{A}$ is the anti-commutator. In its vectorial notation this reads as
\bb
	V\!(\rho)&=\Tr\!\left[\left\{\mathbf{(\hat{R}-m(\rho)),(\hat{R}-m(\rho))}^{\intercal}\right\}\rho\right]=  \Tr\!\left[\left\{\mathbf{\hat{R}},\mathbf{\hat{R}}^{\intercal}\right\}\rho\right]-2\textbf{m}(\rho)\textbf{m}(\rho)^\intercal \, .
\ee   
Any covariance matrix $V\!(\rho)$ satisfies the inequality
\bb
V\!(\rho)+i\Omega_n\ge0\,,
\ee
known as \emph{uncertainty relation}. As a consequence, since $\Omega_n$ is skew-symmetric, any covariance matrix $V\!(\rho)$ is positive semi-definite on $\mathbb{R}^{2n}$. Conversely, for any symmetric $W\in\mathbb{R}^{2n,2n}$ such that $W+i\Omega_n\ge0$ there exists an $n$-mode (Gaussian) state $\rho$ with covariance matrix $V\!(\rho)=W$~\cite{BUCCO}.

\begin{Def}[(Gaussian state)]\label{def_gauss_sm}
An $n$-mode state $\rho$ is said to be a Gaussian state if it can be written as a Gibbs state of a quadratic Hamiltonian $\hat{H}$ in the quadrature operators $\{\hat{R}_i\}_{i\in[2n]}$, that is, 
\bb
    \hat{H}\coloneqq \frac{1}{2}(\mathbf{\hat{R}}-\mathbf{m})^{\intercal}H(\mathbf{\hat{R}}-\mathbf{m})
\ee
for some symmetric positive-definite matrix $H\in \mathbb{R}^{2n,2n}$ and some vector $\mathbf{m}\in\mathbb{R}^{2n}$. The Gibbs states associated with the Hamiltonian $\hat{H}$ are given by 
\begin{equation}
\rho= \left(\frac{e^{-\beta \hat{H}}}{\Tr[e^{-\beta \hat{H}}]}\right)_{\beta\in[0,\infty]}\,,
\end{equation}
where the parameter $\beta$ is called the `inverse temperature'.
\end{Def}
\begin{remark}
Definition~\ref{def_gauss_sm} includes also the pathological cases where both $\beta$ and certain terms of $H$ diverge (e.g.,~this is the case for tensor products between pure Gaussian states and mixed Gaussian states). To formalise this mathematically, one can define the set of Gaussian states as the closure, with respect to the trace norm, of the set of Gibbs states of quadratic Hamiltonians~\cite{G-resource-theories}.
\end{remark}

The characteristic function of a Gaussian state $\rho$ is the Fourier transform of a Gaussian probability distribution, evaluated at $\Omega_n \mathbf{r}$, which can be written in terms of $\mathbf{m}(\rho) $ and $V\!(\rho)$ as~\cite{BUCCO}
\bb\label{charact_gaussian}
\chi_{\rho}(\mathbf{r})=\exp\!\left( -\frac{1}{4}(\Omega_n \mathbf{r})^{\intercal}V\!(\rho)\Omega_n \mathbf{r}+i(\Omega_n \mathbf{r})^{\intercal}\mathbf{m}(\rho) \right)\,.
\ee
Consequently, the Wigner function of a Gaussian state can be expressed as the following Gaussian probability distribution: 
\bb
    W_\rho(\textbf{r})&=\frac{e^{- (\textbf{r}-\textbf{m}(\rho))^\intercal [V(\rho) ]^{-1}  (\textbf{r}-\textbf{m}(\rho))}}{\pi^{n}\sqrt{\det [V(\rho) ]}}\,.
\ee
Note that a Gaussian probability distribution with first moment $\textbf{m}$ and covariance matrix $V$ is defined as
\bb
    \NN[\textbf{m},V](\textbf{r})&\coloneqq\frac{e^{-\frac12 (\textbf{r}-\textbf{m})^\intercal V^{-1}  (\textbf{r}-\textbf{m})}}{(2\pi)^{n}\sqrt{\det V }}\,.
\ee
Hence, using such notation, the Wigner function of a Gaussian state $\rho$ is a Gaussian probability distribution with first moment $\textbf{m}(\rho)$ and covariance matrix $\frac{V(\rho)}{2}$, that is,
\bb
    W_\rho(\textbf{r})=\NN\!\left[\textbf{m}(\rho),\frac{V(\rho)}{2}\right]\!(\textbf{r})\,.
\ee
Hence,~\eqref{relation_Husimi_charact} and~\eqref{charact_gaussian} imply that the Husimi function of a Gaussian state $\rho$ is a Gaussian probability distribution with first moment $\textbf{m}(\rho)$ and covariance matrix $\frac{V(\rho)+\mathbb{1}}{2}$, that is,
\bb\label{husimi_gaussian_state}
    Q_\rho(\textbf{r})=\NN\!\left[\textbf{m}(\rho),\frac{V(\rho)+\mathbb{1}}{2}\right]\!(\textbf{r})\,.
\ee

Since any quantum state is uniquely identified by its characteristic function, it follows from~\eqref{charact_gaussian} that any Gaussian state is uniquely identified by its first moment and covariance matrix. An example of Gaussian state is the single-mode thermal state 
\bb
    \tau_{\nu}\coloneqq \frac{1}{\nu+1}\sum_{n=0}^\infty \left(\frac{\nu}{\nu+1}\right)^{n}\ketbra{n}\,,
    \label{eq:termal}
\ee
where $\nu\ge0$. It holds that $\Tr[a^\dagger a\,\tau_{\nu} ]=\nu$, thus $\nu$ is the mean photon number of $\tau_{\nu}$. The first moment and the covariance matrix of $\tau_\nu$ satisfy
\bb\label{moments_thermal}
\mathbf{m}(\tau_{\nu})&=(0,0)^{\intercal}\,,\\
V(\tau_{\nu})&=(2\nu+1)\mathbb{1}_2 \,.
\ee
It is worth noting that the thermal state with $\nu=0$ is the vacuum state vector, $\tau_0=\ketbra{0}$. Thermal states are important since they maximise the von Neumann entropy among all states with a fixed mean photon number, as established by Lemma~\ref{maxthermstate}~\cite{max_entropy_therm}. 
\begin{lemma}[(Extremality of thermal states)]\label{maxthermstate}
	For all $\nu>0$, the maximum von Neumann entropy among all $n$-mode states with a given mean photon number $\nu$ is achieved by a thermal state $\tau_{\nu/n}^{\otimes n}$, i.e. 
	\bb
		&\max\left\{S(\rho):\,\rho\in\pazocal{D}(L^2(\mathbb{R}^n)) \text{, }\Tr\left[\rho\sum_{i=1}^n a_i^\dagger a_i\right]\le \nu\right\} = S\left(\tau_{\nu/n}^{\otimes n}\right)=ng\left(\frac{\nu}{n}\right)\,,
	\ee
	where 
	\begin{equation}\label{bosonicent}
	g(\nu)\coloneqq (\nu+1)\log_2(\nu+1) - \nu\log_2 \nu 
	\end{equation}
	is a monotone increasing function called the bosonic entropy.
\end{lemma}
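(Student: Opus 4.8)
The plan is to invoke the quantum maximum-entropy (Gibbs variational) principle, recognising the thermal state as the Gibbs state of the total photon number operator. First I would observe that the single-mode thermal state $\tau_\mu$ in~\eqref{eq:termal} is exactly the Gibbs state of the photon number operator $a^\dagger a$ at the inverse temperature $\beta$ fixed by $e^{-\beta}=\tfrac{\mu}{\mu+1}$, since then the weight of $\ketbra{m}$ is $(1-e^{-\beta})e^{-\beta m}=\tfrac{1}{\mu+1}(\tfrac{\mu}{\mu+1})^m$. Because $\hat{N}_n=\sum_{i=1}^n a_i^\dagger a_i$ is a sum of identical commuting single-mode terms, its Gibbs state factorises as a product of equal single-mode thermal states, so $\tau\coloneqq\tau_{\nu/n}^{\otimes n}=e^{-\beta \hat{N}_n}/Z$ with $Z=\Tr[e^{-\beta \hat{N}_n}]=(1-e^{-\beta})^{-n}<\infty$ and $\beta=\ln(1+n/\nu)>0$. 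By~\eqref{moments_thermal} its mean photon number is $\Tr[\hat{N}_n\,\tau]=n\cdot\tfrac{\nu}{n}=\nu$.

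The core of the argument is the non-negativity of the quantum relative entropy. Since $\tau$ has full support on $L^2(\mathbb{R}^n)$, for every $\rho\in\pazocal{D}(L^2(\mathbb{R}^n))$ we have $D(\rho\|\tau)=-S(\rho)-\Tr[\rho\log_2\tau]\ge 0$, hence $S(\rho)\le -\Tr[\rho\log_2\tau]$. Writing $\log_2\tau=-\tfrac{\beta}{\ln 2}\hat{N}_n-\tfrac{\ln Z}{\ln 2}\idop$ turns this into $S(\rho)\le \tfrac{\beta}{\ln 2}\Tr[\rho\hat{N}_n]+\tfrac{\ln Z}{\ln 2}$. Because $\beta>0$ and the constraint gives $\Tr[\rho\hat{N}_n]\le \nu=\Tr[\tau\hat{N}_n]$, the right-hand side is maximised at $\rho=\tau$, so $S(\rho)\le \tfrac{\beta}{\ln 2}\nu+\tfrac{\ln Z}{\ln 2}=S(\tau)$. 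This simultaneously shows the maximum is attained at $\tau$ and handles the inequality form of the constraint, since with $\beta>0$ any unused photon-number budget cannot raise the entropy.

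Finally I would evaluate $S(\tau)$ explicitly. By additivity of the von Neumann entropy over tensor factors, $S(\tau_{\nu/n}^{\otimes n})=n\,S(\tau_{\nu/n})$, and a direct computation on the geometric Fock-diagonal distribution $p_m=\tfrac{1}{\mu+1}(\tfrac{\mu}{\mu+1})^m$ gives $S(\tau_\mu)=-\sum_m p_m\log_2 p_m=(\mu+1)\log_2(\mu+1)-\mu\log_2\mu=g(\mu)$, using $\sum_m p_m=1$ and $\sum_m m\,p_m=\mu$. Substituting $\mu=\nu/n$ yields $S(\tau)=n\,g(\nu/n)$, as claimed. The asserted monotonicity of $g$ follows from $g'(\nu)=\log_2\tfrac{\nu+1}{\nu}>0$.

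The main obstacle is technical rather than conceptual: in infinite dimensions one must ensure the relative-entropy manipulation is legitimate, i.e.\ that $\Tr[\rho\log_2\tau]$ is finite and that $D(\rho\|\tau)\ge 0$ holds without support issues. The first is guaranteed because $\log_2\tau$ is affine in $\hat{N}_n$ and $\Tr[\rho\hat{N}_n]\le\nu<\infty$; the second because $\tau$ is full-rank, so the Klein-inequality form of $D(\rho\|\tau)\ge 0$ applies to every $\rho$. A minor point worth stating carefully is the factorisation of the Gibbs state of $\hat{N}_n$ into equal-mean single-mode thermal states, which is what pins the optimal energy distribution to $\nu/n$ per mode.
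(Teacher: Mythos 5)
Your proof is correct. The paper does not prove this lemma itself --- it is stated as a known result with a citation to the literature --- and your argument is precisely the standard one found there: identify $\tau_{\nu/n}^{\otimes n}$ as the Gibbs state of $\hat{N}_n$, use non-negativity of the relative entropy $D(\rho\|\tau)\ge 0$ together with $\beta>0$ to handle the inequality constraint, and evaluate $S(\tau_\mu)=g(\mu)$ on the geometric Fock distribution. Your closing remark on the infinite-dimensional subtlety is the right one to flag; the clean way to finish it is to note that the identity $S(\rho)+D(\rho\|\tau)=-\Tr[\rho\log_2\tau]$ holds with both sides in $[0,+\infty]$, so finiteness of the right-hand side (which you established from $\Tr[\rho\hat N_n]\le\nu<\infty$) forces $S(\rho)<\infty$ as well.
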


In the setting of continuous variable systems, symplectic matrices play a crucial role. Recall that a matrix $S\in\mathbb{R}^{2n,2n}$ is \emph{symplectic} matrix if and only if $S\Omega_nS^\intercal=\Omega_n$ and that the group of symplectic matrices is denoted by $\mathrm{Sp}(2n)$. Fixed a symplectic matrix $S\in\mathrm{Sp}(2n)$, one can define a suitable $n$-mode unitary $U_S$ --- dubbed \emph{symplectic unitary} or the \emph{metaplectic representation} of $S$ --- such 
that, for each $k\in[2n]$ it holds 
\begin{equation}
U_S^\dagger \hat{R}_k U_S=\sum^{2n}_{k,l}S_{k,l}\hat{R}_l.
\end{equation}
In the vectorial notation, we can write this as 
\bb\label{def_U_s_quadrature}
    U_S^\dagger \mathbf{\hat{R}}U_S=S\mathbf{\hat{R}}\,.
\ee
More explicitly, the symplecitic unitary $U_S$ is defined in terms of the symplectic matrix $S$ as follows~\cite{BUCCO}. Any symplectic matrix $S$ can be written as $S=S_1S_2$, where $S_1\coloneqq e^{\Omega_n H_1}$ and $S_2\coloneqq e^{\Omega_n H_2}$ for some symmetric matrices $H_1, H_2$~\cite{BUCCO}. Then, $U_S$ is defined as $U_S\coloneqq U_{S_1}U_{S_2}$, where 
\bb
    U_{S_1}&\coloneqq e^{- \frac{i}{2}\mathbf{\hat{R}}^{\intercal}H_1\mathbf{\hat{R}}}\,,\\
    U_{S_2}&\coloneqq e^{- \frac{i}{2}\mathbf{\hat{R}}^{\intercal}H_2\mathbf{\hat{R}}}\,.
\ee
In particular,~\eqref{def_U_s_quadrature} implies that
\bb
    &\mathbf{m}(U_S\rho U_S^\dagger)=S\mathbf{m}(\rho)\,,\\
    &V(U_S\rho U_S^\dagger)=SV\!(\rho)S^\intercal\, .
\ee
Given two symplectic matrices $S_1$ and $S_2$, it holds that $U_{S_1}U_{S_2}=U_{S_1S_2}$. In addition, the displacement operator $\hat{D}_\mathbf{r}$ transforms the quadrature vector as 
\begin{align}
    \label{eq:displ}\hat{D}_\mathbf{r}\mathbf{\hat{R}}\hat{D}_\mathbf{r}^\dagger=\mathbf{\hat{R}}+\mathbf{r}\mathbb{\hat{1}},
\end{align} which implies that
\bb
    &\mathbf{m}\!\left(\hat{D}_\mathbf{r}\rho \hat{D}_\mathbf{r}^\dagger\right)=\mathbf{m}(\rho)+\mathbf{r}\,,\\
    &V\!\left(\hat{D}_\mathbf{r}\rho \hat{D}_\mathbf{r}^\dagger\right)=V\!(\rho)\,.
\label{eq:bille}
\ee
Notably, any covariance matrix $V\!(\rho)$ of an $n$-mode state $\rho$ satisfies the following decomposition, known as the Williamson decomposition~\cite{BUCCO}: there exists a symplectic matrix $S\in\mathrm{Sp}(2n)$ and $n$ real numbers $d_1,d_2,\ldots , d_n\ge1$, called the symplectic eigenvalues of $V\!(\rho)$, such that
\bb
    V\!(\rho)=SDS^\intercal\,,
    \label{eq:will}
\ee
where 
\bb
    D=\bigoplus_{j=1}^n \left(d_j\!\left(\begin{matrix}1&0\\0&1\end{matrix}\right)\right) =\operatorname{diag}\left(d_1,d_1,d_2,d_2,\ldots,d_n,d_n\right)\,.
\ee
In addition, if $\rho$ is Gaussian, it is possible to show that the above Williamson decomposition of the covariance matrix leads to the  decomposition 
\bb
    \rho= \hat{D}_{\mathbf{m}(\rho)}U_S\left(\tau_{\nu_1}\otimes\tau_{\nu_2}\otimes\ldots\tau_{\nu_n}\right)U_S^\dagger \hat{D}_{\mathbf{m}(\rho)}^\dagger\,
    \label{eq:GaussianState}
\ee
for the state $\rho$, 
where $\{\tau_{\nu_k}\}^n_{k=1}$ are thermal states defined in Eq.~\eqref{eq:termal}, and the mean photon numbers $\nu_1,\nu_2,\ldots,\nu_n$ are defined in terms of the symplectic eigenvalues of $V\!(\rho)$ via the relation $\nu_i\coloneqq \frac{d_i-1}{2}$ for all $i\in[n]$. In other words, any Gaussian state is unitarily equivalent --- through displacement and symplectic unitaries --- to a multi-mode thermal state. In particular, any pure Gaussian state vector $\ket{\phi}$ can be written as
\bb
    \ket{\phi}=\hat{D}_{\mathbf{r}}U_S \ket{0}^{\otimes n}\,.
\ee
Moreover, Gaussian unitaries can be defined as follows.
\begin{Def}[(Gaussian unitary)]
    An $n$-mode unitary is said to be Gaussian if it is the composition of unitaries generated by quadratic Hamiltionians. 
    A Hamiltonian $\hat{H}$ is said to be quadratic if it can be written as $\hat{H}\coloneqq \frac{1}{2}(\mathbf{\hat{R}}-\mathbf{m})^{\intercal}H(\mathbf{\hat{R}}-\mathbf{m})$, where $\mathbf{r}\in\mathbb R^{2n}$ and $H\in\mathbb{R}^{2n,2n}$ is a symmetric matrix. 
\end{Def}
In particular, for any $n$-mode Gaussian unitary $U$ there exists a symplectic matrix $S\in\mathrm{Sp}(2n)$ and a vector $\mathbf{r}\in\mathbb R^{2n}$ such that $U=\hat{D}_{\mathbf{r}}U_S$, where $U_S$ is the symplectic unitary associated with $S$. Conversely, for any $S\in\mathrm{Sp}(2n)$ and $\mathbf{r}\in\mathbb R^{2n}$, the unitary $\hat{D}_{\mathbf{r}}U_S$ is a Gaussian unitary. In other words, Gaussian unitaries are the composition of displacement and symplectic unitaries.

Notably, any symplectic matrix $S\in\mathrm{Sp}(2n)$ can be written in the so-called Euler (or Bloch-Messiah) decomposition as follows: there exist symplectic orthogonal matrices $O_1,O_2\in\mathrm{O}(2n)\cap \mathrm{Sp}(2n)$ and real numbers $z_1,z_1,\ldots,z_n\ge 1$ such that 
\bb\label{Euler_dec}
    S=O_1ZO_2
\ee
with 
\bb\label{Euler_dec2}
    Z\coloneqq \bigoplus_{j=1}^n\left(\begin{matrix}z_j&0\\0&z_j^{-1}\end{matrix}\right)\,.
\ee
Symplectic unitaries associated with symplectic matrices of the form of $Z$ are said to be \emph{squeezing} unitaries. Moreover, symplectic unitaries associated with symplectic orthogonal matrices are said to be \emph{passive}. Importantly, passive unitaries preserve the energy operator, meaning that given $O\in\mathrm{O}(2n)\cap \mathrm{Sp}(2n)$ then 
\bb
    U_O^\dagger \hat{E}_n U_O=\hat{E}_n\,.
\ee
Moreover, note that the Euler decomposition in~\eqref{Euler_dec} implies that any Gaussian unitary can be written as a composition of passive unitaries and squeezing unitary, i.e., $U_{S}=U_{O_1}U_ZU_{O_2}$.

Finally, let us briefly mention two measurements that are 
presumably experimentally most feasible to implement in
common quantum optics laboratories: \emph{heterodyne} and \emph{homodyne} measurements~\cite{BUCCO}. Heterodyne measurement is a POVM whose elements are given by all the coherent states, suitably normalised. Specifically, these POVM elements are $\{\frac{1}{(2\pi)^n}\mathrm{d}^{2n}\textbf{r}\ketbra{\textbf{r}}\}_{r\in\mathbb{R}^{2n}}$, which sum up to the identity thanks to the over-complete relation of coherent states~\cite{BUCCO} according to
\bb
    \frac{1}{(2\pi)^n}\int_{\mathbb{R}^{2n}}\mathrm{d}^{2n}\textbf{r}\ketbra{\textbf{r}}=\mathbb{1}\,.
\ee
Moreover, a homodyne measurement entails measuring a quadrature observable, which is a component of either the quadrature operator vector $\hat{\textbf{R}}$ or, more generally, a rotated version $U_S^\dagger \hat{\textbf{R}}U_S=S\hat{\textbf{R}}$, where $U_S$ represents the symplectic unitary associated with the symplectic matrix $S$.

\subsection{Preliminaries on quantum learning theory}
\label{sec:QLTprel}

In this section, we introduce the concept of \emph{quantum state tomography}, which forms the basis of our investigation. We start by formulating the problem of quantum state tomography.
\begin{problem}[(Quantum state tomography)] 
\label{def:PROBtom}
    Let $n \in \N$ be the number of modes/qudits. Let $\pazocal{S} \subseteq \pazocal{D}(\HH^{\otimes n})$ be a subset of the set of quantum states $\pazocal{D}(\HH^{\otimes n})$. Consider $\varepsilon,\delta \in (0,1)$ and $N\in \N$.  
    Let $\rho \in \pazocal{S}$ be an unknown quantum state. Given access to $N$ copies of $\rho$, the goal is to provide a classical description of a quantum state $\tilde{\rho}$ such that
    \bb
        \Pr\left(\frac{1}{2}\|\tilde{\rho} - \rho \|_1 \le \varepsilon\right)\ge 1-\delta\,.
    \ee
    That is, with a probability $\ge 1-\delta$, the trace distance between $\tilde{\rho}$ and $\rho$ is at most $\varepsilon$.    Here, $\varepsilon$ is called the accuracy, while $\delta$ is called the failure probability.
\end{problem}
\begin{Def}[(Sample, time, and memory complexity of a tomography algorithm)]
  A quantum state tomography algorithm is an algorithm that solves Problem~\ref{def:PROBtom}. It is characterised by the sample complexity, i.e., the number of copies of the unknown state needed by the algorithm, represented by $N$ in Problem~\ref{def:PROBtom}. The time complexity of the tomography algorithm is the amount of classical and quantum computation time required to execute the algorithm. The memory complexity quantifies the amount of classical memory required by the algorithm. 
\end{Def}
We say that a tomography algorithm is \emph{efficient} if its sample, time, and memory complexity scale polynomially in the number of modes/qudits $n$. 
It is worth noting that the time complexity of an algorithm always upper bound its memory complexity, as well as its sample complexity. In Fig.~\ref{fig_tomography} we provide a pictorial representation of a quantum state tomography algorithm.

In the literature, several tomography algorithms~\cite{odonnell2015efficient,Haah_2017,kueng2014low} are tailored to finite-dimensional rank-$r$ quantum states $\rho \in \pazocal{D}(\mathbb{C}^{D})$, where $D$ is the dimension of the Hilbert space. In the case of $n$-qudits systems with local dimension $d$, then the dimension of the Hilbert space is $D=d^n$.  If no assumptions on the states are known, then it is known that any tomography algorithm needs at least $\tildeOmega\,(rD/\varepsilon^2)$ copies of the unknown state to solve the tomography problem~\cite{Haah_2017}, where $\varepsilon$ is the accuracy parameter. In this case, the tilde above the Big-O asymptotic functions means that we neglect logarithmic factors. There are algorithms that match these lower bounds, i.e., they require $\tildeO\,(rD/\varepsilon^2)$ copies of the unknown state to solve the tomography problem \cite{Haah_2017,odonnell2015efficient}. Such algorithms that achieve optimal performances use entangled measurements between multiple copies of the unknown state. However, there are also algorithms, which might be more experimentally feasible, that use \emph{unentangled} or \emph{single-copies} measurements between the queried copies of the unknown state; in this case, a lower bound $\tildeOmega\,(r^2D/\varepsilon^2)$ is known to hold~\cite{chen2023does}. Also in this case, there are known algorithms which achieve optimal performances \cite{guta2018fast}. 
In the case of an $n$-qudits system, all these algorithms necessarily scale exponentially with the number of qudits $n$. However, if further restrictions on the class of states are provided, then one might achieve a sample and time complexity scaling which is polynomial in $n$ (or, equivalently, logarithmic in $D$). In the finite-dimensional case, only a few classes of states are currently known to be efficiently learnable, such as matrix product states~\cite{Cramer_2010}, finitely-correlated states~\cite{fanizza2023learning}, states prepared by shallow quantum circuits~\cite{huang2024learning}, quantum phase states~\cite{arunachalam2023optimal}, stabiliser states~\cite{montanaro2017learning}, states prepared by Clifford circuits `doped' with at most $O(\log(n))$ T-gates~\cite{grewal2023efficient,leone2023learning,hangleiter2024bell}, fermionic Gaussian states~\cite{aaronson2023efficient}, and states prepared by fermionic Gaussian circuits `doped' with at most $O(\log(n))$ non-Gaussian gates~\cite{mele2024efficient}. For a detailed literature review of quantum learning theory, see Ref.\ \cite{anshu2023survey}.

\begin{figure*}[t]
    \centering
    \includegraphics[width=0.55\textwidth]{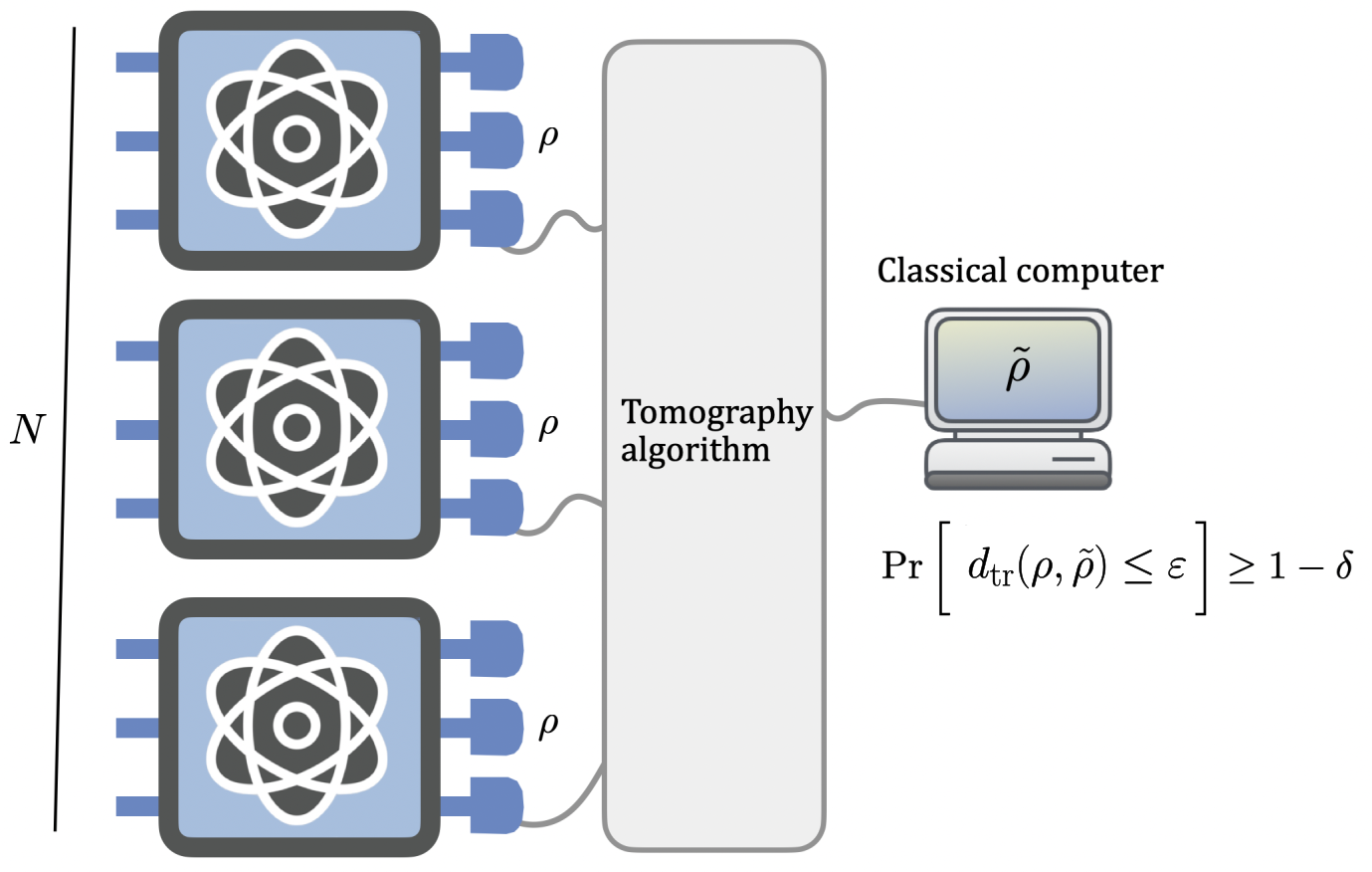}
    \caption{Pictorial representation of a quantum state tomography algorithm. Given access to $N$ copies of an unknown state $\rho$, the goal of a tomography algorithm is to construct a classical description of a state $\tilde{\rho}$ that serves as a `good approximation' of the true unknown state $\rho$. Mathematically, the error incurred in such an approximation is measured by the trace distance between $\rho$ and $\tilde{\rho}$. This is the most meaningful way to measure the error incurred in a tomography algorithm, due to the operational meaning of the trace distance given by the Holevo--Helstrom theorem~\cite{HELSTROM, Holevo1976}. Additionally, since quantum measurements inherently yield probabilistic outcomes, the output $\tilde{\rho}$ is probabilistic rather than deterministic. We thus require that the probability that `the trace distance is small' is high. Mathematically, this translates to $\mathrm{Pr}\!\left[\frac{1}{2}\|\tilde{\rho}-\rho\|_1\le \varepsilon\right]\ge 1-\delta$, where $\varepsilon$ represents the trace distance error and $\delta$ denotes the failure probability. Fixed $\varepsilon$ and $\delta$, the minimum number of copies $N$ required to achieve quantum state tomography with trace distance error $\varepsilon$ and failure probability $\delta$ is called the sample complexity. }
    \label{fig_tomography}
\end{figure*}

Given a tomography algorithm, we define a shorthand for denoting its sample and time complexity that will be useful later. 
\begin{Def}[(Sample and time complexity)]
\label{def:performance}
Let $\rho \in \pazocal{D}(\mathbb{C}^{D})$ be a $D$-dimensional quantum state where $D\in \N$. Let $\varepsilon,\delta\in (0,1)$. Let $\pazocal{A}$ be a tomography algorithm.
We denote as $N_{\mathrm{tom}}(\pazocal{A}, D,\varepsilon,\delta)$ and $T_{\mathrm{tom}}(\pazocal{A},D,\varepsilon,\delta)$, respectively, the sample and time complexity of the algorithm $\pazocal{A}$ to solve the tomography problem of the state $\rho$ with accuracy at most $\varepsilon$ and a failure probability at most $\delta$. That is, given access to $N_{\mathrm{tom}}(\pazocal{A}, D,\varepsilon,\delta)$ copies of $\rho$ and using at most $T_{\mathrm{tom}}(\pazocal{A},D,\varepsilon,\delta)$ computational time, the algorithm $\pazocal{A}$ outputs, with probability $\ge 1-\delta$, a classical description of a state $\tilde{\rho}$ such that
\begin{align}
    \frac{1}{2}\|\tilde{\rho}-\rho\|_1\le \varepsilon.
\end{align}
\end{Def}

We now present a standard proof technique to provide lower bound on the sample complexity of a tomography algorithm (see, e.g.,~Ref.~\cite{haah_optimal_2021}), which relies on the use of Fano's inequality~\cite{Cover2006} and Holevo's bound~\cite{Holevo1973BoundsFT}.

\begin{lemma}[(Lower bound on the sample complexity)]\label{lower_bound_sample_complexity}
Let $M \in \N$ and $\varepsilon > 0$. Consider a distance metric $d(\cdot, \cdot)$ between quantum states (e.g.,~trace distance). Define the discrete set $\pazocal{M}_{\varepsilon} \coloneqq \{\rho_1, \dots, \rho_M\}$ as a set of quantum states such that for every $i \neq j \in [M]$, $d(\rho_i, \rho_j) > 2\varepsilon$. Let $\pazocal{S}$ be a subset of quantum states such that $\pazocal{M}_{\varepsilon} \subseteq \pazocal{S}$. Any tomography algorithm that learns states within the metric $d(\cdot, \cdot)$ from the set $\pazocal{S}$ with accuracy $\le \varepsilon$ and failure probability $\le \delta$ must use a number of state copies $N$ satisfying:
\begin{align}
    \chi \ge (1-\delta)\log_2(M) - H_2(\delta),
\end{align}
where $\chi \coloneqq S\left(\frac{1}{M}\sum^{M}_{j=1} \rho^{\otimes N}_j\right) - \frac{1}{M}\sum^{M}_{j=1} S(\rho_j^{\otimes N})$ is the Holevo information, $ S(\cdot)$ denotes the von Neumann entropy, and $H_2(x) \coloneqq -x \log_2 x - (1-x)\log_2(1-x)$ is the binary entropy.
For instance, if the states in $\pazocal{M}_{\varepsilon}$ live in a $2^n$ dimensional Hilbert space, then this implies that the number of copies $N$ satisfies
\begin{align}
    N \ge \Omega\left(\frac{\log_2(M)}{n}\right)\,.
\end{align}
\end{lemma}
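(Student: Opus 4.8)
The plan is to reduce tomography to a classical decoding problem and then sandwich a mutual information between Fano's inequality (from below) and Holevo's bound (from above). First I would set up the reduction: draw an index $J$ uniformly at random from $[M]$, and feed the $N$-copy state $\rho_J^{\otimes N}$ to the hypothetical tomography algorithm $\pazocal{A}$, which by assumption outputs a classical description $\tilde{\rho}$ with $\Pr[d(\tilde{\rho},\rho_J)\le\varepsilon]\ge 1-\delta$. From $\tilde{\rho}$ I construct the nearest-neighbour decoder $\hat{J}\coloneqq\arg\min_{i\in[M]} d(\tilde{\rho},\rho_i)$, breaking ties arbitrarily. The key combinatorial observation is that the packing condition $d(\rho_i,\rho_j)>2\varepsilon$ together with the triangle inequality forces $J$ to be the \emph{unique} minimiser whenever $d(\tilde{\rho},\rho_J)\le\varepsilon$, so that $\hat{J}=J$ on the success event. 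Consequently the decoding error obeys $\Pr[\hat{J}\neq J]\le\delta$.

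Next I would apply Fano's inequality along the Markov chain $J\to\rho_J^{\otimes N}\to\tilde{\rho}\to\hat{J}$. Since $\Pr[\hat{J}\neq J]\le\delta$, Fano gives $H(J\mid\hat{J})\le H_2(\delta)+\delta\log_2(M-1)$, and using the monotonicity of this bound (valid in the regime $\delta<\tfrac12$ of interest) together with $H(J)=\log_2 M$ yields
\[
I(J;\hat{J})=H(J)-H(J\mid\hat{J})\ge (1-\delta)\log_2 M-H_2(\delta).
\]
On the other side, the entire algorithm --- including any adaptive or entangled measurement across the $N$ copies --- is a single quantum-to-classical channel acting on the state $\rho_J^{\otimes N}$. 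Hence, by the data-processing inequality and the Holevo bound applied to the ensemble $\{(1/M,\rho_J^{\otimes N})\}_{J\in[M]}$, we obtain $I(J;\hat{J})\le I(J;\tilde{\rho})\le\chi$, where $\chi$ is precisely the Holevo information in the statement. Chaining the two inequalities proves the main claim.

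For the corollary I would bound $\chi$ by the log-dimension: the Holevo information never exceeds the entropy of the average ensemble state, $\chi\le S\big(\tfrac{1}{M}\sum_j\rho_j^{\otimes N}\big)\le\log_2(\dim)$. If each $\rho_j$ lives in a $2^n$-dimensional Hilbert space, then $\rho_j^{\otimes N}$ lives in dimension $2^{nN}$, so $\chi\le nN$, and combining with the main inequality gives $nN\ge(1-\delta)\log_2 M-H_2(\delta)$, i.e.\ $N=\Omega(\log_2 M/n)$ for fixed $\delta$. I expect the only genuinely delicate point to be justifying that Holevo's bound covers the \emph{most general} tomography strategy; the resolution is to regard the whole procedure (measurements plus classical post-processing producing $\hat{J}$) as one POVM on $\rho_J^{\otimes N}$, so that no generality is lost even for collective or adaptive measurements, and the separation-plus-triangle-inequality argument guaranteeing correct decoding is the other step that must be stated carefully.
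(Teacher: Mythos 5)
Your proposal is correct and follows essentially the same route as the paper's proof: a packing-plus-triangle-inequality reduction to a decoding problem, Fano's inequality for the lower bound on mutual information, data processing plus the Holevo bound for the upper bound, and the log-dimension bound on $\chi$ for the corollary. The only cosmetic difference is that you phrase the decoder as a nearest-neighbour rule while the paper searches for the unique $\varepsilon$-close element of $\pazocal{M}_{\varepsilon}$, which is the same argument.
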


\begin{proof}  
Consider a communication protocol between Alice and Bob. First, define a codebook that Alice will use to encode classical information in quantum states: each number $i \in [n]$ is associated with the quantum state $\rho_i \in \pazocal{M}_{\varepsilon}$. Alice samples uniformly a number $x \in [M]$. Let us denote the random variable associated to $x$ with $X$. Then Alice prepares the state $\rho^{\otimes N}_x$ and sends it to Bob. Now Bob runs the tomography algorithms and learns the classical description of a matrix $\hat{\rho}$ such that $d(\rho_x, \hat{\rho}) \le \varepsilon$ with failure probability $\le \delta$.
In case the tomography protocol succeeds, this would imply that for all $j \in [M]$ such that $j \neq x$, we have $d(\rho_j, \hat{\rho}) > \varepsilon$, because
\begin{align}
    d(\rho_j, \hat{\rho}) \ge d(\rho_j, \rho_x) - d(\rho_x, \hat{\rho}) > 2\varepsilon - \varepsilon = \varepsilon\,.
\end{align}
Thus, in such a case, there exists only one element in the set $\pazocal{M}_{\varepsilon}$ that is $\varepsilon$-close to $\hat{\rho}$ and this has to be $\rho_x$. Hence, by going through all the $M$ states in $\pazocal{M}_{\varepsilon}$ and for each of them computing the trace distance with $\hat{\rho}$, Bob can find the number $x \in [M]$ corresponding to $\rho_x$ (with probability $\ge 1- \delta$). Thus, at the ends of this procedure, Bob selects a number $y \in [M]$ and with probability $\ge 1- \delta$, it will coincide with $x$, i.e., the message sent by Alice. Therefore, the probability that Bob decodes the wrong message is $\le \delta$.
Call $Y$ the random variable associated with $y$. By Fano's inequality (see Ref.~\cite{Cover2006}), we have 
\begin{align}
    I(X:Y) \ge (1-\delta)\log_2(M) - H_2(\delta),
\end{align}
where $I(X:Y)$ is the mutual information (see, e.g. Ref.~\cite{Cover2006} for the standard definition) and $H_2(\cdot)$ is the binary entropy. By the data processing inequality~\cite{Cover2006}, we have
\begin{align}
    I(X:Y) \le I(X:Z),
\end{align}
where $Z$ is the random variable associated with the POVM outcome associated with the tomography algorithm. By Holevo's theorem~\cite{Holevo1973BoundsFT} (see also, e.g.,~Wikipedia), we have
\begin{align}
    I(X:Z) \le S\left(\frac{1}{M}\sum^{M}_{j=1} \rho^{\otimes N}_j\right) - \frac{1}{M}\sum^{M}_{j=1} S(\rho_j^{\otimes N})=\chi.
\end{align}
Hence, we get
\begin{align}
    \chi \ge (1-\delta)\log_2(M) - H_2(\delta).
\end{align}
If the states in $\pazocal{M}_{\varepsilon}$ live in a $2^n$ dimensional Hilbert space, then we have
\begin{align}
    \chi \le S\left(\frac{1}{M}\sum^{M}_{j=1} \rho^{\otimes N}_j\right)\le N \log(2^n).
\end{align}
This implies that
\begin{align}
    N \ge \frac{1}{\log_2(2^n)}\left((1-\delta)\log_2(M) - H_2(\delta)\right) = \Omega\left(\frac{\log_2(M)}{n}\right).
\end{align}

\end{proof}

\subsection{Preliminaries on $\varepsilon$-nets}\label{sec:epsilonnet}
In this section, we introduce key concepts related to $\varepsilon$-nets~\cite{vershynin_2018} before delving into our novel results on the topic.
Let us begin by giving definitions relative to the notion of $\varepsilon$-covering net~\cite{vershynin_2018}.
\begin{definition}[($\varepsilon$-covering net/number)]
Let $(T, \|\cdot\|)$ be a normed space with distance induced by the norm $\|\cdot\|$, $K \subseteq T$ be a subset, and let $\varepsilon > 0$. We define:
\begin{itemize}
    \item \textbf{$\varepsilon$-covering net:} A subset $C \subseteq K$ is an $\varepsilon$-covering net of $K$ if, for every $x \in K$, there exists $x_0 \in C$ such that $\|x-x_0\| \leq \varepsilon$.
    \item \textbf{$\varepsilon$-covering number:} The covering number of $K$, denoted as $\pazocal{C}(K, \|\cdot\|, \varepsilon)$, is the smallest cardinality of the $\varepsilon$-covering nets of $K$.
    \item \textbf{Optimal $\varepsilon$-covering:} $C \subseteq K$ is an optimal $\varepsilon$-covering of $K$ if and only if $|C|=\pazocal{C}(K, \|\cdot\|, \varepsilon)$, where $|C|$ denotes the cardinality of $C$.
\end{itemize}
\end{definition}
We now move to define definitions related to the notion of $\varepsilon$-packing net~\cite{vershynin_2018}.
\begin{definition}[($\varepsilon$-packing net/number)]\label{def_pack}
Let $(T, \|\cdot\|)$ be a normed space with distance induced by the norm $\|\cdot\|$, $K \subseteq T$ be a subset, and let $\varepsilon > 0$. We define:
\begin{itemize}
    \item \textbf{$\varepsilon$-packing net:} A subset $P \subseteq K$ is an \emph{$\varepsilon$-packing net} of $K$ if $\|x-y\| > \varepsilon$ for every $x,y \in P$.
    \item \textbf{$\varepsilon$-packing number:} The packing number of $K$, denoted as $\pazocal{P}(K, \|\cdot\|, \varepsilon)$, is the largest cardinality of the $\varepsilon$-packing nets of $K$.
    \item \textbf{Optimal $\varepsilon$-packing:} $P\subseteq K$ is is an optimal $\varepsilon$-covering of $K$ if and only if $|P|=\pazocal{P}(K, \|\cdot\|, \varepsilon)$.
\end{itemize}
\end{definition}

A result in the theory of $\varepsilon$-net~\cite{vershynin_2018} connects the $\varepsilon$-covering number $\pazocal{C}(K, \|\cdot\|, \varepsilon)$ to the $\varepsilon$-packing number $\pazocal{P}(K, \|\cdot\|, \varepsilon)$ .
\begin{lemma}[(Ref.~\cite{vershynin_2018}, Lemma 4.2.8)
]\label{lemma1111}
Let $(T, \|\cdot\|)$ be a normed space, $K \subseteq T$ be a subset, and let $\varepsilon > 0$. Then, it holds that
\bb
\pazocal{P}(K, \|\cdot\|, \varepsilon)\ge \pazocal{C}(K, \|\cdot\|, \varepsilon)\ge \pazocal{P}(K, \|\cdot\|, 2\varepsilon)\,.
\label{eq2121} 
\ee
\end{lemma}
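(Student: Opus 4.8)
The plan is to establish the two inequalities in~\eqref{eq2121} separately, each by an elementary combinatorial argument built on the triangle inequality.

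For the left inequality $\pazocal{P}(K, \|\cdot\|, \varepsilon) \ge \pazocal{C}(K, \|\cdot\|, \varepsilon)$, I would show that a \emph{maximal} $\varepsilon$-packing net is automatically an $\varepsilon$-covering net. Concretely, let $P$ be an optimal $\varepsilon$-packing, so that $|P| = \pazocal{P}(K, \|\cdot\|, \varepsilon)$; since $P$ has maximal cardinality, no point of $K$ can be adjoined to it without destroying the packing property. This forces $P$ to cover $K$: if some $x \in K$ satisfied $\|x - p\| > \varepsilon$ for every $p \in P$, then $P \cup \{x\}$ would still be an $\varepsilon$-packing net of strictly larger cardinality, contradicting optimality. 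Hence every $x \in K$ lies within distance $\varepsilon$ of some element of $P$, so $P$ is an $\varepsilon$-covering net, and therefore $\pazocal{C}(K, \|\cdot\|, \varepsilon) \le |P| = \pazocal{P}(K, \|\cdot\|, \varepsilon)$.

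For the right inequality $\pazocal{C}(K, \|\cdot\|, \varepsilon) \ge \pazocal{P}(K, \|\cdot\|, 2\varepsilon)$, I would exhibit an injection from an optimal $2\varepsilon$-packing into an optimal $\varepsilon$-covering. Let $C$ be an optimal $\varepsilon$-covering with $|C| = \pazocal{C}(K, \|\cdot\|, \varepsilon)$, and let $P$ be an optimal $2\varepsilon$-packing with $|P| = \pazocal{P}(K, \|\cdot\|, 2\varepsilon)$. For each $p \in P$, the covering property furnishes a point $f(p) \in C$ with $\|p - f(p)\| \le \varepsilon$; fix one such choice. The map $f$ is injective, for if $f(p_1) = f(p_2)$ with $p_1 \neq p_2$, the triangle inequality gives $\|p_1 - p_2\| \le \|p_1 - f(p_1)\| + \|f(p_2) - p_2\| \le 2\varepsilon$, contradicting the packing condition $\|p_1 - p_2\| > 2\varepsilon$. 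Injectivity then yields $\pazocal{P}(K, \|\cdot\|, 2\varepsilon) = |P| \le |C| = \pazocal{C}(K, \|\cdot\|, \varepsilon)$.

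The only genuine subtlety, and the step I would flag as the main obstacle, is justifying the existence of a maximal packing in the first argument when the quantities involved may be infinite. If $\pazocal{P}(K, \|\cdot\|, \varepsilon)$ is finite this is immediate; in the general case one invokes Zorn's lemma to obtain a packing net maximal under inclusion, which is all the covering argument actually uses. I would also note that the interplay between the strict inequality $\|x - y\| > \varepsilon$ in the packing definition and the non-strict inequality $\|x - y\| \le \varepsilon$ in the covering definition is precisely what makes the factor-of-two estimate in the second argument go through cleanly.
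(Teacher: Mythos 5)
Your proof is correct; the paper does not prove this lemma itself but cites it directly from Vershynin's book (Lemma 4.2.8), and your two-part argument --- maximal packings are coverings, and an injection from a $2\varepsilon$-packing into an $\varepsilon$-covering via the triangle inequality --- is precisely the standard proof given there. Your remark on the strict-versus-non-strict inequalities and on the existence of a maximal packing (trivial when the packing number is finite, Zorn otherwise, and the inequality is vacuous when it is infinite) correctly disposes of the only subtleties.
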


\subsubsection{Our results on $\varepsilon$-nets}
In this section we will present our results concerning $\varepsilon$-nets. 
We will begin by proving a lemma that relates the covering number of a ($d-1$)-dimensional sphere with unit radius to the covering number of the surface of a $d$-dimensional sphere with unit radius, where $d\in \N_+$. 

\begin{lemma}[(Covering numbers)]\label{lemma3333}
Consider the normed vector space $(\mathbb{R}^{d}, \|\cdot\|_2)$, where $d \in \mathbb{N}_+$. Let $B_1(d)$ be the $d$-dimensional unit ball and $\partial B_1(d)$ its boundary, defined as
\begin{align}
     B_1(d)&\coloneqq \{x\in \mathbb{R}^{d}\,:\, \|x\|_2\le1  \}, \\
    \partial B_1(d)&\coloneqq \{x\in \mathbb{R}^{d}\,:\, \|x\|_2=1  \}
    .
\end{align}
For all $d\in\mathbb{N}_+$ and $\varepsilon>0$, the covering numbers of $B_1(d-1)$ and $\partial B_1(d)$ are related as
\begin{align}
\pazocal{C}(\partial B_1(d), \|\cdot\|_2, \varepsilon) \ge \pazocal{C}( B_1(d-1), \|\cdot\|_2, \varepsilon).
\end{align}
\begin{proof}
Consider an \textit{optimal} $\varepsilon$-covering net of $\partial B_1(d)$, denote with $C^{\varepsilon, \mathrm{opt}}_{\partial B_1(d)}$. Note that for any $y\in B_1(d-1)$, the $d$-dimensional vector $v(y)\coloneqq \left(y,\sqrt{1-\|y\|^2_2}\right)$ satisfies $v(y)\in \partial B_1(d)$.
Therefore, for any $y\in B_1(d-1)$ there exists $x\in C^{\varepsilon, \mathrm{opt}}_{\partial B_1(d)}$ such that
\begin{align}
    \|v(y)-(x_{1},\ldots, x_{d})\|_2\le \varepsilon.
\end{align}
Thus, we also have that 
\be
\|y-(x_{1},\ldots, x_{d-1})\|_2\le \sqrt{\|y-(x_{1},\ldots, x_{d-1})\|^2_2+(\sqrt{1-\|y\|^2_2} - x_d)^2}=\|v(y)-(x_{1},\ldots, x_{d})\|_2\le \varepsilon\,.
\ee
This implies that the set 
\bb
    C^{\varepsilon}_{B_1(d-1)} \coloneqq\{(x_{1},\ldots, x_{d-1})\in \mathbb{R}^{d-1}\,|\, x\in C^{\varepsilon, \mathrm{opt}}_{\partial B_1(d)}\}
    \label{eqdef:cb1}
\ee    
is an $\varepsilon$-covering net for $B_1(d-1)$.  Thus, we have
\be
\pazocal{C}(\partial B_1(d), \|\cdot\|_2, \varepsilon)=|C^{\varepsilon, \mathrm{opt}}_{\partial B_1(d)} | 
 \ge  |C^{\varepsilon}_{B_1(d-1)}|\ge \pazocal{C}( B_1(d-1), \|\cdot\|_2, \varepsilon)
\ee
where the first step follows from the definition of covering number, the second step follows from the fact that $C^{\varepsilon}_{B_1(d-1)}$ contains at most the same number of elements as $C^{\varepsilon, \mathrm{opt}}_{\partial B_1(d)}$, and the last inequality follows from the fact that the $\varepsilon$-covering number is a lower bound on the cardinality of every possible $\varepsilon$-covering net. This concludes the proof.
\end{proof}
\end{lemma}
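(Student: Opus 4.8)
The plan is to exploit the fact that the orthogonal projection dropping the last coordinate maps the unit sphere $\partial B_1(d)$ into the lower-dimensional unit ball $B_1(d-1)$ without increasing Euclidean distances, and to push an \emph{optimal} covering of the sphere down to a covering of the ball.

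First I would fix an optimal $\varepsilon$-covering net $C$ of $\partial B_1(d)$, so that $|C| = \pazocal{C}(\partial B_1(d), \|\cdot\|_2, \varepsilon)$. The crucial geometric observation is that any point $y \in B_1(d-1)$ admits a canonical lift onto the upper hemisphere, namely $v(y) \coloneqq \left(y, \sqrt{1 - \|y\|_2^2}\right)$, which satisfies $\|v(y)\|_2 = 1$ and hence $v(y) \in \partial B_1(d)$. Since $C$ covers the sphere, there exists $x \in C$ with $\|v(y) - x\|_2 \le \varepsilon$.

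Next I would project back. Writing $\pi(x) \coloneqq (x_1, \ldots, x_{d-1})$ for the map that discards the last coordinate, dropping one nonnegative squared term can only decrease the Euclidean norm, so $\|y - \pi(x)\|_2 \le \|v(y) - x\|_2 \le \varepsilon$. Moreover $\pi$ sends the sphere into the ball: if $\|x\|_2 = 1$ then $\|\pi(x)\|_2 = \sqrt{1 - x_d^2} \le 1$, so $\pi(x) \in B_1(d-1)$ genuinely. Therefore $\pi(C) \coloneqq \{\pi(x) : x \in C\}$ is a legitimate $\varepsilon$-covering net of $B_1(d-1)$ with $|\pi(C)| \le |C|$, and the definition of the covering number as the smallest net cardinality yields $\pazocal{C}(B_1(d-1), \|\cdot\|_2, \varepsilon) \le |C| = \pazocal{C}(\partial B_1(d), \|\cdot\|_2, \varepsilon)$, which is the claim.

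The only point demanding care --- and the place where a careless argument could go wrong --- is checking that the projected net points actually lie \emph{inside} $B_1(d-1)$, as required by the definition of a covering net (which insists $C \subseteq K$); this is guaranteed precisely because the preimages sit on the unit sphere, so their projections have norm $\sqrt{1 - x_d^2} \le 1$. Everything else reduces to the one-line contraction estimate for coordinate projection, so I do not expect any genuine obstacle beyond this bookkeeping.
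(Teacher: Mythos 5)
Your proposal is correct and follows essentially the same route as the paper: lift $y\in B_1(d-1)$ to $v(y)=(y,\sqrt{1-\|y\|_2^2})\in\partial B_1(d)$, cover it with an optimal net of the sphere, and project back using that dropping a coordinate does not increase the Euclidean distance. Your explicit check that the projected net points lie inside $B_1(d-1)$ is a small but welcome addition that the paper leaves implicit.
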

We now present a lemma which relates the covering number of the set of pure states density matrices with respect to the (matrix) $1$-norm and the covering number of the set of pure states vectors with respect to the (vector) $2$-norm. 

\begin{lemma}[(Covering numbers revisited)]\label{lemma2222}
Let $d\in \N$ and $\varepsilon\in [0,1)$.
We define the sets $\pazocal{S}_{\ket{\ast}\!\bra{\ast}}$ and $\pazocal{S}_{\ket{\ast}}$, as 
\begin{align}
    \pazocal{S}_{\ket{\ast}\!\bra{\ast}}&\coloneqq \{ \ketbra{v} \, : \, \ket{v}\in \mathbb{C}^{d} \text{ such that } |\braket{v|v}|=1 \},\\
    \pazocal{S}_{\ket{\ast}}&\coloneqq \{ \ket{v} \, : \, \ket{v}\in \mathbb{C}^{d} \text{ such that } |\braket{v|v}|=1 \}.
\end{align}
It holds the following relation between their covering numbers 
\be
\pazocal{C}(  \pazocal{S}_{\ket{\ast}\!\bra{\ast}}, \|\cdot\|_1, \varepsilon)\ge \frac{\varepsilon}{8\pi} \pazocal{C}(  \pazocal{S}_{\ket{\ast}}, \|\cdot\|_2, \varepsilon)\,.
\ee
\end{lemma}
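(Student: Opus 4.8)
The plan is to build an $\varepsilon$-covering net of the pure \emph{vectors} $\pazocal{S}_{\ket{\ast}}$ out of an optimal $\varepsilon$-covering net of the pure \emph{density matrices} $\pazocal{S}_{\ket{\ast}\!\bra{\ast}}$, by re-introducing the global phase that the quotient map $\ket{w}\mapsto\ketbra{w}$ forgets. Concretely, let $C=\{\ketbra{w_1},\dots,\ketbra{w_M}\}$ be an optimal $\varepsilon$-covering net of $\pazocal{S}_{\ket{\ast}\!\bra{\ast}}$ in trace norm, so that $M=\pazocal{C}(\pazocal{S}_{\ket{\ast}\!\bra{\ast}},\|\cdot\|_1,\varepsilon)$, and fix an arbitrary unit representative $\ket{w_j}$ for each $j$. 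The candidate net for the vectors will be the phased family $\{e^{i\theta_k}\ket{w_j}:j\in[M],\,k\in\{0,\dots,L-1\}\}$ with equally spaced phases $\theta_k=2\pi k/L$, for a suitable integer $L$ fixed below.

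First I would record the two elementary identities that link the two metrics through the overlap $c\coloneqq|\braket{v|w}|$, namely
\[
\tfrac12\|\ketbra{v}-\ketbra{w}\|_1=\sqrt{1-c^2},\qquad \min_{\theta}\big\|\,\ket{v}-e^{i\theta}\ket{w}\,\big\|_2=\sqrt{2(1-c)},
\]
the minimum being attained at $\theta=\arg\braket{w|v}$. Since $\sqrt{2(1-c)}=\tfrac{1}{\sqrt2}\,2\sqrt{1-c}\le \tfrac{1}{\sqrt2}\,2\sqrt{(1-c)(1+c)}=\tfrac{1}{\sqrt2}\|\ketbra{v}-\ketbra{w}\|_1$, any vector whose density matrix is $\varepsilon$-close in trace norm to $\ketbra{w}$ admits a phase making it $\varepsilon/\sqrt2$-close to $\ket{w}$ in Euclidean norm.

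Now the covering argument. Given any unit $\ket{v}$, pick $\ketbra{w_j}\in C$ with $\|\ketbra{v}-\ketbra{w_j}\|_1\le\varepsilon$; by the above there is a phase $\theta^\ast$ with $\|\ket{v}-e^{i\theta^\ast}\ket{w_j}\|_2\le\varepsilon/\sqrt2$. Taking the nearest grid phase $\theta_k$ (so $|\theta^\ast-\theta_k|\le\pi/L$) and using $\|e^{i\theta^\ast}\ket{w_j}-e^{i\theta_k}\ket{w_j}\|_2=|e^{i\theta^\ast}-e^{i\theta_k}|\le\pi/L$, the triangle inequality yields
\[
\big\|\,\ket{v}-e^{i\theta_k}\ket{w_j}\,\big\|_2\le \frac{\varepsilon}{\sqrt2}+\frac{\pi}{L}.
\]
Choosing any integer $L$ with $\pi/L\le(1-1/\sqrt2)\varepsilon$ makes the right-hand side $\le\varepsilon$, so the phased family is indeed an $\varepsilon$-covering net of $\pazocal{S}_{\ket{\ast}}$; since $(1-1/\sqrt2)^{-1}\pi<8\pi$, such an $L$ can be taken with $L\le 8\pi/\varepsilon$. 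Counting its at most $ML$ elements gives $\pazocal{C}(\pazocal{S}_{\ket{\ast}},\|\cdot\|_2,\varepsilon)\le ML\le M\,\tfrac{8\pi}{\varepsilon}$, which rearranges to the claimed bound $\pazocal{C}(\pazocal{S}_{\ket{\ast}\!\bra{\ast}},\|\cdot\|_1,\varepsilon)=M\ge\tfrac{\varepsilon}{8\pi}\,\pazocal{C}(\pazocal{S}_{\ket{\ast}},\|\cdot\|_2,\varepsilon)$.

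The one genuine subtlety — and the main obstacle — is precisely the phase quotient: a single density matrix in $C$ corresponds to a whole circle of vectors, so the lift requires discretizing that circle and paying a factor $L=\Theta(1/\varepsilon)$ in the cardinality, which is exactly where the extra $\varepsilon$ in the stated bound originates. Everything else reduces to the two metric identities and a triangle inequality; the only point demanding a line of care is the existence of an integer $L$ in the admissible window $[\,\pi/((1-1/\sqrt2)\varepsilon),\,8\pi/\varepsilon\,]$, which holds for every $\varepsilon\in(0,1)$ since that interval has length exceeding one.
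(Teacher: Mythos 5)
Your proof is correct and follows essentially the same route as the paper's: lift an optimal $\varepsilon$-covering net of the pure density matrices to a covering net of the unit vectors by discretizing the global phase into $\Theta(1/\varepsilon)$ points, using the identity $\min_\theta\|\ket{v}-e^{i\theta}\ket{w}\|_2=\sqrt{2(1-|\braket{v|w}|)}\le\tfrac{1}{\sqrt2}\|\ketbra{v}-\ketbra{w}\|_1$ and a triangle inequality, then count. The only differences are cosmetic (nearest grid phase versus the floor of the target phase, and slightly different bookkeeping of the constant), so no further comment is needed.
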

\begin{proof}
Let $C^{\varepsilon, \mathrm{opt}}_{\pazocal{S}_{\ket{\ast}\!\bra{\ast}}}=\{\ketbra{\psi_1},\dots, \ketbra{\psi_{M}}\}$ be an optimal $ \varepsilon$-covering net of the set $\pazocal{S}_{\ket{\ast}\!\bra{\ast}}$ with respect to the trace norm $\|\cdot\|_1$, where we defined $M\coloneqq \pazocal{C}(  \pazocal{S}_{\ket{\ast}\!\bra{\ast}}, \|\cdot\|_1, \varepsilon)$. 
For each $j\in [M]$, we take $\ket{\phi_j}$ to be a state such that $\ketbra{\phi_j}=\ketbra{\psi_j}$. 
Let $\delta>0$ be a number that we will fix later in terms of $\varepsilon$. We define the set 
\be
C_{\delta}\coloneqq\left\{e^{i\delta m}\ket{\phi_j}\,:\,   j\in [M] ,\, m\in\Big\{0,1,\dots,\left\lfloor\frac{2\pi}{\delta}\right\rfloor\Big\} \right\}\,.
\label{eq:cdelt}
\ee

Let us show that $C_{\delta}$ is an $(\frac{\varepsilon}{\sqrt{2}}+\delta)$-covering net of $\pazocal{S}_{\ket{\ast}}$ with respect to the $2$-norm $\|\cdot\|_2$. To this end, fixed a state vector $\ket{\psi}\in \mathbb{C}^{d}$, it suffices to construct a state vector $\ket{\tilde{\psi}}\in C_{\delta}$ such that 
\bb\label{eq_epsilon_net_1}
    \|\ket{\psi}-\ket{\tilde{\psi}}\|_2\le \frac{\varepsilon}{\sqrt{2}}+\delta\,.
\ee
By definition of $C^{\varepsilon, \mathrm{opt}}_{\pazocal{S}_{\ket{\ast}\!\bra{\ast}}}$, there exists $j\in [M]$ such that $\|\ketbra{\phi_j}-\ketbra{\psi}\|_1\le \varepsilon$.
Let $\bar{m}\coloneqq \lfloor\frac{2\pi-\arg(\braket{\psi|\phi_j})}{\delta}\rfloor$, where $\arg(\braket{\psi|\phi_j})$ is such that $\braket{\psi|\phi_j}=|\braket{\psi|\phi_j}|e^{i\arg(\braket{\psi|\phi_j})}$, and let us define the state vector $\ket{\tilde{\psi}}\coloneqq e^{i\delta\bar{m}}\ket{\phi_j}\in C_{\delta}$. Let us now show that the so-defined $\ket{\tilde{\psi}}$ 
satisfies Eq.~\eqref{eq_epsilon_net_1}:
\bb
    \|\ket{\psi}-\ket{\tilde{\psi}}\|_2&= \|\ket{\psi}-e^{i\delta\bar{m}}\ket{\phi_j}\|_2\\
    &\le \|\ket{\psi}-e^{-i\arg(\braket{\psi|\phi_j})}\ket{\phi_j}\|_2+\|e^{-i\arg(\braket{\psi|\phi_j})}\ket{\phi_j}-e^{i\delta\bar{m}}\ket{\phi_j}\|_2\\
    &\leqt{(i)} \frac{1}{\sqrt{2}}\|\ketbra{\psi}-\ketbra{\phi_j}\|_1+\|e^{-i\arg(\braket{\psi|\phi_j})}\ket{\phi_j}-e^{i\delta\bar{m}}\ket{\phi_j}\|_2\\
    &\le \frac{\varepsilon}{\sqrt{2}} +|e^{i\left(2\pi-\arg(\braket{\psi|\phi_j})\right)}-e^{i\delta\bar{m}}|\\
    &\leqt{(ii)} \frac{\varepsilon}{\sqrt{2}} +|2\pi -\arg(\braket{\psi|\phi_j})-\delta\bar{m}|\\
    &\le \frac{\varepsilon}{\sqrt{2}} +\delta\,.
\ee
Here, (i) follows from the fact that
\begin{align}
    \left\|\ket{\psi}-e^{-i\arg(\braket{\psi|\phi_j})}\ket{\phi_j}\right\|_2 &= \sqrt{2\left(1-\operatorname{Re}{(e^{-i\arg(\braket{\psi|\phi_j})}\braket{\psi|\phi_j})}\right)} \\
    \nonumber
    &= \sqrt{2\left(1-|\braket{\psi|\phi_j}|\right)} \\
     \nonumber
    &\le \sqrt{2\left(1-|\braket{\psi|\phi_j}|^2\right)} \\
     \nonumber
    &= \frac{1}{\sqrt{2}}\left\|\ketbra{\psi}-\ketbra{\phi_j}\right\|_1\,,
     \nonumber
\end{align}
where in the last step we have used  the well-known formula for the trace distance between two pure states (see, e.g. Ref.~\cite{wilde_quantum_2013}).
Moreover, (ii) is a consequence of the fact that for any $x,y\in\mathbb{R}$ it holds that
\bb
    |e^{i x}-e^{i y}| =\sqrt{2}\sqrt{1-\cos(x-y)}\le |x-y|\,.
\ee
Hence, we have proved that $C_{\delta}$ is a $(\frac{\varepsilon}{\sqrt{2}}+\delta)$-covering net of $\pazocal{S}_{\ket{\ast}}$ with respect to the $2$-norm $\|\cdot\|_2$. Consequently, by setting $\delta\coloneqq \varepsilon-\frac{\varepsilon}{\sqrt{2}}$, we have that $C_{\varepsilon-\frac{\varepsilon}{\sqrt{2}}}$ is an $\varepsilon$-covering net of $\pazocal{S}_{\ket{\ast}}$ with respect to the $2$-norm $\|\cdot\|_2$. In particular, its cardinality has to satisfy
\bb
    |C_{\varepsilon-\frac{\varepsilon}{\sqrt{2}}}|\ge \pazocal{C}(  \pazocal{S}_{\ket{\ast}}, \|\cdot\|_2, \varepsilon)\,.
\ee
Consequently, we can conclude that
\begin{align}
\pazocal{C}(  \pazocal{S}_{\ket{\ast}\!\bra{\ast}}, \|\cdot\|_1, \varepsilon) &= |C^{\varepsilon, \mathrm{opt}}_{\pazocal{S}_{\ket{\ast}\!\bra{\ast}}}| \\
\nonumber
&\ge \left(\left\lfloor \frac{2\pi}{\varepsilon-\frac{\varepsilon}{\sqrt{2}}}\right\rfloor +1\right)^{-1}|C_{\varepsilon-\frac{\varepsilon}{\sqrt{2}}}| \\
\nonumber
&\ge \left(\left\lfloor \frac{2\pi}{\varepsilon-\frac{\varepsilon}{\sqrt{2}}}\right\rfloor +1\right)^{-1}\pazocal{C}(  \pazocal{S}_{\ket{\ast}}, \|\cdot\|_2, \varepsilon) \\
\nonumber
&\ge \frac{\varepsilon}{8\pi}\pazocal{C}(  \pazocal{S}_{\ket{\ast}}, \|\cdot\|_2, \varepsilon),
\nonumber
\end{align}
where the first inequality follows by a simple counting argument, based on Eq.~\eqref{eq:cdelt}, which shows that $|C_{\delta}|\le |C^{\varepsilon, \mathrm{opt}}_{\pazocal{S}_{\ket{\ast}\!\bra{\ast}}}|(\left\lfloor\frac{2\pi}{\delta}\right\rfloor+1)$, and the last inequality follows by inspection. 
\end{proof}
The following lemma establishes a lower bound on the $\varepsilon$-packing number of the set of pure state density matrices with respect to the trace norm.

\begin{lemma}[($\varepsilon$-packings)]\label{lemma6666} 
Let $d\in \N$ and $\varepsilon\in [0,1)$. Define the set of pure state density matrices $\pazocal{S}_{\ket{\ast}\!\bra{\ast}}$ as
\begin{align}
     \pazocal{S}_{\ket{\ast}\!\bra{\ast}}&\coloneqq \{ \ketbra{v} \, : \, \ket{v}\in \mathbb{C}^{d} \text{ such that } |\braket{v|v}|=1 \}.
\end{align}
Then, we have
\be
\pazocal{P}(  \pazocal{S}_{\ket{\ast}\!\bra{\ast}}, \|\cdot\|_1, \varepsilon)\ge\frac{1}{8\pi}\varepsilon^{-2(d-1)}\,.
\ee
\end{lemma}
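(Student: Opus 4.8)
The plan is to chain together the three preparatory lemmas and reduce the claim to an elementary volume lower bound on the covering number of a Euclidean ball. The key observation is that, under the identification $\mathbb{C}^d \cong \mathbb{R}^{2d}$ (for which the complex Euclidean norm matches the real one), the set of unit vectors $\pazocal{S}_{\ket{\ast}}$ is precisely the unit sphere $\partial B_1(2d)$ in $\mathbb{R}^{2d}$. With this in hand the statement becomes a matter of bookkeeping.

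First I would invoke Lemma~\ref{lemma1111} to pass from the packing number to the covering number, obtaining $\pazocal{P}(\pazocal{S}_{\ket{\ast}\!\bra{\ast}}, \|\cdot\|_1, \varepsilon) \ge \pazocal{C}(\pazocal{S}_{\ket{\ast}\!\bra{\ast}}, \|\cdot\|_1, \varepsilon)$. Then Lemma~\ref{lemma2222} lower bounds this covering number of pure-state density matrices (trace norm) by $\frac{\varepsilon}{8\pi}$ times the covering number of the unit vectors (Euclidean norm), i.e.\ $\pazocal{C}(\pazocal{S}_{\ket{\ast}\!\bra{\ast}}, \|\cdot\|_1, \varepsilon) \ge \frac{\varepsilon}{8\pi}\pazocal{C}(\pazocal{S}_{\ket{\ast}}, \|\cdot\|_2, \varepsilon)$. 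Using the identification above and then applying Lemma~\ref{lemma3333} with $d$ replaced by $2d$, I would write $\pazocal{C}(\pazocal{S}_{\ket{\ast}}, \|\cdot\|_2, \varepsilon) = \pazocal{C}(\partial B_1(2d), \|\cdot\|_2, \varepsilon) \ge \pazocal{C}(B_1(2d-1), \|\cdot\|_2, \varepsilon)$, thereby replacing the sphere in $\mathbb{R}^{2d}$ by the solid ball in one lower dimension.

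The only remaining ingredient is a lower bound $\pazocal{C}(B_1(m), \|\cdot\|_2, \varepsilon) \ge \varepsilon^{-m}$ on the covering number of the unit ball in $\mathbb{R}^m$, here with $m = 2d-1$. This follows from a standard volumetric argument: if $C$ is an $\varepsilon$-covering net of $B_1(m)$, then the $\varepsilon$-balls centred at the points of $C$ cover $B_1(m)$, so comparing volumes gives $|C|\,\omega_m \varepsilon^m \ge \omega_m$, where $\omega_m$ denotes the volume of the unit ball; hence $|C| \ge \varepsilon^{-m}$, valid since $\varepsilon \in [0,1)$. Taking the infimum over all covering nets yields the claimed bound on $\pazocal{C}(B_1(m), \|\cdot\|_2, \varepsilon)$.

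Combining the chain gives $\pazocal{P}(\pazocal{S}_{\ket{\ast}\!\bra{\ast}}, \|\cdot\|_1, \varepsilon) \ge \frac{\varepsilon}{8\pi}\,\varepsilon^{-(2d-1)} = \frac{1}{8\pi}\varepsilon^{-2(d-1)}$, which is exactly the claimed bound. The dimension counting works out because the single factor of $\varepsilon$ lost in Lemma~\ref{lemma2222} is precisely compensated by the drop from the exponent $2d-1$ to $2(d-1)$. I expect essentially no obstacle here, since the two genuine subtleties are already handled by the earlier lemmas — the global-phase ambiguity relating rays to density matrices (Lemma~\ref{lemma2222}) and the passage from the sphere to a ball in one lower dimension that makes the volume bound applicable (Lemma~\ref{lemma3333}). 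The only care required is tracking the \emph{real} dimension $2d$ (rather than $d$) correctly throughout, and verifying that the volume argument is invoked in the regime $\varepsilon<1$.
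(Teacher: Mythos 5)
Your proposal is correct and follows essentially the same route as the paper's own proof: the identical chain of Lemma~\ref{lemma1111}, Lemma~\ref{lemma2222}, the identification of $\pazocal{S}_{\ket{\ast}}$ with $\partial B_1(2d)$, Lemma~\ref{lemma3333}, and the final volumetric lower bound on the covering number of $B_1(2d-1)$. The bookkeeping, including the cancellation of the $\varepsilon/(8\pi)$ prefactor against the exponent $2d-1$, matches the paper exactly.
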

\begin{proof} 
Consider the set of pure state vectors $\pazocal{S}_{\ket{\ast}\!\bra{\ast}}$ defined as
\begin{align}
    \pazocal{S}_{\ket{\ast}}&\coloneqq \{ \ket{v} \, : \, \ket{v}\in \mathbb{C}^{d} \text{ such that } |\braket{v|v}|=1 \}.
\end{align}
Let $B_1(k)$ be the $k$-dimensional unit ball and $\partial B_1(k)$ its boundary, with $k\in \mathbb{N}$ and $r\in \mathbb{R}_+$, defined as
\begin{align}
     B_r(k)&\coloneqq \{x\in \mathbb{R}^{k}\,:\, \|x\|_2\le r  \},\\
    \partial B_r(k)&\coloneqq \{x\in \mathbb{R}^{k}\,:\, \|x\|_2= r  \}.
\end{align}
We can establish the following chain of inequalities:
\be
\pazocal{P}(  \pazocal{S}_{\ket{\ast}\!\bra{\ast}}, \|\cdot\|_1, \varepsilon) &\geqt{(i)} \pazocal{C}(  \pazocal{S}_{\ket{\ast}\!\bra{\ast}}, \|\cdot\|_1, \varepsilon)\\
&\geqt{(ii)} \frac{\varepsilon}{8\pi} \pazocal{C}(  \pazocal{S}_{\ket{\ast}}, \|\cdot\|_2, \varepsilon)\\
&\eqt{(iii)} \frac{\varepsilon}{8\pi} \pazocal{C}(  \partial B_1(2d), \|\cdot\|_2, \varepsilon)\\
&\geqt{(iv)} \frac{\varepsilon}{8\pi} \pazocal{C}( B_1(2d-1), \|\cdot\|_2, \varepsilon)\\
&\geqt{(v)} \frac{\varepsilon}{8\pi} \frac{\text{Vol}( B_1(2d-1))}{\text{Vol}( B_\varepsilon(2d-1))}\\
&\eqt{(vi)}\frac{\varepsilon^{-2(d-1)}}{8\pi}\,.
\ee
Here, (i) and (ii) follow from Lemma~\ref{lemma1111} and Lemma~\ref{lemma2222}, respectively. Moreover, (iii) stems from viewing the set of pure states $\pazocal{S}_{\ket{\ast}}$ as the boundary of the $2d$-dimensional real ball $B_1(2d)$. In (iv), we apply Lemma~\ref{lemma3333}. In (v), we denote $\text{Vol}(B_1(2d-1))$ as the volume of the $(2d-1)$-dimensional ball with unit radius and $\text{Vol}(B_\varepsilon(2d-1))$ as the volume of the $(2d-1)$-dimensional ball with radius $\varepsilon$. We then exploit a simple counting argument to get $\text{Vol}(B_\varepsilon(2d-1))\pazocal{C}( B_1(2d-1), \|\cdot\|_2, \varepsilon)\ge \text{Vol}(B_1(2d-1))$. Finally, in (vi), we use the known fact that the volume of an $n$-dimensional ball with radius $r$ is $r^n \text{Vol}(B_1(2d-1))$.        
\end{proof}

We now present a lemma which relates the covering number of the set of quantum states on $\mathbb{C}^{d}$ with the covering number of the set of Hermitian matrices with trace norm smaller than one.

\begin{lemma}[(Covering number of the set of Hermitian matrices)]\label{lemma_cov_pos_norm}
Let $\varepsilon>0$ and $d\in \N_+$. Let $\pazocal{D}\!\left(\mathbb{C}^{d}\right)$ and  $\mathrm{Herm}_{\|\ast\|_1\le 1}(d)$ be respectively the set of quantum states on $\mathbb{C}^{d}$ and the set of Hermitian matrices with trace norm smaller than one, 
i.e.
\bb
     \pazocal{D}\!\left(\mathbb{C}^{d}\right) &\coloneqq \left\{\rho\in\mathbb{C}^{d,d}\,:\,\rho\ge0,\,\Tr\rho=1\right\}\,,\\
     \mathrm{Herm}_{\|\ast\|_1\le 1}(d)&\coloneqq \left\{X\in\mathbb{C}^{d,d}\,:\,X^{\dagger}=X,\,\|X\|_1\le1\right\}\,.
\ee
Then, it holds the following relation between their covering numbers 
\bb
\pazocal{C}(  \pazocal{D}\!\left(\mathbb{C}^{d}\right) , \|\cdot\|_1, \varepsilon)  \ge \sqrt{\pazocal{C}(  \mathrm{Herm}_{\|\ast\|_1\le 1}(d-1) , \|\cdot\|_1, 2\varepsilon)}\,.
\ee
\end{lemma}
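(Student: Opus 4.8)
The plan is to exploit the elementary fact that any Hermitian matrix of trace norm at most one is a difference of two (padded) density matrices, so that an $\varepsilon$-net of the state space pulls back to a net of the Hermitian trace-norm ball. The square root in the statement is the signature of this idea: a Hermitian matrix will be encoded by a \emph{pair} of states, so it is the at most $M^2$ pairs of net elements --- where $M \coloneqq \pazocal{C}(\pazocal{D}(\mathbb{C}^{d}),\|\cdot\|_1,\varepsilon)$ --- that cover the Hermitian ball on $\mathbb{C}^{d-1}$.

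Concretely, I would fix an optimal $\varepsilon$-covering net $C=\{\rho_1,\dots,\rho_M\}$ of $\pazocal{D}(\mathbb{C}^{d})$ with respect to $\|\cdot\|_1$. Given any $X\in \mathrm{Herm}_{\|\ast\|_1\le 1}(d-1)$, I decompose it into positive and negative parts $X=X_+-X_-$, with $X_\pm\ge 0$ and $\Tr X_+ + \Tr X_- = \|X\|_1\le 1$. Using $\mathbb{C}^{d}\cong \mathbb{C}^{d-1}\oplus\mathbb{C}$, I pad these into genuine states by dumping the missing trace into the extra coordinate:
\[
\rho_X \coloneqq X_+\oplus(1-\Tr X_+),\qquad \sigma_X\coloneqq X_-\oplus(1-\Tr X_-).
\]
Positivity and unit trace of $\rho_X,\sigma_X$ are immediate from $\Tr X_\pm\in[0,1]$, so $\rho_X,\sigma_X\in\pazocal{D}(\mathbb{C}^{d})$, and by construction the top-left $(d-1)\times(d-1)$ block of $\rho_X-\sigma_X$ is exactly $X$.

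Next, let $P$ be the projector onto $\mathbb{C}^{d-1}$ and choose $i,j\in[M]$ with $\|\rho_i-\rho_X\|_1\le\varepsilon$ and $\|\rho_j-\sigma_X\|_1\le\varepsilon$. Since compression to a principal block does not increase the trace norm, $\|PAP\|_1\le\|A\|_1$, the block $Y_{ij}\coloneqq P(\rho_i-\rho_j)P$ obeys
\[
\|Y_{ij}-X\|_1 \le \|(\rho_i-\rho_X)-(\rho_j-\sigma_X)\|_1 \le \|\rho_i-\rho_X\|_1+\|\rho_j-\sigma_X\|_1\le 2\varepsilon.
\]
As $X$ sweeps the Hermitian ball, the finite family $\{Y_{ij}\}_{i,j\in[M]}$, of cardinality at most $M^2$, thus $2\varepsilon$-covers $\mathrm{Herm}_{\|\ast\|_1\le 1}(d-1)$, which would give $\pazocal{C}(\mathrm{Herm}_{\|\ast\|_1\le 1}(d-1),\|\cdot\|_1,2\varepsilon)\le M^2$ and hence the claim after taking square roots.

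The step needing care --- and the main obstacle --- is the bookkeeping around the \emph{internal} covering convention adopted here (centres must lie in the set covered). The blocks $Y_{ij}$ are Hermitian but only satisfy $\|Y_{ij}\|_1\le 1+2\varepsilon$, so they need not lie exactly in $\mathrm{Herm}_{\|\ast\|_1\le 1}(d-1)$; I would either read the construction as producing an external net and invoke the standard packing/covering comparison in the spirit of Lemma~\ref{lemma1111}, or retract each $Y_{ij}$ radially onto the trace-norm ball while tracking the induced constant. The remaining checks --- that the padded operators are legitimate states, and the contraction $\|PAP\|_1\le\|A\|_1$ under compression to the principal $(d-1)$-block --- are elementary, but they are precisely where the factor of two and the dimension shift $d\mapsto d-1$ enter the bound.
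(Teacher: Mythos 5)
Your proposal is correct and follows essentially the same route as the paper's own proof: Jordan--Hahn decomposition of $X$, padding the positive and negative parts into $d$-dimensional density matrices, approximating each by an element of the optimal $\varepsilon$-net of $\pazocal{D}(\mathbb{C}^{d})$, and extracting the top-left $(d-1)\times(d-1)$ blocks, using $\|PAP\|_1\le\|A\|_1$ and the triangle inequality to get a $2\varepsilon$-cover of cardinality at most $M^2$. The internal-versus-external covering subtlety you flag is real but is present in the paper's proof as well (the differences of blocks can have trace norm up to $1+2\varepsilon$), and resolving it by radial retraction or by the packing/covering comparison only degrades the radius by a constant factor, which is immaterial for how the lemma is used downstream.
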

\begin{proof}
The strategy of the proof is to construct a covering net for $\mathrm{Herm}_{\|\ast\|_1\le 1}(d-1)$ from an optimal $\varepsilon$-covering net of set of quantum states $\pazocal{D}\!\left(\mathbb{C}^{d}\right)$. Let $C^{\varepsilon, \mathrm{opt}}_{\pazocal{D}\left(\mathbb{C}^{d}\right)}$ be an optimal $\varepsilon$-covering net for $\pazocal{D}\!\left(\mathbb{C}^{d}\right)$. Let us fix $X\in  \mathrm{Herm}_{\|\ast\|_1\le 1}(d-1)$, and let us write it in its Jordan–Hahn decomposition~\cite{WATROUS} as follows, $X=X_1-X_2$, where $X_1,X_2\ge0$ and $X_1X_2=X_2X_1=0$. Since 
\bb
    1\ge\|X\|_1=\Tr X_1+\Tr X_2\,,
\ee
it follows that 
\bb
    1-\Tr X_1&\ge0\,,\\
    1-\Tr X_2&\ge0\,.
\ee
Let us consider the $d\times d$ block diagonal matrices
\bb
    \bar{X}_1&\coloneqq\left(\begin{matrix}X_1&\textbf{0}\\\textbf{0}^\intercal &1-\Tr X_1\end{matrix}\right)\,,\\
    \bar{X}_2&\coloneqq\left(\begin{matrix}X_2&\textbf{0}\\\textbf{0}^\intercal &1-\Tr X_2\end{matrix}\right)\,,
\ee
where $\textbf{0}$ is the $(d-1)$-dimensional zero vector. Hence, both $\bar{X}_1$ and $\bar{X}_2$ are density operators, i.e., $\bar{X}_1,\bar{X}_2\in\pazocal{D}\!\left(\mathbb{C}^{d}\right)$. By definition of $C^{\varepsilon, \mathrm{opt}}_{\pazocal{D}\left(\mathbb{C}^{d}\right)}$, there exist $\rho_1,\rho_2\in C^{\varepsilon, \mathrm{opt}}_{\pazocal{D}\left(\mathbb{C}^{d}\right)}$ such that $\|\bar{X}_1-\rho_1\|_1\le\varepsilon$ and $\|\bar{X}_2-\rho_2\|_1\le\varepsilon$. Given a matrix $A$, let us denote as $A|_n$ its $n\times n$ top-left block. Specifically, we denote as $\rho_1|_{d-1}$ and as $\rho_2|_{d-1}$ the $(d-1)\times (d-1)$ top-left blocks of $\rho_1$ and $\rho_2$, respectively. We have that
\bb
    \left\|X-\left(\rho_1|_{d-1}-\rho_2|_{d-1}\right)\right\|_1&=\left\|X_1-X_2-\left(\rho_1|_{d-1}-\rho_2|_{d-1}\right)\right\|_1\\
    &\le \|X_1-\rho_1|_{d-1}\|_1+\|X_2-\rho_2|_{d-1}\|_1\\
    &\leqt{(i)} \|\bar{X}_1-\rho_1 \|_1+\|\bar{X}_2-\rho_2\|_1\\
    &\le 2\varepsilon\,.
\ee
Here, in (i), we exploited the general fact that the trace norm of any matrix $A$ is an upper bound on the trace norm of its $n\times n$ top-left block $A|_n$, i.e., $\|A|_n\|_1\le \|A\|_1$. This latter fact can be easily proved by exploiting the \emph{minimax principle for singular values}~\cite[Problem III.6.1]{BHATIA-MATRIX}. Therefore, we have proved that the set 
\bb
    S\coloneqq \{\rho_1|_{d-1}-\rho_2|_{d-1}\,:\,\rho_1,\rho_2\in C^{\varepsilon, \mathrm{opt}}_{\pazocal{D}\left(\mathbb{C}^{d}\right)}\}
\ee
constitutes a $2\varepsilon$-covering net of $\mathrm{Herm}_{\|\ast\|_1\le 1}(d-1)$, and thus $|S|\ge \pazocal{C}(  \mathrm{Herm}_{\|\ast\|_1\le 1}(d-1) , \|\cdot\|_1, 2\varepsilon)$. Therefore, we conclude that
\bb
    \pazocal{C}(  \mathrm{Herm}_{\|\ast\|_1\le 1}(d-1) , \|\cdot\|_1, 2\varepsilon)\le |S|\le |C^{\varepsilon, \mathrm{opt}}_{\pazocal{D}\left(\mathbb{C}^{d}\right)}|^2=\pazocal{C}(  \pazocal{D}\!\left(\mathbb{C}^{d}\right) , \|\cdot\|_1, \varepsilon)^2\,.
\ee
\end{proof}

We now present a lemma which gives a lower bound on the covering number of the set of Hermitian matrices with trace norm smaller than one.
\begin{lemma}[(Lower bound on the covering number)]\label{lem:Hermitianmatreiceslowerbound}
Let $\varepsilon>0$ and $d\in \N_+$. Let $\mathrm{Herm}_{\|\ast\|_1\le 1}(d)$ be the set of Hermitian matrices with trace norm smaller than one, i.e.
\bb
     \mathrm{Herm}_{\|\ast\|_1\le 1}(d)&\coloneqq \left\{X\in\mathbb{C}^{d,d}\,:\,X^{\dagger}=X,\,\|X\|_1\le1\right\}\,.
\ee
Then, its covering number can be lower bounded by
\bb
    \pazocal{C}(  \mathrm{Herm}_{\|\ast\|_1\le 1}(d) , \|\cdot\|_1, \varepsilon)\ge \varepsilon^{-d^2}\,.
\ee
\end{lemma}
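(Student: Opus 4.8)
The plan is to use the elementary volume-comparison bound for covering numbers, exploiting the fact that the set in question is \emph{itself} the unit ball of the norm used for the covering, so that all volume prefactors cancel exactly. First I would record that the real vector space of $d\times d$ Hermitian matrices has dimension $d^2$: the $d$ diagonal entries are real, while the $d(d-1)/2$ strictly upper-triangular entries are complex and fix the lower triangle by Hermiticity, contributing $d(d-1)$ real parameters, for a total of $d + d(d-1) = d^2$. Fixing any linear identification of this space with $\mathbb{R}^{d^2}$ equips it with a Lebesgue measure $\mathrm{Vol}$, and $K \coloneqq \mathrm{Herm}_{\|\ast\|_1\le 1}(d)$ is a compact convex body with nonempty interior, so $0 < \mathrm{Vol}(K) < \infty$.

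The key observation is that $K$ is exactly the closed unit ball of the trace norm restricted to Hermitian matrices. Let $C$ be any $\varepsilon$-covering net of $K$. By definition, for every $X \in K$ there is $X_0 \in C$ with $\|X - X_0\|_1 \le \varepsilon$, i.e.\ $X \in X_0 + \varepsilon K$. Hence $K \subseteq \bigcup_{X_0 \in C}(X_0 + \varepsilon K)$, and combining subadditivity and translation invariance of $\mathrm{Vol}$ with the scaling rule $\mathrm{Vol}(\varepsilon K) = \varepsilon^{d^2}\mathrm{Vol}(K)$ in a $d^2$-dimensional space gives
\[
    \mathrm{Vol}(K) \le |C|\,\mathrm{Vol}(\varepsilon K) = |C|\,\varepsilon^{d^2}\,\mathrm{Vol}(K)\,.
\]
Dividing through by $\mathrm{Vol}(K) > 0$ yields $|C| \ge \varepsilon^{-d^2}$. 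Since $C$ was an arbitrary $\varepsilon$-covering net, taking the minimum cardinality gives the claimed bound $\pazocal{C}(\mathrm{Herm}_{\|\ast\|_1\le 1}(d), \|\cdot\|_1, \varepsilon) \ge \varepsilon^{-d^2}$.

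There is no genuinely hard step here; the only points requiring minor care are the dimension count ($d^2$) and the justification that the volume factors cancel \emph{exactly}, with no ball-volume prefactor surviving. This cancellation is precisely what distinguishes this lemma from Lemma~\ref{lemma6666}, where the covered set and the covering balls lived in different spaces, forcing an explicit ratio of ball volumes; here the self-dual situation (covering the unit ball by rescaled copies of itself) makes the bound exact and constant-free. I would note that the same computation applies to any norm on Hermitian matrices whose unit ball is $K$, since only the scaling behaviour of $\mathrm{Vol}$ under dilation by $\varepsilon$ is used.
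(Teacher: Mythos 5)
Your proof is correct and follows essentially the same route as the paper: both arguments rest on the dimension count $d^2$ for the real vector space of Hermitian matrices, the scaling $\mathrm{Vol}(\varepsilon K)=\varepsilon^{d^2}\mathrm{Vol}(K)$, and the volume-comparison lower bound $\pazocal{C}(K,\|\cdot\|_1,\varepsilon)\ge \mathrm{Vol}(K)/\mathrm{Vol}(\varepsilon K)$. You merely spell out the ``simple counting argument'' (covering $K$ by translates $X_0+\varepsilon K$ and using subadditivity of the volume) that the paper leaves implicit.
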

\begin{proof}
    Let us define the ball of $d\times d$ Hermitian matrices with radius $r$ with respect the trace norm as
    \bb
        B_r(d)\coloneqq \left\{X\in\mathbb{C}^{d,d}\,:\,X^{\dagger}=X,\,\|X\|_1\le r\right\}\,,
    \ee
    Note that the ball of radius one $B_1(d)$ is exactly equal to the set $\mathrm{Herm}_{\|\ast\|_1\le 1}(d)$. The volume of the ball with radius $r$ can be expressed in terms of the volume of the ball with unit radius as
    \bb\label{Vol_ball}
        \text{Vol}\!\left( B_r(d) \right)=r^{d^2} \text{Vol}\!\left( B_1(d) \right)\,.
    \ee
    Let us prove this latter fact. First, note that there are $d^2$ independent real parameters $y_1,y_2,\ldots,y_{d^2}$ that define any $d\times d$ Hermitian matrix. Specifically, there are $d^2-d$ independent real parameters for the off diagonal elements, and $d$ independent real parameters for the diagonal ones. Given the $d^2$-dimensional vector $y=(y_1,y_2,\ldots,y_{d^2})\in\R^{d^2}$ of such parameters, let $X(y)$ be the associated Hermitian matrix. 
    Consequently, we have that
\bb
    \text{Vol}\!\left( B_r(d) \right)=\,
    \int\limits_{\substack{y\in\mathbb{R}^{d^2}\\ \|X(y)\|_1\le r}}\!\!\!1 \,\mathrm{d}^{d^2}\!y \eqt{(i)}r^{d^2}\!\!\!\int\limits_{\substack{y\in\mathbb{R}^{d^2}\\ \|X(y)\|_1\le 1}}\!\!\!1 \,\mathrm{d}^{d^2}\!y =\,
    r^{d^2}  \text{Vol}\!\left( B_1(d) \right)\,,
\ee
where in (i) we made the change of variable $y\mapsto r y$ and we exploited that $X(ry)=rX(y)$. Hence, we have proved~\eqref{Vol_ball}. Finally, a simple counting argument shows that
\bb
    \pazocal{C}(  \mathrm{Herm}_{\|\ast\|_1\le 1}(d) , \|\cdot\|_1, \varepsilon)\ge \frac{\text{Vol}\!\left( B_1(d)\right)}{\text{Vol}\!\left( B_\varepsilon(d)\right)}\,,
\ee
which, together with~\eqref{Vol_ball}, concludes the proof.
\end{proof} 
The following corollary establishes a lower bound on the covering number of the set of (mixed) quantum states.
\begin{cor}\label{cor:densitymatrices}
Let $\varepsilon > 0$ and $d\in \N_+$. Let $\pazocal{D}\!\left(\mathbb{C}^{d}\right)$ be the set of quantum states on $\mathbb{C}^{d}$, i.e.
\bb
     \pazocal{D}\!\left(\mathbb{C}^{d}\right) &\coloneqq \left\{\rho\in\mathbb{C}^{d,d}\,:\,\rho\ge0,\,\Tr\rho=1\right\}\,.
\ee
Then, its packing number can be lower bounded as follows
\bb
\pazocal{P}(  \pazocal{D}\!\left(\mathbb{C}^{d}\right) , \|\cdot\|_1, \varepsilon)\ge (2\varepsilon)^{-\frac{1}{2}(d-1)^2}
\ee
\begin{proof}
The result follows from the following chain of inequalities:
\bb
    \pazocal{P}(  \pazocal{D}\!\left(\mathbb{C}^{d}\right) , \|\cdot\|_1, \varepsilon)&\geqt{(i)}\pazocal{C}(  \pazocal{D}\!\left(\mathbb{C}^{d}\right) , \|\cdot\|_1, \varepsilon)\\
    &\geqt{(ii)} \sqrt{\pazocal{C}(  \mathrm{Herm}_{\|\ast\|_1\le 1}(d-1) , \|\cdot\|_1, 2\varepsilon)}\\
    &\geqt{(iii)} (2\varepsilon)^{-\frac{1}{2}(d-1)^{2}}\,.
\ee
Here, (i), (ii), and (iii) follow from Lemma~\ref{lemma1111},  Lemma~\ref{lemma_cov_pos_norm}, and Lemma~\ref{lem:Hermitianmatreiceslowerbound}, respectively.
\end{proof}
\end{cor}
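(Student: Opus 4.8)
The plan is to obtain the lower bound on the packing number by chaining together the three preceding lemmas, each of which performs one reduction step. The overall strategy is to trade the quantity we ultimately want to bound — a packing number of the state space, which is precisely what feeds the sample-complexity lower bound of Lemma~\ref{lower_bound_sample_complexity} — for progressively more tractable covering numbers, ending with a volume computation.

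First I would invoke Lemma~\ref{lemma1111}, whose left inequality in~\eqref{eq2121} gives $\pazocal{P}(\pazocal{D}(\mathbb{C}^{d}), \|\cdot\|_1, \varepsilon)\ge \pazocal{C}(\pazocal{D}(\mathbb{C}^{d}), \|\cdot\|_1, \varepsilon)$, converting the desired packing bound into a covering bound at the same scale. This is convenient because covering numbers are the natural output of the subsequent dimension-reduction step. Next I would apply Lemma~\ref{lemma_cov_pos_norm}, which lower bounds the covering number of the $d$-dimensional state space by the square root of the covering number, at scale $2\varepsilon$, of the trace-norm unit ball $\mathrm{Herm}_{\|\ast\|_1\le 1}(d-1)$ of $(d-1)\times(d-1)$ Hermitian matrices. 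The point of this step is that $\mathrm{Herm}_{\|\ast\|_1\le 1}(d-1)$ is a genuine (real-linear) ball, so its covering number is amenable to a clean volumetric estimate. Finally I would feed in Lemma~\ref{lem:Hermitianmatreiceslowerbound}, applied in dimension $d-1$ and at scale $2\varepsilon$, to get $\pazocal{C}(\mathrm{Herm}_{\|\ast\|_1\le 1}(d-1), \|\cdot\|_1, 2\varepsilon)\ge (2\varepsilon)^{-(d-1)^2}$. Stringing these together yields
\begin{align}
\pazocal{P}(\pazocal{D}(\mathbb{C}^{d}), \|\cdot\|_1, \varepsilon) \ge \sqrt{(2\varepsilon)^{-(d-1)^2}} = (2\varepsilon)^{-\frac{1}{2}(d-1)^2},
\end{align}
which is exactly the claim.

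There is no genuine obstacle at the level of the corollary itself: all the work is packaged in the three lemmas, and the only care required is bookkeeping — ensuring that the dimension drops to $d-1$ and the scale doubles to $2\varepsilon$ consistently across the covering-number and volume estimates, and that the square root coming from the state-space-to-Hermitian reduction cancels the squared exponent from the volume bound to leave the factor $\tfrac{1}{2}(d-1)^2$. The substantive content lives upstream, in the two ingredients I am assuming: the volume argument of Lemma~\ref{lem:Hermitianmatreiceslowerbound} (counting how many $2\varepsilon$-balls tile the trace-norm ball, which produces the $d^2$ exponent from the real dimension of the Hermitian matrices), and the purification-style embedding of Lemma~\ref{lemma_cov_pos_norm} that turns an arbitrary trace-norm-bounded Hermitian matrix into a difference of two density matrices via its Jordan--Hahn decomposition.
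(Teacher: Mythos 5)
Your proposal is correct and follows exactly the paper's own argument: the same chain of Lemma~\ref{lemma1111}, Lemma~\ref{lemma_cov_pos_norm}, and Lemma~\ref{lem:Hermitianmatreiceslowerbound}, with the same bookkeeping of the dimension drop to $d-1$ and the scale doubling to $2\varepsilon$. Nothing to add.
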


\newpage

\section{Tomography of moment-constrained states}\label{Sec_EC_subset}
In this section we consider the problem of quantum state tomography of continuous-variable systems. Since continuous-variable states are associated with an infinite dimensional Hilbert space, a first trivial observation is that quantum state tomography is impossible if one has no any extra prior information about the unknown state. However, in a practical scenario, experimentalists often possess knowledge about the energy budget available of their light sources --- and hence about the mean energy of the unknown quantum state.
Thus, it is crucial to analyse the problem of tomography of $n$-mode states subject to an energy constraint. Such energy constraint can be expressed as the following upper bound on the mean photon number of the unknown $n$-mode state $\rho$:
\bb\label{def_energy_constraint}
    \Tr[\hat{N}_n \rho]\le nN_{\text{phot}}\,,
\ee
where $\hat{N}_n$ is the $n$-mode photon number operator, and  $N_{\text{phot}}$ is a real number. In other words, one may have the extra prior information that the unknown quantum state is contained in the set 
\bb
    \left\{\rho\in \pazocal{D}\!\left(L^2(\mathbb R^n)\right):\quad\Tr\!\left[\hat{N}_n\rho\right]\le nN_{\text{phot}}\right\}
\ee
of energy constrained states.
Here, the number of modes $n$ and the energy constraint $N_{\text{phot}}$ are \emph{known} parameters, in the sense that the tomography algorithm may explicitly depend on their values. Note that we normalise the right-hand-side of~\eqref{def_energy_constraint} with the factor $n$ because the mean photon number $\Tr[\hat{N}_n \rho]$ is extensive, meaning that $\Tr[\hat{N}_n \sigma^{\otimes n}]=n\Tr[\hat{N}_1 \sigma]$ for all  single-mode states $\sigma$. In this section, we first consider the following question:
\begin{quote} 
 Suppose there exists a tomography algorithm designed to learn, with $\varepsilon$-precision in trace distance and high probability, an unknown $n$-mode pure state $\psi$ satisfying the energy constraint $\Tr[\hat{N}_n \psi] \leq nN_{\text{phot}}$. What is the required sample complexity, represented by the number $N$ of state copies?
\end{quote}
In this section, we answer this question by showing that \emph{any such tomography algorithm needs at least $N=\Omega\!\left(\frac{N_{\text{phot}}}{\varepsilon^2}\right)^{\!n}$ state copies}. This result imposes strong limitations on quantum state tomography of continuous-variable systems, due to the unfavourable scaling not only in the number of modes $n$ but also in the trace-distance error $\varepsilon$. In contrast, in the finite-dimensional setting, quantum state tomography of an unknown $n$-qubit state needs at least $N=\tildeTheta\,(2^{n}/\varepsilon^2)$ copies of the state (indicating a more favorable dependence on $\varepsilon$). Conversely, we also show that there exists a tomography algorithm that achieves the same sample complexity performances. Thus, we have the following main result:

\begin{thm}[(Tomography of energy-constrained pure states - informal version)]
    $N=\Theta\!\left(\frac{N_{\text{phot}}}{\varepsilon^{2}}\right)^{\!n}$ state copies are necessary and sufficient for quantum state tomography of energy-constrained $n$-mode pure states with trace-distance error $\varepsilon$.
\end{thm}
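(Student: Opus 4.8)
The statement is a matching pair of bounds, so I would establish sufficiency (an algorithm using $O\big((N_{\text{phot}}/\varepsilon^2)^n\big)$ copies) and necessity (every algorithm needs at least that many) separately.

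\textbf{Sufficiency.} The plan is to collapse the problem to finite-dimensional pure-state tomography using the energy constraint. Setting $M_1\coloneqq \lceil nN_{\text{phot}}/\varepsilon^2\rceil$ and letting $\Pi_{M_1}$ project onto the span $\HH_{M_1}$ of Fock states with total photon number at most $M_1$, the operator inequality $\id-\Pi_{M_1}\le \hat N_n/M_1$ combined with the Gentle measurement lemma gives $d_{\mathrm{tr}}(\psi,\psi_{M_1})\le\sqrt{\Tr[(\id-\Pi_{M_1})\psi]}\le\sqrt{nN_{\text{phot}}/M_1}\le\varepsilon$, where $\psi_{M_1}$ is the (still pure) normalised projection of $\psi$. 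Hence, measuring the POVM $(\Pi_{M_1},\id-\Pi_{M_1})$ and post-selecting on the first outcome produces, with high probability, a pure state in a space of dimension $D=\binom{n+M_1}{n}\le\big(e(n+M_1)/n\big)^n=O\big((N_{\text{phot}}/\varepsilon^2)^n\big)$. I would then run the optimal $D$-dimensional pure-state tomography routine, whose sample complexity scales as $O(D)$, and close the argument with a triangle inequality.

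\textbf{Necessity.} Here I would combine a packing argument with the Holevo--Fano lower bound of Lemma~\ref{lower_bound_sample_complexity}: given a set $\pazocal{M}_\varepsilon=\{\psi_1,\dots,\psi_M\}$ of energy-constrained pure states that are pairwise more than $2\varepsilon$ apart in trace distance, the Holevo quantity must satisfy $\chi\ge(1-\delta)\log_2 M-H_2(\delta)$, and since the $\psi_j$ are pure, $\chi=S\big(\tfrac1M\sum_j\psi_j^{\otimes N}\big)$. The key move is to bound $\chi$ by energy rather than by dimension: as $\tfrac1M\sum_j\psi_j^{\otimes N}$ carries mean photon number at most $NnN_{\text{phot}}$, the extremality of thermal states (Lemma~\ref{maxthermstate}) yields $\chi\le Nn\,g(N_{\text{phot}})$, whence $N\ge \log_2 M/\big(n\,g(N_{\text{phot}})\big)$. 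It then suffices to exhibit a packing with $\log_2 M=\tildeOmega\big((N_{\text{phot}}/\varepsilon^2)^n\big)$.

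\textbf{The main obstacle} is this last construction, because the energy constraint and the demand for a huge packing pull in opposite directions: a generic pure vector supported on Fock states of photon number up to $M_1\sim nN_{\text{phot}}/\varepsilon^2$ has energy of order $M_1\gg nN_{\text{phot}}$, so the large $\varepsilon$-packing supplied directly by Lemma~\ref{lemma6666} is inadmissible. I would resolve this by keeping almost all the weight on a fixed low-energy bulk and hiding the combinatorics in a small high-dimensional tail: take $\psi_v\coloneqq\sqrt{1-\varepsilon^2}\,\ket{\mathrm{bulk}}+\varepsilon\,\ket{v}$ with $\ket{\mathrm{bulk}}$ fixed and low-energy, and $\ket{v}$ ranging over a constant-separation packing of unit vectors in the tail subspace of Fock states of photon number $\lesssim nN_{\text{phot}}/\varepsilon^2$ orthogonal to the bulk. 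Then $\Tr[\hat N_n\psi_v]=O(nN_{\text{phot}})$, so the states are admissible, while $d_{\mathrm{tr}}(\psi_v,\psi_{v'})=\Theta(\varepsilon)$ when $\ket v,\ket{v'}$ are a constant apart; since the tail subspace has dimension $\Theta\big((N_{\text{phot}}/\varepsilon^2)^n\big)$, a fixed-radius packing of it (Lemma~\ref{lemma6666}) has $\log_2 M=\Theta\big((N_{\text{phot}}/\varepsilon^2)^n\big)$. Feeding this into the Holevo bound gives the claimed lower bound up to the benign $n\,g(N_{\text{phot}})$ factor, matching the upper bound.
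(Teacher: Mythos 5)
Your proposal is correct and follows essentially the same route as the paper: the sufficiency argument (energy cutoff via the Gentle measurement lemma, then optimal finite-dimensional pure-state tomography on a space of dimension $O((N_{\text{phot}}/\varepsilon^2)^n)$) and the necessity argument (Fano--Holevo bound with the Holevo quantity controlled by the bosonic entropy $n g(N_{\text{phot}})$ via extremality of thermal states, plus a packing of states $\sqrt{1-\Theta(\varepsilon^2)}\ket{0}^{\otimes n}+\Theta(\varepsilon)\ket{v}$ with $\ket{v}$ in the photon-number-$\le nN_{\text{phot}}/\varepsilon^2$ tail) are exactly the paper's constructions, with the paper taking the bulk to be the vacuum so the energy cross-terms vanish identically. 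The only details left implicit in your sketch are routine: verifying the lower bound $d_{\mathrm{tr}}(\psi_v,\psi_{v'})\gtrsim\varepsilon\|\,\ketbra{v}-\ketbra{v'}\,\|_1$ and tuning the tail amplitude (the paper uses $12\varepsilon$) so that constant-separated $\ket{v}$'s yield states more than $2\varepsilon$ apart.
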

In the general case of $n$-mode energy-constained (possibly mixed) states we show the following:
\begin{thm}[(Tomography of energy-constrained mixed states - informal version)]
    $N=\Omega\! \left(\frac{N_{\text{phot}}}{\varepsilon}\right)^{\!2n}$ state copies are necessary for quantum state tomography of energy-constrained $n$-mode (possibly mixed) states with trace-distance error $\varepsilon$. Conversely, $N=O\!\left(\frac{N_{\text{phot}}}{\varepsilon^{3/2}}\right)^{\!2n}$ state copies are sufficient for this task.
\end{thm}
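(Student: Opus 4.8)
The statement splits into an achievability (upper) and a converse (lower) bound, which I would prove by two independent reductions to the finite-dimensional theory.

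\emph{Sufficiency.} For the upper bound I would reuse the effective-dimension and effective-rank estimates: via the Gentle measurement lemma~\cite{Sumeet_book} together with the operator inequality $\mathbb{1}-\Pi_M\le \hat N_n/M$, any state with $\Tr[\hat N_n\rho]\le nN_{\text{phot}}$ is $\varepsilon$-close in trace distance to its projection $\rho_{M_1}$ onto $\HH_{M_1}$ with $M_1=\lceil nN_{\text{phot}}/\varepsilon^2\rceil$, and is moreover $O(\varepsilon)$-close to a rank-$r$ state living in that same subspace, where $D\coloneqq\dim\HH_{M_1}=O((eN_{\text{phot}})^n/\varepsilon^{2n})$ and $r=O((eN_{\text{phot}})^n/\varepsilon^{n})$. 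The algorithm is then: (i) measure $(\Pi_{M_1},\mathbb{1}-\Pi_{M_1})$ and post-select on $\Pi_{M_1}$, collapsing $\rho$ onto the $D$-dimensional state $\rho_{M_1}$ with high probability; (ii) run a finite-dimensional tomography routine tailored to rank-$r$ states in $\mathbb{C}^{D}$, which has sample complexity $O(Dr)$ and tolerates the promise that the true state is only $O(\varepsilon)$-close to a rank-$r$ state in the prescribed space~\cite{wrightHowLearnQuantum,haah_optimal_2021}. The product $Dr=O((eN_{\text{phot}})^{2n}/\varepsilon^{3n})$ gives the claimed $O((N_{\text{phot}}/\varepsilon^{3/2})^{2n})$, with the post-selection success probability amplified by the standard repetition argument of Lemma~\ref{le:enhance-success}.

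\emph{Necessity.} For the converse I would feed a suitable family of energy-constrained states into the packing-plus-Holevo scheme of Lemma~\ref{lower_bound_sample_complexity}: it suffices to exhibit many pairwise $>2\varepsilon$-separated (in trace distance) energy-constrained states occupying a Hilbert space of controlled dimension. The naive choice of all states supported on $\HH_{nN_{\text{phot}}}$ only reaches effective dimension $(N_{\text{phot}})^n$ and yields the far weaker scaling $\varepsilon^{-2}$; the anomalous $\varepsilon^{-2n}$ must instead be manufactured by a \emph{mostly-vacuum} family
\bb
    \rho_\sigma \coloneqq (1-\delta)\left(\ketbra{0}\right)^{\otimes n}+\delta\,\sigma\,,
\ee
where $\sigma$ ranges over density matrices supported on the subspace $\HH'$ of Fock vectors with total photon number in $[1,M]$, and $\delta,M$ are tuned below. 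Since $\Tr[\hat N_n\rho_\sigma]=\delta\,\Tr[\hat N_n\sigma]\le\delta M$, the energy constraint is met once $\delta M=nN_{\text{phot}}$; and because $\rho_\sigma-\rho_{\sigma'}=\delta(\sigma-\sigma')$ one has exactly $d_{\mathrm{tr}}(\rho_\sigma,\rho_{\sigma'})=\delta\,d_{\mathrm{tr}}(\sigma,\sigma')$, so any $\theta$-separated packing of the $\sigma$'s lifts to a $(\delta\theta)$-separated packing of the $\rho_\sigma$'s.

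\emph{Parameter balance.} I would then fix $\theta$ to a small constant (so that $4\theta<1$) and choose $\delta=\Theta(\varepsilon)$ large enough that $\delta\theta>2\varepsilon$, which forces $M=nN_{\text{phot}}/\delta=\Theta(nN_{\text{phot}}/\varepsilon)$. Corollary~\ref{cor:densitymatrices} packs $\exp\!\big(\Omega((D'-1)^2)\big)$ density matrices into $\HH'$ at this constant resolution, where $D'=\dim\HH'=\Theta(M^n/n!)=\Theta((eN_{\text{phot}}/\varepsilon)^n)$ by Stirling. Inserting this count into Lemma~\ref{lower_bound_sample_complexity}, whose Holevo step gives only the generic bound $\chi\le N\log_2(D'+1)$, yields $N=\tildeOmega\big((D')^2\big)=\tildeOmega\big((N_{\text{phot}}/\varepsilon)^{2n}\big)$, and the polylogarithmic slack together with the mode- and constant-factors is absorbed into the informal $\Omega$ of the statement.

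\emph{Main obstacle.} I expect the delicate point to be the lower bound, specifically the simultaneous control of the three competing quantities in the mostly-vacuum family: the mixing weight $\delta$ must be small enough to respect the extensive energy constraint yet large enough that the induced separation $\delta\theta$ exceeds $2\varepsilon$ while the packing resolution stays below the validity threshold of Corollary~\ref{cor:densitymatrices}; and the ambient dimension $D'$ must be pushed to $\sim\varepsilon^{-n}$ — far beyond the $(N_{\text{phot}})^n$ of $\HH_{nN_{\text{phot}}}$ — which is precisely what produces the $\varepsilon^{-2n}$ scaling once the quadratic-in-$D'$ mixed-state packing exponent is used. Checking that the generic Holevo bound $\chi\le N\log_2(D'+1)$ (rather than the symmetric-subspace bound available only for pure states) still suffices, and that the $1/\log D'$ loss is harmless, is the technical heart. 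I would not try to close the residual gap between the $\varepsilon^{-3n}$ upper and $\varepsilon^{-2n}$ lower bounds: it reflects the mismatch between the ambient dimension $D$ paid by full-rank tomography and the effective rank $r\sim\varepsilon^{-n}$ that actually limits distinguishability.
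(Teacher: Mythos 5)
Your proposal follows essentially the same route as the paper's proof. The upper bound is obtained exactly as in the paper: project onto the effective-dimension subspace via the POVM $(\Pi_{M_1},\mathbb{1}-\Pi_{M_1})$, invoke the effective-rank lemma, and run the rank-$r$ finite-dimensional routine at cost $O(Dr)$. The lower bound likewise uses the paper's construction verbatim: a mostly-vacuum family $(1-\delta)\ketbra{0}^{\otimes n}+\delta\,\sigma$ with $\delta=\Theta(\varepsilon)$ and $\sigma$ ranging over a constant-resolution packing of the density matrices on the span of Fock states with total photon number in $[1,m]$, $m=\Theta(nN_{\text{phot}}/\varepsilon)$, combined with the quadratic-in-dimension packing exponent of Corollary~\ref{cor:densitymatrices} and the Fano--Holevo scheme of Lemma~\ref{lower_bound_sample_complexity}.

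The one step where you diverge is the Holevo bound, and it is precisely the point you flag as the technical heart. Bounding $\chi\le N\log_2(D'+1)$ by the dimension of the truncated space costs a factor $\log_2 D'=\Theta\!\left(n\log(N_{\text{phot}}/\varepsilon)\right)$ in the denominator, so your conclusion is only $\tildeOmega\left((N_{\text{phot}}/\varepsilon)^{2n}\right)$; for fixed $n$ the $\log(1/\varepsilon)$ cannot be absorbed into the base of a quantity of the form $(c\,N_{\text{phot}}/\varepsilon)^{2n}$ as $\varepsilon\to0$, so strictly you do not recover the stated $\Omega$. The paper removes this loss by using the energy constraint a second time: the ensemble average $\frac{1}{M}\sum_{j}\sigma_{\rho_j}^{\otimes N}$ has mean total photon number at most $NnN_{\text{phot}}$, so by the extremality of thermal states (Lemma~\ref{maxthermstate}) its entropy, and hence $\chi$, is at most $Nn\,g(N_{\text{phot}})$ with no $\varepsilon$-dependence, which yields the clean $\Omega$ bound of the formal theorem. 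Everything else in your argument, including the parameter balance $\delta M = nN_{\text{phot}}$ and your closing remark that the residual $\varepsilon^{-3n}$ versus $\varepsilon^{-2n}$ gap stems from paying the ambient dimension $D$ rather than the effective rank $r$, matches the paper.
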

In practical scenarios, experimentalists often have knowledge not only about the mean energy but also about higher moments (e.g.,~energy variance) of their light sources. Therefore, it is meaningful to analyse tomography of states satisfying constraints on higher moments of energy. Mathematically, the $k$-th moment constraint can be expressed as
\bb\label{def_moment_constraint}
    \left(\Tr[\hat{N}_n^k \rho]\right)^{1/k}\le nN_{\text{phot}}\,.
\ee
Note that we normalise the right-hand-side of~\eqref{def_moment_constraint} with the factor $n$ because the $k$-th moment $\left(\Tr[\hat{N}_n^k \rho]\right)^{1/k}$ is extensive when evaluated on Fock states $\rho=\ketbra{m}^{\otimes n}$, meaning that 
\begin{equation}
\left(\Tr[\hat{N}_n^k \ketbra{m}^{\otimes n}]\right)^{1/k}=n\left(\Tr[\hat{N}_1^k \ketbra{m}]\right)^{1/k}. 
\end{equation}
In summary, it is sometimes meaningful to possess the extra prior information that the unknown quantum state is contained in the following set of $k$-th moment-constrained states:
\bb
    \mathcal{S}(n,N_{\text{phot}},k)\coloneqq \left\{\rho\in \pazocal{D}\!\left(L^2(\mathbb R^n)\right):\quad\Tr\!\left(\left[\hat{N}_n^k\rho\right]\right)^{1/k}\le nN_{\text{phot}}\right\}\,,
\ee
where we stress that $n$, $N_{\text{phot}}$, and $k$ are \emph{known} parameters. Note that the higher the value of $k$, the more prior extra information we have about the unknown quantum state. Indeed, this is established by the following simple result.

\begin{lemma}[(High moment-constraint implies low moment-constraint)]
For any $k, q \in \mathbb{N}_+$ with $k \geq q$, it holds that
\begin{equation}
\mathcal{S}(n, N_{\text{phot}}, k) \subseteq \mathcal{S}(n, N_{\text{phot}}, q).
\end{equation}
\end{lemma}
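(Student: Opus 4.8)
The plan is to recognise the claimed inclusion as nothing more than the monotonicity of the $L^p$-norm (equivalently, the power-mean inequality) for the photon-number distribution of $\rho$. Concretely, it suffices to show that for any fixed state $\rho$ the map $k \mapsto (\Tr[\hat{N}_n^k \rho])^{1/k}$ is non-decreasing. The inclusion then follows immediately: for $k \ge q$, any $\rho$ obeying $(\Tr[\hat{N}_n^k\rho])^{1/k} \le n N_{\text{phot}}$ automatically satisfies the weaker constraint $(\Tr[\hat{N}_n^q\rho])^{1/q} \le (\Tr[\hat{N}_n^k\rho])^{1/k} \le n N_{\text{phot}}$, i.e.\ $\rho \in \mathcal{S}(n,N_{\text{phot}},q)$.

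The monotonicity itself I would derive from the concavity inequality~\eqref{eq:conc} already recorded in the preliminaries. Apply it with the positive semi-definite operator $X \coloneqq \hat{N}_n^k$ and the concave function $f(t) \coloneqq t^{q/k}$, which is concave on $[0,\infty)$ because $0 < q/k \le 1$. Since $\hat{N}_n$ is diagonal in the Fock basis with non-negative eigenvalues, the spectrum of $X$ lies in the domain of $f$, and $f(X) = \hat{N}_n^q$. Inequality~\eqref{eq:conc} then reads
\begin{equation}
\Tr[\hat{N}_n^q \rho] = \Tr[\rho\, f(X)] \le f(\Tr[\rho X]) = \big(\Tr[\hat{N}_n^k\rho]\big)^{q/k},
\end{equation}
and raising both sides to the power $1/q$ gives exactly $(\Tr[\hat{N}_n^q\rho])^{1/q} \le (\Tr[\hat{N}_n^k\rho])^{1/k}$, as required.

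Equivalently --- and this is how I would phrase it to sidestep any worry about applying a matrix inequality to the unbounded operator $\hat{N}_n$ --- one may pass directly to the classical picture. Writing $p_{\mathbf{k}} \coloneqq \bra{\mathbf{k}}\rho\ket{\mathbf{k}}$ for the diagonal of $\rho$ in the Fock basis, the family $(p_{\mathbf{k}})_{\mathbf{k}\in\mathbb{N}^n}$ is a probability distribution and $\Tr[\hat{N}_n^k\rho] = \sum_{\mathbf{k}} \|\mathbf{k}\|_1^k\, p_{\mathbf{k}} = \mathbb{E}[Y^k]$, where $Y$ is the random variable taking the value $\|\mathbf{k}\|_1$ with probability $p_{\mathbf{k}}$. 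The desired bound $(\mathbb{E}[Y^q])^{1/q} \le (\mathbb{E}[Y^k])^{1/k}$ is then the standard power-mean inequality, obtained by applying Jensen's inequality to the convex function $t \mapsto t^{k/q}$ (note $k/q \ge 1$) and the non-negative random variable $Y^q$.

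The argument is entirely routine, and I do not expect any genuine obstacle. The only point requiring the slightest care is the legitimacy of the functional-calculus manipulation for the unbounded operator $\hat{N}_n^k$, which is precisely why I would favour the explicit Fock-basis reduction to a scalar random variable: there the convergence of the relevant sums is guaranteed by the assumed $k$-th moment constraint, and no operator-theoretic subtleties arise.
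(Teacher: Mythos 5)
Your argument is correct and is essentially the paper's own proof: the paper likewise applies the concavity inequality~\eqref{eq:conc} with $X=\hat{N}_n^k$ and $f(x)=x^{q/k}$ to obtain $(\Tr[\rho\,\hat{N}_n^q])^{1/q}\le(\Tr[\rho\,\hat{N}_n^k])^{1/k}$. Your additional Fock-basis reduction to the scalar power-mean inequality is a sensible way to sidestep functional-calculus concerns for the unbounded operator, but it does not change the substance of the argument.
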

\begin{proof}
By Eq.~\ref{eq:conc} we have that, for any concave function $f$, any Hermitian matrix $X$, and any quantum state $\rho$, it holds that
\begin{align}
\Tr(\rho f(X)) \leq f(\Tr(\rho X)).
\end{align}
Now, the claim follows from the fact that
\begin{align}
\left(\Tr\!\left[\rho\,\hat{N}_n^q\right]\right)^{1/q} = \left(\Tr\!\left[\rho\left(\hat{N}_n^k\right)^{q/k}\right]\right)^{1/q} \leq \left(\Tr\!\left[\rho\,\hat{N}_n^k\right]\right)^{1/k},
\end{align}
where in the first step we use the semi-definite positivity of $\hat{N}_n$, and the inequality follows from the concavity of the function $x \mapsto x^{q/k}$ for $x \geq 0$.
\end{proof}
In this section, we also analyse tomography of moment-constrained states and establish the following results.
\begin{thm}[(Tomography of moment-constrained states - informal version)]\label{sample_complexity_our_tom}
Let $\rho$ be an unknown $n$-mode state satisfying the $k$-th moment constraint $(\Tr[\hat{N}_n^k\rho])^{1/k} \leq n N_{\text{phot}}$. The following facts hold:
\begin{enumerate}[a.]
    \item The number of copies of $\rho$ required to achieve quantum state tomography with precision $\varepsilon$ in trace distance scales at least as $\Omega\!\left(\frac{N_{\text{phot}}}{\varepsilon^{1/k}}\right)^{2n}$.
    \item There exists a tomography algorithm with sample complexity scaling as $O\!\left(\frac{N_{\text{phot}}}{\varepsilon^{3/(2k)}}\right)^{2n}$.
    \item If we assume the unknown state $\rho$ to be pure, then $\Theta\!\left(\frac{N_{\text{phot}}}{\varepsilon^{2/k}}\right)^{n}$ copies of $\rho$ are necessary and sufficient for tomography.
\end{enumerate}
\end{thm}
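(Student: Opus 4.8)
The plan is to separate the three claims into the achievability directions (the upper bounds of parts b and c) and the converse directions (the lower bounds of parts a and c), and to obtain each by lifting, from the first moment ($k=1$) to the $k$-th moment, the two mechanisms already developed in the Methods: the effective-dimension/effective-rank truncation for the upper bounds, and the packing-plus-Holevo argument of Lemma~\ref{lower_bound_sample_complexity} for the lower bounds.

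For the upper bounds I would first upgrade the truncation estimate. The only place where $k=1$ entered the Methods was the operator inequality $\mathbb{1}-\Pi_M\le\hat N_n/M$; its $k$-th moment analogue is $\mathbb{1}-\Pi_M\le(\hat N_n/M)^k$, which holds because $\hat N_n$ is diagonal in the Fock basis and every Fock state outside $\HH_M$ carries photon number $>M$. Combined with the Gentle measurement lemma exactly as in the Methods, this gives $d_{\mathrm{tr}}(\rho,\rho_M)\le(\Tr[\hat N_n^k\rho])^{1/2}/M^{k/2}\le(nN_{\text{phot}})^{k/2}/M^{k/2}$, so $M_1=\lceil nN_{\text{phot}}/\varepsilon^{2/k}\rceil$ yields effective dimension $D=\binom{n+M_1}{n}=O((eN_{\text{phot}}/\varepsilon^{2/k})^n)$, while $M_2=\lceil nN_{\text{phot}}/\varepsilon^{1/k}\rceil$ yields effective rank $r=O((eN_{\text{phot}}/\varepsilon^{1/k})^n)$. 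Feeding these into the known finite-dimensional sample complexities — $O(D)$ for pure states and $O(Dr)$ for rank-$r$ mixed states — reproduces $O((N_{\text{phot}}/\varepsilon^{2/k})^n)$ (part c, upper) and $O(Dr)=O((N_{\text{phot}}/\varepsilon^{3/(2k)})^{2n})$ (part b). This direction is essentially bookkeeping once the operator inequality is in place.

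The converse directions carry the substance. By Lemma~\ref{lower_bound_sample_complexity} it suffices to exhibit, inside the constraint set, a $2\varepsilon$-separated family $\{\rho_j\}$ of cardinality $\mathsf{M}$, whereupon $N=\Omega(\log_2\mathsf{M}/(n\,g(N_{\text{phot}})))$ once the Holevo information is controlled. The design choice I would make is to use states that are mostly vacuum with a small, high-energy component: for the pure case $\ket{\psi_j}=\cos\theta\,\ket{0}^{\otimes n}+\sin\theta\,\ket{v_j}$ with $\ket{v_j}$ ranging over a constant-scale ($\eta=\Theta(1)$) trace-norm packing of the unit sphere of $W\coloneqq\HH_{M}\ominus\mathrm{span}\{\ket{0}^{\otimes n}\}$, and for the mixed case $\rho_j=(1-p)\left(\ketbra{0}\right)^{\otimes n}+p\,\sigma_j$ with $\sigma_j$ ranging over a constant-scale packing of the states on $W$. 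Since $\hat N_n^k$ annihilates the vacuum, the cross terms vanish and the moment constraint reduces to $\sin^{2/k}\theta\cdot M\le nN_{\text{phot}}$ (resp.\ $p^{1/k}M\le nN_{\text{phot}}$), while the separations are $d_{\mathrm{tr}}(\psi_i,\psi_j)\gtrsim\eta\sin\theta$ (resp.\ $p\,\eta$). Taking $\sin\theta=\Theta(\varepsilon)$ (resp.\ $p=\Theta(\varepsilon)$) forces the separation above $2\varepsilon$ while permitting $M$ as large as $\Theta(nN_{\text{phot}}/\varepsilon^{2/k})$ (resp.\ $\Theta(nN_{\text{phot}}/\varepsilon^{1/k})$). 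Invoking the packing bounds — Lemma~\ref{lemma6666} gives $\log_2\mathsf{M}=\Omega(\dim W)=\Omega((N_{\text{phot}}/\varepsilon^{2/k})^n)$ in the pure case, and Corollary~\ref{cor:densitymatrices} gives $\log_2\mathsf{M}=\Omega((\dim W)^2)=\Omega((N_{\text{phot}}/\varepsilon^{1/k})^{2n})$ in the mixed case, using $\binom{n+M}{n}\ge(M/n)^n$ to convert the photon cutoff into an exponential-in-$n$ dimension — delivers precisely the exponents in parts c and a.

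The two points where I expect the real work to lie are the following. First, one must control the Holevo information $\chi$ of Lemma~\ref{lower_bound_sample_complexity} despite the infinite-dimensional ambient space; I would do this by observing that each constructed state has mean photon number $\Tr[\hat N_n\rho_j]\le nN_{\text{phot}}$ — the residual factor being $\varepsilon^{1-1/k}$ (mixed) or $\varepsilon^{2(1-1/k)}$ (pure), both $\le1$ precisely because $k\ge1$ — so that $\frac{1}{\mathsf{M}}\sum_j\rho_j^{\otimes N}$ lives on $Nn$ modes with total mean photon number $\le Nn\,N_{\text{phot}}$, and the extremality of thermal states (Lemma~\ref{maxthermstate}) bounds $\chi\le S\!\big(\tfrac{1}{\mathsf{M}}\sum_j\rho_j^{\otimes N}\big)\le Nn\,g(N_{\text{phot}})$, which replaces the $N\log 2^n$ of the finite-dimensional version. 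Second, one must verify that a constant-radius packing in a $d$-dimensional space has log-cardinality $\Theta(d)$ for pure states and $\Theta(d^2)$ for mixed states, which is exactly Lemma~\ref{lemma6666} and Corollary~\ref{cor:densitymatrices} at fixed radius. The genuine subtlety, and the hinge of the whole argument, is that the \emph{single} requirement $\sin\theta,\,p=\Theta(\varepsilon)$ imposed by the $2\varepsilon$-separation simultaneously caps $M$ at the value that makes the packing maximal and keeps the mean energy bounded so that the thermal entropy bound applies — so the same parameter choice optimises both sides at once, and the bounds then match the upper bounds up to the overall prefactor $1/(n\,g(N_{\text{phot}}))$ absorbed in the informal asymptotic scaling.
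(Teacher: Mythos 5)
Your proposal is correct and follows essentially the same route as the paper: the upper bounds via the operator inequality $\mathbb{1}-\Pi_M\le(\hat N_n/M)^k$, the gentle measurement lemma, and the effective-dimension/effective-rank reduction to finite-dimensional (Keyl-type) tomography; and the lower bounds via the vacuum-plus-small-high-energy-component construction (with amplitude $\Theta(\varepsilon)$, respectively mixing weight $\Theta(\varepsilon)$), constant-radius packings of the truncated Fock space from Lemma~\ref{lemma6666} and Corollary~\ref{cor:densitymatrices}, and the Fano--Holevo bound of Lemma~\ref{lower_bound_sample_complexity} with the Holevo information controlled by the extremality of thermal states. The parameter choices and the way the separation condition and energy constraint jointly fix the photon cutoff match the paper's proofs of Theorems~\ref{th:lowerboundtomohraphy}, \ref{correctness_algorithm_ECmixed}, and \ref{correctness_algorithm_ECpure}.
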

We note that the upper bound $O\!\left(\frac{N_{\text{phot}}}{\varepsilon^{3/(2k)}}\right)^{2n}$ on the sample complexity of tomography of moment-constrained (mixed) states provided in Theorem~\ref{sample_complexity_our_tom} outperforms the one attainable through the CV classical shadow algorithm outlined in References~\cite{becker_classical_2023,gandhari_precision_2023}.

This section is organised as follows:
\begin{itemize}
    \item In Subsection~\ref{section_lower_bounds_cv}, we derive lower bounds on the sample complexity of tomography of moment-constrained states.
    \item In Subsection~\ref{sec_upper_bound_sample}, we show upper bounds on the sample complexity by providing explicit tomography algorithms to learn moment-constrained states. Throughout this section, we also establish key properties of moment-constrained states, demonstrating that they can be effectively approximated by states with finite local dimension and rank.    
\end{itemize}

\emph{Note}: Throughout this section, we opted to simplify the notation in the use of parentheses in asymptotic expressions. For example, when we write $\Theta\!\left(\frac{N_{\text{phot}}}{\varepsilon^{2/k}}\right)^{\!n}$ we formally mean $\left[\Theta\!\left(\frac{N_{\text{phot}}}{\varepsilon^{2/k}}\right)\right]^{\!n}$. We stress that this notation omits only factors that are independent of the parameters of the problem, i.e., $n$, $\varepsilon$, and $N_{\text{phot}}$. For instance, with this notation, it is true that $\Theta\!\left(\frac{2N_{\text{phot}}}{\varepsilon^{2/k}}\right)^{\!n}=\Theta\!\left(\frac{N_{\text{phot}}}{\varepsilon^{2/k}}\right)^{\!n}$, but $\Theta\!\left(\frac{N_{\text{phot}}^2}{\varepsilon^{2/k}}\right)^{\!n}\neq \Theta\!\left(\frac{N_{\text{phot}}}{\varepsilon^{2/k}}\right)^{\!n}$. Importantly, to ensure clarity and avoid ambiguity, we express every sample complexity bound using both asymptotic notation and explicit exact expressions.

\subsection{Sample complexity lower bounds}\label{section_lower_bounds_cv}
To establish lower bounds on the sample complexity of tomography in continuous variable systems, we rely on the concept of $\varepsilon$-nets (as discussed in Section~\ref{sec:epsilonnet}). In the subsequent subsubsection~\ref{subsub:lbpure}, we delve into the analysis of sample complexity lower bounds for pure state tomography. Following that, we proceed to demonstrate lower bounds for mixed state tomography in subsubsection~\ref{subsub:lbmixed}.

\subsubsection{Lower bound for pure state tomography}
\label{subsub:lbpure}
We now present and prove our main theorem concerning the number of copies necessary for learning moment-constrained pure states. The proof strategy hinges on the utilisation of Lemma~\ref{lower_bound_sample_complexity}, which relies on Fano's inequality and Holevo's bound, and on identifying a suitable `packing' for the set of moment-constrained pure states.

\begin{thm}[(Lower bound on sample complexity for moment-constrained pure states)]\label{th:lowerboundtomohraphy} 
Let us consider a tomography algorithm that learns, within a trace distance $\le \varepsilon$ and failure probability $\le \delta$, an arbitrary state belonging to the set 
\bb
    \pazocal{S}_{\text{pure}}(n,k,N_{\text{phot}})&\coloneqq\left\{\ket{\psi}:\,n\text{-mode pure state},\,\left(\Tr\!\left[\left(\hat{N}_n\right)^{\!\!k}\ketbra{\psi}\right]\right)^{\!1/k}\le nN_{\text{phot}}\right\} 
\ee
of $n$-mode pure states with bounded $k$-moment. Then, such a tomography algorithm must use a number of state copies $N$ satisfying
\bb
    N&\ge \frac{1}{ng(N_{\text{phot}})}\left[2(1-\delta)\left(\frac{N_{\text{phot}}}{(12\varepsilon)^{2/k}}-\frac{1}{n}\right)^n-(1-\delta)\log_2(32\pi)-H_2(\delta)\right]\\
    &=\Theta\!\left(\frac{N_{\text{phot}}}{\varepsilon^{2/k}}\right)^n\,.
\ee
Here, $g(x)\coloneqq (x+1)\log_2(x+1) - x\log_2 x\,$ is the bosonic entropy, and $H_2(x)\coloneqq -x\log_2x-(1-x)\log_2(1-x)$ is the binary entropy.
\end{thm}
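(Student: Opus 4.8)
The plan is to instantiate the generic information-theoretic lower bound of Lemma~\ref{lower_bound_sample_complexity} with a well-chosen packing of the set $\pazocal{S}_{\text{pure}}(n,k,N_{\text{phot}})$. Concretely, I would build a family $\{\ket{\psi_j}\}_{j=1}^{\mathcal{N}}\subseteq\pazocal{S}_{\text{pure}}(n,k,N_{\text{phot}})$ whose members are pairwise more than $2\varepsilon$ apart in trace distance, so that Lemma~\ref{lower_bound_sample_complexity} yields $\chi\ge(1-\delta)\log_2\mathcal{N}-H_2(\delta)$ for the Holevo information $\chi$ of the ensemble $\{(\tfrac{1}{\mathcal{N}},\rho_j^{\otimes N})\}$ with $\rho_j\coloneqq\ketbra{\psi_j}$. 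Everything then reduces to (i) making $\log_2\mathcal{N}$ as large as $\asymp(N_{\text{phot}}/\varepsilon^{2/k})^n$ while respecting the moment constraint, and (ii) upper bounding $\chi$.

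The crux is step (i). A packing placed inside a photon-number-cutoff subspace is far too small: if \emph{every} pure state of a $d$-dimensional subspace must satisfy $(\Tr[\hat{N}_n^k\,\cdot])^{1/k}\le nN_{\text{phot}}$, then eigenvalue interlacing of $\hat{N}_n^k$ caps the cutoff, forcing $d=O(N_{\text{phot}})^n$ independently of $\varepsilon$ and giving only a $\log(1/\varepsilon)$ dependence. To obtain the much stronger $\varepsilon^{-2n/k}$ scaling I would \emph{dilute} the packing states with vacuum. Fix $\eta\coloneqq(12\varepsilon)^2$, let $\HH'$ be the span of the $n$-mode Fock states with total photon number between $1$ and $M$, take a fixed-radius (trace-norm radius $\tfrac12$) packing $\{\ket{\phi_j}\}$ of the pure states of $\HH'$, and set
\[
\ket{\psi_j}\coloneqq\sqrt{1-\eta}\,\ket{0}^{\otimes n}+\sqrt{\eta}\,\ket{\phi_j}\,.
\]
Since $\hat{N}_n^k\ket{0}^{\otimes n}=0$ and $\ket{\phi_j}\perp\ket{0}^{\otimes n}$, the moment collapses to $\bra{\psi_j}\hat{N}_n^k\ket{\psi_j}=\eta\,\bra{\phi_j}\hat{N}_n^k\ket{\phi_j}\le\eta M^k$, so the constraint $\eta M^k\le(nN_{\text{phot}})^k$ \emph{permits} $M$ as large as $\asymp nN_{\text{phot}}/(12\varepsilon)^{2/k}$. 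At the same time $\braket{\psi_j|\psi_{j'}}=(1-\eta)+\eta\braket{\phi_j|\phi_{j'}}$, so the fixed separation of the $\ket{\phi_j}$ (overlap $\le\sqrt{15}/4$) propagates to a trace-distance separation of the $\ket{\psi_j}$ of roughly $\sqrt{0.06\,\eta}\approx3\varepsilon>2\varepsilon$, exactly as required.

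Counting then follows from pieces already in the excerpt: Lemma~\ref{lemma6666} at radius $\tfrac12$ gives $\log_2\mathcal{N}\ge2(d-1)-\log_2(8\pi)$, and the elementary estimate $d\ge\binom{M}{n}\ge(M/n)^n$ together with $M/n\ge N_{\text{phot}}/(12\varepsilon)^{2/k}-\tfrac1n$ produces $\log_2\mathcal{N}\ge2\bigl(N_{\text{phot}}/(12\varepsilon)^{2/k}-\tfrac1n\bigr)^n-\log_2(32\pi)$, the extra factor $4$ absorbing the floor corrections. For step (ii), purity of the $\ket{\psi_j}$ gives $\chi=S(\bar\rho)$ with $\bar\rho=\tfrac{1}{\mathcal{N}}\sum_j\rho_j^{\otimes N}$; the power-mean inequality bounds the mean photon number $\Tr[\hat{N}_n\rho_j]\le(\Tr[\hat{N}_n^k\rho_j])^{1/k}\le nN_{\text{phot}}$, so $\bar\rho$ is an $nN$-mode state of mean photon number at most $nNN_{\text{phot}}$, whence the extremality of thermal states (Lemma~\ref{maxthermstate}) yields $\chi\le nN\,g(N_{\text{phot}})$. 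Chaining $nN\,g(N_{\text{phot}})\ge\chi\ge(1-\delta)\log_2\mathcal{N}-H_2(\delta)$ and solving for $N$ gives the stated inequality, and the $\Theta$-form is immediate from the leading term $\tfrac{2(1-\delta)}{ng(N_{\text{phot}})}\bigl(N_{\text{phot}}/(12\varepsilon)^{2/k}\bigr)^n$.

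I expect the main obstacle to be calibrating $\eta$, $M$, and the packing radius simultaneously so that the moment constraint, the $>2\varepsilon$ separation, and the target subspace dimension all hold with the precise constants claimed; conceptually, however, all the difficulty is concentrated in the dilution idea, which decouples how far the states spread through Fock space (controlling $\mathcal{N}$) from how much energy they carry (controlling the moment).
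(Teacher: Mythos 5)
Your proposal is correct and follows essentially the same route as the paper's proof: the vacuum-dilution construction $\sqrt{1-\eta}\,\ket{0}^{\otimes n}+\sqrt{\eta}\,\ket{\phi_j}$ with $\eta=(12\varepsilon)^2$ and $\ket{\phi_j}$ drawn from a radius-$\tfrac12$ packing of the vacuum-orthogonal Fock-cutoff subspace, the propagation of the fixed separation to a $>2\varepsilon$ trace-distance separation, the counting via Lemma~\ref{lemma6666}, and the Holevo bound $\chi\le nN\,g(N_{\text{phot}})$ via the power-mean inequality and Lemma~\ref{maxthermstate} are all exactly the steps the paper takes, with matching constants.
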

\begin{proof}
Let $M\in\N$ be such that there exist $M$ states 
\bb
    \{\ket{\psi_1}, \ket{\psi_2},\ldots, \ket{\psi_M}\}\subseteq \pazocal{S}_{\text{pure}}(n,k,N_{\text{phot}})
\ee
such that for every $i \neq j \in [M]$ it holds that
\bb
    \frac{1}{2}\|\psi_i-\psi_j\|_1 > 2\varepsilon\,,
\ee
where $\psi_i\coloneqq \ketbra{\psi_i}$ and $\psi_j\coloneqq \ketbra{\psi_j}$ (as depicted in Fig.~\ref{figure_packing}).
\begin{figure*}[t]\label{figure_packing}
    \centering
  \includegraphics[width=0.28\textwidth]{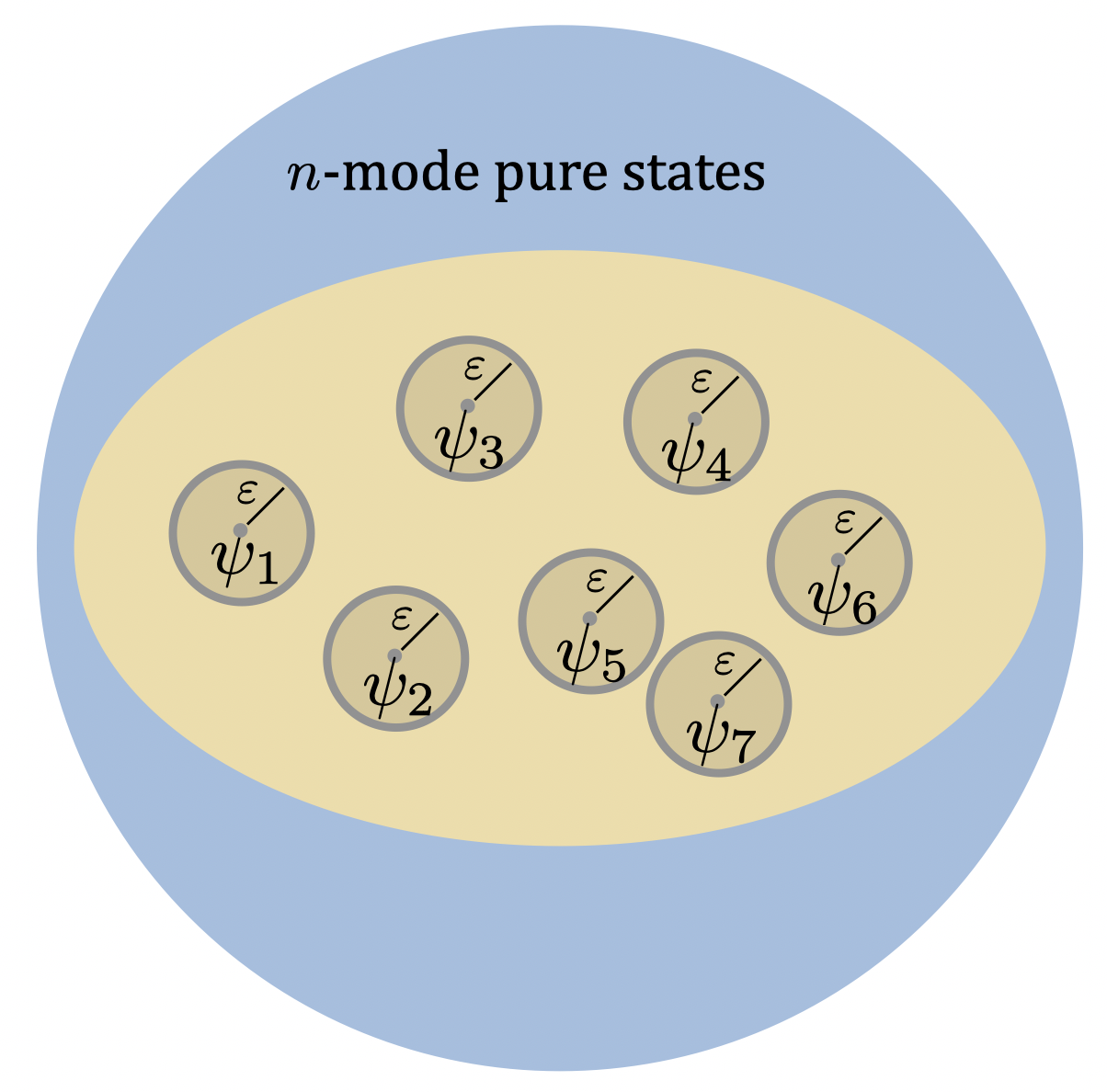}
    \caption{Pictorial representation of the set of all $n$-mode pure states (blue) and the set of of all $n$-mode pure states with bounded $k$-moment (yellow). We consider $M$ moment-constrained pure states $\{\psi_1,\psi_2,\cdots,\psi_M\}$ such that they are at least $2\varepsilon$-far from each other with respect to the trace distance.}
\end{figure*}
Thanks to Lemma~\ref{lower_bound_sample_complexity}, any tomography algorithm that learns states from the set $\pazocal{S}_{\text{pure}}(n,k,N_{\text{phot}})$ with accuracy $\le \varepsilon$ in trace distance and failure probability $\le \delta$ must use a number of state copies $N$ satisfying:
\bb\label{eq_chi1}
    \chi \ge (1-\delta)\log_2(M) - H_2(\delta)\,,
\ee
where $\chi \coloneqq S\left(\frac{1}{M}\sum^{M}_{j=1} \psi_j^{\otimes N}\right) - \frac{1}{M}\sum^{M}_{j=1} S(\psi_j^{\otimes N})$ is the Holevo information, and $S(\cdot)$ is the von Neumann entropy. Note that the latter quantity can be upper bounded as 
\bb\label{eq_chi2}
    \chi &= S\left(\frac{1}{M}\sum^{M}_{j=1} \psi_j^{\otimes N}\right) - \frac{1}{M}\sum^{M}_{j=1} S(\psi_j^{\otimes N}) \le S\left(\frac{1}{M}\sum^{M}_{j=1} \psi_j^{\otimes N}\right) \le nN g(N_{\text{phot}})\,,
\ee
where in the last inequality we exploited the fact that the von Neumann entropy of a state with a fixed mean photon number can be upper bounded in terms of the bosonic entropy function $g(\cdot)$ (see Lemma~\ref{maxthermstate}). Specifically, note that the mean total photon number of the $nN$-mode state $\frac{1}{M}\sum^{M}_{j=1} \psi_j^{\otimes N}$ can be upper bounded as 
\bb
     \Tr\left[\hat{N}_{nN}\left(\frac{1}{M}\sum^{M}_{j=1} \psi_j^{\otimes N}\right)\right]=N\frac{1}{M} \sum_{j=1}^M\Tr[\hat{N}_n\psi_j]\le N\frac{1}{M} \sum_{j=1}^M\left(\Tr[\hat{N}_n^k\psi_j]\right)^{1/k}\le Nn N_{\text{phot}}\,,
\ee
where in the first inequality we have used  that $\Tr[\hat{N}_n\psi_j]=\Tr[(\hat{N}_n^{k})^{\frac{1}{k}}\psi_j]$ and exploited the concavity of $x\mapsto x^{1/k}$ for $x>0,k\in\N$ (see Eq.~\eqref{eq:conc}), while in the last inequality we exploited that $\psi_j\in \pazocal{S}_{\text{pure}}(n,k,N_{\text{phot}})$. Hence, Lemma~\ref{maxthermstate} establishes that
\bb
    S\left(\frac{1}{M}\sum^{M}_{j=1} \psi^{\otimes N}_j\right) \le nN g(N_{\text{phot}})\,.
\ee
By putting together~\eqref{eq_chi1} and~\eqref{eq_chi2}, we deduce that, if such $M$ exists, then the number of state copies $N$ required for tomography of an unknown state belonging to $\pazocal{S}_{\text{pure}}(n,k,N_{\text{phot}})$ has to be at least
\bb\label{lower_bound_N_proof}
    N\ge \frac{(1-\delta)\log_2(M) - H_2(\delta)}{ng(N_{\text{phot}})}\,.
\ee
Hence, in order to get a good lower bound on $N$, we need to find a large value of $M$ such that there exist $M$ states 
\bb
    \{\psi_1, \psi_2,\ldots, \psi_M\}\subseteq \pazocal{S}_{\text{pure}}(n,k,N_{\text{phot}})
\ee
for which it holds that $\frac{1}{2}\|\psi_i-\psi_j\|_1 > 2\varepsilon$ for all $i \neq j \in [M]$. To this end, let us make a suitable construction of such $M$ states.

Let $m\in\N$ such that $m\geq nN_{\text{phot}}$ which will be fixed later. Let us define the finite-dimensional Hilbert space $\HH_m$ spanned by all the $n$-mode Fock states, apart from the $n$-mode vacuum state, with total number of photons less than $m$:
\bb
    \HH_m &\coloneqq \left\{ \ket{\phi}\in\mathrm{span}\!\left(\ket{\textbf{k}}:\,\,\textbf{k}\in\N^n\setminus\{0\}\,\,,\sum_{i=1}^n k_i\leq m\right):\, \braket{\phi|\phi}=1\right\}\,,\label{hmdef}
\ee
where $\ket{\textbf{k}}$ is the $n$-mode Fock state defined by $\ket{\textbf{k}}=\ket{k_1}\otimes\ket{k_2}\otimes\ldots\otimes\ket{k_n}$.

Given a state vector $\ket{\phi}\in \HH_m$, we can define a corresponding state vector $\ket{\psi_{\phi}}\in \pazocal{S}_{\text{pure}}(n,k,N_{\text{phot}})$ as 
\bb 
    \ket{\psi_{\phi}}\coloneqq \sqrt{1-\left(\frac{nN_{\text{phot}}}{m}\right)^{k}}\ket{0}^{\otimes n}+ \sqrt{\left(\frac{nN_{\text{phot}}}{m}\right)^{k}}\ket{\phi}\,.
\ee
Indeed, the $k$-moment of $ \ket{\psi_{\phi}} $ can be upper bounded as
\begin{align}
\Tr[\hat{N}_n^k\ketbra{\psi_\phi}]=\frac{\left(nN_{\text{phot}}\right)^{k}}{m^k}\Tr[\hat{N}_n^k\ketbra{\phi}]\leq\frac{\left(nN_{\text{phot}}\right)^{k}}{m^k}m^k= \left(nN_{\text{phot}}\right)^{k} \,,
\end{align}
and thus $\ket{\psi_\phi}\in\pazocal{S}_{\text{pure}}(n,k,N_{\text{phot}})$. 
Moreover, as we show below in~\eqref{proofofeqtoprove}, the following inequality holds 
\begin{align}
    \frac{1}{2}\left\|\ketbra{\psi_\phi}-\ketbra{\psi_{\tilde{\phi}}}\right\|_1\ge \sqrt{\frac{(nN_{\text{phot}})^k}{8m^k}}  \left\|\ketbra{\phi}-\ketbra{\tilde\phi}\right\|_1\label{eq.toprove}
\end{align}
for all $\ket{\phi},\ket{\tilde{\phi}}\in\HH_m$. Hence, by choosing 
\bb\label{choice_m}
    m\coloneqq \left\lfloor \frac{nN_{\text{phot}}}{(12\varepsilon)^{2/k}}\right\rfloor\,,
\ee
one can check that for all $\ket{\phi},\ket{\tilde{\phi}}\in\HH_m$ such that
\bb
\left\|\ketbra{\phi}-\ketbra{\tilde{\phi}}\right\|_1\geq\frac{1}{2}\,, 
\ee
the inequality in~\eqref{eq.toprove} guarantees that
\bb
    \frac{1}{2}\left\|\ketbra{\psi_\phi}-\ketbra{\psi_{\tilde{\phi}}}\right\|_1\geq 2\varepsilon\,.
\ee
Therefore, let us find $M\in\N$ such that there exist $M$ states $\{\ket{\phi_1},\ket{\phi_2},\ldots,\ket{\phi_M}\}\subseteq \HH_m$ such that $\left\|\ketbra{\phi_i}-\ketbra{\phi_j}\right\|_1\geq \frac{1}{2}$
for all $i\ne j\in[M]$. By definition of packing number (see Definition~\ref{def_pack}), such $M$ can be chosen to be the $\frac{1}{2}$-packing number of $\HH_m$ with respect to the trace norm, i.e., $M\coloneqq \pazocal{P}\!\left(\HH_m, \|\cdot\|_1, \frac{1}{2}\right)\,.$
We can find a simple lower bound on $M$ as 
\bb
    M&= \pazocal{P}\!\left(\HH_m, \|\cdot\|_1, \frac{1}{2}\right)
     \geqt{(i)} \frac{1}{8\pi}2^{2(\dim\HH_{m}-1)}
     \geqt{(ii)} \frac{1}{8\pi}2^{2[\binom{n+m}{n}-2]}
     \geqt{(iii)} \frac{1}{8\pi}2^{2[(\frac{m}{n})^n-1]} 
     = \frac{1}{32\pi}2^{2(\frac{m}{n})^n}
     \,.  
\ee
Here, in (i), we exploit Lemma~\ref{lemma6666}; in (ii), we observed that the dimension of $\HH_m$ is given by 
\bb\label{dim_H_m}
    \dim\HH_m= \left|\left\{\textbf{k}\in\N^n\setminus\{0\}\,,\quad\sum_{i=1}^n k_i\leq m\right\}\right| =\binom{m+n}{n}-1\,,
\ee
where the last equality follows from a standard combinatorial argument; in (iii), we just exploited that $\binom{m+n}{n}\ge (\frac{m+n}{n})^n\ge (\frac{m}{n})^n+1$. Consequently,~\eqref{lower_bound_N_proof} implies that
\bb
    N&\ge \frac{(1-\delta)2(\frac{m}{n})^n-(1-\delta)\log_2(32\pi)-H_2(\delta)}{ng(N_{\text{phot}})}\,.
\ee
Finally, by exploiting the definition of $m$ in~\eqref{choice_m} we conclude that
\bb
    N&\ge \frac{2(1-\delta)\left(\frac{N_{\text{phot}}}{(12\varepsilon)^{2/k}}-\frac{1}{n}\right)^n-(1-\delta)\log_2(32\pi)-H_2(\delta)}{ng(N_{\text{phot}})}\,.
\ee
We are just left to show~\eqref{eq.toprove}:
\newcommand{\coef}{\frac{\left(nN_{\text{phot}}\right)^{k}}{m^k}}
\bb\label{proofofeqtoprove}
\|\ketbra{\psi_\phi}-\ketbra{\psi_{\tilde{\phi}}} \|_1 &= 2\sqrt{1-|\braket{\psi_\phi|\psi_{\tilde{\phi}}}|^2}
\\&=2\sqrt{1-\left|\left(1-\coef\right)+\coef \braket{\phi|\tilde\phi}\right|^2}\\
&\geq 2\sqrt{1-\left(1-\coef+\coef \left|\braket{\phi|\tilde\phi}\right|\right)^2}\\
    &=2\sqrt{2\coef\left(1-\left|\braket{\phi|\tilde\phi}\right|\right)-\left(\coef\left(1-\left|\braket{\phi|\tilde\phi}\right|\right)\right)^2}\\
    &\geq 2\sqrt{\coef\left(1-\left|\braket{\phi|\tilde\phi}\right|\right)}\\
    &\geq \sqrt{2}\sqrt{\coef}\sqrt{\left(1-\left|\braket{\phi|\tilde\phi}\right|\right)\left(1+\left|\braket{\phi|\tilde\phi}\right|\right)}\\
    &=\sqrt{2}\sqrt{\coef}\sqrt{\left(1-\left|\braket{\phi|\tilde\phi}\right|^2\right)}\\
    &=\sqrt{\frac{(nN_{\text{phot}})^k}{2m^k}}  \left\|\ketbra{\phi}-\ketbra{\tilde\phi}\right\|_1\,,
\ee
which concludes the proof.
\end{proof}
In particular, in the single-mode case, i.e., $n=1$, the number of state copies $N$ has to satisfy
\bb
   N&\ge\frac{1}{g(N_{\text{phot}})}\left[2(1-\delta)\left(\frac{N_{\text{phot}}}{(12\varepsilon)^{2/k}}-1\right)-(1-\delta)\log_2(32\pi)-H_2(\delta)\right]\\
     &=\Theta\!\left( \frac{N_{\text{phot}}}{\varepsilon^{2/k}}\right)\,.
\ee

\subsubsection{Lower bound for mixed state tomography}
In this subsection, we establish a theorem concerning the number of copies necessary for learning moment-constrained mixed states. The proof strategy follows a similar logic to the one used previously for the pure state case.
\label{subsub:lbmixed}
\begin{thm}[(Lower bound on sample complexity for moment-constrained mixed states)] 
Let us consider a tomography algorithm that learns, within a trace distance $\le \varepsilon$ and failure probability $\le \delta$, an arbitrary state belonging to the set 
\bb
    \pazocal{S}(n,k,N_{\text{phot}})&\coloneqq\left\{\rho\in\pazocal{D}(L^2(\mathbb{R}^n)),\quad\left(\Tr\!\left[\left(\hat{N}_n\right)^{\!\!k}\rho\right]\right)^{\!1/k}\le nN_{\text{phot}}\right\} 
\ee
of $n$-mode (possibly mixed) states with bounded $k$-moment. Then, such a tomography algorithm must use a number of state copies $N$ satisfying
\bb
    N&\ge \frac{1}{2ng(N_{\text{phot}})}\left[(1-\delta)\left(\frac{N_{\text{phot}}}{(16\varepsilon)^{1/k}}-\frac{1}{n}\right)^{2n} - \frac{1}{2}(1-\delta) -2H_2(\delta)\right] \\
    &=\Theta\!\left(\frac{N_{\text{phot}}}{\varepsilon^{1/k}}\right)^{2n}\,.
\ee
Here, $g(x)\coloneqq (x+1)\log_2(x+1) - x\log_2 x\,$ is the bosonic entropy, and $H_2(x)\coloneqq -x\log_2x-(1-x)\log_2(1-x)$ is the binary entropy. %
\end{thm}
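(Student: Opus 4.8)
The plan is to mirror the strategy of the pure-state lower bound (Theorem~\ref{th:lowerboundtomohraphy}), replacing the pure-state packing by a packing of \emph{mixed} states and exploiting that the trace distance now scales \emph{linearly}---rather than as a square root---in the small mixing weight. I would again invoke Lemma~\ref{lower_bound_sample_complexity} (Fano's inequality together with Holevo's bound): if I can exhibit $M$ states in $\pazocal{S}(n,k,N_{\text{phot}})$ that are pairwise more than $2\varepsilon$ apart in trace distance, then any $\varepsilon$-accurate, $\delta$-failure tomography algorithm must use $N$ copies obeying $nNg(N_{\text{phot}})\ge\chi\ge(1-\delta)\log_2 M-H_2(\delta)$. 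The upper bound on the Holevo information is obtained exactly as in the pure case from Lemma~\ref{maxthermstate}: the $nN$-mode average state $\frac{1}{M}\sum_j\rho_j^{\otimes N}$ has mean photon number at most $NnN_{\text{phot}}$ (by linearity and concavity of $x\mapsto x^{1/k}$, cf.\ Eq.~\eqref{eq:conc}), hence entropy at most $nNg(N_{\text{phot}})$. This reduces the problem to building a large, well-separated, moment-constrained packing.

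For the packing I would fix $m\coloneqq\lfloor nN_{\text{phot}}/(16\varepsilon)^{1/k}\rfloor$ and let $\HH_m$ be the finite-dimensional subspace spanned by all $n$-mode Fock states with total photon number at most $m$, of dimension $D=\binom{m+n}{n}$. To each density matrix $\sigma\in\pazocal{D}(\HH_m)$ I associate the state $\rho_\sigma\coloneqq(1-p)\ketbra{0}^{\otimes n}+p\,\sigma$ with $p\coloneqq(nN_{\text{phot}}/m)^k$; since $\hat{N}_n^k\le m^k$ on $\HH_m$, one checks $\Tr[\hat{N}_n^k\rho_\sigma]\le p\,m^k=(nN_{\text{phot}})^k$, so $\rho_\sigma\in\pazocal{S}(n,k,N_{\text{phot}})$. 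The decisive simplification relative to the pure case is that $\rho_\sigma-\rho_{\tilde\sigma}=p(\sigma-\tilde\sigma)$ \emph{exactly}, so that $\|\rho_\sigma-\rho_{\tilde\sigma}\|_1=p\,\|\sigma-\tilde\sigma\|_1$ with no square root.

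I would then take $\{\sigma_1,\dots,\sigma_M\}$ to be an optimal $\tfrac14$-packing of $\pazocal{D}(\HH_m)$ in trace norm, so that $M\ge 2^{\frac12(D-1)^2}$ by Corollary~\ref{cor:densitymatrices}. For $i\ne j$ this gives $\tfrac12\|\rho_{\sigma_i}-\rho_{\sigma_j}\|_1>\tfrac{p}{8}\ge 2\varepsilon$, using $p\ge 16\varepsilon$ from the choice of $m$, so the $\rho_{\sigma_i}$ form an admissible $2\varepsilon$-separated family. Combining $\log_2 M\ge\tfrac12(D-1)^2$ with the combinatorial estimate $D-1=\binom{m+n}{n}-1\ge(m/n)^n\ge\big(N_{\text{phot}}/(16\varepsilon)^{1/k}-1/n\big)^n$ and substituting into the Fano--Holevo inequality yields precisely the stated bound and its $\Theta(N_{\text{phot}}/\varepsilon^{1/k})^{2n}$ scaling.

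The conceptual heart---and the only genuinely new ingredient relative to the pure-state argument---is the use of the mixed-state packing bound of Corollary~\ref{cor:densitymatrices}, whose log-packing exponent $\tfrac12(d-1)^2$ is \emph{quadratically} larger than the pure-state exponent $2(d-1)$ of Lemma~\ref{lemma6666}. This quadratic gain, together with the linear (rather than square-root) dependence of the trace distance on $p$, is exactly what converts the pure-state scaling $(N_{\text{phot}}/\varepsilon^{2/k})^{n}$ into the mixed-state scaling $(N_{\text{phot}}/\varepsilon^{1/k})^{2n}$. I expect the main care to lie in the bookkeeping of constants that keeps the separation strictly above $2\varepsilon$ and in the combinatorial dimension estimate; the information-theoretic and probabilistic steps are identical to the pure case.
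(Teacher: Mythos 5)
Your proposal is correct and follows essentially the same route as the paper's proof: the Fano--Holevo reduction via Lemma~\ref{lower_bound_sample_complexity}, the embedding $\sigma\mapsto(1-p)\ketbra{0}^{\otimes n}+p\,\sigma$ of a truncated Fock-space density-matrix packing into the moment-constrained set with the exact linear scaling $\|\rho_\sigma-\rho_{\tilde\sigma}\|_1=p\|\sigma-\tilde\sigma\|_1$, the choice $m=\lfloor nN_{\text{phot}}/(16\varepsilon)^{1/k}\rfloor$, and the mixed-state packing bound of Corollary~\ref{cor:densitymatrices}. The only (immaterial) deviations are that you keep the vacuum inside $\HH_m$ whereas the paper excludes it, and the attendant bookkeeping of $D$ versus $D-1$.
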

\begin{proof}
Let $M\in\N$ be such that there exist $M$ states 
\bb
    \{\rho_1, \rho_2,\ldots,\rho_M\}\subseteq \pazocal{S}(n,k,N_{\text{phot}})
\ee
such that for every $i \neq j \in [M]$ it holds that
\bb
    \frac{1}{2}\|\rho_i-\rho_j\|_1 > 2\varepsilon\,.
\ee
By applying Lemma~\ref{lower_bound_sample_complexity} in the exact same way as we did in the proof of Theorem~\ref{th:lowerboundtomohraphy}, we deduce that the number of state copies $N$ required for tomography of an unknown state belonging to $\pazocal{S}(n,k,N_{\text{phot}})$ has to be at least
\bb\label{lower_bound_N_proof_mixed}
    N\ge \frac{(1-\delta)\log_2(M) - H_2(\delta)}{ng(N_{\text{phot}})}\,.
\ee
Let $m\in\N$ such that $m\geq nN_{\text{phot}}$. Let $\HH_m$ be the Hilbert space, defined in~\eqref{hmdef}, spanned by all the $n$-mode Fock states, apart from the $n$-mode vacuum state, with total number of photons less than $m$. Moreover, let $\pazocal{D}(\HH_m)$ be the set of density operators on $\HH_m$.
Given a state $\rho\in \pazocal{D}(\HH_m)$, we can define a corresponding state $\sigma_\rho \in \pazocal{S}(n,k,N_{\text{phot}})$ as 
\bb 
    \sigma_\rho\coloneqq \left[1-\left(\frac{nN_{\text{phot}}}{m}\right)^{k}\right]\ketbra{0}^{\otimes n}+ \left(\frac{nN_{\text{phot}}}{m}\right)^{k}\rho\,.
\ee
Indeed, the $k$-moment of $ \ket{\psi_{\phi}} $ can be upper bounded as
\begin{align}
\Tr[\hat{N}_n^k\sigma_\rho]=\left(\frac{nN_{\text{phot}}}{m}\right)^{k}\Tr[\hat{N}_n^k\rho]\leq\frac{\left(nN_{\text{phot}}\right)^{k}}{m^k}m^k= \left(nN_{\text{phot}}\right)^{k} \,,
\end{align}
and thus $\sigma_\rho\in\pazocal{S}(n,k,N_{\text{phot}})$. 
Moreover, it holds that
\bb\label{eq.trace}
    \left\|\sigma_{\rho}-\sigma_{\rho'}\right\|_1=\left(\frac{nN_{\text{phot}}}{m}\right)^{k}\left\|\rho-\rho'\right\|_1
\ee
for all $\rho,\rho'\in\pazocal{D}(\HH_m)$. Hence, by choosing 
\bb\label{choice_m2}
    m\coloneqq \left\lfloor \frac{nN_{\text{phot}}}{(16\varepsilon)^{1/k}}\right\rfloor\,,
\ee
one can check that for all $\rho,\rho'\in\pazocal{D}(\HH_m)$ such that
\bb
\left\|\rho-\rho'\right\|_1\geq\frac{1}{4}\,, 
\ee
the relation in~\eqref{eq.trace} guarantees that
\bb
    \frac{1}{2}\left\|\sigma_\rho-\sigma_{\rho'}\right\|_1\geq 2\varepsilon\,.
\ee
Therefore, let us find $M\in\N$ such that there exist $M$ states $\{\rho_1,\rho_2,\ldots,\rho_M\}\subseteq \pazocal{D}(\HH_m)$ such that $\left\|\rho_i-\rho_j\right\|_1\geq \frac{1}{4}$
for all $i\ne j\in[M]$. By definition of packing number (see Definition~\ref{def_pack}), such $M$ can be chosen to be the $\frac{1}{4}$-packing number of $\pazocal{D}(\HH_m)$ with respect to the trace norm:
\bb
    M\coloneqq \pazocal{P}\!\left(\pazocal{D}(\HH_m), \|\cdot\|_1, \frac{1}{4}\right)  \,.
\ee
We can find a simple lower bound on $M$ as 
\bb
    M&= \pazocal{P}\!\left(\pazocal{D}(\HH_m), \|\cdot\|_1, \frac{1}{4}\right)\,\\
     &\geqt{(i)} 2^{\frac{(\dim\HH_{m}-1)^2}{2}}\\
     &\geqt{(ii)} 2^{\frac{\left(\binom{n+m}{n}-2\right)^2}{2}}\\
     &\geqt{(iii)} 2^{\frac{1}{2}((\frac{m}{n})^{2n}-1)} \\
     &=\frac{1}{\sqrt{2}}2^{\frac{1}{2}(\frac{m}{n})^{2n}}
     \,.  
\ee
Here, in (i), we exploit Corollary~\ref{cor:densitymatrices}; in (ii), we exploited~\eqref{dim_H_m}; in (iii), we have used  that $\binom{m+n}{n}\ge (\frac{m+n}{n})^n\ge (\frac{m}{n})^n+1$. Consequently,~\eqref{lower_bound_N_proof_mixed} implies that
\bb
    N&\ge \frac{\frac{1-\delta}{2}(\frac{m}{n})^{2n}- (1-\delta)\log_2(\sqrt{2}) -H_2(\delta)}{ng(N_{\text{phot}})}\,.
\ee
Finally, by exploiting the definition of $m$ in~\eqref{choice_m} we conclude that
\bb
    N&\ge \frac{1}{2ng(N_{\text{phot}})}\left[(1-\delta)\left(\frac{N_{\text{phot}}}{(16\varepsilon)^{1/k}}-\frac{1}{n}\right)^{2n} - \frac{1}{2}(1-\delta) -2H_2(\delta)\right]\,.
\ee

\end{proof}

In particular, in the single-mode case, i.e., $n=1$, the number of state copies $N$ has to satisfy
\bb
    N&\ge \frac{1}{2g(N_{\text{phot}})}\left[(1-\delta)\left(\frac{N_{\text{phot}}}{(16\varepsilon)^{1/k}}-1\right)^{2}- \frac{1}{2}(1-\delta) -2H_2(\delta)\right] \\
    &=\Theta\!\left( \frac{N_{\text{phot}}^2}{\varepsilon^{2/k}} \right)\,.
\ee

\subsection{Sample complexity upper bounds}\label{sec_upper_bound_sample}

In this subsection, we present a tomography algorithm aimed at learning a classical description of an unknown $n$-mode $k$-th moment-constrained state. We analyse both pure and mixed scenarios.
The tomography algorithm involves two main steps: first, projecting onto a finite-dimensional subspace, and second, applying a known tomography algorithm specifically designed for finite-dimensional systems. 

The subsection is organised as follows. In Subsubsection~\ref{subsub:properties}, we establish useful properties of moment-constrained states. Specifically, we demonstrate how to approximate a moment-constrained state using a finite-dimensional state with bounded effective rank. In Subsubsection~\ref{subsub:comparison}, we compare our approximation with those of other works, highlighting the improvements achieved. Subsubsection~\ref{subsub:performance} outlines the precise performances of an optimal tomography algorithm applicable to finite-dimensional systems. Finally, in Subsubsection~\ref{subsub:algorithm}, we present the tomography algorithms for moment-constrained pure and mixed states and establish their correctness. 
We denote the Euler's Constant as $e$ throughout this subsection.

\subsubsection{Properties of moment-constrained states: approximation with finite-dimensional and finite-rank states}
\label{subsub:properties}
The following lemma establishes that $k$-th moment-constrained states can be effectively approximated by finite-dimensional states. 
\begin{lemma}[(Effective dimension of moment-constrained states)]
\label{le:effectivedim}
Let $\varepsilon \in (0,1)$ and let $\rho$ be an $n$-mode state with $k$-moment bounded as $(\Tr[\hat{N}_n^k\rho])^{1/k} \leq nN_{\text{phot}}$.
Then, $\rho$ is $\varepsilon$-close in trace distance to a state supported on a space of dimension $\Theta\!\left(\frac{eN_{\text{phot}}}{\varepsilon^{2/k}}\right)^n$.

\noindent
More specifically, let $\HH_m$ be the Hilbert space spanned by all the $n$-mode Fock states with total photon number less than $m$:
\bb\label{def_H_m}
     \HH_m\coloneqq\operatorname{Span}\!\left\{\ket{k_1}\otimes\ket{k_2}\otimes\ldots\otimes\ket{k_n}:\,\,
     k_1,k_2,\ldots,k_n\in\N,\,\,\sum_{i=1}^n k_i\le m\right\}\,.
\ee
Let $m \coloneqq \ceil{\frac{n N_{\text{phot}}}{\varepsilon^{2/k}}}$. Define the state $\rho_{\deff}\coloneqq\Pi_m\rho\,\Pi_m/\Tr\!\left[ \Pi_m \rho\, \Pi_m\right]$ as the projection of $\rho$ onto $\HH_m$,
where $\Pi_m$ is the projector onto $\HH_m$, whose dimension satisfies
\begin{align}
\deff\coloneqq \dim\HH_m=\binom{m+n}{n}\le \left(\frac{eN_{\text{phot}}}{\varepsilon^{2/k}}+2e\right)^n= \Theta \!\left(\frac{eN_{\text{phot}}}{\varepsilon^{2/k}}\right)^{\!n}.
\end{align}
Then, $\rho_{\deff}$ is $\varepsilon$-close in trace distance  to $\rho$ in that
\bb
    \frac{1}{2}\left\|\rho-\rho_{\deff}\right\|_1\le\varepsilon\,.
\ee
\end{lemma}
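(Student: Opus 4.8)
The plan is to follow the same two-step logic used for the effective-dimension bound in the Methods, generalising the governing operator inequality from the first moment to the $k$-th moment. The statement factors into two independent claims: a combinatorial evaluation of $\dim\HH_m$, and the trace-distance estimate $\frac12\|\rho-\rho_{\deff}\|_1\le\varepsilon$. I would treat them separately.

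For the dimension, I would count the Fock multi-indices $\mathbf{k}\in\N^n$ with $\sum_{i=1}^n k_i\le m$. Introducing a slack coordinate $k_{n+1}\coloneqq m-\sum_{i=1}^n k_i\ge 0$, a standard stars-and-bars argument gives exactly $\binom{m+n}{n}$ such multi-indices, hence $\dim\HH_m=\binom{m+n}{n}$. I would then apply the elementary bound $\binom{a}{b}\le(ea/b)^b$ with $a=m+n$ and $b=n$, obtaining $\binom{m+n}{n}\le\bigl(e(m+n)/n\bigr)^n$, and finally substitute $m=\ceil{nN_{\text{phot}}/\varepsilon^{2/k}}\le nN_{\text{phot}}/\varepsilon^{2/k}+1$ to conclude that $e(m+n)/n\le eN_{\text{phot}}/\varepsilon^{2/k}+2e$, which yields the claimed dimension bound.

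For the trace-distance estimate, I would first establish the operator inequality $\mathbb{1}-\Pi_m\le \hat{N}_n^k/m^k$. Writing both sides in the Fock basis as $\hat{N}_n^k=\sum_{\mathbf{k}}\|\mathbf{k}\|_1^k\ketbra{\mathbf{k}}$ and $\mathbb{1}-\Pi_m=\sum_{\|\mathbf{k}\|_1> m}\ketbra{\mathbf{k}}$, the inequality is diagonal and reduces to checking, on each Fock state $\ket{\mathbf{k}}$, that $\|\mathbf{k}\|_1^k/m^k\ge 1$ whenever $\|\mathbf{k}\|_1\ge m+1$ and that $\|\mathbf{k}\|_1^k/m^k\ge 0$ otherwise, both immediate. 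Taking the expectation in $\rho$ and invoking the Gentle Measurement Lemma~\cite{Sumeet_book} exactly as in the Methods then gives $\frac12\|\rho-\rho_{\deff}\|_1\le\sqrt{\Tr[(\mathbb{1}-\Pi_m)\rho]}\le\sqrt{\Tr[\hat{N}_n^k\rho]/m^k}\le\sqrt{(nN_{\text{phot}})^k/m^k}$, where the last step uses the $k$-moment constraint. The choice $m\ge nN_{\text{phot}}/\varepsilon^{2/k}$ forces $m^k\ge(nN_{\text{phot}})^k/\varepsilon^2$, so the right-hand side is at most $\sqrt{\varepsilon^2}=\varepsilon$.

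There is no serious obstacle here; the argument is routine once the operator inequality is in place. The only points demanding a little care are that the $k$-th-moment version of the bound $\mathbb{1}-\Pi_m\le\hat{N}_n/m$ used in the Methods hinges on the strict gap $\|\mathbf{k}\|_1\ge m+1>m$ on the orthogonal complement of $\HH_m$ (this is precisely what makes $\|\mathbf{k}\|_1^k/m^k\ge 1$ there), and that the ceiling in the definition of $m$ must be exploited to guarantee $m\ge nN_{\text{phot}}/\varepsilon^{2/k}$, so that the final square root closes to exactly $\varepsilon$ with no leftover constant.
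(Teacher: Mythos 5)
Your proposal is correct and follows essentially the same route as the paper's proof: the stars-and-bars count with the bound $\binom{m+n}{n}\le\bigl(e(m+n)/n\bigr)^n$, the diagonal operator inequality $m^k(\mathbb{1}-\Pi_m)\le\hat{N}_n^k$ combined with the $k$-moment constraint to get $\Tr[(\mathbb{1}-\Pi_m)\rho]\le\varepsilon^2$, and the gentle measurement lemma to close. Your added remark about the strict gap $\|\mathbf{k}\|_1\ge m+1$ on the orthogonal complement is a correct (if slightly stronger than needed) justification of the operator inequality the paper simply asserts.
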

\begin{proof}
First, observe that the dimension $\deff = \dim \HH_m$ is given by    \bb 
        \dim\HH_m&= \left|\left\{k_1,k_2,\ldots,k_n\in\N\,,\quad\sum_{i=1}^n k_i\leq m\right\}\right| \\
        &\eqt{(i)}\binom{m+n}{n} \\
        &\leqt{(ii)} \left(\frac{e(m+n)}{n}\right)^n\\
        &\leqt{(iii)} \left(\frac{eN_{\text{phot}}}{\varepsilon^{2/k}}+2e\right)^n\,.
    \ee
Here, in (i), we exploited a standard combinatorial argument; in (ii), we have used  the fact that $\binom{a}{b}\le (\frac{ea}{b})^b$ for all $a,b\in\N$; in (iii), we have used  the definition of $m \coloneqq \ceil{\frac{n N_{\text{phot}}}{\varepsilon^{2/k}}}$. Moreover, let us note that
\bb\label{ineq_involving_project}
        \Tr[(\mathbb{1}-\Pi_m)\rho]&\leqt{(iv)}\frac{1}{m^k}\Tr\!\left[\hat{N}_{n}^{k}\rho\right]\\
        &\leqt{(v)} \frac{\left(nN_{\text{phot}}\right)^{k}}{m^k}\\
        &\leqt{(vi)} \varepsilon^2\,,
\ee
where: (iv) follows by the operator inequality $ m^k(\mathbb{1}-\Pi_m)\le \hat{N}_{n}^{k}$; (v) follows by the hypothesis on the $k$-th moment of $\rho$; in (vi) we have used  again the definition of $m$. Finally, the trace distance between $\rho$ and the projected state $\rho_{\deff}$ satisfies
        \bb
            \frac{1}{2}\left\|\rho_{\deff}-\rho\right\|_1&=\frac{1}{2}\left\| \frac{\Pi_m\rho\,\Pi_m }{  \Tr\!\left[ \Pi_m \rho\, \Pi_m\right]   } -\rho \right\|_1\\
            &\leqt{(vii)} \sqrt{\Tr\!\left[(\mathbb{1}-\Pi_m)\rho \right]}\\
            &\le \varepsilon\,,
        \ee
where in (vii) we have used  the \emph{gentle measurement lemma} \cite[Lemma 6.15]{Sumeet_book}.
\end{proof}
We now mention a lemma which will be useful consequently.
\begin{lemma}[(Infinite-dimensional Schur-Horn theorem (\cite{kaftal2009infinite}, Proposition 6.4))]
\label{le:infSchur}
Let $\pazocal{H}$ be a Hilbert space. Let $N \in \mathbb{N}$ and $M \geq 0$ be a semi-definite operator on $\pazocal{H}$. For any set of orthonormal vectors $\{ |v_n\rangle \in \pazocal{H} : n \in \{0, 1, \ldots, N-1\} \}$, it holds that
\begin{equation}
\sum_{n=0}^{N-1} \lambda_n \geq \sum_{n=0}^{N-1} \langle v_n | M | v_n \rangle,
\end{equation}
where $\lambda_0 \geq \lambda_1 \geq \cdots \geq \lambda_{N-1}$ are the $N$ largest eigenvalues of $M$.
\end{lemma}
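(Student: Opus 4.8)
The plan is to recognise the claimed inequality as an instance of \emph{Ky Fan's maximum principle} for the compression of the positive operator $M$ to the span of the orthonormal family. First I would introduce the rank-$N$ orthogonal projection $P \coloneqq \sum_{n=0}^{N-1}\ketbra{v_n}$ onto the subspace $W\coloneqq\operatorname{Span}\{\ket{v_0},\dots,\ket{v_{N-1}}\}$. Since $P^2=P$ and $P\ket{v_n}=\ket{v_n}$, the quantity to be bounded is just the trace of $M$ compressed to $W$:
\begin{equation}
\sum_{n=0}^{N-1}\langle v_n|M|v_n\rangle = \Tr[PMP] = \Tr[MP].
\end{equation}
This reduces the lemma to the statement that $\Tr[MP]\le\sum_{n=0}^{N-1}\lambda_n$ for \emph{every} rank-$N$ projection $P$, which is exactly Ky Fan's principle and is the natural infinite-dimensional form of Schur--Horn majorisation.

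Next I would establish this bound by a compression/interlacing argument. Regarded as an operator on the $N$-dimensional space $W$, the compression $PMP$ is positive semi-definite; let $\beta_0\ge\beta_1\ge\dots\ge\beta_{N-1}\ge 0$ denote its eigenvalues, so that $\Tr[MP]=\sum_{n=0}^{N-1}\beta_n$. The key observation is that for any $\ket{x}\in W$ one has $\langle x|PMP|x\rangle=\langle x|M|x\rangle$, so the Courant--Fischer min-max characterisation carried out \emph{inside} $W$ gives
\begin{equation}
\beta_n = \max_{\substack{S\subseteq W\\ \dim S=n+1}}\ \min_{\substack{\ket{x}\in S\\ \|x\|=1}}\langle x|M|x\rangle \ \le\ \max_{\substack{S\subseteq \pazocal{H}\\ \dim S=n+1}}\ \min_{\substack{\ket{x}\in S\\ \|x\|=1}}\langle x|M|x\rangle = \lambda_n ,
\end{equation}
where the inequality holds simply because the outer maximisation defining $\lambda_n$ ranges over the strictly larger family of \emph{all} $(n+1)$-dimensional subspaces of $\pazocal{H}$, not only those contained in $W$. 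Summing over $n\in\{0,\dots,N-1\}$ yields $\Tr[MP]=\sum_n\beta_n\le\sum_n\lambda_n$, which is the claim. An equally valid alternative would be to expand in an eigenbasis $M=\sum_j\mu_j\ketbra{e_j}$, set $c_j\coloneqq\sum_{n=0}^{N-1}|\langle v_n|e_j\rangle|^2$, note that $0\le c_j\le1$ and $\sum_j c_j=N$, and then maximise $\sum_j\mu_j c_j$ over these constraints by the bathtub principle, placing unit weight on the $N$ largest $\mu_j$.

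The main obstacle is purely the infinite-dimensional bookkeeping rather than anything conceptual: one must ensure that ``the $N$ largest eigenvalues'' are genuinely well-defined and that the min-max formula for $\lambda_n$ actually returns them. This is guaranteed whenever $M$ has discrete spectrum at the top of its range, which is automatic in the intended application where $M$ is a density operator --- hence trace-class and compact, with eigenvalues $\lambda_0\ge\lambda_1\ge\cdots$ accumulating only at $0$ --- so that $\Tr[MP]$ converges absolutely and the eigenbasis expansion in the alternative route is complete. I would therefore state the lemma for such $M$ (or invoke the min-max definition of $\lambda_n$ directly) and let the argument above run without further qualification.
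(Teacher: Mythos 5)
The paper does not actually prove this lemma; it is imported verbatim as Proposition 6.4 of Ref.~\cite{kaftal2009infinite}, so there is no in-paper argument to compare against. Your proposal supplies a correct, self-contained proof, and the route you take --- reducing $\sum_n \langle v_n|M|v_n\rangle$ to $\Tr[MP]$ for the rank-$N$ projection $P$ and then invoking Ky Fan's maximum principle via Courant--Fischer interlacing for the compression $PMP$ --- is the standard one; the alternative eigenbasis/bathtub argument you sketch is equally valid and arguably more elementary, since it needs only Parseval ($\sum_j c_j = N$, $0\le c_j\le 1$) plus a rearrangement step rather than the min-max machinery. Your one caveat is the right one and is worth keeping explicit: for a general bounded positive $M$ the min-max values $\sup_{\dim S=n+1}\inf_{x\in S,\|x\|=1}\langle x|M|x\rangle$ need not be eigenvalues (they can collapse onto the top of the essential spectrum), so the statement as written tacitly presupposes that the $N$ largest eigenvalues exist and are returned by Courant--Fischer. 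This is automatic for compact --- in particular trace-class --- operators, and in every application in this paper (Lemma on the effective rank of moment-constrained states, and the tail bound $\sum_{j=r+1}^{d}\lambda_j^{\downarrow}\le \Tr[(\mathbb{1}-\Pi_r)\rho]$) the operator $M$ is a density operator, so the hypothesis is satisfied and your argument goes through without further qualification.
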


In the following lemma, we establish that any moment-constrained state is close in trace distance to a state with bounded rank.
\begin{lemma}[(Effective rank of moment-constrained states)]
\label{lemma_rank_ec_states}
Let $\varepsilon \in (0,1)$. Consider an $n$-mode state $\rho$ satisfying the $k$-th moment constraint $(\Tr[\rho \hat{N}_n^{k}])^{1/k} \leq n N_{\text{phot}}$. Then, $\rho$ is $\varepsilon$-close in trace distance to a state $\rho_{r_{\text{eff}}}$ with rank
$r_{\text{eff}} = \Theta\!\left(\frac{eN_{\text{phot}}}{\varepsilon^{1/k}}\right)^{\!n}$.

\noindent
More specifically, there exists a state $\rho_{r_{\text{eff}}}$ with rank $r_{\text{eff}}$, where
\begin{equation}
 r_{\text{eff}}\coloneqq \operatorname{rank}(\rho_{r_{\text{eff}}})\le\left(\frac{eN_{\text{phot}}}{\varepsilon^{1/k}}+2e\right)^n= \Theta\!\left(\frac{eN_{\text{phot}}}{\varepsilon^{1/k}}\right)^{\!n}\,,
\end{equation}
such that $\frac{1}{2} \|\rho - \rho_{r_{\text{eff}}}\|_1 \leq \varepsilon.$ 
\end{lemma}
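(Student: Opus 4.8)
The plan is to truncate the spectral decomposition of $\rho$ to its largest eigenvalues, and to control the discarded tail by combining the infinite-dimensional Schur--Horn theorem (Lemma~\ref{le:infSchur}) with the $k$-th moment constraint. This runs parallel to the effective-rank argument already sketched in the Methods, and reuses the very same operator inequality and combinatorial estimate as the proof of Lemma~\ref{le:effectivedim}.

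First I would write the spectral decomposition $\rho = \sum_{i=1}^\infty p_i^\downarrow\,\psi_i$, where the $\psi_i$ are the rank-one eigenprojectors and the eigenvalues $p_i^\downarrow$ are arranged in non-increasing order, and take as candidate the normalised truncation
\begin{equation}
    \rho_{r_{\text{eff}}} \coloneqq \frac{\sum_{i=1}^{r_{\text{eff}}} p_i^\downarrow\,\psi_i}{\sum_{i=1}^{r_{\text{eff}}} p_i^\downarrow}\,,
\end{equation}
which has rank at most $r_{\text{eff}}$. Since the $\psi_i$ are mutually orthogonal, a direct computation gives $\frac{1}{2}\|\rho-\rho_{r_{\text{eff}}}\|_1 = \sum_{i>r_{\text{eff}}} p_i^\downarrow$, so it suffices to choose $r_{\text{eff}}$ large enough that the spectral tail is at most $\varepsilon$.

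Next I would bound this tail by the weight of $\rho$ outside a finite-dimensional Fock subspace. Applying Lemma~\ref{le:infSchur} with $M = \rho$ and with $\{\ket{v_n}\}$ an orthonormal basis of the range of any rank-$r$ projector $\Pi$ yields $\sum_{i=1}^{r} p_i^\downarrow \ge \Tr[\Pi\rho]$, and hence $\sum_{i>r} p_i^\downarrow \le \Tr[(\mathbb{1}-\Pi)\rho]$. I would take $\Pi = \Pi_m$, the projector onto the space $\HH_m$ of~\eqref{def_H_m} of Fock states with total photon number at most $m$, set $r_{\text{eff}} \coloneqq \dim\HH_m = \binom{m+n}{n}$, and control the residual weight exactly as in the proof of Lemma~\ref{le:effectivedim}: via the operator inequality $m^k(\mathbb{1}-\Pi_m)\le \hat{N}_n^k$ and the $k$-th moment constraint,
\begin{equation}
    \sum_{i>r_{\text{eff}}} p_i^\downarrow \le \Tr[(\mathbb{1}-\Pi_m)\rho] \le \frac{\Tr[\hat{N}_n^k\rho]}{m^k} \le \frac{(nN_{\text{phot}})^k}{m^k}\,.
\end{equation}

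Choosing $m \coloneqq \lceil nN_{\text{phot}}/\varepsilon^{1/k}\rceil$ makes the right-hand side at most $\varepsilon$, and the estimate $\binom{m+n}{n}\le \left(\frac{e(m+n)}{n}\right)^n$ used before gives $r_{\text{eff}} \le \left(\frac{eN_{\text{phot}}}{\varepsilon^{1/k}}+2e\right)^n = \Theta\!\left(\frac{eN_{\text{phot}}}{\varepsilon^{1/k}}\right)^{\!n}$, as claimed. I do not expect a serious obstacle; the only delicate point is the correct use of the infinite-dimensional Schur--Horn inequality, which converts the weight of $\rho$ inside $\HH_m$ into a lower bound on the mass of its $r_{\text{eff}}$ largest eigenvalues. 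Note that the exponent of $\varepsilon$ is $1/k$ here, rather than the $2/k$ appearing in the effective-dimension bound of Lemma~\ref{le:effectivedim}: the Schur--Horn step bounds the $\ell^1$ spectral tail directly by the projection weight, thereby avoiding the square-root loss incurred by the gentle-measurement lemma.
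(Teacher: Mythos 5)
Your proposal is correct and follows essentially the same route as the paper's proof: spectral truncation, the infinite-dimensional Schur--Horn theorem applied to the Fock-space projector $\Pi_m$ with $m=\lceil nN_{\text{phot}}/\varepsilon^{1/k}\rceil$, and the operator inequality $m^k(\mathbb{1}-\Pi_m)\le \hat{N}_n^k$. The only (minor, and valid) difference is that you compute $\frac{1}{2}\|\rho-\rho_{r_{\text{eff}}}\|_1=\sum_{i>r_{\text{eff}}}p_i^\downarrow$ exactly for the normalised truncation, whereas the paper bounds the renormalisation error by the triangle inequality and absorbs a factor of $2$; both yield the stated conclusion.
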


\begin{proof}
    Let $\rho=\sum_{j=1}^\infty \lambda_j^{\downarrow} \psi_j$ be the the spectral decomposition of $\rho$, with $(\lambda^{\downarrow}_j)_{j\in\N_+}$ being its eigenvalues ordered in decreasing order with respect to $j$. For each $r_{\text{eff}}\in\N$, let us define the operator 
    \bb\label{def_theta_r}
        \Theta_{r_{\text{eff}}}\coloneqq \sum_{j=1}^{r_{\text{eff} }}\lambda_j^{\downarrow} \psi_j\,
    \ee
    which has rank equal to $r_{\text{eff}}$. Note that the \emph{infinite-dimensional Schur Horn theorem} (Lemma~\ref{le:infSchur}) implies that for any $r_{\text{eff}}$-rank projector $\Pi^{(r_{\text{eff}})}$  it holds that
    \bb
        \| \rho-\Theta_{r_\text{eff}}  \|_1=\sum_{j=r_{\text{eff}}+1}^{\infty}\lambda_j^{\downarrow}\le \Tr[(\mathbb{1}-\Pi^{(r_{\text{eff}})})\rho]\,.
    \ee
    Now, let us choose $\Pi^{(r_{\text{eff}})}$ to be the projector $\Pi_{m'}$ onto the subspace $\HH_{m'}$, defined in~\eqref{def_H_m}, spanned by the $n$-mode Fock states with total photon number at most $m'$, with $m'\coloneqq \ceil{\frac{nN_{\text{phot}}}{\varepsilon^{1/k}}}$. Hence, we have that
    \bb
        r_{\text{eff}}\coloneqq \Tr\Pi_{m'}=\binom{m'+n}{n}\le \left(\frac{e(m'+n)}{n}\right)^n\le\left(\frac{eN_{\text{phot}}}{\varepsilon^{1/k}}+2e\right)^n\,.
    \ee
    Thanks to~\eqref{ineq_involving_project}, we have that
    \bb
        \Tr[(\mathbb{1}-\Pi_{m'})\rho]\le \left(\frac{nN_{\text{phot}}}{m'}\right)^k\le \varepsilon\,,
    \ee
    and thus it holds that $\Tr[(\mathbb{1}-\Pi^{(r_{\text{eff}})})\rho]\le \varepsilon$. Consequently, we deduce that $\|\rho-\Theta_{r_\text{eff}}\|_1\le \varepsilon$. Therefore, by setting 
    \bb\label{def_rho_reff}
        \rho_{r_{\text{eff}}}\coloneqq \frac{\Theta_{r_\text{eff}}}{\Tr\Theta_{r_\text{eff}}}\,
    \ee
    we have that 
    \bb\label{chain_inequalities_theta}
        \|\rho-\rho_{r_{\text{eff}}}\|_1&=
        \left\|\rho-\frac{\Theta_{r_\text{eff}}}{\Tr\Theta_{r_\text{eff}}}\right\|_1\\
        &\le\|\rho-\Theta_{r_\text{eff}}\|_1+\left\|\Theta_{r_\text{eff}}-\frac{\Theta_{r_\text{eff}}}{\Tr\Theta_{r_\text{eff}}}\right\|_1\\
        &=\|\rho-\Theta_{r_\text{eff}}\|_1+\left|1-\frac{1}{\Tr\Theta_{r_\text{eff}}}\right|\Tr\Theta_{r_\text{eff}}\\
        &\le \varepsilon+\left(1- \Tr\Theta_{r_\text{eff}}\right)\\
        &\le \varepsilon+ \sum_{j=r_{\text{eff}}+1}^{\infty}\lambda_j^{\downarrow}\\
        &\le 2\varepsilon\,.
    \ee 
\end{proof}

We have demonstrated that an $n$-mode $k$-th moment-constrained state is $\varepsilon$-close in trace distance to a qudit state residing in a known Hilbert space of dimension $\deff \leq \left(\frac{eN_{\text{phot}}}{\varepsilon^{2/k}} + 2e\right)^n$. Additionally, we have shown that this moment-constrained state is also $\varepsilon$-close in trace distance to a state with rank $r_{\text{eff}} \leq \left(\frac{eN_{\text{phot}}}{\varepsilon^{1/k}} + 2e\right)^n$.

\noindent
Now, we combine the previous two lemmas to show that any moment-constrained state is close to a finite-dimensional state which is close to a state with smaller rank.
\begin{lemma}[(Low rank and finite-dimensional approximation of moment-constrained states)]
\label{le:lowrankapprox}
Let $\varepsilon \in (0,1)$.
Let $\rho$ be an $n$-mode state with $k$-moment bounded as $\left(\Tr[\hat{N}_n^k\rho]\right)^{1/k}\le nN_{\text{phot}}$. Let $\HH_m$ be the Hilbert space, defined in~\eqref{def_H_m}, spanned by all the $n$-mode Fock states with total photon number less than $m$ with $m \coloneqq \ceil{\frac{n N_{\text{phot}}}{\varepsilon^{2/k}}}$, with dimension 
\bb
    \dim\HH_m\le \left(\frac{eN_{\text{phot}}}{\varepsilon^{2/k}}+2e\right)^n= \Theta\!\left(\frac{N_{\text{phot}}}{\varepsilon^{2/k}}\right)^{\!n}\,.
\ee
Let $\rho_{\deff}$ be the projected state of $\rho$ onto $\HH_m$. 
Then, $\rho_{\deff}$ is $\eta(\varepsilon)$-close in trace distance to a state on $\HH_m$ with rank $r_{\text{eff}}$ satisfying
\bb
        r_{\text{eff}}\le \left(\frac{N_{\text{phot}}}{\varepsilon^{1/k}}+2e\right)^n= \Theta\!\left(\frac{N_{\text{phot}}}{\varepsilon^{1/k}}\right)^{\!n}\,,
\ee
where we have defined $\eta(\varepsilon)\coloneqq \left(2+\frac{1}{\sqrt{1-\varepsilon}}\right)\varepsilon$. 
\end{lemma}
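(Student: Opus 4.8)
The plan is to combine the two approximation results already established, namely the effective-dimension bound of Lemma~\ref{le:effectivedim} and the effective-rank bound of Lemma~\ref{lemma_rank_ec_states}, and glue them together with a single projection. First I would recall from Lemma~\ref{le:effectivedim} that, with $m\coloneqq\ceil{nN_{\text{phot}}/\varepsilon^{2/k}}$ and $\Pi_m$ the projector onto $\HH_m$, the projected state $\rho_{\deff}$ satisfies $\frac12\|\rho-\rho_{\deff}\|_1\le\varepsilon$, and moreover $\Tr[(\mathbb{1}-\Pi_m)\rho]\le\varepsilon^2$ (this is exactly estimate~\eqref{ineq_involving_project}, coming from $m^k(\mathbb{1}-\Pi_m)\le\hat{N}_n^k$ and the moment constraint). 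Second, from the proof of Lemma~\ref{lemma_rank_ec_states} I would extract the \emph{unnormalised} truncation $\Theta_{r_{\text{eff}}}=\sum_{j=1}^{r_{\text{eff}}}\lambda_j^\downarrow\psi_j$ of the spectral decomposition of $\rho$: it has rank $r_{\text{eff}}=\Theta((N_{\text{phot}}/\varepsilon^{1/k})^n)$, it obeys the operator inequality $0\le\Theta_{r_{\text{eff}}}\le\rho$, its trace satisfies $\Tr\Theta_{r_{\text{eff}}}=1-\sum_{j>r_{\text{eff}}}\lambda_j^\downarrow\ge1-\varepsilon$, and the normalised state $\rho_{r_{\text{eff}}}\coloneqq\Theta_{r_{\text{eff}}}/\Tr\Theta_{r_{\text{eff}}}$ is at trace distance at most $\varepsilon$ from $\rho$.

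The candidate low-rank state on $\HH_m$ I would propose is the projection $\sigma\coloneqq\Pi_m\rho_{r_{\text{eff}}}\Pi_m/\Tr[\Pi_m\rho_{r_{\text{eff}}}]$; since projecting cannot increase rank, $\sigma$ is supported on $\HH_m$ and has rank at most $r_{\text{eff}}$, as required. It then remains to bound $\frac12\|\rho_{\deff}-\sigma\|_1$, which I would do with a three-term triangle inequality routed through $\rho$ and $\rho_{r_{\text{eff}}}$: $\frac12\|\rho_{\deff}-\sigma\|_1\le\frac12\|\rho_{\deff}-\rho\|_1+\frac12\|\rho-\rho_{r_{\text{eff}}}\|_1+\frac12\|\rho_{r_{\text{eff}}}-\sigma\|_1$. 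The first two terms are each at most $\varepsilon$ by the two lemmas above. For the last term I would apply the gentle measurement lemma (as in Lemma~\ref{le:effectivedim}), giving $\frac12\|\rho_{r_{\text{eff}}}-\sigma\|_1\le\sqrt{\Tr[(\mathbb{1}-\Pi_m)\rho_{r_{\text{eff}}}]}$.

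The crux — and the step where care is needed — is controlling $\Tr[(\mathbb{1}-\Pi_m)\rho_{r_{\text{eff}}}]$ so that it yields the precise factor $1/\sqrt{1-\varepsilon}$ rather than a lossy $\sqrt{\varepsilon}$-type bound. Here I would crucially exploit that the truncation was taken in the eigenbasis of $\rho$, so that $\Theta_{r_{\text{eff}}}\le\rho$ as operators; combined with $\mathbb{1}-\Pi_m\ge0$ this gives $\Tr[(\mathbb{1}-\Pi_m)\Theta_{r_{\text{eff}}}]\le\Tr[(\mathbb{1}-\Pi_m)\rho]\le\varepsilon^2$. Dividing by $\Tr\Theta_{r_{\text{eff}}}\ge1-\varepsilon$ then gives $\Tr[(\mathbb{1}-\Pi_m)\rho_{r_{\text{eff}}}]\le\varepsilon^2/(1-\varepsilon)$, so the last term is at most $\varepsilon/\sqrt{1-\varepsilon}$. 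Summing the three contributions produces exactly $\frac12\|\rho_{\deff}-\sigma\|_1\le2\varepsilon+\varepsilon/\sqrt{1-\varepsilon}=\eta(\varepsilon)$.

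Finally I would note the compatibility of the two photon cutoffs: the rank truncation in Lemma~\ref{lemma_rank_ec_states} uses $m'=\ceil{nN_{\text{phot}}/\varepsilon^{1/k}}$, whereas here we project onto the strictly larger space $\HH_m$ with $m\ge m'$ (since $\varepsilon<1$), which only helps, as a larger projector makes $\Tr[(\mathbb{1}-\Pi_m)\rho]$ smaller while the two facts $\Theta_{r_{\text{eff}}}\le\rho$ and $\Tr\Theta_{r_{\text{eff}}}\ge1-\varepsilon$ are independent of which projector is applied afterwards. The main obstacle is therefore bookkeeping rather than a hard estimate: one must keep the operator inequality $\Theta_{r_{\text{eff}}}\le\rho$ explicitly available — which forces working with the unnormalised truncation rather than invoking a black-box ``nearby rank-$r_{\text{eff}}$ state'' — so that the denominator $1-\varepsilon$ appears cleanly and the constant in $\eta(\varepsilon)$ comes out right.
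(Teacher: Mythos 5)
Your proposal is correct and follows essentially the same route as the paper's proof: the same projection of $\rho_{r_{\text{eff}}}$ onto $\HH_m$, the same three-term triangle inequality through $\rho$ and $\rho_{r_{\text{eff}}}$, the gentle measurement lemma for the last term, and crucially the same use of the operator inequality $\Theta_{r_{\text{eff}}}\le\rho$ together with $\Tr\Theta_{r_{\text{eff}}}\ge 1-\varepsilon$ to obtain the $\varepsilon^2/(1-\varepsilon)$ bound and hence the exact constant in $\eta(\varepsilon)$. No gaps.
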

\begin{proof}   
    Thanks to Lemma~\ref{lemma_rank_ec_states}, there exists an $n$-mode state $\rho_{r_{\text{eff}}}$, with rank satisfying
 \bb
    \operatorname{rank}(\rho_{r_{\text{eff}}})\le\left(\frac{eN_{\text{phot}}}{\varepsilon^{1/k}}+2e\right)^n= \Theta\!\left(\frac{eN_{\text{phot}}}{\varepsilon^{1/k}}\right)^{\!n}\,,
 \ee
which is $\varepsilon$-close in trace distance to $\rho$. Let $\rho'$ be the projected state of $\rho_{r_{\text{eff}}}$ onto $\HH_m$, that is,
\bb
    \rho'\coloneqq\frac{\Pi_m\rho_{r_{\text{eff}}}\Pi_m}{\Tr[\Pi_m\rho_{r_{\text{eff}}}\Pi_m]},
\ee
where $\Pi_m$ is the projector onto $\HH_m$. Clearly, $\rho'$ is a state supported on $\HH_m$. The following chain of inequality shows that $\rho'$ is $\eta(\varepsilon)$-close in trace distance to $\rho_{\deff}$:
\bb
    \frac{1}{2}\|\rho_{\deff}-\rho'\|_1&\leqt{(i)} \frac{1}{2}\|\rho_{\deff}-\rho\|_1+ \frac{1}{2}\|\rho-\rho_{r_{\text{eff}}}\|_1+\frac{1}{2}\|\rho_{r_{\text{eff}}}-\rho'\|_1\\
    &\leqt{(ii)} 2\varepsilon+ \frac{1}{2}\|\rho_{r_{\text{eff}}}-\rho'\|_1\\
    &\leqt{(iii)} 2\varepsilon+\sqrt{\Tr[(\mathbb{1}-\Pi_m)\rho_{r_{\text{eff}}}]}\\
    &\leqt{(iv)} 2\varepsilon+\frac{\varepsilon}{\sqrt{1-\varepsilon}}\\
    &=\eta(\varepsilon)\,.
\ee
Here, in (i), we have used  triangle inequality. In (ii), we exploited Lemma~\ref{le:effectivedim} and Lemma~\ref{lemma_rank_ec_states} to ensure that $\frac{1}{2}\|\rho_{\deff}-\rho\|_1\le \varepsilon$ and $\frac{1}{2}\|\rho-\rho_{r_{\text{eff}}}\|_1$, respectively. In (iii), we have used the \emph{gentle measurement lemma} \cite[Lemma 6.15]{Sumeet_book}. In (iv), we have used  that
\bb
    \Tr[(\mathbb{1}-\Pi_m)\rho_{r_{\text{eff}}}]&\eqt{(v)}\frac{1}{\Tr \Theta_{r_{\text{eff}}}}\Tr\left[(\mathbb{1}-\Pi_m)  \Theta_{r_{\text{eff}}}   \right]\\
    &\leqt{(vi)} \frac{1}{1-\varepsilon}\Tr\left[(\mathbb{1}-\Pi_m)  \Theta_{r_{\text{eff}}}   \right]\\
    &\leqt{(vii)} \frac{1}{1-\varepsilon}\Tr\left[(\mathbb{1}-\Pi_m)  \rho  \right]\\
    &\leqt{(viii)} \frac{\varepsilon^2}{1-\varepsilon}\,.
\ee
Here, in (v), we have used  the definition $\rho_{r_{\text{eff}}}\coloneqq\frac{\Theta_{r_{\text{eff}}} }{\Tr \Theta_{r_{\text{eff}}} }$ in~\eqref{def_rho_reff}, where we recall that $\Theta_{r_{\text{eff}}}$ is defined in~\eqref{def_theta_r} as $\Theta_{r_{\text{eff}}}\coloneqq \sum_{j=1}^{r_{\text{eff} }}\lambda_j^{\downarrow} \psi_j$. In (vi), we have used  that $\Tr \Theta_{r_{\text{eff}}}\ge 1-\varepsilon$, as it follows from the third and fifth line of~\eqref{chain_inequalities_theta}. In (vii), we exploited that 
\bb
    \Theta_{r_{\text{eff}}}= \sum_{j=1}^{r_{\text{eff} }}\lambda_j^{\downarrow} \psi_j\le \sum_{j=1}^{\infty}\lambda_j^{\downarrow} \psi_j=\rho\,.
\ee
Finally, in (viii) we applied that~\eqref{ineq_involving_project}.
\end{proof}

\subsubsection{Related works on approximating moment-constrained states}
\label{subsub:comparison}

We demonstrated that an $n$-mode state $\rho$ with $k$-moment bounded by $N_{\text{phot}}$ can be approximated within trace distance precision $\varepsilon$ by a projected state living in a subspace of \textit{effective dimension} $\deff = \lceil (e N_{\text{phot}} / \varepsilon^{2/k} + 2e)^n \rceil$ (Lemma~\ref{le:effectivedim}). Naturally, the lower the effective dimension, the better the efficiency of the subsequent tomography algorithm for learning the projected state.

Previous works such as~\cite{PLOB,becker_classical_2023} have established similar results for approximating $n$-mode moment-constrained states with a projected finite-dimensional state. However, our Lemma~\ref{le:effectivedim} improves upon these findings. Specifically, our effective dimension scales as $\deff = \Theta (e N_{\text{phot}} / \varepsilon^{2/k})^n $, is in contrast to the scaling $\deff = \Theta (nN_{\text{phot}} / \varepsilon^{2/k})^n$ obtained in~\cite{PLOB,becker_classical_2023}.
Unlike \cite[Proposition 8]{becker_classical_2023}, which involves approximating an $n$-mode state with an $n$-qudit state, our projection approximates an $n$-mode state with a single-qudit state. Furthermore, Ref.\ \cite[Supplementary Note 1]{PLOB} exclusively focuses on the case $k=1$ (energy-constrained states). While they use a projection similar to ours, they end up with an unfavourable scaling $\deff = \Theta(nN_{\text{phot}} / \varepsilon)^n$ due to an overly crude upper bound on the effective dimension.
In conclusion, our bound on the effective dimension of moment-constrained states represents to our knowledge the first achieved bound that is non-super-exponential in the number of modes.

\subsubsection{Known optimal tomography algorithm for qudits}
\label{subsub:performance}
In this section, we review the guarantees of the optimal tomography algorithm known as the \emph{truncated version of Keyl’s algorithm}, which is detailed in Wright's PhD thesis~\cite[Theorem 1.4.13]{wrightHowLearnQuantum}.
\begin{lemma}[(Truncated version of Keyl’s algorithm~\cite{wrightHowLearnQuantum})]\label{Lemma_Wright}
    Let $\rho$ be an unknown qudit state of dimension $d$, and let $\lambda_1^{\downarrow} \ge \lambda_2^{\downarrow} \ge \ldots \ge \lambda_d^{\downarrow}$ be its eigenvalues. For any $r \in [d]$, $N$ copies of $\rho$ are sufficient to build an estimator $\tilde{\rho}$ with rank $r$ such that 
    \begin{equation}
        \mathbb{E} \|\rho - \tilde{\rho}\|_1 \le \sum_{j=r+1}^d \lambda_j^{\downarrow} + 6\sqrt{\frac{rd}{N}}.
    \end{equation}
\end{lemma}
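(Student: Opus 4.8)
The plan is to analyse the \emph{truncated Keyl measurement}, following Ref.~\cite{wrightHowLearnQuantum}; the central tool is Schur--Weyl duality. First I would decompose the $N$-copy space as $(\mathbb{C}^d)^{\otimes N}\cong\bigoplus_{\mu}\mathcal{Q}^d_\mu\otimes\mathcal{P}_\mu$, where the sum runs over partitions $\mu\vdash N$ with at most $d$ rows and $\mathcal{Q}^d_\mu$, $\mathcal{P}_\mu$ are the corresponding irreducible representations of $\mathrm{U}(d)$ and $S_N$. Because $\rho^{\otimes N}$ is permutation invariant, it is block-diagonal in this decomposition, and on each block it factorises as an operator on $\mathcal{Q}^d_\mu$ tensored with the identity on $\mathcal{P}_\mu$; this covariance is what makes the measurement below natural and exactly solvable.

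The estimator is produced by Keyl's two-stage measurement. In stage (i), \emph{weak Schur sampling}---the projective measurement $\{P_\mu\}$ onto the blocks---returns a random Young diagram $\mu$, whose normalisation $\bar\mu\coloneqq\mu/N$ is the spectrum estimate. In stage (ii), within the block $\mathcal{Q}^d_\mu$ one applies the $\mathrm{U}(d)$-covariant POVM with elements proportional to $U\ketbra{\mathrm{hw}_\mu}U^\dagger\,\mathrm{d}U$, built from the highest-weight vector $\ket{\mathrm{hw}_\mu}$ and Haar measure, returning a unitary $U$ that estimates the eigenbasis of $\rho$. I would then output the rank-$r$ truncation $\tilde\rho\coloneqq U\,\diag(\bar\mu_1,\dots,\bar\mu_r,0,\dots,0)\,U^\dagger$, renormalised to unit trace, keeping only the $r$ largest estimated eigenvalues together with their estimated eigenvectors.

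To bound $\mathbb{E}\|\rho-\tilde\rho\|_1$ I would split it, via the triangle inequality, into a deterministic truncation term and a statistical term. Writing $\rho_r$ for the exact projection of $\rho$ onto its top-$r$ eigenspaces, one has $\|\rho-\rho_r\|_1=\sum_{j=r+1}^d\lambda_j^\downarrow$, which supplies the first summand in the claim. For the statistical term $\mathbb{E}\|\rho_r-\tilde\rho\|_1$ I would separately control the deviation $\mathbb{E}\,\|\bar\mu-\lambda^\downarrow\|_1$ of the empirical diagram from the true sorted spectrum---which concentrates at scale $O(\sqrt{d/N})$ per coordinate by standard weak-Schur-sampling estimates---and the misalignment of the estimated eigenbasis $U$ with the true one on the top-$r$ subspace. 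This eigenvector-estimation bound is the main obstacle: it requires a sharp representation-theoretic estimate of the matrix coefficients $\bra{\mathrm{hw}_\mu}\pi_\mu(V)\ket{\mathrm{hw}_\mu}$ of the $\mathrm{U}(d)$-irrep $\pi_\mu$ on $\mathcal{Q}^d_\mu$, evaluated at the relative rotation $V=U^\dagger U_{\mathrm{true}}$, showing that $U$ concentrates around $U_{\mathrm{true}}$. Restricting attention to the top $r$ eigenvectors is precisely what trades the full-tomography factor $d^2$ for $rd$, yielding the statistical contribution $6\sqrt{rd/N}$. Combining the two terms and tracking the numerical constant---for which I would invoke Theorem~1.4.13 of Ref.~\cite{wrightHowLearnQuantum} directly rather than re-derive the sharp factor---gives the stated inequality.
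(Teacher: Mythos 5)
The paper does not prove this lemma at all --- it is imported verbatim as Theorem 1.4.13 of Wright's thesis~\cite{wrightHowLearnQuantum} --- and your sketch is an accurate outline of the argument found in that reference (Schur--Weyl block decomposition, weak Schur sampling for the spectrum, Keyl's covariant POVM for the eigenbasis, rank-$r$ truncation, and the triangle-inequality split into the truncation error $\sum_{j=r+1}^{d}\lambda_j^{\downarrow}$ plus a statistical term of order $\sqrt{rd/N}$). Since you, like the paper, ultimately defer to the cited theorem for the sharp constant $6$, your treatment is consistent with (and somewhat more informative than) the paper's, which consists only of the citation.
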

We now introduce a lemma that later will be useful for expressing the sample complexity of the tomography algorithm.
\begin{lemma}\label{lemma_coda_eigenvalues}
    Let $\rho$ be a qudit state of dimension $d$ (possibly $d = \infty$), and let $\lambda_1^{\downarrow} \ge \lambda_2^{\downarrow} \ge \ldots \ge \lambda_d^{\downarrow}$ be its eigenvalues ordered in decreasing order. For any state $\rho_r$ with rank $r$, we have
    \begin{equation}
        \sum_{j=r+1}^d \lambda_j^{\downarrow} \le \frac{1}{2} \|\rho - \rho_r\|_1.
    \end{equation}
\end{lemma}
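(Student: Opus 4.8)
The plan is to combine the variational (Holevo--Helstrom) characterisation of the trace distance with the extremal eigenvalue inequality already recorded in Lemma~\ref{le:infSchur}. Let $P$ denote the orthogonal projector onto the support of $\rho_r$; since $\rho_r$ is a state of rank $r$, the projector $P$ has rank $r$ and satisfies $\Tr[P\rho_r]=\Tr\rho_r=1$, hence $\Tr[(\mathds{1}-P)\rho_r]=0$. The idea is to use the operator $M\coloneqq\mathds{1}-P$, which obeys $0\le M\le\mathds{1}$, as a feasible test operator in the dual formulation of the trace norm.

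Concretely, first I would establish the elementary bound
\bb
    \tfrac{1}{2}\|\rho-\rho_r\|_1\ \ge\ \Tr\!\left[(\mathds{1}-P)(\rho-\rho_r)\right]\,.
\ee
This follows from writing $A\coloneqq\rho-\rho_r$ in its Jordan decomposition $A=A_+-A_-$ with $A_\pm\ge0$ and $A_+A_-=0$: for any $0\le M\le\mathds{1}$ one has $\Tr[MA]\le\Tr[MA_+]\le\Tr A_+$, and since $\Tr A=\Tr\rho-\Tr\rho_r=0$ we get $\Tr A_+=\Tr A_-=\tfrac12\|A\|_1$. This argument is valid also in the infinite-dimensional case $d=\infty$, as $\rho$ and $\rho_r$ are trace class.

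Next I would simplify the right-hand side using $\Tr[(\mathds{1}-P)\rho_r]=0$, obtaining
\bb
    \tfrac{1}{2}\|\rho-\rho_r\|_1\ \ge\ \Tr\!\left[(\mathds{1}-P)\rho\right]\ =\ 1-\Tr[P\rho]\,.
\ee
The final step is to control $\Tr[P\rho]$ by the top $r$ eigenvalues of $\rho$. Writing $P=\sum_{n=0}^{r-1}\ketbra{v_n}$ for some orthonormal family $\{\ket{v_n}\}_{n=0}^{r-1}$ and applying the infinite-dimensional Schur--Horn theorem (Lemma~\ref{le:infSchur}) with $M=\rho$ gives $\Tr[P\rho]=\sum_{n=0}^{r-1}\bra{v_n}\rho\ket{v_n}\le\sum_{j=1}^{r}\lambda_j^{\downarrow}$. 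Substituting and using $\sum_{j=1}^{d}\lambda_j^{\downarrow}=1$ yields
\bb
    \tfrac{1}{2}\|\rho-\rho_r\|_1\ \ge\ 1-\sum_{j=1}^{r}\lambda_j^{\downarrow}\ =\ \sum_{j=r+1}^{d}\lambda_j^{\downarrow}\,,
\ee
which is the claim. I expect no genuine obstacle here; the only point requiring care is ensuring every step (the dual-norm inequality and the Ky Fan/Schur--Horn eigenvalue bound) is invoked in its infinite-dimensional form, which is precisely why Lemma~\ref{le:infSchur} is phrased for a general Hilbert space and applies verbatim.
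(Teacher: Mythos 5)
Your proof is correct and follows essentially the same route as the paper's: both use the projector onto the support of $\rho_r$, the infinite-dimensional Schur--Horn inequality of Lemma~\ref{le:infSchur} applied to $\rho$, the identity $\Tr[(\mathds{1}-P)\rho_r]=0$, and the variational characterisation of the trace distance (which you prove from scratch via the Jordan decomposition rather than cite). The only difference is cosmetic — you run the chain of inequalities in the opposite order.
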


\begin{proof}
    Let $\Pi_r$ be the projector onto the support of $\rho_r$. Note that
    \bb
        \sum_{j=r+1}^{d}\lambda_j^{\downarrow}&\leqt{(i)} \Tr[(\mathbb{1}-\Pi_r)\rho]\\
        &\eqt{(ii)} \Tr[(\mathbb{1}-\Pi_r)(\rho-\rho_r)]\\
        &\le \max_{0\le E\le\mathbb{1}}\Tr[E(\rho-\rho_r)]\\
        &\eqt{(iii)} \frac{1}{2}\|\rho-\rho_r\|_1\,.
    \ee
    Here, in (i), we have used the \emph{(infinite) dimensional Schur Horn theorem} (Lemma~\ref{le:infSchur}); in (ii), we just exploited the fact that $\Pi_r\rho_r=\rho_r\Pi_r=\rho_r$, which holds true since $\Pi_r$ is the projector onto the support of $\rho_r$; in (iii), we just used the well-known variational characterisation of the trace distance in terms of POVM~\cite{NC}.
\end{proof}
We are now ready to provide the precise performance guarantees of the algorithm~\cite{wrightHowLearnQuantum}.
\begin{lemma}[(Sample complexity of the optimal tomography algorithms for qudit systems)]\label{lemma_tomography_rank_r}
    Let $\varepsilon,\delta\in(0,1)$. Let $\rho$ be an (unknown) state of dimension $d$ such that it is $\frac{\varepsilon}{3}$-close in trace distance to a state with rank $r$. Then, there exists a tomography algorithm such that given
    \bb
        N\ge 2^{18}\frac{rd}{\varepsilon^2} \log\left(\frac{2}{\delta}\right)
    \ee
    copies of $\rho$ can build (a classical description of) an $r$-rank state estimator $\tilde{\rho}$ satisfying
    \bb
        \Pr\left[\frac{1}{2} \|\rho-\tilde{\rho}\|_1\le\varepsilon \right]\ge 1-\delta\,.
    \ee
\end{lemma}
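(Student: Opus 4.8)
The plan is to reduce the claim to the expectation guarantee of the truncated Keyl algorithm (Lemma~\ref{Lemma_Wright}) and then convert the statement about $\mathbb{E}\|\rho-\tilde\rho\|_1$ into the desired high-probability bound in two stages: a constant success probability obtained via Markov's inequality, followed by amplification to $1-\delta$. First I would fix a rank-$r$ state $\rho_r$ witnessing the hypothesis, so that $\tfrac12\|\rho-\rho_r\|_1\le \varepsilon/3$. Feeding $\rho_r$ into Lemma~\ref{lemma_coda_eigenvalues} bounds the spectral tail of $\rho$ itself, namely $\sum_{j>r}\lambda_j^\downarrow\le \tfrac12\|\rho-\rho_r\|_1\le \varepsilon/3$. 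This is the conceptual heart of the argument: Keyl's algorithm necessarily outputs a rank-$r$ estimator and therefore pays the tail $\sum_{j>r}\lambda_j^\downarrow$, and the $\varepsilon/3$-closeness hypothesis is precisely what controls this otherwise uncontrollable contribution. Substituting the tail bound into Lemma~\ref{Lemma_Wright} yields $\mathbb{E}\|\rho-\tilde\rho\|_1\le \varepsilon/3+6\sqrt{rd/N_0}$ for a single run on $N_0$ copies.

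Next I would choose $N_0=\Theta(rd/\varepsilon^2)$ large enough that the statistical term $6\sqrt{rd/N_0}$ is only a small fraction of $\varepsilon$, so that $\mathbb{E}\|\rho-\tilde\rho\|_1$ is a controlled multiple of $\varepsilon$; Markov's inequality then gives $\Pr[\tfrac12\|\rho-\tilde\rho\|_1\le \varepsilon']\ge p_{\mathrm{succ}}$ for a fixed constant $p_{\mathrm{succ}}$ strictly above $1/2$ at a base scale $\varepsilon'$ proportional to $\varepsilon$. Finally I would amplify from $p_{\mathrm{succ}}$ to $1-\delta$ by executing the base algorithm $m=O(\log(2/\delta))$ times and invoking the metric-space median argument of Lemma~\ref{le:enhance-success_states}: the output is the estimator lying within $2\varepsilon'$ of more than half of the others, which with probability at least $1-\delta$ is within $3\varepsilon'$ of $\rho$. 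The total sample cost is $mN_0=O\!\big((rd/\varepsilon^2)\log(2/\delta)\big)$, and propagating the explicit constants through Keyl's factor $6$, the Markov threshold, and the amplification overhead is what produces the stated prefactor $2^{18}$.

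The delicate step, and the main obstacle, is the error bookkeeping in this last stage. The median argument of Lemma~\ref{le:enhance-success_states} inflates the base precision by a factor of three, so $\varepsilon'$ must be taken at the finer scale $\sim\varepsilon/3$ rather than at $\varepsilon$; at the same time the spectral tail, which can itself be as large as $\varepsilon/3$, must still fit inside the Markov budget at this finer scale while leaving $p_{\mathrm{succ}}>1/2$. Balancing these three competing contributions — the unavoidable tail $\sum_{j>r}\lambda_j^\downarrow$, the tunable statistical error $6\sqrt{rd/N_0}$, and the threefold amplification blow-up — simultaneously against the error budget $\varepsilon$ and the probability budget is where all the care resides and where the large constant is absorbed. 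Everything else in the argument is a routine substitution of the already-established lemmas.
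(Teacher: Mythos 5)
Your proposal is correct and follows essentially the same route as the paper's proof: bound the spectral tail by $\varepsilon/3$ via Lemma~\ref{lemma_coda_eigenvalues}, apply Lemma~\ref{Lemma_Wright} with $N'=\Theta(rd/\varepsilon^2)$ and Markov's inequality to reach success probability $2/3$, then amplify with Lemma~\ref{le:enhance-success_states} and absorb the threefold precision loss by rescaling $\varepsilon\mapsto\varepsilon/3$. The bookkeeping you flag as delicate is handled in the paper exactly as you describe, yielding the $2^{18}$ prefactor.
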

\begin{proof}
    Thanks to Lemma~\ref{Lemma_Wright}, $N'$ copies of $\rho$ are sufficient in order to build an estimator $\tilde{\rho}$ (with rank $r$) such that 
    \bb
        \E \|\rho-\tilde{\rho}\|_1 \le \sum_{j=r+1}^d\lambda_j^{\downarrow} +6\sqrt{\frac{rd}{N'}}\,,
    \ee
    where $\lambda_1^{\downarrow}\ge\lambda_2^{\downarrow}\ge\ldots\ge \lambda_d^{\downarrow}$ are the eigenvalues of $\rho$.
    By assumption, there exists a state $\rho_r$ with rank $r$ such that $\frac{1}{2}\|\rho-\rho_r\|_1\le\frac{\varepsilon}{3}$. 
    Consequently, Lemma~\ref{lemma_coda_eigenvalues} implies that $\sum_{j=r+1}^d\lambda_j^{\downarrow}\le\frac{\varepsilon}{3}$.
    Hence, by choosing $N'\coloneqq\left\lceil\frac{(18)^2rd}{\varepsilon^2}\right\rceil$ and by applying Markov's inequality, we have that
    \bb
        \Pr\left[ \frac{1}{2}\|\rho-\tilde{\rho}\|_1\ge \varepsilon \right]& \le \frac{\E\|\rho-\tilde{\rho}\|_1}{2\varepsilon}\\
        &\le \frac{\sum_{j=r+1}^d\lambda_j^{\downarrow} +6\sqrt{\frac{rd}{N'}}}{2\varepsilon}\\
        &\le\frac{1}{6}+\frac{3}{\varepsilon}\sqrt{\frac{rd}{N'}}\\
        &\le \frac{1}{3}\,.
    \ee
    Consequently, we have proved that $N'= \left\lceil 324 \frac{rd}{\varepsilon^2}\right\rceil$ copies of $\rho$ are sufficient in order to build an estimator $\tilde{\rho}$ such that $\Pr\left[ \frac{1}{2}\|\rho-\tilde{\rho}\|_1\le \varepsilon \right]\ge \frac{2}{3}$. Now, we can apply the standard argument presented in Lemma~\ref{le:enhance-success_states} with $P_{\text{succ}}=\frac{2}{3}$ in order to enhance the probability of success from $\frac{2}{3}$ to $1-\frac{\delta}{2}$. From such a lemma, by defining
    \bb
        m &\coloneqq \left\lceil \frac{2}{\left(1-\frac{1}{2p_{\mathrm{succ}}}\right)^2p_{\mathrm{succ}}}\log\!\left(\frac{2}{\delta}\right)  \right\rceil =\left\lceil48\log\left(\frac{2}{\delta}\right)\right\rceil\,,
    \ee
    it follows that a total of 
    \bb
        mN'=\left\lceil48\log\left(\frac{2}{\delta}\right)\right\rceil\left\lceil 324 \frac{rd}{\varepsilon^2}\right\rceil
    \ee
    copies of $\rho$ suffices in order to build an estimator $\tilde{\rho}$ such that $\Pr\left[ \frac{1}{2}\|\rho-\tilde{\rho}\|_1\le 3\varepsilon \right]\ge 1-\frac{\delta}{2}\ge1-\delta$. Finally, by redefining $\varepsilon\mapsto\frac{\varepsilon}{3}$, we have that
    \bb
    \left\lceil48\log\left(\frac{2}{\delta}\right)\right\rceil\left\lceil 324 \frac{rd}{(\varepsilon/3)^2}\right\rceil\le 2^{18}\frac{rd}{\varepsilon^2}\log\left(\frac{2}{\delta}\right)\,,
    \ee
    and thus we conclude that $2^{18}\frac{rd}{\varepsilon^2}\log\left(\frac{2}{\delta}\right)$ copies of $\rho$ are sufficient in order to build an estimator $\tilde{\rho}$ such that $\Pr\left[ \frac{1}{2}\|\rho-\tilde{\rho}\|_1\le \varepsilon \right]\ge1-\delta$.
\end{proof}
One can prove that $\tildeTheta\left(\frac{rd}{\varepsilon^2}\right)$ copies are not only sufficient (as established by the lemma above) but also necessary to perform quantum state tomography of a finite-dimensional quantum state~\cite{haah_optimal_2021, anshu2023survey}.


\subsubsection{Tomography algorithm for moment-constrained mixed and pure states}
\label{subsub:algorithm}
In this subsection, we present a tomography algorithm to learn a classical description of an unknown $n$-mode moment-constrained state (pure and mixed), together with its sample complexity analysis.
The tomography algorithm involves two main steps: first, projecting onto a finite-dimensional subspace (specifically, the subspace defined in Lemma~\ref{le:effectivedim}), and second, applying a known tomography algorithm designed for finite-dimensional states (specifically, the algorithm described in Lemma~\ref{lemma_tomography_rank_r}). Table~\ref{Table_tomography_ec} presents the steps of our tomography algorithm. We begin with Theorem~\ref{correctness_algorithm_ECmixed}, which analyses the sample complexity of our tomography algorithm for moment-constrained mixed states. The basic idea of Theorem~\ref{correctness_algorithm_ECmixed} is the following: the unknown moment-constrained mixed state is effectively a qudit state of dimension $\deff=O\!\left(\frac{eN_{\text{phot}}}{\varepsilon^{2/k}}\right)^n$ and rank $r_{\text{eff}}=O\!\left(\frac{eN_{\text{phot}}}{\varepsilon^{1/k}}\right)^n$ (as established by Lemma~\ref{le:lowrankapprox}), and thus the sample complexity of tomography is $O\!\left(r_{\text{eff}}\,\deff\right)=O\!\left(\frac{N_{\text{phot}}}{\varepsilon^{3/(2k)}}\right)^{2n}$ (thanks to Lemma~\ref{lemma_tomography_rank_r}).

\begin{table}[t]
  \caption{Tomography algorithm for $n$-mode $k$-th moment constrained states (pure and mixed). The algorithm for pure states and the one for mixed states are the same, apart from two distinctions: the number $N$ of copies of the unknown state, and the specific input parameters provided to the subroutine called at Line 10 of the algorithm. For the mixed case the details of the algorithm are provided in the proof of Theorem~\ref{correctness_algorithm_ECmixed}; while for the pure case the details are reported in the proof of Theorem~\ref{correctness_algorithm_ECpure}.}
  \label{Table_tomography_ec}
  \begin{mdframed}[linewidth=2pt, roundcorner=10pt, backgroundcolor=white!10, innerbottommargin=10pt, innertopmargin=10pt]
    \textbf{Input:} Accuracy $\varepsilon$, failure probability $\delta$, $N$ copies of the unknown $n$-mode moment-constrained state $\rho$ (as defined in Theorem~\ref{correctness_algorithm_ECmixed} and Theorem~\ref{correctness_algorithm_ECpure} for mixed and pure states, respectively).\\
    \textbf{Output:} A classical description of $\tilde{\rho}$, such that $\frac{1}{2}\|\tilde{\rho}- \rho\|_1\le \varepsilon$ with probability at least $1-\delta$.
    \begin{algorithmic}[1]
      \For{$i \leftarrow 1$ \textbf{to} $N$}
         \State Query a copy of $\rho$.
          \State Perform the POVM $\{\Pi_{m},\mathbb{1}-\Pi_{m}\}$, where $\Pi_m$ is the projector onto the subspace $\HH_{m}$, defined in Eq.~\eqref{def_H_m}.
          \If{the POVM outcome corresponds to $\mathbb{1}-\Pi_{m}$ }
              \State Discard the post-measurement state.
          \Else
              \State Keep the post-measurement state.
          \EndIf
      \EndFor
     \State Perform the full state tomography algorithm described in Lemma~\ref{lemma_tomography_rank_r} on the kept copies of the post-measurement states, obtaining as output the classical description of a state $\tilde{\rho}$. 
      \State \Return $\tilde{\rho}$.
    \end{algorithmic}
  \end{mdframed}
\end{table}

\begin{thm}[(Learning moment-constrained mixed states)]
\label{correctness_algorithm_ECmixed}
Let $\varepsilon, \delta \in (0,1)$. Let $\rho$ be an $n$-mode state with $k$-moment bounded as $\left(\Tr[\hat{N}_n^k\rho]\right)^{1/k} \le nN_{\text{phot}}$. There exists a quantum algorithm that, utilising 
\begin{align}
N = \left \lceil 2^{21}\frac{r_{\text{eff}}\,\deff}{\varepsilon^2} \log\left(\frac{4}{\delta}\right) \right \rceil = O\!\left(\frac{N_{\text{phot}}}{\varepsilon^{3/(2k)}}\right)^{2n}
\end{align}
copies of $\rho$, generates a classical representation of a $\deff$-dimensional state $\tilde{\rho}$ with rank $r_{\text{eff}}$ such that
\begin{align}
\Pr\left[\frac{1}{2} \|\rho - \tilde{\rho}\|_1 \le \varepsilon \right] \ge 1-\delta\,,
\end{align}
where $\deff \le \left(\frac{eN_{\text{phot}}}{\varepsilon^{2/k}} + 2e\right)^n$ and $r_{\text{eff}} \le \left(\frac{eN_{\text{phot}}}{(\frac{\varepsilon}{20})^{1/k}} + 2e\right)^n$. 
\end{thm}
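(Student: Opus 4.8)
The plan is to reduce the infinite-dimensional tomography problem to a finite-dimensional one by means of the energy-cutoff projection $\Pi_m$, and then to invoke the optimal qudit tomography routine of Lemma~\ref{lemma_tomography_rank_r}. First I would fix the cutoff $m$ as in Lemma~\ref{le:effectivedim} (with an internal accuracy of order $\varepsilon$) so that the projected state $\rho_{\deff}=\Pi_m\rho\,\Pi_m/\Tr[\Pi_m\rho]$ obeys $\tfrac12\|\rho-\rho_{\deff}\|_1\le \varepsilon/2$ and lives in the space $\HH_m$ of dimension $\deff=O\!\big((N_{\text{phot}}/\varepsilon^{2/k})^n\big)$. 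The crucial structural observation is that Lines~1--9 of the algorithm in Table~\ref{Table_tomography_ec} prepare exactly this state: a projective measurement of each queried copy with the two-outcome POVM $\{\Pi_m,\mathbb{1}-\Pi_m\}$ leaves, conditioned on the outcome $\Pi_m$, the post-measurement state $\Pi_m\rho\,\Pi_m/\Tr[\Pi_m\rho]=\rho_{\deff}$, and distinct copies yield independent such states. Hence the retained copies are i.i.d.\ copies of the single finite-dimensional state $\rho_{\deff}$, on which any qudit tomography algorithm may be run.

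Second, I would control how many of the $N$ copies survive the projection. By Eq.~\eqref{ineq_involving_project}, each projection keeps its copy with probability $p_{\mathrm{keep}}=\Tr[\Pi_m\rho]\ge 1-\varepsilon^2\ge \tfrac12$. Applying Lemma~\ref{le:enhance-success} with $p_{\mathrm{succ}}=p_{\mathrm{keep}}$ and failure budget $\delta/2$, a number of trials $N$ equal to a constant multiple of $N'+\log(1/\delta)$ guarantees that at least $N'$ copies are kept with probability $\ge 1-\delta/2$, where $N'$ is the sample size demanded by the finite-dimensional subroutine. Because $p_{\mathrm{keep}}\ge\tfrac12$, the overhead factor $\tfrac{3}{2p_{\mathrm{keep}}}$ is at most $3$; this constant-factor overhead, together with the $\delta/2$ split, is what upgrades the $2^{18}$ prefactor of Lemma~\ref{lemma_tomography_rank_r} to the $2^{21}\log(4/\delta)$ prefactor in the claim.

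Third, I would run the qudit tomography of Lemma~\ref{lemma_tomography_rank_r} on the $\ge N'$ retained copies of $\rho_{\deff}$. To meet its hypothesis I invoke Lemma~\ref{le:lowrankapprox}: taking the internal error small enough that $\eta(\cdot)\le \tfrac13\cdot\tfrac{\varepsilon}{2}$ makes $\rho_{\deff}$ be $\tfrac{\varepsilon}{6}$-close to a state on $\HH_m$ of rank $r_{\text{eff}}=O\!\big((N_{\text{phot}}/(\varepsilon/20)^{1/k})^n\big)$; the factor $\varepsilon/20$ in the stated rank bound originates here, since $\eta(x)\approx 3x$ forces the internal accuracy down by a constant. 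Lemma~\ref{lemma_tomography_rank_r}, run with accuracy $\varepsilon/2$, rank $r_{\text{eff}}$, dimension $\deff$ and failure probability $\delta/2$, then returns a rank-$r_{\text{eff}}$ estimator $\tilde\rho$ with $\tfrac12\|\rho_{\deff}-\tilde\rho\|_1\le \varepsilon/2$ from $N'=O(r_{\text{eff}}\deff/\varepsilon^2\cdot\log(1/\delta))$ copies. A final triangle inequality $\tfrac12\|\rho-\tilde\rho\|_1\le \tfrac12\|\rho-\rho_{\deff}\|_1+\tfrac12\|\rho_{\deff}-\tilde\rho\|_1\le \varepsilon/2+\varepsilon/2=\varepsilon$, together with a union bound over the two failure events (too few kept copies; tomography failure), delivers the claimed guarantee with total failure probability $\le\delta$; substituting the scalings of $\deff$ and $r_{\text{eff}}$ gives $N=O\!\big((N_{\text{phot}}/\varepsilon^{3/(2k)})^{2n}\big)$.

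The step I expect to be the main obstacle is the simultaneous bookkeeping of three interacting error budgets --- the projection error that pins down $\deff$, the rank-truncation error that pins down $r_{\text{eff}}$, and the finite-dimensional tomography accuracy --- arranged so that their sum is exactly $\varepsilon$ while the product $r_{\text{eff}}\deff$ retains the advertised $(N_{\text{phot}}^2/\varepsilon^{3/k})^n$ scaling and all constants collapse into the clean $2^{21}\log(4/\delta)$ prefactor. A secondary subtlety is the two-layer probabilistic analysis: the randomness in how many copies survive the cutoff must be separated from the randomness internal to the tomography subroutine and handled with independent failure budgets, so that the union bound closes at $\delta$.
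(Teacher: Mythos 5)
Your proposal is correct and follows essentially the same route as the paper's proof: project with the POVM $\{\Pi_m,\mathbb{1}-\Pi_m\}$, use Lemma~\ref{le:enhance-success} to guarantee enough surviving copies of $\rho_{\deff}$, invoke Lemma~\ref{le:lowrankapprox} with internal accuracy $\varepsilon/20$ so that $\eta(\varepsilon/20)\le\varepsilon/6$, run the rank-$r_{\text{eff}}$ qudit tomography of Lemma~\ref{lemma_tomography_rank_r} at accuracy $\varepsilon/2$, and close with the triangle inequality and a union bound over the two $\delta/2$ failure events. The only deviations are in constant bookkeeping (the paper chooses $m$ so that the keep probability is $\ge 3/4$ rather than $\ge 1/2$), which does not affect the argument.
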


 \begin{proof}
We aim to establish the correctness of the algorithm presented in Table~\ref{Table_tomography_ec}. The algorithm queries $\ceil{2N'+24\log\!\left(\frac{2}{\delta}\right)}$ copies of $\rho$, where we will fix $N'$ later. Let $m \coloneqq \ceil{\frac{n N_{\text{phot}}}{(\varepsilon/2)^{2/k}}}$. On each copy, it executes the POVM $\{\Pi_{m},\mathbb{1}-\Pi_{m}\}$, where $\Pi_m \coloneqq \sum_{\textbf{m}:\, \sum_{i=1}^n m_i \leq m}\ketbra{\textbf{m}}$ is the projector onto the subspace $\HH_{m}$, defined in Eq.~\eqref{def_H_m}, spanned by the $n$-mode Fock states with total photon number at most $m$, with dimension (see Lemma~\ref{le:effectivedim})
\begin{align}
\deff \coloneqq \dim\HH_m \le \left(\frac{eN_{\text{phot}}}{\varepsilon^{2/k}}+2e\right)^n = \Theta\!\left(\frac{eN_{\text{phot}}}{\varepsilon^{2/k}}\right)^n.
\end{align}
The probability of obtaining the first POVM outcome is 
\bb
\Tr\left[\Pi_m \rho \right] &= 1 - \Tr[(\mathbb{1}-\Pi_m)\rho] \\
&\ge 1 - \frac{1}{m^k}\Tr\left[\hat{N}_{n}^{k}\rho\right] \\
&\ge 1 - \frac{(nN_{\text{phot}})^k}{m^k} \\
&\ge 1 - \frac{\varepsilon^2}{4} \\
&\ge \frac{3}{4}\,,
\ee
where the second step follows from the operator inequality $m^k(\mathbb{1}-\Pi_m) \le \hat{N}_{n}^{k}$ and the third step follows from the hypothesis on the $k$-th moment of $\rho$.

\noindent
Therefore, applying Lemma~\ref{le:enhance-success}, we can assert that the algorithm, which uses a number of copies $\ceil{2N'+24\log\!\left(\frac{2}{\delta}\right)}$, with a probability $\geq 1-\frac{\delta}{2}$ obtains at least $N'$ copies of the post-measurement state $\rho_{\deff} \coloneqq \Pi_m\rho\,\Pi_m/\Tr\left[ \Pi_m \rho\right]$. Due to Lemma~\ref{le:effectivedim}, the post-measurement state satisfies
\begin{align}
\frac{1}{2}\left\| \rho_{\deff} -\rho \right\|_1 \leq \frac{\varepsilon}{2}.
\end{align}
Moreover, Lemma~\ref{le:lowrankapprox} implies that $\rho_{\deff}$ is $\eta(\varepsilon/20)$-close in trace distance to a state supported on $\HH_m$ with rank $r_{\text{eff}}$ satisfying
\begin{align}
r_{\text{eff}} \le \left(\frac{eN_{\text{phot}}}{(\frac{\varepsilon}{20})^{1/k}} + 2e\right)^n = \Theta\left(\frac{N_{\text{phot}}}{\varepsilon^{1/k}}\right)^n\,,
\end{align}
where $\eta(\frac{\varepsilon}{20}) \coloneqq \left(2 + \frac{1}{\sqrt{1 - \frac{\varepsilon}{20}}}\right)\frac{\varepsilon}{20} \le \frac{\varepsilon}{6}$.

\noindent
Thus, from Lemma~\ref{lemma_tomography_rank_r}, it follows that there exists a quantum algorithm that, utilising only
\begin{align}
N' &\ge 2^{18}\frac{r_{\text{eff}}\,\deff}{(\varepsilon/2)^2} \log\left(\frac{4}{\delta}\right)
\end{align}
copies of the post-measurement state $\rho_{\deff}$, builds a classical description of an $r_{\text{eff}}$-rank state estimator $\tilde{\rho}$ such that, with a probability $\geq 1-\frac{\delta}{2}$, it holds that
\begin{align}
\frac{1}{2} \|\rho_{\deff}-\tilde{\rho}\|_1 \le \frac{\varepsilon}{2}.
\end{align}
By the triangle inequality, we then conclude that $\frac{1}{2}\left\| \rho -\tilde{\rho} \right\|_1 \leq \varepsilon$. The total failure probability of the algorithm is $\leq \delta$ by applying a union bound.
\end{proof}

We now proceed with Theorem~\ref{correctness_algorithm_ECpure}, which shows an upper bound on the sample complexity of tomography of moment-constrained pure states. The basic idea of Theorem~\ref{correctness_algorithm_ECpure} is the following: the unknown moment-constrained pure state is effectively a pure qudit state of dimension $\deff=O\!\left(\frac{eN_{\text{phot}}}{\varepsilon^{2/k}}\right)^n$ (as established by Lemma~\ref{le:effectivedim}), and thus the sample complexity of tomography is $O\!\left(\deff\right)= O\!\left(\frac{N_{\text{phot}}}{\varepsilon^{2/k}}\right)^{n}$ (thanks to Lemma~\ref{lemma_tomography_rank_r}).
\begin{thm}[(Learning moment-constrained pure states)]
\label{correctness_algorithm_ECpure}
Let $\varepsilon, \delta \in (0,1)$. Consider an $n$-mode pure state $\psi$ with its $k$-th moment bounded by $\left(\Tr[\hat{N}_n^k\psi ]\right)^{1/k} \leq nN_{\text{phot}}$. There exists a quantum algorithm that, utilising 
\begin{align}
N = \left\lceil 2^{21}\frac{\deff}{\varepsilon^2} \log\left(\frac{4}{\delta}\right)\right\rceil= O\!\left(\frac{N_{\text{phot}}}{\varepsilon^{2/k}}\right)^{n}
\end{align}
copies of $\psi$, generates a classical representation of a pure state $\tilde{\psi}$ such that
\bb
\Pr\left[\frac{1}{2} \|\psi-\tilde{\psi}\|_1 \le \varepsilon \right] \ge 1-\delta,
\ee
where $\deff \le \left(\frac{eN_{\text{phot}}}{\varepsilon^{2/k}}+2e\right)^n$. 
\end{thm}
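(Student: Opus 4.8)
The plan is to run exactly the two-step algorithm of Table~\ref{Table_tomography_ec}---projection onto a finite-dimensional Fock subspace followed by a finite-dimensional tomography routine---and to exploit the crucial simplification that, unlike in the mixed case of Theorem~\ref{correctness_algorithm_ECmixed}, the projection of a \emph{pure} state remains pure. Concretely, if $\psi=\ketbra{\psi}$ and $\Pi_m$ is the projector onto $\HH_m$, then $\Pi_m\psi\,\Pi_m=\ketbra{v}$ with $\ket{v}\coloneqq\Pi_m\ket{\psi}$, so after renormalisation the post-measurement state $\psi_{\deff}\coloneqq\Pi_m\psi\,\Pi_m/\Tr[\Pi_m\psi]$ has rank exactly one. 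This means the effective rank is $r_{\text{eff}}=1$, so I can bypass the low-rank approximation step (Lemma~\ref{le:lowrankapprox}) entirely and feed $r=1$ directly into the qudit tomography algorithm.

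First I would fix $m\coloneqq\ceil{nN_{\text{phot}}/(\varepsilon/2)^{2/k}}$ and invoke Lemma~\ref{le:effectivedim} to obtain both the dimension bound $\deff=\dim\HH_m\le(eN_{\text{phot}}/\varepsilon^{2/k}+2e)^n$ and the approximation guarantee $\tfrac12\|\psi_{\deff}-\psi\|_1\le\varepsilon/2$. As in the mixed-state proof, the operator inequality $m^k(\mathbb{1}-\Pi_m)\le\hat{N}_n^k$ together with the $k$-th moment constraint gives $\Tr[\Pi_m\psi]\ge 1-(nN_{\text{phot}})^k/m^k\ge 3/4$, so each projective measurement succeeds with probability at least $3/4$. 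Querying $\ceil{2N'+24\log(2/\delta)}$ copies and applying Lemma~\ref{le:enhance-success} with $p_{\mathrm{succ}}=3/4$ then guarantees that, with probability at least $1-\delta/2$, at least $N'$ copies of the pure state $\psi_{\deff}$ survive the projection.

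Next I would apply Lemma~\ref{lemma_tomography_rank_r} to $\psi_{\deff}$ with rank parameter $r=1$, dimension $\deff$, target accuracy $\varepsilon/2$, and failure probability $\delta/2$; since $\psi_{\deff}$ is exactly rank one it is trivially $0$-close to a rank-$1$ state and the hypothesis of that lemma is met. This shows that $N'\ge 2^{18}\,\deff/(\varepsilon/2)^2\,\log(4/\delta)=2^{20}\,\deff\,\varepsilon^{-2}\log(4/\delta)$ surviving copies suffice to output a pure estimator $\tilde{\psi}$ with $\tfrac12\|\psi_{\deff}-\tilde{\psi}\|_1\le\varepsilon/2$ with probability $\ge 1-\delta/2$. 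The triangle inequality upgrades this to $\tfrac12\|\psi-\tilde{\psi}\|_1\le\varepsilon$, a union bound over the two failure events gives total failure $\le\delta$, and substituting $N'$ into $\ceil{2N'+24\log(2/\delta)}$ yields the announced total $N=\ceil{2^{21}\,\deff\,\varepsilon^{-2}\log(4/\delta)}=O((N_{\text{phot}}/\varepsilon^{2/k})^n)$.

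There is essentially no serious obstacle here: the entire argument is a direct specialisation of the mixed-state proof of Theorem~\ref{correctness_algorithm_ECmixed}. The only point requiring care---and the reason the sample complexity drops from $O(r_{\text{eff}}\,\deff/\varepsilon^2)$ to $O(\deff/\varepsilon^2)$---is the observation that purity is preserved under the Fock-space projection, which fixes $r_{\text{eff}}=1$; the remaining work is just tracking the constants through Lemmas~\ref{le:enhance-success} and~\ref{lemma_tomography_rank_r} so that they assemble into the stated prefactor $2^{21}$.
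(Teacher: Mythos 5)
Your proposal is correct and follows essentially the same route as the paper's own proof: project onto $\HH_m$ with $m=\lceil nN_{\text{phot}}/(\varepsilon/2)^{2/k}\rceil$, use the success probability $\ge 3/4$ together with Lemma~\ref{le:enhance-success} to retain $N'$ copies of $\psi_{\deff}$, and then invoke Lemma~\ref{lemma_tomography_rank_r} with $r=1$ before concluding by the triangle inequality and a union bound. The observation you highlight---that the Fock-space projection preserves purity, so the rank parameter is exactly $1$ and Lemma~\ref{le:lowrankapprox} is not needed---is precisely what the paper exploits, and your constant-tracking matches the stated $2^{21}$ prefactor.
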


\begin{proof}
   We adopt the same notation used in Theorem~\ref{correctness_algorithm_ECmixed}.
   The algorithm is presented in Table~\ref{Table_tomography_ec}, and the proof of its correctness follows similar steps to the previous mixed state case (Theorem~\ref{correctness_algorithm_ECmixed}). Specifically, the algorithm queries $\ceil{2N'+24\log\!\left(\frac{2}{\delta}\right)}$ copies of $\psi$ and on each of them performs the POVM $\{\Pi_{m},\mathbb{1}-\Pi_{m}\}$, where $\Pi_m$ is the projector onto the subspace $\HH_{m}$ (defined in Eq.~\eqref{def_H_m}). As in the proof of Theorem~\ref{correctness_algorithm_ECmixed}, with probability $\geq 1-\frac{\delta}{2}$, on at least $N'$ copies the post-measurement state will be $\psi_{\deff} \coloneqq \Pi_m\psi \,\Pi_m/\Tr\left[ \Pi_m \psi\right]$, satisfying $\frac{1}{2}\left\| \psi_{\deff} -\psi \right\|_1 \leq \frac{\varepsilon}{2}$.

\noindent
   By applying Lemma~\ref{lemma_tomography_rank_r} with $r=1$, there exists a quantum algorithm that, utilising
   \begin{align}
   N' &\ge 2^{18}\frac{\deff}{(\varepsilon/2)^2} \log\left(\frac{4}{\delta}\right),
   \end{align}
   copies of the post-measurement state $\psi_{\deff}$, builds a classical description of a pure state $\tilde{\psi}$ such that, with a probability $\geq 1-\frac{\delta}{2}$, it holds that
   \bb
   \frac{1}{2} \|\psi_{\deff}-\tilde{\psi}\|_1 \le \frac{\varepsilon}{2}.
   \ee
   By the triangle inequality, we conclude that $\frac{1}{2}\left\| \psi -\tilde{\psi} \right\|_1 \leq \varepsilon$. The total failure probability of the algorithm is $\leq \delta$ by applying a union bound.
\end{proof}

Remarkably, as a consequence of the sample-complexity lower bound proved in Theorem~\ref{th:lowerboundtomohraphy} and the sample-complexity upper bound proved in Theorem~\ref{correctness_algorithm_ECpure}, we have the following.

\begin{thm}[(Optimal sample complexity of tomography of moment-constrained pure states)]
\label{thm_optimal_sample_pure}
Let $\psi$ be an unknown pure $n$-mode state vector satisfying the $k$-th moment constraint $(\Tr[\hat{N}_n^k\rho])^{1/k}\le n N_{\text{phot}}$. Then, $\Theta \left(\frac{N_{\text{phot}}}{\varepsilon^{2/k}}\right)^{n}$ copies of $\psi$ are necessary and sufficient to perform quantum state tomography with precision $\varepsilon$ in trace distance. 
\end{thm}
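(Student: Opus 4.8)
The plan is to obtain Theorem~\ref{thm_optimal_sample_pure} as an immediate corollary of the matching lower and upper bounds already established in this section, so the proof is a bookkeeping step rather than a new argument. Recall that the sample complexity is by definition the minimum number of copies, over all tomography algorithms achieving trace-distance error $\varepsilon$ at a fixed failure probability $\delta$, needed to learn an arbitrary state in $\pazocal{S}_{\text{pure}}(n,k,N_{\text{phot}})$. Hence it suffices to combine a lower bound valid for \emph{every} algorithm with an upper bound achieved by \emph{one} explicit algorithm.

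First I would invoke Theorem~\ref{th:lowerboundtomohraphy}, which shows that any algorithm learning an arbitrary pure state of bounded $k$-moment within trace distance $\varepsilon$ must use $N \ge \Theta\!\left(\frac{N_{\text{phot}}}{\varepsilon^{2/k}}\right)^{n}$ copies; this gives the necessary direction $N = \Omega\!\left(\frac{N_{\text{phot}}}{\varepsilon^{2/k}}\right)^{n}$. Then I would invoke Theorem~\ref{correctness_algorithm_ECpure}, which exhibits an explicit algorithm (project onto $\HH_m$ with $m=\lceil nN_{\text{phot}}/(\varepsilon/2)^{2/k}\rceil$, then run the rank-one finite-dimensional tomography of Lemma~\ref{lemma_tomography_rank_r} on the post-measurement copies) using $N = O\!\left(\frac{N_{\text{phot}}}{\varepsilon^{2/k}}\right)^{n}$ copies, giving the sufficient direction. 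Putting the two directions together yields $N=\Theta\!\left(\frac{N_{\text{phot}}}{\varepsilon^{2/k}}\right)^{n}$, as claimed.

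The only subtlety, and the point I would be careful to state, is that the per-mode constants in the two bounds do not literally coincide: the lower bound carries a factor $(12\varepsilon)^{2/k}$ inside, while the upper-bound expression involves the Euler constant $e$, an additive $+2e$ shift, and an overall $\varepsilon^{-2}\log(4/\delta)$ prefactor inherited from the finite-dimensional tomography step. Under the asymptotic convention fixed in the \emph{Note} at the start of this section — which retains only the leading per-mode scaling $\frac{N_{\text{phot}}}{\varepsilon^{2/k}}$ raised to the $n$-th power, discarding parameter-independent constants and the subdominant $\varepsilon^{-2}$ prefactor in the regime of large $n$ and small $\varepsilon$ — both bounds collapse to the identical $\Theta\!\left(\frac{N_{\text{phot}}}{\varepsilon^{2/k}}\right)^{n}$, so no genuine obstacle remains at this stage.

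In short, there is no hard step left in this theorem itself: the substantive work lives entirely in the two cited results. Were I writing those, I would expect the real difficulties to be (i) on the lower-bound side, constructing a packing of $\pazocal{S}_{\text{pure}}(n,k,N_{\text{phot}})$ of the right cardinality via the embedding $\ket{\phi}\mapsto\ket{\psi_\phi}$ combined with the $\varepsilon$-net estimate of Lemma~\ref{lemma6666}, and controlling the Holevo information through the bosonic-entropy bound of Lemma~\ref{maxthermstate}; and (ii) on the upper-bound side, the effective-dimension approximation of Lemma~\ref{le:effectivedim} feeding into Keyl's algorithm. The present statement merely packages their consequences, and I would flag that the claimed optimality is \emph{optimality of the leading scaling}, not of the exact constants.
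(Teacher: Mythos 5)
Your proposal matches the paper exactly: Theorem~\ref{thm_optimal_sample_pure} is stated there as an immediate consequence of the lower bound in Theorem~\ref{th:lowerboundtomohraphy} and the upper bound in Theorem~\ref{correctness_algorithm_ECpure}, with no further argument. Your remark that the matching is at the level of the leading per-mode scaling under the section's asymptotic convention, not of exact constants, is also consistent with the paper's stated conventions.
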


\newpage

\section{Tomography of bosonic Gaussian states}
\label{Sec_Gaussian}
Gaussian states play a crucial role in applications of quantum optics, such as quantum sensing, quantum communication, and optical quantum computing~\cite{Wang2007,weedbrook_gaussian_2012}, and they form a small subset of the entire infinite-dimensional Hilbert space of continuous variable quantum states.
In contrast to an arbitrary continuous variable quantum state, which is defined in terms of an infinite number of parameters, a Gaussian state is uniquely characterised by only a few parameters --- specifically those present in its first moment and its covariance matrix. Indeed, it is well established that: `In order to \emph{know} a Gaussian state it is sufficient to \emph{know} its first moment and its covariance matrix.' However, in practice, we never know the first moment and the covariance matrix \emph{exactly}, but we can only have estimates of them, meaning that we can only approximately know the Gaussian state. Crucially, the \emph{trace distance} between the exact quantum state and its approximation is the most meaningful figure of merit to measure of the error incurred in the approximation, due to the operational meaning of the trace distance given by the Holevo--Helstrom theorem~\cite{HELSTROM, Holevo1976}. It is thus a fundamental problem --- yet never tackled before --- of Gaussian quantum information to determine what is the error incurred in \emph{trace distance} when estimating the first moment and covariance matrix of an unknown Gaussian state up to a precision $\varepsilon$. In this section, we address this fundamental problem, by finding upper and lower bounds on the trace distance between two arbitrary Gaussian states, determined by the norm distance of their covariance matrices and first moments. We present such bounds in the forthcoming Theorem~\ref{spoiler_bounds}.

One might be inclined to believe that there is a simple approach to solving this problem, involving first bounding the trace distance in terms of the fidelity and then employing the known formula for fidelity between Gaussian states \cite{Banchi_2015}. Although this approach may seem promising at first glance, it is actually highly non-trivial because the expressions involved in the fidelity formula appear to be too complicated to allow the derivation of a bound in terms of the norm distance between the first moments and the covariance matrices.

\begin{thm}[(Bounds on the trace distance between Gaussian states)]\label{spoiler_bounds}
    Let $\rho_1$ and $\rho_2$ be $n$-mode Gaussian states satisfying the energy constraint $\Tr[\rho_1\,\hat{N}_n]\le N$ and $\Tr[\rho_2\,\hat{N}_n]\le N$. Let $\textbf{m}_1$ and $\textbf{m}_2$ be the first moments and let $V_1$ and $V_2$ be the covariance matrices of $\rho_1$ and $\rho_2$, respectively. The trace distance between $\rho_1$ and $\rho_2$ can be upper bounded as
    \bb 
     \frac12\|\rho_1-\rho_2\|_1\le f(N)\left(\|\textbf{m}_1-\textbf{m}_2\|_2+\sqrt{2}\sqrt{\|V_1-V_2\|_1}\right)\,,
    \ee
    where
    $f(N)\coloneqq \frac{1}{\sqrt{2}} \left(\sqrt{N} + \sqrt{N+1}\right)$. Moreover, it can be lower bounded as
\bb 
    \frac{1}{2}\|\rho_1-\rho_2\|_1\ge \frac{1}{200}\max\left(\min\!\left(1,\frac{\| \textbf{m}_1-\textbf{m}_2 \|_2}{\sqrt{h(N,n)}}\right), \min\!\left(1, \frac{\|V_1-V_2\|_2}{ h(N,n) }\right)\right)\,,
\ee
where $ h(N,n)\coloneqq 4N+2n+1$.
\end{thm}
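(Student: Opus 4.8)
The combined statement is the conjunction of the upper bound (Theorem~\ref{thm_upp_bound}) and the lower bound (Theorem~\ref{thm_trace_distance_lower_bound_main}), so I would establish the two directions independently. For the \textbf{upper bound}, the plan is to move from $\rho_1$ to $\rho_2$ in two stages --- first matching the first moments, then the covariance matrices --- controlling each stage via displacement channels. Let $\sigma$ be the Gaussian state with first moment $\mathbf{m}_2$ and covariance $V_1$; then $\frac12\|\rho_1-\rho_2\|_1 \le \frac12\|\rho_1-\sigma\|_1 + \frac12\|\sigma-\rho_2\|_1$ by the triangle inequality. Since $\sigma = \hat{D}_{\mathbf{m}_2-\mathbf{m}_1}\rho_1\hat{D}_{\mathbf{m}_2-\mathbf{m}_1}^\dagger$, the first term is handled by the key displacement estimate $\frac12\|\rho - \hat{D}_{\mathbf{r}}\rho\hat{D}_{\mathbf{r}}^\dagger\|_1 \le f(\Tr[\hat{N}_n\rho])\,\|\mathbf{r}\|$, which I would prove by writing the difference as $\int_0^1 (-i)[\mathbf{r}^\intercal\Omega_n\hat{\mathbf{R}},\,\hat{D}_{t\mathbf{r}}\rho\hat{D}_{t\mathbf{r}}^\dagger]\,\mathrm{d}t$, bounding the trace norm of the commutator by $2\|(\mathbf{r}^\intercal\Omega_n\hat{\mathbf{R}})\sqrt{\rho}\|_2$, and crucially splitting the generator into its annihilation and creation parts so that the two contributions produce the $\sqrt{N}$ and $\sqrt{N+1}$ terms of $f(N)$ (using $\|a_j\sqrt{\rho}\|_2^2 = \Tr[\rho\,a_j^\dagger a_j]$ and $\|a_j^\dagger\sqrt{\rho}\|_2^2 = \Tr[\rho\,a_j^\dagger a_j]+1$). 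This is precisely the energy-constrained diamond-norm computation of~\cite{EC-diamond}.

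For the second term I would represent the covariance change as additive Gaussian noise. Setting $W \coloneqq V_1 + (V_2-V_1)_+$ one has $W \ge V_1$, $W\ge V_2$, and $W+i\Omega_n\ge 0$, so the Gaussian state $\tau_W$ with first moment $\mathbf{m}_2$ is obtained from both $\sigma$ and $\rho_2$ by applying classical zero-mean Gaussian displacement noise of covariance $W-V_1 = (V_2-V_1)_+$ and $W-V_2 = (V_2-V_1)_-$, respectively. Averaging the displacement estimate over the noise and using Jensen's inequality converts each $\int p(\mathbf{r})\|\mathbf{r}\|\,\mathrm{d}\mathbf{r}$ into $\sqrt{\Tr(\cdot)}$, and then $\sqrt{a}+\sqrt{b}\le\sqrt{2}\sqrt{a+b}$ together with $\Tr(V_2-V_1)_+ + \Tr(V_2-V_1)_- = \|V_1-V_2\|_1$ yields the $\sqrt{2}\sqrt{\|V_1-V_2\|_1}$ term. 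The \textbf{main obstacle} is the sharp energy bookkeeping: the intermediate states $\sigma$ and $\tau_W$ can have mean photon number up to $\approx 2N$, so to recover $f(N)$ rather than $f(2N)$ one must either interpolate first moment and covariance \emph{simultaneously} along $(\mathbf{m}_t,V_t)=((1-t)\mathbf{m}_1+t\mathbf{m}_2,\,(1-t)V_1+tV_2)$, whose energy stays $\le N$ by convexity of $\langle\hat{E}_n\rangle$, or absorb the excess into the constant; obtaining the stated constant $f(N)$ is the delicate point.

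For the \textbf{lower bound}, since the right-hand side is a maximum it suffices to prove each of the two estimates separately, and the natural tool is to pass to classical measurement statistics. I would use the heterodyne POVM, whose outcome distribution on a Gaussian state $\rho$ is the Gaussian Husimi distribution $Q_\rho = \NN[\mathbf{m}(\rho),(V(\rho)+\mathbb{1})/2]$. By the Holevo--Helstrom theorem the trace distance dominates the total variation distance of the outcome statistics of any fixed measurement, so $\frac12\|\rho_1-\rho_2\|_1 \ge \mathrm{TV}(Q_{\rho_1},Q_{\rho_2})$, reducing the problem to a lower bound on the total variation distance between two Gaussians with means $\mathbf{m}_1,\mathbf{m}_2$ and covariances $(V_1+\mathbb{1})/2,(V_2+\mathbb{1})/2$. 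For this I would invoke the state-of-the-art bounds of~\cite{devroye2023total}, which give $\mathrm{TV}\gtrsim \min(1,\|\mathbf{m}_1-\mathbf{m}_2\|/\sqrt{\|\Sigma\|})$ and $\mathrm{TV}\gtrsim\min(1,\|\Sigma_1-\Sigma_2\|_2/\|\Sigma\|)$. Finally I would bound the covariance scale using the energy constraint: since $\Tr V_i = 4\Tr[\hat{E}_n\rho_i]-2\|\mathbf{m}_i\|^2\le 4E$ with $E=N+n/2$, the operator norm of $V_i+\mathbb{1}$ is at most $4N+2n+1 = h(N,n)$, which turns the Devroye denominators into $\sqrt{h}$ and $h$ and, after collecting numerical factors, produces the constant $1/200$. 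The \textbf{main difficulty} here is matching the precise form and constants of the classical Gaussian total-variation bounds to the Husimi covariances (shifted by $\mathbb{1}$) and verifying that the spectral bound $\|V_i+\mathbb{1}\|_\infty\le h$ suffices in place of the matrix norm appearing in~\cite{devroye2023total}.
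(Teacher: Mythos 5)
Your overall strategy coincides with the paper's on both directions: the upper bound is obtained from the displacement estimate of~\cite{EC-diamond} plus additive Gaussian noise channels $\NN_K$ with $K=(V_2-V_1)_\pm$ (the paper's interpolating matrix $T=\frac12(V_1+V_2+|V_1-V_2|)$ is exactly your $W$), Jensen's inequality converting $\int P_K(u)\|u\|_2\,\mathrm{d}u$ into $\sqrt{\Tr K}$, and $\sqrt{a}+\sqrt{b}\le\sqrt2\sqrt{a+b}$; the lower bound is data processing under heterodyne detection, the Husimi covariance $(V+\mathbb{1})/2$, the total-variation bounds of~\cite{devroye2023total}, and $\|V_i\|_\infty\le 4\Tr[\rho_i\hat E_n]$. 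The one place you diverge is the order of interpolation in the upper bound, and the obstacle you flag there is real for your ordering but is sidestepped, not solved, in the paper: rather than passing through the intermediate state $\sigma$ with first moment $\mathbf{m}_2$ and covariance $V_1$ (whose photon number can indeed reach $\approx 2N$), the paper first centres \emph{both} states, writing $\frac12\|\rho_1-\rho_2\|_1=\frac12\|\hat D_\delta(\hat D_{-\mathbf{m}_1}\rho_1\hat D_{-\mathbf{m}_1}^\dagger)\hat D_\delta^\dagger-\hat D_{-\mathbf{m}_2}\rho_2\hat D_{-\mathbf{m}_2}^\dagger\|_1$ with $\delta=\mathbf{m}_1-\mathbf{m}_2$, and then applies the triangle inequality. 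The displacement term $\frac12\|\D_\delta-\Id\|_{\diamond N}$ is evaluated on the centred state $\hat D_{-\mathbf{m}_1}\rho_1\hat D_{-\mathbf{m}_1}^\dagger$, whose photon number is $\frac{\Tr[V_1-\mathbb{1}]}{4}\le\Tr[\hat N_n\rho_1]\le N$, and the covariance term compares two zero-mean states whose photon numbers are likewise $\le N$; hence every diamond norm is taken at energy $N$ and the constant $f(N)$ comes out directly, with no need for the simultaneous interpolation along $(\mathbf{m}_t,V_t)$ or for degrading to $f(2N)$. With that reordering your argument reproduces the paper's proof; the lower-bound half is already complete as you describe it, up to checking (as the paper does) that the shift $V\mapsto(V+\mathbb{1})/2$ only changes the Devroye denominators by harmless constants absorbed into $h(N,n)=4N+2n+1$ and $1/200$.
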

Theorem~\ref{spoiler_bounds} is not only a technical result of independent interest for the field of Gaussian quantum information, but also it answers the fundamental question: \emph{if we approximate the first moment and covariance matrix of an unknown Gaussian state with precision $\varepsilon$, what is the resulting trace distance error on the state?} Theorem~\ref{spoiler_bounds} establishes that if we make an error $O(\varepsilon)$ in approximating the first moment and the covariance matrix, then the trace-distance error that we make in approximating the unknown Gaussian state is \emph{at most} $O(\sqrt{\varepsilon})$ and \emph{at least} $O(\varepsilon)$. 

The upper bound presented in Theorem~\ref{spoiler_bounds} allows us to analyse the sample complexity of tomography of Gaussian states. 
In particular, we prove that tomography of (energy-constrained) Gaussian states is efficient, as there exists a tomography algorithm whose sample and time complexity scales polynomially in the number of modes. Notably, our result demonstrates that Gaussian states can be \emph{efficiently} learned by estimating the first moment and the covariance matrix, a result that has been previously assumed but never rigorously proved in the literature.

\begin{thm}[(Upper bound on the sample complexity of tomography of Gaussian states - informal version)]\label{thm_tomography_gaussian}
A number $O\!\left( \frac{n^7N_{\text{phot}}^4}{\varepsilon^4} \right)$ of state copies are sufficient for quantum state tomography of $n$-mode Gaussian states $\rho$ satisfying the energy constraint $\Tr[\rho \hat{N}_n]\le n N_{\text{phot}}$, where $\varepsilon$ is the accuracy in trace distance.
\end{thm}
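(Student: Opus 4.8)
The plan is to convert the trace-distance bound of Theorem~\ref{thm_upp_bound} into a sample-complexity statement by reducing tomography of a Gaussian state to the estimation of its first moment $\mathbf{m}(\rho)$ and covariance matrix $V(\rho)$. The algorithm performs heterodyne detection on each of the $N$ copies of $\rho$, producing independent samples $\mathbf{r}^{(1)},\dots,\mathbf{r}^{(N)}\in\mathbb{R}^{2n}$ drawn from the Husimi function, which by~\eqref{husimi_gaussian_state} is the Gaussian distribution $\NN[\mathbf{m}(\rho),(V(\rho)+\mathbb{1})/2]$. From these samples I build the empirical mean $\tilde{\mathbf{m}}$ and the estimator $\tilde V \coloneqq 2\times(\text{sample covariance})-\mathbb{1}$, project $\tilde V$ onto the convex set of bona fide covariance matrices $\{W:W+i\Omega_n\ge 0\}$ (a step that can only decrease its Hilbert--Schmidt distance to the true $V(\rho)$, since $V(\rho)$ lies in that set), and output the Gaussian state $\tilde\rho$ with first moment $\tilde{\mathbf{m}}$ and covariance matrix $\tilde V$.

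By Theorem~\ref{thm_upp_bound}, to guarantee $d_{\mathrm{tr}}(\rho,\tilde\rho)\le\varepsilon$ it suffices to achieve $\|\mathbf{m}(\rho)-\tilde{\mathbf{m}}\|\le \varepsilon/(2f(N))$ and $\|V(\rho)-\tilde V\|_1\le \varepsilon^2/(8 f(N)^2)$, where $f(N)=O(\sqrt{N})$ and $N=O(nN_{\text{phot}})$ is the energy budget. I then reduce these norm-distance requirements to entrywise precision: since the objects are $2n\times 2n$ matrices and $2n$-vectors, the crude bounds $\|A\|_1\le\sqrt{2n}\,\|A\|_2\le 2\sqrt{2}\,n^{3/2}\max_{ij}|A_{ij}|$ and $\|\mathbf{v}\|\le\sqrt{2n}\max_i|v_i|$ show it is enough to estimate every entry of $V(\rho)$ to additive error $\eta=O(\varepsilon^2/(n^{5/2}N_{\text{phot}}))$ and every component of $\mathbf{m}(\rho)$ to additive error $O(\varepsilon/(n\sqrt{N_{\text{phot}}}))$.

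Each entrywise estimate is obtained via the median-of-means estimator of Lemma~\ref{median_of_means}, so that for target failure probability $\delta$ the number of samples scales as $O(\sigma^2\eta^{-2}\log(n^2/\delta))$, the logarithmic factor absorbing a union bound over the $O(n^2)$ entries (all of which are read off from the \emph{same} heterodyne samples). The key quantitative input is a bound on the variances $\sigma^2$. Because the energy constraint controls only the \emph{total} photon number, a single mode may carry the entire budget, so the worst-case per-component variance of $\mathbf{r}$ is $\frac{1}{2}(V_{ii}(\rho)+1)=O(nN_{\text{phot}})$ (using $\Tr V(\rho)\le 4\Tr[\hat{E}_n\rho]=O(nN_{\text{phot}})$), and the variance of the products $r_i r_j$ entering the covariance estimator is $O(V_{ii}V_{jj})=O(n^2N_{\text{phot}}^2)$ by the Gaussian fourth-moment (Wick) identities. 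Substituting $\sigma^2=O(n^2N_{\text{phot}}^2)$ and $\eta=O(\varepsilon^2/(n^{5/2}N_{\text{phot}}))$ yields $O(n^7N_{\text{phot}}^4/\varepsilon^4)$ samples for the covariance matrix, which dominates the $O(n^3N_{\text{phot}}^2/\varepsilon^2)$ samples needed for the first moment, giving the claimed bound.

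The main obstacle is the covariance-matrix estimation, and specifically the accumulation of $\mathrm{poly}(n)$ factors when passing from entrywise accuracy to the trace-norm accuracy demanded by Theorem~\ref{thm_upp_bound}: the $\sqrt{\|V_1-V_2\|_1}$ dependence forces $\varepsilon^2$ (rather than $\varepsilon$) precision on $V$, while the worst-case concentration of energy in a single mode inflates the estimator variances by $\mathrm{poly}(n)$, and these effects compound into the steep $n^7$ scaling. A secondary technical point is to verify that the projection of $\tilde V$ onto the set of legitimate covariance matrices behaves well under the norm conversions, and that $\tilde\rho$ inherits an energy bound of the same order as $\rho$, so that Theorem~\ref{thm_upp_bound} applies with $N=O(nN_{\text{phot}})$ to both states simultaneously.
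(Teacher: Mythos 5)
Your proposal is correct and reaches the paper's $O\!\left(n^7N_{\text{phot}}^4/\varepsilon^4\right)$ bound, and the top-level strategy (estimate the moments, feed the errors into Theorem~\ref{thm_upp_bound}) is the same, but the estimation stage — where all the work lives — takes a genuinely different route. You extract everything from a single heterodyne setting: the outcomes are i.i.d.\ samples of the classical Gaussian $\NN[\textbf{m}(\rho),(V(\rho)+\mathbb{1})/2]$, so the fourth moments controlling the variance of your covariance estimator come for free from Wick's theorem in terms of $V(\rho)$ and $\textbf{m}(\rho)$, and you pass from entrywise to trace-norm error by the crude chain $\|A\|_1\le\sqrt{2n}\,\|A\|_2\le 2\sqrt{2}\,n^{3/2}\max_{ij}|A_{ij}|$. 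The paper instead uses $n+3$ homodyne settings (Table~\ref{Table_covariance}), groups commuting observables $\{\hat R_i,\hat R_j\}$, and controls the estimator variances for \emph{arbitrary} (not necessarily Gaussian) input states via the operator identity $\sum_{j,k}\{\hat R_j,\hat R_k\}^2=16\hat E_n^2+6n\hat{\mathbb{1}}$ (Lemma~\ref{lemma_formula_rr2}); this requires a second-moment energy constraint, which for Gaussian states is supplied by Lemma~\ref{lemma_second_moment_energy} ($\Tr[\hat E_n^2\rho]\le 3(\Tr[\hat E_n\rho])^2$). The paper also allots per-entry accuracy proportional to each entry's own standard deviation, so the squared errors sum directly to a controlled Hilbert--Schmidt error rather than being union-bounded against the worst-case entry; both accountings land on the same $n^7$ because the dominant contributions coincide. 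Your Wick-identity argument is cleanly self-contained for Gaussian inputs but would not transfer to the non-Gaussian robustness results (Theorem~\ref{thm_robustness_gaussian_states}) that the paper derives from the same estimation subroutine; conversely, your single-setting heterodyne scheme is simpler to state and run. Your positivity fix — projecting $\tilde V$ onto the convex set $\{W:W+i\Omega_n\ge0\}$, which is non-expansive in Hilbert--Schmidt norm — is a valid alternative to the paper's additive shift $\tilde V+\frac{\varepsilon}{2}\mathbb{1}$, and the two loose ends you flag (that $\tilde\rho$ inherits an $O(nN_{\text{phot}})$ energy bound so Theorem~\ref{thm_upp_bound} applies to the pair, and the norm bookkeeping through the projection) are exactly the points the paper closes explicitly; they close without difficulty.
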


Notably, we can improve the trace distance upper bound in Theorem~\ref{spoiler_bounds} if we assume one of the states, say $\rho_2$, to be a pure Gaussian state (interestingly, such an improved bound holds even if $\rho_1$ is not Gaussian). 
\begin{thm}[(Upper bound on the trace distance between a pure Gaussian state and a possibly non-Gaussian mixed state)]\label{spoiler_bounds_pure}
    Let $\psi\coloneqq \ketbra{\psi}$ be a pure $n$-mode Gaussian state with first moment $\textbf{m}(\psi)$ and second moment $V\!(\psi)$.  Let $\rho$ be an $n$-mode (possibly non-Gaussian) state with first moment $\textbf{m}(\rho)$ and second moment $V\!(\rho)$.  Assume that $\rho$ and $\psi$ satisfy the energy constraint $\Tr[\rho \hat{N}_n]\le N$. Then
    \bb
        \frac{1}{2}\left\|\, \rho-\ketbra{\psi}\, \right\|_1\le  \sqrt{N+\frac{n}{2}}\sqrt{ 2\|\textbf{m}(\rho)-\textbf{m}(\psi)\|_2^2+\|V\!(\rho)-V\!(\psi)\|_\infty}\,.
    \ee
\end{thm}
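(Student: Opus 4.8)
The plan is to exploit the purity of $\psi$ to avoid the cumbersome fidelity formula of Ref.~\cite{Banchi_2015} altogether, replacing it by a direct energetic estimate of the overlap $\braket{\psi|\rho|\psi}$. The starting point is the Fuchs--van de Graaf bound, which for a \emph{pure} target state reads
\bb
\tfrac12\|\rho-\ketbra{\psi}\|_1\le\sqrt{1-\braket{\psi|\rho|\psi}}\,,
\ee
so it suffices to upper bound $1-\braket{\psi|\rho|\psi}=\Tr[\rho(\idop-\ketbra{\psi})]$ by the square of the claimed right-hand side.

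Next I would move to the Gaussian frame in which $\psi$ is the vacuum. Since $\psi$ is a pure Gaussian state, the Williamson/Euler decomposition (Eqs.~\eqref{eq:will}--\eqref{Euler_dec}) provides a Gaussian unitary $G$ with $\psi=G\ketbra{0}^{\otimes n}G^\dagger$ and $G^\dagger\mathbf{\hat{R}}G=S\mathbf{\hat{R}}+\mathbf{m}(\psi)$ for a symplectic $S$ with $SS^\intercal=V\!(\psi)$. The key ingredient is the elementary Fock-basis operator inequality $\idop-\ketbra{0}^{\otimes n}\le\hat N_n$; conjugating it by $G$ gives $\idop-\ketbra{\psi}\le G\hat N_n G^\dagger$. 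Taking expectations and using $\Tr[\psi\,G\hat N_n G^\dagger]=\braket{0|\hat N_n|0}=0$ together with $\Tr[\rho-\psi]=0$, I obtain
\bb
1-\braket{\psi|\rho|\psi}\le \Tr[\rho\,G\hat N_n G^\dagger]=\Tr\!\big[(\rho-\psi)\,G\hat E_n G^\dagger\big]\,,
\ee
where the constant $-\tfrac n2$ shift between $\hat E_n$ and $\hat N_n$ drops out.

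I would then compute $G\hat E_n G^\dagger=\tfrac12(\mathbf{\hat{R}}-\mathbf{m}(\psi))^\intercal V\!(\psi)^{-1}(\mathbf{\hat{R}}-\mathbf{m}(\psi))$ and evaluate its expectation via the identity $\Tr[\rho\,\mathbf{\hat{R}}'^\intercal W\mathbf{\hat{R}}']=\tfrac12\Tr[W V\!(\rho)]+\Delta^\intercal W\Delta$, valid for symmetric $W$, with $\mathbf{\hat{R}}'\coloneqq\mathbf{\hat{R}}-\mathbf{m}(\psi)$ and $\Delta\coloneqq\mathbf{m}(\rho)-\mathbf{m}(\psi)$; the commutator contribution vanishes because $W$ is symmetric while $\Omega_n$ is antisymmetric, and the covariance matrix is shift-invariant. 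Using $\tfrac14\Tr[V\!(\psi)^{-1}V\!(\psi)]=\tfrac n2$, this collapses the bound to
\bb
1-\braket{\psi|\rho|\psi}\le \tfrac14\Tr\!\big[V\!(\psi)^{-1}(V\!(\rho)-V\!(\psi))\big]+\tfrac12\,\Delta^\intercal V\!(\psi)^{-1}\Delta\,.
\ee

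The final step bounds the two terms by energy through Hölder's inequality, $\Tr[V\!(\psi)^{-1}(V\!(\rho)-V\!(\psi))]\le\|V\!(\psi)^{-1}\|_1\,\|V\!(\rho)-V\!(\psi)\|_\infty$ and $\Delta^\intercal V\!(\psi)^{-1}\Delta\le\|V\!(\psi)^{-1}\|_\infty\,\|\Delta\|_2^2$. The step I expect to be the main obstacle, and the one making the statement nontrivial, is controlling $V\!(\psi)^{-1}$, whose norm is a priori huge for strongly squeezed $\psi$. The resolution is that the eigenvalues of a \emph{pure}-state covariance matrix come in reciprocal pairs $\{\lambda,1/\lambda\}$, so the spectrum of $V\!(\psi)^{-1}$ coincides with that of $V\!(\psi)$; hence $\|V\!(\psi)^{-1}\|_1=\Tr V\!(\psi)$ and $\|V\!(\psi)^{-1}\|_\infty=\|V\!(\psi)\|_\infty\le\Tr V\!(\psi)$. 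The energy constraint then closes everything via $\tfrac14\Tr V\!(\psi)\le\Tr[\psi\hat E_n]=\Tr[\psi\hat N_n]+\tfrac n2\le N+\tfrac n2$, so both norms are at most $4(N+\tfrac n2)$. Substituting yields $1-\braket{\psi|\rho|\psi}\le (N+\tfrac n2)\big(2\|\Delta\|_2^2+\|V\!(\rho)-V\!(\psi)\|_\infty\big)$, and taking square roots gives the claim.
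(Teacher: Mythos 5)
Your proposal is correct and follows essentially the same route as the paper's proof of Theorem~\ref{inequality_dist_gauss}: Fuchs--van de Graaf for a pure target, the operator inequality $\idop-\ketbra{0}^{\otimes n}\le\hat N_n$, evaluation of the resulting expectation through first moments and covariance matrices, and control of $V\!(\psi)^{-1}$ via the pure-state identity $V\!(\psi)^{-1}=\Omega_n V\!(\psi)\Omega_n^{\intercal}$ together with $\Tr V\!(\psi)\le 4\Tr[\psi\hat E_n]$. The only difference is presentational — you conjugate the observable (Heisenberg picture) and subtract $\Tr[\psi\,G\hat N_n G^\dagger]=0$ so that $V\!(\rho)-V\!(\psi)$ appears directly, whereas the paper conjugates the state and computes the photon number of $\omega=U_S^\dagger\hat D^\dagger_{\textbf{m}(\psi)}\rho\hat D_{\textbf{m}(\psi)}U_S$ — and the two computations are identical.
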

The improved bound presented in Theorem~\ref{spoiler_bounds_pure} allows us to analyse the sample complexity of tomography of pure Gaussian states.

\begin{thm}[(Upper bound on the sample complexity of tomography of pure Gaussian states - informal version)]\label{thm_tomography_gaussian_pure}
A number $O\!\left( \frac{n^5N_{\text{phot}}^4}{\varepsilon^4} \right)$ of state copies are sufficient for quantum state tomography of $n$-mode pure Gaussian states $\psi$ satisfying the energy constraint $\Tr[\psi \hat{N}_n]\le n N_{\text{phot}}$, where $\varepsilon$ is the accuracy in trace distance.
\end{thm}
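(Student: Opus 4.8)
The plan is to instantiate the improved pure-state bound of Theorem~\ref{spoiler_bounds_pure} with the true unknown state $\psi$ in the role of the pure Gaussian state and an estimate $\tilde\rho$ in the role of the arbitrary state, where $\tilde\rho$ is the Gaussian state reconstructed from separate estimates of the first moment and covariance matrix. Setting $N\coloneqq nN_{\text{phot}}$, the bound reads
\[
 d_{\mathrm{tr}}(\psi,\tilde\rho)\le \sqrt{nN_{\text{phot}}+\tfrac{n}{2}}\,\sqrt{2\|\textbf{m}(\psi)-\textbf{m}(\tilde\rho)\|_2^2+\|V\!(\psi)-V\!(\tilde\rho)\|_\infty}\,,
\]
from which I read off that it suffices to estimate the first moment to Euclidean error $\delta_m=\Theta\!\left(\varepsilon/\sqrt{nN_{\text{phot}}}\right)$ and the covariance matrix to operator-norm error $\delta_V=\Theta\!\left(\varepsilon^2/(nN_{\text{phot}})\right)$ to drive the right-hand side below $\varepsilon$. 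The algorithm is accordingly: use homodyne detection to build estimators $\hat{\textbf{m}}$ and $\hat V$ at these precisions, project $\hat V$ onto the set of matrices obeying the uncertainty relation $\hat V+i\Omega_n\ge 0$ (which cannot increase the error, since the true $V\!(\psi)$ is itself feasible), and output the Gaussian state $\tilde\rho$ with moments $(\hat{\textbf{m}},\hat V)$.

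First I would control the cost of the moment estimation. The energy constraint gives $\sum_{k=1}^{2n}\Tr[\hat R_k^2\psi]=2\Tr[\hat E_n\psi]=O(nN_{\text{phot}})$, so estimating each component $m_k(\psi)$ by the median-of-means estimator of Lemma~\ref{median_of_means} to error $\delta_m/\sqrt{2n}$ costs $O\!\left(n\,\Tr[\hat R_k^2\psi]/\delta_m^2\right)$ samples; summing over the $2n$ components and using the total-variance bound above yields $O\!\left(n^2N_{\text{phot}}/\delta_m^2\right)=O\!\left(n^3N_{\text{phot}}^2/\varepsilon^2\right)$ copies, which is subleading.

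The dominant and most delicate step is the covariance-matrix estimation. Here I would estimate each entry $[V\!(\psi)]_{kl}$ from symmetrised quadrature products $\tfrac12\{\hat R_k,\hat R_l\}$, accessed by measuring suitably rotated quadratures and recombining the homodyne data, and subtracting the estimated $2\hat m_k\hat m_l$. The variance of each such estimator is governed by fourth moments of $\psi$, which for a Gaussian state are fixed by the second moments through Wick's theorem and are thus $O(N_{\text{phot}}^2)$ under the energy constraint. To convert an operator-norm guarantee into per-entry guarantees I use $\|V\!(\psi)-\hat V\|_\infty\le\|V\!(\psi)-\hat V\|_2\le 2n\max_{k,l}\big|[V\!(\psi)]_{kl}-\hat V_{kl}\big|$, so it suffices to control each entry to error $\delta_V/(2n)$. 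By Lemma~\ref{median_of_means} this costs $O\!\left(n^2N_{\text{phot}}^2/\delta_V^2\right)$ samples per entry; multiplying by the $O(n)$ distinct homodyne settings needed to access all quadrature pairs (entries sharing a setting reuse the same data, with a logarithmic union-bound overhead) gives $O\!\left(n^3N_{\text{phot}}^2/\delta_V^2\right)$ copies. Substituting $\delta_V=\Theta\!\left(\varepsilon^2/(nN_{\text{phot}})\right)$ produces $O\!\left(n^5N_{\text{phot}}^4/\varepsilon^4\right)$, which dominates the moment estimation and yields the claim.

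The main obstacle is precisely this passage from entrywise to operator-norm accuracy demanded by Theorem~\ref{spoiler_bounds_pure}: the factor $2n$ lost in $\|\cdot\|_\infty\le\|\cdot\|_2$, together with the $O(n)$ measurement settings and the control of the estimator variance through Gaussian fourth moments, is exactly what inflates the naive $\varepsilon^{-4}$ scaling up to the $n^5$ prefactor. A secondary technical point is ensuring that $\tilde\rho$ remains a legitimate energy-bounded state so that Theorem~\ref{spoiler_bounds_pure} applies; since every estimate is $O(\varepsilon)$-close to the feasible true value, projecting $\hat V$ onto valid covariance matrices and absorbing the slight energy excess into the constant in $N$ resolves this cleanly.
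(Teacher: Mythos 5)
Your proposal is correct and follows essentially the same route as the paper's proof: apply the improved pure-state bound (Theorem~\ref{spoiler_bounds_pure}) with the true state as the pure Gaussian reference and the reconstructed Gaussian state as the arbitrary state, which forces covariance accuracy $\|\Delta V\|_\infty=\Theta(\varepsilon^2/(nN_{\text{phot}}))$, and then pay a factor $n$ for the entrywise-to-operator-norm conversion, $O(n)$ homodyne settings, and $O(N_{\text{phot}}^2)$ per-entry variance (controlled via Gaussian fourth moments, the paper's Lemma~\ref{lemma_second_moment_energy}), yielding $O(n^5N_{\text{phot}}^4/\varepsilon^4)$. The only cosmetic differences are that the paper enforces the uncertainty relation by adding $\frac{\varepsilon}{2}\mathbb{1}$ to the estimate rather than projecting, and bounds $\|\Delta V\|_2$ via a variance-weighted sum (Lemma~\ref{lemma_formula_rr2}) rather than $2n\max_{k,l}|\Delta V_{kl}|$; neither changes the scaling.
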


We have seen that learning unknown Gaussian states is efficient, with sample complexity scaling polynomially in the number of modes. But what if the unknown state is not exactly Gaussian? Is our tomography procedure robust against slight perturbations of the set of Gaussian states? These questions are crucial conceptually, given that experimental imperfections may transform Gaussian states into slightly-perturbed Gaussian states. In this section, we will also prove that \emph{quantum state tomography of slightly-perturbed Gaussian states is efficient}. Technically, by `slightly-perturbed Gaussian state' we mean that the minimum quantum relative entropy between the state and any Gaussian state is sufficiently small. The latter is a meaningful measure of `non-Gaussianity', thanks to results from~\cite{Hiai1991,Ogawa2000,Genoni2008,Marian2013}.

Our tomography algorithm for Gaussian states is feasible to realise in practice, as it only requires the estimation of the first moment and the covariance matrix of the unknown Gaussian state, tasks routinely accomplished in quantum optics laboratories through homodyne detection~\cite{BUCCO, Aolita_2015}. We stress that the non-trivial aspect of our tomography algorithm is that it comes with rigorous performance guarantees. That is, if the number of state copies is larger than a critical value (which scales polynomially in the number of modes), then the algorithm outputs a classical description of a Gaussian state which is guaranteed to be a `good' approximation of the true state in trace distance with high probability.
 
This section is organised as follows:
\begin{itemize}
    \item In Subsection~\ref{subsec_bounds_trace distance}, we prove upper and lower bounds on the trace distance between Gaussian states in terms of the norm distance between their first moments and their covariance matrices. 
    \item In Subsection~\ref{Subsection_useful_relations}, we analyse the sample complexity of learning the first moment and the covariance matrix of unknown (possibly non-Gaussian) states. 
    
    \item In Subsection~\ref{subs_sec_tom_gauss}, we put together all the above-mentioned results to analyse the sample complexity of tomography of (energy-constrained) Gaussian states both in the mixed-state and in the pure-state setting. We also how the second moment of the energy of a Gaussian state can be expressed in terms of its first moment and its covariance matrix, a result that turns out to be useful to obtain rigorous performance guarantees on tomography of energy-constrained Gaussian states. Finally, we show the our tomography procedure is robust under little perturbations from the set of Gaussian states, meaning that learning slightly-perturbed Gaussian states is efficient.
\end{itemize}
\emph{Note}: Throughout this section, we sometimes represent energy constraints using the energy operator $\hat{E}_n$, such as $\Tr[\rho\hat{E}_n]\leq nE$, rather than the total photon number operator $\hat{N}_n$, like $\Tr[\rho\hat{N}_n]\leq nN_{\text{phot}}$. This simplifies the analysis without altering the sample complexity scaling, because of the identity $\hat{E}_n=\hat{N}_n+\frac{n}{2}\mathbb{1}$ that allows us to identify $E=N_{\text{phot}}+\frac{1}{2}$. For example, the upper bound on the sample-complexity of Gaussian-state tomography in Theorem~\ref{thm_tomography_gaussian}, given by $O\!\left( \frac{n^7N_{\text{phot}}^4}{\varepsilon^4} \right)$, can equivalently be expressed as $O\!\left( \frac{n^7E^4}{\varepsilon^4} \right)$.

\subsection{Bounds on the trace distance between Gaussian states}\label{subsec_bounds_trace distance}
Possessing knowledge of the first moment and covariance matrix of a Gaussian state is sufficient to determine the state itself. However, when dealing with a finite number of copies of an unknown Gaussian state $\rho$, it is possible to obtain only estimates $\tilde{\textbf{m}}$ and $\tilde{V}$ of its first moment $\textbf{m}(\rho)$ and of its covariance matrix $V\!(\rho)$. Consequently, the resulting Gaussian state $\tilde{\rho}$ with first moment $\tilde{\textbf{m}}$ and covariance matrix $\tilde{V}$ constitutes only an approximation of the true, unknown Gaussian state $\rho$. A natural question arises: What is the error incurred in the approximation $\tilde{\rho}\approx\rho$ in terms of the errors incurred in the approximations $\tilde{\textbf{m}}\approx\textbf{m}(\rho)$ and $V\!(\rho)\approx \tilde{V}$? The existing literature lacked such an error estimate, despite it being a natural question in Gaussian quantum information theory. In this section, we address this gap. To quantify the error incurred in the approximation $\tilde{\rho}\approx\rho$, we employ the trace distance $\frac{1}{2}\|\tilde{\rho}-\rho\|_1$ as it is the most meaningful notion of distance between quantum states, given its operational meaning~\cite{HELSTROM, Holevo1976}. 
Specifically:
\begin{itemize}
    \item In Subsubsection~\ref{subsubsection_upper_bound_mixed_state}, we derive an upper bound on the trace distance between two arbitrary (possibly mixed) Gaussian state in terms of the norm distance between their first moment and their covariance matrices. More explicitly, by using the notation introduced above, we find an upper bound on $\frac{1}{2}\|\tilde{\rho}-\rho\|_1$ in terms of $\|\tilde{\textbf{m}}-\textbf{m}(\rho)\|_2$ and $\|\tilde{V}-V\!(\rho)\|_1$. 
    \item In Subsubsection~\ref{subsubsection_upper_bound_pure_state}, we find an improved upper bound on the trace distance between a pure Gaussian state and a possibly-mixed possibly-non-Gaussian state. \item  In Subsubsection~\ref{subsubsection_lower_bound}, we find a lower bound on the trace distance between two arbitrary (possibly mixed) Gaussian state. One may apply this lower bound, together with $\varepsilon$-net techniques, to obtain a lower bound on the sample complexity on tomography of Gaussian states. However, we do not conduct such an analysis in this work since it leads to a too weak lower bound on the sample complexity. 
\end{itemize}

\subsubsection{Upper bound in the general (possibly mixed-state) setting}\label{subsubsection_upper_bound_mixed_state}
This subsubsection is devoted to the proof of the following theorem, which is one of our main technical result.
\begin{thm}\label{upper_bound_trace_distance_Gaussian}
Let $\rho$, $\sigma$ be $n$-mode Gaussian states with mean photon number bounded by $N$, that is,
\bb
\Tr\left[\rho\,\hat{N}_n\right] &\le N\,,\\
\Tr\left[\sigma\,\hat{N}_n \right] &\le N\,.
\ee
Then, the trace distance between $\rho$ and $\sigma$ is upper bounded by
\bb\label{ludo_upp_bound}
\frac12 \left\|\rho - \sigma\right\|_1 \leq f(N) \left( \|\textbf{m}(\rho)-\textbf{m}(\sigma)\|_2 + \sqrt2\, \sqrt{\|V\!(\rho)-V\!(\sigma)\|_1}\right) ,
\ee
where $f(x) \coloneqq \frac{1}{\sqrt{2}} \left(\sqrt{x} + \sqrt{x+1}\right)$. Here, $\textbf{m}(\rho), V\!(\rho)$ denote the first moment and covariance matrix of $\rho$, while $\textbf{m}(\sigma), V\!(\sigma)$ denote the first moment and covariance matrix of $\sigma$.
\end{thm}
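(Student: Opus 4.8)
The plan is to isolate the two independent sources of discrepancy --- a difference in first moments and a difference in covariance matrices --- bound each one separately, and recombine by the triangle inequality. I would introduce the intermediate $n$-mode Gaussian state $\rho'$ with first moment $\mathbf{m}(\sigma)$ and covariance matrix $V\!(\rho)$, so that
\[
  \tfrac12\|\rho-\sigma\|_1 \le \tfrac12\|\rho-\rho'\|_1 + \tfrac12\|\rho'-\sigma\|_1 ,
\]
where the first term involves only a shift of the first moment at fixed covariance, and the second only a change of covariance at fixed first moment. The key bookkeeping remark, following from $\Tr[\hat{E}_n\rho]=\tfrac14\Tr V\!(\rho)+\tfrac12\|\mathbf{m}(\rho)\|_2^2$, is that the \emph{centred} versions of all states appearing have mean photon number at most $N$; this is what ultimately keeps the prefactor equal to $f(N)$ rather than an inflated energy.

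For the first-moment term I would factor out a common displacement, reducing to $\tfrac12\|\mu-\hat{D}_{\mathbf{d}}\mu\hat{D}_{\mathbf{d}}^\dagger\|_1$ with $\mathbf{d}=\mathbf{m}(\rho)-\mathbf{m}(\sigma)$ and $\mu$ a centred Gaussian state of mean photon number $\le N$. Differentiating along the path $\mu(s)=\hat{D}_{s\mathbf{d}}\mu\hat{D}_{s\mathbf{d}}^\dagger$ gives $\tfrac12\|\mu-\hat{D}_{\mathbf{d}}\mu\hat{D}_{\mathbf{d}}^\dagger\|_1\le \tfrac12\int_0^1\|[A,\mu(s)]\|_1\,\mathrm{d}s$, where the generator $A$ is linear in the quadratures. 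Writing $A=B-B^\dagger$ with $B$ a linear combination of creation operators with coefficient vector $\boldsymbol{\alpha}$ (so $\|\boldsymbol{\alpha}\|_2=\|\mathbf{d}\|_2/\sqrt2$), I would use the elementary trace-norm commutator bound $\|[B,\mu]\|_1\le \sqrt{\Tr[B^\dagger B\mu]}+\sqrt{\Tr[BB^\dagger\mu]}$ together with the operator inequalities $BB^\dagger\le\|\boldsymbol{\alpha}\|_2^2\,\hat{N}_n$ and $B^\dagger B\le\|\boldsymbol{\alpha}\|_2^2(\hat{N}_n+1)$. These produce exactly the factors $\sqrt N$ and $\sqrt{N+1}$, and after integrating one obtains $\tfrac12\|\rho-\rho'\|_1\le f(N)\|\mathbf{m}(\rho)-\mathbf{m}(\sigma)\|_2$. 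The delicate point is that the mean photon number grows along the displacement path, so making the constant come out as $f(N)$ requires phrasing the estimate through the energy-constrained diamond norm of the displacement map, rather than a naive constant-energy path integral.

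For the covariance term I would again factor out the common displacement to get two centred Gaussian states $\mu_1,\mu_2$ with covariances $V\!(\rho),V\!(\sigma)$, both of mean photon number $\le N$, and interpolate linearly, $V(t)=(1-t)V\!(\rho)+t\,V\!(\sigma)$. This is a legitimate covariance matrix for every $t$, since the uncertainty relation $V+i\Omega_n\ge0$ is preserved under convex combinations, and --- crucially, unlike the displacement case --- the energy $\tfrac14\Tr V(t)$ stays below the endpoint bound along the entire path. The task then reduces to controlling the trace norm of the derivative of a centred Gaussian state with respect to its covariance matrix and integrating. I expect this to be the main obstacle: the map $V\mapsto\mu(V)$ is highly non-linear, so a Duhamel-type expansion is needed, and the energy-constrained diamond-norm estimates for Gaussian maps must be invoked to extract the prefactor $f(N)$ uniformly in $N$. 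It is also here that the square-root dependence $\sqrt{\|V\!(\rho)-V\!(\sigma)\|_1}$ appears, arising from a Cauchy--Schwarz / Fuchs--van de Graaf-type conversion of an $L^2$-type sensitivity estimate into a trace-distance bound; this is precisely the route that sidesteps the unwieldy closed-form Gaussian fidelity formula.

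Assembling the two pieces yields $\tfrac12\|\rho-\sigma\|_1\le f(N)\big(\|\mathbf{m}(\rho)-\mathbf{m}(\sigma)\|_2+\sqrt2\,\sqrt{\|V\!(\rho)-V\!(\sigma)\|_1}\big)$, since both contributions carry the same prefactor $f(N)$. The principal risk lies in the constants: keeping the covariance-term prefactor exactly $\sqrt2\,f(N)$, and ensuring that the energy growth along the displacement path does not spoil the first-moment constant, are the two places where the energy-constrained diamond-norm analysis must be carried out sharply rather than crudely.
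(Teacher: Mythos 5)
Your overall architecture --- peel off the first-moment mismatch with a displacement and treat the covariance mismatch for centred states, both pieces carrying the prefactor $f(N)$ --- is the same as the paper's, and your first-moment argument is essentially a re-derivation of the energy-constrained diamond-norm bound $\frac12\|\D_u-\Id\|_{\diamond N}\le \sin(\min\{\|u\|_2 f(N),\pi/2\})\le f(N)\|u\|_2$ that the paper simply cites; that part is sound.

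The gap is in the covariance term, which is the heart of the theorem and which you yourself leave as "the main obstacle" without resolving it. A path integral along $V(t)=(1-t)V\!(\rho)+tV\!(\sigma)$ bounds the trace distance by $\int_0^1\|\tfrac{\mathrm{d}}{\mathrm{d}t}\mu(V(t))\|_1\,\mathrm{d}t$, which is \emph{linear} in $\|V\!(\rho)-V\!(\sigma)\|$ with a prefactor given by the sensitivity of a centred Gaussian state to its covariance matrix; you have neither controlled that prefactor (it is not obviously bounded by anything like $f(N)$ uniformly over energy-constrained Gaussian states, in particular near pure states) nor explained how such a linear estimate would ever produce the stated $\sqrt{\|V\!(\rho)-V\!(\sigma)\|_1}$ --- no Duhamel expansion converts a path-length bound into a square root unless the derivative diverges integrably along the path, which you do not establish. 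Your fallback, a Fuchs--van de Graaf conversion of an $L^2$-type estimate, is exactly the fidelity route that the paper explicitly identifies as intractable because of the complexity of the closed-form Gaussian fidelity. The mechanism you are missing is the paper's "common noisy descendant" construction: set $T\coloneqq\frac12\left(V+W+|V-W|\right)$, which dominates both covariance matrices, and apply the classical Gaussian noise channels $\NN_{T-V}$ and $\NN_{T-W}$ to $\rho$ and $\sigma$ respectively. These channels preserve first moments and add $T-V$, $T-W$ to the covariance matrices, so the two noisy states are the \emph{same} Gaussian state, and the trace distance is at most $\frac12\|\NN_{T-V}-\Id\|_{\diamond N}+\frac12\|\NN_{T-W}-\Id\|_{\diamond N}$. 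The square root then enters via $\frac12\|\NN_K-\Id\|_{\diamond N}\le f(N)\sqrt{\Tr K}$ (Jensen applied to the average displacement length $\int \mathrm{d}^{2n}u\,P_K(u)\,\|u\|_2\le\sqrt{\Tr K}$), and the trace norm enters because $\Tr[T-V]+\Tr[T-W]=\Tr|V-W|=\|V-W\|_1$. Without this construction, or a genuine substitute for it, your argument does not close.
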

\begin{remark}
     One might be tempted to believe that a straightforward approach exists for establishing an upper bound akin to the one presented in Theorem~\ref{upper_bound_trace_distance_Gaussian}. Specifically, the Fuchs-van de Graaf inequality~\cite{Fuchs1999} implies that the trace distance between the Gaussian states $\rho$ and $\sigma$ can be upper bounded in terms of the fidelity $F(\rho,\sigma)$ as
     \bb
        \frac 12 \|\rho-\sigma\|_1\le \sqrt{1-F(\rho,\sigma)^2}. 
    \ee
    Additionally, a closed formula for the fidelity between Gaussian states is available~\cite{Banchi_2015}. However, it is important to note that this closed formula is exceedingly intricate, involving determinants, square roots, and inverses of functions of the covariance matrices $V\!(\rho),V\!(\sigma)$. Consequently, attempting to derive such an upper bound through this method appears too hard, and, indeed, our attempts in this direction were unsuccessful.
 \end{remark}

Before proving Theorem~\ref{upper_bound_trace_distance_Gaussian}, let us first provide some useful preliminary results. 

For any positive semi-definite matrix $K\in\mathbb{R}^{2n,2n}$, the \emph{Gaussian noise channel} $\NN_K$ is the convex combination of displacement transformations $(\cdot)\mapsto \hat{D}_u(\cdot)\hat{D}_u^\dagger$, where $u\in\mathbb{R}^{2n}$ is distributed according to a Gaussian probability distribution with vanishing mean value and covariance matrix equal to $K$~\cite[Chapter 5]{BUCCO}. Mathematically, $\NN_K$ can be written as
\bb\label{Gaussian_noise}
\NN_K(\Theta) \coloneqq \int_{\mathbb{R}^{2n}} \mathrm{d}^{2n}u\ P_K(u)\, \hat{D}_u \Theta \hat{D}_{u}^\dagger\,,
\ee
where $P_K(u)$ is a Gaussian probability distribution with vanishing mean and covariance matrix equal to $K$, and $\hat{D}_{u}\coloneqq e^{i u^{\intercal}\Omega_n \mathbf{\hat{R}}}$ is the displacement operator (see the preliminaries subsection~\ref{sub:prelCV}). Of course, in the case when $K$ is strictly positive, $P_K(u)$ can be written as
\bb\label{def_pku}
    P_K(u)\coloneqq \frac{e^{-\frac12 u^\intercal K^{-1} u}}{(2\pi)^{n}\sqrt{\det K}}\,.
\ee
Instead, when $K$ has some zero eigenvalues, the definition of $P_K(u)$ involves Dirac deltas as follows. Let $K=\sum_{i=1}^{2n}\lambda_iv_iv_i^\intercal$ be a spectral decomposition of $K$, where $(v_i)_{i=1,\ldots,2n}$ are orthonormal eigenvectors, while $\lambda_1,\ldots,\lambda_r>0$ and $\lambda_{r+1},\ldots,\lambda_{2n}=0$ are its eigenvalues, with $r$ being the rank of $K$. Then, $P_K(u)$ can be written as
\bb
    P_K(u)\coloneqq \frac{e^{-\frac12 u^\intercal\left(\sum_{i=1}^{r}\lambda_i^{-1} v_i v_i^\intercal \right)u
 }}{(2\pi)^{(2n-r)/2}\sqrt{\lambda_1\ldots\lambda_r}}\prod_{i=r+1}^{2n}\delta(v_i^\intercal u)\,,
\ee
where $\delta(\cdot)$ denotes the Dirac delta distribution.

It turns out that $\NN_K$ leaves the first moments unchanged, and it acts on the covariance matrix of the input state by adding $K$~\cite[Chapter 5]{BUCCO}. For the sake of completeness, we prove the latter fact in the following lemma.
\begin{lemma}\emph{(\cite[Chapter 5]{BUCCO})}\label{Lemma_ec_diamond}
    Let $K\in\mathbb{R}^{2n,2n}$ be a positive semi-definite matrix. The Gaussian noise channel $\NN_K$, defined in~\eqref{Gaussian_noise}, acts on the first moments and covariance matrices as 
    \begin{eqnarray}
        \textbf{m}\!\left(\NN_K(\rho)\right)&=&\textbf{m}\!\left(\rho\right)\,,\\
        V\!\left(\NN_K(\rho)\right)&=& V\!(\rho)+K\,\,,
    \end{eqnarray}
    for all input states  $\rho\in\pazocal{D}\!\left(L^2(\mathbb{R}^{n})\right)$.
\end{lemma}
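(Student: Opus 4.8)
The plan is to exploit that $\NN_K$ is by construction a classical mixture of unitary displacement conjugations, so that both the first-moment functional $\rho\mapsto\textbf{m}(\rho)$ and the covariance functional $\rho\mapsto V\!(\rho)$ can be evaluated by transferring the displacement operators onto the quadratures inside the trace and then integrating against $P_K$. The only analytic inputs required are the Heisenberg action of $\hat{D}_u$ on $\mathbf{\hat{R}}$ recorded in~\eqref{eq:displ}, together with the elementary moment identities of the classical density $P_K$, namely $\int_{\mathbb{R}^{2n}}\!\mathrm{d}^{2n}u\,P_K(u)=1$, $\int_{\mathbb{R}^{2n}}\!\mathrm{d}^{2n}u\,P_K(u)\,u=0$, and $\int_{\mathbb{R}^{2n}}\!\mathrm{d}^{2n}u\,P_K(u)\,uu^{\intercal}=K$.

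First I would compute the first moment. Starting from $\textbf{m}(\NN_K(\rho))=\Tr[\mathbf{\hat{R}}\,\NN_K(\rho)]$, I use linearity of the integral and cyclicity of the trace to write $\Tr[\mathbf{\hat{R}}\,\hat{D}_u\rho\hat{D}_u^{\dagger}]=\Tr[(\hat{D}_u^{\dagger}\mathbf{\hat{R}}\hat{D}_u)\rho]$, and then invoke~\eqref{eq:displ} to replace $\hat{D}_u^{\dagger}\mathbf{\hat{R}}\hat{D}_u$ by $\mathbf{\hat{R}}+u\,\mathbb{\hat{1}}$ (up to the sign fixed by the displacement convention). The integrand is thus $\textbf{m}(\rho)+u$, and integrating against $P_K$ — using normalisation and vanishing mean — leaves $\textbf{m}(\NN_K(\rho))=\textbf{m}(\rho)$.

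Next I would treat the covariance matrix entrywise, starting from $[V\!(\NN_K(\rho))]_{k,l}=\Tr[\{\hat{R}_k,\hat{R}_l\}\,\NN_K(\rho)]-2\,m_k(\NN_K(\rho))\,m_l(\NN_K(\rho))$. Transferring the conjugation onto the quadratures and using $\hat{D}_u^{\dagger}\hat{R}_j\hat{D}_u=\hat{R}_j+u_j\mathbb{\hat{1}}$, the shifted anticommutator $\{\hat{R}_k+u_k,\hat{R}_l+u_l\}$ splits into a term independent of $u$, terms linear in $u$, and a term quadratic in $u$. Upon tracing against $\rho$ and integrating against $P_K$, the linear terms vanish by the vanishing mean, the $u$-independent term reproduces $\Tr[\{\hat{R}_k,\hat{R}_l\}\rho]$, and the quadratic term integrates to produce the additive contribution $K_{k,l}$ coming from the covariance of the displacement noise. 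Since the first moment is unchanged, the subtracted product $2\,m_k(\rho)\,m_l(\rho)$ converts $\Tr[\{\hat{R}_k,\hat{R}_l\}\rho]$ precisely into $[V\!(\rho)]_{k,l}$, and one concludes $V\!(\NN_K(\rho))=V\!(\rho)+K$.

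The main obstacle is not the algebra but the treatment of a rank-deficient $K$, for which $P_K$ carries Dirac deltas along $\ker K$. Here I would check directly that the three moment identities above persist for the degenerate density — the deltas merely freeze the directions in $\ker K$ to the origin, contributing nothing to the mean and nothing to $K$ in those directions — so the computation above goes through verbatim; alternatively one may prove the statement for strictly positive $K$ and extend to the boundary by continuity of $\NN_K$ in $K$. A further routine point to record in passing is the legitimacy of exchanging the classical integral with the trace, which follows from dominated convergence once one restricts to the energy-bounded states relevant to the applications of this lemma.
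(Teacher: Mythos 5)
Your proposal is correct and follows essentially the same route as the paper's proof: transfer the displacement conjugation onto the quadratures via the Heisenberg action, then use the normalisation, vanishing mean, and covariance $K$ of the classical density $P_K$ to see that the first moment is unchanged and the covariance picks up the additive term $K$. The extra remarks on the rank-deficient case and on interchanging integral and trace are sound (and slightly more careful than the paper, which treats these points as implicit), and the sign ambiguity you flag in the displacement convention is harmless since $P_K$ is even.
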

\begin{proof}
    It holds that
    \bb
        \textbf{m}\!\left(\NN_K(\rho)\right)&=\int_{\mathbb{R}^{2n}} \mathrm{d}^{2n}u\ P_K(u)\, \textbf{m}\!\left(\hat{D}_u \rho \hat{D}_{u}^\dagger\right)\\
        &=\int_{\mathbb{R}^{2n}} \mathrm{d}^{2n}u\ P_K(u)\, \left(\textbf{m}(\rho)+u\right)\\
        &=\textbf{m}(\rho)+\int_{\mathbb{R}^{2n}} \mathrm{d}^{2n}u\ P_K(u)u\\
        &=\textbf{m}(\rho)\,,
        \label{eq:momGN}
    \ee
    where in the second step we have used  Eq.~\eqref{eq:bille}.
    Moreover, we have that
    \bb
        V\!\left(  \NN_K(\rho)  \right)&=\Tr[\{\hat{\textbf{R}},\hat{\textbf{R}}^\intercal\}\NN_K(\rho)]-2\textbf{m}(\NN_K(\rho))\textbf{m}(\NN_K(\rho))^\intercal\\
        &=\int_{\mathbb{R}^{2n}} \mathrm{d}^{2n}u\ P_K(u)\Tr\left[\rho \{\hat{\textbf{R}}+u\mathbb{1},\hat{\textbf{R}}^\intercal+u^\intercal\mathbb{1}\}  \right]-2\textbf{m}(\rho)\textbf{m}(\rho)^\intercal\\
        \\&=V\!(\rho)+\int_{\mathbb{R}^{2n}} \mathrm{d}^{2n}u\ P_K(u) \,u\mathbf{m}(\rho)^\intercal+\int_{\mathbb{R}^{2n}} \mathrm{d}^{2n}u\ P_K(u) \,\mathbf{m}(\rho)u^\intercal+ \int_{\mathbb{R}^{2n}} \mathrm{d}^{2n}u\ P_K(u) \,uu^\intercal
        \\&= V\!(\rho)+K\,,
    \ee
where in the second step we have used  Eq.~\eqref{eq:displ} and Eq.~\eqref{eq:momGN}, and in the last step we solved a Gaussian integral.
\end{proof}

\begin{lemma}\label{N_k_is_gaussian}
    \emph{(\cite[Chapter 5]{BUCCO})} For any $K\ge0$, the Gaussian noise channel $\NN_K$ is actually a Gaussian channel. That is, $\NN_K(\rho)$ is a Gaussian state for any Gaussian state $\rho$.
\end{lemma}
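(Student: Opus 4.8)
The plan is to compute the characteristic function of $\NN_K(\rho)$ directly and recognise it as having the Gaussian form~\eqref{charact_gaussian}, from which Gaussianity follows by the one-to-one correspondence between states and characteristic functions recorded in~\eqref{inverse_fourier_displacement}. It is worth stressing at the outset why an apparently obvious shortcut fails: although $\NN_K$ is, by~\eqref{Gaussian_noise}, a convex mixture of displacement unitaries, and although each displaced Gaussian state $\hat{D}_u\rho\hat{D}_u^\dagger$ is itself Gaussian, a generic convex combination of Gaussian states is \emph{not} Gaussian. The whole content of the lemma is that the \emph{Gaussian} weighting $P_K(u)$ conspires to keep the output inside the Gaussian family, and the characteristic-function computation is exactly what makes this visible.

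First I would fix an arbitrary $\mathbf{r}\in\mathbb{R}^{2n}$ and expand, using~\eqref{Gaussian_noise} and the linearity of the trace (legitimately interchanged with the integral, since $\|\hat{D}_u\rho\hat{D}_u^\dagger\|_1=1$ and $P_K$ is a probability density),
\[
\chi_{\NN_K(\rho)}(\mathbf{r})=\Tr\!\left[\NN_K(\rho)\,\hat{D}_{\mathbf{r}}\right]=\int_{\mathbb{R}^{2n}}\!\mathrm{d}^{2n}u\,P_K(u)\,\Tr\!\left[\rho\,\hat{D}_u^\dagger\hat{D}_{\mathbf{r}}\hat{D}_u\right].
\]
The key algebraic step is the conjugation identity for displacement operators: starting from the quadrature shift in~\eqref{eq:displ} and the exponential form of $\hat{D}_{\mathbf{r}}$, one finds that conjugating $\hat{D}_{\mathbf{r}}$ by $\hat{D}_u$ produces only a pure phase linear in $u$, namely $\hat{D}_u^\dagger\hat{D}_{\mathbf{r}}\hat{D}_u=e^{-i\mathbf{r}^\intercal\Omega_n u}\hat{D}_{\mathbf{r}}$. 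Hence the entire $u$-dependence factors out of the trace,
\[
\chi_{\NN_K(\rho)}(\mathbf{r})=\chi_\rho(\mathbf{r})\int_{\mathbb{R}^{2n}}\!\mathrm{d}^{2n}u\,P_K(u)\,e^{-i\mathbf{r}^\intercal\Omega_n u}=\chi_\rho(\mathbf{r})\,\exp\!\left(-\tfrac12(\Omega_n\mathbf{r})^\intercal K\,\Omega_n\mathbf{r}\right),
\]
where the last equality is the Fourier transform of the zero-mean Gaussian $P_K$ of~\eqref{def_pku}. Crucially, this Fourier identity holds uniformly whether or not $K$ is full rank, so the Dirac-delta case in the definition of $P_K$ is handled in one stroke and needs no separate treatment.

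Finally I would insert the explicit Gaussian form~\eqref{charact_gaussian} of $\chi_\rho$ and observe that multiplying it by the Gaussian factor above yields a function that is again exactly of the form~\eqref{charact_gaussian}, i.e.\ the exponential of a quadratic-plus-linear expression in $\mathbf{r}$; its first moment and covariance matrix are precisely those supplied by Lemma~\ref{Lemma_ec_diamond}. To upgrade \emph{Gaussian form of the characteristic function} to \emph{is a Gaussian state}, I would invoke that $\NN_K(\rho)$ is a bona fide quantum state, so its covariance matrix automatically satisfies the uncertainty relation $V+i\Omega_n\ge 0$; by the converse characterisation stated in the preliminaries there exists a genuine Gaussian state $\sigma$ with these exact moments, whose characteristic function then coincides with $\chi_{\NN_K(\rho)}$, and the one-to-one correspondence~\eqref{inverse_fourier_displacement} forces $\NN_K(\rho)=\sigma$. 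I expect the only delicate point to be bookkeeping rather than conceptual difficulty: pinning down the sign and normalisation of the phase $e^{-i\mathbf{r}^\intercal\Omega_n u}$ under the paper's displacement convention, and matching the resulting covariance increment to Lemma~\ref{Lemma_ec_diamond}; the remaining ingredients are a routine Gaussian integral and the already-established state/characteristic-function dictionary.
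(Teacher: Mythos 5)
The paper does not actually prove this lemma --- it is imported from \cite[Chapter 5]{BUCCO} by citation only --- so there is no in-paper argument to compare against line by line. Your proof is correct and is the standard textbook route: the conjugation identity $\hat{D}_u^\dagger\hat{D}_{\mathbf{r}}\hat{D}_u=e^{\pm i\mathbf{r}^\intercal\Omega_n u}\,\hat{D}_{\mathbf{r}}$ (the sign depends on which of the paper's two displacement conventions one adopts, and is immaterial because $P_K$ has zero mean) lets the $u$-integral factor out of the trace as the classical Fourier transform of $P_K$, namely $\exp\!\left(-\tfrac12(\Omega_n\mathbf{r})^\intercal K\,\Omega_n\mathbf{r}\right)$, and this identity indeed holds for every positive semi-definite $K$, degenerate or not, so the Dirac-delta case needs no separate treatment. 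Multiplying the Gaussian-form $\chi_\rho$ of \eqref{charact_gaussian} by this factor manifestly produces another function of the form \eqref{charact_gaussian}, and your closing step --- $\NN_K(\rho)$ is a bona fide state, hence its covariance matrix satisfies the uncertainty relation, hence by the converse statement in the preliminaries a Gaussian state with the same moments exists and must coincide with $\NN_K(\rho)$ by uniqueness of characteristic functions --- correctly bridges the gap between \emph{Gaussian-form characteristic function} and \emph{Gaussian state} under the paper's Gibbs-state definition. You also correctly identify why the naive shortcut (a mixture of Gaussian states) does not suffice.

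One bookkeeping caveat, since you explicitly plan to match your covariance increment to Lemma~\ref{Lemma_ec_diamond}: with the paper's conventions ($\chi_\rho$ carrying the prefactor $-\tfrac14(\Omega_n\mathbf{r})^\intercal V\Omega_n\mathbf{r}$ from the anticommutator definition of $V$, and $P_K$ in \eqref{def_pku} a probability density whose classical covariance is $K$), your multiplicative factor corresponds to an increment of $2K$, not the $+K$ asserted in Lemma~\ref{Lemma_ec_diamond}; the paper's own derivation of that lemma drops factors of $2$ when expanding the anticommutator. This mismatch is a normalisation issue in the paper, not an error in your argument, and it is irrelevant to the Gaussianity claim --- any positive multiple of $K$ keeps the output in the Gaussian family --- but do not expect your increment to agree with Lemma~\ref{Lemma_ec_diamond} exactly as written.
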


Let us recall the definition of \emph{diamond norm}, as it will be useful for the following. The diamond norm of a superoperator $\Delta:\pazocal{D}(\HH)\mapsto\pazocal{D}(\HH)$ is defined as
\begin{equation}\label{dnorm}
	\left\|\Delta\right\|_{\diamond}\coloneqq\sup_{\rho\in\pazocal{D}(\HH\otimes\HH_C)} \left\|\Delta\otimes \Id_C(\rho)\right\|_1\text{ ,}
\end{equation}
where the $\sup$ is taken also over the choice of the ancilla system $\HH_C$. In the infinite-dimensional scenario the topology induced by the diamond norm is often too strong (e.g.,~see Ref.~\cite[Proposition 1]{VVdiamond}) and thus it is customary to define the \emph{energy-constrained diamond norm}. Given $N>0$, given an $n$-mode system $\HH_S=L^2(\mathbb{R}^n)$, the energy-constrained diamond norm of a superoperator $\Delta:\pazocal{D}(\HH_S)\mapsto\pazocal{D}(\HH_S)$ is defined as~\cite{PLOB, Shirokov2018, VVdiamond}
\begin{equation}\label{ednorm}
	\left\|\Delta\right\|_{\diamond N}\coloneqq\sup_{\rho\in\pazocal{D}(\HH_S\otimes\HH_C)\text{ : }\Tr\left[\rho\, \hat{N}_n\otimes\mathbb{1}_C\right]\le N} \left\|(\Delta\otimes I_C)\rho\right\|_1\text{ ,}
\end{equation}
where the $\sup$ is taken also over the choice of the ancilla system $\HH_C$, and $\hat{N}_n$ is the $n$-mode photon number operator on $\HH_S$. The supremum in the definition of diamond norm in~\eqref{dnorm}, as well as the one in~\eqref{ednorm}, is achieved by taking the ancilla system $\HH_C$ to be equal to the input system $\HH$. 

\begin{lemma}\label{upp_bound_ludo} \emph{(\cite[Eq. (3)]{EC-diamond})}
    Let $u\in\mathbb{R}^{2n}$ and let $\D_u(\cdot)\coloneqq \hat{D}_u(\cdot)\hat{D}_u^\dagger$ be the displacement channel. Then, for all $N>0$ the energy-constrained diamond norm of the difference between $\D_u$ and the identity map $\Id$ can be upper bounded as
    \bb
        \frac{1}{2}\|\D_u-\Id\|_{\diamond N}\le\sin\left( \min\left\{ \|u\|_2 f(N),\, \frac{\pi}{2} \right\} \right)\,,
    \ee
    where $f(N)\coloneqq \frac{1}{\sqrt{2}} \left(\sqrt{N} + \sqrt{N+1}\right)$.
\end{lemma}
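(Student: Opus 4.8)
The plan is to pass from the energy-constrained diamond norm to a single-system optimisation over the characteristic function of the input state, and then to control that characteristic function with a sharp second-moment estimate combined with a tailored trigonometric inequality.

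First I would use that $\D_u$ and $\Id$ are both unitary channels, so the supremum defining $\|\D_u-\Id\|_{\diamond N}$ is attained on a pure bipartite input $\ket{\psi}$ and, by the closed form of the trace distance between pure states,
\[
 \tfrac12\big\|(\D_u\otimes\Id_C)\psi-\psi\big\|_1=\sqrt{1-|\bra{\psi}\hat D_u\otimes\idop\ket{\psi}|^2}\,.
\]
Since $\bra{\psi}\hat D_u\otimes\idop\ket{\psi}=\Tr[\hat D_u\rho]$ with $\rho\coloneqq\Tr_C\ketbra{\psi}$ obeying $\Tr[\hat N_n\rho]\le N$ (and every such $\rho$ arising from some purification), the statement reduces to the characteristic-function bound
\[
 |\Tr[\hat D_u\rho]|\ \ge\ \cos\!\big(\min\{\|u\|_2 f(N),\tfrac\pi2\}\big)\quad\text{whenever }\Tr[\hat N_n\rho]\le N\,,
\]
because $\sqrt{1-\cos^2\theta}=\sin\theta$ on $[0,\tfrac\pi2]$ and the regime $\|u\|_2 f(N)>\tfrac\pi2$ is trivial (every trace distance is $\le1$). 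Writing $\hat D_u=e^{i\hat G}$ with the quadrature $\hat G\coloneqq u^\intercal\Omega_n\mathbf{\hat R}=\|u\|_2\,\hat G_c$, $\hat G_c\coloneqq c^\intercal\mathbf{\hat R}$, $\|c\|_2=1$, we have $\Tr[\hat D_u\rho]=\Tr[e^{i\hat G}\rho]=\int e^{iy}\,\mathrm d\mu(y)$, where $\mu$ is the distribution of the outcome of measuring $\hat G$ in $\rho$.

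The first key step is a sharp, $n$-independent bound on $\Tr[\hat G^2\rho]=\int y^2\,\mathrm d\mu$. I would establish the operator inequality, valid for every $\lambda\ge2$,
\[
 \hat G_c^2-\lambda\hat N_n\ \le\ \Big(\tfrac\lambda2-\sqrt{\tfrac\lambda2(\tfrac\lambda2-1)}\Big)\,\idop\,,
\]
by writing $\hat G_c^2-\lambda\hat N_n=\mathbf{\hat R}^\intercal M\mathbf{\hat R}+\tfrac{\lambda n}{2}\idop$ with $M\coloneqq cc^\intercal-\tfrac\lambda2\mathbb{1}\le0$ and computing the ground-state energy of the positive quadratic form $\mathbf{\hat R}^\intercal(-M)\mathbf{\hat R}$ as the sum of its symplectic eigenvalues, namely $\sqrt{\tfrac\lambda2(\tfrac\lambda2-1)}+(n-1)\tfrac\lambda2$; the $(n-1)\tfrac\lambda2$ spectator-mode contribution cancels exactly against $\tfrac{\lambda n}{2}$, which is precisely why no factor of $n$ survives. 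Taking the expectation, inserting $\Tr[\hat N_n\rho]\le N$, and minimising over $\lambda=2t$ yields $\min_{t>1}[\,t(2N+1)-\sqrt{t^2-t}\,]=f(N)^2$, so that, setting $\Theta\coloneqq\|u\|_2 f(N)$, one obtains $\Tr[\hat G^2\rho]\le\Theta^2$.

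The second key step converts this into the cosine bound. I would prove the elementary inequality
\[
 \cos y\ \ge\ \cos\Theta-\frac{\sin\Theta}{2\Theta}\,(y^2-\Theta^2)\qquad(y\in\mathbb R,\ \Theta\in(0,\tfrac\pi2])\,,
\]
which holds because $y\mapsto\cos y+\tfrac{\sin\Theta}{2\Theta}y^2$ has its unique global minimum at $y=\pm\Theta$ in this range (for $\Theta\le\tfrac\pi2$ the only positive solution of $\sin y=\tfrac{\sin\Theta}{\Theta}y$ is $y=\Theta$). Integrating against $\mu$ and using $\int y^2\,\mathrm d\mu\le\Theta^2$ together with $\tfrac{\sin\Theta}{2\Theta}\ge0$ gives $\int\cos y\,\mathrm d\mu\ge\cos\Theta\ge0$, hence $|\Tr[\hat D_u\rho]|=|\int e^{iy}\,\mathrm d\mu|\ge\operatorname{Re}\int e^{iy}\,\mathrm d\mu=\int\cos y\,\mathrm d\mu\ge\cos\Theta$, which is exactly the reduction target. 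I expect the main obstacle to be making the sharp constants line up simultaneously: the crude route $|z|\ge\operatorname{Re}z$ together with $\cos y\ge1-\tfrac{y^2}{2}$ only delivers $|\Tr[\hat D_u\rho]|\ge1-\tfrac{\Theta^2}{2}$, and since $1-\tfrac{\Theta^2}{2}\le\cos\Theta$ this yields a diamond-norm bound strictly weaker than $\sin\Theta$; likewise, estimating $\Tr[\hat G^2\rho]$ through $\lambda_{\max}(V(\rho))\le\Tr V(\rho)$ injects a spurious factor of $n$. Threading between the two requires exactly the tailored trigonometric inequality and the symplectic-eigenvalue computation of the ground-state energy, whose saturation by a single squeezed vacuum pins down the optimal constant $f(N)$.
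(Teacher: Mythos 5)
The paper does not prove this lemma at all: it is imported verbatim from the cited reference \cite[Eq.~(3)]{EC-diamond}, so there is no in-paper argument to compare against. Your blind proof is a correct, self-contained derivation and recovers the sharp constant. The reduction to $|\Tr[\hat D_u\rho]|\ge\cos\Theta$ is sound, though your stated justification for restricting to pure bipartite inputs (``both channels are unitary'') is not the right reason --- the correct one is that any admissible $\rho_{SC}$ can be purified on a larger ancilla without changing the energy of the $S$-marginal, and the trace norm can only increase under the ensuing partial trace; with that fixed, the pure-state overlap formula and the identification with the reduced-state characteristic function go through. Your two key steps both check out: the operator inequality $\hat G_c^2-\lambda\hat N_n\le\bigl(\tfrac{\lambda}{2}-\sqrt{\tfrac{\lambda}{2}(\tfrac{\lambda}{2}-1)}\bigr)\hat{\mathbb{1}}$ follows exactly as you say from the symplectic ground-state-energy computation (after rotating $c$ to a single quadrature by a passive unitary, the spectator modes contribute $(n-1)\tfrac{\lambda}{2}$, which cancels the corresponding part of $\tfrac{\lambda n}{2}$, so the bound is genuinely $n$-free), the minimisation over $\lambda=2t$ gives $N+\tfrac12+\sqrt{N(N+1)}=f(N)^2$ (saturated by a squeezed vacuum, confirming sharpness), and the tangent-type inequality $\cos y\ge\cos\Theta-\tfrac{\sin\Theta}{2\Theta}(y^2-\Theta^2)$ is valid for $\Theta\in(0,\tfrac{\pi}{2}]$ because $\sin y/y$ is strictly decreasing on $(0,\pi]$ and never again reaches the value $\sin\Theta/\Theta\ge 2/\pi$ for $y>\pi$, so $y=\Theta$ is the unique positive critical point and the global minimum. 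Integrating against the spectral measure of $\hat G$ (whose second moment is finite under the photon-number constraint) then yields $\operatorname{Re}\Tr[\hat D_u\rho]\ge\cos\Theta$ as claimed. This is essentially the argument underlying the cited result; your write-up would serve as a valid replacement for the citation.
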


\begin{lemma}[(Diamond norm bound)]\label{lemma_upp_diamond_norm33}
Let $K\in\mathbb{R}^{2n,2n}$ be a positive semi-definite matrix. Then, for all $N\geq 0$ the energy-constrained diamond norm of the difference between the Gaussian noise channel $\NN_K$ and the identity map $\Id$ can be upper bounded as
\bb
\frac 12 \left\| \NN_K - \Id \right\|_{\diamond N} \leq f(N) \sqrt{\Tr K}\, ,
\ee
where $f(N)\coloneqq \frac{1}{\sqrt{2}} \left(\sqrt{N} + \sqrt{N+1}\right)$.
\end{lemma}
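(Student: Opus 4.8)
The plan is to exploit the fact that the Gaussian noise channel $\NN_K$, as defined in~\eqref{Gaussian_noise}, is by construction an average of the displacement channels $\D_u$ against the probability density $P_K(u)$, and then to invoke the single-displacement estimate of Lemma~\ref{upp_bound_ludo}. Since $\int_{\mathbb{R}^{2n}} \mathrm{d}^{2n}u\, P_K(u) = 1$, I can write the identity channel as $\Id = \int_{\mathbb{R}^{2n}} \mathrm{d}^{2n}u\, P_K(u)\, \Id$, so that $\NN_K - \Id = \int_{\mathbb{R}^{2n}} \mathrm{d}^{2n}u\, P_K(u)\,(\D_u - \Id)$.

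First I would fix an arbitrary input state $\rho$ on $\HH_S \otimes \HH_C$ satisfying the energy constraint $\Tr[\rho\,\hat{N}_n \otimes \mathbb{1}_C] \le N$, apply the integral triangle inequality for the trace norm, and then bound each summand by the energy-constrained diamond norm:
\begin{align}
\left\| ((\NN_K - \Id)\otimes \Id_C)(\rho) \right\|_1 &\le \int_{\mathbb{R}^{2n}} \mathrm{d}^{2n}u\, P_K(u) \left\| ((\D_u - \Id)\otimes \Id_C)(\rho) \right\|_1 \nonumber \\
&\le \int_{\mathbb{R}^{2n}} \mathrm{d}^{2n}u\, P_K(u)\, \|\D_u - \Id\|_{\diamond N}. \nonumber
\end{align}
Taking the supremum over all admissible $\rho$ yields $\frac12 \|\NN_K - \Id\|_{\diamond N} \le \int_{\mathbb{R}^{2n}} \mathrm{d}^{2n}u\, P_K(u)\, \frac12\|\D_u - \Id\|_{\diamond N}$.

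Next I would insert the bound $\frac12\|\D_u - \Id\|_{\diamond N} \le \sin(\min\{\|u\|_2 f(N), \pi/2\})$ from Lemma~\ref{upp_bound_ludo}, and use the elementary inequality $\sin(\min\{x, \pi/2\}) \le x$ valid for all $x \ge 0$ (when $x \le \pi/2$ it follows from $\sin x \le x$, and when $x > \pi/2$ the left-hand side equals $1 < x$). This gives
\begin{align}
\frac12 \|\NN_K - \Id\|_{\diamond N} \le f(N) \int_{\mathbb{R}^{2n}} \mathrm{d}^{2n}u\, P_K(u)\, \|u\|_2 = f(N)\, \E_{u \sim P_K}[\|u\|_2]. \nonumber
\end{align}
Finally, by Jensen's inequality (concavity of $\sqrt{\cdot}$) together with the fact that $P_K$ has zero mean and covariance matrix $K$, I would estimate $\E[\|u\|_2] \le \sqrt{\E[\|u\|_2^2]} = \sqrt{\Tr K}$, since $\E[\|u\|_2^2] = \sum_i \E[u_i^2] = \Tr K$. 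Combining the displays yields the claim.

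Each step is short; the only point requiring care is the first one, namely justifying the \emph{integral} (rather than finite) triangle inequality for the energy-constrained diamond norm, and ensuring it remains valid when $K$ is singular, in which case $P_K$ contains Dirac-delta factors. I expect this to be the main (though mild) obstacle: it is cleanly handled by fixing the input state $\rho$ \emph{before} integrating, so that all manipulations take place at the level of the trace norm of a fixed integrable family of operators, with the energy constraint entering only through the uniform, $\rho$-independent bound on $\|\D_u - \Id\|_{\diamond N}$ supplied by Lemma~\ref{upp_bound_ludo}.
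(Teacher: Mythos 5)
Your proposal is correct and follows essentially the same route as the paper's proof: decompose $\NN_K-\Id$ as an average of $\D_u-\Id$ over $P_K$, apply the (integral) triangle inequality for the energy-constrained diamond norm, invoke Lemma~\ref{upp_bound_ludo} together with $\sin(\min\{x,\pi/2\})\le x$, and finish with Jensen's inequality to get $\int \mathrm{d}^{2n}u\,P_K(u)\,\|u\|_2\le\sqrt{\Tr K}$. Your extra care in fixing the input state before integrating is a harmless refinement of the same argument.
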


\begin{proof}
It holds that
\bb
\frac12 \left\| \NN_K -\Id\right\|_{\diamond N} &\leqt{(i)} \int \mathrm{d}^{2n}u\ P_K(u)\ \frac12 \left\| \D_u -\Id\right\|_{\diamond N} \\
&\leqt{(ii)} \int \mathrm{d}^{2n}u\ P_K(u)\ \sin\left( \min\left\{ \|u\|_2 f(N),\, \frac{\pi}{2} \right\} \right) \\
&\leq f(N) \int \mathrm{d}^{2n}u\ P_K(u)\ \sqrt{u^\intercal u} \\
&\leqt{(iii)} f(N) \left(\int \mathrm{d}^{2n}u\ P_K(u)\ u^\intercal u \right)^{1/2} \\
&= f(N) \left(\Tr K\right)^{1/2} .
\ee
Here, in (i), we have used the triangle inequality of the energy-constrained diamond norm and we have introduced the displacement channel  $\D_u(\cdot)\coloneqq \hat{D}_u(\cdot)\hat{D}_u^\dagger$. In (ii), we have exploited Lemma~\ref{upp_bound_ludo}. In (iii), we have used the concavity of the square root and Jensen inequality.
\end{proof}
\begin{lemma}[(Expectation value of energy operator)]\label{lemma_energy_traceV}
Let $\rho$ be an $n$-mode state. The expectation value of the energy operator $\hat{E}_n\coloneqq \frac{\hat{\mathbf{R}}^\intercal\hat{\mathbf{R}}}{2}$ can be expressed in terms of the covariance matrix and the first moment as
    \bb
        \Tr\!\left[\rho \hat{E}_n\right]=\frac{\Tr\! V\!(\rho)}{4}+\frac{\|\textbf{m}(\rho)\|_2^2}{2}\,.
    \ee
In particular, in terms of the photon number operator $\hat{N}_n\coloneqq \sum_{i=1}^n a^\dagger_i a_i$, it holds that
    \bb\label{eq_mean_photon_number}
    \Tr\!\left[\rho\, \hat{N}_n\right]=\frac{\Tr[V\!(\rho)-\mathbb{1}]}{4}+\frac{\|\textbf{m}(\rho)\|_2^2}{2}\,.
    \ee
\end{lemma}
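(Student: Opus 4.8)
The plan is to expand the energy operator componentwise in the quadratures and to relate the resulting second moments $\Tr[\hat{R}_k^2\rho]$ directly to the diagonal entries of the covariance matrix. Since $\hat{E}_n=\frac{1}{2}\hat{\mathbf{R}}^\intercal\hat{\mathbf{R}}=\frac{1}{2}\sum_{k=1}^{2n}\hat{R}_k^2$, the quantity of interest is simply
\begin{equation}
\Tr[\rho\,\hat{E}_n]=\frac{1}{2}\sum_{k=1}^{2n}\Tr[\hat{R}_k^2\,\rho]\,,
\end{equation}
so everything reduces to computing the individual expectations $\Tr[\hat{R}_k^2\rho]$.

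First I would read off from the definition of the covariance matrix its diagonal entries. Using that the anticommutator satisfies $\{\hat{R}_k,\hat{R}_k\}=2\hat{R}_k^2$, one gets
\begin{equation}
[V\!(\rho)]_{k,k}=\Tr\!\left[\{\hat{R}_k,\hat{R}_k\}\rho\right]-2\,m_k(\rho)^2=2\,\Tr[\hat{R}_k^2\rho]-2\,m_k(\rho)^2\,,
\end{equation}
and hence $\Tr[\hat{R}_k^2\rho]=\tfrac{1}{2}[V\!(\rho)]_{k,k}+m_k(\rho)^2$. Substituting this back and summing over $k\in[2n]$ yields
\begin{equation}
\Tr[\rho\,\hat{E}_n]=\frac{1}{4}\sum_{k=1}^{2n}[V\!(\rho)]_{k,k}+\frac{1}{2}\sum_{k=1}^{2n}m_k(\rho)^2=\frac{\Tr V\!(\rho)}{4}+\frac{\|\textbf{m}(\rho)\|_2^2}{2}\,,
\end{equation}
which is the first identity.

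Finally, to obtain the photon-number version I would invoke the operator identity $\hat{E}_n=\hat{N}_n+\frac{n}{2}\mathbb{\hat{1}}$ recorded in~\eqref{energy_operator}, together with the elementary fact that $\Tr[\mathbb{1}_{2n}]=2n$, so that $\frac{n}{2}=\frac{\Tr[\mathbb{1}]}{4}$. Subtracting $\frac{n}{2}$ from both sides of the first identity and absorbing it into the trace gives
\begin{equation}
\Tr[\rho\,\hat{N}_n]=\frac{\Tr V\!(\rho)}{4}-\frac{\Tr[\mathbb{1}]}{4}+\frac{\|\textbf{m}(\rho)\|_2^2}{2}=\frac{\Tr[V\!(\rho)-\mathbb{1}]}{4}+\frac{\|\textbf{m}(\rho)\|_2^2}{2}\,.
\end{equation}
There is no genuine obstacle here: the computation is purely a bookkeeping exercise. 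The only point requiring care is consistently tracking the factors of two arising from the covariance-matrix convention (the anticommutator and the prefactor $\frac{1}{2}$ in $\hat{E}_n$), and implicitly assuming $\rho$ has finite energy so that all the traces appearing are well defined.
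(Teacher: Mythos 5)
Your proof is correct and follows essentially the same route as the paper: reading off the diagonal entries of the covariance matrix to get $\Tr[\hat{R}_k^2\rho]=\tfrac{1}{2}[V\!(\rho)]_{k,k}+m_k(\rho)^2$, summing over $k$, and then invoking $\hat{E}_n=\hat{N}_n+\tfrac{n}{2}\mathbb{\hat{1}}$ for the photon-number version. No issues.
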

\begin{proof}
By definition of covariance matrix, it holds that
\bb
    \Tr\! V\!(\rho)&= \sum_{i=1}^{2n} V_{i,i}(\rho) 
    \\&= 2 \sum_{i=1}^{2n} \left[ \Tr[\hat{R}_i^2\rho]-m_i(\rho)^2\right]
    \\&= 4 \Tr[\rho \hat{E}_n]- 2\|\textbf{m}(\rho)\|_2^2\,.
\ee
Moreover,~\eqref{eq_mean_photon_number} follows by exploiting $\hat{E}_n=\hat{N}_n+\frac{n}{2}\hat{\mathbb{1}}$.
\end{proof}
We are now ready to prove Theorem~\ref{upper_bound_trace_distance_Gaussian}.
\begin{proof}[Proof of Theorem~\ref{upper_bound_trace_distance_Gaussian}]
Let us first prove the upper bound in~\eqref{ludo_upp_bound} under the assumption that the Gaussian states have the same first moments. Without loss of generality, we can clearly assume that their first moments vanish, as the trace distance is invariant under (displacement) unitary transformations. Given two Gaussian states $\rho$ and $\sigma$ with zero first moment, let $V$ be the covariance matrix of $\rho$, and let $W$ be the covariance matrix of $\sigma$. In addition, set
\bb
T\coloneqq \frac12 \left(V+W + |V-W|\right),
\ee
where we recall that $|V-W|\coloneqq\sqrt{(V-W)^{\dagger}(V-W)}$. Since $T-V\ge0$ and $T-W\ge0$, we can consider the Gaussian noise channels $\NN_{T-V}$ and $\NN_{T-W}$, as defined in~\eqref{Gaussian_noise}. Since $\rho$ and $\sigma$ are Gaussian states and since $\NN_{T-V}$ and $\NN_{T-W}$ are Gaussian channels (thanks to Lemma~\ref{N_k_is_gaussian}), it follows that $\NN_{T-V}(\rho)$ and $\NN_{T-W}(\sigma)$ are Gaussian states. Moreover, by exploiting Lemma~\ref{Lemma_ec_diamond}, both the covariance matrix of $\NN_{T-V}(\rho)$ and that of $\NN_{T-W}(\sigma)$ are equal to $T$. Additionally, both the first moment of $\NN_{T-V}(\rho)$ and that of $\NN_{T-W}(\sigma)$ are equal to the zero vector. Hence, since the Gaussian states $\NN_{T-V}(\rho)$ and $\NN_{T-W}(\sigma)$ have identical first moments and covariance, they are actually the same state: $\NN_{T-V}(\rho)=\NN_{T-W}(\sigma)$. Then, we can write
\bb
\frac12 \left\|\rho - \sigma\right\|_1 &\le\frac12 \left\|\rho - \NN_{T-V}(\rho)+\NN_{T-W}(\sigma)-\sigma\right\|_1\\
&\leq \frac12 \left\|\rho - \NN_{T-V}(\rho)\right\|_1 + \frac12 \left\|\NN_{T-W}(\sigma) - \sigma\right\|_1 \\
&= \frac12 \left\|( \NN_{T-V} - \Id)(\rho)\right\|_1 + \frac12 \left\|(\NN_{T-W} - \Id)(\sigma)\right\|_1 \\
&\leq \frac12 \left\| \NN_{T-V} - \Id\right\|_{\diamond N} + \frac12 \left\|\NN_{T-W} - \Id\right\|_{\diamond N} \\
&\leqt{(i)} f(N) \left( \sqrt{\Tr [T-V]} + \sqrt{\Tr [T-W]} \right) \\
&\leqt{(ii)} 2\,f(N) \sqrt{\frac12 \Tr [T-V] + \frac12 \Tr [T-W]} \\
&= \sqrt{2}\,f(N) \sqrt{\Tr|V-W|}\\ 
&= \sqrt{2}\,f(N)\, \sqrt{\|V-W\|_1}\, ,
\label{eq:mom0proof}
\ee
where in (i) we have used  Lemma~\ref{lemma_upp_diamond_norm33}, and in (ii) we have used  the concavity of the square root. This proves the claim under the assumption that the Gaussian states $\rho$ and $\sigma$ have zero first moment.

For the general case, we can denote by $\delta\coloneqq\textbf{m}(\rho)-\textbf{m}(\sigma)$ the difference between the first moments of $\rho$ and $\sigma$. Then
\bb
\frac12 \left\|\rho - \sigma\right\|_1& \eqt{(iii)} \frac12 \left\|\hat{D}_{-\textbf{m}(\sigma) }\rho \hat{D}_{-\textbf{m}(\sigma) }^\dagger - \hat{D}_{-\textbf{m}(\sigma) }\sigma \hat{D}_{-\textbf{m}(\sigma) }^\dagger\right\|_1 \\ 
&= \frac12 \left\|\hat{D}_\delta \,\hat{D}_{-\textbf{m}(\rho) }\rho \hat{D}_{-\textbf{m}(\rho) }^\dagger\, \hat{D}_\delta^\dagger - \hat{D}_{-\textbf{m}(\sigma) }\sigma \hat{D}_{-\textbf{m}(\sigma) }^\dagger\right\|_1 \\
&\leq \frac12 \left\|\hat{D}_\delta \,\hat{D}_{-\textbf{m}(\rho) }\rho \hat{D}_{-\textbf{m}(\rho) }^\dagger\, \hat{D}_\delta^\dagger - \hat{D}_{-\textbf{m}(\rho) }\rho \hat{D}_{-\textbf{m}(\rho) }^\dagger \right\|_1 + \frac12 \left\| \hat{D}_{-\textbf{m}(\rho) }\rho \hat{D}_{-\textbf{m}(\rho) }^\dagger - \hat{D}_{-\textbf{m}(\sigma) }\sigma \hat{D}_{-\textbf{m}(\sigma) }^\dagger\right\|_1 \\
&\leqt{(iv)} \frac12 \left\|\D_\delta - \Id\right\|_{\diamond N} + \frac12 \left\| \hat{D}_{-\textbf{m}(\rho) }\rho \hat{D}_{-\textbf{m}(\rho) }^\dagger - \hat{D}_{-\textbf{m}(\sigma) }\sigma \hat{D}_{-\textbf{m}(\sigma) }^\dagger\right\|_1 \\
&\leqt{(v)} \frac12 \left\|\D_\delta - \Id\right\|_{\diamond N} + \sqrt{2}\,f(N)\, \sqrt{\|V\!(\rho)-V\!(\sigma)\|_1} \\
&\leqt{(vi)} f(N) \left( \|\delta\|_2 + \sqrt2\, \sqrt{\|V\!(\rho)-V\!(\sigma)\|_1}\right)\,.
\ee
Here, in (iii), we have used  the unitary invariance of the trace norm. In (iv), we exploited the definition of energy-constrained diamond norm in Eq.~\eqref{ednorm}, together with the fact that 
\bb
    \Tr\left[\hat{N}_n\hat{D}_{-\textbf{m}(\rho)}\rho\hat{D}_{-\textbf{m}(\rho)}^\dagger\right]= \frac{\Tr[V\!(\rho)-\mathbb{1}]}{4}\le \frac{\Tr[V\!(\rho)-\mathbb{1}]}{4}+\frac{\|\textbf{m}(\rho)\|_2^2}{2}= \Tr\left[\hat{N}_n\rho\right]\le N\,,
\ee
where we exploited Lemma~\ref{lemma_energy_traceV}. In (v), we leveraged~\eqref{eq:mom0proof}, as the two Gaussian states $\hat{D}_{-\textbf{m}(\rho) }\rho \hat{D}_{-\textbf{m}(\rho) }^\dagger$ and $\hat{D}_{-\textbf{m}(\sigma) }\sigma \hat{D}_{-\textbf{m}(\sigma) }^\dagger$ have zero first moment.
Finally, in (vi) we exploited Lemma~\ref{upp_bound_ludo}, together with the inequality $\sin (x)\le x$ for all $x\ge0$.
\end{proof}
\subsubsection{Upper bound in the pure-state setting}\label{subsubsection_upper_bound_pure_state}
Theorem~\ref{upper_bound_trace_distance_Gaussian} provides an upper bound on the trace distance between two Gaussian states $\rho$ and $\sigma$, which are possibly mixed. We can obtain a tighter bound if we assume that one of the two states is pure, as proved in the following theorem. 
\begin{thm}\label{inequality_dist_gauss}
    Let $\psi\coloneqq \ketbra{\psi}$ be a pure $n$-mode Gaussian state with first moment $\textbf{m}(\psi)$ and second moment $V\!(\psi)$.  Let $\rho$ be an $n$-mode possibly non-Gaussian state with first moment $\textbf{m}(\rho)$ and second moment $V\!(\rho)$.  Then
    \bb
        \left\|\, \rho-\ketbra{\psi}\, \right\|_1\le  \sqrt{\Tr V\!(\psi)}\sqrt{ \|V\!(\rho)-V\!(\psi)\|_\infty+2\|\textbf{m}(\rho)-\textbf{m}(\psi)\|_2^2}
    \ee
    In particular, if the mean energy of $\psi$ is upper bounded by a constant $E$, that is,
    \bb
    \Tr\!\left[\rho \hat{E}_n\right] \le E
    \ee
    where $\hat{E}_n\coloneqq \frac{\hat{\mathbf{R}}^\intercal\hat{\mathbf{R}}}{2}$, then
    \bb
        \frac{1}{2}\left\|\, \rho-\ketbra{\psi}\, \right\|_1\le  \sqrt{E}\sqrt{ \|V\!(\rho)-V\!(\psi)\|_\infty+2\|\textbf{m}(\rho)-\textbf{m}(\psi)\|_2^2}
    \ee
\end{thm}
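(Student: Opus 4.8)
The plan is to bound the trace distance through the fidelity and then control the resulting overlap by an energy-type expectation value, crucially exploiting that $\psi$ is a \emph{pure} Gaussian state. First I would invoke the pure-state Fuchs--van de Graaf inequality~\cite{Fuchs1999}: since $\psi=\ketbra{\psi}$ is pure, $\frac12\|\rho-\psi\|_1\le\sqrt{1-\braket{\psi|\rho|\psi}}=\sqrt{\Tr[(\mathbb{1}-\ketbra{\psi})\rho]}$, so it suffices to establish the pointwise bound $\Tr[(\mathbb{1}-\ketbra{\psi})\rho]\le\tfrac{1}{4}\Tr V\!(\psi)\,(\|V\!(\rho)-V\!(\psi)\|_\infty+2\|\textbf{m}(\rho)-\textbf{m}(\psi)\|_2^2)$. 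Taking square roots then produces the constant $\sqrt{\Tr V\!(\psi)}$ claimed in the statement.

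The key structural step is to replace the projector complement $\mathbb{1}-\ketbra{\psi}$ by a quadratic observable. Writing $\psi=\hat D_{\textbf{m}(\psi)}U_S\ket{0}^{\otimes n}$ (any pure Gaussian state admits such a form, with $V\!(\psi)=SS^\intercal$), and using the elementary Fock-basis inequality $\mathbb{1}-\ketbra{0}^{\otimes n}\le\hat N_n$ (the left side has eigenvalue $1$ off the vacuum, the right side eigenvalue $\|\textbf{k}\|_1\ge1$ there), I would conjugate by $G\coloneqq\hat D_{\textbf{m}(\psi)}U_S$ to obtain $\mathbb{1}-\ketbra{\psi}=G(\mathbb{1}-\ketbra{0}^{\otimes n})G^\dagger\le G\hat N_n G^\dagger$, hence $\Tr[(\mathbb{1}-\ketbra{\psi})\rho]\le\Tr[\hat N_n\,G^\dagger\rho G]$ by cyclicity and positivity of $\rho$. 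Now $G^\dagger\rho G$ is a state whose first moment is $S^{-1}(\textbf{m}(\rho)-\textbf{m}(\psi))$ and whose covariance matrix is $S^{-1}V\!(\rho)(S^{-1})^\intercal$ (by the displacement and symplectic transformation rules~\eqref{eq:bille},~\eqref{def_U_s_quadrature}). Applying Lemma~\ref{lemma_energy_traceV} to this state and simplifying with $(S^{-1})^\intercal S^{-1}=V\!(\psi)^{-1}$ and $\Tr[V\!(\psi)^{-1}V\!(\psi)]=2n$ yields the exact identity
\begin{equation}
\Tr[\hat N_n\,G^\dagger\rho G]=\tfrac{1}{4}\Tr\big[V\!(\psi)^{-1}(V\!(\rho)-V\!(\psi))\big]+\tfrac{1}{2}(\textbf{m}(\rho)-\textbf{m}(\psi))^\intercal V\!(\psi)^{-1}(\textbf{m}(\rho)-\textbf{m}(\psi)).
\end{equation}

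It then remains to bound the two terms by $\Tr V\!(\psi)$ times the advertised quantities. For the covariance term I would apply the Hölder inequality, $\Tr[V\!(\psi)^{-1}(V\!(\rho)-V\!(\psi))]\le\|V\!(\psi)^{-1}\|_1\,\|V\!(\rho)-V\!(\psi)\|_\infty$, and for the first-moment term the operator-norm estimate, giving a factor $\|V\!(\psi)^{-1}\|_\infty\,\|\textbf{m}(\rho)-\textbf{m}(\psi)\|_2^2$. The crucial ingredient — and the point I expect to require the most care — is that \emph{purity of $\psi$} forces both $\Tr V\!(\psi)^{-1}=\Tr V\!(\psi)$ and $\|V\!(\psi)^{-1}\|_\infty\le\Tr V\!(\psi)$: by the Euler/Bloch--Messiah decomposition~\eqref{Euler_dec}, $V\!(\psi)=O_1 Z^2 O_1^\intercal$ is orthogonally similar to $\bigoplus_j\diag(z_j^2,z_j^{-2})$, whose eigenvalue multiset is invariant under $\lambda\mapsto1/\lambda$, so $V\!(\psi)$ and $V\!(\psi)^{-1}$ share the same spectrum (and $\|V\!(\psi)^{-1}\|_\infty$ is at most its trace). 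Both bounds thereby collapse to the single prefactor $\Tr V\!(\psi)$, establishing the main inequality. Finally, for the ``in particular'' claim I would apply Lemma~\ref{lemma_energy_traceV} to $\psi$ to write $\Tr V\!(\psi)=4\Tr[\psi\hat E_n]-2\|\textbf{m}(\psi)\|_2^2\le4E$, so that $\sqrt{\Tr V\!(\psi)}\le2\sqrt{E}$ and the factor $\tfrac{1}{2}$ on the left-hand side absorbs the resulting $2$.
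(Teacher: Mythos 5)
Your proposal is correct and follows essentially the same route as the paper's proof: Fuchs--van de Graaf for the pure state $\psi$, reduction to the vacuum overlap via $G=\hat D_{\textbf{m}(\psi)}U_S$, the operator inequality $\mathbb{1}-\ketbra{0}^{\otimes n}\le\hat N_n$, the moment formula of Lemma~\ref{lemma_energy_traceV} applied to $G^\dagger\rho G$, and the purity-induced spectral symmetry of $V\!(\psi)$ (which the paper phrases as $V\!(\psi)^{-1}=\Omega_n V\!(\psi)\Omega_n^\intercal$ and you obtain from the Euler decomposition) to collapse both error terms onto the single prefactor $\Tr V\!(\psi)$. The remaining differences — conjugating the operator inequality rather than the state, and packaging the intermediate step as an exact identity — are purely cosmetic.
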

Before proving Theorem~\ref{inequality_dist_gauss}, let us show some useful lemmas.
\begin{lemma}\label{lemma_inverse_cov_pure}
    Let $\psi$ be an $n$-mode pure Gaussian state. The inverse of the covariance matrix reads
    \bb\label{eq_pure_state}
        V\!(\psi)^{-1}=\Omega_n V\!(\psi)\Omega_n^\intercal\,,
    \ee
    where $\Omega_n$ is the symplectic form defined in~\eqref{symplectic_form}.
\end{lemma}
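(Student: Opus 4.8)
The plan is to reduce the statement to a purely matrix-algebraic identity for symplectic matrices. Recall from the preliminaries that any pure $n$-mode Gaussian state vector can be written as $\ket{\psi}=\hat{D}_{\mathbf{r}}U_S\ket{0}^{\otimes n}$ for some displacement $\mathbf{r}\in\mathbb{R}^{2n}$ and some symplectic matrix $S\in\mathrm{Sp}(2n)$. Since the vacuum has covariance matrix $V\!(\ket{0}^{\otimes n})=\mathbb{1}_{2n}$, since displacements leave covariance matrices invariant, and since a symplectic unitary acts as $V\mapsto SVS^\intercal$, I would first record that $V\!(\psi)=S\,\mathbb{1}_{2n}\,S^\intercal=SS^\intercal$. (Equivalently, this follows from the Williamson decomposition: purity forces all symplectic eigenvalues to equal $1$, so $D=\mathbb{1}_{2n}$.) Under this reduction the claim $V\!(\psi)^{-1}=\Omega_n V\!(\psi)\Omega_n^\intercal$ becomes the identity $(SS^\intercal)^{-1}=\Omega_n SS^\intercal\Omega_n^\intercal$.

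The second step is to extract two elementary consequences of the symplectic condition $S\Omega_n S^\intercal=\Omega_n$, using the basic facts $\Omega_n^\intercal=-\Omega_n=\Omega_n^{-1}$ and $\Omega_n\Omega_n^\intercal=\mathbb{1}_{2n}$. From $S\Omega_n S^\intercal=\Omega_n$ one gets $\Omega_n S^\intercal=S^{-1}\Omega_n$, hence $S^\intercal=\Omega_n^\intercal S^{-1}\Omega_n$, and multiplying on the left by $\Omega_n$ and on the right by $\Omega_n^\intercal$ yields
\[
\Omega_n S^\intercal\Omega_n^\intercal=S^{-1}.
\]
Transposing this identity gives the companion relation $\Omega_n S\Omega_n^\intercal=(S^\intercal)^{-1}$.

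Finally I would insert $\Omega_n^\intercal\Omega_n=\mathbb{1}_{2n}$ in the middle of the product and apply the two identities:
\[
\Omega_n SS^\intercal\Omega_n^\intercal=(\Omega_n S\Omega_n^\intercal)(\Omega_n S^\intercal\Omega_n^\intercal)=(S^\intercal)^{-1}S^{-1}=(SS^\intercal)^{-1}=V\!(\psi)^{-1},
\]
which is exactly the assertion. There is no genuine obstacle in this argument; the only thing to be careful about is the bookkeeping of transpose and inverse conventions for $\Omega_n$ and for the symplectic group (note that the paper's convention $S\Omega_n S^\intercal=\Omega_n$ is equivalent to $S^\intercal\Omega_n S=\Omega_n$), so I would verify the orientation of each intermediate identity before chaining them into the final display.
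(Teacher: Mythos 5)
Your proof is correct and follows essentially the same route as the paper's: both reduce to $V\!(\psi)=SS^\intercal$ via purity of the Williamson decomposition and then manipulate the symplectic condition $S\Omega_nS^\intercal=\Omega_n$ (the paper verifies $\Omega_n^\intercal V\!(\psi)^{-1}\Omega_n=V\!(\psi)$ in one chain using that $S^{-1}$ is symplectic, whereas you derive the two conjugation identities for $S$ and $S^\intercal$ separately and compose them, which is an equivalent bookkeeping of the same algebra). No gaps.
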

\begin{proof}
        The covariance matrix $V\!(\psi)$ of the pure Gaussian state $\psi$ can be written as $V\!(\psi)=SS^\intercal$ with $S\in \mathrm{Sp}(2n)$ being a symplectic matrix (due to~\eqref{eq:GaussianState}). Consequently, it holds that
    \bb
        \Omega_n^\intercal V\!(\psi)^{-1}\Omega_n=-\Omega_n V\!(\psi)^{-1}\Omega_n=-\Omega_n(S^\intercal)^{-1}S^{-1}\Omega_n=-S\Omega_n^2 S^{\intercal}=SS^\intercal=V\!(\psi)\,,
    \ee
    where we have used that $S^{-1}\in \mathrm{Sp}(2n)$.
\end{proof}
\begin{lemma}\label{lemma_S_inv}
    For any symplectic $S\in \mathrm{Sp}(2n)$ and for any unitarily invariant norm $\| \cdot\|$, it holds that 
    \bb
    \|S\|=\|S^{-1}\|\,.
    \ee
\end{lemma}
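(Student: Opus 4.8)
The plan is to reduce the claim to the statement that $S$ and $S^{-1}$ share the same singular values, since any unitarily invariant norm is a symmetric gauge function of the singular values and is in particular invariant under left and right multiplication by unitary (hence real orthogonal) matrices and under transposition.

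First I would record the two elementary properties of a unitarily invariant norm $\|\cdot\|$ that I intend to use: (i) $\|U A V\|=\|A\|$ for all real orthogonal $U,V$, which is a special case of unitary invariance since real orthogonal matrices are unitary; and (ii) $\|A^{\intercal}\|=\|A\|$, which holds because $A$ and $A^{\intercal}$ have identical singular values (for a square matrix, $A^{\intercal}A$ and $AA^{\intercal}$ have the same spectrum).

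Next I would exploit the defining symplectic relation. Starting from $S\Omega_n S^{\intercal}=\Omega_n$, I multiply on the left by $S^{-1}$ and on the right by $\Omega_n^{-1}$ to obtain $S^{-1}=\Omega_n S^{\intercal}\Omega_n^{-1}$. The crucial observation is that $\Omega_n$ is real orthogonal: from its block form one checks $\Omega_n^{\intercal}=-\Omega_n$ and $\Omega_n^2=-\mathbb{1}_{2n}$, so $\Omega_n^{\intercal}\Omega_n=\mathbb{1}_{2n}$ and hence $\Omega_n^{-1}=\Omega_n^{\intercal}$. Therefore the identity above reads $S^{-1}=\Omega_n S^{\intercal}\Omega_n^{\intercal}$, exhibiting $S^{-1}$ as an orthogonal conjugate of $S^{\intercal}$.

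Finally I would chain the two properties: applying (i) with $U=\Omega_n$ and $V=\Omega_n^{\intercal}$ gives $\|S^{-1}\|=\|\Omega_n S^{\intercal}\Omega_n^{\intercal}\|=\|S^{\intercal}\|$, and then (ii) gives $\|S^{\intercal}\|=\|S\|$, which is exactly the claim. There is no genuine obstacle here; the only points that require care are verifying the orthogonality of $\Omega_n$ and keeping track of the transposes when rearranging the symplectic condition. Equivalently, one could phrase the whole argument in terms of singular values directly, by noting that $S^{-1}$ is orthogonally similar to $S^{\intercal}$, so that $(S^{-1})^{\intercal}S^{-1}$ and $SS^{\intercal}$ are orthogonally similar and hence share the same spectrum, whence the singular-value sequences of $S$ and $S^{-1}$ coincide and any unitarily invariant norm must agree on them.
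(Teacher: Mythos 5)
Your proposal is correct and follows essentially the same route as the paper: both derive the identity $S^{-1}=\Omega_n S^{\intercal}\Omega_n^{\intercal}$ from the symplectic condition and then invoke orthogonal invariance plus transpose invariance of the norm. The only cosmetic difference is that you manipulate $S\Omega_n S^{\intercal}=\Omega_n$ directly while the paper uses that $S^{-1}$ is itself symplectic; the resulting identity and conclusion are identical.
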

\begin{proof}
    By exploiting that $S^{-1}\in \mathrm{Sp}(2n)$, we have that 
    \bb
        S^{-1}=-S^{-1}\Omega_n\Omega_n=-\Omega_nS^\intercal \Omega_n=\Omega_n S^\intercal \Omega_n^\intercal\,.
    \ee
    In particular, since $\Omega_n$ is orthogonal, it holds that $\|S^{-1}\|=\|S^\intercal\|=\|S\|$.
\end{proof}
We are now ready to prove Theorem~\ref{inequality_dist_gauss}.
\begin{proof}[Proof of Theorem~\ref{inequality_dist_gauss}]
    Since $\psi$ is a pure Gaussian state, the symplectic eigenvalues of its covariance matrix $V\!(\psi)$ are all equal to $1$. Consequently, there exist a symplectic matrix $S$ such that
    \bb
        V\!(\psi)&=S S^\intercal\,,\\
        \ket{\psi}&=\hat{D}_{\textbf{m}(\psi)}U_{S}\ket{0}\,,
    \ee
    where we have denoted as $\ket{0}$ the $n$-mode vacuum state vector.
    Hence, it holds that
    \bb
        \bra{\psi}\rho\ket{\psi}=\bra{0}U_{S}^\dagger \hat{D}_{\textbf{m}(\psi)}^\dagger \rho \hat{D}_{\textbf{m}(\psi)}U_{S}\ket{0}=\bra{0}\omega\ket{0}\,,
    \ee
    where we have introduced the state
    \bb
        \omega\coloneqq U_{S}^\dagger \hat{D}_{\textbf{m}(\psi)}^\dagger \rho \hat{D}_{\textbf{m}(\psi)}U_{S}\,.
    \ee
    By exploiting the operator inequality
    \bb
        \ketbra{0}\ge \mathbb{1}-\sum_{i=1}^n a^\dagger_ia_i\,,
    \ee
    we find that
    \bb\label{step_proof_ineq}
        \bra{\psi}\rho\ket{\psi}&=
        \bra{0}\omega\ket{0}
        \\&=\Tr[\omega \ketbra{0}]\\
        &\ge 1-\Tr\!\left[\omega\sum_{i=1}^n a^\dagger_i a_i\right]\\
        &=1-\frac{\Tr[V\!(\omega)-\mathbb{1}]}{4}-\frac{\|\textbf{m}(\omega)\|_2^2}{2}\,,
    \ee
    where we also used Lemma~\ref{lemma_energy_traceV}.
    Note that the first moment and covariance matrix of $\omega$ are given by
    \bb
        \textbf{m}(\omega)&= S^{-1}\left(\textbf{m}(\rho)-\textbf{m}(\psi)\right)\,,\\
        V\!(\omega)&= S^{-1}V\!(\rho) (S^{-1})^\intercal\,.
    \ee
    Consequently, it holds that
    \bb
        \left|\Tr[V\!(\omega)-\mathbb{1}]\right|&= \left|\Tr[S^{-1} V\!(\rho) (S^{-1})^\intercal-\mathbb{1}]\right|\\
        &= \left|\Tr[S^{-1}(V\!(\rho)-V\!(\psi))(S^{-1})^\intercal]\right|\\
        &\leqt{(i)} \|(S^{-1})^\intercal S^{-1}\|_1\,\|V\!(\rho)-V\!(\psi)\|_\infty\\
        &\texteq{(ii)}\Tr\!\left[(S^{-1})^\intercal S^{-1}\right] \|V\!(\rho)-V\!(\psi)\|_\infty\\
        &=\Tr[V\!(\psi)^{-1}]\,\|V\!(\rho)-V\!(\psi)\|_\infty\\
        &\texteq{(iii)}\Tr V\!(\psi)\,\|V\!(\rho)-V\!(\psi)\|_\infty\,.
    \ee
     Here, in (i), we have used H\"older inequality $|\Tr[AB]|\le\|A\|_1\|B\|_\infty$. In (ii), we have exploited that $(S^{-1})^\intercal S^{-1}$ is positive semi-definite. Moreover, in (iii) we utilised Lemma~\ref{lemma_inverse_cov_pure}, which asserts that the inverse of the covariance matrix $V\!(\psi)$ of the pure Gaussian state $\psi$ can be written as $V\!(\psi)^{-1}=-\Omega_n V\!(\psi)\Omega_n$, and, in particular, this implies $\Tr V\!(\psi)=\Tr[V\!(\psi)^{-1}]$. In addition, it holds that
    \bb
      \|\textbf{m}(\omega)\|_2&=\|S^{-1}\left(\textbf{m}(\rho)-\textbf{m}(\psi)\right)\|_2
      \\&\le \|S^{-1}\|_\infty\,\|\textbf{m}(\rho)-\textbf{m}(\psi)\|_2
      \\&\texteq{(iv)} \|S \|_\infty\,\|\textbf{m}(\rho)-\textbf{m}(\psi)\|_2
      \\&\le \sqrt{\Tr[SS^\intercal]}\,\|\textbf{m}(\rho)-\textbf{m}(\psi)\|_2
      \\&=\sqrt{\Tr V\!(\psi)}\,\|\textbf{m}(\rho)-\textbf{m}(\psi)\|_2\,,
    \ee
    where in (iv) we have exploited Lemma~\ref{lemma_S_inv}. Consequently,~\eqref{step_proof_ineq} implies that
    \bb
        \bra{\psi}\rho\ket{\psi}\ge 1-\frac{\Tr V\!(\psi)}{4}\left(  \|V\!(\rho)-V\!(\psi)\|_\infty+2\|\textbf{m}(\rho)-\textbf{m}(\psi)\|_2^2    \right)
    \ee
    Hence, Fuchs-van de Graaf inequality implies that
    \bb
        \| \rho-\psi  \|_1&\le2\sqrt{1-\bra{\psi}\rho\ket{\psi}}\\
        &\le \sqrt{\Tr V\!(\psi)}\sqrt{ \|V\!(\rho)-V\!(\psi)\|_\infty+2\|\textbf{m}(\rho)-\textbf{m}(\psi)\|_2^2}\,. 
    \ee
    In particular, by exploiting Lemma~\ref{lemma_energy_traceV} and the assumption that the mean energy of $\psi$ is upper bounded by $E$, we conclude that
    \bb
        \Tr V\!(\psi)&= 4\Tr[ \hat{E}_n \psi]-2\|\textbf{m}(\psi)\|_2^2\le 4E\,.
    \ee
\end{proof}

\subsubsection{Lower bound}\label{subsubsection_lower_bound}
In this subsection we obtain a lower bound on the trace distance between Gaussian states in terms of the norm distance between their covariance matrices and their first moments.
\begin{thm}\label{thm_trace_distance_lower_bound}
    Let $\rho_1$ and $\rho_2$ be $n$-mode Gaussian states with mean energy upper bounded by $E$, i.e., 
\bb\label{EC_assumption}
\Tr\left[\rho_1\hat{E}_{n}\right] &\le E\,,\\
\Tr\left[\rho_2\hat{E}_{n} \right] &\le E\,.
\ee
Then, the trace distance between $\rho_1$ and $\rho_2$ can be lower bounded in terms of the norm distance between their first moments as
\bb
    \frac{1}{2}\|\rho_1-\rho_2\|_1\ge \frac{1}{200}\min\!\left\{1,\frac{\| \textbf{m}(\rho_1)-\textbf{m}(\rho_2) \|_2}{\sqrt{4E+1}}\right\} \,,
\ee
and in terms of the norm distance between their covariance matrices as
\bb
    \frac{1}{2}\|\rho_1-\rho_2\|_1\ge \frac{1}{200}\min\!\left\{1, \frac{\|V(\rho_2)-V(\rho_1)\|_2}{ 4E+1}\right\}\,.
\ee
\end{thm}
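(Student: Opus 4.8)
The plan is to descend from the quantum trace distance to a \emph{classical} total variation distance between the heterodyne outcome statistics, and then import the sharp lower bounds on the total variation distance between Gaussian measures established in~\cite{devroye2023total}. First I would exploit the operational meaning of the trace distance (the Holevo--Helstrom theorem): since heterodyne detection is a bona fide POVM, the total variation distance between the resulting outcome distributions can never exceed the trace distance, whence
\begin{equation}
    \frac{1}{2}\|\rho_1-\rho_2\|_1 \ge \frac{1}{2}\int_{\mathbb{R}^{2n}}\!\mathrm{d}^{2n}\mathbf{r}\,\bigl|Q_{\rho_1}(\mathbf{r})-Q_{\rho_2}(\mathbf{r})\bigr|\,.
\end{equation}
By~\eqref{husimi_gaussian_state}, for a Gaussian state $\rho_i$ the Husimi function is exactly the Gaussian density $Q_{\rho_i}=\NN[\mu_i,\Sigma_i]$ with $\mu_i\coloneqq\textbf{m}(\rho_i)$ and $\Sigma_i\coloneqq\frac{V\!(\rho_i)+\mathbb{1}}{2}$, so the right-hand side \emph{is} a classical Gaussian total variation distance.

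Second, I would invoke the lower bound of~\cite{devroye2023total}, which guarantees (with the universal constant $\tfrac{1}{200}$) that
\begin{equation}
    \mathrm{TV}\bigl(\NN[\mu_1,\Sigma_1],\NN[\mu_2,\Sigma_2]\bigr)\ge \frac{1}{200}\min\Bigl\{1,\;\bigl\|\Sigma_1^{-1/2}(\mu_1-\mu_2)\bigr\|_2+\bigl\|\Sigma_1^{-1/2}\Sigma_2\Sigma_1^{-1/2}-\mathbb{1}\bigr\|_2\Bigr\}\,.
\end{equation}
Using that $a,b\ge 0$ implies $\min\{1,a+b\}\ge\max\{\min\{1,a\},\min\{1,b\}\}$, dropping the covariance term yields a bound depending only on $\|\mu_1-\mu_2\|_2$, while dropping the mean term yields a bound depending only on $\|\Sigma_1-\Sigma_2\|_2$; these two reduced inequalities are what feed the two claims.

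Third comes the conversion of the $\Sigma_1^{-1/2}$-weighted norms into unweighted ones, which is where the energy constraint enters. By Lemma~\ref{lemma_energy_traceV}, $\Tr V\!(\rho_i)=4\Tr[\rho_i\hat{E}_n]-2\|\textbf{m}(\rho_i)\|_2^2\le 4E$, and since $V\!(\rho_i)\ge 0$ this forces $\|V\!(\rho_i)\|_\infty\le 4E$, hence $\|\Sigma_i\|_\infty\le\tfrac{4E+1}{2}$. For the mean term I would then use $\|\Sigma_1^{-1/2}(\mu_1-\mu_2)\|_2\ge\|\Sigma_1\|_\infty^{-1/2}\|\mu_1-\mu_2\|_2\ge\tfrac{\|\mu_1-\mu_2\|_2}{\sqrt{4E+1}}$. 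For the covariance term I would write $\Sigma_1^{-1/2}\Sigma_2\Sigma_1^{-1/2}-\mathbb{1}=\Sigma_1^{-1/2}(\Sigma_2-\Sigma_1)\Sigma_1^{-1/2}$, note $\Sigma_2-\Sigma_1=\tfrac12\bigl(V\!(\rho_2)-V\!(\rho_1)\bigr)$ so that the $\mathbb{1}$-shifts cancel, and apply the elementary matrix inequality $\|MXM\|_2\ge\sigma_{\min}(M)^2\|X\|_2$ (proved by diagonalising the PSD matrix $M=\Sigma_1^{-1/2}$) with $\sigma_{\min}(\Sigma_1^{-1/2})^2=\|\Sigma_1\|_\infty^{-1}$ to obtain $\|\Sigma_1^{-1/2}(\Sigma_2-\Sigma_1)\Sigma_1^{-1/2}\|_2\ge\tfrac{\|V\!(\rho_2)-V\!(\rho_1)\|_2}{4E+1}$. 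Together with the two reduced classical bounds, this reproduces both stated inequalities.

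The main obstacle lies entirely in the second step: the bounds of~\cite{devroye2023total} must be quoted in precisely the right form, since their lower bound is stated with the \emph{asymmetric} weight $\Sigma_1^{-1/2}$ rather than a symmetric combination of the two covariances, and one must verify that the additive ``mean~$+$~covariance'' form legitimately degenerates to the two separate one-sided bounds. By contrast, the first step is a standard data-processing/operational argument and the third step is routine linear algebra; the quantitative heart of the proof is the faithful transfer of the optimal classical Gaussian total-variation lower bounds into the quantum setting via the Husimi representation.
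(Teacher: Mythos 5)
Your proposal follows essentially the same route as the paper's proof: data-process the trace distance down to the total-variation distance between the heterodyne (Husimi) outcome distributions, which for Gaussian states are the Gaussian densities $\NN[\textbf{m}(\rho_i),\frac{V(\rho_i)+\mathbb{1}}{2}]$, invoke the Gaussian TV lower bounds of~\cite{devroye2023total}, and convert the $\Sigma_1$-weighted norms into unweighted ones via $\|V(\rho_1)\|_\infty\le 4\Tr[\rho_1\hat{E}_n]\le 4E$. The only cosmetic difference is how the classical bound is quoted: the paper uses the two one-sided bounds separately (the mean bound from Theorem~1.1 of~\cite{devroye2023total} in Mahalanobis form, and the covariance bound via Lemma~B.3 of~\cite{arbas2023polynomial}), rather than your additive form, but both reduce to the same inequalities after relaxing to $\|\Sigma_1\|_\infty$.
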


\begin{proof}
    Recall that the Husimi function $Q_\rho:\mathbb{R}^{2n}\longmapsto\mathbf{R}$ of an $n$-mode quantum state $\rho$ is defined as~\cite{BUCCO}
    \bb
        Q_\rho(\textbf{r})\coloneqq \frac{1}{(2\pi)^n}\bra{\textbf{r}}\rho\ket{\textbf{r}}\,,
    \ee
    where $\ket{\textbf{r}}\coloneqq \hat{D}_\textbf{r}\ket{0}$ is a coherent state. The Husimi function $Q_\rho$ is the probability distribution of the outcome $\textbf{r}$ of the heterodyne measurement performed on $\rho$~\cite{BUCCO}. Hence, by exploiting the monotonicity of the trace norm under quantum channels, the trace distance between $\rho_1$ and $\rho_2$ can be lower bounded in terms of the TV distance between their Husimi functions as
    \bb\label{ineq_trace_distance_husimi}
        \frac{1}{2}\|\rho_1-\rho_2\|_1&\ge \frac{1}{2}\int_{\textbf{r}\in\mathbb{R}^{2n}}\mathrm{d}^{2n}\textbf{r}\,|Q_{\rho_1}(\textbf{r})-Q_{\rho_2}(\textbf{r})|=d_{\text{TV}}\left(Q_{\rho_1},Q_{\rho_2}\right)\,,
    \ee
    where in the last equality we have used  the definition of TV distance. The Husimi function of a Gaussian state $\rho$ is a Gaussian probability distribution with first moment $\textbf{m}(\rho)$ and covariance matrix $\frac{V(\rho)+\mathbb{1}}{2}$ (see~\eqref{husimi_gaussian_state} in the preliminaries section):
    \bb\label{Husimi_rho}
        Q_\rho(\textbf{r})=\NN\!\left[\textbf{m}(\rho),\frac{V(\rho)+\mathbb{1}}{2}\right]\!(\textbf{r})\,,
    \ee
    where $\NN$ denotes a Gaussian probability distribution defined as follows. Given a real vector $\textbf{m}$ and a positive matrix $V$, the Gaussian probability distribution $\NN[\textbf{m},V]$ with first moment $\textbf{m}$ and covariance matrix $V$ is defined as
\bb
    \NN[\textbf{m},V](\textbf{r})&\coloneqq\frac{e^{-\frac12 (\textbf{r}-\textbf{m})^\intercal V^{-1}  (\textbf{r}-\textbf{m})}}{(2\pi)^{n}\sqrt{\det V }}\,.
\ee
Consequently,~\eqref{ineq_trace_distance_husimi} implies that
\bb\label{lower_1_trace_dist}
    \frac{1}{2}\|\rho_1-\rho_2\|_1\ge d_{\text{TV}}\left(\NN\!\left[\textbf{m}(\rho_1),\frac{V(\rho_1)+\mathbb{1}}{2}\right]\,,\,\NN\!\left[\textbf{m}(\rho_2),\frac{V(\rho_2)+\mathbb{1}}{2}\right]\right)\,.
\ee
Now, let us leverage recent results about tight bounds on the TV distance between two arbitrary Gaussian probability distributions~\cite{devroye2023total}. By exploiting \cite[Theorem~1.1]{devroye2023total}, together with~\cite[Lemma~B.3]{arbas2023polynomial}, the TV distance between two Gaussian probability distributions $\NN[\textbf{m}_1,V_1]$ and $\NN[\textbf{m}_2,V_2]$ can be lower bounded in terms of the norm distance between covariance matrices as
\bb\label{lower_bound_normal_prob}
    d_{\text{TV}}\left(\NN[\textbf{m}_1,V_1],\NN[\textbf{m}_2,V_2]\right)\ge \frac{\min\!\left\{1,\left\|V_1^{-1/2}V_2V_1^{-1/2}-\mathbb{1}\right\|_2\right\}}{200}\,.
\ee
Note that we can lower bound the term $\left\|V_1^{-1/2}V_2V_1^{-1/2}-\mathbb{1}\right\|_2$ in terms of $\|V_2-V_1\|_2$ as
\bb
    \left\|V_1^{-1/2}V_2V_1^{-1/2}-\mathbb{1}\right\|_2&=\left\|V_1^{-1/2}\left(V_2-V_1\right)V_1^{-1/2}\right\|_2\\
    &=\sqrt{\Tr\left[V_1^{-1}(V_2-V_1)V_1^{-1}(V_2-V_1)\right]}\\
    &\ge \sqrt{\Tr\left[V_1^{-1}(V_2-V_1)\frac{\mathbb{1}}{\|V_1\|_\infty}(V_2-V_1)\right]}\\
    &\ge\frac{\sqrt{\Tr[(V_2-V_1)^2]}}{\|V_1\|_\infty}\\
    &=\frac{\|V_2-V_1\|_2}{\|V_1\|_\infty}\,.
\ee
Consequently, we obtain the following lower bound on the TV distance between two Gaussian probability distributions 
\bb\label{lower_bound_normal_prob2}                 
    d_{\text{TV}}\left(\NN[\textbf{m}_1,V_1],\NN[\textbf{m}_2,V_2]\right)\ge \frac{1}{200}\min\!\left\{1,\frac{\|V_2-V_1\|_2}{\|V_1\|_\infty}\right\}\,.
\ee
In addition, thanks to Ref.~\cite[Theorem~1.1]{devroye2023total}, the TV distance between two Gaussian probability distributions can also be lower bounded as
\bb\label{lower_bound_normal_prob3}                 
    d_{\text{TV}}\left(\NN[\textbf{m}_1,V_1],\NN[\textbf{m}_2,V_2]\right)\ge \frac{1}{200}\min\!\left\{1,\frac{\| \textbf{m}_1-\textbf{m}_2 \|_2^2}{\sqrt{\left(\textbf{m}_1-\textbf{m}_2\right)^\intercal V_1\left(\textbf{m}_1-\textbf{m}_2\right)}}\right\}\,,
\ee
which implies that
\bb\label{lower_bound_normal_prob4}                 
    d_{\text{TV}}\left(\NN[\textbf{m}_1,V_1],\NN[\textbf{m}_2,V_2]\right)\ge \frac{1}{200}\min\!\left\{1,\frac{\| \textbf{m}_1-\textbf{m}_2 \|_2}{\sqrt{ \|V_1\|_\infty}}\right\}\,.
\ee
Consequently, by using~\eqref{lower_1_trace_dist}, together with~\eqref{lower_bound_normal_prob2} and~\eqref{lower_bound_normal_prob4}, we obtain that
\bb
    \frac{1}{2}\|\rho_1-\rho_2\|_1&\ge \frac{1}{200}\min\!\left\{1,\frac{\| \textbf{m}(\rho_1)-\textbf{m}(\rho_2) \|_2}{\sqrt{ \|V(\rho_1)+\mathbb{1}\|_\infty}}\right\}\,,\\
    \frac{1}{2}\|\rho_1-\rho_2\|_1&\ge \frac{1}{200}\min\!\left\{1,\frac{\|V(\rho_2)-V(\rho_1)\|_2}{\|V(\rho_1)+\mathbb{1}\|_\infty}\right\}\,.
\ee
Hence, since any $n$-mode state $\rho$ satisfies $\| V(\rho )\|_\infty\le 4\Tr[\rho\hat{E}_n]$, the assumption on the energy constraints in~\eqref{EC_assumption} implies that
\bb
    \frac{1}{2}\|\rho_1-\rho_2\|_1&\ge \frac{1}{200}\min\!\left\{1,\frac{\| \textbf{m}(\rho_1)-\textbf{m}(\rho_2) \|_2}{\sqrt{4E+1}}\right\}\,,\\
    \frac{1}{2}\|\rho_1-\rho_2\|_1&\ge \frac{1}{200}\min\!\left\{1,\frac{\|V(\rho_2)-V(\rho_1)\|_2}{4E+1}\right\}\,.
\ee
\end{proof}


\subsection{Learning first moments and covariance matrices}\label{Subsection_useful_relations}
In this subsection, we conduct the sample complexity analysis of an algorithm for estimating the first moment and covariance matrix of a (possibly non-Gaussian) quantum state. Recall that the covariance matrix $V(\rho)$ is defined by
\begin{equation}
V(\rho) = \Tr[\{\hat{\textbf{R}}, \hat{\textbf{R}}^\intercal\} \rho] - 2\textbf{m}(\rho)\textbf{m}(\rho)^\intercal,
\end{equation}
where $\textbf{m}(\rho) \coloneqq \Tr[\rho \hat{\textbf{R}}]$ represents the first moment vector, and $\hat{\textbf{R}} = (\hat{x}_1, \hat{p}_1, \hat{x}_2, \hat{p}_2, \ldots, \hat{x}_n, \hat{p}_n)$ denotes the quadrature vector. Additionally, the covariance matrix of any state must satisfy the uncertainty relation:
\begin{equation}
V(\rho) + i \Omega_n \geq 0.
\end{equation}
A direct method for estimating the covariance matrix involves estimating the first moment and the expectation values of all observables $(\{\hat{R}_i, \hat{R}_j\})_{i,j \in [2n]}$. However, there is a more efficient algorithm, as detailed by Aolita et al.~\cite{Aolita_2015}, which requires fewer copies of $\rho$. We adopt this approach and provide a sample complexity analysis that extends beyond what was outlined in Ref.~\cite{Aolita_2015}.

The algorithm's strategy involves grouping commuting observables whose expectation values form entries of the covariance matrix and performing joint measurements on these groups of commuting observables.
This estimation procedure yields a symmetric matrix $\tilde{V} \in \mathbb{R}^{2n \times 2n}$ as an approximation of $V(\rho)$. However, the estimated matrix $\tilde{V}$ may not satisfy the uncertainty relation $\tilde{V} + i \Omega_n \geq 0$, implying that there might not exist a quantum state $\tilde{\rho}$ with covariance matrix $\tilde{V}$. We address this issue by showing that slightly perturbing $\tilde{V}$ suffices to obtain a valid covariance matrix. Ensuring that the estimated matrix is a valid covariance matrix is crucial, for example, when we need to consider a Gaussian state with covariance matrix $\tilde{V}$ (as we will do in Subsection~\ref{subs_sec_tom_gauss}), or when applying the Williamson decomposition to $\tilde{V}$ (e.g.,~we will need this in Section~\ref{Sec_t_doped}).

The algorithm's detailed steps are provided in Table~\ref{Table_covariance}. Notably, this algorithm exclusively relies on homodyne measurements, which are experimentally feasible in photonic platforms.
In the forthcoming Theorem~\ref{correctness_algorithm_cov}, we establish the algorithm's correctness and demonstrate that its sample complexity is $O\!\left(\log\left(\frac{n^2}{\delta}\right) \frac{n^3 E^2}{\varepsilon^2}\right)$.

\begin{table}[t]
  \caption{Estimation algorithm for the first moment $\textbf{m}(\rho)$ and covariance matrix $V\!(\rho)$ of an arbitrary $n$-mode quantum state $\rho$ satisfying a second moment constraint.}
  \begin{mdframed}[linewidth=2pt, roundcorner=10pt, backgroundcolor=white!10, innerbottommargin=10pt, innertopmargin=10pt]
    \textbf{Input:} Accuracy $\varepsilon$, failure probability $\delta$, second moment constraint $E$, $N$ copies of the unknown $n$-mode quantum state $\rho$ satisfying the moment constraint $\sqrt{\Tr\!\left[\hat{E}_n^2 \rho \right]}\le n E$, where
    \bb
        N&\coloneqq\ (n+3)\ceil{  68 \log\!\left(\frac{2 (2n^2+3n)  }{\delta}\right) \frac{200(8n^2 E^2+3n)}{\varepsilon^2}}=O\!\left( \log\!\left(\frac{n^2}{\delta}\right)\frac{n^3E^2}{\varepsilon^2}   \right)\,.
    \ee
    \textbf{Output:} A vector $\tilde{\textbf{m}}\in\mathbb{R}^{2n}$ and a symmetric matrix $\tilde{V}'\in\mathbb{R}^{2n,2n}$ such that 
    \bb
        \Pr\left(  \|\tilde{V}'-V\!(\rho)\|_\infty\le \varepsilon\quad \text{ and }\quad \|\tilde{\textbf{m}}-\textbf{m}(\rho)\|_2\le \frac{\varepsilon}{10\sqrt{8En}}\right)\ge 1-\delta\,.
    \ee
    \begin{algorithmic}[1]
    \State Set $N'\coloneqq N/(n+3)$.
    \State Query $N'$ copies of $\rho$ and, for each, perform a joint measurement of the position observables $\hat{x}_1,\hat{x}_2,\ldots,\hat{x}_n$. Then, construct median-of-means estimators (Lemma~\ref{median_of_means}) of the expectation values of $(\{\hat{x}_i,\hat{x}_j\})_{i\le j}$ and of $(\hat{x}_i)_{i\in[n]}$.\label{step1}
    \State Query $N'$ copies of $\rho$ and, for each, perform a joint measurement of the momentum observables $\hat{p}_1,\hat{p}_2,\ldots,\hat{p}_n$. Then, construct median-of-means estimators of the expectation values of $(\{\hat{p}_i,\hat{p}_j\})_{i\le j}$ and of $(\hat{p}_i)_{i\in[n]}$.\label{step2}
    \State Query $N'$ copies of $\rho$ and, for each, perform a joint measurement of $\{\hat{x}_1,\hat{p}_1\}, \{\hat{x}_2,\hat{p}_2\},\ldots, \{\hat{x}_n,\hat{p}_n\}$. Then, construct median-of-means estimators of the expectation value of $\left( \{\hat{x}_i,\hat{p}_i\} \right)_{i\in[n]}$.\label{step4}

      \For{$k \leftarrow 1$ \textbf{to} $n$}
         \State Query $N'$ copies of $\rho$ and, for each, perform a joint measurement of $(\hat{x}_i)_{i\ne k}$ and $\hat{p}_k$. Then, construct median-of-means estimators of the expectation values of $(\{\hat{p}_k,\hat{x}_i\})_{i\ne k}$.\label{step7}
      \EndFor
     \State Combine all the aforementioned estimates to form the estimator $\tilde{\textbf{m}}\in\mathbb{R}^{2n}$ for the first moment $\textbf{m}(\rho)$, and the estimator $\tilde{W}\in\mathbb{R}^{2n,2n}$ for the matrix $\Tr[\rho\{\hat{\mathbf{R}},\hat{\mathbf{R}}^\intercal\}]$.
     \State Set $\tilde{V}\coloneqq \tilde{W}-2\tilde{\textbf{m}}\tilde{\textbf{m}}^\intercal$.
    \State Set $\tilde{V}'\coloneqq \tilde{V}+\frac{\varepsilon}{2}\mathbb{1}$.
    \If{ $\tilde{V}'+i\Omega_n$ is positive semi-definite }
              \State \Return the estimator $\tilde{\textbf{m}}$ for the first moment and the estimator $\tilde{V}'$ for the covariance matrix.
          \Else
              \State Declare \emph{failure} and abort.
          \EndIf
    \end{algorithmic}
  \end{mdframed}
  \label{Table_covariance}
\end{table}
\begin{thm}\label{correctness_algorithm_cov}
Let $\varepsilon,\delta\in(0,1)$ and $E\geq 0$. Let $\rho$ be an $n$-mode quantum state satisfying the second moment constraint $\sqrt{\Tr\!\left[\hat{E}_n^2 \rho \right]}\le n E$. Then, $N$ copies of $\rho$, satisfying
\bb\label{number_of_copies_cov}
N&\coloneqq (n+3)\ceil{  68 \log\!\left(\frac{2 (2n^2+3n)  }{\delta}\right) \frac{200(8n^2 E^2+3n)}{\varepsilon^2}}\\
&=O\!\left( \log\!\left(\frac{n^2}{\delta}\right)\frac{n^3E^2}{\varepsilon^2}   \right)\,,
\ee
are sufficient to build a vector $\tilde{\textbf{m}}\in\mathbb{R}^{2n}$ and a symmetric matrix $\tilde{V}'\in\mathbb{R}^{2n,2n}$, such that 
    \bb
        \Pr\left(  \|\tilde{V}'-V\!(\rho)\|_\infty\le \varepsilon\quad \text{ and }\quad \tilde{V}'+i\Omega_n \ge 0 \quad \text{ and }\quad \|\tilde{\textbf{m}}-\textbf{m}(\rho)\|_2\le \frac{\varepsilon}{10\sqrt{8En}}\right)\ge 1-\delta\,
    \ee
\end{thm}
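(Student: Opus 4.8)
The plan is to verify that the estimator produced by the algorithm in Table~\ref{Table_covariance} meets all three guarantees simultaneously. The estimator is assembled from $n+3$ groups of pairwise commuting quadratic observables (all positions; all momenta; the same-mode anticommutators $\{\hat{x}_i,\hat{p}_i\}$; and, for each $k$, the commuting family $(\hat{x}_i)_{i\neq k}$ together with $\hat{p}_k$), each jointly measurable and estimated by median-of-means from $N'=N/(n+3)$ copies. First I would record the only fact about the estimation primitive that I need: by Lemma~\ref{median_of_means}, median-of-means converts a single-shot variance bound $\sigma^2$ into an $\varepsilon'$-accurate, $\delta'$-confident estimate of the mean using $O(\log(1/\delta')\,\sigma^2/\varepsilon'^2)$ samples, so the whole analysis reduces to (i) bounding the single-shot variance of each measured quantity and (ii) aggregating the scalar estimates into the required norm bounds.

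The technical heart — and the step I expect to be the main obstacle — is bounding these variances through the second-moment constraint $\Tr[\hat{E}_n^2\rho]\le n^2E^2$ (which also gives $\Tr[\hat{E}_n\rho]\le nE$ by concavity, cf.\ Eq.~\eqref{eq:conc}). For a first moment I would use $\mathrm{Var}(\hat{R}_k)\le\Tr[\hat{R}_k^2\rho]\le\Tr[2\hat{E}_n\rho]\le 2nE$, since $\hat{R}_k^2\le 2\hat{E}_n=\mathbf{\hat{R}}^\intercal\mathbf{\hat{R}}$. For the entries of $\Tr[\rho\{\mathbf{\hat{R}},\mathbf{\hat{R}}^\intercal\}]$ the single-shot quantity is a product $\hat{R}_k\hat{R}_l$ of commuting quadratures (or the observable $\{\hat{x}_i,\hat{p}_i\}$ in the same-mode case), whose variance is at most $\Tr[\rho\,\hat{R}_k^2\hat{R}_l^2]\le\sqrt{\Tr[\rho\,\hat{R}_k^4]\,\Tr[\rho\,\hat{R}_l^4]}$ by Cauchy--Schwarz, so everything reduces to the fourth moments $\Tr[\rho\,\hat{R}_k^4]$. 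The delicate point is that squaring is not operator monotone, so $\hat{R}_k^2\le 2\hat{E}_n$ does \emph{not} immediately give $\hat{R}_k^4\le 4\hat{E}_n^2$; instead I would exploit the commuting structure within each measurement group, bounding $\sum_i\Tr[\rho\,\hat{R}_i^4]\le\Tr[\rho\,(\sum_i\hat{R}_i^2)^2]\le\Tr[\rho\,(2\hat{E}_n)^2]=O(n^2E^2)$. This both controls every individual variance by $O(n^2E^2)$ (yielding the $8n^2E^2+3n$ appearing in the copy count, the lower-order $n$ coming from the $\tfrac12\mathbb{1}$ in $\hat{E}_n=\hat{N}_n+\tfrac{n}{2}\mathbb{1}$) and, crucially, keeps the \emph{aggregate} variance of a whole block at $O(n^2E^2)$ rather than (number of entries)$\times$(maximal variance).

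With the variances in hand, I would apply median-of-means to each of the $2n^2+3n$ scalar quantities with per-quantity failure probability $\delta/(2n^2+3n)$ and take a union bound, producing $\tilde{\textbf{m}}$ and the symmetric $\tilde{W}$ approximating $\textbf{m}(\rho)$ and $\Tr[\rho\{\mathbf{\hat{R}},\mathbf{\hat{R}}^\intercal\}]$. To obtain an operator-norm guarantee I would use $\|\cdot\|_\infty\le\|\cdot\|_2$ and control the Hilbert--Schmidt error of each jointly measured block directly through its aggregate single-shot variance $O(n^2E^2)$, which is what delivers $N'=O(n^2E^2/\varepsilon^2)$ and hence the total $N=(n+3)N'=O(n^3E^2/\varepsilon^2)$. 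Setting $\tilde{V}=\tilde{W}-2\tilde{\textbf{m}}\tilde{\textbf{m}}^\intercal$, I would propagate the error through the quadratic correction via $\|\tilde{\textbf{m}}\tilde{\textbf{m}}^\intercal-\textbf{m}(\rho)\textbf{m}(\rho)^\intercal\|_\infty\le(\|\tilde{\textbf{m}}\|_2+\|\textbf{m}(\rho)\|_2)\,\|\tilde{\textbf{m}}-\textbf{m}(\rho)\|_2$, using that $\|\textbf{m}(\rho)\|_2$ is bounded by the energy constraint (Lemma~\ref{lemma_energy_traceV}, which gives $\|\textbf{m}(\rho)\|_2^2\le 2\Tr[\rho\hat{E}_n]$). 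This is precisely why the table demands the finer first-moment accuracy $\varepsilon/(10\sqrt{8En})$: it ensures the quadratic term contributes at most $O(\varepsilon)$. The upshot is $\|\tilde{V}-V(\rho)\|_\infty\le\varepsilon/2$ on an event of probability $\ge 1-\delta$.

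Finally I would verify the positivity shift, which certifies both validity and that the algorithm does not abort. On the success event, $\|\tilde{V}-V(\rho)\|_\infty\le\varepsilon/2$ gives $\tilde{V}\ge V(\rho)-\tfrac{\varepsilon}{2}\mathbb{1}$, and since $V(\rho)$ is a genuine covariance matrix it satisfies $V(\rho)+i\Omega_n\ge 0$; therefore $\tilde{V}'+i\Omega_n=\big(V(\rho)+i\Omega_n\big)+\big(\tilde{V}-V(\rho)+\tfrac{\varepsilon}{2}\mathbb{1}\big)\ge 0$, so the semidefiniteness test passes and the algorithm returns rather than declaring failure. Simultaneously $\|\tilde{V}'-V(\rho)\|_\infty\le\|\tilde{V}-V(\rho)\|_\infty+\tfrac{\varepsilon}{2}\le\varepsilon$, and the first-moment bound holds by construction. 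Combining these establishes the conjunction of the three claimed properties with probability at least $1-\delta$, which completes the argument.
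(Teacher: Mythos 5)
Your overall architecture mirrors the paper's proof almost exactly: the same $n+3$ commuting measurement groups, median-of-means per scalar with a union bound over the $2n^2+3n$ quantities, aggregation of per-entry errors into a Hilbert--Schmidt (hence operator-norm) bound on the block, propagation through the quadratic correction $\tilde{\textbf{m}}\tilde{\textbf{m}}^\intercal$ using $\|\textbf{m}(\rho)\|_2^2\le 2\Tr[\rho\hat{E}_n]$, and the final $+\tfrac{\varepsilon}{2}\mathbb{1}$ shift with $V(\rho)+i\Omega_n\ge 0$ to certify $\tilde V'+i\Omega_n\ge 0$. All of that is sound and matches the paper.

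However, the step you yourself identify as the technical heart contains a genuine error. You correctly note that operator squaring is not monotone, but then rely on exactly that monotonicity one level up: the inequality $\Tr[\rho\,(\sum_i\hat{R}_i^2)^2]\le\Tr[\rho\,(2\hat{E}_n)^2]$ does not follow from $\sum_i\hat{R}_i^2\le 2\hat{E}_n$, and it is in fact \emph{false}. Writing $A=\sum_i\hat x_i^2$ and $B=\sum_i\hat p_i^2$, one has $\Tr[\rho(A+B)^2]-\Tr[\rho A^2]=\Tr[\rho B^2]+\Tr[\rho\{A,B\}]$, and the cross term is not positive: for a single mode, $\langle\{\hat x^2,\hat p^2\}\rangle=-\tfrac12$ on the vacuum and on any squeezed vacuum, so for a state squeezed in $\hat p$ with $\langle\hat p^4\rangle<\tfrac12$ the difference is negative and your claimed bound fails. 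The conclusion you want ($\sum_i\Tr[\rho\hat R_i^4]=O(n^2E^2)$, and more importantly an aggregate variance bound for each block) is still true, but it needs the correction term that the paper supplies through the exact operator identity
\begin{equation*}
\sum_{j,k=1}^{2n}\{\hat R_j,\hat R_k\}^2=16\hat E_n^2+6n\hat{\mathbb 1}
\end{equation*}
(Lemma~\ref{lemma_formula_rr2}), proved by direct manipulation of the canonical commutation relations. Since every summand is a square of a Hermitian operator and hence positive, this identity bounds each individual variance \emph{and} every partial sum by $16\hat E_n^2+6n\hat{\mathbb 1}$, which is precisely the origin of the $8n^2E^2+3n$ (rather than a bare $n^2E^2$) in the stated sample count. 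The paper then sets the per-entry accuracy proportional to the per-entry standard deviation so that the squared accuracies telescope against this identity, yielding the Frobenius bound $n\varepsilon'$ without any extra factor of $n$. To repair your argument, replace the unjustified monotonicity step with this identity (or an equivalent normally-ordered computation of the fourth moments); the rest of your proof then goes through as in the paper.
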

Before proving Theorem~\ref{correctness_algorithm_cov}, let us show the following lemma.
\begin{lemma}\label{lemma_formula_rr2}
For all $n\in\N$, it holds that
\bb
    \sum_{j,k=1}^{2n} \{\hat{R}_j,\hat{R}_k    \} ^2=16\hat{E}_n^2+6n\hat{\mathbb{1}}\,,
\ee
where $\hat{E}_n\coloneqq \frac{\hat{\mathbf{R}}^\intercal\hat{\mathbf{R}}}{2}$ is the energy operator.
\end{lemma}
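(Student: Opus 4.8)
The plan is to prove the identity directly, by expanding both sides into the quadrature operators and comparing them group by group, where the grouping is dictated by the commutation structure encoded in $\Omega_n$.

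First I would rewrite the right-hand side using $\hat{E}_n=\frac12\hat{\mathbf{R}}^\intercal\hat{\mathbf{R}}=\frac12\sum_{i=1}^{2n}\hat{R}_i^2$, so that
\begin{equation}
16\hat{E}_n^2=4\sum_{i,l=1}^{2n}\hat{R}_i^2\hat{R}_l^2 .
\end{equation}
Then I would split both the left-hand sum $\sum_{j,k}\{\hat{R}_j,\hat{R}_k\}^2$ and this expression into three classes of ordered index pairs: (i) the diagonal pairs $j=k$; (ii) the off-diagonal pairs $j\neq k$ with $[\hat{R}_j,\hat{R}_k]=0$; and (iii) the conjugate pairs, i.e.\ those with $[\hat{R}_j,\hat{R}_k]=i(\Omega_n)_{jk}\neq 0$, which by the canonical commutation relations~\eqref{comm_rel_quadrature} are exactly the pairs of indices $\{2i-1,2i\}$ belonging to a single mode $i$.

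Next I would verify that the first two classes match term by term on the two sides. For $j=k$ one has $\{\hat{R}_j,\hat{R}_j\}^2=4\hat{R}_j^4$, matching the $i=l$ contribution $4\hat{R}_i^4$ to $16\hat{E}_n^2$. For a commuting off-diagonal pair $\{\hat{R}_j,\hat{R}_k\}=2\hat{R}_j\hat{R}_k$, and hence $\{\hat{R}_j,\hat{R}_k\}^2=4\hat{R}_j^2\hat{R}_k^2$, which again matches the corresponding commuting $i\neq l$ term of $16\hat{E}_n^2$ (ordering being immaterial, since the two squares commute). Thus the entire discrepancy between the two sides is concentrated in the conjugate pairs.

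The final step is to handle the conjugate pairs, which reduces the whole statement to a single-mode identity. For each mode $i$ the two ordered conjugate pairs contribute $2\{\hat{x}_i,\hat{p}_i\}^2$ to the left-hand side and $4(\hat{x}_i^2\hat{p}_i^2+\hat{p}_i^2\hat{x}_i^2)$ to $16\hat{E}_n^2$. I would then establish the single-mode identity
\begin{equation}
2\{\hat{x},\hat{p}\}^2=4\bigl(\hat{x}^2\hat{p}^2+\hat{p}^2\hat{x}^2\bigr)+6\,\hat{\mathbb{1}}
\end{equation}
by repeated use of $[\hat{x},\hat{p}]=i\hat{\mathbb{1}}$; concretely, combining $[\hat{x}^2,\hat{p}^2]=2i\{\hat{x},\hat{p}\}$ with $\hat{x}\hat{p}=\tfrac12\{\hat{x},\hat{p}\}+\tfrac{i}{2}\hat{\mathbb{1}}$ lets one reorder $\{\hat{x},\hat{p}\}^2$ into the symmetric form $2(\hat{x}^2\hat{p}^2+\hat{p}^2\hat{x}^2)+3\,\hat{\mathbb{1}}$. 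Summing the residual constant $6\,\hat{\mathbb{1}}$ over the $n$ modes yields the claimed $6n\,\hat{\mathbb{1}}$, completing the proof. The only real obstacle here is the operator-ordering bookkeeping in this last reduction — tracking the non-commuting terms carefully so that precisely the constant $6\,\hat{\mathbb{1}}$ per mode survives; once the single-mode identity is in hand, the multi-mode statement follows immediately from the clean separation of the commuting and conjugate index classes.
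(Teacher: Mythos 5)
Your proof is correct: the case split into diagonal, commuting off-diagonal, and conjugate index pairs is clean, and the single-mode identity $\{\hat{x},\hat{p}\}^2=2(\hat{x}^2\hat{p}^2+\hat{p}^2\hat{x}^2)+3\,\hat{\mathbb{1}}$ checks out (writing $\{\hat{x},\hat{p}\}=2\hat{x}\hat{p}-i\hat{\mathbb{1}}$ and $\hat{p}^2\hat{x}^2=\hat{x}^2\hat{p}^2-2i\{\hat{x},\hat{p}\}$ gives exactly the residual $+3\,\hat{\mathbb{1}}$ per ordered pair, hence $6\,\hat{\mathbb{1}}$ per mode and $6n\,\hat{\mathbb{1}}$ in total). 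The paper reaches the same conclusion by a uniform computation rather than a case split: after symmetrising the sum to get $\sum_{j,k}\{\hat{R}_j,\hat{R}_k\}^2=2\sum_{j,k}(\hat{R}_j\hat{R}_k\hat{R}_j\hat{R}_k+\hat{R}_j\hat{R}_k^2\hat{R}_j)$, it commutes everything into the normal order $\hat{R}_j^2\hat{R}_k^2$ using $\hat{R}_k\hat{R}_j=\hat{R}_j\hat{R}_k-i\Omega_{jk}\hat{\mathbb{1}}$ three times, collecting a correction $-3i\Omega_{jk}\hat{R}_j\hat{R}_k$ for every pair, and then evaluates the global identity $\sum_{j,k}\Omega_{jk}\hat{R}_j\hat{R}_k=in\,\hat{\mathbb{1}}$ in one stroke. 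The two routes are equivalent in substance — in your version the off-diagonal commuting classes contribute nothing to the correction precisely because $\Omega_{jk}=0$ there, so your case split is just the paper's $\Omega$-weighted sum made explicit. The paper's version is more compact and avoids enumerating index classes; yours makes it more transparent that the anomaly $6n\,\hat{\mathbb{1}}$ is localised entirely in the $n$ conjugate pairs, at the cost of slightly more bookkeeping. Either is a complete proof.
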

\begin{proof}
Note that 
\bb
\sum_{j,k=1}^{2n} \{\hat{R}_j,\hat{R}_k    \} ^2&=\sum_{j,k=1}^{2n}\left(\hat{R}_j\hat{R}_k+\hat{R}_k\hat{R}_j\right)^2 \\
    &=2\sum_{j,k=1}^{2n}\left( \hat{R}_j\hat{R}_k\hat{R}_j\hat{R}_k+\hat{R}_j\hat{R}_k^2\hat{R}_j\right)\\
    &\texteq{(i)} 2\sum_{j,k=1}^{2n}\left( 2\hat{R}_j^2\hat{R}_k^2-3i\Omega_{jk}\hat{R}_j\hat{R}_k\right)\\
    &\texteq{(ii)}4\sum_{j,k=1}^{2n}\hat{R}_j^2\hat{R}_k^2+6n\hat{\mathbb{1}}\\
    &=4\left(\mathbf{\hat{R}}^\intercal\mathbf{\hat{R}}\right)^2+6n\hat{\mathbb{1}}\\
    &=16\hat{E}_n^2+6n\hat{\mathbb{1}}\,.
\ee
Here, in $(i)$, we have used three times that $\hat{R}_k\hat{R}_j=\hat{R}_j\hat{R}_k-i\Omega_{jk}\hat{\mathbb{1}}$ and, in (ii), we have used that $\sum_{j,k=1}^{2n}\Omega_{jk}\hat{R}_j\hat{R}_k=in\hat{\mathbb{1}}$.
\end{proof}
\begin{proof}[Proof of Theorem~\ref{correctness_algorithm_cov}]
We aim to establish the correctness of the algorithm presented in Table~\ref{Table_covariance}. In Line~\ref{step1}, we estimate the expectation value of $\{\hat{x}_i,\hat{x}_j\}$ for each $ i\le j \in [n]$ and the expectation value of $\hat{x}_i$ for each $i\in[n]$, amounting to $\frac{n(n+1)}{2}+n$ quantities. In Line~\ref{step2}, we estimate the expectation value of $\{\hat{p}_i,\hat{p}_j\}$ for each $i\le j \in [n]$ and the expectation value of $\hat{p}_i$ for each $i\in[n]$, which are also $\frac{n(n+1)}{2}+n$ quantities. In Line~\ref{step4}, we estimate the expectation value of $\{ \hat{x}_i, \hat{p}_i \}$ for each $i\in[n]$, constituting $n$ quantities. Finally, in all iterations of Line~\ref{step7}, we estimate the expectation value of $\{\hat{p}_k,\hat{x}_i\} $ for each $i\ne k \in [n]$, resulting in $n(n-1)$ quantities. Consequently, across Lines~\ref{step1}--\ref{step7}, we estimate a total of $2n^2+3n$ quantities. By combining all these estimates, we can construct an estimator $\tilde{\textbf{m}}\in\mathbb{R}^{2n}$ for the first moment $\textbf{m}(\rho)$, and an estimator $\tilde{W}\in\mathbb{R}^{2n,2n}$ for the matrix $\Tr[\rho\{\hat{\mathbf{R}},\hat{\mathbf{R}}^\intercal\}]$. Therefore, by making use of Lemma~\ref{median_of_means} together with an union bound, for all $\varepsilon'\in(0,1)$ we can choose
    \bb
        N' \ge 68 \log\!\left(\frac{2 (2n^2+3n)  }{\delta}\right) \frac{16 E^2+6/n}{\varepsilon'^2}
    \ee
    to ensure that with probability at least $1-\delta$ it holds that
    \bb\label{condition_W_ij}
        \left|\tilde{W}_{i,j}-\Tr[\rho\{\hat{R}_i,\hat{R}_j\}]\right|\le \varepsilon'\sqrt{\frac{\Tr[\rho\{\hat{R}_i,\hat{R}_j\}^2]}{16 E^2+6/n}}\qquad\forall\,i,j\in[2n]\,,
    \ee
    and
    \bb\label{condition_m_i}
        \left|\tilde{m}_i-m_i(\rho)\right|\le \varepsilon'\sqrt{\frac{ \Tr[\rho \hat{R}_i^2] }{16 E^2+6/n}}\qquad\forall\,i\in[2n]\,.
    \ee
    Here, we have used  the fact that the variance of the random variable considered to construct the estimator $\tilde{W}_{i,j}$ can be upper bounded by $\Tr[\rho\{\hat{R}_i,\hat{R}_j\}^2]$ for each $i,j\in [2n]$, and the variance of the random variable considered to construct the estimator $\tilde{m}_i$ can be upper bounded by $\Tr[\rho\hat{R}_i^2]$ for each $i\in [2n]$.
    Moreover, if~\eqref{condition_W_ij} holds, then we have that
    \bb
        \left\|\tilde{W}-\Tr[\rho\{\hat{\mathbf{R}},\hat{\mathbf{R}}^\intercal\}]\right\|_\infty&\le\left\|\tilde{W}-\Tr[\rho\{\hat{\mathbf{R}},\hat{\mathbf{R}}^\intercal\}]\right\|_2\\
        &=\sqrt{\sum_{i,j=1}^{2n} \left|\tilde{W}_{i,j}-\Tr[\rho\{\hat{R}_i,\hat{R}_j\}]\right|^2 }\\
        &\le \varepsilon'\sqrt{\frac{\sum_{i,j=1}^{2n}\Tr[\rho\{\hat{R}_i,\hat{R}_j\}^2]}{16 E^2+6/n}  }\\
        &= n\varepsilon' \,,
    \ee
    where the last equality is a consequence of Lemma~\ref{lemma_formula_rr2}. In addition, if~\eqref{condition_m_i} holds, we have that
    \bb\label{upper_bound_error_m}
        \|\tilde{\textbf{m}}-\textbf{m}(\rho)\|_2
        &\le \varepsilon'\sqrt{\frac{\sum_{i=1}^{2n}\Tr[\rho\hat{R}_i^2]}{16 E^2+6/n}}\\
        &=\varepsilon'\sqrt{\frac{2\Tr[\rho\hat{E}_n]}{16 E^2+6/n}}\\
        &\leqt{(i)} \varepsilon'\sqrt{\frac{2\sqrt{\Tr[\rho\hat{E}_n^2]}}{16 E^2+6/n}}\\
        &\le \varepsilon'\sqrt{\frac{2n\sqrt{E^2}}{16 E^2+6/n}}\\
        &\le \frac{\sqrt{n}\varepsilon'}{\sqrt{8}E^{1/2}}
    \ee
    where (i) is a consequence of the fact that $\Tr[\hat{E}_n\rho]=\Tr[(\hat{E}_n^{2})^{\frac{1}{2}}\rho]$ and exploited the concavity of $x\mapsto \sqrt{x}$ for $x>0$ (see Eq.~\eqref{eq:conc}). Furthermore, if~\eqref{condition_m_i} holds, we have also that
    \bb
        \|\tilde{\textbf{m}}\tilde{\textbf{m}}^\intercal-\textbf{m}(\rho)\textbf{m}(\rho)^\intercal\|_\infty&\le \|(\tilde{\textbf{m}}-\textbf{m}(\rho))\tilde{\textbf{m}}^\intercal\|_\infty+\|\textbf{m}(\rho)(\tilde{\textbf{m}}-\textbf{m}(\rho))^\intercal\|_\infty\\
        &= \|\tilde{\textbf{m}}-\textbf{m}(\rho)\|_2\|\tilde{\textbf{m}}\|_2+\|\textbf{m}(\rho)\|_2\|\tilde{\textbf{m}}-\textbf{m}(\rho)\|_2\\
        &\le \|\tilde{\textbf{m}}-\textbf{m}(\rho)\|_2\left(\|\tilde{\textbf{m}}-\textbf{m}(\rho)\|_2+2\|\textbf{m}(\rho)\|_2\right)\\
        &\leqt{(ii)} \frac{\sqrt{n}\varepsilon'}{\sqrt{8}E^{1/2}}\left(\frac{\sqrt{n}\varepsilon'}{\sqrt{8}E^{1/2}}+2\|\textbf{m}(\rho)\|_2\right)\\
        &\leqt{(iii)} \frac{\sqrt{n}\varepsilon'}{\sqrt{8}E^{1/2}}\left(\frac{\sqrt{n}\varepsilon'}{\sqrt{8}E^{1/2}}+\sqrt{8n}E^{1/2}\right)\\
        &= \left(\frac{\varepsilon'}{{8}E}+1\right)n\varepsilon'\\
        &\leqt{(iv)}  2n\varepsilon'
    \ee
    Here, in (ii) we have used~\eqref{upper_bound_error_m}, while in (iii) we have exploited that
    \bb\label{upper_bound_norma_m}
        \|\textbf{m}(\rho)\|_2=\sqrt{\sum_{i=1}^{2n}\Tr[\rho\hat{R}_i]^2}\le\sqrt{\sum_{i=1}^{2n}\Tr[\rho\hat{R}_i^2]}= \sqrt{2\Tr[\rho\hat{E}_n]}\le \sqrt{2\sqrt{\Tr[\rho\hat{E}_n^2]}}\le\sqrt{2n}E^{1/2}\,,
    \ee
    where the first inequality and second inequality are a consequence of Eq.~\eqref{eq:conc}. In (iv) we have used that the second moment satisfies $E\ge \frac{1}{2} $ (due to~\eqref{energy_operator}). Therefore, if~\eqref{condition_W_ij} and~\eqref{condition_m_i} hold, then it holds that
    \bb
        \|\tilde{\textbf{m}}-\textbf{m}(\rho)\|_2\le \frac{\sqrt{n}\varepsilon'}{\sqrt{8}E^{1/2}}
    \ee     
    and
    \bb
        \|\tilde{V}-V\!(\rho)\|_\infty\le 5n\varepsilon'\,,
    \ee
    where we defined $\tilde{V}\coloneqq \tilde{W}-2\tilde{\textbf{m}}\tilde{\textbf{m}}^\intercal$ and we have used triangle inequality. Consequently, by setting $\varepsilon'\coloneqq \frac{\varepsilon}{10n}$, we have that the choice
    \bb
        N' \ge 68 \log\!\left(\frac{2 (2n^2+3n)  }{\delta}\right) \frac{200(8n^2 E^2+3n)}{\varepsilon^2}\,,
    \ee
    allows us to guarantee that
    \bb
        \Pr\left(  \|\tilde{V}-V\!(\rho)\|_\infty\le \frac{\varepsilon}{2}\quad \text{ and }\quad \|\tilde{\textbf{m}}-\textbf{m}(\rho)\|_2\le \frac{\varepsilon}{10\sqrt{8}E^{1/2}\sqrt{n}}\right)\ge 1-\delta\,.
    \ee
    The total number of copies of $\rho$ used in the algorithm in Table~\ref{Table_covariance} is $N=(n+3)N'$.
    Then, the algorithm defines the matrix $\tilde{V}'\coloneqq \tilde{V}+\frac{\varepsilon}{2}\,\mathbb{1}$. Consequently, in order to conclude the proof it suffices to show that if the condition $\|\tilde{V}-V\!(\rho)\|_\infty\le \frac{\varepsilon}{2}$ is satisfied, then $\tilde{V}'$ is a proper covariance matrix satisfying $\| \tilde{V}'-V\!(\rho) \|_\infty\le \varepsilon$. The latter condition follows by the triangle inequality. In order to show that $\tilde{V}'$ is a proper covariance matrix, let us observe that $\tilde{V}'+i\Omega_n\ge0$:
\bb             \tilde{V}'+i\Omega_n&=\tilde{V}+i\Omega_n+\frac{\varepsilon}{2}\mathbb{1}\\
        &\geqt{(i)} \tilde{V}-V\!(\rho)+\frac{\varepsilon}{2}\mathbb{1}\\
        &\geqt{(ii)} -\|\tilde{V}-V\!(\rho)\|_\infty \mathbb{1}+\frac{\varepsilon}{2}\mathbb{1}\\
        &\ge 0\,,
    \ee
where in (i) we have used that $V\!(\rho)+i\Omega_n\ge0$, while in (ii) we have used that for any operator $\Theta$ it holds that $\Theta\ge -\|\Theta\|_\infty \mathbb{1}$.   
\end{proof}

The sample complexity for estimating the covariance matrix of the above algorithm is $O\!\left(\frac{n^3E^2}{\varepsilon^2}\log\!\left(\frac{n^2}{\delta}\right)\right)$, outperforming the straightforward method of estimating indipendently the expectation values of all the observables $(\{\hat{R}_i,\hat{R}_j\})_{i,j\in[2n]}$, which leads to a worse sample complexity of $O\!\left(\frac{n^4E^2}{\varepsilon^2}\log\!\left(\frac{n^2}{\delta}\right)\right)$.

\subsection{Learning Gaussian states}\label{subs_sec_tom_gauss}
In this subsection, we analyse the algorithm for learning Gaussian states. In Subsubsection~\ref{subsection_second_moment_gaussian}, we discuss the relation between the second moment energy constraint of a Gaussian state and its first moment and covariance matrix. Subsequently, in Subsubsection~\ref{subsub:mixedG}, we detail the algorithm for learning mixed Gaussian states and provide the associated recovery guarantees. Moreover, in Subsubsection~\ref{subsub:mixedGnois}, we rigorously show that this algorithm is noise-robust, if the state deviates slightly from the set of Gaussian states, the same tomography algorithm can still be applied. Finally, in Subsubsection~\ref{subsub:pureG}, we outline the algorithm for learning pure Gaussian states and prove its recovery guarantees.

\subsubsection{Second moment of the energy of a Gaussian state}\label{subsection_second_moment_gaussian}
A Gaussian state $\rho$ is uniquely identified by its first moment $\textbf{m}(\rho)$ and its covariance matrix $V\!(\rho)$. In the forthcoming lemma we will see how to express the second moment of the energy in terms of these quantities.
\begin{lemma}\label{lemma_second_moment_energy}
    Let $\rho$ be an $n$-mode Gaussian state with first moment $\textbf{m}$ and covariance matrix $V$. The first and the second moment of the energy can be expressed in terms of the covariance matrix and the first moment as
    \begin{align}
        \Tr\!\left[\rho \hat{E}_n\right]&= \frac{\Tr\! V}{4}+\frac{\|\textbf{m}\|_2^2}{2}\,,\label{eq_1_en}\\
        \Tr\!\left[\rho \hat{E}_n^2\right]&= \left(\frac{1}{4}\Tr V+\frac12\|\textbf{m}\|_2^2\right)^2 + \frac{1}{8}\Tr[V^2] +\frac12\textbf{m}^\intercal V\textbf{m} - \frac{n}{4}
        \,,\label{eq_2_en}
    \end{align}
    where $\hat{E}_n\coloneqq \frac{\hat{\mathbf{R}}^\intercal\hat{\mathbf{R}}}{2}$ is the energy operator. In particular, this implies that the second moment of the energy can be upper bounded in terms of the mean energy as
    \bb
        \Tr[\rho \hat{E}_n^2]\le 3(\Tr[\rho \hat{E}_n])^2\,.
    \ee
\end{lemma}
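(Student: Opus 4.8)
The plan is to obtain the first identity \eqref{eq_1_en} for free, since it is exactly the content of Lemma~\ref{lemma_energy_traceV}. The substance lies in the second-moment identity \eqref{eq_2_en}, which I would prove by reducing $\Tr[\rho\hat{E}_n^2]$ to fourth-order moments of the \emph{centered} quadratures and evaluating them with the Gaussian Wick (Isserlis) theorem. Set $\hat{Q}_i\coloneqq\hat{R}_i-m_i\hat{\mathbb{1}}$, so that $\Tr[\rho\hat{Q}_i]=0$, the commutator $[\hat{Q}_i,\hat{Q}_j]=i(\Omega_n)_{ij}\hat{\mathbb{1}}$ is unchanged, and the centered two-point function reads $G_{ij}\coloneqq\Tr[\rho\,\hat{Q}_i\hat{Q}_j]=\tfrac12\big(V_{ij}+i(\Omega_n)_{ij}\big)$ (its symmetric part recovers $\tfrac12 V_{ij}$ by definition of the covariance matrix, its antisymmetric part the canonical commutation relation). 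Writing $\hat{E}_n=\tfrac12\sum_i\hat{R}_i^2=A+B+c$ with $A\coloneqq\tfrac12\sum_i\hat{Q}_i^2$, $B\coloneqq\sum_i m_i\hat{Q}_i$, and $c\coloneqq\tfrac12\|\mathbf{m}\|_2^2$, one has $\hat{E}_n^2=A^2+(AB+BA)+B^2+2cA+2cB+c^2$. Taking $\Tr[\rho\,\cdot]$, the contributions $\Tr[\rho B]$ and $\Tr[\rho(AB+BA)]$ vanish, being odd in the centered quadratures (all odd moments of a centered Gaussian state are zero), leaving only $\Tr[\rho A^2]$, $\Tr[\rho B^2]$, $2c\,\Tr[\rho A]$, and $c^2$.

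The key computation is $\Tr[\rho A^2]=\tfrac14\sum_{k,l}\Tr[\rho\,\hat{Q}_k\hat{Q}_k\hat{Q}_l\hat{Q}_l]$. I would apply Wick's theorem in its operator-ordered form, summing over the three pairings while preserving the ordering inside each contracted pair, which gives the four-point function $G_{kk}G_{ll}+2G_{kl}^2$. Then $\sum_k G_{kk}=\tfrac12\Tr V$ and $\sum_{k,l}G_{kl}^2=\tfrac14\sum_{k,l}(V_{kl}+i(\Omega_n)_{kl})^2=\tfrac14\big(\Tr[V^2]-2n\big)$, where the cross term $\sum_{k,l}V_{kl}(\Omega_n)_{kl}$ drops because $V$ is symmetric and $\Omega_n$ antisymmetric, and $\sum_{k,l}(\Omega_n)_{kl}^2=2n$ counts the $2n$ nonzero entries of $\Omega_n$. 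This yields $\Tr[\rho A^2]=\tfrac1{16}(\Tr V)^2+\tfrac18\Tr[V^2]-\tfrac n4$. Analogously $\Tr[\rho B^2]=\sum_{i,j}m_i m_j G_{ij}=\tfrac12\mathbf{m}^\intercal V\mathbf{m}$ (the $\Omega_n$ part again vanishing by antisymmetry) and $2c\,\Tr[\rho A]=\tfrac14\|\mathbf{m}\|_2^2\Tr V$. Adding the four surviving pieces, together with $c^2=\tfrac14\|\mathbf{m}\|_2^4$, and recognizing $\tfrac1{16}(\Tr V)^2+\tfrac14\|\mathbf{m}\|_2^2\Tr V+\tfrac14\|\mathbf{m}\|_2^4=\big(\tfrac14\Tr V+\tfrac12\|\mathbf{m}\|_2^2\big)^2$, reproduces exactly \eqref{eq_2_en}.

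For the concluding bound $\Tr[\rho\hat{E}_n^2]\le3(\Tr[\rho\hat{E}_n])^2$, I would argue by elementary estimates on the covariance matrix. Setting $e\coloneqq\tfrac14\Tr V+\tfrac12\|\mathbf{m}\|_2^2=\Tr[\rho\hat{E}_n]$, identity \eqref{eq_2_en} reads $\Tr[\rho\hat{E}_n^2]=e^2+\tfrac18\Tr[V^2]+\tfrac12\mathbf{m}^\intercal V\mathbf{m}-\tfrac n4$, so it suffices to show $\tfrac18\Tr[V^2]+\tfrac12\mathbf{m}^\intercal V\mathbf{m}\le2e^2$. Since $V\ge0$, one has $\Tr[V^2]\le(\Tr V)^2$ and $\mathbf{m}^\intercal V\mathbf{m}\le\|V\|_\infty\|\mathbf{m}\|_2^2\le\Tr V\,\|\mathbf{m}\|_2^2$, so the left-hand side is at most $\tfrac18(\Tr V)^2+\tfrac12\Tr V\,\|\mathbf{m}\|_2^2$, whereas $2e^2=\tfrac18(\Tr V)^2+\tfrac12\Tr V\,\|\mathbf{m}\|_2^2+\tfrac12\|\mathbf{m}\|_2^4$; dropping the nonnegative $\tfrac12\|\mathbf{m}\|_2^4$ (and $-\tfrac n4$) closes the inequality.

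The main obstacle I anticipate is the careful bookkeeping in the operator-ordered Wick expansion: I must retain the $i(\Omega_n)/2$ antisymmetric part of $G_{ij}$ and keep track of the position of each operator inside a contracted pair, rather than using the symmetrized (Wigner) two-point function, which would silently discard the commutator corrections. To make this rigorous I would first state the Gaussian Wick identity for centered states explicitly (including the vanishing of odd moments) and only then perform the contractions; everything else reduces to the symmetry/antisymmetry cancellations and counting the entries of $\Omega_n$ described above.
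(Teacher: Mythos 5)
Your proposal is correct, and the verification checks out: with $G_{kl}=\tfrac12(V_{kl}+i(\Omega_n)_{kl})$ the three Wick pairings of $\hat{Q}_k\hat{Q}_k\hat{Q}_l\hat{Q}_l$ give $G_{kk}G_{ll}+2G_{kl}^2$, the sums $\sum_k G_{kk}=\tfrac12\Tr V$ and $\sum_{k,l}G_{kl}^2=\tfrac14(\Tr[V^2]-2n)$ are right, and the four surviving terms recombine into exactly \eqref{eq_2_en}. However, your route to \eqref{eq_2_en} is genuinely different from the paper's. The paper does not perform the Wick computation at all: it invokes an external result (Lemma~7 of the cited work of Bittel et al.), which gives the closed form of $\Tr[(\hat{\mathbf{R}}^\intercal X\hat{\mathbf{R}})^2\rho]$ for an arbitrary real symmetric $X$, namely $\bigl(\tfrac12\Tr[VX]+\mathbf{m}^\intercal X\mathbf{m}\bigr)^2+\tfrac12\Tr[XVXV]+\tfrac12\Tr[\Omega X\Omega X]+2\mathbf{m}^\intercal XVX\mathbf{m}$, and then specialises to $X=\tfrac12\mathbb{1}$, where $\tfrac12\Tr[\Omega X\Omega X]=\tfrac18\Tr[\Omega_n^2]=-\tfrac n4$ reproduces the commutator correction you obtain from $\sum_{k,l}(\Omega_n)_{kl}^2=2n$. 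What your approach buys is self-containedness and transparency about where each term comes from (in particular the $-n/4$ as a purely quantum contribution from the antisymmetric part of the two-point function); what the paper's approach buys is brevity and a more general identity valid for any quadratic observable. As you anticipate, the one point requiring care in your route is stating the operator-ordered quantum Wick theorem precisely, including the vanishing of odd centered moments; that is standard for Gaussian states (it follows from the Gaussian form of the characteristic function) but should be cited or proved rather than assumed. Your argument for the final bound $\Tr[\rho\hat{E}_n^2]\le3(\Tr[\rho\hat{E}_n])^2$ via $\Tr[V^2]\le(\Tr V)^2$ and $\mathbf{m}^\intercal V\mathbf{m}\le\Tr V\,\|\mathbf{m}\|_2^2$ is essentially identical to the paper's.
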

\begin{proof}
   ~\eqref{eq_1_en} holds because of Lemma~\ref{lemma_energy_traceV}. In order to prove~\eqref{eq_2_en}, let us employ~\cite[Lemma~7]{bittel2024optimalestimatestracedistance}, which states that:  for any  $2n\times2n$ real symmetric matrix $X$, it holds that
	\begin{equation}
	\Tr\!\left[ (\hat{\textbf{R}}^\intercal X \hat{\textbf{R}})^2\, \rho(V,\textbf{m}) \right]  =  \left(\frac{1}{2}\Tr[VX]+\textbf{m}^\intercal X\textbf{m}\right)^2 + \frac{1}{2}\Tr[XVXV] + \frac{1}{2}\Tr[\Omega X\Omega X]+2\textbf{m}^\intercal XVX\textbf{m} \,.
\end{equation}
By setting $X=\frac{\mathbb{1}}{2}$, this proves~\eqref{eq_2_en}. Moreover, note that
    \bb
        \Tr\!\left[\rho \hat{E}_n^2\right] &= (\Tr[\rho\hat{E}_n])^2+\frac{1}{8}\Tr[V^2]+\frac{1}{2}\textbf{m}^\intercal V\textbf{m}-\frac{n}{4}\\
        &\le  (\Tr[\rho\hat{E}_n])^2+\frac{1}{8}\left[\Tr\! V  \right]^2+\frac{1}{2}\|\textbf{m}\|_2^2\Tr\! V\\
        &= (\Tr[\rho\hat{E}_n])^2+\frac{1}{2}\Tr[\rho\hat{E}_n]\Tr\! V+\frac{1}{4}\|\textbf{m}\|_2^2\Tr\! V\\
        & = 3(\Tr[\rho\hat{E}_n])^2-\Tr[\rho\hat{E}_n]\|\textbf{m}\|_2^2-\frac{1}{2}\|\textbf{m}\|_2^4\\
        & \le 3(\Tr[\rho\hat{E}_n])^2\,,
    \ee
    where we have used that $\Tr\!\left[\rho \hat{E}_n\right]= \frac{\Tr\! V}{4}+\frac{\|\textbf{m}\|_2^2}{2}$.
\end{proof}

\subsubsection{Tomography algorithm for mixed Gaussian states}
\label{subsub:mixedG}
In Table~\ref{Table_covariance3} we present a tomography algorithm to learn a classical description of an unknown $n$-mode Gaussian state $\rho$, by possessing the prior knowledge that $\rho$ has mean energy per mode bounded by some known constant $E>0$. The correctness of such a tomography algorithm is proved in Theorem~\ref{correctness_algorithm_Gaussian}.
The idea of such a tomography algorithm is trivial: estimate first moment and covariance matrix, and then outputs the Gaussian state with these first moment and covariance matrix. The non-trivial aspect of the algorithm concerns the underlying sample complexity analysis, which crucially relies on upper bounding the error in trace distance in terms of the errors of first moment and covariance matrix. We have proved such an upper bound in Theorem~\ref{upper_bound_trace_distance_Gaussian} above. 
\begin{table}[t]
  \caption{Estimation algorithm of an unknown $n$-mode Gaussian state $\rho$ satisfying the energy constraint $\Tr[\rho \hat{E}_n]\le nE$.}
  \begin{mdframed}[linewidth=2pt, roundcorner=10pt, backgroundcolor=white!10, innerbottommargin=10pt, innertopmargin=10pt]
    \textbf{Input:} Accuracy $\varepsilon$, failure probability $\delta$, mean energy per mode upper bound $E$, $N$ of copies of the unknown $n$-mode Gaussian state $\rho$ satisfying the energy constraint $\Tr[\rho \hat{E}_n]\le nE$, where
    \bb 
        N&\coloneqq\ (n+3)\ceil{  68 \log\!\left(\frac{2 (2n^2+3n)  }{\delta}\right) \frac{200(24n^2 E^2+3n)}{\varepsilon^4}2^{14}E^2n^4}=O\!\left( \log\!\left(\frac{n^2}{\delta}\right)\frac{n^7E^4}{\varepsilon^4}   \right)\,.
    \ee
    \textbf{Output:} With probability less than $\delta$, the output is \emph{failure}. Otherwise, with probability $\ge 1-\delta$, the output is a classical description of a Gaussian state $\tilde{\rho}$, such that
    \bb
        \frac{1}{2}\|\tilde{\rho}- \rho\|_1\le \varepsilon\,.
    \ee
    The classical description consists of the first moment $\textbf{m}(\tilde{\rho})\in\mathbb{R}^{2n}$ and in the covariance matrix $ V(\tilde{\rho})\in\mathbb{R}^{2n,2n}$, which uniquely characterise the Gaussian state $\tilde{\rho}$ and satisfy
    \bb
        \Pr\left(  \|\tilde{V}-V\!(\rho)\|_\infty\le \frac{\varepsilon^2}{2^7En^2}   \quad \text{ and }\quad \|\tilde{\textbf{m}}-\textbf{m}(\rho)\|_2\le \frac{\varepsilon^2}{2^7E^{3/2}n^{5/2}}  \right)\ge 1-\delta\,.
    \ee
    \begin{algorithmic}[1]
    \State Query $N$ copies of $\rho$ and apply the algorithm in Table~\ref{Table_covariance}, which outputs, with probability $\ge 1-\delta$, a vector $\tilde{\textbf{m}}\in\mathbb{R}^{2n}$ and a covariance matrix $\tilde{V}\in\mathbb{R}^{2n,2n}$. In case of failure of this algorithm, which happens with probability less than $\delta$, declare \emph{failure} and abort.
    \State Define $\tilde{\rho}$ as the Gaussian state with first moment $\textbf{m}(\tilde{\rho})=\tilde{\textbf{m}}$ and covariance matrix $V(\tilde{\rho})=\tilde{V}$.\\
    \Return the vector $\tilde{\textbf{m}}$ and the matrix $\tilde{V}$, which form a classical description of the Gaussian state $\tilde{\rho}$.
    \end{algorithmic}
  \end{mdframed}
  \label{Table_covariance3}
\end{table}

\begin{thm}\label{correctness_algorithm_Gaussian}
    Let $\varepsilon,\delta\in(0,1)$ and $E\geq 0$. Let $\rho$ be an $n$-mode Gaussian state satisfying the energy constraint $\Tr[\rho \hat{E}_n]\le nE$. A number of copies $N$ of $\rho$, such that
\bb 
N&\coloneqq\ (n+3)\ceil{  68 \log\!\left(\frac{2 (2n^2+3n)  }{\delta}\right) \frac{200(24n^2 E^2+3n)}{\varepsilon^4}2^{14}E^2n^4}\\
&=O\!\left( \log\!\left(\frac{n^2}{\delta}\right)\frac{n^7E^4}{\varepsilon^4}   \right)\,,
\ee
are sufficient to build a classical description of a Gaussian state $\tilde{\rho}$ such that
    \bb
        \Pr\left(\frac{1}{2}\|\tilde{\rho}- \rho\|_1\le \varepsilon\right)\ge 1-\delta\,.
    \ee
The classical description consists of the first moment $\textbf{m}(\tilde{\rho})\in\mathbb{R}^{2n}$ and in the covariance matrix $ V(\tilde{\rho})\in\mathbb{R}^{2n,2n}$, which uniquely characterise the Gaussian state $\tilde{\rho}$ and satisfy
    \bb
        \Pr\left(  \|\tilde{V}-V\!(\rho)\|_\infty\le \frac{\varepsilon^2}{2^7En^2}   \quad \text{ and }\quad \|\tilde{\textbf{m}}-\textbf{m}(\rho)\|_2\le \frac{\varepsilon^2}{2^{7}E^{3/2}n^{5/2}}  \right)\ge 1-\delta\,.
    \ee
\end{thm}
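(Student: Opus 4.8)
The plan is to reduce the statement to the two technical ingredients already established: the moment-estimation subroutine of Theorem~\ref{correctness_algorithm_cov} (Table~\ref{Table_covariance}) and the trace-distance upper bound of Theorem~\ref{upper_bound_trace_distance_Gaussian}. The algorithm in Table~\ref{Table_covariance3} does nothing more than call the former to obtain estimates $\tilde{\textbf{m}}$ and $\tilde{V}$, and then declare $\tilde{\rho}$ to be the Gaussian state with these moments; since the subroutine certifies $\tilde{V}+i\Omega_n\ge 0$, this $\tilde{\rho}$ is well-defined. Hence the entire content of the proof is an error-propagation argument: choose the internal accuracy $\varepsilon'$ of the moment estimation small enough that Theorem~\ref{upper_bound_trace_distance_Gaussian} certifies $\tfrac{1}{2}\|\tilde{\rho}-\rho\|_1\le\varepsilon$, and then read off the resulting sample complexity.

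The first, and conceptually crucial, step is a conversion of constraints. Theorem~\ref{correctness_algorithm_cov} requires a second-moment constraint $\sqrt{\Tr[\hat{E}_n^2\rho]}\le nE$, whereas the hypothesis supplies only the energy (first-moment) constraint $\Tr[\rho\hat{E}_n]\le nE$. This is exactly where Gaussianity is used: by Lemma~\ref{lemma_second_moment_energy} a Gaussian state obeys $\Tr[\rho\hat{E}_n^2]\le 3(\Tr[\rho\hat{E}_n])^2\le 3n^2E^2$, so $\sqrt{\Tr[\hat{E}_n^2\rho]}\le\sqrt{3}\,nE$. I would therefore invoke the subroutine with effective second-moment constraint $\sqrt{3}\,E$ per mode, which replaces $E^2\mapsto 3E^2$ (equivalently $8n^2E^2\mapsto 24n^2E^2$) in its count, accounting for the factor $24n^2E^2+3n$ appearing in the final bound.

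Next comes the propagation. Applying Theorem~\ref{upper_bound_trace_distance_Gaussian} with $\sigma=\tilde{\rho}$, and noting $\Tr[\rho\hat{N}_n]=\Tr[\rho\hat{E}_n]-\tfrac{n}{2}\le nE$, gives $f(nE)=\Theta(\sqrt{nE})$. I would control the covariance term by the dimension bound $\|\tilde{V}-V(\rho)\|_1\le 2n\,\|\tilde{V}-V(\rho)\|_\infty\le 2n\varepsilon'$, so that $\tfrac{1}{2}\|\tilde{\rho}-\rho\|_1\le f(nE)\bigl(\|\tilde{\textbf{m}}-\textbf{m}(\rho)\|_2+\sqrt{2}\sqrt{2n\varepsilon'}\bigr)=O\bigl(\sqrt{nE}\,\|\tilde{\textbf{m}}-\textbf{m}(\rho)\|_2+n\sqrt{E\varepsilon'}\bigr)$. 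The covariance term dominates and forces $\varepsilon'=\Theta(\varepsilon^2/(En^2))$, explicitly $\varepsilon'=\varepsilon^2/(2^7En^2)$, which is the stated covariance guarantee. Crucially, the subroutine returns the first moment to accuracy $\varepsilon'/(10\sqrt{8En})$, which at this $\varepsilon'$ is of order $\varepsilon^2/(E^{3/2}n^{5/2})$ and is comfortably tighter than the $O(\varepsilon/\sqrt{nE})$ that the first term alone would demand (using $\varepsilon<1$, $n,E\ge1$); thus no extra condition on $\tilde{\textbf{m}}$ is needed and the first-moment guarantee is automatic.

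Finally, substituting $1/\varepsilon'^2=2^{14}E^2n^4/\varepsilon^4$ and the inflated constraint $24n^2E^2+3n$ into the sample count of Theorem~\ref{correctness_algorithm_cov} reproduces both the explicit $N$ and the asymptotic scaling $O(\log(n^2/\delta)\,n^7E^4/\varepsilon^4)$; the failure probability is simply that of the single subroutine call, hence $\le\delta$. The only point worth flagging as nontrivial is the square-root loss inherent in Theorem~\ref{upper_bound_trace_distance_Gaussian}: since the trace distance is bounded only by $\sqrt{\|\tilde{V}-V(\rho)\|_1}$, the covariance must be learned to accuracy $\sim\varepsilon^2$ rather than $\varepsilon$, and this is precisely what upgrades the naive $1/\varepsilon^2$ to the $1/\varepsilon^4$ scaling. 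The substantive mathematics is already carried by the two cited theorems, so the present proof is the bookkeeping that threads the single accuracy parameter $\varepsilon'$ through both of them.
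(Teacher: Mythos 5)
Your proposal follows essentially the same route as the paper: convert the first-moment energy constraint into a second-moment constraint via Lemma~\ref{lemma_second_moment_energy}, run the moment-estimation subroutine of Theorem~\ref{correctness_algorithm_cov} at internal accuracy $\varepsilon'$, bound $\|\tilde{V}-V(\rho)\|_1\le 2n\varepsilon'$, feed this into Theorem~\ref{upper_bound_trace_distance_Gaussian}, and solve $\sqrt{2^7 E n^2\varepsilon'}\le\varepsilon$ to get $\varepsilon'=\varepsilon^2/(2^7En^2)$; this is exactly the paper's bookkeeping. The one step you skip is not purely cosmetic: Theorem~\ref{upper_bound_trace_distance_Gaussian} requires the photon-number bound on \emph{both} states, and you verify it only for the true state $\rho$, whereas the estimator $\tilde{\rho}$ has a priori unknown energy. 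The paper closes this by showing $\Tr[\tilde{\rho}\hat{E}_n]\le 2\Tr[\rho\hat{E}_n]$ from Lemma~\ref{lemma_energy_traceV} together with the accuracy guarantees on $\tilde{\textbf{m}}$ and $\tilde{V}$ (a short but explicit chain of inequalities), which is why the prefactor is $2\sqrt{2nE}$ rather than $f(nE)$. Adding that verification makes your argument complete.
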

\begin{proof}
    We aim to establish the correctness of the algorithm presented in Table~\ref{Table_covariance3}. First, let us observe that the second moment of the energy of the Gaussian state $\rho$, which satisfies the energy constraint $\Tr[\rho\hat{E}_n] \le nE$, can be upper bounded as
    \bb
            \frac{1}{n}\sqrt{\Tr[\rho\hat{E}_n^2]}\le \sqrt{3}\frac{\Tr[\rho\hat{E}_n]}{n} \le \sqrt{3}E\,,
    \ee
    where we have used Lemma~\ref{lemma_second_moment_energy}. Consequently, for any $\varepsilon'\in(0,\frac{1}{2})$ Theorem~\ref{correctness_algorithm_cov} establishes that a number 
    \bb
        N&\coloneqq\ (n+3)\ceil{  68 \log\!\left(\frac{2 (2n^2+3n)  }{\delta}\right) \frac{200(24n^2 E^2+3n)}{\varepsilon'^2}}
    \ee
    of copies of $\rho$ suffices to construct a vector $\tilde{\textbf{m}}\in\mathbb{R}^{2n}$ and a covariance matrix $\tilde{V}\in\mathbb{R}^{2n,2n}$ such that
    \bb
        \Pr\left(  \|\tilde{V}-V\!(\rho)\|_\infty\le \varepsilon'\quad \text{ and }\quad \|\tilde{\textbf{m}}-\textbf{m}(\rho)\|_2\le \frac{\varepsilon'}{10(3)^{1/4}\sqrt{8En}}  \right)\ge 1-\delta\,.
    \ee
    Let $\tilde{\rho}$ be the Gaussian state with first moment $\textbf{m}(\tilde{\rho})=\tilde{\textbf{m}}$ and covariance matrix $V(\tilde{\rho})=\tilde{V}$. Let us show that, with probability at least $1-\delta$, the energy of the estimator $\tilde{\rho}$ is at most twice the energy of the true state $\rho$:
    \bb\label{eq_en_proof_tomo_gauss}
        \Tr[\tilde{\rho}\hat{E}_n]&\eqt{(i)}\frac{\Tr \tilde{V}}{4}+\frac{\|\tilde{\textbf{m}}\|_2^2}{2} \\
        &\leqt{(ii)} \frac{\Tr V(\rho)}{4}+\frac{2n\|\tilde{V}-V(\rho)\|_\infty}{4}+\frac{(\|\tilde{\textbf{m}}-\textbf{m}(\rho)\|_2+\|\textbf{m}(\rho)\|_2)^2}{2}\\
        &\leqt{(iii)}\Tr[\rho\hat{E}_n]+\frac{n\varepsilon'}{2}+\frac{\left(\frac{\varepsilon'}{10(3)^{1/4}\sqrt{8En}}\right)^2+2\|\textbf{m}(\rho)\|_2\left(\frac{\varepsilon'}{10(3)^{1/4}\sqrt{8En}}\right)}{2}\\
        &\le\Tr[\rho\hat{E}_n]+\frac{n\varepsilon'}{2}+\frac{1 +2\|\textbf{m}(\rho)\|_2}{2}\left(\frac{\varepsilon'}{10(3)^{1/4}\sqrt{8En}}\right)\\
        &\leqt{(iv)}\Tr[\rho\hat{E}_n]+\frac{n\varepsilon'}{2}+\frac{1 +2\sqrt{2nE}}{2}\left(\frac{\varepsilon'}{10(3)^{1/4}\sqrt{8En}}\right)\\
        &\le\Tr[\rho\hat{E}_n]+\frac{n\varepsilon'}{2}+ \frac{\varepsilon'}{10(3)^{1/4}} \\
        &\le \Tr[\rho\hat{E}_n]+n\varepsilon'\\
        &\le \Tr[\rho\hat{E}_n]+\frac{n}{2}\\
        &\le 2\Tr[\rho\hat{E}_n]\,.
    \ee
    Here, in (i), we exploited Lemma~\ref{lemma_energy_traceV}. In (ii), we have used  H\"older inequality, triangle inequality, and the fact that $\|\tilde{V}-V(\rho)\|_\infty\le\varepsilon'$. In (iii), we exploited again Lemma~\ref{lemma_energy_traceV} and we have used  that $\|\tilde{\textbf{m}}-\textbf{m}(\rho)\|_2\le \frac{\varepsilon'}{10(3)^{1/4}\sqrt{8En}}$. The inequality in (iv) follows from~\eqref{upper_bound_norma_m}. Then, building on Theorem~\ref{upper_bound_trace_distance_Gaussian}, we obtain that, with probability at least $1-\delta$, it holds that
    \bb
        \frac12 \left\|\rho - \tilde{\rho}\right\|_1 &\leq 2\sqrt{\max\left(\Tr[\rho\hat{E}_n],\Tr[\tilde{\rho}\hat{E}_n]\right)}\left( \|\textbf{m}(\rho)-\textbf{m}(\tilde{\rho})\|_2 + \sqrt2\, \sqrt{\|V\!(\rho)-V\!(\tilde{\rho})\|_1}\right)\\
        &\le 2\sqrt{2nE}\left( \|\textbf{m}(\rho)-\textbf{m}(\tilde{\rho})\|_2 + \sqrt2\, \sqrt{\|V\!(\rho)-V\!(\tilde{\rho})\|_1}\right)\\
        &\le \sqrt{8nE}\left(  \frac{\varepsilon'}{10(3)^{1/4}\sqrt{8En}} + 2\, \sqrt{n\|V\!(\rho)-V\!(\tilde{\rho})\|_\infty}\right)\\
        &\le \sqrt{8nE}\left(  \frac{\varepsilon'}{10(3)^{1/4}\sqrt{8En}} + 2\, \sqrt{n\varepsilon'}\right)\\
        &\le \sqrt{8nE}\left(  4\, \sqrt{n\varepsilon'}\right)\\
        &=\sqrt{2^7En^2\varepsilon'}\,.
    \ee
    Consequently, by setting $\varepsilon'\coloneqq \frac{\varepsilon^2}{2^7En^2}\in(0,\frac{1}{2})$, we have that the choice
    \bb
        N&\coloneqq\ (n+3)\ceil{  68 \log\!\left(\frac{2 (2n^2+3n)  }{\delta}\right) \frac{200(24n^2 E^2+3n)}{\varepsilon^4}2^{14} E^2 n^4}
    \ee
    allows us to guarantee that 
    \bb
        \Pr\left(\frac{1}{2}\|\tilde{\rho}- \rho\|_1\le \varepsilon\right)\ge 1-\delta\,.
    \ee
    With this choice of $\varepsilon'$, we have also that
    \bb
        \Pr\left(  \|\tilde{V}-V\!(\rho)\|_\infty\le \frac{\varepsilon^2}{2^7En^2}   \quad \text{ and }\quad \|\tilde{\textbf{m}}-\textbf{m}(\rho)\|_2\le \frac{\varepsilon^2}{10(3)^{1/4}2^7\sqrt{8}E^{3/2}n^{5/2}}  \right)\ge 1-\delta\,,
    \ee
    which, in particular, implies that
    \bb
        \Pr\left(  \|\tilde{V}-V\!(\rho)\|_\infty\le \frac{\varepsilon^2}{2^7En^2}   \quad \text{ and }\quad \|\tilde{\textbf{m}}-\textbf{m}(\rho)\|_2\le \frac{\varepsilon^2}{2^7E^{3/2}n^{5/2}}  \right)\ge 1-\delta\,,
    \ee
\end{proof}

\subsubsection{Noise robustness of the algorithm}
\label{subsub:mixedGnois}
Above we have seen that learning an unknown Gaussian states is efficient, as the sample complexity scales polynomially in the number of modes. Here, we address the questions: \emph{What happens if the state is not exactly Gaussian?} \emph{Is our tomography procedure stable under little perturbations of the set of Gaussian states?} Answering these questions is conceptually relevant, as experimental imperfections may transform Gaussian states into slightly-perturbed Gaussian states.

Here, in order to quantify the non-Gaussian character of a quantum state, we use the \emph{relative entropy of non-Gaussianity}~\cite{Genoni2008,Marian2013}. For any state $\rho$, the relative entropy of non-Gaussianity  $d_{\mathcal{G}}(\rho)$ is defined as the minimum relative entropy between $\rho$ and any Gaussian state:
\bb
    d_{\mathcal{G}}(\rho)\coloneqq \min_{\sigma\in \mathcal{G}}S(\rho\|\sigma)\,,
\ee
where $\mathcal{G}$ denotes the set of Gaussian states and
\bb
S(\rho\|\sigma)\coloneqq \Tr[\rho\log_2\rho]-\Tr[\rho\log_2\sigma]
\ee
denotes the quantum relative entropy between $\rho$ and $\sigma$. The quantum relative entropy $S(\rho\|\sigma)$ has a strong operation meaning in terms of the optimal rate in asymmetric quantum hypothesis testing between $\rho$ and $\sigma$, due to the quantum Stein’s lemma~\cite{Hiai1991,Ogawa2000}.

As proved by \cite{Genoni2008}, the relative entropy of non-Gaussianity satisfies desirable properties that make it a meaningful measure of non-Gaussianity. In particular, note that $d_{G}(\rho)$ is faithful, in the sense that $d_{G}(\rho)\ge0$ and it vanishes if and only if $\rho$ is Gaussian. Notably, the minimum in the definition of $d_{\mathcal{G}}(\rho)$ is achieved by the \emph{Gaussianification} of $\rho$~\cite{Marian2013}. The Gaussianification $G(\rho)$ of a state $\rho$ is the Gaussian state with the same first moment and covariance matrix of $\rho$. In formula, the relative entropy of non-Gaussianity is given by~\cite{Marian2013}:
\bb\label{charact_rel_entropy_non_gauss}
    d_{\mathcal{G}}(\rho)=S\!\left(\rho\|G(\rho)\right)\,.
\ee
Let us now analyse the stability of our tomography algorithm for Gaussian states. We make use the of the following simple observation.
\begin{lemma}\label{remark_gauss}
    If the unknown state $\rho$ is not Gaussian, the algorithm designed for learning Gaussian states in Table~\ref{Table_covariance3} effectively learns the Gaussianification $G(\rho)$. Mathematically, $O\!\left( \log\!\left(\frac{n^2}{\delta}\right)\frac{n^7E^4}{\varepsilon^4}   \right)$ copies of $\rho$ suffices in order to build a classical description of a Gaussian state $\tilde{\rho}$ such that $\Pr\left(\frac{1}{2}\|\tilde{\rho}- G(\rho)\|_1\le \varepsilon\right)\ge 1-\delta$.
\end{lemma}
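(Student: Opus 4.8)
The plan is to observe that the tomography algorithm of Table~\ref{Table_covariance3} never uses the Gaussianity of its input: it merely calls the moment-estimation subroutine of Table~\ref{Table_covariance} and then outputs the Gaussian state fixed by the returned first moment and covariance matrix. Since Theorem~\ref{correctness_algorithm_cov} holds for an \emph{arbitrary} $n$-mode state $\rho$ obeying a second-moment constraint $\sqrt{\Tr[\hat{E}_n^2\rho]}\le nE$, the estimates $\tilde{\textbf{m}}$ and $\tilde{V}$ are guaranteed, with probability $\ge 1-\delta$, to make $\|\tilde{\textbf{m}}-\textbf{m}(\rho)\|_2$ and $\|\tilde{V}-V\!(\rho)\|_\infty$ small and $\tilde{V}+i\Omega_n\ge0$ (so that no \emph{failure} is declared) --- irrespective of whether $\rho$ is Gaussian. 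First I would therefore fix this second-moment constraint on $\rho$ as the hypothesis under which the stated sample complexity holds.

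Next I would invoke the characterisation of the Gaussianification recalled in~\eqref{charact_rel_entropy_non_gauss}: by definition $G(\rho)$ is the Gaussian state with $\textbf{m}(G(\rho))=\textbf{m}(\rho)$ and $V\!(G(\rho))=V\!(\rho)$. Hence the estimation errors above are \emph{exactly} the distances $\|\tilde{\textbf{m}}-\textbf{m}(G(\rho))\|_2$ and $\|\tilde{V}-V\!(G(\rho))\|_\infty$, and the output $\tilde{\rho}$ and the target $G(\rho)$ are now \emph{both} Gaussian. The point that makes the reduction to the Gaussian case work is that the mean energy depends only on the first two moments: by Lemma~\ref{lemma_energy_traceV}, $\Tr[G(\rho)\hat{E}_n]=\tfrac14\Tr V\!(\rho)+\tfrac12\|\textbf{m}(\rho)\|_2^2=\Tr[\rho\hat{E}_n]\le\sqrt{\Tr[\hat{E}_n^2\rho]}\le nE$, so $G(\rho)$ satisfies the same mean-energy constraint that the Gaussian $\rho$ satisfied in Theorem~\ref{correctness_algorithm_Gaussian}. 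Repeating the energy-bookkeeping computation of Eq.~\eqref{eq_en_proof_tomo_gauss} verbatim, with $\rho$ replaced by $G(\rho)$, then shows $\Tr[\tilde{\rho}\hat{E}_n]\le 2\Tr[G(\rho)\hat{E}_n]$ with high probability.

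Finally I would apply the Gaussian-state trace-distance upper bound of Theorem~\ref{upper_bound_trace_distance_Gaussian} to the pair $(\tilde{\rho},G(\rho))$, reproducing the final chain of inequalities in the proof of Theorem~\ref{correctness_algorithm_Gaussian} under the substitution $\rho\mapsto G(\rho)$: bounding $\tfrac12\|\tilde{\rho}-G(\rho)\|_1$ by $f\big(\max\{\Tr[G(\rho)\hat{E}_n],\Tr[\tilde{\rho}\hat{E}_n]\}\big)$ times the moment errors and choosing $\varepsilon'\coloneqq\varepsilon^2/(2^7En^2)$ yields the announced sample complexity $O\!\big(\log(n^2/\delta)\,n^7E^4/\varepsilon^4\big)$ and the guarantee $\Pr[\tfrac12\|\tilde{\rho}-G(\rho)\|_1\le\varepsilon]\ge1-\delta$. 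The one place requiring genuine care --- and the main conceptual obstacle --- is precisely this energy control: for non-Gaussian $\rho$ one cannot use Lemma~\ref{lemma_second_moment_energy} to bound the second moment of the energy by the square of the first moment, since that identity is special to Gaussian states. Consequently the second-moment constraint must be imposed on $\rho$ directly rather than derived from a mere mean-energy bound; once it is, the argument is identical to the Gaussian case and no new estimate is needed.
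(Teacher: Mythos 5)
Your proposal is correct and follows the same route as the paper: the paper's own proof is a two-observation remark, namely that the algorithm in Table~\ref{Table_covariance3} depends on the input state only through its first moment and covariance matrix, and that $G(\rho)$ by definition shares these with $\rho$ (so in particular, by Lemma~\ref{lemma_energy_traceV}, the two states have the same mean energy), after which the analysis of Theorem~\ref{correctness_algorithm_Gaussian} goes through verbatim with $G(\rho)$ in place of the target. Where you go beyond the paper is in the energy bookkeeping, and your extra care is warranted: the subroutine of Theorem~\ref{correctness_algorithm_cov} is run on copies of $\rho$ itself, and its variance bounds are controlled by the \emph{second} moment $\Tr[\hat{E}_n^2\rho]$. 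For a Gaussian input this is deduced from the mean-energy bound via Lemma~\ref{lemma_second_moment_energy}, but that identity is special to Gaussian states, so for a general $\rho$ the second-moment constraint must be assumed directly --- a point the paper's two-line proof (and the statement of the lemma, which is phrased under a mean-energy constraint inherited from Theorem~\ref{correctness_algorithm_Gaussian}) passes over silently. Your fix of imposing $\sqrt{\Tr[\hat{E}_n^2\rho]}\le nE$ as the hypothesis is the honest repair and does not affect the claimed sample complexity; the remainder of your argument, including the observation that $\textbf{m}(G(\rho))=\textbf{m}(\rho)$ and $V\!(G(\rho))=V\!(\rho)$ turn the moment-estimation errors into exactly the quantities entering Theorem~\ref{upper_bound_trace_distance_Gaussian} for the pair $(\tilde{\rho},G(\rho))$, and the reuse of the computation in Eq.~\eqref{eq_en_proof_tomo_gauss}, matches the intended proof.
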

\begin{proof}
    The claim follows by two main observations: (i) the algorithm in Table~\ref{Table_covariance3} only involves the first moment and the covariance matrix of the unknown state, and (ii) the mean photon number of a state $\rho$ and the one of its Gaussianification $G(\rho)$ coincide (because of Lemma~\ref{lemma_energy_traceV}).
\end{proof}
Now we show that \emph{quantum state tomography of slightly-perturbed Gaussian states is efficient}. Here, by `slightly-perturbed Gaussian state' we mean that the relative entropy of non-Gaussianity is sufficiently small.

\begin{thm}[(Quantum state tomography of slightly-perturbed Gaussian states is efficient)]\label{thm_robustness_gaussian_states}
    Let $\varepsilon,\delta\in(0,1)$ and $E\ge0$. Let $\rho$ be an unknown $n$-mode (possibly non-Gaussian) state such that its relative entropy of non-Gaussianity satisfies $d_{\mathcal{G}}(\rho)\le \varepsilon^2$. Assume that $\rho$ satisfies the energy constraint $\Tr[\rho \hat{E}_n]\le nE$. Then, $O\!\left( \log\!\left(\frac{n^2}{\delta}\right)\frac{n^7E^4}{\varepsilon^4}   \right)$ copies of $\rho$ suffices to build a classical description of a Gaussian state $\tilde{\rho}$ such that $\Pr\left(\frac{1}{2}\|\tilde{\rho}- \rho\|_1\le \varepsilon\right)\ge 1-\delta$.
\end{thm}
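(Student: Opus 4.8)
The plan is to reduce the problem to learning the \emph{Gaussianification} $G(\rho)$ of the unknown state, and then to control the residual gap between $\rho$ and $G(\rho)$ using the smallness of its relative entropy of non-Gaussianity. The starting point is Lemma~\ref{remark_gauss}: the tomography algorithm for Gaussian states in Table~\ref{Table_covariance3} accesses only the first moment and the covariance matrix of the state it is fed, and since $\rho$ and $G(\rho)$ share both of these, running the algorithm on copies of $\rho$ returns a classical description of a Gaussian estimator $\tilde{\rho}$ that is close to $G(\rho)$ rather than to $\rho$ itself. Because $\rho$ and $G(\rho)$ also have the same mean energy (Lemma~\ref{lemma_energy_traceV}), the constraint $\Tr[\rho\hat{E}_n]\le nE$ transfers verbatim to $G(\rho)$, so the sample-complexity guarantee underlying Theorem~\ref{correctness_algorithm_Gaussian} applies with the same $E$.

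Concretely, I would proceed in three steps. First, run the algorithm of Table~\ref{Table_covariance3} with accuracy $\varepsilon$ and failure probability $\delta$; by Lemma~\ref{remark_gauss} this consumes $O\!\left(\log(n^2/\delta)\,n^7E^4/\varepsilon^4\right)$ copies and outputs $\tilde{\rho}$ with $\Pr[\tfrac12\|\tilde{\rho}-G(\rho)\|_1\le\varepsilon]\ge 1-\delta$. Second, bound the distance from $\rho$ to its Gaussianification: by the characterisation~\eqref{charact_rel_entropy_non_gauss} the minimiser defining $d_{\mathcal{G}}$ is exactly $G(\rho)$, so $d_{\mathcal{G}}(\rho)=S(\rho\|G(\rho))$, and the hypothesis $d_{\mathcal{G}}(\rho)\le\varepsilon^2$ together with the quantum Pinsker inequality (with relative entropy measured in bits) gives
\bb
\frac12\|\rho-G(\rho)\|_1\le \sqrt{\tfrac{\ln 2}{2}\,S(\rho\|G(\rho))}=\sqrt{\tfrac{\ln 2}{2}\,d_{\mathcal{G}}(\rho)}\le \sqrt{\tfrac{\ln 2}{2}}\,\varepsilon\,.
\ee
Third, combine the two bounds by the triangle inequality,
\bb
\frac12\|\tilde{\rho}-\rho\|_1\le \frac12\|\tilde{\rho}-G(\rho)\|_1+\frac12\|G(\rho)-\rho\|_1\le \Big(1+\sqrt{\tfrac{\ln2}{2}}\Big)\varepsilon=O(\varepsilon)\,,
\ee
which holds with probability at least $1-\delta$. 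Rescaling $\varepsilon$ by the absolute constant $(1+\sqrt{\ln 2/2})^{-1}$ then converts the $O(\varepsilon)$ bound into an exact $\varepsilon$ guarantee, altering the sample complexity only by a constant factor and leaving the scaling $O\!\left(\log(n^2/\delta)\,n^7E^4/\varepsilon^4\right)$ intact.

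The substantive content is all imported from results already in place: Theorem~\ref{upper_bound_trace_distance_Gaussian} (which powers the Gaussian-state guarantee) and the Marian--Marian identity~\eqref{charact_rel_entropy_non_gauss}. The genuinely new and conceptually crucial step is the passage from an \emph{entropic} proximity hypothesis ($d_{\mathcal{G}}(\rho)$ small) to the \emph{operational} trace-distance bound required for tomography, which is precisely where Pinsker enters. The point demanding the most care — and the one I expect to be the main obstacle — is justifying that the learning guarantee of Table~\ref{Table_covariance3} still holds when its input is the non-Gaussian $\rho$: the covariance-estimation subroutine (Table~\ref{Table_covariance}) has sample complexity governed by the variances $\Tr[\rho\{\hat{R}_i,\hat{R}_j\}^2]$, i.e. ultimately by the second moment $\Tr[\rho\hat{E}_n^2]$, whereas for genuinely Gaussian inputs this quantity is controlled by the mean energy alone via Lemma~\ref{lemma_second_moment_energy}. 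One must therefore check that this moment budget is still available for $\rho$ (as encoded in Lemma~\ref{remark_gauss}), so that no hidden dependence on the non-Gaussian tails of $\rho$ degrades the stated $\mathrm{poly}(n)$ sample complexity; the remaining work is routine constant bookkeeping.
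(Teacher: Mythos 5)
Your proposal is correct and follows essentially the same route as the paper's proof: run the Gaussian tomography algorithm on $\rho$ (which, per Lemma~\ref{remark_gauss}, learns the Gaussianification $G(\rho)$), bound $\frac12\|\rho-G(\rho)\|_1$ via the identity $d_{\mathcal{G}}(\rho)=S(\rho\|G(\rho))$ and quantum Pinsker, and conclude by the triangle inequality. The only cosmetic difference is that the paper pre-splits the error budget as $(1-\sqrt{\ln 2/2})\varepsilon$ for the algorithm and $\sqrt{\ln 2/2}\,\varepsilon$ for the Pinsker term so the sum is exactly $\varepsilon$, whereas you obtain $O(\varepsilon)$ and rescale; your closing caveat about the second-moment budget of the non-Gaussian $\rho$ is a fair observation but applies equally to the paper's own Lemma~\ref{remark_gauss}, so it is not a gap specific to your argument.
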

\begin{proof}
    Thanks to Remark~\ref{remark_gauss}, $O\!\left( \log\!\left(\frac{n^2}{\delta}\right)\frac{n^7E^4}{\varepsilon^4}   \right)$ copies of $\rho$ suffices to build a classical description of a Gaussian state $\tilde{\rho}$ such that $\frac{1}{2}\|\tilde{\rho}- G(\rho)\|_1\le \left(1-\sqrt{\frac{\ln2}{2}}\right)\varepsilon$ with probability at least $1-\delta$, where $G(\rho)$ denotes the Gaussianification of $\rho$. If this event happens, then 
\bb
    \frac{1}{2}\|\rho-\tilde{\rho}\|_1&\le\frac{1}{2}\|G(\rho)-\tilde{\rho}\|_1+\frac{1}{2}\|G(\rho)-\rho\|_1\\
    &\le \left(1-\sqrt{\frac{\ln2}{2}}\right)\varepsilon+\frac{1}{2}\|G(\rho)-\rho\|_1\\
    &\leqt{(i)} \left(1-\sqrt{\frac{\ln2}{2}}\right)\varepsilon+\sqrt{\frac{\ln2}{2}}\sqrt{S(\rho\|G(\rho))}\\
    &\eqt{(ii)}\left(1-\sqrt{\frac{\ln2}{2}}\right)\varepsilon+\sqrt{\frac{\ln2}{2}} \sqrt{d_{\mathcal{G}(\rho)}}\\
    &\le \varepsilon\,.
\ee
Here, in (i), we employed the quantum Pinsker inequality~\cite[Theorem 11.9.1]{MARK}, which states that for any $\tau$ and $\sigma$ the trace distance can be upper bounded in terms of the relative entropy as $\frac{1}{2}\|\tau-\sigma\|_1\le\sqrt{\frac{\ln2}{2}S(\tau\|\sigma)}$. Finally, in (ii), we just employed the characterisation of the relative entropy of non-Gaussianity in~\eqref{charact_rel_entropy_non_gauss}.
\end{proof}

\subsubsection{Improved tomography algorithm for pure Gaussian states}
\label{subsub:pureG}
In this subsubsection we present an improved upper bound on the sample complexity of tomography of pure Gaussian states. We can show that $O\!\left( \log\!\left(\frac{n^2}{\delta}\right)\frac{n^5E^4}{\varepsilon^4}   \right)$ state copies suffices to achieve tomography of pure Gaussian states, thus obtaining an improvement with respect the scaling $O\!\left( \log\!\left(\frac{n^2}{\delta}\right)\frac{n^7E^4}{\varepsilon^4}   \right)$ of the mixed setting analysed in subsubsection~\ref{subsub:mixedG}. The main technical tool employed here is the improved upper bound on the trace distance between a pure Gaussian state and an arbitrary (possibly-mixed) state presented in Theorem~\ref{inequality_dist_gauss}. 
\begin{thm}[(Tomography of pure Gaussian states)]\label{tomography_pure_gaussian}
Let $\varepsilon,\delta\in(0,1)$ and $E\geq 0$. Let $\psi$ be an $n$-mode pure Gaussian state satisfying the energy constraint $\Tr[\psi \hat{E}_n]\le nE$. A number of copies $N$ of $\psi$, such that
\bb 
N&\coloneqq\ (n+3)\ceil{  68 \log\!\left(\frac{2 (2n^2+3n)  }{\delta}\right) \frac{200(24n^2 E^2+3n)}{\varepsilon^4}16E^2n^2}\\
&=O\!\left( \log\!\left(\frac{n^2}{\delta}\right)\frac{n^5E^4}{\varepsilon^4}   \right)\,,
\ee
are sufficient to build a classical description of a Gaussian state $\tilde{\rho}$ such that
    \bb
        \Pr\left(\frac{1}{2}\|\tilde{\rho}- \psi\|_1\le \varepsilon\right)\ge 1-\delta\,.
    \ee
\end{thm}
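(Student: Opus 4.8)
The plan is to follow the same two-step strategy as in the proof of Theorem~\ref{correctness_algorithm_Gaussian} for mixed Gaussian states---first estimate the first moment and covariance matrix, then output the Gaussian state carrying those moments---but to replace the symmetric trace-distance bound of Theorem~\ref{upper_bound_trace_distance_Gaussian} with the sharper, asymmetric bound of Theorem~\ref{inequality_dist_gauss}, which is available precisely because the target $\psi$ is pure. Concretely, the algorithm queries $N$ copies of $\psi$, runs the covariance-estimation routine of Table~\ref{Table_covariance}, and returns the Gaussian state $\tilde\rho$ whose first moment and covariance matrix equal the estimates $\tilde{\mathbf m}$ and $\tilde V$. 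Note that $\tilde\rho$ need not be pure; this is harmless because Theorem~\ref{inequality_dist_gauss} bounds $\frac12\|\rho-\ketbra{\psi}\|_1$ for an \emph{arbitrary} state $\rho$, imposing purity and an energy bound only on $\psi$ itself.

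First I would verify that $\psi$ obeys a second-moment constraint, as required to invoke Theorem~\ref{correctness_algorithm_cov}. Since $\psi$ is Gaussian with $\Tr[\psi\hat E_n]\le nE$, Lemma~\ref{lemma_second_moment_energy} gives $\Tr[\psi\hat E_n^2]\le 3(\Tr[\psi\hat E_n])^2\le 3n^2E^2$, hence $\sqrt{\Tr[\psi\hat E_n^2]}\le\sqrt3\,nE$; thus $\psi$ satisfies the second-moment constraint with parameter $\sqrt3\,E$. I would then apply Theorem~\ref{correctness_algorithm_cov} with accuracy $\varepsilon'$ and this second-moment parameter, obtaining, with probability at least $1-\delta$, estimates with $\|\tilde V-V\!(\psi)\|_\infty\le\varepsilon'$ and $\|\tilde{\mathbf m}-\mathbf m(\psi)\|_2\le\varepsilon'/(10\sqrt{8\sqrt3\,E n})$, at a cost of $(n+3)\lceil 68\log(2(2n^2+3n)/\delta)\,200(24n^2E^2+3n)/(\varepsilon')^2\rceil$ copies, where the $24n^2E^2$ arises from the substitution $E\mapsto\sqrt3\,E$.

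Next, feeding these into the general form of Theorem~\ref{inequality_dist_gauss} with the energy bound $\Tr[\psi\hat E_n]\le nE$ (so that $\Tr V\!(\psi)\le 4nE$ by Lemma~\ref{lemma_energy_traceV}), I obtain
\begin{equation}
\frac12\|\tilde\rho-\psi\|_1\le\sqrt{nE}\,\sqrt{\|\tilde V-V\!(\psi)\|_\infty+2\|\tilde{\mathbf m}-\mathbf m(\psi)\|_2^2}\le\sqrt{nE}\,\sqrt{\varepsilon'+2\Big(\tfrac{\varepsilon'}{10\sqrt{8\sqrt3\,E n}}\Big)^2}.
\end{equation}
The second summand under the root is $O((\varepsilon')^2)$ and hence negligible against $\varepsilon'$, so the right-hand side is at most $\sqrt{2nE\varepsilon'}$ for $\varepsilon'\le1$. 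Choosing $\varepsilon'\coloneqq\varepsilon^2/(4nE)$ makes this at most $\varepsilon/\sqrt2\le\varepsilon$, and substituting $1/(\varepsilon')^2=16E^2n^2/\varepsilon^4$ reproduces the stated sample complexity $N=O(\log(n^2/\delta)\,n^5E^4/\varepsilon^4)$.

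The proof is essentially a combination of two results already established, so there is no genuinely hard step left; the content lies in recognising \emph{why} the pure case improves on the mixed case from $n^7$ to $n^5$. The improvement is entirely due to the asymmetric bound: Theorem~\ref{inequality_dist_gauss} controls the covariance discrepancy in operator norm $\|\tilde V-V\!(\psi)\|_\infty$, whereas the symmetric Theorem~\ref{upper_bound_trace_distance_Gaussian} used in the mixed case controls it in trace norm $\|\tilde V-V(\rho)\|_1$, costing an extra factor $\sqrt{2n}$ in passing from operator to trace norm, and additionally it forces one to bound the energy of the \emph{estimator} $\tilde\rho$ rather than only that of the true state. These two effects together force $\varepsilon'\sim\varepsilon^2/(n^2E)$ in the mixed case but only $\varepsilon'\sim\varepsilon^2/(nE)$ here. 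The only point requiring mild care is therefore that the output Gaussian $\tilde\rho$ is admissible as the arbitrary argument $\rho$ of Theorem~\ref{inequality_dist_gauss}, which indeed imposes no constraint on $\rho$.
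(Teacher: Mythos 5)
Your proposal is correct and follows essentially the same route as the paper's proof: bound the second moment of the energy via Lemma~\ref{lemma_second_moment_energy}, run the moment-estimation routine of Theorem~\ref{correctness_algorithm_cov} with accuracy $\varepsilon'$, output the Gaussian state with the estimated moments, and convert the moment errors into a trace-distance error via the pure-state bound of Theorem~\ref{inequality_dist_gauss}, finally setting $\varepsilon'=\varepsilon^2/(4nE)$. The only (immaterial) difference is that the paper also bounds the energy of the estimator $\tilde\rho$ and uses $\sqrt{\max(\Tr[\tilde\rho\hat E_n],\Tr[\psi\hat E_n])}$, whereas you correctly note that the general form of Theorem~\ref{inequality_dist_gauss} only involves $\Tr V\!(\psi)$, so no constraint on $\tilde\rho$ is needed.
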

\begin{proof}
   We proceed as in Lemma~\ref{correctness_algorithm_Gaussian}. First, let us observe that Lemma~\ref{lemma_second_moment_energy} establishes that the second moment of the energy of the Gaussian state $\psi$, which satisfies the energy constraint $\Tr[\psi\hat{E}_n] \le nE$, can be upper bounded as
    \bb
            \frac{1}{n}\sqrt{\Tr[\psi\hat{E}_n^2]}\le \sqrt{3}\frac{\Tr[\psi\hat{E}_n]}{n} \le \sqrt{3}E\,.
    \ee
     Consequently, for any $\varepsilon'\in(0,\frac{1}{2})$ Theorem~\ref{correctness_algorithm_cov} establishes that a number 
    \bb
        N&\coloneqq\ (n+3)\ceil{  68 \log\!\left(\frac{2 (2n^2+3n)  }{\delta}\right) \frac{200(24n^2 E^2+3n)}{\varepsilon'^2}}
    \ee
    of copies of $\psi$ suffices to construct a vector $\tilde{\textbf{m}}\in\mathbb{R}^{2n}$ and a covariance matrix $\tilde{V}\in\mathbb{R}^{2n,2n}$ such that
    \bb
        \Pr\left(  \|\tilde{V}-V\!(\psi)\|_\infty\le \varepsilon'\quad \text{ and }\quad \|\tilde{\textbf{m}}-\textbf{m}(\psi)\|_2\le \frac{\varepsilon'}{10 (3)^{1/4}\sqrt{8En}}  \right)\ge 1-\delta\,.
    \ee
    Let $\tilde{\rho}$ be the Gaussian state with first moment $\textbf{m}(\tilde{\rho})=\tilde{\textbf{m}}$ and covariance matrix $V(\tilde{\rho})=\tilde{V}$. By exploiting the same reasoning used in~\eqref{eq_en_proof_tomo_gauss}, we have that the energy of the estimator $\tilde{\rho}$ is at most twice the energy of $\psi$, i.e.
    \bb
        \Tr[\tilde{\rho}\hat{E}_n]\le 2\Tr[\psi\hat{E}_n]\,,
    \ee
    with probability at least $1-\delta$. Then, by exploiting Theorem~\ref{inequality_dist_gauss}, we have that, with probability at least $1-\delta$, it holds that
    \bb
        \frac{1}{2}\left\|\, \tilde{\rho}-\psi\, \right\|_1&\le \sqrt{\max\left(\Tr[\tilde{\rho}\hat{E}_n],\Tr[\psi\hat{E}_n]\right)}\sqrt{ \|V\!(\tilde{\rho})-V\!(\psi)\|_\infty+2\|\textbf{m}(\tilde{\rho})-\textbf{m}(\psi)\|_2^2}\\
        &\le\sqrt{2nE}\sqrt{ \|V\!(\tilde{\rho})-V\!(\psi)\|_\infty+2\|\textbf{m}(\tilde{\rho})-\textbf{m}(\psi)\|_2^2}\\
        &\le \sqrt{2nE}\sqrt{ \varepsilon'+\frac{\varepsilon'^2}{400\sqrt{3}\,En}   }\\
        &\le \sqrt{4nE\varepsilon'}\,.
    \ee
    Consequently, by setting $\varepsilon'\coloneqq \frac{\varepsilon^2}{4nE}\in(0,\frac{1}{2})$, we have that the choice
    \bb
        N&\coloneqq\ (n+3)\ceil{  68 \log\!\left(\frac{2 (2n^2+3n)  }{\delta}\right) \frac{200(24n^2 E^2+3n)}{\varepsilon^4}16E^2n^2}
    \ee
    allows us to guarantee that 
    \bb
        \Pr\left(\frac{1}{2}\|\tilde{\rho}- \psi\|_1\le \varepsilon\right)\ge 1-\delta\,.
    \ee
\end{proof}

\newpage

\section{Tomography of $t$-doped bosonic Gaussian states}\label{Sec_t_doped}
In the preceding section, we rigorously proved that quantum state tomography of bosonic Gaussian states is efficient. In this section, we delve into the analysis of \emph{$t$-doped bosonic Gaussian states}, which are states prepared by Gaussian unitaries and at most $t$ non-Gaussian local unitaries. They encompass a much broader class of efficiently learnable states than exact bosonic Gaussian states. 

The results we will show in this section can be seen as an extension to the bosonic setting of what was previously shown for $t$-doped stabiliser states~\cite{grewal2023efficient,leone2023learning,hangleiter2024bell} (states prepared by Clifford gates and at most $t$ $\operatorname{T}$-gates) and $t$-doped fermionic Gaussian states of Ref.~\cite{mele2024efficient} (states prepared by fermionic Gaussian unitaries and at most $t$ fermionic non-Gaussian local unitaries). 
However, extending these results is far from trivial, as in the bosonic setting one must deal not only with different commutation relations than in the fermionic setting, but also with subtleties arising from the energy constraints and from the infinite-dimensional Hilbert space.
Just as T-gates are considered \emph{magic} gates for Clifford circuits, which are classically simulable~\cite{gottesman1998heisenberg}, local non-Gaussian gates can similarly be viewed as \emph{magic} gates for Gaussian circuits, which are also classically simulable~\cite{Mari-Eisert}.
Recent works have also been focusing on the classical simulability of states prepared by a Gaussian evolution applied to an input state that is a superposition of a bounded number of Gaussian states~\cite{Chabaud_2021,dias2024classical,hahn2024classical}. 

An $n$-mode unitary $U$ is said to be a \emph{$(t,\kappa)$-doped Gaussian unitary} if it is a composition of Gaussian unitaries and at most $t$ non-Gaussian $\kappa$-local unitaries, where $\kappa$-local means that each non-Gaussian gate involves at most $\kappa$ quadrature operators. In other words, $U$ is of the form 
\bb
    U=G_{t}W_t\cdots G_1 W_1 G_0\,,
\ee
where each $G_{i}$ is an $n$-mode Gaussian unitary and $W_i$ is a $\kappa$-local non-Gaussian unitary. Strictly speaking, we assume each $W_i$ to be a unitary generated by a Hamiltonian which is a (non-quadratic) polynomial in at most $\kappa$ quadrature operators. 
An $n$-mode state vector $\ket{\psi}$ is said to be a \emph{$(t,\kappa)$-doped Gaussian state} if it can be prepared by applying a $(t,\kappa)$-doped Gaussian unitary to the vacuum, that is,
\bb
    \ket{\psi}=U\ket{0}^{\otimes n}\,.
\ee
Informally, we sometimes say that a state is a $t$-doped Gaussian state (by omitting the dependence on $\kappa$) if it is a $(t,\kappa)$-doped Gaussian state for $\kappa=O(1)$ in the number of modes. Fig.~\ref{figure_t_doped} shows a depiction of a $t$-doped Gaussian state.
\begin{figure*}[t]\label{figure_t_doped}
    \centering
    \includegraphics[width=0.55\textwidth]{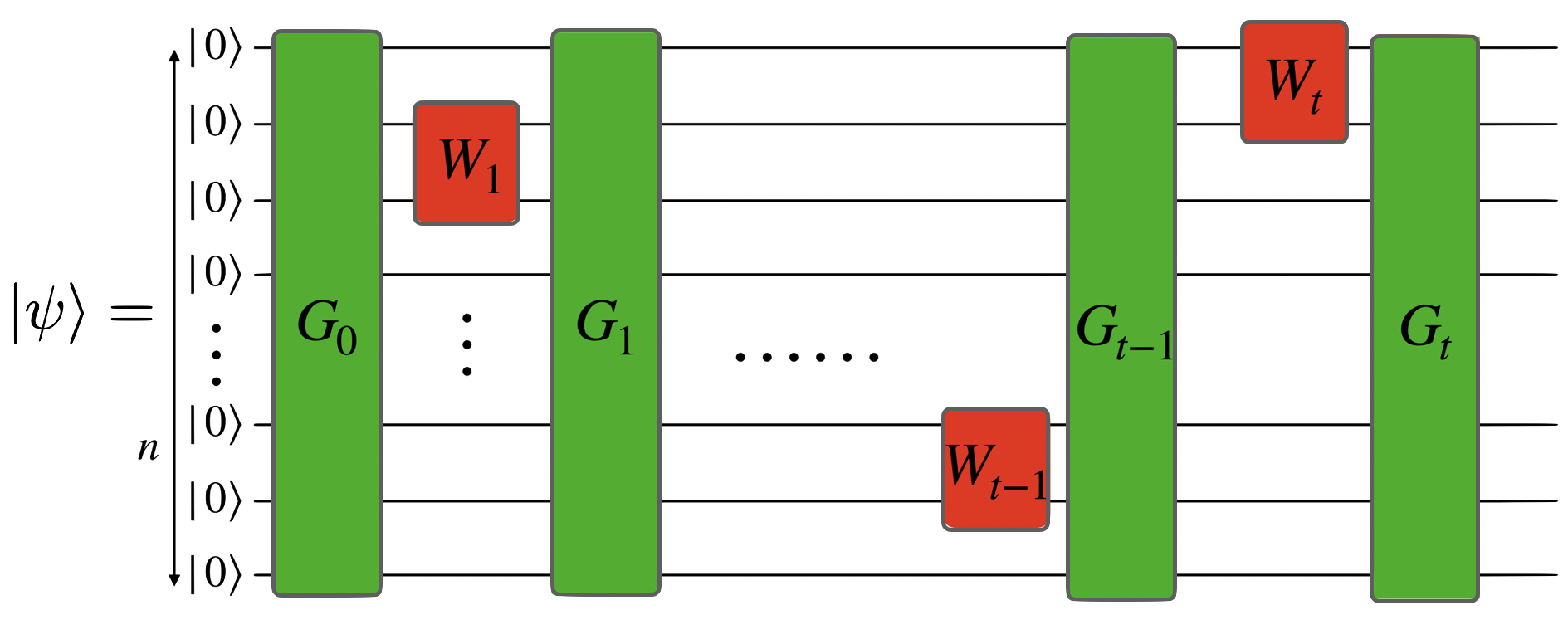}
    \caption{Depiction of a $t$-doped Gaussian state. The state vector $\ket{\psi}$ is prepared by applying Gaussian unitaries (green gates) and at most $t$ local non-Gaussian unitaries (red gates). In the specific example depicted in the figure, the locality is $\kappa=4$ as each of the non-Gaussian unitaries acts non-trivially on two modes (thus $4$ quadratures). }
\end{figure*}

The forthcoming Theorem~\ref{thm_compression} provides a remarkable decomposition of $(t,\kappa)$-doped unitaries and states, which shows that, if $\kappa t\le n$, one can turn any $t$-doped state into a tensor product between a $\kappa t$-mode non-Gaussian state and the $(n-\kappa t)$-mode vacuum state via a suitable Gaussian unitary. We provide a depiction of such a decomposition in Fig.~\ref{fig_compression_thm}
\begin{figure*}[t]\label{fig_compression_thm}
    \centering
    \includegraphics[width=0.75\textwidth]{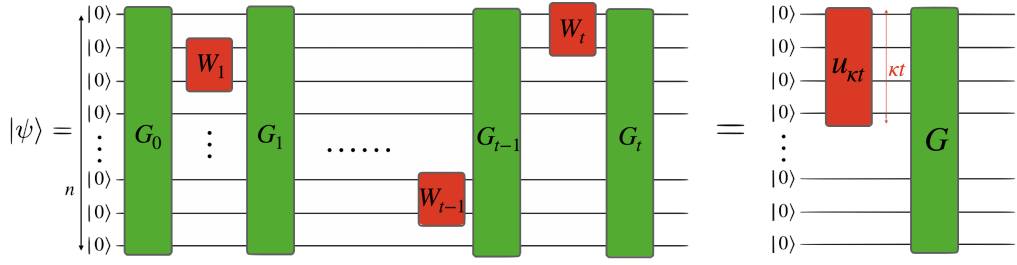}
    \caption{Depiction of the decomposition of a $t$-doped Gaussian state vector $\ket{\psi}$ proved in Theorem~\ref{thm_compression}. All the `non-Gaussianity' of $\ket{\psi}$ can be compressed on the first $\kappa t$ modes by applying to $\ket{\psi}$ a suitable Gaussian unitary $G^\dagger$.  }
\end{figure*}

\begin{thm}[(Non-Gaussianity compression in $t$-doped Gaussian unitaries and states)]\label{thm_compression}
If $n\ge \kappa t$, any $n$-mode $(t,\kappa)$-doped Gaussian unitary $U$ can be decomposed as 
\bb
    U = G(u_{\kappa t}\otimes \mathbb{1}_{n-\kappa t}) G_{\text{passive}}\,
\ee
for some suitable Gaussian unitary $G$, passive Gaussian unitary $G_{\text{passive}}$, and $\kappa t$-mode (non-Gaussian) unitary $u_{\kappa t}$. In particular, any $n$-mode $(t,\kappa)$-doped Gaussian state can be decomposed as 
\bb\label{dec_t_doped}
    \ket{\psi}=G\left(\ket{\phi_{\kappa t}}\otimes\ket{0}^{\otimes(n-\kappa t)}\right)\,
\ee
for some suitable Gaussian unitary $G$ and  $\kappa t$-mode (non-Gaussian) state vector $\ket{\phi_{\kappa t}}$.
\end{thm}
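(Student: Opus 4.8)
The plan is to obtain the state decomposition~\eqref{dec_t_doped} as an immediate corollary of the operator decomposition and to concentrate all the work on the latter. Once $U = G\,(u_{\kappa t}\otimes\mathbb{1}_{n-\kappa t})\,G_{\text{passive}}$ is established, I would invoke the fact that a passive Gaussian unitary commutes with $\hat{E}_n$ (as recalled in the preliminaries, $U_O^\dagger\hat{E}_n U_O=\hat{E}_n$) and hence fixes the nondegenerate ground state, $G_{\text{passive}}\ket{0}^{\otimes n}=\ket{0}^{\otimes n}$ up to a phase. Applying $U$ to the vacuum then gives $\ket{\psi}=G\,(u_{\kappa t}\otimes\mathbb{1}_{n-\kappa t})\ket{0}^{\otimes n}=G\big(\ket{\phi_{\kappa t}}\otimes\ket{0}^{\otimes(n-\kappa t)}\big)$ with $\ket{\phi_{\kappa t}}\coloneqq u_{\kappa t}\ket{0}^{\otimes\kappa t}$, which is exactly~\eqref{dec_t_doped}.

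For the operator decomposition, the first step is to commute all Gaussian factors of $U=G_tW_t\cdots G_1W_1G_0$ to the left. Conjugating a $\kappa$-local gate $W_i=e^{-iH_i}$ by a Gaussian $G$ of symplectic matrix $S$ replaces each quadrature $v^\intercal\mathbf{\hat R}$ appearing in $H_i$ by $(S^{-\intercal}v)^\intercal\mathbf{\hat R}$ plus a constant (the constant arising from any displacement part, which is harmless since it leaves the linear span unchanged). Thus $GW_iG^\dagger$ is again generated by a polynomial in the quadratures associated to a subspace $V_i\subseteq\mathbb{R}^{2n}$ with $\dim V_i\le\kappa$. Iterating the commutation yields $U=G\,W$ with $G\coloneqq G_t\cdots G_0$ Gaussian and $W\coloneqq\hat W_t\cdots\hat W_1$ a product of non-Gaussian unitaries each generated from the quadratures of some $V_i$; consequently $W$ lies in the group generated by the quadratures of $V\coloneqq V_1+\cdots+V_t$, a subspace with $\dim V\le\kappa t$.

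The crucial step is to rotate $V$ onto the first $\kappa t$ modes by a unitary that is not merely Gaussian but \emph{passive}, which is precisely what produces the factor $G_{\text{passive}}$. Here I would exploit that $\Omega_n^2=-\mathbb{1}_{2n}$, so $\Omega_n$ defines a complex structure on $\mathbb{R}^{2n}$: the subspace $\tilde V\coloneqq V+\Omega_n V$ is $\Omega_n$-invariant, hence symplectic, and has real dimension $\le 2\dim V\le 2\kappa t$, i.e. it corresponds to $m\le\kappa t$ modes. Because $\tilde V$ is a complex subspace, and passive transformations are exactly the symplectic maps in $\mathrm{O}(2n)\cap\mathrm{Sp}(2n)\cong\mathrm{U}(n)$, which act transitively on complex $m$-dimensional subspaces, there exists a passive $U_S$ whose Heisenberg action carries the quadratures of $\tilde V$ to those of the first $m$ modes. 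Then $U_SWU_S^\dagger$ is generated by quadratures of the first $m$ modes alone, so $U_SWU_S^\dagger=u_m\otimes\mathbb{1}_{n-m}$; substituting $W=U_S^\dagger(u_m\otimes\mathbb{1}_{n-m})U_S$ gives $U=(GU_S^\dagger)(u_m\otimes\mathbb{1}_{n-m})U_S$, and padding $u_m$ with identities up to $\kappa t$ modes delivers the claim with $G_{\text{passive}}\coloneqq U_S$.

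The main obstacle is exactly the passivity of $G_{\text{passive}}$. A naive symplectic completion of $V$ would only permit rotating it onto the first modes by a general (possibly squeezing) Gaussian unitary, and a two-mode squeezing coupling an active mode to an inactive one can be neither absorbed into $u_{\kappa t}$ nor commuted past $u_{\kappa t}\otimes\mathbb{1}$, so the decomposition would break down. Completing with the complex structure $\Omega_n$ rather than with an arbitrary symplectic complement is what circumvents this, since passive unitaries are precisely the symplectic maps commuting with $\Omega_n$ and therefore preserving complex subspaces. A secondary, purely technical point is that the quadrature polynomials are unbounded operators, so I would phrase the commutations, conjugations, and the notion of being ``generated from a subspace'' at the level of the generating Hamiltonians and their one-parameter groups acting on the quadratures on a common invariant domain, which is exactly the level at which the symplectic bookkeeping above takes place.
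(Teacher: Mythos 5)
Your proposal is correct and follows essentially the same route as the paper: push all Gaussian factors to the left by conjugating each $\kappa$-local gate with the accumulated Gaussians, observe that the resulting non-Gaussian block is generated by quadratures spanning a subspace of dimension at most $\kappa t$, rotate that subspace onto the first modes with a passive (orthogonal symplectic) unitary, and use vacuum-preservation of passive unitaries for the state version. The only difference is that where the paper imports the existence of the required orthogonal symplectic matrix as a lemma from the fermionic-doping reference, you supply a direct proof via the complex structure $\Omega_n$ and the transitivity of $\mathrm{O}(2n)\cap\mathrm{Sp}(2n)\cong\mathrm{U}(n)$ on complex subspaces — which is exactly the content of that lemma.
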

The decomposition in~\eqref{dec_t_doped} shows that it is possible to compress all the non-Gaussianity of a $t$-doped Gaussian state via a suitable Gaussian unitary. By leveraging such a property, we devise a tomography algorithm for $t$-doped Gaussian states which has a sample and time complexity that scales polynomially in $n$ as long as $\kappa t = O(1)$, \emph{thereby establishing that tomography of (energy-constrained) $t$-doped states is efficient} in this regime. This establishes the robustness of tomography of Gaussian states, in the sense that, even if few non-Gaussian unitaries are applied to a Gaussian state, the resulting state remains efficiently learnable.
 
The rough idea behind our tomography algorithm for unknown $t$-doped Gaussian states $\ket{\psi}$ involves first estimating the Gaussian unitary $G$ (see~\eqref{dec_t_doped}), then applying its inverse to the state in order to compress the non-Gaussianity, and finally performing tomography of the first $\kappa t$ modes. Remarkably, our tomography algorithm is experimentally feasible, as it uses only Gaussian unitaries and easily implementable Gaussian measurements, such as homodyne and heterodyne detection~\cite{BUCCO,Aolita_2015}.

In the following theorem, we analyse the performance guarantees of our tomography algorithm.
\begin{thm}[(Informal version)] 
Let $\ket{\psi}$ be an unknown $n$-mode $(t,\kappa)$-doped Gaussian state vector, which satisfies the second-moment constraint ${\Tr[\psi \hat{N}_n^2]}\le n^2 N_{\text{phot}}^2$. Then, $
O\!\left( \left(\frac{n^2  N_{\text{phot}}^2  }{\varepsilon^2}\right)^{\!\kappa t}\right)$ state copies suffices to construct a succint classical description of an estimator 
$\ket{\tilde{\psi}}$ which is $\varepsilon$-close in trace distance to $\ket{\psi}$ with high probability. Such a classical description consists of a triplet $(  \tilde{\textbf{m}}, \tilde{S},\ket{\tilde{\phi}_1} )$, which defines the estimator via the relation 
\bb
    \ket{\tilde{\psi}}\coloneqq \hat{D}_{\tilde{\textbf{m}}}U_{\tilde{S}} \ket{\tilde{\phi}_1}\otimes\ket{0}^{\otimes(n-\kappa t)}\,,
\ee
where $\hat{D}_{\tilde{\textbf{m}}}$ is the displacement operator with $\tilde{\textbf{m}}\in\mathbb{R}^{2n}$, $U_{\tilde{S}}$ is a Gaussian unitary associated with the $2n\times 2n$ symplectic matrix $\tilde{S}$, and $\ket{\tilde{\phi}_1}$ is a $\kappa t$-mode state contained in the $d_{\text{eff}}$-dimensional subspace spanned by all the $n$-mode Fock
states with total photon number less than $O\!\left( \left(\frac{n^2  N_{\text{phot}}^2   }{\varepsilon^2}\right)\right)$, where 
\bb
    d_{\text{eff}}&\coloneqq O\!\left( \left(\frac{n^2  N_{\text{phot}}^2   }{\varepsilon^2}\right)^{\!\kappa t}\right)\,.
\ee
In particular, tomography of $t$-doped Gaussian states is efficient in the regime $\kappa t = O(1)$, as its sample, time, and memory complexity scales polynomially in $n$. 
\end{thm}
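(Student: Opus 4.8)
The plan is to realise the three-step strategy sketched in the main text, grounded in the compression result. By Theorem~\ref{thm_compression} the unknown state admits the decomposition $\ket{\psi}=\hat{D}_{\textbf{m}}U_S\big(\ket{\phi_{\kappa t}}\otimes\ket{0}^{\otimes(n-\kappa t)}\big)$ for some displacement $\hat{D}_{\textbf{m}}$, symplectic unitary $U_S$, and $\kappa t$-mode state vector $\ket{\phi_{\kappa t}}$. The crucial observation is that the Gaussian unitary $G\coloneqq\hat{D}_{\textbf{m}}U_S$ is determined by the first moment and covariance matrix of $\ket{\psi}$, which are exactly the objects we can estimate efficiently. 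Thus the algorithm is: (i) estimate $G$ from the first two moments; (ii) apply $G^{\dagger}$ to the copies of $\ket{\psi}$ so as to concentrate all non-Gaussianity on the first $\kappa t$ modes; and (iii) perform moment-constrained pure-state tomography (Theorem~\ref{correctness_algorithm_ECpure}, optimally Theorem~\ref{thm_optimal_sample_pure}) on those $\kappa t$ modes.

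First I would estimate $\tilde{\textbf{m}}\approx\textbf{m}(\psi)$ and $\tilde{V}\approx V(\psi)$ via the algorithm of Theorem~\ref{correctness_algorithm_cov}; here the hypothesis $\Tr[\psi\hat{N}_n^2]\le n^2N_{\text{phot}}^2$ supplies precisely the finite-variance bound that lets that estimator run with $\mathrm{poly}(n)$ samples, which is why a second-moment constraint, rather than a mere energy constraint, is needed. From $\tilde{V}$ I would build a symplectic matrix $\tilde{S}$ by a Williamson-type symplectic diagonalisation isolating a $2\kappa t$-dimensional block from a $2(n-\kappa t)$-dimensional block whose symplectic spectrum is $\approx 1$ (the vacuum modes), and set $\tilde{G}\coloneqq\hat{D}_{\tilde{\textbf{m}}}U_{\tilde{S}}$. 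Applying $\tilde{G}^{\dagger}$ to $\ket{\psi}$, the product structure together with the closeness bounds of Theorem~\ref{upper_bound_trace_distance_Gaussian} and Theorem~\ref{inequality_dist_gauss} ensures that $\tilde{G}^{\dagger}\ket{\psi}$ is $O(\varepsilon)$-close in trace distance to $\ket{\chi}\otimes\ket{0}^{\otimes(n-\kappa t)}$ for some $\kappa t$-mode state $\ket{\chi}$, provided the moments are learned to accuracy $\varepsilon'=\mathrm{poly}(1/n,1/E)\cdot\varepsilon^2$ — the same quadratic loss $\varepsilon\mapsto\varepsilon^2$ that appears in the proof of Theorem~\ref{correctness_algorithm_Gaussian}.

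It then remains to learn $\ket{\chi}$. Its energy is governed by the squeezing of $U_S$, which the second-moment constraint is exactly strong enough to bound: combining $\Tr[\psi\hat{N}_n]\le nN_{\text{phot}}$ (by moment monotonicity) with the squeezing bound yields an energy bound of order $n^2N_{\text{phot}}^2$ on $\ket{\chi}$, the extra factor over the first moment reflecting the amplification of photon number under de-squeezing. By Lemma~\ref{le:effectivedim} this $\kappa t$-mode state is $\varepsilon$-close to one supported on total photon number $O(n^2N_{\text{phot}}^2/\varepsilon^2)$, of effective dimension $d_{\text{eff}}=O\big((n^2N_{\text{phot}}^2/\varepsilon^2)^{\kappa t}\big)$, and the optimal pure-state algorithm of Theorem~\ref{thm_optimal_sample_pure} learns it with $O(d_{\text{eff}})$ copies. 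A union bound over the three stages, together with the $\mathrm{poly}(n)$ samples of stage (i), gives the total $\mathrm{poly}(n)+O\big((n^2N_{\text{phot}}^2/\varepsilon^2)^{\kappa t}\big)$, and reading off $(\tilde{\textbf{m}},\tilde{S},\ket{\tilde{\phi}_1})$ yields the succinct description of $\ket{\tilde{\psi}}=\hat{D}_{\tilde{\textbf{m}}}U_{\tilde{S}}\big(\ket{\tilde{\phi}_1}\otimes\ket{0}^{\otimes(n-\kappa t)}\big)$, whose efficiency for $\kappa t=O(1)$ is then immediate.

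The \textbf{main obstacle} is the robustness of stage (ii): symplectic diagonalisation is generically discontinuous because symplectic matrices can have unbounded operator norm (arbitrary squeezing), so a small perturbation of $\tilde{V}$ could in principle produce a wildly wrong $\tilde{S}$ and hence a compressed state far from product form. The heart of the argument is therefore to prove that the second-moment constraint caps the squeezing of $S$, rendering the relevant symplectic recovery Lipschitz on the constrained set, and to track jointly how the estimation error $\varepsilon'$, the squeezing bound, and the imperfect separation of the near-vacuum block propagate through the energy-constrained diamond-norm and trace-distance inequalities down to a final $O(\varepsilon)$ guarantee. I expect this stability analysis, rather than any individual sampling bound, to be the technical crux.
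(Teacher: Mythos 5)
Your proposal follows essentially the same route as the paper's proof of Theorem~\ref{thm_corr_compression_state}: estimate $(\tilde{\textbf{m}},\tilde{V})$ under the second-moment constraint, read off $\tilde{S}$ from the Williamson decomposition of $\tilde{V}$, apply $\hat{D}^\dagger_{\tilde{\textbf{m}}}U_{\tilde{S}}^\dagger$, and run moment-constrained pure-state tomography on the first $\kappa t$ modes; your identification of the stability of the symplectic diagonalisation as the crux is exactly right, and the paper resolves it precisely as you anticipate, via $\|S\|_\infty\le\sqrt{\|V\|_\infty}$ (Lemma~\ref{lem:SlowerboundV}), $K(V)\le\|V\|_\infty^2$ (Lemma~\ref{lem:lboundsVm1V}), and the perturbation bound $\|D_1-D_2\|_\infty\le\sqrt{K(V_1)K(V_2)}\,\|V_1-V_2\|_\infty$ (Lemma~\ref{lem:wolf}), with the same quadratic accuracy loss $\varepsilon_{\mathrm{cov}}\sim\varepsilon^2$ and the same $O(n^2E^2)$ energy bound on the compressed state. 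One step is misattributed, though: you invoke Theorems~\ref{upper_bound_trace_distance_Gaussian} and~\ref{inequality_dist_gauss} to conclude that $\tilde{G}^\dagger\ket{\psi}$ is close to $\ket{\chi}\otimes\ket{0}^{\otimes(n-\kappa t)}$, but the former requires both states to be Gaussian and neither $\tilde{G}^\dagger\ket{\psi}$ nor the target product state is Gaussian, so these bounds do not apply as stated. The paper instead measures the POVM $\{\mathbb{1}_{\kappa t}\otimes\ketbra{0}^{\otimes(n-\kappa t)},\,\mathbb{1}-\mathbb{1}_{\kappa t}\otimes\ketbra{0}^{\otimes(n-\kappa t)}\}$, lower-bounds the success probability directly via the operator inequality $\mathbb{1}_{\kappa t}\otimes\ketbra{0}^{\otimes(n-\kappa t)}\ge\mathbb{1}-\sum_{i>\kappa t}a_i^\dagger a_i$ together with Lemma~\ref{lemma_energy_traceV}, and then applies the gentle measurement lemma; this post-selection also guarantees that the state handed to the tomography subroutine is exactly pure and exactly of product form, a detail your sketch glosses over but which is needed to invoke Theorem~\ref{correctness_algorithm_ECpure}. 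The repair is routine and does not change the structure or the final sample complexity.
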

We conclude by showing that any tomography algorithm that aims to learn quantum states which can be written as $G(\ket{\phi_{ t}}\otimes\ket{0}^{\otimes(n-t)})$, where $G$ is a Gaussian unitary and $\ket{\phi_{ t}}$ a $t$-mode quantum state vector, must be inefficient if $t$ scales \emph{slightly} more than a constant in the number of modes.

\subsection{$t$-doped Gaussian states and unitaries}

We now present the formal definitions crucial for our analysis.
\begin{Def}[(Non-Gaussian $k$-local unitary)]
    Let $n\in\N$ and $\kappa\in[2n]$. A \emph{$\kappa$-local non-Gaussian unitary} is an $n$-mode unitary generated by a Hamiltonian which is a polynomial in at most $k$ quadrature operators $\{\hat{R}_{\mu(r)}\}^{\kappa}_{r=1}$, where $\mu(1),\dots,\mu(\kappa) \in [2n]$.
\end{Def}
For instance, unitaries like $\exp\!\left[i \hat{x}_1^4 \right]$, $\exp\!\left[i (\hat{x}_1^4+\hat{p}_1^4) \right]$, $\exp\!\left[i (\hat{x}_1^4+\hat{p}_1^4 + \hat{x}_3^6)\right]$, and $\exp\!\left[i 0.7 \hat{a}^{\dagger}_1\hat{a}^{\dagger}_2\hat{a}_3\hat{a}_4 \right]$ are examples of $1$-, $2$-, $3$-local, and $4$-local non-Gaussian unitaries, respectively.

\begin{Def}[($t$-doped Gaussian unitary)]
\label{def:tdoped}
A unitary $U_t$ is a $(t,\kappa)$-doped Gaussian unitary if it can be prepared by Gaussian unitaries $\{G_i\}^{t}_{i=0}$ and $t$ non-Gaussian $\kappa$-local unitaries $\{W_i\}^{t}_{i=1}$, specifically given by
\begin{align}
    U_t = G_{t}W_t\cdots G_1 W_1 G_0 \,.
\end{align}
\end{Def}

\begin{Def}[($t$-doped Gaussian state)]
\label{def_t_doped}
    A $(t,\kappa)$-doped Gaussian state vector $\ket{\psi}$ is obtained by the action of a $(t,\kappa)$-doped Gaussian unitary $U_t$ on the vacuum, i.e., $\ket{\psi}\coloneqq U_t\ket{0}^{\otimes n}$.
\end{Def}
\subsubsection{Compression of the non-Gaussianity}
In this section, we introduce a finding that sheds light on the structure of $t$-doped Gaussian unitaries: the 'non-Gaussianity' can be compressed into the first modes through a Gaussian operation. This result is an extension to the bosonic setting of a result proved in Ref.~\cite{mele2024efficient}, formulated for $t$-doped \emph{fermionic} unitaries. The proof for bosons can be easily generalised from the one for fermions of Ref.~\cite{mele2024efficient}, by making use of the following parallelism between the theory of bosonic Gaussian and fermionic Gaussian unitaries. Specifically, in the bosonic setting, Gaussian unitaries correspond to symplectic matrices, and Gaussian passive unitaries are associated with symplectic-orthogonal matrices. In the fermionic context, the role of symplectic and orthogonal matrices is exchanged: Gaussian unitaries correspond to orthogonal matrices, and Gaussian passive unitaries are associated with symplectic-orthogonal matrices. Consequently, when handling Gaussian passive transformations, the theory of Gaussian bosons aligns with the theory of Gaussian fermions. In the following proof, we follow the construction presented in Ref.~\cite{mele2024efficient}.
\begin{thm}[(Compression of non-Gaussianity in $t$-doped unitaries)]
\label{th:comprUni}
Let $n\in\N$ be the number of modes. Let $\kappa,t\in N$ such that $\kappa t\le n$. For any $(t,\kappa)$-doped bosonic Gaussian unitary $U_t$ there exist a Gaussian unitary $G$, a passive Gaussian unitary $G_{\text{passive}}$, and a (possibly non-Gaussian) unitary $u_{\kappa t}$ supported exclusively on the first $\kappa t$ modes such that
\bb
    U_t = G(u_{\kappa t}\otimes \mathbb{1}_{n-\kappa t}) G_{\text{passive}},
\ee
where $\mathbb{1}_{n-\kappa t}$ denotes the identity on the last $n-\kappa t$ modes.
\end{thm}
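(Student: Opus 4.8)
The plan is to prove the theorem by induction on $t$, the number of non-Gaussian doping unitaries. The base case $t=0$ is trivial: a $(0,\kappa)$-doped unitary is simply a Gaussian unitary $G_0$, so we may take $G = G_0$, $u_{0} = \mathbb{1}$ (acting on zero modes), and $G_{\text{passive}} = \mathbb{1}$, satisfying the decomposition vacuously. The inductive step is where all the work lies. Assume the claim holds for $t-1$, and write a $(t,\kappa)$-doped unitary as $U_t = G_t W_t U_{t-1}'$, where $U_{t-1}' = G_{t-1}W_{t-1}\cdots G_1 W_1 G_0$ is a $(t-1,\kappa)$-doped unitary and $W_t$ is a $\kappa$-local non-Gaussian unitary acting on some set of at most $\kappa$ quadratures.

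By the inductive hypothesis, $U_{t-1}' = G'(u_{\kappa(t-1)}\otimes \mathbb{1}_{n-\kappa(t-1)})G'_{\text{passive}}$ for a Gaussian unitary $G'$, a passive Gaussian unitary $G'_{\text{passive}}$, and a unitary $u_{\kappa(t-1)}$ supported on the first $\kappa(t-1)$ modes. Substituting, $U_t = G_t W_t G'(u_{\kappa(t-1)}\otimes \mathbb{1})G'_{\text{passive}}$. The key move is to commute the Gaussian unitary $G'$ past the local non-Gaussian $W_t$: since $W_t$ is generated by a polynomial in $\kappa$ quadratures $\{\hat{R}_{\mu(r)}\}$, and conjugation by a Gaussian unitary $G'$ acts linearly on the quadrature vector via its associated symplectic matrix (using $U_S^\dagger \mathbf{\hat{R}}U_S=S\mathbf{\hat{R}}$ and the displacement relation in~\eqref{eq:displ}), the unitary $G'^\dagger W_t G'$ is again a non-Gaussian unitary generated by a polynomial in the transformed quadratures. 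These transformed quadratures are linear combinations of all $2n$ quadratures, hence $G'^\dagger W_t G'$ is generated by a polynomial in at most $\kappa$ \emph{linear combinations} of quadratures. The crucial structural fact (the bosonic analogue of the fermionic construction in Ref.~\cite{mele2024efficient}) is that these $\kappa$ new directions, together with the $\kappa(t-1)$ directions already occupied by $u_{\kappa(t-1)}$, span a subspace of the $2n$-dimensional quadrature phase space of dimension at most $\kappa t$.

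The central step is therefore to produce a single passive (more generally, symplectic) Gaussian rotation $G''$ that maps this combined $\le \kappa t$-dimensional subspace of quadrature directions onto the first $\kappa t$ coordinate quadratures, i.e. the quadratures of the first $\kappa t$ modes. Concretely, one chooses a symplectic basis adapted to the relevant subspace and lets $G''$ implement the corresponding symplectic change of basis. After conjugating by $G''$, both $u_{\kappa(t-1)}$ and the transformed $W_t$ are supported on the first $\kappa t$ modes, so their product defines a $\kappa t$-mode unitary $u_{\kappa t}$. Absorbing the leftover Gaussian factors into a single $G$ on the left and a single passive $G_{\text{passive}}$ on the right (using that compositions of Gaussian unitaries are Gaussian, compositions of passive unitaries are passive, and $U_{S_1}U_{S_2}=U_{S_1 S_2}$) yields the desired form $U_t = G(u_{\kappa t}\otimes \mathbb{1}_{n-\kappa t})G_{\text{passive}}$.

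The main obstacle is the bookkeeping in the inductive step: one must track precisely which quadrature directions are ``active'' and verify that the dimension of their span grows by at most $\kappa$ per layer, so that after $t$ layers it never exceeds $\kappa t$ (hence the hypothesis $\kappa t \le n$ guarantees enough free modes). A subtlety specific to the bosonic case, absent in the fermionic treatment, is that the active directions need not form a symplectic subspace on their own; one must enlarge to a symplectic subspace of dimension still bounded by $\kappa t$ (or argue that the symplectic Gram--Schmidt / Williamson-type normalisation can be carried out within the bound) in order to realise the change of basis by a genuine symplectic unitary. As noted in the excerpt, the parallelism between bosonic and fermionic Gaussian unitaries — symplectic matrices replacing orthogonal matrices, with symplectic-orthogonal (passive) matrices playing the same role in both theories — makes the construction of Ref.~\cite{mele2024efficient} transferable, so the remaining task is to verify that each algebraic manipulation respects the symplectic structure and the energy/locality bookkeeping. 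The statement about states then follows immediately by applying the unitary decomposition to the vacuum $\ket{0}^{\otimes n}$: since $G_{\text{passive}}$ is passive it preserves the vacuum, so $U_t\ket{0}^{\otimes n} = G\,(u_{\kappa t}\ket{0}^{\otimes \kappa t})\otimes \ket{0}^{\otimes(n-\kappa t)}$, giving~\eqref{dec_t_doped} with $\ket{\phi_{\kappa t}} \coloneqq u_{\kappa t}\ket{0}^{\otimes \kappa t}$.
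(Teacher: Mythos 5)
Your two essential ingredients are exactly the paper's: conjugating a $\kappa$-local gate by a Gaussian unitary turns its generating Hamiltonian into a polynomial in $\kappa$ \emph{linear combinations} of quadratures, and an orthogonal symplectic (passive) rotation can map $M$ arbitrary phase-space directions into the first $M$ modes via the isomorphism $\mathrm{O}(2n)\cap\mathrm{Sp}(2n)\cong \mathrm{U}(n)$ (the paper's Lemma on the existence of $O_{\mathrm{aux}}$). The difference is organisational: the paper does it in one shot — it rewrites $U_t=\tilde G_t\prod_{t'}\tilde W_{t'}$ with $\tilde W_{t'}=\tilde G_{t'-1}^\dagger W_{t'}\tilde G_{t'-1}$, collects all $\kappa t$ directions $S_{t'-1}^\intercal \mathbf{e}_{\mu(t',r)}$ at once, and applies the rotation lemma a single time — whereas you compress layer by layer. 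The one-shot version is cleaner because every object it rotates is a polynomial-generated gate, for which "support" is read off directly from the Heisenberg-evolved quadratures.

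Your inductive packaging creates two obligations that the sketch does not discharge. First, the claim that $G''^\dagger\bigl(u_{\kappa(t-1)}\otimes\mathbb{1}\bigr)G''$ is supported on the first $\kappa t$ modes does not follow from the polynomial-Hamiltonian argument, since $u_{\kappa(t-1)}$ is an arbitrary unitary on its modes; you need the separate fact that a passive $G''$ whose associated $n\times n$ unitary maps the first $\kappa(t-1)$ complex coordinates into the first $\kappa t$ sends the subsystem of modes $[\kappa(t-1)]$ into the subsystem $[\kappa t]$. This is true and the counting closes ($\kappa(t-1)+\kappa=\kappa t$ complex dimensions), but it is a different argument and must be stated. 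Second, the parenthetical "(more generally, symplectic)" for $G''$ is not an option: $G''$ must be passive, otherwise $G''^\dagger G'_{\text{passive}}$ is no longer passive, which breaks both the inductive hypothesis at the next step and the vacuum-preservation $G_{\text{passive}}\ket{0}^{\otimes n}=\ket{0}^{\otimes n}$ used to deduce the state decomposition. With these two points supplied, your proof goes through and is equivalent to the paper's.
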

\begin{proof}
By definition, the $t$-doped unitary can be expressed as $U_t=(\prod^t_{t'=1}G_{t'}W_{t'})G_0$. Rearranging it, we express it as
\bb
    U_t=\tilde{G}_t\prod^t_{t'=1}\tilde{W}_{t'},
\ee
where $\tilde{W}_{t'} \coloneqq  \tilde{G}^\dag_{t'-1} W_{t'}\tilde{G}_{t'-1}$ and $\tilde{G}_{t'} \coloneqq  G_{t'}..G_0$. Let $G_{\mathrm{aux}}$ be a passive Gaussian unitary, which we will fix later in order to compress all the non-Gaussianity to the first $\kappa t$ modes. We can rewrite $U_t$ as 
\bb
    U_t=\tilde{G}_tG_{\mathrm{aux}}\prod^t_{t'=1}(G^{\dag}_{\mathrm{aux}}\tilde{W}_{t'}G_{\mathrm{aux}})G^{\dag}_{\mathrm{aux}}\,.
\ee
Moreover, by defining
\bb
     G& \coloneqq  \tilde{G}_tG_{\mathrm{aux}}\,,\\
     u_{\kappa t}& \coloneqq  \prod^t_{t'=1}(G^{\dag}_{\mathrm{aux}}\tilde{W}_{t'}G_{\mathrm{aux}})\,,\\
     G_{\text{passive}}& \coloneqq  G^{\dag}_{\mathrm{aux}}\,,
\ee
we have 
\bb
    U_t = G u_{\kappa t} G_{\text{passive}}\,.
\ee
Observe that $G$ is a Gaussian unitary, as the product of Gaussian unitaries remains Gaussian. Similarly, $G_{\text{passive}}$ is a passive Gaussian unitary, given that the adjoint of a passive Gaussian unitary remains passive and Gaussian. In order to conclude the proof, we have to show that we can choose $G_{\mathrm{aux}}$ such that $u_{\kappa t}$ is supported only on the first $\kappa t$ modes. It suffices to show that the Heisenberg evolution, under the Gaussian unitary $\tilde{G}_{t'-1}G_{\mathrm{aux}}$, of the Hamiltonian generating $W_{t'}$ involves only the first $\kappa t$ quadratures $(\hat{R}_{i})_{i\in[\kappa t]}$. In other words, we need to show that the Heisenberg evolution of each quadrature operator involved in the Hamiltonian generating $W_{t'}$ involves only the first $\kappa t$ quadratures. Let $O_{\mathrm{aux}}$ be the symplectic orthogonal matrix associated with $G_{\mathrm{aux}}$. For each $t'\in [t]$ let $\mu(t',1),\mu(t',2),\ldots,\mu(t',\kappa)$, with $\mu(t',1)\le\mu(t',2)\le\ldots\le\mu(t',\kappa)$, be the quadratures involved in the Hamiltonian generating $W_{t'}$.  For each $t'\in [t]$ and $r\in[\kappa]$ it holds that
\bb
        G^{\dag}_{\mathrm{aux}} \tilde{G}^{\dag}_{t'-1} \hat{R}_{\mu(t',r)} \tilde{G}_{t'-1}G_{\mathrm{aux}}=\sum^{2n}_{m=1} (S_{t'-1} O_{\mathrm{aux}})_{\mu(t',r),m}\hat{R}_m\,,
\ee
where $S_{t'-1}$ is the symplectic matrix associated with $\tilde{G}_{t'-1}$. Consequently, we have to choose $O_{\mathrm{aux}}$ so that
\bb
    (S_{t'-1} O_{\mathrm{aux}})_{\mu(t',r),m}=(O^\intercal_{\mathrm{aux}} S^\intercal_{t'-1} )_{m,\mu(t',r)}=0
    \label{eq:condi}
\ee
for all $t'\in [t]$, $r\in[\kappa]$, and all $m\in\{2\kappa t+1,\ldots,2n\}$. Let $(\mathbf{e}_{i})_{i\in[2n]}$ be the canonical basis vectors of $\mathbb{R}^{2n}$. For simplicity, let us denote the vectors $(S^\intercal_{t'-1}\mathbf{e}_{{\mu(t',r)}})_{t'\in [t], r\in [\kappa]}$ with $(\mathbf{v}_j)_{j\in[\kappa t]}$. It suffices to show that there exists an orthogonal symplectic matrix $O_{\mathrm{aux}}$ such that for all $j\in[\kappa t]$ and all $m\in\{2\kappa t+1,\ldots, 2n\}$ it holds that
\bb
    \mathbf{e}^T_{m} O_{\mathrm{aux}}^\intercal \mathbf{v}_{j}=0\,.
\ee
For any arbitrary vectors $\{\mathbf{v}_j\}_{j\in[\kappa t]}$, there exist such an orthogonal matrix $O_{\mathrm{aux}}$. This is a consequence of the forthcoming Lemma~\ref{le:exist}, pointed out in \cite{mele2024efficient}, which is a consequence of the well-known isomorphism between $2n\times 2n$ symplectic orthogonal real matrices and $n\times n$ unitaries~\cite{BUCCO}.
\end{proof}
\begin{lemma}
\label{le:exist}
    Let $\{\mathbf{e}_{i}\}^{2n}_{i=1}$ be the canonical basis of $\mathbb{R}^{2n}$.
    Let $\mathbf{v}_1, \dots, \mathbf{v}_M \in \mathbb{R}^{2n}$ be a set of real vectors, where $M \leq 2n$. There exists an orthogonal symplectic matrix $O \in \mathrm{O}(2n)\cap \mathrm{Sp}(2n,\mathbb{R})$ such that 
    \begin{align}
        \mathbf{e}^T_{i}O\mathbf{v}_j = 0,
    \end{align} for all $i \in \{2M+1, \dots, 2n\}$ and $j \in [M]$, meaning that all $\{O\mathbf{v}_j\}^{M}_{j=1}$ are exclusively supported on the span of the first $2M$ canonical basis vectors.
\end{lemma}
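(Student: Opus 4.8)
The plan is to reduce this purely real statement to an elementary linear-algebra fact over $\mathbb{C}$ by invoking the isomorphism $\mathrm{O}(2n)\cap\mathrm{Sp}(2n)\cong\mathrm{U}(n)$ already cited in the theorem. Concretely, I would first identify $\mathbb{R}^{2n}$ with $\mathbb{C}^n$ via the map $\varphi$ that pairs the two real coordinates of each mode into one complex coordinate, sending $\mathbf{w}=(w_1,w_2,\dots,w_{2n-1},w_{2n})^\intercal$ to $\varphi(\mathbf{w})\coloneqq(w_1+iw_2,\dots,w_{2n-1}+iw_{2n})^\intercal$. A one-line computation, using that $\Omega_n=\bigoplus_{k}\left(\begin{smallmatrix}0&1\\-1&0\end{smallmatrix}\right)$, gives $\overline{\varphi(\mathbf{w})}^\intercal\varphi(\mathbf{w}')=\mathbf{w}^\intercal\mathbf{w}'+i\,\mathbf{w}^\intercal\Omega_n\mathbf{w}'$, so the real part of the Hermitian inner product is the Euclidean inner product and the imaginary part is the symplectic form. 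Consequently a real matrix $O$ lies in $\mathrm{O}(2n)\cap\mathrm{Sp}(2n)$ if and only if it is orthogonal and commutes with $\Omega_n$, i.e.\ if and only if it is the real representation $O=\varphi^{-1}\circ U\circ\varphi$ of some unitary $U\in\mathrm{U}(n)$.

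Second, I would translate the desired conclusion through $\varphi$. The condition $\mathbf{e}_i^\intercal O\mathbf{v}_j=0$ for every $i\in\{2M+1,\dots,2n\}$ says exactly that the vector $\varphi(O\mathbf{v}_j)=U\varphi(\mathbf{v}_j)$ has vanishing entries in the complex coordinates $M+1,\dots,n$; that is, $U\varphi(\mathbf{v}_j)\in\operatorname{Span}_{\mathbb{C}}\{e_1,\dots,e_M\}$, where $e_1,\dots,e_n$ is the canonical basis of $\mathbb{C}^n$. Thus it suffices to produce a single unitary $U$ mapping all $M$ complex vectors $\mathbf{w}_j\coloneqq\varphi(\mathbf{v}_j)$ into this $M$-dimensional coordinate subspace.

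Third, this is immediate. The complex span $W\coloneqq\operatorname{Span}_{\mathbb{C}}\{\mathbf{w}_1,\dots,\mathbf{w}_M\}$ has dimension $r\le M\le n$. I would take an orthonormal basis $\{f_1,\dots,f_r\}$ of $W$, extend it by Gram--Schmidt to an orthonormal basis $\{f_1,\dots,f_n\}$ of $\mathbb{C}^n$, and define $U$ to be the unitary with $Uf_k=e_k$ for all $k\in[n]$. Then $U(W)=\operatorname{Span}_{\mathbb{C}}\{e_1,\dots,e_r\}\subseteq\operatorname{Span}_{\mathbb{C}}\{e_1,\dots,e_M\}$, so $U\mathbf{w}_j$ has vanishing entries in complex coordinates $M+1,\dots,n$ for every $j$. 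Letting $O=\varphi^{-1}\circ U\circ\varphi$ be the corresponding orthogonal symplectic matrix then yields the claim.

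The argument is essentially bookkeeping, and the only point needing genuine care is the coordinate convention: I must fix the pairing in $\varphi$ to match the block structure of $\Omega_n$ used throughout the paper, so that ``$U\mathbf{w}_j$ supported on the first $M$ complex modes'' corresponds precisely to ``$O\mathbf{v}_j$ supported on the first $2M$ real coordinates.'' Once $\varphi$ is aligned with $\Omega_n=\bigoplus_{k}\left(\begin{smallmatrix}0&1\\-1&0\end{smallmatrix}\right)$, the index translation $M\mapsto 2M$ is automatic and the proof closes.
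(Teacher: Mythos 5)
Your proof is correct and follows exactly the route the paper indicates: the paper does not spell out a proof (it defers to the cited fermionic-Gaussian reference) but explicitly attributes the lemma to the isomorphism $\mathrm{O}(2n)\cap\mathrm{Sp}(2n)\cong\mathrm{U}(n)$, which is precisely the reduction you carry out, and your remaining Gram--Schmidt step over $\mathbb{C}^n$ is sound. The coordinate bookkeeping (pairing $(w_{2k-1},w_{2k})\mapsto w_{2k-1}+iw_{2k}$ so that the Hermitian form has real part the Euclidean form and imaginary part the symplectic form $\Omega_n=\bigoplus_k\left(\begin{smallmatrix}0&1\\-1&0\end{smallmatrix}\right)$) is consistent with the paper's conventions, so the index translation $M\mapsto 2M$ goes through as you claim.
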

The proof of Lemma~\ref{le:exist} can be found in \cite{mele2024efficient}. The decomposition for $t$-doped Gaussian unitaries proved in Theorem~\ref{th:comprUni} implies a similar for $t$-doped states, as established by the following theorem.
\begin{thm}\label{thm_compression_state}
    Let $n\in\N$ be the number of modes. Let $\kappa,t\in \N$ such that $\kappa t\le n$. For any $(t,\kappa)$-doped Gaussian state vector $\ket{\psi}$ there exists a Gaussian unitary $G$ and a (possibly non-Gaussian) $\kappa t$-mode state vector $\ket{\phi}$ such that
    \bb\label{eq_decomposition_pure}
        \ket{\psi}=G \left(\ket{\phi}\otimes\ket{0}^{\otimes(n-\kappa t)}\right)\,.
    \ee
    Moreover, $G$ can be expressed in terms of the first moment and the covariance matrix of $\ket{\psi}$ as follows. The symplectic diagonalisation of the covariance matrix is of the form
    \bb
        V(\ketbra{\psi})=S(D\oplus  I_{2(n-\kappa t)})S^\intercal\,,
    \ee
    with $D$ being a $2\kappa t\times 2\kappa t$ diagonal matrix and $S$ being a symplectic matrix. Then, $\ket{\psi}$ can be expressed as
    \bb\label{eq_dec_pure_lorenzo}
        \ket{\psi}= \hat{D}_{\textbf{m}(\psi)}U_S \ket{\varphi_{\kappa t}}\otimes\ket{0}^{\otimes(n-\kappa t)}\,,
    \ee
    for some $\kappa t$-mode (possibly non-Gaussian) state vector $\ket{\varphi_{\kappa t}}$, where $\hat{D}_{\textbf{m}(\psi)}$ is the displacement operator and $U_S$ is the Gaussian unitary associated with the symplectic $S$.
\end{thm}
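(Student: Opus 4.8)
The plan is to obtain this state-level statement as a direct consequence of the unitary-level compression result in Theorem~\ref{th:comprUni}. Writing the $(t,\kappa)$-doped Gaussian state as $\ket{\psi}=U_t\ket{0}^{\otimes n}$ and substituting $U_t = G(u_{\kappa t}\otimes\mathbb{1}_{n-\kappa t})G_{\text{passive}}$, I would first exploit that $G_{\text{passive}}$ is a passive Gaussian unitary: since passive unitaries satisfy $U_O^\dagger \hat{E}_n U_O=\hat{E}_n$ they commute with $\hat{E}_n$, and because the vacuum is the unique minimal-energy eigenstate of $\hat{E}_n$, we get $G_{\text{passive}}\ket{0}^{\otimes n}=\ket{0}^{\otimes n}$ up to an irrelevant global phase. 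The factor $u_{\kappa t}\otimes\mathbb{1}_{n-\kappa t}$ then acts as the identity on the last $n-\kappa t$ vacuum modes, so setting $\ket{\phi}\coloneqq u_{\kappa t}\ket{0}^{\otimes \kappa t}$ immediately yields $\ket{\psi}=G(\ket{\phi}\otimes\ket{0}^{\otimes(n-\kappa t)})$, which is the decomposition in Eq.~\eqref{eq_decomposition_pure}.

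For the second, explicit part, I would make $G$ concrete in terms of the first two moments of $\ket{\psi}$. Decompose $G=\hat{D}_{\textbf{r}}U_{S'}$ with $\textbf{r}\in\mathbb{R}^{2n}$ and $S'\in\mathrm{Sp}(2n)$. Let $V_\phi$ be the $2\kappa t\times 2\kappa t$ covariance matrix of $\ket{\phi}$ and apply the Williamson decomposition $V_\phi=S_\phi D S_\phi^\intercal$ with $S_\phi\in\mathrm{Sp}(2\kappa t)$ and $D$ diagonal (note that the symplectic eigenvalues of a pure \emph{non}-Gaussian state need not equal one, so $D\neq\mathbb{1}$ in general). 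I then define $\ket{\varphi_{\kappa t}}\coloneqq \hat{D}_{-\textbf{t}}U_{S_\phi}^\dagger\ket{\phi}$, choosing the displacement $\textbf{t}$ so as to cancel the first moment; the resulting $\kappa t$-mode state $\ket{\varphi_{\kappa t}}$ has vanishing first moment and diagonal covariance matrix $D$. Substituting $\ket{\phi}=\hat{D}_{S_\phi\textbf{t}}U_{S_\phi}\ket{\varphi_{\kappa t}}$, using $U_{S_\phi}\otimes\mathbb{1}_{n-\kappa t}=U_{S_\phi\oplus I_{2(n-\kappa t)}}$, and collecting the symplectics via $U_{S'}U_{S_\phi\oplus I}=U_{S'(S_\phi\oplus I)}=:U_S$, I obtain a single symplectic factor $S\coloneqq S'(S_\phi\oplus I_{2(n-\kappa t)})\in\mathrm{Sp}(2n)$.

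The final and most delicate step is the displacement bookkeeping. Pushing the displacement operators through the symplectic unitaries with the covariance relation $U_S\hat{D}_{\textbf{u}}U_S^\dagger=\hat{D}_{S\textbf{u}}$, all residual displacements collect into a single operator $\hat{D}_{\textbf{r}+S'((S_\phi\textbf{t})\oplus \textbf{0})}$ acting last; and because $\ket{\varphi_{\kappa t}}\otimes\ket{0}^{\otimes(n-\kappa t)}$ has zero first moment while first moments transform as $\textbf{m}(U_S\,\cdot\,)=S\,\textbf{m}(\,\cdot\,)$, this total displacement is forced to equal $\textbf{m}(\psi)$, giving $\ket{\psi}=\hat{D}_{\textbf{m}(\psi)}U_S(\ket{\varphi_{\kappa t}}\otimes\ket{0}^{\otimes(n-\kappa t)})$ as in Eq.~\eqref{eq_dec_pure_lorenzo}. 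The claimed symplectic diagonalization $V(\ketbra{\psi})=S(D\oplus I_{2(n-\kappa t)})S^\intercal$ then follows at once, since displacements leave covariance matrices invariant, the covariance matrix of $\ket{\varphi_{\kappa t}}\otimes\ket{0}^{\otimes(n-\kappa t)}$ is block-diagonal and equal to $D\oplus I_{2(n-\kappa t)}$, and it transforms to $S(D\oplus I_{2(n-\kappa t)})S^\intercal$ under $U_S$. The main obstacle I anticipate is precisely organizing the displacement and symplectic factors so that the displacement emerges exactly as $\textbf{m}(\psi)$ while the same $S$ simultaneously furnishes the Williamson diagonalization of $V(\ketbra{\psi})$; everything else reduces to the standard transformation rules for Gaussian unitaries recalled in the preliminaries.
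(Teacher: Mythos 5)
Your first part coincides with the paper's: substitute the unitary-level decomposition of Theorem~\ref{th:comprUni}, use that the passive factor preserves the vacuum, and set $\ket{\phi}\coloneqq u_{\kappa t}\ket{0}^{\otimes\kappa t}$. That is exactly what the paper does, and your justification of $G_{\text{passive}}\ket{0}^{\otimes n}=\ket{0}^{\otimes n}$ via energy minimality is sound.

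For the second claim your route is genuinely different, and the difference matters. You argue \emph{forwards}: starting from $\ket{\psi}=G(\ket{\phi}\otimes\ket{0}^{\otimes(n-\kappa t)})$ you Williamson-decompose the covariance matrix of $\ket{\phi}$, strip off its displacement, and recombine all Gaussian factors into one $\hat{D}_{\textbf{m}(\psi)}U_S$ with $S=S'(S_\phi\oplus I_{2(n-\kappa t)})$. This correctly \emph{exhibits one} symplectic $S$ realising $V(\ketbra{\psi})=S(D\oplus I_{2(n-\kappa t)})S^\intercal$ together with Eq.~\eqref{eq_dec_pure_lorenzo}. The paper argues \emph{backwards}: it takes an \emph{arbitrary} symplectic $S$ from a Williamson decomposition of $V(\ketbra{\psi})$ of the stated form, computes that $U_S^\dagger\hat{D}_{\textbf{m}(\psi)}^\dagger\ket{\psi}$ has zero first moment and covariance $D\oplus I_{2(n-\kappa t)}$, and then invokes Lemma~\ref{lemma_vacuum} (a state with covariance matrix $\mathbb{1}$ is a coherent state, here the vacuum) to conclude that the reduced state on the last $n-\kappa t$ modes is pure and hence the global state factorises. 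The backwards version is strictly stronger: it says the compression works for \emph{any} Williamson symplectic of $V(\ketbra{\psi})$, not just the one inherited from the circuit. This universality is precisely what Lemma~\ref{lemma_t_compr} and the tomography algorithm of Theorem~\ref{thm_corr_compression_state} rely on, since there the algorithm computes its own symplectic diagonalisation of an \emph{estimated} covariance matrix and has no access to your constructed $S$. Your existential version does not immediately upgrade to the universal one (two Williamson symplectics for the same $V$ can differ by a symplectic stabiliser of $D\oplus I_{2(n-\kappa t)}$ that mixes the two blocks when $D$ has unit eigenvalues), so if the theorem is read in the way the paper later uses it, you would still need the purity/saturation argument of Lemma~\ref{lemma_vacuum} or an equivalent. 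As a proof of the literal existence statement, however, your construction is correct, and it has the merit of making $S$ and the displacement bookkeeping completely explicit.
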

\begin{proof}
    By definition, any $(t,\kappa)$-doped Gaussian state vector $\ket{\psi}$ can be written in terms of a $(t,\kappa)$-doped Gaussian unitary $U_t$ as $\ket{\psi}\coloneqq U_{t}\ket{0}^{\otimes n}$. Consequently, Theorem~\ref{th:comprUni} implies that 
    \bb
        \ket{\psi}&=G(u_{\kappa t}\otimes\mathbb{1}_{n-\kappa t})G_{\text{passive}}\ket{0}^{\otimes n}\,\\
        &=G(u_{\kappa t}\otimes\mathbb{1}_{n-\kappa t}) \ket{0}^{\otimes n},
    \ee
    where we have used that $G_{\text{passive}}$ is passive and thus it preserves the vacuum state, i.e., $G_{\text{passive}}\ket{0}^{\otimes n}=\ket{0}^{\otimes n}$. Hence,~\eqref{eq_decomposition_pure} follows by defining $\ket{\phi}\coloneqq u_{\kappa t}\ket{0}^{\otimes \kappa t}$. Now, let us prove~\eqref{eq_dec_pure_lorenzo}. Note that the first moment and the covariance matrix of the state vector $U_S^\dagger \hat{D}_{\textbf{m}(\psi)}^\dagger\ket{\psi}$ are given by
    \bb
        \textbf{m}\!\left(U_S^\dagger \hat{D}_{\textbf{m}(\psi)}^\dagger\ketbra{\psi}\hat{D}_{\textbf{m}(\psi)}U_S\right)&=\textbf{0}_n\,,
        \\V\!\left(U_S^\dagger \hat{D}_{\textbf{m}(\psi)}^\dagger\ketbra{\psi}\hat{D}_{\textbf{m}(\psi)}U_S\right)&= D\otimes  I_{n-\kappa t}\,.
    \ee
    and thus
    \bb
        \textbf{m}\!\left(\Tr_{[\kappa t]}\!\left[U_S^\dagger \hat{D}_{\textbf{m}(\psi)}^\dagger\ketbra{\psi}\hat{D}_{\textbf{m}(\psi)}U_S\right]\right)&=\textbf{0}_{n-\kappa t}\,,
        \\V\!\left(\Tr_{[\kappa t]}\!\left[U_S^\dagger \hat{D}_{\textbf{m}(\psi)}^\dagger\ketbra{\psi}\hat{D}_{\textbf{m}(\psi)}U_S\right]\right)&=  I_{n-\kappa t}\,.
    \ee
    Hence, the forthcoming Lemma~\ref{lemma_vacuum} establishes that
    \bb
        \Tr_{[\kappa t]}\!\left[U_S^\dagger \hat{D}_{\textbf{m}(\psi)}^\dagger\ketbra{\psi}\hat{D}_{\textbf{m}(\psi)}U_S\right]=\ketbra{0}^{\otimes(n-\kappa t)}\,.
    \ee
    Consequently, there exists a $\kappa t$-mode state vector $\ket{\phi}$ such that 
    \bb
        U_S^\dagger \hat{D}_{\textbf{m}(\psi)}^\dagger\ket{\psi}=\ket{\phi}\otimes\ket{0}^{\otimes(n-\kappa t)}\,,
    \ee
    which concludes the proof.
\end{proof}
\begin{lemma}\label{lemma_vacuum}
    Let $\rho$ be an $n$-mode state with covariance matrix equal to the identity and first moment equal to $\textbf{r}\in\mathbb{R}^{2n}$,  
    \bb
        V\!(\rho)&=I_{2n}\,,\\
        \textbf{m}(\rho)&=\textbf{r}\,.
    \ee
    Then, $\rho$ is the coherent state $\rho=\ketbra{\mathbf{r}}$, where $\ket{\mathbf{r}}\coloneqq\hat{D}_{\mathbf{r}}\ket{0}^{\otimes n}$.   
\end{lemma}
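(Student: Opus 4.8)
The plan is to reduce everything to the vacuum by undoing the displacement, and then to exploit the fact that an identity covariance matrix together with a vanishing first moment forces the mean photon number to vanish. First I would set $\sigma \coloneqq \hat{D}_{-\mathbf{r}}\,\rho\,\hat{D}_{-\mathbf{r}}^\dagger$. By the transformation rules in~\eqref{eq:bille}, displacements leave the covariance matrix unchanged and shift the first moment additively, so $V\!(\sigma)=V\!(\rho)=I_{2n}$ and $\mathbf{m}(\sigma)=\mathbf{m}(\rho)-\mathbf{r}=\mathbf{0}$. It therefore suffices to show that a state with identity covariance matrix and zero first moment is the vacuum $\ketbra{0}^{\otimes n}$, since then $\rho=\hat{D}_{\mathbf{r}}\,\sigma\,\hat{D}_{\mathbf{r}}^\dagger=\hat{D}_{\mathbf{r}}\ketbra{0}^{\otimes n}\hat{D}_{\mathbf{r}}^\dagger=\ketbra{\mathbf{r}}$, using $\ket{\mathbf{r}}=\hat{D}_{\mathbf{r}}\ket{0}^{\otimes n}$.

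The key step is to compute the mean photon number of $\sigma$ through the closed-form expression of Lemma~\ref{lemma_energy_traceV}. Substituting $V\!(\sigma)=I_{2n}$ and $\mathbf{m}(\sigma)=\mathbf{0}$ into~\eqref{eq_mean_photon_number} gives
\bb
    \Tr\!\left[\sigma\,\hat{N}_n\right]=\frac{\Tr[I_{2n}-\mathbb{1}]}{4}+\frac{\|\mathbf{0}\|_2^2}{2}=0\,,
\ee
since $V\!(\sigma)$ coincides with the identity and hence the trace term vanishes. I would then invoke the positivity and Fock-diagonality of the photon number operator, $\hat{N}_n=\sum_{\mathbf{k}\in\mathbb{N}^n}\|\mathbf{k}\|_1\ketbra{\mathbf{k}}$. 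Expanding the trace in the Fock basis yields $\sum_{\mathbf{k}}\|\mathbf{k}\|_1\bra{\mathbf{k}}\sigma\ket{\mathbf{k}}=0$, a sum of non-negative terms, so every contribution with $\mathbf{k}\neq\mathbf{0}$ must vanish, i.e.\ $\bra{\mathbf{k}}\sigma\ket{\mathbf{k}}=0$ for all $\mathbf{k}\neq\mathbf{0}$.

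The only point requiring a little care is the passage from the vanishing of the diagonal entries to the vanishing of the whole contribution of the nonzero-photon sector. Because $\sigma\ge0$, one has $\|\sqrt{\sigma}\ket{\mathbf{k}}\|_2^2=\bra{\mathbf{k}}\sigma\ket{\mathbf{k}}=0$, which forces $\sqrt{\sigma}\ket{\mathbf{k}}=0$ and hence $\sigma\ket{\mathbf{k}}=0$ for every $\mathbf{k}\neq\mathbf{0}$; this kills not only the diagonal but also all off-diagonal Fock matrix elements involving a nonzero $\mathbf{k}$. Thus $\sigma$ is supported on the one-dimensional subspace spanned by $\ket{0}^{\otimes n}$, and combined with $\Tr\sigma=1$ this yields $\sigma=\ketbra{0}^{\otimes n}$. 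Undoing the displacement then gives $\rho=\ketbra{\mathbf{r}}$, completing the argument. I do not expect any genuine obstacle here: the lemma is essentially the extremality (uniqueness) of the vacuum as the zero-energy state, repackaged via the energy formula of Lemma~\ref{lemma_energy_traceV}.
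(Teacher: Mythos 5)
Your proof is correct and follows essentially the same route as the paper: both reduce the problem to showing that the displaced state has zero mean photon number via the energy formula of Lemma~\ref{lemma_energy_traceV}. The only (cosmetic) difference is in the last step, where the paper concludes via the operator inequality $\ketbra{0}^{\otimes n}\ge \mathbb{1}-\sum_i a_i^\dagger a_i$ that the overlap with the coherent state equals one, whereas you deduce from positivity and the Fock-diagonal form of $\hat{N}_n$ that the state is supported on the vacuum; both are valid.
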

\begin{proof}
    It suffices to show that $\Tr\!\left[ \ketbra{\mathbf{r}}  \rho  \right]=1$, which is valid since it holds that 
    \bb
        1&\ge \Tr\!\left[ \ketbra{\mathbf{r}}  \rho  \right]\\
        &= \Tr\!\left[ \ketbra{0}^{\otimes n}  \hat{D}_{\mathbf{r}}^\dagger \rho \hat{D}_{\mathbf{r}} \right]
        \\&\ge \Tr\!\left[\left(\mathbb{1}-  \sum_{i=1}^n a_i^\dagger a_i  \right)\hat{D}_{\mathbf{r}}^\dagger \rho \hat{D}_{\mathbf{r}}\right]
        \\&\eqt{(i)} 1- \frac{\Tr\!\left[V\!\left(\hat{D}_{\mathbf{r}}^\dagger \rho \hat{D}_{\mathbf{r}}\right)-I_{2n}\right]}{4}-\frac{\|\textbf{m}(\hat{D}_{\mathbf{r}}^\dagger \rho \hat{D}_{\mathbf{r}})\|_2^2}{2}
        \\&= 1- \frac{\Tr\!\left[V\!\left( \rho \right)-I_{2n}\right]}{4}-\frac{\|\textbf{m}(\rho )-\textbf{r}\|_2^2}{2}
        \\&=1\,,
    \ee
    where in (i) we have used Lemma~\ref{lemma_energy_traceV}.
\end{proof}

\subsection{$t$-compressible Gaussian states}
In this section, we present the notion of $t$-compressible states.
\begin{Def}[($t$-compressible states)]
\label{def:tcompr}
Let $t,n\in\N$ with $t\le n$. An $n$-mode state vector $\ket{\psi}$ is said to be $t$-compressible when there exists a Gaussian unitary $G$ and a $t$-mode state vector $\ket{\phi}$ such that
\bb
    \ket{\psi}=G\ket{\phi}\otimes \ket{0}^{\otimes(n-t)}\,.
\ee
\end{Def}
The above Gaussian unitary $G$ can be chosen as follows.
\begin{lemma}[(Choice of states)]\label{lemma_t_compr}
    Let $\ket{\psi}$ be an $n$-mode $t$-compressible state vector. Let 
    \bb
        V(\ketbra{\psi})=S \left( D_t \oplus I_{2(n-t)}\right) S^\intercal\,.
    \ee 
    be the symplectic diagonalisation of the covariance matrix of $V(\ketbra{\psi})$. Then, $\ket{\psi}$ can be written as
    \bb
        \ket{\psi}=\hat{D}_{\textbf{m}(\ketbra{\psi})}U_S\ket{\phi}\otimes \ket{0}^{\otimes (n-t)}\,,
    \ee
    where $\ket{\phi}$ is a $t$-mode state vector with first moment and covariance matrix given by
    \bb
        V(\ketbra{\phi})&=D_t\,,\\
        \textbf{m}(\ketbra{\phi})&=0\,.
    \ee
\end{lemma}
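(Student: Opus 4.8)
The plan is to mimic the proof of Theorem~\ref{thm_compression_state}: conjugate $\ket{\psi}$ by the inverse of the Gaussian unitary $\hat{D}_{\textbf{m}(\ketbra{\psi})}U_S$ and show that the resulting pure state factorises as $\ket{\phi}\otimes\ket{0}^{\otimes(n-t)}$, with $\ket{\phi}$ having exactly the advertised first moment and covariance matrix. Concretely, I would set $\ket{\chi}\coloneqq U_S^\dagger \hat{D}_{\textbf{m}(\ketbra{\psi})}^\dagger\ket{\psi}$ and aim to prove $\ketbra{\chi}=\ketbra{\phi}\otimes\ketbra{0}^{\otimes(n-t)}$; the claimed identity then follows by applying $\hat{D}_{\textbf{m}(\ketbra{\psi})}U_S$ to both sides and absorbing an irrelevant global phase into $\ket{\phi}$.

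First I would compute the first moment and covariance matrix of $\ketbra{\chi}$ using the transformation rules for displacement and symplectic unitaries (namely $\textbf{m}(\hat{D}_{\textbf r}\rho\hat{D}_{\textbf r}^\dagger)=\textbf{m}(\rho)+\textbf{r}$ and $V(\hat{D}_{\textbf r}\rho\hat{D}_{\textbf r}^\dagger)=V(\rho)$ from \eqref{eq:bille}, together with $\textbf{m}(U_S\rho U_S^\dagger)=S\textbf{m}(\rho)$ and $V(U_S\rho U_S^\dagger)=SV(\rho)S^\intercal$). Applying $\hat{D}_{\textbf{m}(\ketbra{\psi})}^\dagger$ shifts the first moment to $\textbf{0}$ and leaves the covariance matrix unchanged, and then conjugating by $U_S^\dagger=U_{S^{-1}}$ gives $\textbf{m}(\ketbra{\chi})=S^{-1}\textbf{0}=\textbf{0}$ and $V(\ketbra{\chi})=S^{-1}V(\ketbra{\psi})(S^{-1})^\intercal=S^{-1}S(D_t\oplus I_{2(n-t)})S^\intercal(S^\intercal)^{-1}=D_t\oplus I_{2(n-t)}$, where I use the given symplectic diagonalisation of $V(\ketbra{\psi})$.

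Next I would isolate the last $n-t$ modes. Since $V(\ketbra{\chi})=D_t\oplus I_{2(n-t)}$ is block diagonal and $\textbf{m}(\ketbra{\chi})=\textbf{0}$, the reduced state $\Tr_{[t]}[\ketbra{\chi}]$ on the last $n-t$ modes has covariance matrix $I_{2(n-t)}$ and vanishing first moment, so Lemma~\ref{lemma_vacuum} (with $\textbf{r}=\textbf{0}$) forces it to be the vacuum $\ketbra{0}^{\otimes(n-t)}$, which is pure. As $\ketbra{\chi}$ is a pure global state with a pure marginal, it must factorise, $\ketbra{\chi}=\ketbra{\phi}\otimes\ketbra{0}^{\otimes(n-t)}$ for some pure $t$-mode $\ket{\phi}$ (the Schmidt-decomposition argument, exactly as used in Theorem~\ref{thm_compression_state}). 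Finally, reading off the moments from the product structure — for a product state the first moment concatenates and the covariance matrix block-decomposes, since the cross terms $V_{k\ell}$ between modes in different factors vanish — yields $\textbf{m}(\ketbra{\phi})=\textbf{0}$ and $V(\ketbra{\phi})=D_t$, as claimed.

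The only genuinely non-mechanical steps are the appeal to Lemma~\ref{lemma_vacuum} to certify the vacuum marginal from the covariance matrix alone (which in particular secures purity of the marginal without separately invoking $t$-compressibility) and the passage from a pure state with a pure marginal to a product state. Everything else is direct bookkeeping of how the first moment and covariance matrix transform under Gaussian unitaries and decompose across a tensor product, so I expect the write-up to be short, with the covariance-matrix computation of $\ketbra{\chi}$ being the main line to display carefully.
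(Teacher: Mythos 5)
Your proposal is correct and follows essentially the same route as the paper, whose proof of this lemma simply defers to the proof of Theorem~\ref{thm_compression_state}: conjugate by $U_S^\dagger \hat{D}_{\textbf{m}(\ketbra{\psi})}^\dagger$, compute that the resulting state has vanishing first moment and covariance matrix $D_t\oplus I_{2(n-t)}$, invoke Lemma~\ref{lemma_vacuum} to identify the marginal on the last $n-t$ modes as the (pure) vacuum, and conclude factorisation from purity of the marginal. Your additional bookkeeping step of reading off $\textbf{m}(\ketbra{\phi})=\textbf{0}$ and $V(\ketbra{\phi})=D_t$ from the product structure is also correct and is the only part the paper leaves implicit.
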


\begin{proof}
    The proof is the same as the one of Theorem~\ref{thm_compression_state}.
\end{proof}
As a consequence of Theorem~\ref{thm_compression_state}, we have the following.
\begin{cor}[(Compressibility)]
\label{tdopedcompr}
    A $(t,\kappa)$-doped Gaussian state is $\kappa t$-compressible.
\end{cor}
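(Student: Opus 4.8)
The plan is to obtain this corollary as an essentially immediate consequence of Theorem~\ref{thm_compression_state}, which already contains all the substantive work. By Definition~\ref{def_t_doped}, a $(t,\kappa)$-doped Gaussian state vector $\ket{\psi}$ is of the form $\ket{\psi}=U_t\ket{0}^{\otimes n}$ for some $(t,\kappa)$-doped Gaussian unitary $U_t$. First I would simply recall that, under the hypothesis $\kappa t\le n$ (which is implicit here, since otherwise $\kappa t$-compressibility is not even well-defined as it requires $\kappa t\le n$ in Definition~\ref{def:tcompr}), Theorem~\ref{thm_compression_state} guarantees the existence of a Gaussian unitary $G$ and a $\kappa t$-mode (possibly non-Gaussian) state vector $\ket{\phi}$ such that
\bb
    \ket{\psi}=G\left(\ket{\phi}\otimes\ket{0}^{\otimes(n-\kappa t)}\right)\,.
\ee

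The remaining step is purely a matter of matching this with the definition of compressibility. Setting the compression parameter in Definition~\ref{def:tcompr} to $t'\coloneqq \kappa t$, the displayed decomposition is \emph{verbatim} the defining property of a $\kappa t$-compressible state: a Gaussian unitary applied to a $\kappa t$-mode state tensored with the vacuum on the remaining $n-\kappa t$ modes. Hence $\ket{\psi}$ is $\kappa t$-compressible, which is the claim. If desired, I would also append the explicit form of $G$ coming from the symplectic diagonalisation of $V(\ketbra{\psi})$, namely $G=\hat{D}_{\textbf{m}(\psi)}U_S$ with $S$ the symplectic matrix appearing in~\eqref{eq_dec_pure_lorenzo}, so that the compressing unitary is directly expressed through the first moment and covariance matrix of $\ket{\psi}$.

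There is no genuine obstacle in this corollary: the entire content has been discharged in Theorem~\ref{thm_compression_state} (which in turn relies on Theorem~\ref{th:comprUni} and the existence statement Lemma~\ref{le:exist}). The only point requiring a moment of care is the bookkeeping of the mode count — verifying that the non-Gaussian part genuinely occupies $\kappa t$ modes and the vacuum factor occupies $n-\kappa t$ modes — but this is already fixed by the theorem's statement. Accordingly, the proof reduces to the single line ``apply Theorem~\ref{thm_compression_state} and compare with Definition~\ref{def:tcompr}'', exactly as indicated in the excerpt.
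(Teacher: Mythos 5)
Your proposal is correct and matches the paper's own treatment exactly: the corollary is stated there as an immediate consequence of Theorem~\ref{thm_compression_state}, whose decomposition $\ket{\psi}=G(\ket{\phi}\otimes\ket{0}^{\otimes(n-\kappa t)})$ is verbatim the definition of $\kappa t$-compressibility. Your remark about the implicit hypothesis $\kappa t\le n$ is a reasonable point of care but does not change the argument.
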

Note that the covariance matrix of a $t$-compressible state has at least $n-t$ symplectic eigenvalues equal to $1$. This motivates the following definition, in analogy to the stabiliser dimension~\cite{grewal2023efficient} and fermionic Gaussian dimension~\cite{mele2024efficient}.
\begin{Def}[(Gaussian dimension of a state)]
\label{def:tcompr2}
The Gaussian dimension of a state vector $\ket{\psi}$ is the number of symplectic eigenvalues of the covariance matrix $V(\ketbra{\psi})$ which are equal to one. 
\end{Def}
We now observe the following fact.
\begin{lemma}[(Characterization of $t$-compressible pure states)]
    An $n$-mode pure state is $t$-compressible if and only if its Gaussian dimension is at least $n-t$. 
\end{lemma}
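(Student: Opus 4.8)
The statement is an equivalence, so the plan is to prove the two implications separately, both by exploiting how first moments and covariance matrices transform under Gaussian unitaries, namely $\textbf{m}(U_S\rho U_S^\dagger)=S\textbf{m}(\rho)$ and $V(U_S\rho U_S^\dagger)=SV(\rho)S^\intercal$, together with the displacement rules, and the fact that symplectic eigenvalues are invariant under symplectic congruences $V\mapsto SVS^\intercal$ with $S\in\mathrm{Sp}(2n)$ (since composing symplectic matrices preserves the Williamson normal form).

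For the forward implication, I would suppose $\ket{\psi}$ is $t$-compressible, so that $\ket{\psi}=G(\ket{\phi}\otimes\ket{0}^{\otimes(n-t)})$ with $G=\hat{D}_{\mathbf{r}}U_S$ a Gaussian unitary, and compute $V(\ketbra{\psi})=S(V(\ketbra{\phi})\oplus I_{2(n-t)})S^\intercal$, using that the covariance matrix of $\ket{0}^{\otimes(n-t)}$ is $I_{2(n-t)}$ and that displacements leave covariance matrices unchanged. Because the partition into the first $t$ and the last $n-t$ modes respects the symplectic form $\Omega_n=\Omega_t\oplus\Omega_{n-t}$, combining the Williamson decomposition of $V(\ketbra{\phi})$ with the trivial one of $I_{2(n-t)}$ yields a global Williamson decomposition whose symplectic spectrum is the union of the two. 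Hence $V(\ketbra{\phi})\oplus I_{2(n-t)}$ has at least $n-t$ symplectic eigenvalues equal to one (from the vacuum block), and by invariance under the congruence by $S$ so does $V(\ketbra{\psi})$; thus the Gaussian dimension of $\ket{\psi}$ is at least $n-t$.

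For the converse, I would assume the Gaussian dimension of $\ket{\psi}$ is at least $n-t$. By the Williamson decomposition I can reorder the symplectic eigenvalues (permuting modes by a passive, hence symplectic, operation) so that the last $n-t$ ones sit in the trailing block, writing $V(\ketbra{\psi})=S(D_t\oplus I_{2(n-t)})S^\intercal$ for some $S\in\mathrm{Sp}(2n)$ and a $2t\times2t$ diagonal $D_t$. I then set $\ket{\chi}:=U_S^\dagger\hat{D}_{\textbf{m}(\psi)}^\dagger\ket{\psi}$, which is pure with $\textbf{m}(\ketbra{\chi})=0$ and $V(\ketbra{\chi})=D_t\oplus I_{2(n-t)}$. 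Tracing out the first $t$ modes yields a reduced state on the last $n-t$ modes with vanishing first moment and covariance matrix $I_{2(n-t)}$; Lemma~\ref{lemma_vacuum} then forces this reduced state to be the vacuum $\ketbra{0}^{\otimes(n-t)}$. Since $\ket{\chi}$ is pure and its marginal on the last $n-t$ modes is the pure vacuum, $\ket{\chi}$ must factorize as $\ket{\phi}\otimes\ket{0}^{\otimes(n-t)}$ for some $t$-mode vector $\ket{\phi}$; undoing the Gaussian transformation gives $\ket{\psi}=\hat{D}_{\textbf{m}(\psi)}U_S(\ket{\phi}\otimes\ket{0}^{\otimes(n-t)})$, so $\ket{\psi}$ is $t$-compressible. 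This is exactly the argument used in the proof of Theorem~\ref{thm_compression_state}.

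The routine transformation rules and the Williamson bookkeeping are immediate, so the only genuinely delicate point is the converse's final factorization step: I must justify that a pure global state whose marginal on a subsystem is pure (here the vacuum, via Lemma~\ref{lemma_vacuum}) is necessarily a tensor product across that cut. I expect this to be the main obstacle to state cleanly, although it is a standard consequence of the Schmidt decomposition. A secondary point requiring care is ensuring the block-diagonal structure $D_t\oplus I_{2(n-t)}$ is compatible with the chosen mode partition, i.e.\ that reordering the symplectic eigenvalues can be realized while keeping $S$ symplectic, which follows because mode permutations are orthogonal symplectic operations.
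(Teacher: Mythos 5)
Your proposal is correct and follows essentially the same route as the paper: the converse is proved via the Williamson decomposition, applying the inverse Gaussian unitary, invoking Lemma~\ref{lemma_vacuum} to identify the marginal on the last $n-t$ modes as a coherent/vacuum state, and using purity of that marginal to force the tensor-product factorization (the paper leaves this last step and the forward direction implicit, calling the latter ``trivial''). The only cosmetic difference is that you also undo the displacement before tracing out, so your marginal is exactly the vacuum rather than a coherent state to be absorbed afterwards.
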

\begin{proof}
    First, let us assume that the Gaussian dimension of the pure state vector $\ket{\psi}$ is $n-t$. Hence, the symplectic diagonalisation of its covariance matrix is of the form
    \bb
        V(\ketbra{\psi})=S \left( D_t \oplus I_{2(n-t)}\right) S^\intercal\,.
    \ee
    Consequently, it holds that
    \bb
        V\!\left(\Tr_{[t]}[U_S^\dagger \ketbra{\psi} U_S]\right) = I_{2(n-t)}\,.
    \ee
    Hence, Lemma~\ref{lemma_vacuum} establishes that $\Tr_{[t]}[U_S^\dagger \ketbra{\psi} U_S]=\ketbra{\mathbf{r}}$ for some $\mathbf{r}\in\mathbb{R}^{2(n-t)}$. Consequently, there exists a $t$-mode state vector $\ketbra{\phi}$ such that
    \bb
        U_S^\dagger \ket{\psi}   = \ket{\phi}\otimes \ket{\mathbf{r}}\,.
    \ee
    This implies that
    \bb
        \ket{\psi}= U_S (\mathbb{1}_{t}\otimes \hat{D}_{\mathbf{r}}) \ket{\phi}\otimes \ket{0}^{\otimes (n-t)}\,,
    \ee
    and in particular that $\ket{\psi}$ is $t$-compressible. On the other hand, the fact that the Gaussian dimension of a $t$-compressible state is at least $n-t$ is trivial. 
\end{proof}

\subsection{Tomography algorithm for $t$-compressible Gaussian states}
In this subsection, we present a tomography algorithm for learning $t$-compressible $n$-mode bosonic states. Thus, because of Theorem~\ref{tdopedcompr}, this algorithm can also be applied to doped quantum states. Our algorithm is outlined in Table~\ref{Table_t_compressible}, and its correctness is proved in the forthcoming Theorem~\ref{thm_corr_compression_state}.
The algorithm turns out to be efficient for $t = O(1)$ (i.e., it runs in polynomial time in the number of modes), while super-poly (yet sub-exponential) for $t = O(\log n)$. Conversely, in Theorem~\ref{thm_lower_bound_t_compressible} we show that, in the regime $t = O(\log n)$ \emph{any} tomography algorithm designed to learn $t$-compressible states must be inefficient. 

Our tomography algorithm is motivated by Theorem~\ref{thm_compression_state} and consists of two main parts, summarised as follows: 
In the first part, it estimates the covariance matrix of the unknown state, expresses it in its Williamson decomposition~\eqref{eq:will}, and uses this decomposition to find a Gaussian unitary $\tilde{G}^{\dag}$ that can \emph{approximately} transform the $n$-mode state into a tensor product between an arbitrary state on the first $t$ modes and the vacuum state in the last $n-t$ modes.
In the second part, the algorithm applies $\tilde{G}^{\dag}$ to the unknown state. It then measures the occupation number of the last $n-t$ modes, and if the vacuum state is obtained, it performs full state tomography on only the first $t$ modes, resulting in a state vector $\ket{\tilde{\phi}}$.
The output of the algorithm will then be a classical representation of the state $\tilde{G}(\ket{\tilde{\phi}} \otimes \ket{0}^{\otimes (n-t)})$. Refer to Table~\ref{Table_t_compressible} for more precise details of the algorithm.

Remarkably, our tomography algorithm is experimentally feasible, as it requires only Gaussian evolutions, avalanche photodiodes (devices capable of reliably distinguishing between zero and one or more photons, commonly referred to as 'on/off detectors'~\cite{BUCCO}), and easily implementable Gaussian measurements, such as homodyne and heterodyne detection~\cite{BUCCO}. Specifically, the estimation of the first moment and covariance matrix in Line 1 relies solely on homodyne measurements~\cite{BUCCO, Aolita_2015}. Line 4 involves only Gaussian unitaries, while Line 5 uses a photon counting that post-selects on the vacuum of the last $n-t$ modes when no photons are detected. This latter measurement can be practically achieved using photodetectors, like avalanche photodiodes, capable of discriminating between zero and one or more photons. Lastly, in Line 8, quantum state tomography of the first $t$ modes is performed. To accomplish this in an experimentally feasible manner, we may utilise the continuous-variable classical shadow algorithm proposed in \cite{becker_classical_2023}, which relies solely on randomised Gaussian unitaries and homodyne and heterodyne measurements. (Alternatively, in order to improve the performance, we may also consider utilising our tomography algorithm for moment-constrained pure states detailed in Section~\ref{sec_upper_bound_sample}. However, despite its better performance, the latter algorithm is not feasible for experimental implementation with our current technology.)

Before delving into the correctness proof of the algorithm, let us establish the notation. 
Let $N_{\mathrm{cov}}(n,\varepsilon,\delta,E_2)$ be the sample complexity of the algorithm outlined in Table~\ref{Table_covariance} and Theorem~\ref{correctness_algorithm_cov} to estimate the first moment and the covariance matrix of an unknown $n$-mode state $\rho$, subject to the second-moment energy constraint ${\Tr[\rho\hat{E}_n^2]}\le n^2 E_2^2$, with precision $\varepsilon$ and failure probability $\delta$. 
Specifically, thanks to Theorem~\ref{correctness_algorithm_cov}, a number
 \bb\label{N_cov}
    N_{\mathrm{cov}}\!\left(n,\varepsilon,\delta,E_2\right)=O\!\left( \log\!\left(\frac{n^2}{\delta}\right)\frac{n^3E_2^2}{\varepsilon^2}   \right)\,
\ee
of copies of $\rho$ are sufficient to build a covariance matrix $\tilde{V}$ and a vector $\tilde{m}$ such that
\bb
    \Pr\left(  \|\tilde{V}'-V\!(\rho)\|_\infty\le \varepsilon\quad \text{ and }\quad \|\tilde{\textbf{m}}-\textbf{m}(\rho)\|_2\le \varepsilon \right)\ge 1-\delta\,.
\ee

Furthermore, we denote as $N_{\mathrm{tom,CV}}(n,\varepsilon,\delta,E_1)$ the sample complexity of the full state tomography algorithm outlined in Table~\ref{Table_tomography_ec} and Theorem~\ref{correctness_algorithm_ECpure} to learn an unknown $n$-mode pure state $\psi$, subject to the energy constraint $\Tr[\psi\hat{E}_n]\le n E_1$, with accuracy $\varepsilon$ in trace distance and failure probability $\delta$.
Specifically, Theorem~\ref{correctness_algorithm_ECpure} implies that
\begin{align}
\label{N_cv_tom}
N_{\mathrm{tom,CV}}(n,\varepsilon,\delta,E_1) = \left\lceil 2^{21}\frac{\deff(n,\varepsilon,E_1)}{\varepsilon^2} \log\left(\frac{4}{\delta}\right)\right\rceil= O\left(\frac{E_1}{\varepsilon^{2}}\right)^{n}
\end{align}
copies of $\psi$ are sufficient to generate a classical representation (i.e., $\deff(n,\varepsilon,E_1)$-sized vector) of a pure state $\tilde{\psi}$ such that
\bb
\Pr\left[\frac{1}{2} \|\psi-\tilde{\psi}\|_1 \le \varepsilon \right] \ge 1-\delta,
\ee
where $\deff(n,\varepsilon,E_1) \le \left(\frac{eE_1}{\varepsilon^{2}}+2e\right)^n$.

\begin{table}[t]
  \caption{Tomography algorithm for bosonic $t$-compressible pure states}
  \begin{mdframed}[linewidth=2pt, roundcorner=10pt, backgroundcolor=white!10, innerbottommargin=10pt, innertopmargin=10pt]
    \textbf{Input:} Accuracy $\varepsilon$, failure probability $\delta$, second moment upper bound $E$, $N$ copies of the unknown $n$-mode $t$-compressible state vector $\ket{\psi}$ satisfying the 2-moment constraint ${\Tr\!\left[\hat{E}_n^2 \ketbra{\psi}\right]}\le n^2 E^2$, where
    \bb
        N&\coloneqq\ N_1+ \left\lceil 2N_2 + 24\log\!\left(\frac{3}{\delta}\right)\right\rceil\\
         &=O\!\left(\frac{n^9E^6}{\varepsilon^4}\right)+O\!\left(\left(\frac{n^2E^2}{\varepsilon^2}\right)^{\! \!t}\right)\,,
    \ee
    where $N_1\coloneqq N_{\mathrm{cov}}\!\left(n,\frac{\varepsilon^2}{2(n+1)(1+4nE)^2},\frac{\delta}{3}, E\right)$ is reported in~\eqref{N_cov} and $N_2\coloneqq N_{\mathrm{tom,CV}}\!\left(t,\frac{\varepsilon}{2},\frac{\delta}{3},80n^2E^2\right)$ is reported in~\eqref{N_cv_tom}. \\
    \textbf{Output:} A classical description of a pure state vector $\ket{\hat{\psi}}$ such that
    \bb
        \Pr\left(  \frac{1}{2}\left\|\ketbra{\hat{\psi}}-\ketbra{\psi}\right\|_1\le \varepsilon \right)\ge 1-\delta\,.
    \ee
    Such a classical description consists of the triplet $(  \tilde{\textbf{m}}, \tilde{S},\ket{\tilde{\phi}_1} )$ which defines $\ket{\hat{\psi}}$ via the relation
    \bb
        \ket{\hat{\psi}}\coloneqq \hat{D}_{\tilde{\textbf{m}}}U_{\tilde{S}} \ket{\tilde{\phi}_1}\otimes\ket{0}^{\otimes(n-t)} \,,
    \ee 
    where $\tilde{\textbf{m}}\in\mathbb{R}^{2n}$, $\tilde{S}\in\mathrm{Sp}(\mathbb{R}^{2n})$, and $\ket{\tilde{\phi}_1}$ is a $t$-mode pure state contained in a $\left\lceil\left(\frac{e(80n^2E^2-\frac{1}{2})}{\varepsilon^2}\right)^{\!t}\right\rceil$-dimensional subspace.
    \begin{algorithmic}[1]
    \State Query $N_1$ copies of $\ket{\psi}$ and apply the algorithm in Table~\ref{Table_covariance} to construct a vector $\tilde{\textbf{m}}$ and a covariance matrix $\tilde{V}$ which are estimates of the first moment $\textbf{m}(\ketbra{\psi})$ and of the covariance matrix $V(\ketbra{\psi})$, respectively.\label{steptd:1}
    \State Compute the symplectic diagonalisation $\tilde{V}=\tilde{S}\tilde{D}\tilde{S}^\intercal$, where $\tilde{S}\in\mathrm{Sp}(2n)$ and $\tilde{D}\coloneqq\diag(\tilde{d}_1,\tilde{d}_1,\tilde{d}_2,\tilde{d}_2,\ldots, \tilde{d}_n,\tilde{d}_n)$, with $\tilde{d}_1\ge \tilde{d}_2\ge \ldots\ge \tilde{d}_n\ge 1$. \label{steptd:2}
    \For{$k \leftarrow 1$ \textbf{to} $\left\lceil 2N_2 + 24\log\!\left(\frac{3}{\delta}\right)\right\rceil$}
    \State Query one copy of $\ket{\psi}$ and apply $\hat{D}^\dagger_{\tilde{\textbf{m}}}U_{\tilde{S}}^\dagger$, obtaining the state vector $\ket{\tilde{\psi}_t}\coloneqq \hat{D}^\dagger_{\tilde{\textbf{m}}}U_{\tilde{S}}^\dagger\ket{\psi}$.\label{steptd:4}
    \State Measure $\ket{\tilde{\psi}_t}$ with respect the POVM $\{M_0\coloneqq \mathbb{1}_{t}\otimes\ketbra{0}^{\otimes (n-t)}, M_1\coloneqq \mathbb{1}-M_0 \}$, and discard if the outcome corresponds to the POVM element $M_1$. The post-measurement state is thus of the form $\ket{\tilde{\phi}}\otimes \ket{0}^{\otimes (n-t)}$, where $\ket{\tilde{\phi}}$ is the pure state proportional to $\bra{0}^{\otimes(n-t)}\ket{\tilde{\psi}_t}$.\label{steptd:5}
    \State Do a step of the algorithm in Table~\ref{Table_tomography_ec} to perform pure-state tomography of the $t$-mode state vector $\ket{\tilde{\phi}}$, which has mean energy upper bounded by $80n^2E^2$. \label{steptd:6}
    \EndFor
    \State The tomography algorithm of Line~6 returns a classical description of a $t$-mode pure state vector $\ket{\tilde{\phi}_1}$, which is supported on a $\left\lceil\left(\frac{e(80n^2E^2-\frac{1}{2})}{\varepsilon^2}\right)^{\!t}\right\rceil$-dimensional subspace.\label{steptd:8}
    \State\Return the triplet $\left(  \tilde{\textbf{m}}, \tilde{S},\ket{\tilde{\phi}_1} \right)$.\label{steptd:9}
    \end{algorithmic}
  \end{mdframed}
  \label{Table_t_compressible}
\end{table}

\begin{thm}[(Number of copies required)]\label{thm_corr_compression_state}
Let $n, t\in\N$ with $t\le n$, and $E\geq 0$. Let $\ket{\psi}$ be an $n$-mode $t$-compressible state satisfying the second-moment constraint ${\Tr[\ketbra{\psi} \hat{E}_n^2]}\le n^2 E^2$. Let $\varepsilon,\delta\in(0,1)$, and $N$ such that
    \bb
        N&\coloneqq\ N_1+ \left\lceil 2N_2 + 24\log\!\left(\frac{3}{\delta}\right)\right\rceil=O\!\left(\frac{n^9E^6}{\varepsilon^4}\right)+O\!\left(\left(\frac{nE^2}{\varepsilon^2}\right)^{\! \!t}\right)\,,
    \ee
where 
\begin{equation}
N_1\coloneqq N_{\mathrm{cov}}\!\left(n,\frac{\varepsilon^2}{2(n+1)(1+4nE)^2},\frac{\delta}{3}, E\right)
\end{equation}
is reported in~\eqref{N_cov} and $N_2\coloneqq N_{\mathrm{tom,CV}}\!\left(t,\frac{\varepsilon}{2},\frac{\delta}{3},80n^2E^2\right)$ is reported in~\eqref{N_cv_tom}.
Then, $N$ copies of $\ket{\psi}$ are sufficient to build a classical description of a $t$-compressible pure state vector $\ket{\hat{\psi}}$ such that
    \bb
        \Pr\left(  \frac{1}{2}\left\|\ketbra{\hat{\psi}}-\ketbra{\psi}\right\|_1\le \varepsilon \right)\ge 1-\delta\,.
    \ee
Such a classical description consists of the triplet $\left(  \tilde{\textbf{m}}, \tilde{S},\ket{\tilde{\phi}_1} \right)$, which defines $\ket{\hat{\psi}}$ as
\bb
    \ket{\hat{\psi}}\coloneqq \hat{D}_{\tilde{\textbf{m}}}U_{\tilde{S}} \ket{\tilde{\phi}_1}\otimes\ket{0}^{\otimes(n-t)} \,,
\ee 
where $\tilde{\textbf{m}}\in\mathbb{R}^{2n}$, $\tilde{S}\in\mathrm{Sp}(\mathbb{R}^{2n})$, and $\ket{\tilde{\phi}_1}$ is a $t$-mode pure state contained in a $\left\lceil\left(\frac{e(80n^2E^2-\frac{1}{2})}{\varepsilon^2}\right)^{\!t}\right\rceil$-dimensional subspace.
\end{thm}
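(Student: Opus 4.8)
The plan is to verify the correctness of the algorithm in Table~\ref{Table_t_compressible} by showing that the \emph{estimated} Gaussian unitary $\hat{D}_{\tilde{\textbf{m}}}U_{\tilde{S}}$ approximately compresses $\ket{\psi}$ into a tensor product of a $t$-mode state and the $(n-t)$-mode vacuum, after which the small component is learned by full-state tomography. The backbone is the unitary invariance of the trace norm: writing $\ket{\hat{\psi}}=\hat{D}_{\tilde{\textbf{m}}}U_{\tilde{S}}(\ket{\tilde{\phi}_1}\otimes\ket{0}^{\otimes(n-t)})$ and $\ket{\tilde{\psi}_t}\coloneqq \hat{D}^\dagger_{\tilde{\textbf{m}}}U_{\tilde{S}}^\dagger\ket{\psi}$, one has $\frac12\|\ketbra{\hat\psi}-\ketbra{\psi}\|_1=\frac12\|\ketbra{\tilde\phi_1}\otimes\ketbra{0}-\ketbra{\tilde\psi_t}\|_1$, and the triangle inequality splits this into a \emph{compression error} $\frac12\|\ketbra{\tilde\phi}\otimes\ketbra{0}-\ketbra{\tilde\psi_t}\|_1$ and a \emph{tomography error} $\frac12\|\ketbra{\tilde\phi_1}-\ketbra{\tilde\phi}\|_1$. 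Since $\ket\psi$ is $t$-compressible, its true covariance matrix has $n-t$ symplectic eigenvalues equal to $1$ (Theorem~\ref{thm_compression_state}); the whole argument reduces to showing that the estimate $(\tilde{\textbf{m}},\tilde V)$ inherits this structure up to a controlled error.

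First I would establish the energy and squeezing bounds, which fix the precision $\varepsilon'$ fed to the covariance estimator. From the second-moment constraint and concavity of $\sqrt{\cdot}$ one gets $\Tr[\psi\hat E_n]\le nE$, hence $\Tr V(\psi)\le 4nE$ by Lemma~\ref{lemma_energy_traceV}. Writing $V(\psi)=S(D_t\oplus I_{2(n-t)})S^\intercal$ and using $D_t\oplus I\ge I$ gives $\|S\|_2^2\le\Tr[S^\intercal S]\le\Tr V(\psi)\le 4nE$, and by Lemma~\ref{lemma_S_inv} also $\|S^{-1}\|_\infty\le\|S\|_2\le 2\sqrt{nE}$; the identical reasoning applied to $\tilde V$ (whose trace is $O(nE)$) controls the squeezing of the estimated compressor, $\|\tilde S^{-1}\|_\infty^2=O(nE)$. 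I then invoke Theorem~\ref{correctness_algorithm_cov} with precision $\varepsilon'\coloneqq \frac{\varepsilon^2}{2(n+1)(1+4nE)^2}$ and failure probability $\delta/3$ (this is the $N_1$ term of~\eqref{N_cov}) to obtain, with probability $\ge 1-\delta/3$, estimates satisfying $\|\tilde V-V(\psi)\|_\infty\le\varepsilon'$ and $\|\tilde{\textbf m}-\textbf m(\psi)\|_2\le\varepsilon'$.

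Next comes the core step, the approximate compression. The state $\ket{\tilde\psi_t}$ has first moment $\tilde S^{-1}(\textbf m(\psi)-\tilde{\textbf m})$ of norm $\le\|\tilde S^{-1}\|_\infty\varepsilon'$, and covariance matrix $\tilde S^{-1}V(\psi)(\tilde S^{-1})^\intercal=\tilde D+\tilde S^{-1}(V(\psi)-\tilde V)(\tilde S^{-1})^\intercal$, with $\tilde D=\diag(\tilde d_1,\tilde d_1,\ldots,\tilde d_n,\tilde d_n)$. Its lower-right $(n-t)$-block therefore differs from $I$ by at most $\max_{i>t}|\tilde d_i-1|$ plus $\|\tilde S^{-1}\|_\infty^2\varepsilon'=O(nE\varepsilon')$. \textbf{Here lies the main obstacle}: bounding $\max_{i>t}|\tilde d_i-1|$ requires a perturbation bound for symplectic eigenvalues, ensuring that the symplectic spectrum of $\tilde V$ is close to that of $V(\psi)$ (whose last $n-t$ symplectic eigenvalues are exactly $1$), so that $\tilde d_i\le 1+O(\varepsilon')$ for $i>t$; I would supply this via a Weyl-type symplectic-eigenvalue perturbation inequality. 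Granting it, the mean photon number of $\ket{\tilde\psi_t}$ in the last $n-t$ modes, computed through Lemma~\ref{lemma_energy_traceV}, is $O\big((n+1)(1+4nE)^2\varepsilon'\big)\le\varepsilon^2/4$; since $M_1$ is dominated by the photon-number operator on the last $n-t$ modes, this upper-bounds $\eta'\coloneqq\Tr[M_1\ketbra{\tilde\psi_t}]$. As $\ket{\tilde\phi}\otimes\ket{0}^{\otimes(n-t)}$ is exactly the normalised projection $M_0\ket{\tilde\psi_t}$, the pure-state overlap formula gives the compression error $\frac12\|\ketbra{\tilde\phi}\otimes\ketbra{0}-\ketbra{\tilde\psi_t}\|_1=\sqrt{\eta'}\le\varepsilon/2$.

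Finally I would assemble the sampling and tomography parts. The success probability of the POVM is $\Tr[M_0\ketbra{\tilde\psi_t}]=1-\eta'\ge 3/4$, so Lemma~\ref{le:enhance-success} guarantees that the $\lceil 2N_2+24\log(3/\delta)\rceil$ trials yield at least $N_2$ copies of the \emph{same} pure state $\ket{\tilde\phi}$ with probability $\ge 1-\delta/3$. Bounding the energy of $\ket{\tilde\phi}$ by $80n^2E^2$ — which follows because $M_0$ commutes with $\hat E_n$, giving $\Tr[\hat E_t\ketbra{\tilde\phi}]\le(1-\eta')^{-1}\Tr[\hat E_n\ketbra{\tilde\psi_t}]$ together with $\Tr[\hat E_n\ketbra{\tilde\psi_t}]\le\frac14\|\tilde S^{-1}\|_\infty^2\Tr V(\psi)+O(1)=O(n^2E^2)$ — lets me apply Theorem~\ref{correctness_algorithm_ECpure} with parameters $(t,\varepsilon/2,\delta/3,80n^2E^2)$ (the $N_2$ term of~\eqref{N_cv_tom}) to produce $\ket{\tilde\phi_1}$ with $\frac12\|\ketbra{\tilde\phi_1}-\ketbra{\tilde\phi}\|_1\le\varepsilon/2$ with probability $\ge 1-\delta/3$. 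Combining the compression and tomography errors through the triangle inequality yields $\frac12\|\ketbra{\hat\psi}-\ketbra{\psi}\|_1\le\varepsilon$, and a union bound over the three failure events (each $\le\delta/3$) gives overall success probability $\ge 1-\delta$; the stated total copy count $N=N_1+\lceil 2N_2+24\log(3/\delta)\rceil$ and the dimension of the subspace containing $\ket{\tilde\phi_1}$ then follow directly from the sample-complexity expressions~\eqref{N_cov} and~\eqref{N_cv_tom}.
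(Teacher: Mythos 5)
Your proposal is correct and follows essentially the same route as the paper's proof: estimate $(\textbf{m},V)$, symplectically diagonalise, undo the estimated Gaussian unitary, post-select on the vacuum of the last $n-t$ modes, and run finite-dimensional pure-state tomography on the remaining $t$ modes, with the same error splitting, energy bound $80n^2E^2$, and union bound over three failure events. The one ingredient you defer --- the perturbation bound on symplectic eigenvalues --- is supplied in the paper by Lemma~\ref{lem:wolf} (from Ref.~\cite{idel2017}); be aware that the naive Weyl-type bound $\|\tilde{D}-D\|_\infty\le\|\tilde{V}-V\|_\infty$ is \emph{false} for symplectic spectra, and the correct inequality carries the prefactor $\sqrt{K(V)K(\tilde{V})}=O(n^2E^2)$, which is precisely why $\varepsilon_{\mathrm{cov}}$ must be taken of order $\varepsilon^2/\bigl(n(1+4nE)^2\bigr)$ --- your final photon-number estimate $O\bigl((n+1)(1+4nE)^2\varepsilon'\bigr)$ already accounts for this factor, so the argument closes.
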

Before proving Theorem~\ref{thm_corr_compression_state} let us state some useful lemmas.
\begin{lemma}[(Bounds to symplectic matrices)]\label{lem:SlowerboundV}
Let $V\!(\rho)$ be a covariance matrix associated with an $n$-mode state $\rho$ and let $V\!(\rho)=SDS^{\intercal}$ its symplectic diagonalisation. It holds that 
\bb
\|S\|_{\infty}\le \sqrt{\|V(\rho)\|_{\infty}}
\ee
In particular, it holds that
\bb
\|V(\rho)\|_{\infty}\le 4\Tr[\rho \hat{E}_n]
\ee
and consequently $\|S\|_{\infty}\le \sqrt{4\Tr[\rho \hat{E}_n]}$.
\end{lemma}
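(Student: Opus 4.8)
The plan is to reduce the statement to an elementary comparison in the positive semi-definite order, by combining the Williamson decomposition with the fact that every symplectic eigenvalue of a covariance matrix is at least one. This turns the bound on the symplectic matrix $S$ into a bound on a positive semi-definite matrix dominated by $V(\rho)$.

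First I would invoke the Williamson decomposition $V(\rho)=SDS^\intercal$ as in~\eqref{eq:will}, with $D=\bigoplus_{j=1}^n d_j\mathbb{1}_2$ and $d_1,\dots,d_n\ge 1$. The uncertainty relation $V(\rho)+i\Omega_n\ge 0$ forces each symplectic eigenvalue to satisfy $d_j\ge 1$, so that $D\ge \mathbb{1}_{2n}$. Since congruence by $S$ preserves the operator order, $S(D-\mathbb{1}_{2n})S^\intercal\ge 0$, which rearranges to $V(\rho)\ge SS^\intercal\ge 0$. The largest eigenvalue is monotone on the cone of positive semi-definite matrices, hence $\|V(\rho)\|_\infty\ge \|SS^\intercal\|_\infty$. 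Using that the operator norm equals the largest singular value, so that $\|SS^\intercal\|_\infty=\|S\|_\infty^2$ (the matrices $S$ and $S^\intercal$ having the same singular values), and taking square roots yields $\|S\|_\infty\le \sqrt{\|V(\rho)\|_\infty}$.

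For the second bound I would use that $V(\rho)\ge 0$ implies $\|V(\rho)\|_\infty\le \Tr V(\rho)$, since the trace is the sum of the nonnegative eigenvalues. Lemma~\ref{lemma_energy_traceV} then gives $\Tr V(\rho)=4\Tr[\rho\hat{E}_n]-2\|\mathbf{m}(\rho)\|_2^2\le 4\Tr[\rho\hat{E}_n]$, so $\|V(\rho)\|_\infty\le 4\Tr[\rho\hat{E}_n]$; chaining this with the first part gives $\|S\|_\infty\le \sqrt{4\Tr[\rho\hat{E}_n]}$. No step presents a genuine obstacle: the only points requiring care are justifying $D\ge\mathbb{1}_{2n}$ from the uncertainty relation and the monotonicity of $\lambda_{\max}$ under the positive semi-definite order, both of which are standard facts.
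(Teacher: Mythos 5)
Your proof is correct and follows essentially the same route as the paper's: from $D\ge\mathbb{1}_{2n}$ one gets $V(\rho)\ge SS^\intercal$, hence $\|S\|_\infty^2=\|SS^\intercal\|_\infty\le\|V(\rho)\|_\infty\le\Tr V(\rho)$, and Lemma~\ref{lemma_energy_traceV} finishes the argument. The only difference is that you spell out the justifications (uncertainty relation, monotonicity of $\lambda_{\max}$) that the paper leaves implicit.
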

\begin{proof}
    Since $D\ge \mathbb{1}$, we have that $V(\rho)\ge SS^{\intercal}$. Consequently, it holds that 
    \bb
        \|S\|_\infty^2&=\|SS^{\intercal}\|_\infty
        \\&\le \|V\!(\rho)\|_\infty
        \\&\le \Tr V\!(\rho) 
        \\&= 4\Tr[\rho \hat{E}_n]-2\|\textbf{m}(\rho)\|_2^2
        \\&\le 4\Tr[\rho \hat{E}_n]\,,
    \ee
    where we have exploited Lemma~\ref{lemma_energy_traceV}.
    This implies $\|S\|_{\infty}\le \sqrt{4\Tr[\rho \hat{E}_n]}$, where $\hat{E}_n\coloneqq \frac{1}{2}\hat{\textbf{R}}^\intercal\hat{\textbf{R}}$ is the energy operator. 
\end{proof}

\begin{lemma}[(Perturbation on symplectic diagonalisation~\cite{idel2017})]\label{lem:wolf}
    Let $V_1,V_2\in\mathbb{R}^{2n\times 2n}$ be two covariance matrices with symplectic diagonalisations $V_1=S_{1}D_1S_{1}^{\intercal}$ and $V_2=S_{2}D_2S_{2}^{\intercal}$, where the elements on the diagonal of $D_1$ and $D_2$ are arranged in descending order. Then 
\bb
    \|D_1-D_2\|_\infty\le \sqrt{K\!(V_1)K\!(V_2)}\|V_1-V_2\|_\infty\,,
\ee
where $K\!(V)$ is the condition number of the covariance matrix $V$, defined as $K\!(V)\coloneqq \|V\|_\infty \|V^{-1}\|_\infty$.
\end{lemma}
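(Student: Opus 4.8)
The final statement is Lemma~\ref{lem:wolf}, a perturbation bound on symplectic eigenvalues attributed to \cite{idel2017}. Since this is quoted as a known result, my plan is to reconstruct its proof via the connection between symplectic eigenvalues and ordinary eigenvalues of a symmetrised matrix, combined with standard Weyl-type eigenvalue perturbation bounds.

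The plan is as follows. First, I would recall that the symplectic eigenvalues $d_1\ge\cdots\ge d_n$ of a positive-definite covariance matrix $V$ coincide with the positive eigenvalues of the matrix $i\Omega_n V$ (equivalently, the moduli of the eigenvalues of $\Omega_n V$), or alternatively with the square roots of the eigenvalues of $V^{1/2}\Omega_n^\intercal V \Omega_n V^{1/2}$ reorganised appropriately. A cleaner route uses the fact that the symplectic eigenvalues of $V$ equal the ordinary eigenvalues of $|V^{1/2}\Omega_n V^{1/2}|$, i.e. the absolute values of the eigenvalues of the antisymmetric matrix $V^{1/2}\Omega_n V^{1/2}$, each appearing with multiplicity matching the pairing in $D$. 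The key analytic tool will then be a Weyl/Lidskii-type inequality: for Hermitian (or here, the relevant symmetrised) operators $A,B$, one has $\|\,\text{eig}^\downarrow(A)-\text{eig}^\downarrow(B)\,\|_\infty \le \|A-B\|_\infty$, where $\text{eig}^\downarrow$ denotes eigenvalues in descending order.

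Next I would estimate $\|A-B\|_\infty$ where $A = |V_1^{1/2}\Omega_n V_1^{1/2}|$-type objects and $B$ its analogue for $V_2$. The main technical step is controlling the difference after the square-root/absolute-value map, which is where the condition numbers enter. Concretely, writing $V_i = S_i D_i S_i^\intercal$, the bound $\|S_i\|_\infty \le \sqrt{\|V_i\|_\infty}$ (Lemma~\ref{lem:SlowerboundV}) together with $\|S_i^{-1}\|_\infty \le \sqrt{\|V_i^{-1}\|_\infty}$ (via Lemma~\ref{lemma_S_inv}, using that $S_i^{-1}$ is symplectic) gives operator-norm control of the congruence transformations, and the products of these norms assemble into $\sqrt{K(V_1)K(V_2)}$ where $K(V)=\|V\|_\infty\|V^{-1}\|_\infty$. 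The factor $\sqrt{K(V_1)K(V_2)}$ arises precisely because passing between $V$ and its symplectically-diagonal form, and then perturbing, costs a factor of $\|S_1^{-1}\|_\infty\|S_2^{-1}\|_\infty$ or similar; I would track these congruences carefully and apply submultiplicativity of the operator norm.

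The hardest part, I expect, is the non-commutativity obstruction: the map $V\mapsto$ (its symplectic eigenvalues) is not simply an eigenvalue map of a fixed linear image of $V$, because $V^{1/2}$ itself depends nonlinearly on $V$, so one cannot directly apply Weyl's inequality to $\Omega_n V_1$ versus $\Omega_n V_2$ and read off symplectic eigenvalues. The resolution is to use the variational (min-max) characterisation of symplectic eigenvalues, or to invoke the perturbation theory for the generalised eigenvalue problem $\det(V-\lambda\,(i\Omega_n))=0$, and then bound how much the generalised eigenvalues move under a perturbation $V_1\to V_2$; this naturally produces condition-number factors from the change of basis that diagonalises the pencil. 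Since the statement is cited verbatim from \cite{idel2017}, I would present the reduction to the generalised eigenvalue pencil and the Weyl-type bound as the two load-bearing ingredients, and refer to \cite{idel2017} for the detailed estimate assembling the $\sqrt{K(V_1)K(V_2)}$ prefactor, rather than grinding through the full congruence-tracking calculation here.
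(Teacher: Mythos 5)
The paper offers no proof of this lemma: it is imported verbatim from \cite{idel2017}, so there is no internal argument to compare against. Your reduction to (i) identifying the symplectic eigenvalues of $V$ with the positive eigenvalues of $i\Omega_n V$ (equivalently, the eigenvalues of the Hermitian matrix $iV^{1/2}\Omega_n V^{1/2}$) and (ii) a Weyl/Bhatia-type spectral-variation bound for matrices similar to Hermitian ones is indeed the standard route behind the cited result, and deferring the final assembly to \cite{idel2017} is consistent with what the paper itself does. You also correctly flag the genuine obstruction, namely that the symplectic spectrum is not the ordinary spectrum of a fixed linear image of $V$, so Weyl's inequality cannot be applied naively to $\Omega_n V_1$ versus $\Omega_n V_2$.

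One concrete slip: the inequality $\|S_i^{-1}\|_\infty\le\sqrt{\|V_i^{-1}\|_\infty}$ that you invoke is false in general and does not follow from Lemma~\ref{lemma_S_inv}. From $V=SDS^\intercal$ with $D\ge\mathbb{1}$ one gets $V\ge SS^\intercal$, hence $V^{-1}\le (SS^\intercal)^{-1}=(S^{-1})^\intercal S^{-1}$ and therefore $\|S^{-1}\|_\infty^2\ge\|V^{-1}\|_\infty$ --- the opposite direction (e.g.\ $S=\mathbb{1}$, $V=2\mathbb{1}$ gives $\|S^{-1}\|_\infty=1>1/\sqrt{2}$). What Lemmas~\ref{lem:SlowerboundV} and~\ref{lemma_S_inv} actually give is $\|S^{-1}\|_\infty=\|S\|_\infty\le\sqrt{\|V\|_\infty}$, which produces only powers of $\|V_i\|_\infty$ and cannot by itself assemble $\sqrt{K(V_1)K(V_2)}=\sqrt{\|V_1\|_\infty\|V_1^{-1}\|_\infty\|V_2\|_\infty\|V_2^{-1}\|_\infty}$. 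The $\|V^{-1}\|_\infty$ contributions come not from the Williamson factors $S_i$ but from the similarity $V^{\pm1/2}$ that Hermitises $i\Omega_n V$, whose condition number is $\sqrt{K(V)}$. Since both you and the paper ultimately rest on \cite{idel2017} for the estimate, this does not sink your answer, but the congruence-tracking step as you describe it would not close.
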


\begin{lemma}[Inverse of a covariance matrix]\label{lem:lboundsVm1V}
Let $V$ be a covariance matrix. The inverse of the covariance matrix $V^{-1}$ satisfies
\bb
    \|V^{-1}\|\le \|V\|\,
\ee
where $\|\cdot\|$ is an orthogonal invariant matrix norm. As a consequence, the condition number of $V$, which is defined as $K(V)\coloneqq \|V\|_\infty\|V^{-1}\|_\infty$, can be upper bounded as 
\bb
    K(V)\le \|V\|^2_{\infty}\,.
\ee
\end{lemma}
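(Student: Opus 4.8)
The plan is to reduce the claim to a positive-semidefinite operator inequality between $V^{-1}$ and an orthogonal conjugate of $V$, and then invoke the monotonicity of orthogonally invariant norms. First I would write $V$ in its Williamson decomposition \eqref{eq:will}, $V = S D S^\intercal$ with $S \in \mathrm{Sp}(2n)$ and $D = \bigoplus_{j=1}^n d_j I_2$, where the symplectic eigenvalues satisfy $d_j \ge 1$ (equivalently, the uncertainty relation $V + i\Omega_n \ge 0$). The key structural input is the symplectic identity $S^{-1} = \Omega_n S^\intercal \Omega_n^\intercal$, which follows from $S\Omega_n S^\intercal = \Omega_n$ together with $\Omega_n^{-1} = \Omega_n^\intercal = -\Omega_n$; transposing it yields $\Omega_n S \Omega_n^\intercal = (S^{-1})^\intercal$.

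Next I would compute the orthogonal conjugate $\Omega_n V \Omega_n^\intercal = \Omega_n S D S^\intercal \Omega_n^\intercal = (\Omega_n S) D (\Omega_n S)^\intercal$. Substituting $\Omega_n S = (S^{-1})^\intercal \Omega_n$ and using that $D$ commutes with $\Omega_n$ (each block $d_j I_2$ commutes with the $2\times 2$ symplectic block, so $\Omega_n D \Omega_n^\intercal = D$), I obtain $\Omega_n V \Omega_n^\intercal = (S^{-1})^\intercal D S^{-1}$. On the other hand $V^{-1} = (S^{-1})^\intercal D^{-1} S^{-1}$. Since $d_j \ge 1$ forces $D - D^{-1} = \bigoplus_j (d_j - d_j^{-1}) I_2 \ge 0$, and congruence by $S^{-1}$ preserves positive semidefiniteness, I arrive at the operator inequality $V^{-1} \le \Omega_n V \Omega_n^\intercal$. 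This is precisely the general-covariance-matrix analogue of the pure-state identity $V(\psi)^{-1} = \Omega_n V(\psi) \Omega_n^\intercal$ of Lemma~\ref{lemma_inverse_cov_pure}, which it recovers when all $d_j = 1$.

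Finally I would pass from the operator ordering to the norm inequality. Both $V^{-1}$ and $\Omega_n V \Omega_n^\intercal$ are positive definite, so $0 \le V^{-1} \le \Omega_n V \Omega_n^\intercal$ gives $\lambda_i^\downarrow(V^{-1}) \le \lambda_i^\downarrow(\Omega_n V \Omega_n^\intercal)$ for every $i$ by Weyl monotonicity. As the singular values of a positive matrix are its eigenvalues and every orthogonally invariant norm is a monotone symmetric gauge function of the singular values, this yields $\|V^{-1}\| \le \|\Omega_n V \Omega_n^\intercal\|$; orthogonal invariance then gives $\|\Omega_n V \Omega_n^\intercal\| = \|V\|$, hence $\|V^{-1}\| \le \|V\|$. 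The condition-number bound follows by specialising to the operator norm: $K(V) = \|V\|_\infty \|V^{-1}\|_\infty \le \|V\|_\infty^2$.

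I expect the routine symplectic algebra to be the easy part; the step requiring genuine care is the passage from the semidefinite ordering $V^{-1} \le \Omega_n V \Omega_n^\intercal$ to the inequality for an \emph{arbitrary} orthogonally invariant norm, which rests on the identification of such norms with monotone symmetric gauge functions and on eigenvalue monotonicity for the positive-semidefinite order. The one spot demanding attention in the algebra is the clean intermediate identity $\Omega_n V \Omega_n^\intercal = (S^{-1})^\intercal D S^{-1}$ — in particular the commutation $\Omega_n D \Omega_n^\intercal = D$ and the correct transpose/sign handling of the symplectic identity.
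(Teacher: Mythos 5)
Your proposal is correct and follows essentially the same route as the paper: both use the Williamson decomposition $V=SDS^\intercal$ with $D\ge\mathbb{1}$, the symplectic identity relating $S^{-1}$ to $\Omega_n S^\intercal\Omega_n^\intercal$, and the commutation of $D$ with $\Omega_n$ to arrive at the operator inequality $V^{-1}\le\Omega_n V\Omega_n^\intercal$, then conclude by monotonicity of orthogonally invariant norms under the positive-semidefinite order together with orthogonal invariance. The only difference is cosmetic (you conjugate $V$ while the paper conjugates $V^{-1}$, and you spell out the Weyl/symmetric-gauge-function argument that the paper delegates to a citation of Bhatia).
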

\begin{proof}
    Let $V=SDS^\intercal$ be the symplectic diagonalisation of $V$ with $S$ symplectic and $D$ diagonal of the form $D=\diag(d_1,d_1,d_2,d_2,\dots,d_n,d_n)$. In particular, we recall that $S\Omega_n S^{\intercal}=\Omega_n$ and that $D$ commutes with $\Omega_n$. Then it holds that 
    \bb
        V^{-1}&= (S^{-1})^\intercal D^{-1}S^{-1}
        \\&=\Omega_n \Omega_n(S^{-1})^\intercal D^{-1}S^{-1} \Omega_n \Omega_n
        \\&= \Omega_n S \Omega_n D^{-1} \Omega_n S^\intercal\Omega_n
        \\&= - \Omega_n S  D^{-1}  S^\intercal\Omega_n
        \\&=  \Omega_n S  D^{-1}  (\Omega_n S)^\intercal\,.
    \ee
    Since $V$ is a covariance matrix, then $D\ge \mathbb{1}$ and thus $D^{-1}\le D$. Consequently, the matrix inequality 
    \bb
        V^{-1}&\le \Omega_n S  D  (\Omega_n S)^\intercal
        \\&=  \Omega_n V \Omega_n^\intercal\,
    \ee
    holds.    Since $\Omega_n$ is orthogonal, then for any orthogonal invariant norm $\|\cdot\|$ it holds that $ \|V^{-1}\|\le \|V\| $\cite{BHATIA-MATRIX}, that concludes the proof.
\end{proof}

\begin{lemma}[(Upper bound to mean energy)]\label{lemma:energyboundsympl}
    Let $\rho$ be an $n$-mode state. Let $S\in \mathrm{Sp}(2n)$ be a symplectic matrix. Let $U_S$ be the Gaussian symplectic unitary associated with $S$. Then, the mean energy of $U_S \rho U_S^\dagger$ can be upper bounded as
    \bb
        E\!\left(U_S \rho U_S^\dagger\right)\le \|S\|^2_\infty E(\rho)\,,
    \ee
    where $E(\rho)\coloneqq \Tr\!\left[ \frac{\hat{\textbf{R}}^\intercal\hat{\textbf{R}}}{2}\,\rho\right]$. 
\end{lemma}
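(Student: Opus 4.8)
The plan is to reduce the statement to an operator inequality for the conjugated energy operator and then pair it against the state. Writing $E(\rho)=\Tr[\hat{E}_n\rho]$ with $\hat{E}_n=\tfrac12\hat{\textbf{R}}^\intercal\hat{\textbf{R}}$ as in~\eqref{energy_operator}, the first step is to use cyclicity of the trace to move the conjugation onto the observable:
\bb
    E\!\left(U_S\rho U_S^\dagger\right)=\Tr\!\left[\hat{E}_n\,U_S\rho U_S^\dagger\right]=\Tr\!\left[\left(U_S^\dagger\hat{E}_n U_S\right)\rho\right]\,.
\ee
Then I would compute $U_S^\dagger\hat{E}_n U_S$ explicitly using the defining Heisenberg action~\eqref{def_U_s_quadrature}, namely $U_S^\dagger\hat{\textbf{R}}U_S=S\hat{\textbf{R}}$. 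Substituting this into the quadratic form gives
\bb
    U_S^\dagger\hat{E}_n U_S=\tfrac12\left(U_S^\dagger\hat{\textbf{R}}U_S\right)^\intercal\!\left(U_S^\dagger\hat{\textbf{R}}U_S\right)=\tfrac12\,\hat{\textbf{R}}^\intercal S^\intercal S\,\hat{\textbf{R}}\,.
\ee

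The heart of the argument is to promote the numerical matrix inequality $S^\intercal S\le\|S\|_\infty^2\,\mathbb{1}_{2n}$ (valid since $S^\intercal S\ge0$ has largest eigenvalue $\|S\|_\infty^2$) to an operator inequality between quadratic forms in the quadratures. The key lemma I would prove is that, for real symmetric $2n\times2n$ matrices, $A\le B$ implies $\hat{\textbf{R}}^\intercal A\hat{\textbf{R}}\le\hat{\textbf{R}}^\intercal B\hat{\textbf{R}}$ as operators. This follows by writing $C\coloneqq B-A\ge0$ in a spectral (sum-of-rank-one) decomposition $C=\sum_k c_k w_k w_k^\intercal$ with $c_k\ge0$ and $w_k\in\mathbb{R}^{2n}$, and observing that each term contributes a manifest square: $\hat{\textbf{R}}^\intercal w_k w_k^\intercal\hat{\textbf{R}}=\big(w_k^\intercal\hat{\textbf{R}}\big)^2\ge0$, since $w_k^\intercal\hat{\textbf{R}}=\sum_i (w_k)_i\hat{R}_i$ is Hermitian. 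Applying this with $A=S^\intercal S$ and $B=\|S\|_\infty^2\mathbb{1}_{2n}$ yields
\bb
    U_S^\dagger\hat{E}_n U_S=\tfrac12\,\hat{\textbf{R}}^\intercal S^\intercal S\,\hat{\textbf{R}}\le \|S\|_\infty^2\,\tfrac12\,\hat{\textbf{R}}^\intercal\hat{\textbf{R}}=\|S\|_\infty^2\,\hat{E}_n\,.
\ee

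Finally, pairing this operator inequality with the positive operator $\rho\ge0$ preserves the order, so $\Tr[(U_S^\dagger\hat{E}_n U_S)\rho]\le\|S\|_\infty^2\Tr[\hat{E}_n\rho]=\|S\|_\infty^2 E(\rho)$, which combined with the first display gives the claim. The main obstacle I anticipate is purely a matter of rigour rather than of ideas: the quadratures $\hat{R}_i$ are unbounded, so the operator inequalities above should be understood in the sense of quadratic forms on the domain of finite-energy states. I would handle this by restricting all trace evaluations to states of finite second energy moment (which is the only regime in which $E(\rho)$ is meaningful anyway), so that $\Tr[(\|S\|_\infty^2\hat{E}_n-U_S^\dagger\hat{E}_n U_S)\rho]\ge0$ is well defined and nonnegative term by term; the sum-of-squares structure established above guarantees positivity on this domain.
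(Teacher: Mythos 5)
Your proof is correct, but it takes a genuinely different route from the paper's. The paper conjugates the state rather than the observable only at the last step; its core move is the Euler (Bloch--Messiah) decomposition $S=O_1ZO_2$ from~\eqref{Euler_dec}, which reduces the quadratic form $\hat{\mathbf{R}}^\intercal S^\intercal S\hat{\mathbf{R}}$ to the diagonal case $\hat{\mathbf{R}}^\intercal Z^2\hat{\mathbf{R}}=\sum_i z_i^2\hat{R}_i^2\le\|Z\|_\infty^2\,\hat{\mathbf{R}}^\intercal\hat{\mathbf{R}}$, combined with the identity $\|S\|_\infty=\|Z\|_\infty$ and the fact that the passive unitary $U_{O_1}$ preserves $\hat{E}_n$. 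You instead prove a standalone monotonicity lemma: for real symmetric $A\le B$, one has $\hat{\mathbf{R}}^\intercal A\hat{\mathbf{R}}\le\hat{\mathbf{R}}^\intercal B\hat{\mathbf{R}}$, via the rank-one decomposition of $B-A\ge0$ into manifest squares $(w_k^\intercal\hat{\mathbf{R}})^2$. This is sound: since each rank-one term pairs the same vector $w_k$ with itself, no operator-ordering issue arises, and more generally the antisymmetric (commutator) part of $\hat{R}_i\hat{R}_j$ contracts to zero against any symmetric matrix. Your route is more elementary and more general — it never uses that $S$ is symplectic beyond the existence of $U_S$, and it avoids the structure theory of $\mathrm{Sp}(2n)$ entirely — at the cost of proving an auxiliary lemma the paper does not need because the Euler decomposition and the energy-invariance of passive unitaries are already set up in its preliminaries. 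Your closing remark about interpreting the inequalities in the quadratic-form sense on finite-second-moment states is the right way to handle the unboundedness of the quadratures, and is at the same level of rigour as the paper's own treatment.
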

\begin{proof}
    It holds that
    \bb
        E\!\left(U_S \rho U_S^\dagger\right)&=\frac{1}{2}\Tr\!\left[ U_S \rho U_S^\dagger \hat{\mathbf{R}}^\intercal\hat{\mathbf{R}} \right]
        \\&\eqt{(i)}\frac{1}{2}\Tr\!\left[ \rho  \hat{\mathbf{R}}^\intercal S^\intercal S\hat{\mathbf{R}} \right]
        \\&\eqt{(ii)} \frac{1}{2}\Tr\!\left[ \rho  \hat{\mathbf{R}}^\intercal O_1^\intercal Z^2 O_1 \hat{\mathbf{R}} \right]
        \\&\eqt{(iii)} \frac{1}{2}\Tr\!\left[ U_{O_1}\rho U_{O_1}^\dagger  \hat{\mathbf{R}}^\intercal  Z^2  \hat{\mathbf{R}} \right]
        \\&\leqt{(iv)} \|S\|_\infty^2 \Tr\!\left[ U_{O_1}\rho U_{O_1}^\dagger \hat{E} \right]
        \\&\eqt{(v)} \|S\|_\infty^2 \Tr\!\left[ \rho \hat{E} \right].
    \ee
Here, in (i), we have used that $U_S^\dagger \hat{\mathbf{R}}U_S= S\mathbf{R}$. In (ii), we have exploited the Euler decomposition $S=O_1ZO_2$ given in~\eqref{Euler_dec}. In (iii), we have used $U_{O_1}^\dagger \hat{\mathbf{R}}U_{O_1}= O_1\mathbf{R}$. In (iv), we first used that $\hat{\mathbf{R}}^{\intercal}Z^{2}\hat{\mathbf{R}}=\sum_{i=1}^{2n}z_i^2 \hat{R}_{i}^{2}\le \|Z\|_{\infty}^{2}\hat{\mathbf{R}}^{\intercal}\hat{\mathbf{R}}$, and, second, we have exploited that $\|S\|_{\infty}^{2}=\|S^{\intercal}S\|_{\infty}=\|O_{2}^{\intercal}Z^{2}O_{2}\|_{\infty}=\|Z\|_{\infty}^{2}$ to obtain that $\hat{\mathbf{R}}^{\intercal}Z^{2}\hat{\mathbf{R}}\le 2\|S\|_\infty^2\hat{E}$. Finally, in (v) we have used again $U_{O_1}^\dagger \hat{\mathbf{R}}U_{O_1}= O_1\mathbf{R}$ and the fact that $O_1$ is orthogonal.
\end{proof}

\begin{lemma}[(Refined mean energy bound)]\label{lemma:energydispl}
    Let $\rho$ be an $n$-mode state. Let $\mathbf{r}\in\mathbb{R}^{2n}$ and let $\hat{D}_{\mathbf{r}}$ be the associated displacement operator. Then, the mean energy of $\hat{D}_{\mathbf{r}} \rho \hat{D}_{\mathbf{r}}^\dagger$ reads
    \bb
        E\!\left(\hat{D}_{\mathbf{r}} \rho \hat{D}_{\mathbf{r}}^\dagger\right)= E(\rho)+\mathbf{r}^\intercal \mathbf{m}(\rho)+\frac{1}{2}\|\mathbf{r}\|_2^2\,,
    \ee
    where $E(\rho)\coloneqq \Tr\!\left[ \frac{\hat{\textbf{R}}^\intercal\hat{\textbf{R}}}{2}\,\rho\right]$. In particular, it can be upper bounded as
    \bb
        E\!\left(\hat{D}_{\mathbf{r}} \rho \hat{D}_{\mathbf{r}}^\dagger\right)\le E(\rho)+ \sqrt{2E(\rho)}\|\textbf{r}\|_2 +\frac{1}{2}\|\textbf{r}\|_2^2\,.
    \ee
    
\end{lemma}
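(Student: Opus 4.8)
The plan is to compute the mean energy of $\hat{D}_{\mathbf{r}}\rho\hat{D}_{\mathbf{r}}^\dagger$ directly by pushing the displacement through the energy operator via the Heisenberg action of the displacement on the quadratures, and then bound the cross term with Cauchy--Schwarz. Recall from~\eqref{eq:displ} that $\hat{D}_\mathbf{r}^\dagger\hat{\mathbf{R}}\hat{D}_\mathbf{r}=\hat{\mathbf{R}}+\mathbf{r}\mathbb{\hat{1}}$ (up to the sign convention, which I would fix carefully), so that $E(\hat{D}_\mathbf{r}\rho\hat{D}_\mathbf{r}^\dagger)=\frac12\Tr[\rho\,\hat{D}_\mathbf{r}^\dagger\hat{\mathbf{R}}^\intercal\hat{\mathbf{R}}\hat{D}_\mathbf{r}]=\frac12\Tr[\rho\,(\hat{\mathbf{R}}+\mathbf{r})^\intercal(\hat{\mathbf{R}}+\mathbf{r})]$.

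The main computation is to expand $(\hat{\mathbf{R}}+\mathbf{r})^\intercal(\hat{\mathbf{R}}+\mathbf{r})=\hat{\mathbf{R}}^\intercal\hat{\mathbf{R}}+2\,\mathbf{r}^\intercal\hat{\mathbf{R}}+\|\mathbf{r}\|_2^2\mathbb{\hat{1}}$, where I treat $\mathbf{r}$ as a classical vector commuting with everything. Taking the trace against $\rho$ gives $E(\rho)$ from the first term, $\mathbf{r}^\intercal\mathbf{m}(\rho)$ from the second term (using $\mathbf{m}(\rho)=\Tr[\hat{\mathbf{R}}\rho]$), and $\frac12\|\mathbf{r}\|_2^2$ from the third, yielding the claimed identity
\bb
E\!\left(\hat{D}_{\mathbf{r}} \rho \hat{D}_{\mathbf{r}}^\dagger\right)= E(\rho)+\mathbf{r}^\intercal \mathbf{m}(\rho)+\tfrac{1}{2}\|\mathbf{r}\|_2^2\,.
\ee

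For the upper bound I would bound the cross term by $\mathbf{r}^\intercal\mathbf{m}(\rho)\le\|\mathbf{r}\|_2\|\mathbf{m}(\rho)\|_2$ using Cauchy--Schwarz, and then control $\|\mathbf{m}(\rho)\|_2$ in terms of the energy. The relevant estimate is already available in the excerpt, namely the chain in~\eqref{upper_bound_norma_m} showing $\|\mathbf{m}(\rho)\|_2\le\sqrt{2\Tr[\rho\hat{E}_n]}=\sqrt{2E(\rho)}$, which follows from Jensen's inequality (concavity, Eq.~\eqref{eq:conc}) applied to $\Tr[\rho\hat{R}_i]^2\le\Tr[\rho\hat{R}_i^2]$. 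Substituting gives $\mathbf{r}^\intercal\mathbf{m}(\rho)\le\sqrt{2E(\rho)}\,\|\mathbf{r}\|_2$ and hence the stated bound $E(\hat{D}_\mathbf{r}\rho\hat{D}_\mathbf{r}^\dagger)\le E(\rho)+\sqrt{2E(\rho)}\|\mathbf{r}\|_2+\frac12\|\mathbf{r}\|_2^2$.

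I do not anticipate a serious obstacle here, as this is essentially a one-line operator identity followed by a standard norm bound; the only point requiring care is getting the sign and normalisation conventions of the displacement operator consistent with~\eqref{eq:displ} so that the commutator terms $[\hat{R}_i,\hat{R}_j]$ do not contribute (they cancel in the symmetric combination $\hat{\mathbf{R}}^\intercal\hat{\mathbf{R}}$), and ensuring that the first-moment term appears with the correct coefficient $+1$ rather than $+2$, which is why the $\frac12$ out front is essential.
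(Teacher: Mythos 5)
Your proposal is correct and follows essentially the same route as the paper's proof: conjugate the quadratures through the displacement, expand $\frac12\Tr[\rho\,(\hat{\mathbf{R}}+\mathbf{r})^\intercal(\hat{\mathbf{R}}+\mathbf{r})]$ to get the exact identity, then bound the cross term via Cauchy--Schwarz together with $\|\mathbf{m}(\rho)\|_2\le\sqrt{2E(\rho)}$ (the paper's Eq.~\eqref{eq:bounddisplene}, identical to the estimate you cite). The sign of $\mathbf{r}$ in the Heisenberg action that you flag is immaterial for the stated upper bound, since the cross term is controlled in absolute value anyway.
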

\begin{proof}
    It holds that
    \bb
    E\!\left(\hat{D}_{\mathbf{r}} \rho \hat{D}_{\mathbf{r}}^\dagger\right)&=\frac{1}{2}\Tr[\rho (\hat{\mathbf{R}}+\mathbf{r}\mathbb{1})^\intercal(\hat{\mathbf{R}}+\mathbf{r}\mathbb{1})]
    \\&= E(\rho)+\mathbf{r}^\intercal\textbf{m}(\rho)+ \frac{1}{2}\|\mathbf{r}\|_2^2\,.
    \ee
    Moreover, note that 
    \bb
        m_i(\rho)=\Tr[\rho R_i]\le \sqrt{\Tr[\rho R_i^2]}
    \ee
    for each $i\in[2n]$, where we have exploited~\eqref{eq:conc}. Consequently, it holds that 
    \bb\label{eq:bounddisplene}
        \|\textbf{m}(\rho)\|_2\le \sqrt{\sum_{i=1}^n\Tr[\rho R_{i}^2]}= \sqrt{2E(\rho)}\,.
    \ee
    Thus, it follows that
    \bb
        E\!\left(\hat{D}_{\mathbf{r}} \rho \hat{D}_{\mathbf{r}}^\dagger\right)&\le E(\rho)+\|\textbf{r}\|_2\|\textbf{m}(\rho)\|_2+\frac{1}{2}\|\textbf{r}\|_2^2
        \\&\le E(\rho)+ \sqrt{2E(\rho)}\|\textbf{r}\|_2+ \frac{1}{2}\|\textbf{r}\|_2^2\,.
    \ee
\end{proof}

Now, we are ready to prove Theorem~\ref{thm_corr_compression_state}.
\begin{proof}[Proof of Theorem~\ref{thm_corr_compression_state}]
    We aim to establish the correctness of the algorithm presented in Table~\ref{Table_t_compressible}.

    In Line~\ref{steptd:1}, we use $N_{\mathrm{cov}}\!\left(n,\varepsilon_{\mathrm{cov}},\frac{\delta}{3}, E\right)$ copies of $\ket{\psi}$ in order to build a vector $\tilde{\textbf{m}}\in\mathbb{R}^{2n}$ and a covariance matrix $\tilde{V}\in\mathbb{R}^{2n,2n}$ such that
    \bb\label{prob_state_perturb2}
        \Pr\!\left(  \|\tilde{V}-V\!(\ketbra{\psi})\|_\infty\le \varepsilon_{\mathrm{cov}}\quad \text{ and }\quad \|\tilde{\textbf{m}}-\textbf{m}(\ketbra{\psi})\|_2\le \varepsilon_{\mathrm{cov}}  \right)\ge 1-\frac{\delta}{3}\,,
    \ee 
    where $N_{\mathrm{cov}}(\cdot)$ is defined in~\eqref{N_cov} and the accuracy $\varepsilon_{\mathrm{cov}}$ will be fixed later.
    From now on, let us assume that we are in the probability event in which 
    \bb
        \|\tilde{V}-V\!(\ketbra{\psi})\|_\infty\le \varepsilon_{\mathrm{cov}}\quad \text{ and }\quad \|\tilde{\textbf{m}}-\textbf{m}(\ketbra{\psi})\|_2\le \varepsilon_{\mathrm{cov}}\,.
    \ee
    In Line~\ref{steptd:2}, we compute the symplectic diagonalisation $\tilde{V}=\tilde{S}\tilde{D}\tilde{S}^\intercal$, where $\tilde{S}\in\mathrm{Sp}(2n)$ and  
    \bb
        \tilde{D}\coloneqq\diag\left(\tilde{d}_1,\tilde{d}_1,\tilde{d}_2,\tilde{d}_2,\ldots, \tilde{d}_n,\tilde{d}_n\right)
    \ee
    with $\tilde{d}_1\ge \tilde{d}_2\ge \ldots\ge \tilde{d}_n\ge 1$.

    The sequence of steps in Lines~\ref{steptd:4},~\ref{steptd:5}, and~\ref{steptd:6} have to be repeated a number
    \bb
        \left\lceil 2N_{\mathrm{tom,CV}}\!\left(t,\varepsilon_{\text{tom}},\frac{\delta}{3},E\right) + 24\log\!\left(\frac{3}{\delta}\right)\right\rceil
    \ee
    of times, where the quantity $N_{\mathrm{tom,CV}}\!\left(t,\varepsilon_{\text{tom}},\frac{\delta}{3},E\right)$ is the number of copies sufficient for tomography of a $t$-mode pure state with mean energy per mode at most $E$ with accuracy $\varepsilon_{\text{tom}}$ and failure probability $\frac{\delta}{3}$ (see~\eqref{N_cv_tom}). The accuracy $\varepsilon_{\text{tom}}$ and the energy $E$ will be fixed later.

    In Line~\ref{steptd:4}, we query one copy of $\ket{\psi}$ and apply $\hat{D}^\dagger_{\tilde{\textbf{m}}}U_{\tilde{S}}^\dagger$, obtaining the state vector $\ket{\tilde{\psi}_t}\coloneqq \hat{D}^\dagger_{\tilde{\textbf{m}}}U_{\tilde{S}}^\dagger\ket{\psi}$. Since $\ket{\psi}$ is $t$-compressible and since it holds that $\tilde{\textbf{m}}\simeq \textbf{m}(\ketbra{\psi})$ and $\tilde{V}=\tilde{S}\tilde{D}\tilde{S}^\intercal\simeq V(\ketbra{\psi})$, we intuitively expect that the reduced state of $\ket{\tilde{\psi}_t}$ onto the last $n-t$ modes is very close to the vacuum $\ket{0}^{\otimes(n-t)}$ (as we are going to rigorously show).
    
    In Line~\ref{steptd:5}, we measure $\ket{\tilde{\psi}_t}$ with respect the POVM $\{ \mathbb{1}_{t}\otimes\ketbra{0}^{\otimes (n-t)}, \mathbb{1}-\mathbb{1}_{t} \otimes\ketbra{0}^{\otimes (n-t)} \}$, and discard the copies associated with the outcome corresponding to the POVM element $\mathbb{1}-\mathbb{1}_{t} \otimes\ketbra{0}^{\otimes (n-t)}$. The post-measurement state is not discarded with probability 
    \bb
        P_{\text{succ}}\coloneqq \Tr\!\left[ \mathbb{1}_{t} \otimes\ketbra{0}^{\otimes (n-t)}\, \ketbra{\tilde{\psi}_t}\right]\label{eq:psucc}\,,
    \ee
    and the remaining copies are thus in a state of the form $ \ket{\tilde{\phi}}\otimes \ket{0}^{\otimes (n-t)}$, where $\ket{\tilde{\phi}}$ is given by
    \bb
        \ket{\tilde{\phi}}\coloneqq \frac{\bra{0}^{\otimes(n-t)}\ket{\tilde{\psi}_t}}{\sqrt{P_{\text{succ}}}}\,.
    \ee
    Let us now find a lower bound on the probability of success $P_{\text{succ}}$. For simplicity in the following we will use the notation $\psi\coloneqq \ketbra{\psi}$, $\tilde{\psi}_t\coloneqq \ketbra{\psi}$ and $\tilde{\phi}\coloneqq \ketbra{\tilde{\phi}}$.
    Note that
    \bb\label{eq:3}
        P_{\text{succ}}&= \Tr[ \mathbb{1}_{t}\otimes \ketbra{0}^{\otimes (n-t)}\, \tilde{\psi}_t]
        \\&= \Tr[ \ketbra{0}^{\otimes (n-t)}\, \Tr_{[t]}\tilde{\psi}_t]
        \\&\geqt{(i)} \Tr\!\left[ \left(\mathbb{1}-\sum_{i=t+1}^{n}a_i^\dagger a_i\right)\, \Tr_{[t]}\tilde{\psi}_t\right]
        \\&\eqt{(ii)}1-\frac{\Tr[V(\Tr_{[t]}\tilde{\psi}_t)-I_{n-t}]}{4}-\frac{\|\textbf{m}(\Tr_{[t]}\tilde{\psi}_t)\|_2^2}{2}
        \\&\ge 1-\frac{2(n-t)\|V(\Tr_{[t]}\tilde{\psi}_t)-I_{n-t}\|_\infty}{4}-\frac{\|\textbf{m}(\tilde{\psi}_t)\|_2^2}{2}\,,
    \ee
    Here, in (i), we have used the following operator inequality
    \bb
       \mathbb{1}_{t}\otimes \ketbra{0}^{\otimes (n-t)}\ge \mathbb{1}-\sum_{i=t+1}^{n}a_i^\dagger a_i\,.
    \ee
    which can be readily verified. In (ii), we have exploited Lemma~\ref{lemma_energy_traceV}. Moreover, note that
    \bb
        V(\tilde{\psi}_t)&=V\!\left(  U_{\tilde{S}}^\dagger \hat{D}^\dagger_{\tilde{\mathbf{m}}} \psi \hat{D}_{\tilde{\mathbf{m}}} U_{\tilde{S}}  \right)
        \\&=\tilde{S}^{-1} V(\psi) (\tilde{S}^{-1})^\intercal
        \\&= \tilde{S}^{-1} \left(V(\psi)- \tilde{V}\right)(\tilde{S}^{-1})^\intercal+  \tilde{D}\,.
    \ee
    Since $\psi$ is $t$-compressible, Lemma~\ref{lemma_t_compr} establishes that $\ket{\psi}$ can be written as
    \bb
        \ket{\psi}=\hat{D}_{\textbf{m}(\psi)}U_S \ket{\phi_t}\otimes\ket{0}^{\otimes (n-t)}\,,\label{eq:2}
    \ee
    where $\ket{\phi_t}$ is some $t$-mode state, $S$ is a symplectic matrix such that $V(\psi)=S\left(D_t\oplus \mathbb{1}_{n-t}\right)S^\intercal$, and $D_t$ is $2t\times 2t$ diagonal matrix of the form
    \bb
        D_t\coloneqq\diag\left({d}_1,{d}_1,{d}_2,{d}_2,\ldots, {d}_n,{d}_n\right)
    \ee
    with ${d}_1\ge {d}_2\ge \ldots\ge {d}_n\ge 1$. Let us analyse the first term in the r.h.s.~of~\eqref{eq:3}. It holds that
    \bb
        \|V(\Tr_{[t]}\tilde{\psi}_t)-I_{n-t}\|_\infty &\eqt{(iii)} \left\| \left[\tilde{S}^{-1} \left(V(\psi)- \tilde{V}\right)(\tilde{S}^{-1})^\intercal\right]_{n-t}+  [\tilde{D}]_{n-t}-I_{n-t}\right\|_\infty\\
        &\le \left\| \left[\tilde{S}^{-1} \left(V(\psi)- \tilde{V}\right)(\tilde{S}^{-1})^\intercal\right]_{n-t}\right\|_\infty + \left\|[\tilde{D}]_{n-t}-I_{n-t}\right\|_\infty
        \\&\leqt{(iv)} \left\| \tilde{S}^{-1} \left(V(\psi)- \tilde{V}\right)(\tilde{S}^{-1})^\intercal\right\|_\infty+ \left\| \tilde{D} -D \right\|_\infty     
        \\&\leqt{(v)} \|\tilde{S}\|_\infty^2\|V(\psi)- \tilde{V}\|_\infty+\left\| \tilde{D} -D \right\|_\infty  
        \\&\leqt{(vi)} \|\tilde{V}\|_\infty\varepsilon_{\mathrm{cov}}+\left\| \tilde{D} -D \right\|_\infty  
        \\&\leqt{(vii)} (\varepsilon_{\mathrm{cov}}+ \|V(\psi)\|_\infty)\varepsilon_{\mathrm{cov}}+ \sqrt{K\!(V(\psi))K\!(\tilde{V})}\| \tilde{V}-V(\psi) \|_\infty
        \\&\leqt{(viii)} \left(\varepsilon_{\mathrm{cov}}+ 4\Tr[\psi \hat{E}_n]\right)\varepsilon_{\mathrm{cov}}+ \|V(\psi)\|_\infty\|\tilde{V}\|_\infty    \varepsilon_{\mathrm{cov}}
        \\&\leqt{(ix)} \left(\varepsilon_{\mathrm{cov}}+ 4\Tr[\psi \hat{E}_n]\right)\varepsilon_{\mathrm{cov}}+ 4\Tr[\psi \hat{E}_n]\left(4\Tr[\psi \hat{E}_n]+\varepsilon_{\mathrm{cov}}\right)\varepsilon_{\mathrm{cov}}
        \\&\leqt{(x)}\left(\varepsilon_{\mathrm{cov}}+ 4nE\right)\left(1+ 4nE\right)\varepsilon_{\mathrm{cov}}
        \\&\le \left(1+ 4nE\right)^2\varepsilon_{\mathrm{cov}}\,.\label{eq:bigbound}
    \ee 
    Here, in (iii), we have exploited~\eqref{eq:bigbound} and we have introduced the notation $[A]_{k}$ to denote the $k\times k$ lower-right block of the matrix $A$. In (iv), first, we have used that for any $n\times n$ matrix $A$ and any $k\le n$ it holds that $\|[A]_k\|_\infty\le \|A\|_\infty$, and, second, we have exploited that $[D]_{n-t}=I_{n-t}$. In (v), we first used the submultiplicativity of the operator norm, and second, we applied Lemma~\ref{lemma_S_inv}, which implies that $\|\tilde{S}^{-1}\|_{\infty}=\|\tilde S\|_{\infty}$. In (vi), we have used that $\|\tilde{S}\|_{\infty}\le \sqrt{\|\tilde{V}\|_{\infty}}$, as established by Lemma~\ref{lem:SlowerboundV}, and that $\|V(\psi)-\tilde{V}\|_{\infty}\le \varepsilon_{V}$. In (vii), first we have used that $\|\tilde{V}\|_{\infty}\le \|V(\psi)\|_{\infty}+\varepsilon_V$, and second, we have exploited Lemma~\ref{lem:wolf} that establishes that the maximum difference between the ordered symplectic eigenvalues $\|\tilde{D}-D\|_{\infty}$ can be upper bounded in terms of the \emph{condition numbers} $K\!(V(\psi)),K\!(\tilde{V})$ (see Lemma~\ref{lem:wolf}) and in terms of $\|V(\psi)-\tilde{V}\|_{\infty}$. In (viii), we have used that $K\!(V)\le \|V\|_{\infty}^{2}$, as stated by Lemma~\ref{lem:lboundsVm1V}. In (ix), we have exploited that $\|V(\psi)\|_{\infty}\le 4\Tr[\psi \hat{E}_n]$, as established by Lemma~\ref{lem:SlowerboundV}. Finally, in (x), we have used Cauchy Schwarz inequality to deduce that $\Tr[\psi \hat{E}_n]\le \sqrt{\Tr[\psi \hat{E}_n^2]}$ and then we have used the assumption that $ \sqrt{\Tr[\psi \hat{E}_n^2]}\le n E$.

    Now, let us analyse the second term in the r.h.s.~of~\eqref{eq:3}. With similar techniques of~\eqref{eq:bigbound}, we can upper bound $\|\textbf{m}(\tilde{\psi}_t)\|_2$ as 
    \bb
        \|\textbf{m}(\tilde{\psi}_t)\|_2&=\left\|\textbf{m}\!\left(  U_{\tilde{S}}^\dagger \hat{D}^\dagger_{\tilde{\mathbf{m}}} \psi \hat{D}_{\tilde{\mathbf{m}}} U_{\tilde{S}}  \right)\right\|_2
        \\&=\left\|\tilde{S}^{-1}\left(  \mathbf{m}(\psi)   -\tilde{\mathbf{m}} \right)\right\|_2
        \\&\le \|\tilde{S}^{-1}\|_\infty \|   \mathbf{m}(\psi)   -\tilde{\mathbf{m}}  \|_2
        \\&\le \|\tilde{S}\|_\infty\, \varepsilon_{\mathrm{cov}}
        \\&\le \sqrt{\|\tilde{V}\|_\infty}\varepsilon_{\mathrm{cov}}
        \\&\le \sqrt{4n E+\varepsilon_{\mathrm{cov}}}\, \varepsilon_{\mathrm{cov}} 
        \\&\le \sqrt{4n E+1}\, \varepsilon_{\mathrm{cov}}\,. \label{eq:smallbound}
    \ee
    By putting~\eqref{eq:3},~\eqref{eq:bigbound}, and~\eqref{eq:smallbound} together, we deduce that $P_{\text{succ}}$ is lower bounded by
    \bb\label{P_succ_lower_bound}
        P_{\text{succ}}&\ge 1-\frac{2(n-t)\|V(\Tr_{[t]}\tilde{\psi}_t)-\mathbb{1}\|_\infty}{4}-\frac{\|\textbf{m}(\tilde{\psi}_t)\|_2^2}{2}
        \\&\ge 1-\frac{n-t}{2} (1+4nE)^2\varepsilon_{\mathrm{cov}}  - \frac{1}{2}(1+4nE)\varepsilon_{\mathrm{cov}}^2
        \\&\ge 1-\frac{(1+4nE)^2}{2}(n\,\varepsilon_{\mathrm{cov}}+ \varepsilon_{\mathrm{cov}}^2)
        \\&\ge 1-\frac{(1+4nE)^2(n+1)}{2}\varepsilon_{\mathrm{cov}}\,.
    \ee
    The choice of $\varepsilon_{\mathrm{cov}}$, which we will make later in~\eqref{choice_epsilon}, implies that $P_{\text{succ}}\ge\frac{3}{4}$. Consequently, thanks Lemma~\ref{le:enhance-success}, since we conduct in total $\left\lceil 2N_{\mathrm{tom,CV}}\!\left(t,\varepsilon_{\text{tom}},\frac{\delta}{3},E\right) + 24\log\!\left(\frac{3}{\delta}\right)\right\rceil$ measurements, each of which with a success probability of $P_{\text{succ}}\ge\frac{3}{4}$, the probability of obtaining at least $N_{\mathrm{tom,CV}}\!\left(t,\varepsilon_{\text{tom}},\frac{\delta}{3},E\right)$ successful outcomes is $\ge 1-\frac{\delta}{3}$.  In other words, the probability of getting at least $N_{\mathrm{tom,CV}}\!\left(t,\varepsilon_{\text{tom}},\frac{\delta}{3},E\right)$ post-measurement states $\ket{\tilde{\phi}}\otimes \ket{0}^{\otimes (n-t)}$ is $\ge 1-\frac{\delta}{3}$. From now on, let us assume that we are in such a probability event.
    In Line~\ref{steptd:6}, we do a step of the algorithm in Table~\ref{Table_tomography_ec} to perform pure-state tomography of the energy-constrained $t$-mode state vector $\ket{\tilde{\phi}}$, by considering the first $t$ modes of the post-measurement state vector $\ket{\tilde{\phi}}\otimes \ket{0}^{\otimes (n-t)}$. In order to apply such an algorithm we need to find an upper bound on the mean energy of $\ket{\tilde{\phi}}$. By denoting the mean energy of a state $\psi$ as $E(\psi)$, let us note that 
    \bb
        E(\tilde{\phi})&\leqt{(i)}\frac{1}{P_{\text{succ}}}E(\tilde{\psi}_t
        )
        \\&\leqt{(ii)} \frac{\|\tilde S\|_{\infty}^{2}}{P_{\text{succ}}}E\!\left(\hat{D}^{\dagger}_{\tilde{\mathbf{m}}}\ketbra{\psi} \hat{D}_{\tilde{\mathbf{m}}}\right)\\
        &\leqt{(iii)} \frac{\|\tilde S\|_{\infty}^{2} }{P_{\text{succ}}}\left[E(\psi)+\sqrt{2E(\psi)}\|\tilde{\mathbf{m}}\|_2+\frac{1}{2}\|\tilde{\mathbf{m}}\|_2^2\right]\\
        &\leqt{(iv)} \frac{\|\tilde S\|_{\infty}^{2} }{P_{\text{succ}}} \left[E(\psi)+\sqrt{2E(\psi)}(\|{\mathbf{m}(\psi)}\|_2+\varepsilon_{m})+\frac{1}{2}(\|{\mathbf{m}(\psi)}\|_2+\varepsilon_{m})^2\right]\\
        &\leqt{(v)} \frac{4E(\psi)+\varepsilon_{V}}{P_{\text{succ}}} \left[E(\psi)+\sqrt{2E(\psi)}(\sqrt{2E(\psi)}+\varepsilon_{m})+\frac{1}{2}(\sqrt{2E(\psi)}+\varepsilon_{m})^2\right]\\
        &\leqt{(vi)} \frac{4}{3}[4E(\psi)+\varepsilon_{V}]\left[4E(\psi)+2\sqrt{2E(\psi)}\varepsilon_{m}+\frac{\varepsilon_{m}^2}{2}\right]\\
        &\leqt{(vii)} 80\,[E(\psi)]^2\\
        &\le 80\, \Tr[\psi \hat{E}_n^2]\\
        &\le 80\,E^2 n^2 .
    \ee 
    Here, in (i), we have used that 
    \bb
        E(\tilde{\phi})&=\frac{1}{P_{\text{succ}}}E\!\left(\bra{0}^{\otimes (n-t)}\tilde{\psi}_t\ket{0}^{\otimes (n-t)}\right)
        \\& = \frac{1}{2P_{\text{succ}}}\Tr\!\left[ \sum_{i=1}^t R_i^2  \bra{0}^{\otimes (n-t)}\tilde{\psi}_t\ket{0}^{\otimes (n-t)}   \right]
        \\&=  \frac{1}{2P_{\text{succ}}}\Tr\!\left[ \left(\sum_{i=1}^t R_i^2\right) \otimes \ketbra{0}^{\otimes (n-t)}\tilde{\psi}_t    \right]
        \\&\le  \frac{1}{2P_{\text{succ}}}\Tr\!\left[ \left(\sum_{i=1}^t R_i^2\right) \otimes \mathbb{1}_{n-t}\,\tilde{\psi}_t    \right]
        \\&= \frac{1}{2P_{\text{succ}}}\Tr\!\left[ \left(\sum_{i=1}^t R_i^2\right) \,\tilde{\psi}_t    \right]
        \\&\le \frac{1}{2P_{\text{succ}}}\Tr\!\left[ \left(\sum_{i=1}^n R_i^2\right) \,\tilde{\psi}_t    \right]
        \\&=\frac{E(\tilde{\psi}_t)}{P_{\text{succ}}}.
    \ee
    In (ii), first we have used that $\tilde{\psi}=U_{\tilde S}^{\dag}\hat{D}^{\dag}_{\mathbf{\tilde{m}}}\psi \hat{D}_{\mathbf{\tilde{m}}}U_{\tilde S}$, and then we have used Lemma~\ref{lemma:energyboundsympl}. In (iii), we have used Lemma~\ref{lemma:energydispl}. Then, in (iv), we have used triangular inequality and the bound $\|\mathbf{m}(\psi)-\tilde{\mathbf{m}}\|_2\le \varepsilon_{\mathrm{cov}}$. In (v), we have used Lemma~\ref{lem:SlowerboundV}, together with $\|\tilde V-V\|_{\infty}\le \varepsilon_{\mathrm{cov}}$ and~\eqref{eq:bounddisplene}. In (vi), we have exploited that $P_{\text{succ}}\ge \frac{3}{4}$. In (vii), we have exploited that $\varepsilon_{\mathrm{cov}}\le 1$ and $E(\psi)\ge \frac{1}{2}$. Consequently, the mean energy per mode of the state $\tilde{\phi}$ is upper bounded by 
    \bb
        \frac{E(\tilde{\phi})}{t}\le \frac{80\,E^2 n^2 }{t}\le 80\,E^2 n^2\,,
    \ee
    and hence we can set $E\coloneqq 80\,E^2 n^2$.

    In Line~\ref{steptd:8}, the tomography algorithm called in the iterations of Line~6 outputs a $t$-mode state vector $\ket{\tilde{\phi}_1}$, which is supported on a $\left\lceil\left(\frac{e(80n^2E^2-\frac{1}{2})}{\varepsilon^2}\right)^{\!t}\right\rceil$-dimensional subspace of $L^2(\mathbb{R}^t)$. Since the total number of copies of $\ket{\tilde{\phi}}$ employed in such a tomography algorithm is $\ge N_{\mathrm{tom,CV}}\!\left(t,\varepsilon_{\text{tom}},\frac{\delta}{3},E\right)$, the state vector $\ket{\tilde{\phi}_1}$ satisfies
    \bb
        \Pr\left( \frac{1}{2} \left\| \ketbra{\tilde{\phi}} - \ketbra{\tilde{\phi}_1}  \right\|_1< \varepsilon_{\text{tom}} \right)\ge 1-\frac{\delta}{3}\,.
    \ee
    From now on, let us assume that we are in the probability event in which 
    \bb\label{assumption_tom_cv}
        \frac{1}{2}\left\| \ketbra{\tilde{\phi}} - \ketbra{\tilde{\phi}_1}  \right\|_1< \varepsilon_{\text{tom}}\,.
    \ee
In Line~\ref{steptd:9}, we output a classical description of the $n$-mode state vector $\ket{\hat{\psi}}$, which is defined as
    \bb
        \ket{\hat{\psi}}\coloneqq \hat{D}_{\tilde{\textbf{m}}}U_{\tilde{S}}\left(\ket{\tilde{\phi}_1}\otimes\ket{0}^{\otimes(n-t)}\right)  \,.
    \ee
Note that
\bb
    \frac{1}{2}\left\|\ketbra{\hat{\psi}}-\ketbra{\psi}\right\|_{1}&\eqt{(i)} \frac{1}{2} \left\| \ketbra{\tilde{\phi}_1}\otimes\ketbra{0}^{\otimes(n-t)}    - \ketbra{\tilde{\psi}_t}
    \right\|_{1}
    \\&\leqt{(ii)} \frac12\left\| \ketbra{\tilde{\phi}_1} - \ketbra{\tilde{\phi}}  
    \right\|_1   +   \frac12\left\|  \ketbra{\tilde{\phi}}\otimes \ketbra{0}^{\otimes(n-t)}   - \ketbra{\tilde{\psi}_t} \right\|_{1}\\
    &\leqt{(iii)} \varepsilon_{\text{tom}}   +  \sqrt{1-P_{\text{succ}}}\\
    &\leqt{(iv)} \varepsilon_{\text{tom}}+(1+4nE)\sqrt{\frac{n+1}{2}\varepsilon_{\mathrm{cov}}}\,.
\ee
Here, in (i), we have used unitarily invariance of the trace norm. In (ii), we have used the triangular inequality of the trace norm. In (iii), we have used~\eqref{assumption_tom_cv} and the \emph{gentle measurement lemma}~\cite{MARK}. Finally, in (iv) we have used~\eqref{P_succ_lower_bound}. Consequently, by choosing
\bb\label{choice_epsilon}
    \varepsilon_{\mathrm{cov}}&\coloneqq \frac{\varepsilon^2}{2(n+1)(1+4nE)^2}\,,\\
    \varepsilon_{\text{tom}}&\coloneqq \frac{\varepsilon}{2}\,,
\ee
we have that
\bb
    \frac{1}{2}\left\|\ketbra{\hat{\psi}}-\ketbra{\psi}\right\|_{1}\le \varepsilon\,.
\ee
By using an union bound, we can conclude that a total number of
\bb
    N_{\mathrm{cov}}\!\left(n,\varepsilon_{\mathrm{cov}},\frac{\delta}{3}, E\right)+\left\lceil 2N_{\mathrm{tom,CV}}\!\left(t,\varepsilon_{\text{tom}},\frac{\delta}{3},E\right) + 24\log\!\left(\frac{3}{\delta}\right)\right\rceil
\ee
copies of $\ket{\psi}$ suffices to construct a classical representation of a state vector $\ket{\hat{\psi}}$ such that
\bb     
    \Pr\left(\frac{1}{2}\left\|\ketbra{\hat{\psi}}-\ketbra{\psi}\right\|_{1}\le \varepsilon\right)\ge 1-\delta\,.
\ee
\end{proof}
We now show that any algorithm to learn $t$-compressible states should be inefficient if $t$ scales faster than a constant in the number of modes.
\begin{thm}[(Lower bound on sample complexity of tomography of $t$-compressible states)]\label{thm_lower_bound_t_compressible}
Let us consider a tomography algorithm that learns, within a trace distance $\le \varepsilon$ and failure probability $\le \delta$, an unknwon $n$-mode $t$-compressible state $\psi$ satisfying the second-moment constraint $\sqrt{\Tr[\psi\hat{E}_n^2]}\le nE$. Then, such a tomography algorithm must use a number of state copies $N$ satisfying
\bb\label{condition_sample_t_doped}
    N&\ge \frac{1}{t\,g\!\left(  \frac{n}{t}\!\left(E-\frac{1}{2}\right) \right)}\left[2(1-\delta)\left(\frac{ \frac{n}{t}\!\left(E-\frac{1}{2}\right)  }{12\varepsilon  }-\frac{1}{t}\right)^t-(1-\delta)\log_2(32\pi)-H_2(\delta)\right]\\
    &=\Theta\!\left(\frac{nE}{ t\varepsilon }\right)^{\!\!t}\,.
\ee
Here, $g(x)\coloneqq (x+1)\log_2(x+1) - x\log_2 x\,$ is the bosonic entropy, and $H_2(x)\coloneqq -x\log_2x-(1-x)\log_2(1-x)$ is the binary entropy.
\end{thm}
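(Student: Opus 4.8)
The plan is to mirror the strategy of Theorem~\ref{th:lowerboundtomohraphy}, combining the Fano--Holevo bound of Lemma~\ref{lower_bound_sample_complexity} with a carefully chosen packing of $t$-compressible states, but now with effective moment order $k=2$ dictated by the second-moment constraint. Concretely, I would exhibit $M$ states $\{\psi_1,\dots,\psi_M\}$ inside the set of $n$-mode $t$-compressible states obeying $\sqrt{\Tr[\psi_j\hat E_n^2]}\le nE$, pairwise separated by $\tfrac12\|\psi_i-\psi_j\|_1>2\varepsilon$, and then invoke Lemma~\ref{lower_bound_sample_complexity} to obtain $N\ge\big((1-\delta)\log_2 M - H_2(\delta)\big)/\chi_{\max}$, where $\chi_{\max}$ upper bounds the Holevo information of the ensemble.

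For the construction I would act only on the first $t$ modes and pad with vacuum. Given a $t$-mode vector $\ket\phi$ in the space $\HH_m^{(t)}$ spanned by Fock states with total photon number at most $m$ (excluding the vacuum), I set $\ket{\phi'} \coloneqq \sqrt{1-c}\,\ket0^{\otimes t}+\sqrt c\,\ket\phi$ and $\ket{\psi_\phi}\coloneqq\ket{\phi'}\otimes\ket0^{\otimes(n-t)}$. This $\ket{\psi_\phi}$ is manifestly $t$-compressible (Definition~\ref{def:tcompr}) with Gaussian unitary equal to the identity. Choosing $m\coloneqq\lfloor n(E-\tfrac12)/(12\varepsilon)\rfloor$ and $c\coloneqq(12\varepsilon)^2$, I would take a $\tfrac12$-packing of $\HH_m^{(t)}$ in trace norm; Lemma~\ref{lemma6666} gives $M\ge\tfrac1{32\pi}2^{2(m/t)^t}$ via $\dim\HH_m^{(t)}=\binom{m+t}{t}-1\ge(m/t)^t$, which is exactly the count producing the stated $\big(\tfrac{\frac nt(E-\frac12)}{12\varepsilon}-\tfrac1t\big)^t$ once the floor is accounted for through $m/t\ge\tfrac{\frac nt(E-\frac12)}{12\varepsilon}-\tfrac1t$.

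Two verifications then carry the proof. First, the separation: repeating the computation in~\eqref{proofofeqtoprove} with the coefficient $c$ in place of $(nN_{\text{phot}})^k/m^k$ gives $\tfrac12\|\psi_\phi-\psi_{\tilde\phi}\|_1\ge\sqrt{c/8}\,\|\ketbra\phi-\ketbra{\tilde\phi}\|_1$, so the $\tfrac12$-packing yields $\tfrac12\|\psi_\phi-\psi_{\tilde\phi}\|_1\ge\sqrt{c/32}=\tfrac{12\varepsilon}{\sqrt{32}}>2\varepsilon$. Second, the constraint: since the $n-t$ padding modes are in vacuum, $\hat E_n$ acts on $\ket{\phi'}$ as $\hat N_t+\tfrac n2$, and using $\langle\phi'|\hat N_t^p|\phi'\rangle=c\,\langle\phi|\hat N_t^p|\phi\rangle\le c\,m^p$ (the cross terms vanish because $\ket\phi$ has no vacuum component) I would bound $\Tr[\psi_\phi\hat E_n^2]\le cm^2+ncm+\tfrac{n^2}{4}$, which is $\le(nE)^2$ for $\varepsilon\le\tfrac1{12}$ by the choice of $c$ and $m$. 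For the Holevo term, the ensemble mean state is supported (up to vacuum padding, which contributes no entropy) on $tN$ modes with mean photon number per mode at most $\tfrac nt(E-\tfrac12)$, the latter following from $\Tr[\psi_j\hat E_n]\le\sqrt{\Tr[\psi_j\hat E_n^2]}\le nE$ and hence $\langle\hat N_t\rangle\le n(E-\tfrac12)$; then the extremality of thermal states (Lemma~\ref{maxthermstate}), together with vanishing $S(\psi_j^{\otimes N})$ for pure $\psi_j$, gives $\chi\le tN\,g(\tfrac nt(E-\tfrac12))$.

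Assembling these, Lemma~\ref{lower_bound_sample_complexity} yields $N\ge\frac{(1-\delta)\log_2 M-H_2(\delta)}{t\,g(\frac nt(E-\frac12))}$, which is the claimed bound, and the $\Theta\!\big(\tfrac{nE}{t\varepsilon}\big)^t$ is merely its asymptotic form. I expect the main obstacle to be the simultaneous balancing of the two free parameters $c$ and $m$: the separation requires $c\gtrsim\varepsilon^2$, while the second-moment constraint, inflated by the $\tfrac n2$ zero-point energy of all $n$ modes, forces $cm^2\lesssim n^2(E-\tfrac12)^2$. Threading these so that the photon cutoff scales as $m\sim 1/\varepsilon$ (rather than $1/\varepsilon^2$ as in the energy-constrained case) is precisely what produces the $\varepsilon^{-t}$ rather than $\varepsilon^{-2t}$ scaling, and making the constants land exactly on $12\varepsilon$ requires care. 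The clean translation of the global energy constraint through the vacuum padding into an effective per-mode constraint $\tfrac nt(E-\tfrac12)$ on the $t$ active modes is the other place where careful bookkeeping is essential.
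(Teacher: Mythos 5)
Your proof is correct and follows essentially the same route as the paper: the key idea in both is that padding a $t$-mode state with vacuum yields a $t$-compressible state whose second-moment constraint $\sqrt{\Tr[\psi\hat{E}_n^2]}\le nE$ translates into the photon-number constraint $\sqrt{\Tr[\hat{N}_t^2\phi_t]}\le n(E-\tfrac12)$ on the $t$ active modes. The only difference is presentational — the paper simply invokes Theorem~\ref{th:lowerboundtomohraphy} with $k=2$, $t$ modes, and $N_{\text{phot}}=\tfrac{n}{t}(E-\tfrac12)$ as a black box, whereas you inline its packing-plus-Fano--Holevo proof; your bookkeeping of the constants ($c=(12\varepsilon)^2$, $m=\lfloor n(E-\tfrac12)/(12\varepsilon)\rfloor$, the $\chi\le tN\,g(\tfrac{n}{t}(E-\tfrac12))$ bound) reproduces the stated expression exactly.
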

\begin{proof}
    Since the tomography algorithm can learn arbitrary $n$-mode $t$-compressible states $\psi$ satisfying the second-moment constraint $\sqrt{\Tr[\psi\hat{E}_n^2]}\le nE$, then it can also learn \emph{arbitrary} $t$-mode states $\ket{\phi_t}$ satisfying 
    \bb
        \sqrt{\Tr[\ketbra{\phi_t}\hat{N}_t^2]}\le n\!\left(E-\frac{1}{2}\right)\,.
    \ee
    Indeed, $\ket{\psi}\coloneqq\ket{\phi_t}\otimes\ket{0}^{\otimes(n-t)}$ is a $t$-compressible state with second-moment upper bounded by
    \bb
        \sqrt{\Tr\!\left[\ketbra{\psi}\hat{E}_n^2\right]}&=\sqrt{\Tr\!\left[\ketbra{\phi_t}\otimes\ketbra{0}^{\otimes(n-t)}\hat{E}_n^2\right]}\\
        &=\sqrt{\Tr\!\left[\ketbra{\phi_t}\otimes\ketbra{0}^{\otimes(n-t)}\left(\hat{N}_n+\frac{n}{2}\mathbb{1}\right)^2\right]}\\
        &=\sqrt{\Tr[\ketbra{\phi_t}\hat{N}_t^2]+n\Tr[\ketbra{\phi_t}\hat{N}_t]+\frac{n^2}{4}}\\
        &\le\sqrt{\Tr[\ketbra{\phi_t}\hat{N}_t^2]+n\sqrt{\Tr[\ketbra{\phi_t}\hat{N}_t^2]}+\frac{n^2}{4}}\\
        &=\sqrt{\Tr[\ketbra{\phi_t}\hat{N}_t^2]}+\frac{n}{2}\\
        &\le nE\,.
    \ee
Finally, Theorem~\ref{th:lowerboundtomohraphy} ensures that any algorithm that learns unknown $t$-mode states $\ket{\phi_t}$, satisfying the second-moment constraint
    \bb
        \frac{1}{t}\sqrt{\Tr[\ketbra{\phi_t}\hat{N}_t^2]}\le \frac{n}{t}\!\left(E-\frac{1}{2}\right),
    \ee
must use a number of state copies $N$ that satisfies the condition in~\eqref{condition_sample_t_doped}.
\end{proof} 
\end{document}